\renewcommand{\eqref}[1]{\ref{#1}} 
\renewcommand{\Tr}{{\operatorname{Tr}\,}}
\renewcommand{\Pr}{{\operatorname{Pr}}}
\renewcommand{\dim}{{\operatorname{dim}}}
\newcommand{\id}{{\operatorname{id}}}
\newcommand{\1}{\mathbb{1}}
\newcommand{\proj}[1]{|#1\rangle\!\langle #1|}
\newcommand{\cC}{{\mathcal{C}}}
\newcommand{\cD}{{\mathcal{D}}}
\newcommand{\cE}{{\mathcal{E}}}
\newcommand{\cT}{{\mathcal{T}}}
\newcommand{\cX}{{\mathcal{X}}}
\newcommand{\cS}{{\mathcal{S}}}
\newcommand{\cR}{{\mathcal{R}}}
\newcommand{\KI}{{\text{KI}}}
\newcommand{\aw}[1]{{#1}}             
\newcommand{\und}[1]{\underline{#1}}
\theoremstyle{plain}
\newtheorem*{remark*}{Remark}   
\renewcommand\qedsymbol{$\blacksquare$}
\newenvironment{proof-of}[1][{\hspace{-\blank}}]{{\medskip\noindent\textit{Proof~{#1}.\ }}}{\hfill\qedsymbol}
\newtheorem{theorem}{Theorem}
\numberwithin{theorem}{chapter}
\newtheorem{remark}{Remark}
\numberwithin{remark}{chapter}
\newtheorem{corollary}{Corollary}[theorem]
\numberwithin{corollary}{chapter}
\newtheorem{lemma}{Lemma}
\numberwithin{lemma}{chapter}
\numberwithin{proposition}{chapter}
\newtheorem{definition}{Definition}
\numberwithin{definition}{chapter}
\renewcommand{\Pr}{{\operatorname{Pr}}}
\renewcommand{\dim}{{\operatorname{dim}}}
\DeclarePairedDelimiter\floor{\lfloor}{\rfloor}
\newcommand{\nc}{\newcommand}
\nc{\rnc}{\renewcommand}
\nc{\avg}[1]{\langle#1\rangle}
\nc{\Rank}{\operatorname{Rank}}
\nc{\smfrac}[2]{\mbox{$\frac{#1}{#2}$}}
\nc{\ox}{\otimes}
\nc{\dg}{\dagger}
\nc{\dn}{\downarrow}
\nc{\cA}{{\cal A}}
\nc{\cB}{{\cal B}}
\nc{\cF}{{\cal F}}
\nc{\cG}{{\cal G}}
\nc{\cH}{{\cal H}}
\nc{\cI}{{\cal I}}
\nc{\cJ}{{\cal J}}
\nc{\cK}{{\cal K}}
\nc{\cL}{{\cal L}}
\nc{\cM}{{\cal M}}
\nc{\cN}{{\cal N}}
\nc{\cO}{{\cal O}}
\nc{\cP}{{\cal P}}
\nc{\cQ}{{\cal Q}}
\nc{\cY}{{\cal Y}}
\nc{\cZ}{{\cal Z}}
\nc{\csupp}{{\operatorname{csupp}}}
\nc{\qsupp}{{\operatorname{qsupp}}}
\nc{\rar}{\rightarrow}
\nc{\lrar}{\longrightarrow}
\nc{\polylog}{{\operatorname{polylog}}}
\nc{\wt}{{\operatorname{wt}}}
\nc{\RR}{{{\mathbb R}}}
\nc{\CC}{{{\mathbb C}}}
\nc{\FF}{{{\mathbb F}}}
\nc{\NN}{{{\mathbb N}}}
\nc{\ZZ}{{{\mathbb Z}}}
\nc{\PP}{{{\mathbb P}}}
\nc{\QQ}{{{\mathbb Q}}}
\nc{\UU}{{{\mathbb U}}}
\nc{\EE}{{{\mathbb E}}}
\nc{\Hom}[2]{\mbox{Hom}(\CC^{#1},\CC^{#2})}
\nc{\rU}{\mbox{U}}
\nc{\ob}[1]{#1}
\nc{\SEP}{{\text{SEP}}}
\nc{\NS}{{\text{NS}}}
\nc{\LOCC}{{\text{LOCC}}}
\nc{\PPT}{{\text{PPT}}}
\nc{\EXT}{{\text{EXT}}}
\nc{\Sym}{{\operatorname{Sym}}}
\nc{\ERLO}{{E_{\text{r,LO}}}}
\nc{\ERLOCC}{{E_{\text{r,LOCC}}}}
\nc{\ERPPT}{{E_{\text{r,PPT}}}}
\nc{\ERLOCCinfty}{{E^{\infty}_{\text{r,LOCC}}}}
\nc{\Aram}{{\operatorname{\sf A}}}
\begin{document}
\selectlanguage{english}

    




\frontmatter
\maketitle      
\cleardoublepage

\newpage
\selectlanguage{english}
\section*{\Large \sffamily Abstract}

This thesis addresses problems in the field of quantum information theory, specifically, quantum Shannon theory. The first part of the thesis is opened with concrete definitions of general quantum source models and their compression, and each subsequent chapter addresses the compression of a specific source model as a special case of the initially defined general models. First, we find the optimal compression rate of a general mixed state source which includes as special cases all the previously studied models such as Schumacher’s pure and ensemble sources and other mixed state ensemble models. For an interpolation between the visible and blind Schumacher’s ensemble model, we find the optimal compression rate region for the entanglement and quantum rates. Later, we comprehensively study the classical-quantum variation of the celebrated Slepian-Wolf problem and find the optimal rates considering per-copy fidelity; with block fidelity we find single letter achievable and converse bounds which match up to continuity of a function appearing in the bounds. The first part of the thesis is closed with a chapter on the ensemble model of quantum state redistribution for which we find the optimal compression rate considering per-copy fidelity and single-letter achievable and converse bounds matching up to continuity of a function which appears in the bounds.

The second part of the thesis revolves around information theoretical perspective of quantum thermodynamics. We start with a resource theory point of view of a quantum system with multiple non-commuting charges where the objects and allowed operations are thermodynamically meaningful; using tools from quantum Shannon theory we classify the objects and find explicit quantum operations which map the objects of the same class to one another. Subsequently, we apply this resource theory framework to study a traditional thermodynamics setup with multiple non-commuting conserved quantities consisting of a main system, a thermal bath and batteries to store various conserved quantities of the system. We state the laws of the thermodynamics for this system, and show that a purely quantum effect happens in some transformations of the system, that is, some transformations are feasible only if there are quantum correlations between the final state of the system and the thermal bath.

\newpage
\selectlanguage{catalan}
\section*{\Large \sffamily  Resum}

Aquesta tesi aborda problemes en el camp de la teoria de la informaci\'{o} qu\`{a}ntica, espec\'{i}ficament, la teoria qu\`{a}ntica de Shannon. La primera part de la tesi comen\c{c}a amb definicions concretes de models de fonts qu\`{a}ntiques generals i la seva compressi\'{o}, i cada cap\'{i}tol seg\"{u}ent aborda la compressi\'{o} d'un model de font espec\'{i}fic com a casos especials dels models generals definits inicialment. Primer, trobem la taxa de compressi\'{o} \`{o}ptima d'una font d'estats barreja general que inclou com a casos especials tots els models pr\`{e}viament estudiats, com les fonts pures i de co"lectivitats de Schumacher, i altres models de co"lectiuvitats d'estats barreja. Per a una interpolaci\'{o} entre els models de co"lectivitats visible i cec de Schumacher, trobem la regi\'{o} de compressi\'{o} \`{o}ptima per les taxes d'entrella\c{c}ament i les taxes qu\`{a}ntiques. A continuaci\'{o}, estudiem exhaustivament la variaci\'{o} cl\`{a}ssic-qu\`{a}ntica del fam\'{o}s problema de Slepian-Wolf i trobem les taxes \`{o}ptimes considerant la fidelitat per c\`{o}pia; per la fidelitat de bloc trobem expressions tancades per les fites assolibles i inverses que coincideixen, sota la condici\'{o} de que una funci\'{o} que apareix a les dues fites sigui continua. La primera part de la tesi tanca amb un cap\'{i}tol sobre el model de co"lectivitats per la redistribuci\'{o} d'estats qu\`{a}ntics per al qual trobem la taxa de compressi\'{o} \`{o}ptima considerant la fidelitat per c\`{o}pia i les fites assolibles i inverses, que de nou que coincideixen sota la condici\'{o} de continu\"{i}tat d'una certa funci\'{o}.

La segona part de la tesis gira al voltant de la term\'{o}dinamica qu\`{a}ntica sota de la perspectiva de la teoria de la informaci\'{o}. Comencem amb un punt de vista de la teoria de recursos d'un sistema qu\`{a}ntic amb m\'{u}ltiples c\`{a}rregues que no commuten i amb objectes i operacions permeses que son termodin\`{a}micament significatives; utilitzant eines de la teoria qu\`{a}ntica de Shannon classifiquem els objectes i trobem operacions qu\`{a}ntiques expl\'{i}cites que relacionen els objectes de la mateixa classe entre s\'{i}. Posteriorment, apliquem aquest marc de la teoria de recursos per estudiar una configuraci\'{o} termodin\`{a}mica tradicional amb m\'{u}ltiples quantitats conservades que no commuten que consta d'un sistema principal, un reservori cal\`{o}ric i bateries per emmagatzemar diverses quantitats conservades del sistema. Enunciem les lleis de la termodin\`{a}mica per a aquest sistema, i mostrem que un efecte purament qu\`{a}ntic t\'{e} lloc en algunes transformacions del sistema, \'{e}s a dir, algunes transformacions nom\'{e}s s\'{o}n factibles si hi ha correlacions qu\`{a}ntiques entre l’estat final del sistema i del reservori cal\`{o}ric.

\newpage
\selectlanguage{catalan}
\section*{\Large \sffamily  Resumen}

Esta tesis aborda problemas en el campo de la teor\'{i}a de la informaci\'{o}n cu\'{a}ntica, espec\'{i}ficamente, la teor\'{i}a cu\'{a}ntica de Shannon. La primera parte de la tesis comienza con definiciones concretas de modelos de fuentes cu\'{a}nticas generales y su compresi\'{o}n, y cada cap\'{i}tulo subsiguiente aborda la compresi\'{o}n de un modelo de fuente espec\'{i}fico como casos especiales de los modelos generales definidos inicialmente. Primero, encontramos la tasa de compresi\'{o}n \'{o}ptima de una fuente de estado mixto general que incluye como casos especiales todos los modelos previamente estudiados, como las fuentes pura y colectiva de Schumacher, y otros modelos colectivos de estado mixto. Para una interpolaci\'{o}n entre el modelo colectivo visible y ciego de Schumacher, encontramos la regi\'{o}n de tasa de compresi\'{o}n \'{o}ptima para el entrelazamiento y las tasas cu\'{a}nticas. A continuaci\'{o}n, estudiamos exhaustivamente la variaci\'{o}n cl\'{a}sico-cu\'{a}ntica del c\'{e}lebre problema de Slepian-Wolf y encontramos las tasas \'{o}ptimas considerando la fidelidad por copia; con la fidelidad de bloque encontramos l\'{i}mites alcanzables e inversos que coinciden con la continuidad de una funci\'{o}n que aparece en los l\'{i}mites. La primera parte de la tesis cierra con un cap\'{i}tulo sobre el modelo colectivo de redistribuci\'{o}n de estado cu\'{a}ntico para el cual encontramos la tasa de compresi\'{o}n \'{o}ptima considerando la fidelidad por copia y los l\'{i}mites alcanzables e inversos que coinciden con la continuidad de una funci\'{o}n que aparece en los l\'{i}mites.

La segunda parte de la tesis gira en torno a la perspectiva te\'{o}rica de la informaci\'{o}n de la termodin\'{a}mica cu\'{a}ntica. Comenzamos con un punto de vista de la teor\'{i}a de recursos de un sistema cu\'{a}ntico con m\'{u}ltiples cargas no conmutables con objetos y operaciones permitidas que son termodin\'{a}micamente significativas; usando herramientas de la teor\'{i}a cu\'{a}ntica de Shannon clasificamos los objetos y encontramos operaciones cu\'{a}nticas expl\'{i}citas que mapean los objetos de la misma clase entre s\'{i}. Posteriormente, aplicamos este marco de la teor\'{i}a de recursos para estudiar una configuraci\'{o}n termodin\'{a}mica tradicional con m\'{u}ltiples cantidades no conmutables compuesta por un sistema principal, un reservorio cal\'{o}rico y bater\'{i}as para almacenar varias cantidades conservadas del sistema. Enunciamos las leyes de la termodin\'{a}mica para este sistema, y mostramos que ocurre un efecto puramente cu\'{a}ntico en algunas transformaciones del sistema, es decir, algunas transformaciones solo son factibles si existen correlaciones cu\'{a}nticas entre el estado final del sistema y del reservorio cal\'{o}rico.

\selectlanguage{english}


\cleardoublepage
\noindent {\Large   \textbf{Acknowledgements}}

\vspace{1cm}

\noindent I express my sincere gratitude to my supervisors Andreas Winter and Maciej Lewenstein for their continuous support and care in any aspect that I could possibly ask for. I started with studying various problems in quantum Shannon theory, and I enjoyed and learned from immense knowledge of Andreas Winter who gave me unlimited freedom and time to submerge myself in problems and wrap my head around them; I have been fascinated to see his perspective, scientific discipline and how he approaches science in general, and I feel privileged to have him as my mentor and  role model both in academic and personal life. He later introduced me to Maciej Lewenstein in ICFO where I have learned from his profound knowledge in physics and how to make sense of complicated mathematical notions through physical interpretations without obsessing about equations. I cannot thank Maciej enough for his kindness, support and also the freedom and time that he gave me.

I am honored and delighted to defend my thesis in front of
the experts of the field 
John Calsamiglia, Patrick Hayden and  Micha{\l} Horodecki, whose scientific works have
been a source of inspiration and guidance to me.

I have learned a lot and enjoyed discussing problems during my academic visits that I have had, specifically, I would like to thank Paul Skrzypczyk and Tony Short in the University of Bristol, Nilanjana Datta in Cambridge and Masahito Hayashi in Peng Cheng Lab.

Apart from scientific perspective, I got to enjoy my time in ICFO as a Phd student which is a great institute for anyone seeking professional academic training thanks to the organization, generosity and supportive environment of the institute. I am indebted to both academic and administrative staff. Moreover, thanks to the social environment there, I have made great friendships and enjoyed the fun specifically during various annual events. 

I had great time and experience in UAB in our own quantum information group (Giq), where it is my academic home, thanks to the supportive and encouraging atmosphere that has been fostered here. I am indebted to Anna, Emili, John, Ramon and other Giq members for all their support and care; I have enjoyed the seminars, the time we have spent together during lunch and our annual Cal\c{c}otadas.

I could never accomplish what I have accomplished so far without my background and the training that I have had in great schools in Iran, I specifically thank my teachers in Sharif university of technology where I was exposed to information theory and I got fascinated for the first time about quantum information. I am specifically grateful to Salman Beigi and Amin Gohari for their teaching, advice, scientific manner, support and recommendations.       

Despite the challenges I have faced, I got to enjoy living in Barcelona thanks to the beauty of this city and life-long friendships that I have made here.
I am thankful to my friends Arezou, Hara, Lisa, Marzieh, Susanna and other friends in Giq and ICFO, specifically Roger for translating the summary of my thesis to Catalan and Spanish. 

During this period, I have gone back to my family  all the time. In particular, 
I am grateful and indebted to my parents for their love, support and encouragement and planting the initial seeds of love and passion for science.  
I have had their 
continuous support throughout my life and especially in my academic endeavours.

Finally, I cannot express enough my happiness and gratitude to have Farzin as my husband and friend. We started the Phd at the same time, and despite ups and downs of his own path, he never failed to support and encourage me; he kept my attitude positive and optimistic to overcome challenges of this path even when he was facing his own hurdles.  I am thankful to his kindness and care and all the discussions we had regarding quantum information and the Phd life.

\cleardoublepage


\noindent This thesis has been supported by the Spanish MINECO (projects FIS2016-86681-P, 
FISICATEAMO FIS2016-79508-P, SEVERO OCHOA No. SEV-2015-0522, FPI 
and PID2019-107609GB-I00/AEI/10.13039/501100011033), the FEDER funds,  the Generalitat de Catalunya(project 2017-SGR-1127, 
2017-SGR-1341 and CERCA/Program), ERC AdG OSYRIS, EU FETPRO QUIC, 
and the National Science Centre, Poland-Symfonia grant no.2016/20/
W/ST4/00314.

\tableofcontents

\mainmatter    
\chapter{Introduction} 

\section{Background and motivation}

Information theory studies the transmission, processing, extraction, and utilization of information. 
The notion of classical information was first introduced
by Shannon \cite{Shannon1948}, 
who defined it operationally, as the minimum number of bits needed to communicate the message
produced by a statistical source. This gave meaning to the Shannon entropy $H(X)$ of a source producing a random variable $X$. 
The amount of information that two random variables
$X$ and $Y$ have in common was given a meaning through
the mutual information $I(X:Y)$. Operationally it is the rate of communication possible
through a noisy channel taking
$X$ to $Y$.

Quantum Shannon theory is a more general field which studies information on physical systems governed by the rules of 
quantum mechanics, therefore encompasses classical information as sub-field, and
was mathematically founded   
by Holevo in 1973 \cite{Holevo1973}
to study the transmission of information 
over quantum channels following the earliest understanding of the connection between quantum physics and information theory \cite{Gordon64,Levitin69,Forney63,Stratonovich66}.

Surprisingly, von Neumann entropy, which is a generalization of Shannon entropy, was formulated  before Shannon entropy  in the context of thermodynamics and statistical mechanics, and it was not contemplated  to convey  informational interpretation. Despite this fact and Holevo's study of classical information on quantum systems \cite{Holevo1973,Holevo1979}, the concept of quantum information 
was obscure till 1995, when Schumacher
showed that the von Neumann entropy
has the operational interpretation of the number of \emph{qubits}
needed to transmit quantum states emitted by a statistical  source \cite{Schumacher1995}.

After Schumacher's quantitative notion of quantum information, i.e. qubit, and understanding its complementary nature to classical information, quantum Shannon theory has been further established in the last three decades by fundamental discoveries from source and channel coding to quantum cryptography, quantum error-correcting, measures of entanglement and so on 
\cite{decoherence-Shor-1995,err-correction-Schumacher-1996,Schumi1996,err-codes1997,Barnum1998,privacy-coherence-1998,E-assisted-C-capacity1999,Lo-Popescu-1999,fidelities-capacities-2000,On-C-capacity-Holevo-2002,Schumacher2002,Bennett2002,Keydistill-DW-2005,Devetak-capacity-2005}.

In particular, the notion of a quantum source 
as a quantum state together with correlations with a reference system and its compression led to the discovery of operational meaning for quantum quantities such as quantum conditional entropy, which as opposed to its classical counterpart can obtain negative values. In this source compression task with side information, which is called state merging, the negative values of 
conditional entropy imply that the entanglement is generated after the compression is accomplished, and it can be used as a resource for future communications \cite{Horodecki2007,SM_nature}.   

Other quantum source compression problems 
such as quantum state redistribution and visible compression of mixed states gave operational meaning to quantum conditional mutual information and regularized entanglement of purification \cite{Devetak2008_2,Yard2009,visible_Hayashi}, respectively, and they have been used successfully as sub-protocols to accomplish tasks other than data compression \cite{QRST2014}. 
Various source models and their compression have been considered throughout these years 
and each source appeared to be a distinct case with a unique compression behavior \cite{Devetak2003,Winter1999,KI2001,visible_Hayashi,Horodecki2007,Devetak2008_2,Yard2009}, and the compression of many other source models has been left open \cite{Winter1999,Ahn2006}.
These open questions and the lack of a source model, which can unify all these seemingly distinct models, is the underlying motivation for the first part of this thesis which focuses on the compression of quantum sources. 
We specifically solve the Schumacher's compression problem when the overall state together with the reference is a general mixed state. When there are side information systems, a general reference system appears to be  hard to tackle, therefore we attack compression problems with classical references or so called ensemble sources. 

Understanding compression and capacity problems apart from finding fundamental limits on the amount of communication and storage rates, has developed tools and quantitative notions, e.g typical subspaces and entropic quantities, which has been successfully used to deal with and interpret other 
quantum effects such as quantum thermodynamics and quantum coherence \cite{Brandao2013,Weilenmann2016,Winter-Dong-2016}.
In particular, the innate relationship between information theory and thermodynamics has proved that integrated ideas from both fields are fruitful \cite{Jaynes_2_1957,Jaynes1957,Brillouin62,Jaynes82}.
This has been the motivation for the second part of this thesis which focuses on quantum thermodynamics, where we consider a general framework with multiple conserved quantities and apply information theoretic tools to construct charge conserving operations. These explicit operations are extremely helpful to study traditional thermodynamics settings and laws.

Perhaps the most up-to-date and comprehensive review of the fast-growing field of quantum thermodynamics is contained in the collection of essays
in \cite{Thermo_Quantum_Regime}.
Still, some very fundamental questions concerning quantum thermodynamics have been answered in this thesis.
%
For a non-specialist, 
these questions can be formulated as follows. Normally in both classical and quantum thermodynamics one deals with large systems interacting with an even larger bath. In addition to energy, the system maybe characterized by many macroscopic conserved (on average) quantities, called here charges like total electric charge, total dipole moment, angular momentum, magnetization, total spin components etc. In the quantum case, these quantities may correspond to non-commuting operators. How come that with repeated measurements on the system prepared in the same state, we obtain well defined average values of this charges? The repeated measurements of equally prepared systems can be mathematically treated by considering tensor product states of many copies of the systems. This mathematical construction is used in the thesis to define thermodynamically allowed transformation, which have to conserve all average values of the charges. To any quantum state, we associate a vector with entries of the expected charge values and entropy of that state. The set of all these vectors forms the phase diagram of the system, and show that it characterizes the equivalence classes of states under thermodynamically allowed transformations, which are proven rigorously to correspond to asymptotic unitary transformations that approximately conserve the charges. 

Our theory provides a general theoretical framework, but leads also to predictions of very concrete effects. In particular, we estimate how large an asymptotically large 
bath is necessary  to attain the second law of thermodynamics, and permit a specified 
work transformation of a given system. In some situations, the necessary bath extension is relatively small, and then quantum setting requires an extended phase diagram exhibiting negative entropies. This corresponds to the purely quantum effect that at the end of the process, system and bath are entangled. Obviously, such processes are impossible classically! For large thermal bath, thermodynamically allowed transformation leave the system and the bath uncorrelated. In such case, the heat capacity of the bath becomes a function of how tightly the second law is attained.

\section{The structure of the thesis}

The reminder of this chapter is dedicated to introducing some notation and preliminary material, which are prerequisite for the subsequent chapters. In summary as mentioned above, the thesis is based on two main themes: part I and part II revolving around quantum source compression and quantum    thermodynamics, respectively.

As for the source compression part, we start with chapter~\ref{chap:source-coding} where we first expand on the notion of a quantum source and continue with a rigorous definition
of asymptotic source compression task which 
encompasses as special cases all the reviewed
compression problems in the context of asymptotic quantum source compression and unifies them under a common base. Later in the chapter, we define side information and distributed settings for compressing the information.  As for the resources available for communication, we consider noiseless qubit channel and shared entanglement between the parties. 

In chapter \ref{chap:mixed state}, we consider the most general source where the overall state with the reference system is a general mixed state. This model covers all the previously studied models such as Schumacher's ensemble and pure sources \cite{Schumacher1995} and the ensemble of mixed state source \cite{KI2001}. We find the optimal trade-off between the entanglement and quantum communication rates. The optimal rates are in terms of a decomposition of the source introduced in \cite{Hayden2004} which is a generalization of the well-known decomposition discovered by Koashi and Imoto in \cite{KI2002}.
When there are side information systems or the compression task is distributed, the general models defined in chapter~\ref{chap:source-coding} appears to be very complicated, and even much simpler models
have been left open since the early exploration of the compression problems \cite{Winter1999, Ahn2006,Devetak2003}; therefore in the subsequent chapters we consider special cases where the source states are from an ensemble, that is the reference system is partly classical.

In chapter~\ref{chap: E assisted Schumacher}, we consider an interpolation between visible and blind Schumacher compression, that is the encoder has access to a side information system which can reduce to a classical system with the information about the identity of the states and a trivial system in the visible and blind scenarios, respectively. We find the optimal trade-off between the entanglement and quantum rates which depending on whether the ensemble is reducible or not, the entanglement consumption reduces the quantum rate or does not help it at all. 

Chapter~\ref{chap:cqSW} is about the distributed  compression  of a hybrid classical-quantum source which is an extension of the celebrated Slepian-Wolf problem \cite{Slepian1973}. Two important sub-problems of this distributed compression problem are classical data compression with quantum side information (at the decoder), which is addressed in \cite{Winter1999,Devetak2003},  and quantum data compression with classical side information (at the decoder), which is the main focus of this chapter. For a class of generic sources we show that the compression rate can be strictly larger than the conditional entropy contrary to the fully classical problem of Slepian-Wolf where the rate of the side information case is always governed by the conditional entropy. However, in general the quantum compression rate reduces by a factor of half of the mutual information between the classical variable and the environment system of the encoder.

Chapter~\ref{chap:QSR ensemble} closes
the first part of the thesis where we consider the most general ensemble model of pure states with side information available both at the encoder and decoder side. When the overall state of the parties and the reference system is pure, the problem is known as quantum state redistribution \cite{Devetak2008_2,Yard2009,Oppenheim2008}. 
We find  the optimal quantum compression rate and confirm that preserving correlations with a 
hybrid classical-quantum reference, which is less stringent than preserving the correlations with the purified reference, can lead to strictly smaller quantum rates.  Indeed, this model 
includes as special cases the sources considered in  chapter~\ref{chap: E assisted Schumacher} and chapter~\ref{chap:cqSW}, however, in the former chapter the figure of merit is block fidelity whereas in the last two chapters the optimal rates are obtained by considering per-copy fidelity; considering block fidelity in the last two chapters, we find  upper and lower bounds 
which would match if the corresponding function defining the bounds is continuous.


The second part of the thesis consists of two chapters. In chapter~\ref{chap:resource theory}, we develop a general resource theory with allowed operations which are thermodynamically meaningful. The objects of this resource theory are quantum states and the allowed operations are those asymptotically commuting with a general set of charges associated with the quantum system.  In order to explicitly construct these operations we use tools and notions such as quantum typicality and approximate  microcanonical subspace. 
Later in chapter~\ref{chap:thermo}, we use the developed operations to study a traditional thermodynamics setting with multiple conserved quantities consisting of a work system, a thermal bath and many batteries to store each charge. We extend the notion of charge-entropy diagram to a diagram with conditional entropy to find out
which transformations are feasible and show that some transformations are feasible only if the final states of the work system and the thermal bath are entangled, i.e. a
purely quantum effect enlarges the set of feasible transformations for the work system.

\bigskip

Finally, the last six chapters are essentially based on the following publications and preprints:

\begin{itemize}

\item 
\textbf{Chapter 3:} 

\noindent { \cite{ZK_mixed_state_2019} Z. B. Khanian and A. Winter, “General mixed state quantum data compression with and without entanglement assistance,”  \emph{pre-print (2019)}, arXiv: 1912.08506.}

\cite{ZK_mixed_state_ISIT_2020} Z. B. Khanian and A. Winter, “General mixed state quantum data compression with and without entanglement assistance,” in: \emph{Proc. IEEE Int. Symp. Inf. Theory (ISIT)}, Los Angeles, CA, USA, pp. 1852-1857, June 2020.

\item 
\textbf{Chapter 4:}

 \cite{Schumacher_Assisted_arXiv_Z_2019} Z. B. Khanian and A. Winter, “Entanglement-assisted quantum data
compression,” \emph{preprint (2019)}, arXiv: 1901.06346.  

\cite{ZK_Eassisted_ISIT_2019} Z. B. Khanian and A. Winter, “Entanglement-assisted quantum data compression,” in: \emph{Proc. IEEE Int. Symp. Inf. Theory (ISIT)}, Paris, France, pp. 1147–1151, July 2019.

\item \textbf{Chapter 5:}

\cite{ZK_cqSW_2018} Z. B. Khanian and A. Winter, “Distributed compression of correlated classical-quantum sources or: the price of ignorance,” \emph{IEEE Trans. Inf. Theory}, vol. 66, no. 9,  pp. 5620-5633, Sep 2020. arXiv: 1811.09177.

\cite{ZK_cqSW_ISIT_2019} Z. B. Khanian and A. Winter, “Distributed compression of correlated classical-quantum sources,” in: \emph{Proc. IEEE Int. Symp. Inf. Theory (ISIT)}, Paris, France, pp. 1152-1156, July 2019.

\item \textbf{Chapter 6: }


\cite{QSR_ensemble_full} Z. B. Khanian and A. Winter, “Rate distortion perspective of quantum state redistribution,” \textit{in preparation}.

\cite{ZK_QSR_ensemble_ISIT_2020} Z. B. Khanian and A. Winter, “Quantum state redistribution for ensemble sources,” in: \emph{Proc. IEEE Int. Symp. Inf. Theory (ISIT)}, Los Angeles, CA, USA, pp. 1858-1863, June 2020.

\item \textbf{Chapter 7 and Chapter 8:}

 
\cite{thermo_ZBK_2020} Z. B. Khanian, M.~N. Bera, A. Riera, M. Lewenstein and A. Winter, “Resource theory of heat and work with non-commuting charges: yet another new foundation of thermodynamics,” \emph{preprint (2020)}, arXiv: 2011.08020. 
\end{itemize}

During my Phd, I have also worked on the following thermodynamics project which is not included in this thesis:
 
\begin{itemize}
\item \cite{brl19} M. N. Bera, A. Riera, M. Lewenstein, Z. B. Khanian, and
A. Winter, “Thermodynamics as a Consequence of Information Conservation,”
\emph{Quantum}, vol. 3, 2018. arXiv[quant-ph]:1707.01750.
\end{itemize}

\section{Notation and preliminaries}
In this section, we introduce some conventions, notation and facts that we use throughout this thesis.

Quantum systems are associated with (finite dimensional) Hilbert spaces $A$, $B$, etc.,
whose dimensions are denoted $|A|$, $|B|$, respectively.
The state of such
quantum system is entirely characterized by a density operator, say $\rho$, acting on the associated Hilbert space which is a positive semidefinite operator with trace 1.
Also, we use the notation $\phi= \ketbra{\phi}{\phi}$ to denote  the density 
operator of the pure state vector $\ket{\phi}$. 
Moreover, a system is called  classical  if all the states of the system are diagonal in a fixed orthonormal basis. 

The evolution of a quantum system is characterized by a quantum channel or a so-called completely positive and
trace preserving (CPTP) map which is a linear map taking operators on a Hilbert space 
to operators on the same or a different Hilbert space \cite{Stinespring1955}, however, since there is no risk of confusion, we  denote a CPTP map by the input and output Hilbert spaces, for example, the operator $\mathcal{N}:A \longrightarrow B$  takes the input state $\rho$  on $A$ to the output state $\mathcal{N}(\rho)$ on $B$.

Furthermore, according to Stinespring's factorization theorem \cite{Stinespring1955}, if $\mathcal{N}:A \longrightarrow B$ is a CPTP 
map, then it can be  dilated  to the isometry 
$U_{\mathcal{N}}: A \hookrightarrow B W$ with $W$ as the environment system such that $\mathcal{N}(\rho)=\Tr_W(U_{\mathcal{N}} \rho U_{\mathcal{N}}^{\dagger})$ where $\Tr_W(\cdot)$ denotes the partial trace on system $W$.

\medskip

The fidelity, which is a measure of closeness,  between two states $\rho$ and $\sigma$ is defined as \cite{Jozsa1994_2}
\begin{align}\label{eq:fidelity_def}
F(\rho, \sigma) := \|\sqrt{\rho}\sqrt{\sigma}\|_1 
                 = \Tr \sqrt{\rho^{\frac{1}{2}} \sigma \rho^{\frac{1}{2}}},
\end{align}
where the trace norm is defined as 
\begin{align}\label{eq:norm_def}
\|X\|_1 := \Tr|X| = \Tr\sqrt{X^\dagger X}
\end{align}
It relates to the trace distance in the following well-known way \cite{Fuchs1999}:
\begin{align}\label{eq:Fidelity_norm_relation}
1-F(\rho,\sigma) \leq \frac12\|\rho-\sigma\|_1 \leq \sqrt{1-F(\rho,\sigma)^2}.
\end{align}

The \emph{von Neumann entropy} of a quantum state $\rho$ on system $A$ is defined as 
\begin{align}\label{eq:von Neumann}
S(\rho)_A := - \Tr\rho\log\rho,
\end{align}
where throughout this thesis, $\log$ denotes by default the binary logarithm,
and its inverse function $\exp$, unless otherwise stated, is also to basis $2$. $S(\rho)_A$ is also denoted by $S(A)_{\rho}$.
For the diagonalization of $\rho$, i.e $\rho=\sum_x p_x \proj{v_x}$ with orthonormal basis $\{\ket{v_x}\}$, the von Neumann entropy reduces to the \emph{Shannon entropy} $H(X)$ of 
a random variable $X$ with probability  distribution $p_x$:
\begin{align}\label{eq:Shannon entropy}
 H(X):=-\sum_x p_x \log p_x =S(\rho)_A.
\end{align}
The von Neumann entropy is always bounded as the following:
\begin{align}
  0\leq S(\rho) \leq \log |A|,
\end{align}
where $|A|$ is the dimension of the underlying Hilbert space of $\rho$, i.e. the support of $\rho$. Moreover, $S(\rho)=0$ if and only if $\rho$ is a pure state, and
$S(\rho) = \log |A|$ if and only if  it is  a maximally mixed state, i.e. $\rho = \frac{\1}{\log |A|}$.

\medskip

The mutual information for  a state $\rho^{AB}$ on a bipartite Hilbert space $A \otimes B$ is defined as:
\begin{align}
I(A:B):=S(A)_{\rho}+S(B)_{\rho}-S(AB)_{\rho},
\end{align}
which is always non-negative due to sub-additivity of the von Neumann entropy \cite{Araki_Lieb_1970} and  is equal to 0 if and only if $\rho^{AB}=\rho^{A}\otimes \rho^{B}$, that is an uncorrelated state.

\medskip
Quantum conditional entropy and quantum conditional mutual information, $S(A|B)_{\rho}$ and $I(A:B|C)_{\rho}$,
respectively, are defined in the same way as their classical counterparts: 
\begin{align}\label{eq:conditional_def}
    S(A|B)_{\rho}   &:= S(AB)_\rho-S(B)_{\rho}, \text{ and} \nonumber\\ 
    I(A:B|C)_{\rho} &:= S(A|C)_\rho-S(A|BC)_{\rho} \nonumber \\
                     &= \!S(AC)_\rho\!+\! S(BC)_\rho\!\!-\!S(ABC)_\rho\!\!-\!S(C)_\rho.
\end{align}
Quantum conditional entropy can acquire negative values, however, it is always positive if at least one of the systems $A$ or $B$ is classical. 
Araki-Lieb inequality holds for the conditional entropy as the following \cite{Araki_Lieb_1970}:
\begin{align}\label{eq:Araki_Leib}
  -S(A)_{\rho} \leq S(A|B)_{\rho} \leq  S(A)_{\rho}, 
\end{align}
where the inequality on the right hand side is known as sub-additivity of the entropy.
Quantum conditional mutual information is always positive due to strong sub-additivity of the entropy as the following \cite{SSA_1973}:
\begin{align}\label{eq:SSA}
  S(A|BC)_{\rho} \leq  S(A|C)_{\rho}.
\end{align}

The quantum  \emph{relative entropy} between two quantum states $\rho$ and $\sigma$ is defined as:
\begin{align}
D(\rho || \sigma):=
\begin{cases} 
\Tr (\rho (\log \rho - \log \sigma)) & \text{supp}(\rho) \subseteq \text{supp}(\sigma) \\ 
\infty & \text{otherwise}, 
\end{cases}
\end{align}
which is always non-negative. Pinsker's inequality \cite{Schumacher2002} relates the quantum relative entropy and the trace norm  by 
\begin{align}
  \|\rho-\sigma\|_1 \leq \sqrt{2 \ln 2 D(\rho\|\sigma)}. 
\end{align}










\part{Quantum Source Compression}

\chapter{Formulation of quantum source compression problems} 
\label{chap:source-coding}
In this chapter, we first expand on the concept of quantum sources and the literature on that  and mathematically define an asymptotic compression task as a general model which
include all previously studied asymptotic models.
Then, we introduce quantum compression problems with side information and review
the literature, and later we proceed with defining the most general asymptotic compression task
with side information.
Finally at the end of the chapter, we summarize the results that we have accomplished on quantum source compression.













\section{What is a quantum source?}

A statistical quantum source is a quantum system together with correlations with a \emph{reference system}.
A criterion of how well a source is reproduced in a communication 
task is to measure how well the correlations are preserved with 
the reference system. Without correlation, the information does not make sense 
because a known quantum state without correlations can be reproduced at the 
destination without any communication.

A special case is a classical statistical source, which is modeled by a random variable. Since classical information can be copied, a copy of a random variable can be always stored as a reference, and the final processed information is compared with the copy as a reference 
to analyse the performance of the communication task. 
%
%
However, in the classical information theory literature, the reference is not usually considered explicitly in the description of classical 
information theory tasks, but arguably it is conceptually necessary in quantum 
information. This is because it allows us to present the figure of merit quantifying 
the decoding error as operationally accessible, for example via the probability of
passing a test in the form of a measurement on the combined source and reference systems. This point 
is made eloquently in the early work of Schumacher on quantum information transmission
\cite{Schumi1996,Barnum1998}.

To elaborate more on the reference system, consider the source that Schumacher defined in his 1995 paper \cite{Schumacher1995,Jozsa1994_1} as an ensemble of pure states $\{p(x),\ket{\psi_x}^{A} \}$,
where the source generates the state $\ket{\psi_x}$ with probability $p(x)$. The figure of merit for the encoding-decoding process is to keep the decoded quantum states \emph{on average} very close to the original states with respect to the fidelity, where the average is taken over the probability distribution $p(x)$. 
By basic algebra one can show that this is equivalent to preserving the classical-quantum state 
$\rho^{AR}=\sum_x p(x) \proj{\psi_x}^A \otimes \proj{x}^R$, where system $A$ is the quantum system to be compressed and $R$ is the reference system; namely the following fidelity relation holds:
\begin{align*}
    \sum_{x} p(x) F(\proj{\psi_x}^A,\xi_x^{\hat{A}})=F(\rho^{AR}, \xi^{\hat{A}R}),
\end{align*}
where $\xi_x^{\hat{A}}$ is the decoded state for the realization $x$ and $\xi^{\hat{A}X}=\sum_x  p(x) \xi_x^{\hat{A}} \otimes \proj{x}^R$.
Another source model that Schumacher considered was the purification of the source ensemble, 
that is the state $\ket{\psi}^{AR}=\sum_x \sqrt{p(x)}\ket{\psi_x}^{A} \ket{x}^R$, where the 
figure of merit for the encoding-decoding process was to preserve the pure state correlations with the 
reference system $R$ by maintaining a high fidelity between the decoded state and $\psi$. 
He showed that both definitions lead to the same compression rate, namely, 
the von Neumann entropy of the source $S(A)_{\rho} = S(\rho^A)$, where 
$\rho^A = \Tr_R \rho^{AR}$. Incidentally, the full proof of optimality in the first 
model, without any additional restrictions on the encoder, had to wait until \cite{Barnum1996}
(see also \cite{Horodecki1998});
the strong converse, i.e. the optimality of the entropy rate even for constant error 
bounded away from $1$, was eventually given in \cite{Winter1999}.

Another example of a quantum source is the mixed state source considered by Horodecki \cite{Horodecki1998} and Barnum \emph{et al.} \cite{Barnum2001}, and finally solved by Koashi and Imoto \cite{KI2001},
where the source is defined as an ensemble of mixed states $\{p(x),\rho_x^{A} \}$. Preserving 
these mixed quantum states, on average, in the process of encoding-decoding, the task is 
equivalent to preserving the state $\rho^{AR}=\sum_x p(x) \rho_x^A \otimes \proj{x}^R$, that is 
the quantum system $A$ together with its correlation with the classical reference system $R$.    

In this thesis, we consider the most general finite-dimensional source in the realm of 
quantum mechanics, namely a quantum system $A$ that is correlated with a reference system 
$R$ in an arbitrary way, described by the overall state $\rho^{AR}$. In particular, the 
reference does not necessarily purify the source, nor is it assumed to be classical. 
%
%
The ensemble source and the pure source defined by Schumacher are special cases of this model, 
where the reference is a classical system in the former and a purifying system in the latter. 
So is the source considered by Koashi and Imoto in \cite{KI2001}, where the reference system 
is classical, too. 

Understanding the compression of the source $\rho^{AR}$ has paramount importance in the 
field of quantum information theory and unifies all the models that have been considered 
in the literature. Schumacher's pure source model in a sense is the most stringent model 
because it requires preserving the correlations with a purifying reference system which 
implies that the correlations with any other reference system is preserved which follows 
from the fact that the fidelity is non-decreasing under quantum channels. 
However, the converse is not necessarily true: if in a compression task the parties are 
required to preserve the correlations with a given reference system which does not purify 
the source state, they might be able to compress more efficiently compared to the scenario 
where the reference system purifies the source. This is exactly what we show in Chapter~\ref{chap:mixed state}: 
we characterise the gap precisely depending on the reference system. 

\section{Mathematical definition of quantum noiseless compression} 

A source compression task consists of an encoder which maps the source to compressed information which is stored or sent to another party. When it is needed, a decoder maps the compressed information to decoded information, and the aim is to preserve the correlations with the reference system and reconstruct a source which is very close to the original source in some distance measure. In the quantum realm the most general encoding and decoding maps which can be performed on the information is a quantum operation or a CPTP map.
The communication means or quantum storage device is assumed to be an ideal channel acting as an identity on the encoded information which can be simulated through various resources such as a qubit channel, sharing entanglement and sending classical information and etc. 
The resource is the dimension of the Hilbert space of the encoding operation.

\begin{figure}[!t]
\centering
\includegraphics[width=1\textwidth]{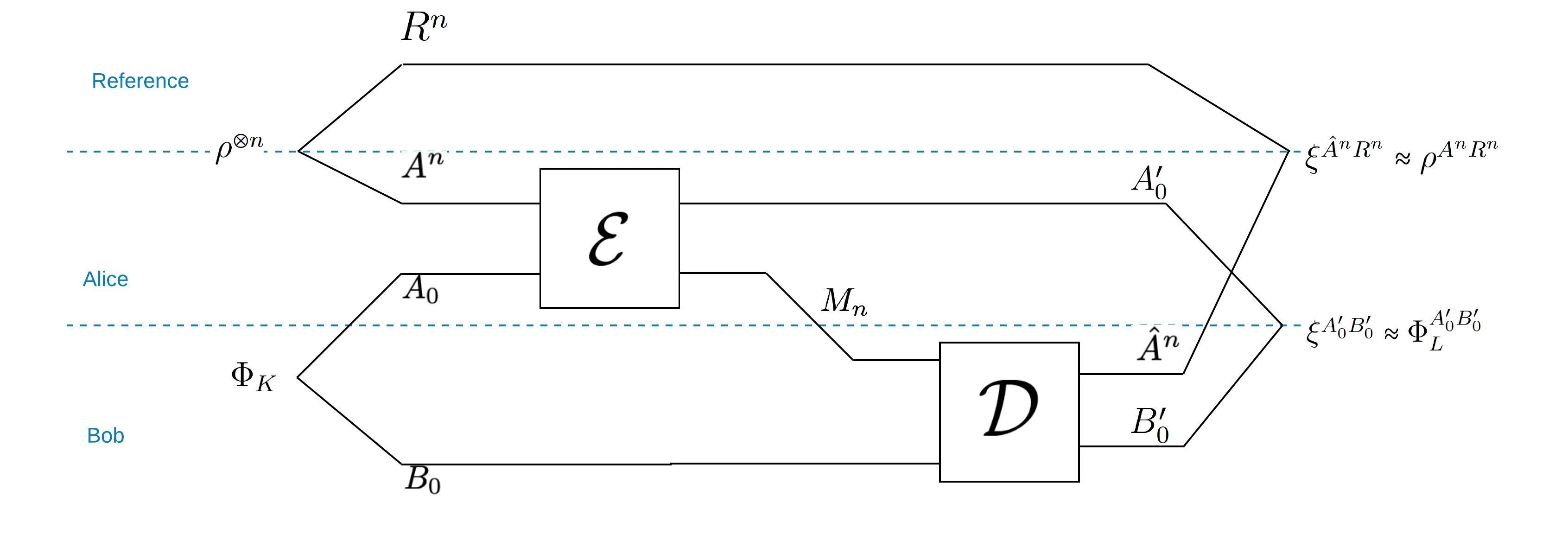} 
\caption{Circuit diagram of the compression task: the source is composed of $n$ copies of the state $\rho^{AR}$ where
$A^n$ is the system to be compressed and $R^n$ is an inaccessible reference system.
Dotted lines are used to demarcate domains controlled by the different participants here the reference, the encoder, Alice and the decoder, Bob. The solid lines represent quantum information registers.
The encoder sends the compressed information, i.e. system $M_n$, to the decoder through a noiseless quantum channel;
moreover, they share initial entanglement in the registers $A_0$ and $B_0$, respectively. 
The aim of the compression task is to reconstruct the source at the decoder side, that is the final state $\xi^{\hat{A}^n R^n}$ has the fidelity converging to 1 with the source state
$\rho^{A^nR^n}$; this ensures that the correlations between the reconstructed system $\hat{A}^n$ and the 
reference system $R^n$ are preserved. Furthermore, the encoder and decoder distill entanglement in
their registers $A'_0$ and $B'_0$, respectively. 
}
\label{fig: rho AR compression task}
\end{figure}

Throughout the thesis, we consider the information theoretic asymptotic limit of
$n$ copies of a finite dimensional source with state $\rho^{AR}$, i.e.~$\rho^{A^n R^n} = \left(\rho^{AR}\right)^{\otimes n}$ where system $A$ is the system to be compressed
and system $R$ is an inaccessible reference system.
We assume that the encoder, Alice, and the decoder, Bob, share initially a maximally 
entangled state $\Phi_K^{A_0B_0}$ on registers $A_0$ and $B_0$ (both of dimension $K$).
The encoder, Alice, performs the encoding compression operation 
$\mathcal{E}:A^n A_0 \longrightarrow M_n A_0'$ on the system $A^n$ and her part $A_0$ of the entanglement, which is CPTP map. 
%
%
Alice's encoding operation produces the state $\sigma^{M_n  R^n A_0' B_0}$ with $M$, $A_0'$ and $B_0$ as the compressed system of Alice, Alice's new entanglement system and Bob's part of the entanglement, respectively.
The dimension of the compressed system is without loss of 
generality not larger than  the dimension of the
original source, i.e. $|M_n| \leq  \abs{A}^n$. 
The system $M_n$ is then sent to Bob via a noiseless quantum channel, who performs
a decoding operation $\mathcal{D}:M_n B_0 \longrightarrow \hat{A}^n B_0'$ on the system 
$M_n$ and his part of the entanglement $B_0$ where $\hat{A}^n$ and $B'_0$ are the reconstructed 
source and Bob's new entanglement system.
Ideally the encoder and decoder want to distill entanglement in the form of maximally entangled
state $\Phi_L^{A'_0 B'_0}$ of dimension $L$ in their corresponding registers $A'_0$ and $B'_0$.

We call $\frac1n \log (K-L)$ and $\frac1n \log|M_n|$ the \emph{entanglement
rate} and \emph{quantum
rate} of the compression protocol, respectively.
We say the encoding-decoding scheme has \emph{fidelity} $1-\epsilon$, or \emph{error} $\epsilon$, if 
\begin{align}
  \label{eq:fidelity criterion no side info task}
  F\left( \rho^{A^n R^n } \otimes \Phi_L^{A_0' B_0'},\xi^{\hat{A}^n  R^n A_0' B_0'} \right)  
          \geq 1-\epsilon,  
\end{align}
%
where $\xi^{\hat{A}^n  R^n A_0' B_0'}=\left((\mathcal{D}\circ\mathcal{E})\otimes \id_{R^n}\right) \rho^{A^n R^n } \otimes \Phi_K^{A_0 B_0}$.
%
%
Moreover, we say that $(E,Q)$ is an (asymptotically) achievable rate pair if for all $n$
there exist codes (encoders and decoders) such that the fidelity converges to $1$, and
the entanglement and quantum rates converge to $E$ and $Q$, respectively.
The compression schemes where the error converges to zero are called noiseless compression schemes which we consider throughout the thesis. 
The rate region is the set of all achievable rate pairs, as a subset of 
$\mathbb{R}\times\mathbb{R}_{\geq 0}$.

\medskip

A schematic description of the quantum source and its compression is illustrated in Fig.~\ref{fig: rho AR compression task}
where the system to be compressed and the reference are denoted by $A^n$ and $R^n$, respectively.
This compression problems is addressed in Chapter~\ref{chap:mixed state} where we find the
optimal trade-off rate region for the entanglement and quantum rates, that is the pairs $(E,Q)$.


\section{Quantum noiseless compression with side information}\label{sec:side info intro}

Side information in information theory is referred to as extra information, which
is correlated with an information source and  is available to encoder, decoder or both of them, and they can use this extra information to use less resources, for example reduce the dimension of the compressed information. 
Slepian and Wolf for the first time studied the compression of a classical source, i.e a random variable, where a decoder has access to another random variable, which is correlated with the source, and showed that the compression rate is equal to the conditional Shannon entropy
\cite{Slepian1973}.

The \textit{visible} paradigm of source compression problems are basically
compression problems where an \emph{encoder} has access to side information, i.e. the identity of states from an ensemble generated by a source \cite{Jozsa1994_1,Barnum1996,Horodecki2000,Barnum2001_2,Bennett2005,Hayden2002,visible_Hayashi}.
For example, the source in the visible Schumacher compression \cite{Jozsa1994_1,Barnum1996}
is modeled by a classical-quantum state $\rho^{ACR}=\sum_x p(x) \proj{\psi_x}^A \otimes \proj{x}^C \otimes \proj{x}^{R}$, where system $A$ is the system to be compressed, and
systems $C$ and $R$ are the side information system of the encoder and the reference system,  respectively. It is shown that both visible model and  \textit{blind} model, where the encoder
does not have access to system $C$, lead to the same compression rate, i.e. $S(A)$ \cite{ Schumacher1995,Jozsa1994_1,Barnum1996} whereas this is not the case when system $A$ is composed of mixed states, that is visible and blind models for mixed states lead to different compression rates. 
In the visible mixed state compression problem, the source is modeled by many copies of the state $\rho^{ACR}=\sum_x p(x) \rho_x^A \otimes \proj{x}^C \otimes \proj{x}^{R}$ where system $A$ with mixed states is the system to be compressed, and
systems $C$ and $R$ are the side information system of the encoder and the reference system,  respectively. Hayashi showed that the optimal compression rate is equal to the regularized entanglement of  purification of the source \cite{visible_Hayashi} which is different from the
blind compression ($C$ is not available to the encoder) rate obtained by Koashi and Imoto \cite{KI2001,KI2002}.
The visible compression of  this source when the encoder and decoder share unlimited entanglement is a special case of the remote state preparation considered in \cite{Bennett2005}, and the optimal quantum compression rate is equal to $\frac{1}{2}S(A)$.

Winter in his Phd thesis \cite{Winter1999} generalized the notion of correlated sources and side information at the \emph{decoder} to a quantum setting by modeling it as a multipartite quantum source which generates multipartite quantum states where different parties have access
to some parts of a source. The first example studied in this context was a hybrid classical-quantum source $\rho^{ABR_1R_2}=\sum_x p(x) \proj{x}^A \otimes \proj{\psi_x}^{BR_1} \otimes  \proj{x}^{R_2}$ where an encoder 
compresses the classical system $A$, and a decoder aims to reconstruct this system
while having access to quantum side information system $B$ such that the correlations with the
reference systems $R=R_1R_2$ are preserved \cite{Winter1999,Devetak2003}.
This example is one of the earliest attempts to find operational meaning to quantum conditional entropy in analogy to the classical conditional Shannon entropy which characterizes the optimal compression rate of a classical source with classical side information at the decoder side,
a.k.a. fully classical Slepian-Wolf problem \cite{Slepian1973}.

The compression of a purified source with side information at the decoder is known as
state merging or fully quantum Slepian-Wolf (FQSW) and its discovery was an important milestone in the quantum information field which gave an operational meaning to the quantum conditional entropy \cite{Horodecki2007,Abeyesinghe2009}; in this task, a source generates many copies of the state $\ket{\psi}^{ABR}$ where an encoder compresses system $A$ and sends it to a decoder who has access to system $B$ and aims to reconstruct system $A$ while preserving the correlations with the reference system $R$.
Depending on the communication means which has been considered shared entanglement with free classical communication or quantum communication, the compression rate is equal to  $S(A|B)$ ebits or $\frac{1}{2}I(A:R)$ qubits, respectively \cite{Horodecki2007,Abeyesinghe2009}.
An ensemble version of FQSW is considered in \cite{Ahn2006} with the source $\rho^{ABR}=
\sum_x p(x) \proj{\psi_x}^{AB}\otimes\proj{x}^R$ and $A$, $B$ and $R$ as the system to be compressed, the side information at the decoder and the reference system, respectively; the optimal quantum compression rate is 
found for some special cases, but the problem has been left open in general.

A generalization of state merging, which is known as quantum state redistribution (QSR), is proposed in \cite{Devetak2008_2,Yard2009}, where 
both encoder and decoder have access to side information systems. Namely, a source generates
many copies of the state $\ket{\psi}^{ACBR}$, where an encoder compresses system $A$ while having access to side information system $C$ and sends the compressed information to a decoder who has access to system $B$ and aims to reconstruct system $A$ while preserving the correlations with the reference system $R$; in this compression task systems $C$ and $B$ remain at the disposal 
of the encoder and decoder, respectively. This gave an operational meaning to the quantum
conditional mutual information since the optimal quantum compression rate was obtained to be
$\frac{1}{2}I(A:R|C)$.

In the remainder of this section, we define mathematically the most general model for
the compression of quantum sources with side information which includes as special cases all the aforementioned side information problems of this section (considering block fidelity defined in Eq.~\ref{eq:block fidelity criterion side info problem}).

\bigskip

We consider a source generates asymptotic limit of
$n$ copies of a finite dimensional state $\rho^{ACBR}$, i.e.~$\rho^{A^n C^n B^n R^n} = \left(\rho^{ACBR}\right)^{\otimes n}$, and distributes the copies of the systems $AC$, $B$ and $R$ between an encoder, a decoder and an inaccessible reference system, respectively.
We assume that the encoder, Alice, and the decoder, Bob, share initially a maximally 
entangled state $\Phi_K^{A_0B_0}$ on registers $A_0$ and $B_0$ (both of dimension $K$).
The encoder, Alice, performs the encoding compression operation 
$\mathcal{E}:A^n C^n A_0 \longrightarrow M_n \hat{C}^n A_0'$ on the system $A^n C^n$ and her part $A_0$ of the entanglement, which is CPTP map. 
Alice's encoding operation produces the state $\sigma^{M_n \hat{C}^n B^n R^n A_0' B_0}$ with $M_n$, $\hat{C}^n$, $A_0'$ and $B_0$ as the compressed system of Alice, a reconstruction of system $C^n$, Alice's new entanglement system and Bob's part of the entanglement, respectively.
The dimension of the compressed system is without loss of 
generality not larger than  the dimension of the
original source, i.e. $|M_n| \leq  \abs{A}^n$. 
The system $M_n$ is then sent to Bob via a noiseless quantum channel, who performs
a decoding operation $\mathcal{D}:M_n B^n B_0 \longrightarrow \hat{A}^n \hat{B}^n B_0'$ on the compressed information $M_n$, system $B^n$ and his part of the entanglement $B_0$ where $\hat{A}^n$, $\hat{B}^n$ and $B'_0$ are the reconstruction of systems  
$A^n$, $B^n$ and Bob's new entanglement system, respectively. 
In this task, the side information systems remain at the disposal of their corresponding 
parties, that is the encoder and decoder respectively reconstruct systems $C^n$ and $B^n$ after 
using them as side information.
Ideally the encoder and decoder want to distill entanglement in the form of maximally entangled
state $\Phi_L^{A'_0 B'_0}$ of dimension $L$ in their corresponding registers $A'_0$ and $B'_0$.

\begin{figure}[!t]
\centering
\includegraphics[width=1\textwidth]{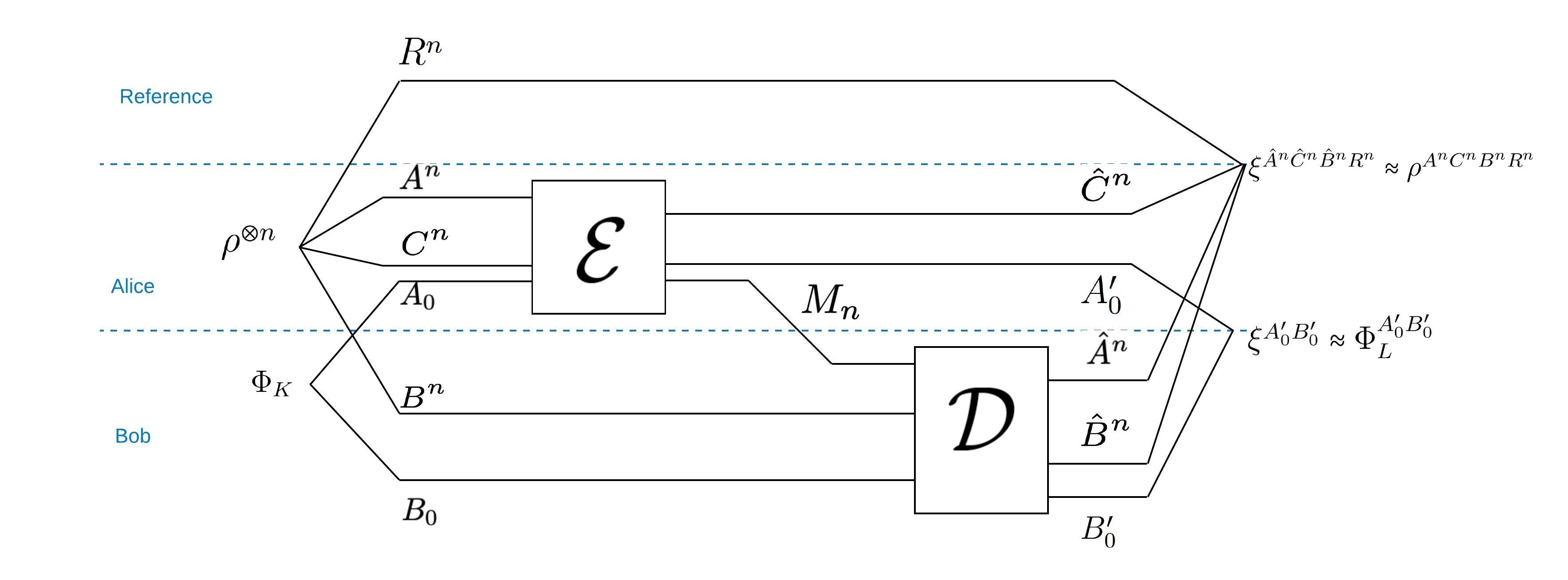} 
\caption{Circuit diagram of the compression task with side information: the source is composed of $n$ copies of the state $\rho^{ACBR}$ where
$A^n$ is the system to be compressed and $R^n$ is an inaccessible reference system; systems $C^n$ and $B^n$ are the side information available for the encoder and the decoder, respectively.
Dotted lines are used to demarcate domains controlled by the different participants here the reference, the encoder, Alice and the decoder, Bob. The solid lines represent quantum information registers.
The encoder sends the compressed information, i.e. system $M_n$, to the decoder through a noiseless quantum channel;
moreover, they share initial entanglement in the registers $A_0$ and $B_0$, respectively. 
The aim of the compression task is to reconstruct system $A^n$ at the decoder side while each party reconstructs its own corresponding side information as well, that is the final state $\xi^{\hat{A}^n \hat{C}^n \hat{B}^n R^n}$ has the fidelity converging to 1 with the source state
$\rho^{A^n C^n B^n R^n}$; this ensures that the correlations between the reconstructed systems $\hat{A}^n \hat{C}^n \hat{B}^n$ and the 
reference system $R^n$ are preserved. Furthermore, the encoder and decoder distill entanglement in
their registers $A'_0$ and $B'_0$, respectively.  
}
\label{fig: rho ACBR compression task}
\end{figure}

We call $\frac1n \log (K-L)$ and $\frac1n \log|M_n|$ the \emph{entanglement
rate} and \emph{quantum
rate} of the compression protocol, respectively.
We say the encoding-decoding scheme has \emph{block fidelity} $1-\epsilon$, or \emph{block error} $\epsilon$, if 
\begin{align}
  \label{eq:block fidelity criterion side info problem}
  F\left( \rho^{A^n C^n B^n R^n } \otimes \Phi_L^{A_0' B_0'},\xi^{\hat{A}^n \hat{C}^n \hat{B}^n  R^n A_0' B_0'} \right)  
          \geq 1-\epsilon,  
\end{align}
where $\xi^{\hat{A}^n \hat{C}^n \hat{B}^n  R^n A_0' B_0'}=\left((\mathcal{D}\circ\mathcal{E})\otimes \id_{R^n}\right) \rho^{A^n C^n B^n R^n } \otimes \Phi_K^{A_0 B_0}$.
Moreover, we say that $(E_b,Q_b)$ is an (asymptotically) achievable block-error rate pair if for all $n$
there exist codes (encoders and decoders) such that the block fidelity converges to $1$, and
the entanglement and quantum rates converge to $E_b$ and $Q_b$, respectively.
The rate region is the set of all achievable rate pairs, as a subset of 
$\mathbb{R}\times\mathbb{R}_{\geq 0}$.
A schematic description of the source compression task with side information  is illustrated in Fig.~\ref{fig: rho ACBR compression task}.

We also consider another figure of merit which turns out to be an easier criterion to evaluate side information problems;
we say a code has \emph{per-copy fidelity} $1-\epsilon$, 
or \emph{per-copy error} $\epsilon$, if 
\begin{align}
  \label{eq:per-copy-error side information problems}
  \frac{1}{n}\sum_{j=1}^n F(\rho^{ACBR},\xi^{\hat{A}_j\hat{C}_j\hat{B}_jR_j}) \geq 1-\epsilon, 
\end{align}
where 
$\xi^{\hat{A}_j\hat{C}_j\hat{B}_jR_j}=\Tr_{[n]\setminus j}\,\xi^{\hat{A}^n\hat{C}^n\hat{B}^nR^n}$, and `$\Tr_{[n]\setminus j}$' denotes the partial trace over all systems 
with indices in $[n]\setminus j$. 
Similarly, we say that $(E_c,Q_c)$ is an (asymptotically) achievable per-copy-error rate pair if for all $n$
there exist codes (encoders and decoders) such that the per-copy fidelity converges to $1$, and
the entanglement and quantum rates converge to $E_c$ and $Q_c$, respectively.
The rate region is the set of all achievable rate pairs, as a subset of 
$\mathbb{R}\times\mathbb{R}_{\geq 0}$.

\begin{table}[!t]
\renewcommand{\arraystretch}{1.5}
\scriptsize
\begin{center}
    \begin{tabular}{ | p{7.5cm} || p{3.5cm} | p{1.5cm}  |} 
    \hline
    \small{Source}           & \small{$(0,Q_b^*)$}            & \small{$(\infty,Q_b^*)$}   
\\ \hline\hline             
\parbox[t]{8cm}{    \cite{Schumacher1995,Jozsa1994_1} 
$\rho^{AR}=\sum_x p(x) \proj{\psi_x}^A \otimes \proj{x}^R$}
& \parbox[t]{3cm}%
{$S(A)_{\rho}$} 
& $-$ \parbox[t]{2cm}{}    
   \\ \hline
\parbox[t]{8cm}{    \cite{Schumacher1995,Jozsa1994_1} 
    $\ket{\psi}^{AR}=\sum_x \sqrt{p(x)}\ket{\psi_x}^{A} \ket{x}^R$}
& \parbox[t]{3cm}%
{$S(A)_{\rho}$} 
&$-$ \parbox[t]{2cm}{}    
   \\ \hline
 \parbox[t]{8cm}{    \cite{Schumacher1995,Jozsa1994_1,Barnum1996} 
    $\rho^{ACR}=\sum_x p(x) \proj{\psi_x}^A \otimes \proj{x}^C \otimes \proj{x}^{R}$}
& \parbox[t]{3cm}%
{$S(A)_{\rho}$} 
& $-$ \parbox[t]{2cm}{}    
   \\ \hline
\parbox[t]{8cm}{    \cite{KI2001} 
    $\rho^{AR}=\sum_x p(x) \rho_x^A \otimes \proj{x}^R$}
& \parbox[t]{3cm}%
{$S(CQ)_{\omega}$} 
& $-$ \parbox[t]{2cm}{}    
   \\ \hline
\parbox[t]{8cm}{    \cite{visible_Hayashi} 
    $\rho^{ACR}=\sum_x p(x) \rho_x^A \otimes \proj{x}^C \otimes \proj{x}^{R}$}
& \parbox[t]{3cm}%
{$\lim_{n \to \infty } \frac{E_p((\rho^{AC})^{\otimes n})}{n}$} 
& $-$ \parbox[t]{2cm}{}    
   \\ \hline
\parbox[t]{8cm}{    \cite{Bennett2005}
    $\rho^{ACR}=\sum_x p(x) \rho_x^A \otimes \proj{x}^C \otimes \proj{x}^{R}$}
& \parbox[t]{3cm}%
{$-$} 
& \parbox[t]{2cm}{$\frac{1}{2}S(A)$}    
   \\ \hline
\parbox[t]{8cm}{    \cite{Winter1999,Devetak2003}
    $\rho^{ABR_1R_2}=\sum_x p(x) \proj{x}^A \otimes \proj{\psi_x}^{BR_1} \otimes \proj{x}^{R_2}$}
& \parbox[t]{3cm}%
{$S(A|B)_{\rho}$} 
& $-$ \parbox[t]{2cm}{}    
  \\ \hline
\parbox[t]{8cm}{    \cite{Horodecki2007,Abeyesinghe2009}
    $\ket{\psi}^{ABR}$}
& \parbox[t]{3.5cm}%
{$\max \{S(A|B)_{\rho}, \frac{1}{2}I(A:R)\}$ } 
& \parbox[t]{2cm}{$\frac{1}{2}I(A:R)$}    
   \\ \hline
\parbox[t]{8cm}{    \cite{Ahn2006}
    $\rho^{ABR}=\sum_x p(x) \proj{\psi_x}^{AB}\otimes\proj{x}^R$ }
& \parbox[t]{3.7cm}%
{solved for specific examples} 
& $-$ \parbox[t]{2cm}{}    
 \\ \hline
\parbox[t]{8cm}{    \cite{Devetak2008_2,Yard2009}
    $\ket{\psi}^{ACBR}$}
& \parbox[t]{3.7cm}%
{$\max \{ S(A|B)_{\rho}$,$\frac{1}{2}I(A:R|C)\}$} 
& \parbox[t]{2cm}{$\frac{1}{2}I(A:R|C)$}       \\ \hline

\end{tabular}
\end{center}
\caption{A summary of the asymptotic source compression problems, that have been studied in the literature so far, is presented in this table. The rate pairs $(0,Q_b^*)$ and $(\infty,Q_b^*)$ denote the unassisted and entanglement-assisted qubit rates, respectively. Here $E_p(\cdot)$ denotes the entanglement of purification, moreover, $S(CQ)_{\omega}$ is the von Neumann entropy with respect to Koashi-Imoto decomposition of the source; for more information see chapter~\ref{chap:mixed state}.}
\label{table-of-sources}
\end{table}


\bigskip

The special cases of this general problem that have been addressed so far is summarized in table~\ref{table-of-sources}. This general compression problem has a complex nature; for example,
consider the special case of the visible mixed state source by Hayashi \cite{visible_Hayashi} with classical reference $R$ and classical side information at the encoder $C$, with no side information at the decoder $B= \emptyset$, i.e. $\rho^{ACR}=\sum_x p(x) \rho_x^A \otimes \proj{x}^C \otimes \proj{x}^{R}$;
with no entanglement consumption, the optimal block-error quantum rate, i.e. the pair $(0,Q_b^*)$ is equal to the regularized entanglement of purification whereas with free entanglement the optimal block-error quantum rate, i.e. the pair $(\infty,Q_b^*)$ is equal to $(\infty,\frac{1}{2}S(A))$ \cite{Bennett2005}.
Therefore, it is insightful to first study the pairs $(0,Q_b^*)$ and $(\infty,Q_b^*)$
for some other special cases of the source $\rho^{ACBR}$. 

Moreover as we will show in the subsequent chapters, unlike the classical 
scenario where conditional entropy characterizes the classical compression rate,
for non-pure sources, the quantum conditional entropy or mutual information
does not necessary play a
role and more complicated functions of the source determine the compression rate. 
In section~\ref{sec:Summary of our results in quantum source compression}, we briefly
go through the special cases of the general source $\rho^{ACBR}$ with side information 
which we address in this thesis and discuss the challenges of each particular case
in its corresponding chapter.

\section{Distributed noiseless quantum source compression}
This thesis mainly focuses on the side information compression problems, however, in chapter \ref{chap:cqSW}, aside from a side information problem we study the distributed compression of correlated classical-quantum sources. This motivates us to define a general distributed compression problem, which the side information problem of section~\ref{sec:side info intro} can be considered a sub-problem of this distributed scenario since the decoder can use successive decoding,  that is it can first decode the information of one of the encoders and treat it as its own side information, and later decode the information of the other encoder.

\bigskip

Here we define the problem for two encoders, however, the definition can be easily extended to three or more encoders. 
We consider a source generates asymptotic limit of
$n$ copies of a finite dimensional  state $\rho^{A_1C_1 A_2 C_2BR}$, i.e.~$\rho^{A_1^n C_1^n A_2^n C_2^n B^n R^n} = \left(\rho^{A_1C_1A_2C_2BR}\right)^{\otimes n}$, and distributes the copies of the systems $A_1C_1$, $A_2C_2$, $B$ and $R$ between  encoder 1, encoder 2, a decoder and an inaccessible reference system, respectively.
We assume that  both encoder 1, Alice and encoder 2, Ava,  share initially maximally 
entangled states $\Phi_{K_1}^{A_{01}B_{01}}$ and $\Phi_{K_2}^{A_{02}B_{02}}$ with the decoder, Bob, respectively (of dimension $K_1$ and $K_2$ respectively).
The encoder $i$ ($i=1,2$)  performs the encoding compression operation, i.e. the CPTP map 
$\mathcal{E}_i:A_i^n C_i^n A_{0i} \longrightarrow M_{i_n} \hat{C}_i^n A_{0i}'$ on the systems $A_i^n C_i^n$ and the entanglement part $A_{0i}$.
The encoding operations are distributed in the sense that each encoder applies her own operation locally without having access to the information of the other encoder.  
%
%
The dimension of the compressed systems are without loss of 
generality not larger than  the dimension of the
original sources, i.e. $|M_{i_n}| \leq  \abs{A_i}^n$. 
The systems $M_{i_n}$ ($i=1,2$) are then sent to Bob via a noiseless quantum channel, who performs
the decoding operation $\mathcal{D}:M_{1_n} M_{2_n} B^n B_{01} B_{02} \longrightarrow \hat{A_1}^n \hat{A_2}^n \hat{B}^n B_{01}' B_{02}'$ on the compressed information systems $M_{1_n} M_{2_n}$, system $B^n$ and his parts of the entanglement $B_{01} B_{02}$ where $\hat{A_1}^n \hat{A_2}^n$, $\hat{B}^n$ and $B'_{01} B'_{02}$ are the reconstruction of systems  
$A_1^n A_2^n$, $B^n$ and Bob's new entanglement systems, respectively. 
In this task, the systems $C_1^n$, $C_2^n$ and $B^n$ remain at the disposal of their corresponding 
parties, that is the encoders and the decoder respectively reconstruct systems $C_1^n$, $C_2^n$ and $B^n$ after 
using them as side information.
Ideally the encoder $i$ ($i=1,2$) and the decoder aim to distill entanglement in the form of maximally entangled
state $\Phi_{L_i}^{A'_{0i} B'_{0i}}$ of dimension $L_i$ in their corresponding registers $A'_{0i}$ and $B'_{0i}$, respectively.


\begin{figure}[!t]
\centering
\includegraphics[width=1\textwidth]{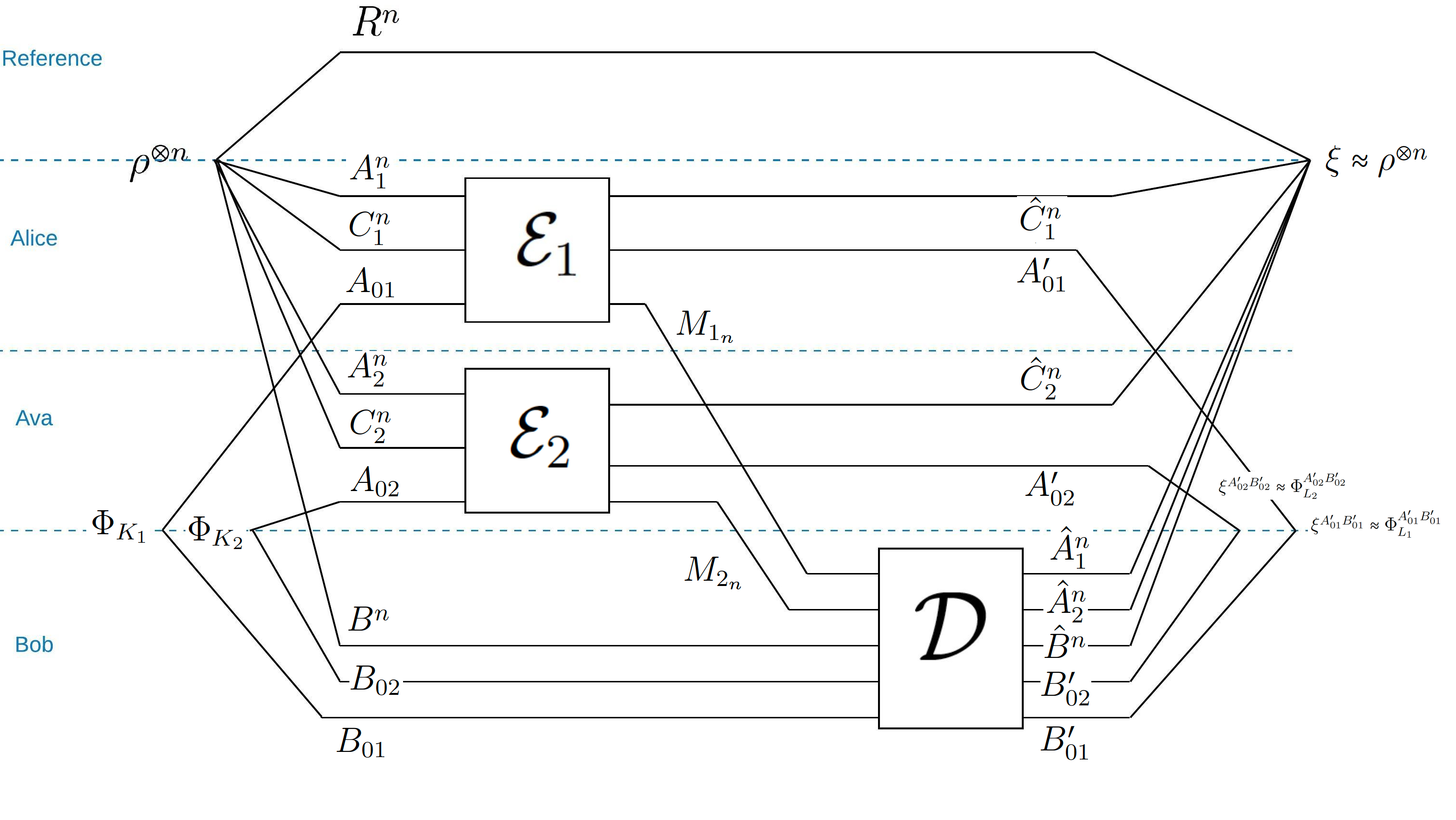} 
\caption{Circuit diagram of the distributed compression task: the source is composed of $n$ copies of the state $\rho^{A_1C_1 A_2C_2 BR}$ where
$A_i^n$ ($i=1,2$) is the system to be compressed and $R^n$ is an inaccessible reference system; systems $C_i^n$ and $B^n$ are the side information available for the encoder $i$ and the decoder, respectively.
Dotted lines are used to demarcate domains controlled by the different participants here the reference, the encoders, Alice and Ava, and the decoder, Bob. The solid lines represent quantum information registers.
The encoder $i$ sends the compressed information, i.e. system $M_{i_n}$, to the decoder through a noiseless quantum channel;
moreover, they share initial entanglement in the registers $A_{0i}$ and $B_{0i}$, respectively. 
The aim of the compression task is to reconstruct systems $A_1^n$ and $A_2^n$ at the decoder side while each party reconstructs its own corresponding side information as well, that is the final state $\xi^{\hat{A_1}^n \hat{C_1}^n \hat{A_2}^n \hat{C_2}^n \hat{B}^n R^n}$ has the fidelity converging to 1 with the source state
$\rho^{A_1^n C_1^n A_2^n C_2^n B^n R^n}$; this ensures that the correlations between the reconstructed systems $\hat{A_1}^n \hat{C_1}^n \hat{A_2}^n \hat{C_2}^n \hat{B}^n$ and the 
reference system $R^n$ are preserved. Furthermore, the encoder $i$ and the decoder distill entanglement in
their registers $A'_{0i}$ and $B'_{0i}$, respectively.  
}
\label{fig: distributed rho ACBR compression task}
\end{figure}


We call $\frac1n \log (K_i-L_i)$ and $\frac1n \log|M_{i_n}|$ the \emph{entanglement
rate} and \emph{quantum
rate} of the compression protocol, respectively (for $i=1,2$).
Moreover, we say the encoding-decoding scheme has \emph{block fidelity} $1-\epsilon$, or \emph{block error} $\epsilon$, if 
\begin{align}
  \label{eq:block fidelity criterion distributed problem}
  F\left( \rho^{A_1^n C_1^n A_2^n C_2^n B^n R^n } \otimes \Phi_{L_1}^{A'_{01} B'_{01}} \otimes \Phi_{L_2}^{A'_{02} B'_{02}},\xi^{\hat{A_1}^n \hat{C_1}^n \hat{A_2}^n \hat{C_2}^n \hat{B}^n  R^n A_{01}' B_{01}' A_{02}' B_{02}'} \right)  
          \geq 1-\epsilon,  
\end{align}
where 
\begin{align*}
&\xi^{\hat{A_1}^n \hat{C_1}^n \hat{A_2}^n \hat{C_2}^n \hat{B}^n  R^n A_{01}' B_{01}' A_{02}' B_{02}'}=\\
& \quad \quad \quad \quad \left((\mathcal{D}\circ(\mathcal{E}_1 \otimes \mathcal{E}_2))\otimes \id_{R^n}\right) \rho^{A_1^n C_1^n A_2^n C_2^n B^n R^n } \otimes \Phi_{K_1}^{A_{01} B_{01}} \otimes \Phi_{K_2}^{A_{02} B_{02}}.
\end{align*}
Moreover, we say that $(E_{b_1},E_{b_2},Q_{b_1},Q_{b_2})$ is an (asymptotically) achievable block-error rate tuple if for all $n$
there exist codes (encoders and decoders) such that the block fidelity converges to $1$, and
the $i$th entanglement and quantum rates converge to $E_{b_i}$ and $Q_{b_i}$ for encoder $i$, respectively.
The rate region is the set of all achievable rate pairs, as a subset of 
$\mathbb{R}\times \mathbb{R}\times\mathbb{R}_{\geq 0} \times\mathbb{R}_{\geq 0}$.
A schematic description of the source compression task with side information  is illustrated in Fig.~\ref{fig: distributed rho ACBR compression task}.

In chapter~\ref{chap:cqSW}, we consider block fidelity, however, the results follow for the per-copy fidelity as well which is defined as follows: 
we say a code has \emph{per-copy fidelity} $1-\epsilon$, 
or \emph{per-copy error} $\epsilon$, if 
\begin{align}
  \label{eq:per-copy  fidelity criterion distributed problem}
\frac{1}{n} \sum_{j=1}^n F\left( \rho^{A_{1}  C_1  A_2  C_2  B R } ,\xi^{\hat{A_{1j}} \hat{C_{1j}} \hat{A_{2j}} \hat{C_{2j}} \hat{B}  R } \right)  
          \geq 1-\epsilon,  
\end{align}
where 
$\xi^{\hat{A_{1j}} \hat{C_{1j}} \hat{A_{2j}} \hat{C_{2j}} \hat{B}  R }=\Tr_{[n]\setminus j}\,\xi^{\hat{A_1}^n \hat{C_1}^n\hat{A_2}^n\hat{C_2}^n\hat{B}^nR^n}$, and `$\Tr_{[n]\setminus j}$' denotes the partial trace over all systems 
with indices in $[n]\setminus j$. 
Similarly, we say that $(E_{c_1},E_{c_2},Q_{c_1},Q_{c_2})$ is an (asymptotically) achievable per-copy-error rate tuple if for all $n$
there exist codes  such that the per-copy fidelity converges to $1$, and
the $i$th entanglement and quantum rates converge to $E_{c_i}$ and $Q_{c_i}$ for encoder $i$, respectively.
The rate region is the set of all achievable rate pairs, as a subset of 
$\mathbb{R}\times \mathbb{R}\times\mathbb{R}_{\geq 0} \times\mathbb{R}_{\geq 0}$.

\bigskip
In \cite{Savov_Mscthesis2007,Savov_distributed_2008},  compression of a pure source $\ket{\psi}^{A_1A_2R}$ with side information at the encoders is considered ($C_1,C_2,B=\emptyset$). The achievable rate region is a convex hull of various points where each point corresponding to an encoder is achieved by applying fully quantum Slepian-Wolf (FQSW) compression and treating the rest of the systems as a reference.  
The converse bounds are in terms of 
the multipartite squashed entanglement,
which is a measure of multipartite entanglement.

\section{Summary of our results in quantum source compression and discussion}
\label{sec:Summary of our results in quantum source compression} 

In this section, we briefly explain the special cases of problems, defined in the previous sections, that we address in this thesis.
Notice that in the subsequent chapters we do not necessarily respect the notation $A$, $C$, $B$ and $R$ for the system to be compressed, the side information at the encoder, the side information at the decoder and the reference system, however, we clearly define the task and specify the notation for the corresponding registers. Moreover, we specify whether the error criterion is block fidelity or per-copy fidelity. 

\bigskip 

In chapter \ref{chap:mixed state}, we consider the compression of a general mixed state source
$\rho^{AR}$ (no side information) and find the optimal trade-off between the entanglement and quantum rates, i.e. the pair $(E,Q)$.

\medskip

In chapter \ref{chap: E assisted Schumacher}, we unify the visible and blind Schumacher compression by considering an interpolation between them as side information, that is the source
$\rho^{ACR}=\sum_x p(x) \proj{\psi_x}^A \otimes \proj{c_x}^C \otimes \proj{x}^R$ with $A$, $C$ and $R$ as the system to be compressed, the side information at the encoder and the classical reference system. For this source, we find optimal trade-off between the block-error entanglement and quantum rate pairs $(E_b,Q_b)$.

\medskip

In chapter \ref{chap:cqSW}, we consider quantum source compression with classical side information with the source $\rho^{AR_1BR_2}=\sum_x p(x) \proj{\psi_x}^{AR_1} \otimes \proj{x}^B \otimes \proj{x}^{R_2}$ and $A$, $B$ and $R=R_1 R_2$ as the system to be compressed, the side information at the decoder and the hybrid classical-quantum reference systems, respectively.
We study the entanglement assisted case $(\infty,Q_b)$, the unassisted case
$(0,Q_b)$ then distributed scenario considering block fidelity. We find achievable and converse bounds for each scenario and  show that the two bounds match for the entanglement assisted quantum block-error rate $Q_b$ up to continuity of a function which appears in the bounds. Finally, considering per-copy fidelity we find the optimal entanglement assisted quantum per-copy-error rate, i.e. the pair $(\infty,Q_c^*)$.

\medskip

In chapter \ref{chap:QSR ensemble}, we consider an ensemble generalization of the quantum state redistribution (QSR), i.e. the source $\rho^{ACBR}=\sum_x p(x) \proj{\psi_x}^{ACBR_1} \otimes \proj{x}^{R_2}$ with $A$, $C$, $B$ and $R=R_1R_2$ as the system to be compressed, the side information at the encoder, the side information at the decoder and the hybrid classical-quantum reference systems, respectively. We consider free entanglement scenario and find the optimal quantum per-copy-error rate, i.e. the pair $(\infty,Q_c^*)$.
With block fidelity, we find achievable and converse bounds which match up to continuity of a function appearing in the bounds.

\medskip

In summary, for a general mixed state we solve the problem when there is no side information, and the rate region is in terms of an extension of the decomposition of the source state which is discovered by Koashi and Imoto in \cite{KI2002}, and later this decomposition extended to a general mixed state in \cite{Hayden2004}. However, for multipartite states this decomposition does not necessarily preserve the tensor structure over various systems; this turns out to be the main hurdle in dealing with general mixed state problems with side information.
This is not an issue for pure or ensemble sources mainly because the structure of maps
which preserve these states are well-understood. For these sources the environment systems of the encoding and decoding operations are decoupled from the reconstructed source given the identity of the state from the ensemble. This property is one of the guiding intuitions behind the converse proofs for the side information problems.

\chapter{Compression of a general mixed state source}
\label{chap:mixed state}

In this chapter, we consider the most general (finite-dimensional) quantum 
mechanical information source, which is given by a quantum system 
$A$ that is correlated with a reference system $R$. The task is to 
compress $A$ in such a way as to reproduce the joint source state 
$\rho^{AR}$ at the decoder with asymptotically high fidelity. This 
includes Schumacher's original quantum source coding problem of a 
pure state ensemble and that of a single pure entangled state, as 
well as general mixed state ensembles. 
Here, we determine the optimal compression rate (in qubits per 
source system) in terms of the Koashi-Imoto decomposition of the 
source into a classical, a quantum, and a redundant part. The same 
decomposition yields the optimal rate in the presence of unlimited 
entanglement between compressor and decoder, and indeed the full 
region of feasible qubit-ebit rate pairs. 
This chapter is based on the papers in \cite{ZK_mixed_state_ISIT_2020,ZK_mixed_state_2019}.
%
%
%


\medskip


\section{The source model and the compression task}
\label{sec:The Compression task}
We consider a general mixed state source
$\rho^{AR}$ with $A$ and $R$ as the system to be compressed and the reference system, respectively, where the source generates 
the information theoretic limit of
many copies of the state $\rho^{AR}$, i.e.~$\rho^{A^n R^n} = \left(\rho^{AR}\right)^{\otimes n}$.
We assume that the encoder, Alice, and the decoder, Bob, have initially a maximally 
entangled state $\Phi_K^{A_0B_0}$ on registers $A_0$ and $B_0$ (both of dimension $K$).
The encoder, Alice, performs the encoding compression operation 
$\mathcal{C}:A^n A_0 \longrightarrow M $ on the system $A^n$ and her part $A_0$ of the entanglement, which is a quantum channel,
i.e.~a completely positive and trace preserving (CPTP) map. 
Notice that as functions CPTP maps act on the operators (density matrices) over 
the respective input and output Hilbert spaces, but as there is no risk of confusion,
we will simply write the Hilbert spaces when denoting a CPTP map.
Alice's encoding operation produces the state $\sigma^{M B_0 R^n}$ with $M$ and $B_0$ as the compressed system of Alice and Bob's part of the entanglement, respectively.
The dimension of the compressed system is without loss of 
 generality not larger than  the dimension of the
original source, i.e. $|M| \leq  \abs{A}^n$. 
We call $\frac1n \log K$ and $\frac1n \log|M|$ the \emph{entanglement
rate} and \emph{quantum
rate} of the compression protocol, respectively.
The system $M$ is then sent to Bob via a noiseless quantum channel, who performs
a decoding operation $\mathcal{D}:M B_0 \longrightarrow \hat{A}^n$ on the system 
$M$ and his part of the entanglement $B_0$.
We say the encoding-decoding scheme has \emph{fidelity} $1-\epsilon$, or \emph{error} $\epsilon$, if 
\begin{align}
  \label{eq:fidelity criterion}
  F\left( \rho^{A^n R^n },\xi^{\hat{A}^n R^n} \right)  
          \geq 1-\epsilon,  
\end{align}
where $\xi^{\hat{A}^n R^n}=\left((\mathcal{D}\circ\mathcal{C})\otimes \id_{R^n}\right) \rho^{A^n R^n }$.
Moreover, we say that $(E,Q)$ is an (asymptotically) achievable rate pair if for all $n$
there exist codes such that the fidelity converges to $1$, and
the entanglement and quantum rates converge to $E$ and $Q$, respectively.
The rate region is the set of all achievable rate pairs, as a subset of 
$\mathbb{R}_{\geq 0}\times\mathbb{R}_{\geq 0}$. 

According to Stinespring's theorem \cite{Stinespring1955}, a CPTP map 
$\cT: A \longrightarrow \hat{A}$ can be dilated to an isometry $U: A \hookrightarrow \hat{A} E$ 
with $E$ as an environment system, called an isometric extension of a CPTP map, such that 
$\cT(\rho^A)=\Tr_E (U \rho^A U^{\dagger})$. 
Therefore, the encoding and decoding operations are can in general be viewed as 
isometries $U_{\cE} : A^n A_0 \hookrightarrow M W$ and
$U_{\cD} : M B_0 \hookrightarrow \hat{A}^n V$, respectively, 
with the systems $W$ and $V$ as the environment systems
of Alice and Bob, respectively. 

\medskip

We say a source $\omega^{BR}$ is equivalent to a source $\rho^{AR}$ if there are
CPTP maps $\cT:A \longrightarrow B$ and $\cR:B \longrightarrow A$ in both directions
taking one to the other: 
\begin{align} \label{def: equivalent sources}
    \omega^{BR}=(\cT \otimes \id_R) \rho^{AR} \text{ and } 
    \rho^{AR}=(\cR \otimes \id_R) \omega^{BR}.
\end{align}
The rate regions of equivalent sources are the same, because any achievable rate pair for 
one source is achievable for the other source as well. This follows from the fact that for
any code $(\cC,\cD)$ of block length $n$ and error $\epsilon$ for $\rho^{AR}$, 
concatenating the encoding and decoding operations with $\cT$ and $\cR$, i.e. letting
$\cC'=\cC\circ\cR^{\otimes n}$ and $\cD'=\cT^{\otimes n}\circ\cD$, we get a code 
of the same error $\epsilon$ for $\omega^{BR}$. Analogously we can turn a code for 
$\omega^{BR}$ into one for $\rho^{AR}$.

\section{The qubit-ebit rate region}
\label{sec:The optimal rate region}
The idea behind the compression of the source $\rho^{AR}$ is based on a decomposition 
of this state introduced in \cite{Hayden2004}, which is a generalization of the decomposition
introduced by Koashi and Imoto in \cite{KI2002}. Namely, for any set of quantum states $\{ \rho_x\}$, 
there is a unique decomposition of the Hilbert space describing
the structure of CPTP maps which preserve the set $\{ \rho_x^A\}$. This idea was generalized 
in \cite{Hayden2004} for a general mixed state $\rho^{AR}$ describing the structure of 
CPTP maps acting on system $A$ which preserve the overall state $\rho^{AR}$. 
This was achieved by showing that any such map preserves the set of all possible
states on system $A$ which can be obtained by measuring system $R$, and 
conversely any map preserving the set of all possible
states on system $A$ obtained by measuring system $R$, preserves the state $\rho^{AR}$,
thus reducing the general case to the case of classical-quantum states 
\[
  \rho^{AY} = \sum_y q(y) \rho_y^A \otimes \proj{y}^Y
            = \sum_y \Tr_R \rho^{AR}(\1_A\otimes M_y^R) \otimes \proj{y}^Y, 
\]
which is the ensemble case considered by Koashi and Imoto. As a matter of fact, 
looking at the algorithm presented in \cite{KI2002} to compute the decomposition,
it is enough to consider an informationally complete POVM $(M_y)$ on $R$, with 
no more than $|R|^2$ many outcomes.
The properties of this decomposition are stated in the following theorem.

\begin{theorem}[\cite{KI2002,Hayden2004}]
\label{thm: KI decomposition}
Associated to the state $\rho^{AR}$, there are Hilbert spaces $C$, $N$ and $Q$
and an isometry $U_{\KI}:A \hookrightarrow C N Q$ such that:
\begin{enumerate}
  \item The state $\rho^{AR}$ is transformed by $U_{\KI}$ as
    \begin{equation}
      \label{eq:KI state}
      (U_{\KI}\otimes \1_R)\rho^{AR} (U_{\KI}^{\dagger}\otimes \1_R)
        = \sum_j p_j \proj{j}^{C} \otimes \omega_j^{N} \otimes \rho_j^{Q R}
        =:\omega^{C N Q R},
    \end{equation}
    where the set of vectors $\{ \ket{j}^{C}\}$ form an orthonormal basis for Hilbert space 
    $C$, and $p_j$ is a probability distribution over $j$. The states $\omega_j^{N}$ and 
    $\rho_j^{Q R}$ act  on the Hilbert spaces $N$ and $Q \otimes R$, respectively.

  \item For any CPTP map $\Lambda$ acting on system $A$ which leaves the state $\rho^{AR}$ 
    invariant, that is $(\Lambda \otimes \id_R )\rho^{AR}=\rho^{AR}$, every associated 
    isometric extension $U: A\hookrightarrow A E$ of $\Lambda$ with the environment system 
    $E$ is of the following form
    \begin{equation}
      U = (U_{\KI}\otimes \1_E)^{\dagger}
            \left( \sum_j \proj{j}^{C} \otimes U_j^{N} \otimes \1_j^{Q} \right) U_{\KI},
    \end{equation}
    where the isometries $U_j:N \hookrightarrow N E$ satisfy 
    $\Tr_E [U_j \omega_j U_j^{\dagger}]=\omega_j$ for all $j$.
    The isometry $U_{KI}$ is unique (up to trivial change of basis of the Hilbert spaces 
    $C$, $N$ and $Q$). Henceforth, we call the isometry $U_{\KI}$ and the state 
    $\omega^{C N Q R}=\sum_j p_j \proj{j}^{C} \otimes \omega_j^{N} \otimes \rho_j^{Q R}$ 
    the Koashi-Imoto (KI) isometry and KI-decomposition of the state $\rho^{AR}$, respectively. 

  \item In the particular case of a tripartite system $CNQ$ and a state $\omega^{CNQR}$ already 
    in Koashi-Imoto form (\ref{eq:KI state}), property 2 says the following:
    For any CPTP map $\Lambda$ acting on systems $CNQ$ with 
    $(\Lambda \otimes \id_R )\omega^{CNQR}=\omega^{CNQR}$, every associated 
    isometric extension $U: CNQ\hookrightarrow CNQ E$ of $\Lambda$ with the environment system 
    $E$ is of the form
    \begin{equation}
      U = \sum_j \proj{j}^{C} \otimes U_j^{N} \otimes \1_j^{Q},
    \end{equation}
    where the isometries $U_j:N \hookrightarrow N E$ satisfy 
    $\Tr_E [U_j \omega_j U_j^{\dagger}]=\omega_j$ for all $j$.
\end{enumerate} 
\end{theorem}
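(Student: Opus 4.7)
The plan is to reduce the bipartite mixed-state statement to the ensemble version of Koashi-Imoto, which is already proved in \cite{KI2002}, by using an informationally complete POVM on the reference as the excerpt itself hints. Concretely, pick any informationally complete POVM $(M_y^R)_{y=1}^{|R|^2}$ on $R$, and form the classical-quantum image
\begin{equation*}
  \rho^{AY} \;=\; \sum_y q(y)\, \rho_y^A \otimes \proj{y}^Y,
  \qquad q(y)\rho_y^A := \Tr_R\bigl[\rho^{AR}(\1_A\otimes M_y^R)\bigr].
\end{equation*}
Informational completeness means the linear map $R\to Y$ implemented by this POVM is injective on Hermitian operators, so the assignment $\rho^{AR}\mapsto \rho^{AY}$ is invertible on the affine space of states with fixed $\rho^R$. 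In particular, for any CPTP $\Lambda:A\to A$,
\begin{equation*}
(\Lambda\otimes \id_R)\rho^{AR}=\rho^{AR}
\;\Longleftrightarrow\; (\Lambda\otimes \id_Y)\rho^{AY}=\rho^{AY}
\;\Longleftrightarrow\; \Lambda(\rho_y^A)=\rho_y^A \text{ for all }y.
\end{equation*}
This is the key reduction: the CPTP maps preserving the bipartite state are exactly the CPTP maps preserving the induced ensemble $\{q(y),\rho_y^A\}_y$.

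Next I would invoke the original Koashi-Imoto theorem for ensembles: there exists a unique (up to relabelling) isometry $U_{\KI}:A\hookrightarrow CNQ$ and a decomposition
\begin{equation*}
U_{\KI}\,\rho_y^A\,U_{\KI}^{\dagger}
 \;=\; \sum_j p(j|y)\,\proj{j}^C\otimes \omega_j^N \otimes \rho_{j,y}^Q,
\end{equation*}
in which the classical label $j$ and the $\omega_j^N$ are common to all $y$, and all ensemble-dependence is pushed into the conditional states on $Q$. Because $U_{\KI}$ acts only on $A$ and $\rho^{AR}$ can be reconstructed from the $\rho_y^A$ (by the invertibility just established), applying $U_{\KI}$ to $\rho^{AR}$ gives a state of the form
\begin{equation*}
(U_{\KI}\otimes \1_R)\rho^{AR}(U_{\KI}^{\dagger}\otimes \1_R)
 \;=\; \sum_j p_j\,\proj{j}^C\otimes\omega_j^N\otimes \rho_j^{QR},
\end{equation*}
where $p_j=\sum_y q(y)p(j|y)$ and $\rho_j^{QR}$ is recovered by inverting the POVM action on the classical-$Y$ version. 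This establishes property~1.

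For property~2, let $\Lambda$ preserve $\rho^{AR}$; by the reduction above it preserves the ensemble $\{q(y),\rho_y^A\}$, so the ensemble version of Koashi-Imoto gives the stated block form
\begin{equation*}
U \;=\; (U_{\KI}\otimes \1_E)^{\dagger}\!\left(\sum_j \proj{j}^C\otimes U_j^N\otimes \1_j^Q\right)U_{\KI},
\end{equation*}
with isometries $U_j:N\hookrightarrow NE$ satisfying $\Tr_E[U_j\omega_jU_j^{\dagger}]=\omega_j$. Property~3 is just this statement rewritten after conjugating by $U_{\KI}$, so it requires no new work. Uniqueness of $U_{\KI}$ and of the decomposition is inherited from the ensemble case, since any two KI decompositions of $\rho^{AR}$ yield KI decompositions of the associated ensemble, which must agree up to the allowed trivial relabellings.

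The main obstacle I anticipate is the reduction step itself, specifically making the \textquotedblleft if\textquotedblright\ direction rigorous: preservation of the classical-quantum image does not immediately imply preservation of $\rho^{AR}$ because $(\Lambda\otimes\id_R)$ acts on the \emph{operator} $\rho^{AR}$, not just on its statistics under $(M_y^R)$. The argument has to use informational completeness to express $\rho^{AR}$ as a linear combination (in a dual basis to the POVM) of the operators $\Tr_R[\rho^{AR}(\1\otimes M_y^R)]=q(y)\rho_y^A$, after which linearity of $\Lambda\otimes\id_R$ and invariance of each $\rho_y^A$ force invariance of $\rho^{AR}$. Getting this linear-algebraic step clean, and checking that at most $|R|^2$ outcomes suffice (as noted in the excerpt), is the technical core; everything else is an application of the already-known ensemble theorem.
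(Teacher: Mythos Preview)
Your approach is correct and matches what the paper does. Note that the paper does not actually give a self-contained proof of this theorem: it is stated with citation to \cite{KI2002,Hayden2004}, and the paragraph preceding the theorem sketches exactly the reduction you carry out---form the cq-state $\rho^{AY}$ via an informationally complete POVM on $R$ (with at most $|R|^2$ outcomes), observe that invariance of $\rho^{AR}$ under $\Lambda\otimes\id_R$ is equivalent to invariance of every $\rho_y^A$ under $\Lambda$, and then invoke the ensemble version of Koashi-Imoto. Your write-up fills in the details of this sketch faithfully, including the linear-algebraic step you flag as the ``main obstacle'', which is indeed handled by the dual-basis reconstruction $\rho^{AR}=\sum_y \Tr_R[\rho^{AR}(\1\otimes M_y)]\otimes D_y$ for suitable dual operators $D_y$.
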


According to the discussion at the end of Sec. \ref{sec:The Compression task}, the 
sources $\rho^{AR}$ and $\omega^{C N Q R}$ are equivalent because there are the isometry 
$U_{\KI}$ and the reversal CPTP map $\cR: C N Q \longrightarrow A$, which reverses the 
action of the KI isometry, such that:
\begin{align}
    \omega^{C N Q R}&= (U_{\KI}\otimes \1_R)\rho^{AR} (U_{\KI}^{\dagger}\otimes \1_R), \nonumber \\
    \rho^{AR}&=(\cR \otimes \id_R)\omega^{C N Q R}\nonumber \\
    &=(U_{\KI}^{\dagger }\otimes \1_R) \omega^{C N Q R} (U_{\KI}\otimes \1_R)+\Tr [(\1_{C N Q }-\Pi_{C N Q})\omega^{C N Q}]\sigma,
\end{align}
where $\Pi_{C N Q}=U_{\KI}U_{\KI}^{\dagger}$ is the projection onto the subspace 
$U_{\KI}A \subset C \otimes N \otimes Q$, and $\sigma$ is an arbitrary state acting on $A\otimes R$.
Henceforth we assume that the source is $\omega^{C N Q R}$, which is convenient because
our main result is expressed in terms of the systems $C$ and $Q$. Notice that
the source $\omega^{C N Q R}$ is in turn equivalent to $\omega^{C Q R}$,
a fact we will exploit in the proof.

%
Moreover, since the information in $C$ is classical, we can reduce the 
compression rate even more if the sender and receiver share  
entanglement, by using dense coding of $j$. In the following
theorem we show the optimal qubit-ebit rate tradeoff for the compression of the source $\rho^{AR}$. 

\begin{theorem}
  \label{theorem:complete rate region mixed state}
  For the compression of the source $\rho^{AR}$, all asymptotically achievable entanglement and
  quantum rate pairs $(E,Q)$ satisfy
  \begin{align*}
    Q   &\geq S(CQ)_{\omega}-\frac{1}{2}S(C)_{\omega},\\
    Q+E &\geq  S(CQ)_{\omega}, 
  \end{align*}
  where the entropies are with respect the KI decomposition of the state $\rho^{AR}$, i.e. 
 the state $\omega^{C N Q R}$.
  Conversely, all the rate pairs satisfying the above inequalities are asymptotically achievable.
\end{theorem}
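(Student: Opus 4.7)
The plan is to pass, via the Koashi-Imoto isometry $U_{\KI}$, to the equivalent source $\omega^{CNQR}=\sum_j p_j \proj{j}^C \otimes \omega_j^N \otimes \rho_j^{QR}$, and thence to its ``essential'' reduction $\omega^{CQR}$. Theorem~\ref{thm: KI decomposition} gives us two tools: the sources $\rho^{AR}$ and $\omega^{CNQR}$ have identical rate regions (equivalence under $U_{\KI}$); and, conditioned on the classical index $j$ recorded in $C$, the fibre state $\omega_j^N$ is locally reproducible by Bob with no input from Alice. Hence it suffices to compress $(\omega^{CQ})^{\otimes n}$ while preserving correlations with $R^n$.

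For achievability, Alice first applies $U_{\KI}^{\otimes n}$ and discards the $N^n$ register, leaving $(\omega^{CQ})^{\otimes n}$ to transmit. The extreme point $(E,Q)=(0,\,S(CQ)_\omega)$ is attained by Schumacher/Koashi-Imoto compression of this cq-source. The extreme point $(E,Q)=\bigl(\tfrac12 S(C)_\omega,\,S(CQ)_\omega-\tfrac12 S(C)_\omega\bigr)$ is attained by dense coding all of the classical $C^n$-information (using $\tfrac12 n S(C)_\omega$ qubits and $\tfrac12 n S(C)_\omega$ ebits to carry $nS(C)_\omega$ classical bits), and then running Schumacher compression of $Q^n$ with classical side information $j^n$ at both ends, consuming $nS(Q|C)_\omega$ further qubits. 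Time-sharing the two strategies sweeps the segment $Q+E=S(CQ)_\omega$ with $E\in[0,\tfrac12 S(C)_\omega]$. Beyond $E=\tfrac12 S(C)_\omega$ extra entanglement is wasted: the $Q$-factor of $\omega^{CQR}$, by the uniqueness part of the KI decomposition, has no further classical sub-structure that could be dense-coded. Bob then regenerates $N^n$ locally from the decoded $C^n$ and applies $(U_{\KI}^\dagger)^{\otimes n}$.

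For the converse, I would use invariance of fidelity under $U_{\KI}^{\otimes n}$ to move into the KI frame, so that the decoded state is $\epsilon$-close to $\omega^{C^n N^n Q^n R^n}$. A robust/approximate strengthening of Theorem~\ref{thm: KI decomposition}(3) should then force the encoder-decoder composition to act essentially as the identity on the $C^n Q^n$-factor, reducing the problem to faithfully transmitting $(\omega^{CQ})^{\otimes n}$. The sum bound $Q+E\geq S(CQ)_\omega$ follows from Fannes-type continuity together with the dimension inequality: Bob's reconstructed $C^n Q^n$-system has entropy approximately $n S(CQ)_\omega$, yet it is the image of an isometry acting on $MB_0$ of dimension $2^{n(Q+E)}$. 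The sharper bound $Q\geq S(CQ)_\omega-\tfrac12 S(C)_\omega$ requires separating the classical and quantum accounting: the classical capacity of a noiseless qubit channel with shared entanglement is bounded by $2$ bits per qubit use (dense coding being optimal), so at most $\tfrac12 S(C)_\omega$ of the qubit rate can be ``saved'' into ebits through the $C$-part, while the genuinely quantum $Q$-contribution still demands $S(Q|C)_\omega$ qubits of unassisted Schumacher-type compression; adding these gives the claim.

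The main obstacle I expect is the sharper converse $Q\geq S(CQ)_\omega-\tfrac12 S(C)_\omega$. The sum bound is essentially a dimension count, but ruling out exotic entanglement-for-qubits trade-offs beyond the dense-coding threshold requires an asymptotic, approximate strengthening of Theorem~\ref{thm: KI decomposition}(3), combined with a converse for entanglement-assisted classical capacity strong enough to absorb the Fannes-type slack coming from the $\epsilon$-fidelity criterion.
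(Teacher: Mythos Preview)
Your achievability argument matches the paper's and is correct. The sum bound $Q+E\geq S(CQ)_\omega$ also has an easy proof (recorded in a Remark after the converse): distribute the entanglement first to reduce to an unassisted code with rate $Q+E$, then invoke the Koashi--Imoto converse. Your ``dimension inequality'' phrasing is not quite right as stated, since the decoding isometry maps $MB_0$ into $\hat C^n\hat N^n\hat Q^n V$ (with environment $V$), not into $\hat C^n\hat Q^n$, so $S(\hat C^n\hat Q^n)\le n(Q+E)$ does not follow from entropy preservation under isometries alone.

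The genuine gap is the $Q$-converse, and your proposed mechanism---an approximate KI structure theorem together with a dense-coding converse to split any protocol into a ``classical $C$-part'' and a ``quantum $Q$-part''---does not go through. A general encoder has no obligation to respect the $C$/$Q$ decomposition, so there is no way to apply the classical-capacity converse to some isolated classical sub-protocol. The paper's actual argument is quite different. It introduces two auxiliary functions of $\epsilon$: $J_\epsilon(\omega)=\max I(\hat N E:\hat C\hat Q|C')_\tau$ and $Z_\epsilon(\omega)=\max S(\hat N E|C')_\tau$, the maxima over isometries $U:CNQ\hookrightarrow\hat C\hat N\hat Q E$ achieving fidelity $\geq 1-\epsilon$ with $\omega^{CNQR}$. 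It then proves (Lemma~\ref{lemma:J_epsilon Z_epsilon properties}) that both are concave in $\epsilon$ (hence continuous at $0$), sub-additive under tensor products, and---via the exact KI structure theorem applied at $\epsilon=0$---satisfy $J_0(\omega)=0$ and $Z_0(\omega)=S(N|C)_\omega$. The converse proper runs two entropy chains, one following the decoding isometry $MB_0\hookrightarrow \hat C^n\hat N^n\hat Q^n V$ and one following the encoding isometry $C^nN^nQ^nA_0\hookrightarrow MW$; \emph{adding} them makes the entanglement entropies $S(A_0)=S(B_0)$ cancel, leaving residual terms bounded by $J_\epsilon(\omega^{\otimes n})$ and $Z_\epsilon(\omega^{\otimes n})$, which sub-additivity single-letterizes and continuity drives to their $\epsilon=0$ values. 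So the ``approximate strengthening'' you anticipate is exactly the continuity of $J_\epsilon$ and $Z_\epsilon$ at zero, but the way this is harnessed---combining encoding and decoding bounds so that the ebit terms cancel---is the idea you are missing.
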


\begin{remark}
This theorem implies that the optimal asymptotic quantum rates for the compression of 
the source $\rho^{AR}$ with and without entanglement assistance are 
$S(CQ)_{\omega}-\frac{1}{2}S(C)_{\omega}$ and $S(CQ)_{\omega}$ qubits, respectively,
and $\frac{1}{2}S(C)_{\omega}$ ebits of entanglement 
are sufficient and necessary in the entanglement assisted case. 
\end{remark}

\begin{remark}
If in the compression task the parties were required to preserve the correlations with a purifying reference system, then due to Schumacher compression the optimal qubit rate would be $S(A)_{\rho}=S(CNQ)_{\omega}$. However, Theorem~\ref{theorem:complete rate region mixed state} shows that 
the parties can compress more if they are only required to preserve the correlations with a mixed state reference. This gap can be strictly positive if the redundant system $N$ is mixed given the classical information $j$ in system $C$, that is $S(CNQ)_{\omega}-S(CQ)_{\omega}=S(N|CQ)_{\omega}>0$. 
\end{remark}

\begin{figure}[ht] 
  \includegraphics[width=0.8\textwidth]{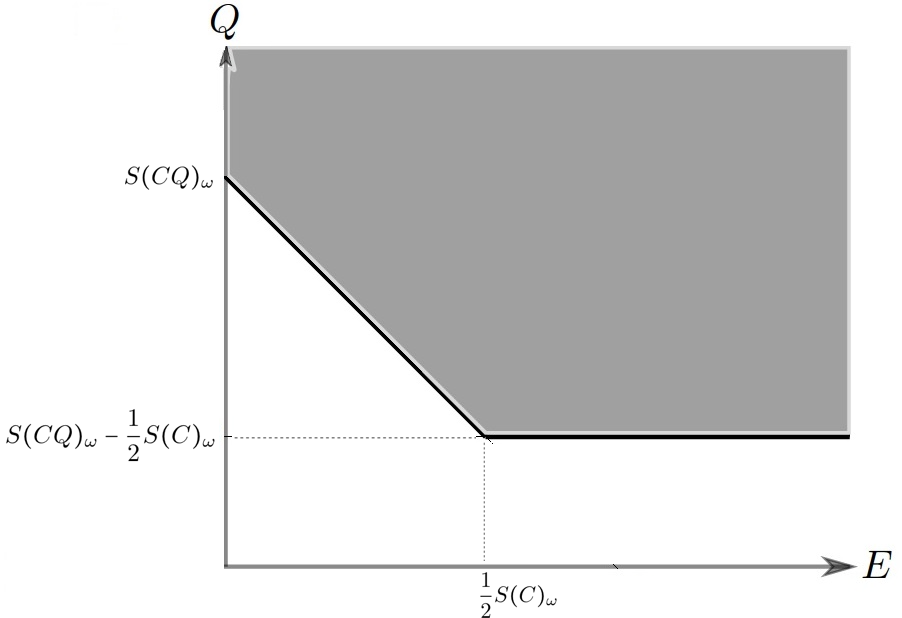} 
  \caption{The achievable rate region of the entanglement and quantum rates.}
  \label{fig:rate region}
\end{figure}

\begin{proof}
We start with the achievability of these rates. The converse proofs need more tools, so we will 
leave them to the subsequent sections. 
Looking at Fig. \ref{fig:rate region}, it will be enough to 
prove the achievability of the corresponding corner points $(E,Q)=(0,S(CQ)_{\omega})$ 
and $(E,Q)=(\frac{1}{2}S(C)_{\omega},S(CQ)_{\omega}-\frac{1}{2}S(C)_{\omega})$ 
for the unassisted and entanglement assisted cases, respectively. This is because
by definition (and the time-sharing principle) the rate region is convex and 
upper-right closed.
Indeed, all the points on the line $Q+E = S(CQ)_{\omega}$ for 
$Q \geq S(CQ)_{\omega}-\frac{1}{2}S(C)_{\omega}$ are achievable because one 
ebit can be distributed by sending a qubit. 
All other rate pairs are achievable by resource wasting. 
The rate region is depicted in Fig.~\ref{fig:rate region}.

As we discussed, we can assume that the source is 
$(\omega^{C N Q R})^{\otimes n}=\omega^{C^n N^n Q^n R^n}$. To achieve the point $(0,S(CQ)_{\omega})$, 
Alice traces out the redundant part $N^n$ of the source, to get the state $\omega^{C^n Q^n R^n}$ and applies Schumacher compression to send the systems $C^n Q^n$ to Bob. Since the 
Schumacher compression preserves the purification of the systems $C^n Q^n$, it preserves the state $\omega^{C^n Q^n R^n}$ as well. To be more specific, let $\Lambda_S$ denote the composition of the encoding and decoding operations for the Schumacher compression of the state
$\ket{\omega}^{C^n Q^n R^n {R'}^n}$ where the system ${R'}^n$ is a purifying reference system which of course the parties do not have access to. The Schumacher compression preserves the following fidelity on the left member of the equation, therefore it preserves the fidelity on the right member:
\begin{align}
   1-\epsilon 
     &\leq F \left({\omega}^{C^n Q^n R^n {R'}^n} ,(\Lambda_S \otimes \id_{R^n {R'}^n}) {\omega}^{C^n Q^n R^n {R'}^n}\right) \nonumber\\
     &\leq F \left({\omega}^{C^n Q^n R^n } ,(\Lambda_S \otimes \id_{R^n } ){\omega}^{C^n Q^n R^n }\right), \nonumber
\end{align}
where the inequality is due to monotonicity of the fidelity under partial trace. 
The rate achieved by this scheme is $S(CQ)_{\omega}$. After applying this scheme, 
Bob has access to the systems $\hat{C}^n \hat{Q}^n$, which is correlated with the reference 
system $R^n$:
\begin{align*}
  \zeta^{\hat{C}^n \hat{Q}^n R^n}=(\Lambda_S \otimes \id_{R^n } ){\omega}^{C^n Q^n R^n }.
\end{align*}
Then, to reconstruct the system $N^n$, Bob applies the CPTP map 
$\mathcal{N}:C Q  \longrightarrow C  N Q$ to each copy, which acts as follows:
\begin{align}
  \mathcal{N}(\rho^{CQ})=\sum_j (\proj{j}^{C} \otimes \1_{Q}) \rho^{CQ} (\proj{j}^{C} \otimes\1_{Q}) \otimes \omega_j^{N}.\nonumber   
\end{align}
%
%
This map satisfies the fidelity criterion of Eq.~(\ref{eq:fidelity omega})  because of monotonicity of the fidelity under CPTP maps:
\begin{align}\label{eq:fidelity omega}
    1-\epsilon &\leq F \left({\omega}^{C^n Q^n R^n } ,\zeta^{\hat{C}^n \hat{Q}^n R^n}\right) \nonumber\\
    & \leq F \left((\mathcal{N}^{\otimes n} \otimes \id_{R^n } ){\omega}^{C^n Q^n R^n } ,(\mathcal{N}^{\otimes n} \otimes \id_{R^n } ) \zeta^{\hat{C}^n \hat{Q}^n R^n}\right) \nonumber \\
    &= F \left({\omega}^{C^n N^n Q^n R^n } ,\tau^{\hat{C}^n \hat{N}^n \hat{Q}^n R^n}\right).
\end{align}

To achieve the point $(\frac{1}{2}S(C)_{\omega},S(CQ)_{\omega}-\frac{1}{2}S(C)_{\omega})$, 
Alice applies dense coding to send the classical system $C^n$ to Bob which requires 
$\frac{n}{2}S(C)_{\omega}$ ebits of initial entanglement and $\frac{n}{2}S(C)_{\omega}$ 
qubits \cite{Bennett1992}. When both Alice and Bob have access to system $C^n$, 
Alice can send the quantum system $Q^n$ to Bob by applying Schumacher compression, which 
requires sending $nS(Q|C)$ qubits to Bob.
Therefore, the overall qubit rate is 
$\frac{1}{2}S(C)_{\omega}+S(Q|C)=S(CQ)_{\omega}-\frac{1}{2}S(C)_{\omega}$.
\end{proof}

\section{Converse}
\label{sec:converse}
In this section, we will provide the converse bounds for the qubit rate $Q$ and the sum 
rate $Q+E$ of Theorem~\ref{theorem:complete rate region mixed state}.
We obtain these bounds based on the structure of the CPTP maps which preserve the source state
$\omega^{CNQR}$. Namely, according to Theorem~\ref{thm: KI decomposition} the CPTP maps 
acting on systems $CNQ$, which preserve the state $\omega^{CNQR}$, act only on the 
redundant system $N$. This implies that the environment systems of such CPTP maps are 
decoupled from systems $Q R$ given the classical information $j$ in the classical system $C$.
This gives us an insight into the structure of the  encoding-decoding maps, which preserve 
the overall state \emph{asymptotically} intact. 

To proceed with the proof, we first define two functions that emerge in the converse bounds.
Then, we state some important properties of these functions in 
Lemma~\ref{lemma:J_epsilon Z_epsilon properties} which we will use to
compute the tight asymptotic converse bounds.

\begin{definition}
  \label{def:J_epsilon Z_epsilon}
  For the KI decomposition  
  $\omega^{C N Q R}=\sum_{j} p_j \proj{j}^{C}\otimes \omega_j^{N} \otimes \rho_{j}^{Q R}$
  of the state $\rho^{AR}$ and $\epsilon \geq 0$, define
  \begin{align*}
    J_\epsilon(\omega) &:=  
        \max I(\hat{N} E:\hat{C}\hat{Q}|C')_\tau 
                  \text{ s.t. } \\
                       & \quad \quad U:C N Q \rightarrow \hat{C} \hat{N} \hat{Q} E
                  \text{ is an isometry with } 
                  F( \omega^{C N Q R},\tau^{\hat{C} \hat{N} \hat{Q}R})  \geq 1- \epsilon,  \\
    Z_\epsilon(\omega) &:=  
        \max S(\hat{N} E|C')_\tau 
          \text{ s.t. } \\
          & \quad \quad U:C N Q \rightarrow \hat{C} \hat{N} \hat{Q} E 
          \text{ is an isometry with } 
          F( \omega^{C N Q R},\tau^{\hat{C} \hat{N} \hat{Q}R})  \geq 1- \epsilon,  
\end{align*}
where
\begin{align*}
  \omega^{C N Q  R C'} 
     &=\sum_{j} p_j \proj{j}^{C}\otimes \omega_j^{N} \otimes \rho_{j}^{Q R} \otimes \proj{j}^{C'}, \\
  \tau^{\hat{C} \hat{N} \hat{Q} ER C'}
     &= (U \otimes \1_{RC'}) \omega^{C N Q R C'}    (U^{\dagger} \otimes \1_{RC'}), \\
  \tau^{\hat{C} \hat{N} \hat{Q} R}
     &=\Tr_{E C'} [ \tau^{\hat{C} \hat{N} \hat{Q}  ER C'}].
\end{align*}
\end{definition}
In this definition, the dimension of the environment is w.l.o.g. bounded as $|E| \leq (|C||N||Q|)^2$
because the input and output dimensions of the channel are fixed as $|C||N||Q|$; 
hence, the optimisation is of a continuous function over a compact domain, so we have a 
maximum rather than a supremum.

\begin{lemma}
  \label{lemma:J_epsilon Z_epsilon properties}
  The functions $Z_\epsilon(\omega)$ and $J_\epsilon(\omega)$ have the following properties:
  \begin{enumerate}
    \item They are non-decreasing functions of $\epsilon$. 
    \item They are concave in $\epsilon$.
    \item They are continuous for $\epsilon \geq 0$. 
    \item For any two states $\omega_1^{C_1 N_1 Q_1 R_1}$ and $\omega_2^{C_2 N_2 Q_2 R_2}$ and for $\epsilon \geq 0$,
    \begin{align*}
        &J_{\epsilon}(\omega_1 \otimes \omega_2) \leq  J_{\epsilon}(\omega_1) +J_{\epsilon}(\omega_2),\\  
        &Z_{\epsilon}(\omega_1 \otimes \omega_2) \leq  Z_{\epsilon}(\omega_1) +Z_{\epsilon}(\omega_2).  
    \end{align*}
    \item At $\epsilon=0$, $Z_0(\omega) =S(N|C)_\omega$ and $J_0(\omega) =0$.
  \end{enumerate}
\end{lemma}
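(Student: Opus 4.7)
The plan is to verify the five properties in turn, leaning on the Koashi-Imoto structure theorem (Theorem~\ref{thm: KI decomposition}) and some standard tools. Property (1) is immediate, since enlarging $\epsilon$ only relaxes the fidelity constraint and hence enlarges the feasible set of isometries. For property (5), at $\epsilon=0$ the constraint forces $F(\omega^{CNQR},\tau^{\hat C\hat N\hat QR})=1$, so the CPTP map $\rho\mapsto \Tr_E U\rho U^\dagger$ preserves $\omega^{CNQR}$; by property~3 of Theorem~\ref{thm: KI decomposition}, every such $U$ has the form $U=\sum_j \proj{j}^{\hat C}\otimes U_j^N\otimes \1_j^{\hat Q}$ with $\Tr_E U_j\omega_j U_j^\dagger=\omega_j$. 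Substituting into the definition yields $\tau^{\hat C\hat N\hat Q ERC'}=\sum_j p_j \proj{j}^{\hat C}\otimes \proj{j}^{C'}\otimes (U_j\omega_j U_j^\dagger)\otimes \rho_j^{\hat Q R}$, which conditionally on $C'=j$ is a product between $\hat NE$ and $\hat C\hat Q$. Hence $I(\hat NE:\hat C\hat Q|C')=0$, giving $J_0(\omega)=0$, while $S(\hat NE|C')=\sum_j p_j S(U_j\omega_j U_j^\dagger)=\sum_j p_j S(\omega_j)=S(N|C)_\omega$ by isometric invariance, independently of the choice of $U_j$, giving $Z_0(\omega)=S(N|C)_\omega$.

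For property (2), given optimal $U_1,U_2$ at errors $\epsilon_1,\epsilon_2$ and $\lambda\in[0,1]$, I introduce a classical flag $F'$ and define the coherent mixture $V\ket\psi:=\sqrt\lambda(U_1\ket\psi)\otimes\ket{0}^{F'}+\sqrt{1-\lambda}(U_2\ket\psi)\otimes\ket{1}^{F'}$, which is easily checked to be an isometry; absorbing $F'$ into the environment yields a map into $\hat C\hat N\hat Q E$. The resulting $\tau_V$ on $\hat C\hat N\hat Q R$ is the mixture $\lambda\tau_1+(1-\lambda)\tau_2$, and joint concavity of the fidelity guarantees error at most $\lambda\epsilon_1+(1-\lambda)\epsilon_2$. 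Since $F'$ is a classical subregister of $\hat NE$, further conditioning on $F'$ gives $S(\hat NE|C'F')_{\tau_V}=\lambda S(\hat NE|C')_{\tau_1}+(1-\lambda)S(\hat NE|C')_{\tau_2}$, and the chain-rule inequalities $S(\hat NE|C')\ge S(\hat NE|C'F')$ and $I(\hat NE:\hat C\hat Q|C')\ge I(\hat NE:\hat C\hat Q|C'F')$ (both following from $F'\subseteq \hat NE$) yield concavity of $Z_\epsilon$ and $J_\epsilon$ respectively.

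For property (4), I take an optimal isometry $U$ on the joint source and define induced CPTP maps on each factor by tracing out the environment together with all hatted outputs of the other factor: for $i=1,2$, $\Lambda_i(\rho_i):=\Tr_{*}U(\rho_i\otimes\omega_{\bar i}^{C_{\bar i}N_{\bar i}Q_{\bar i}})U^\dagger$, where the trace is over everything except $\hat C_i\hat N_i\hat Q_i$. By monotonicity of fidelity under partial trace, the Stinespring dilations $V_i$ of $\Lambda_i$ are $\epsilon$-feasible isometries for $\omega_i$. Identifying the environment of $V_i$ with the composite register consisting of $E$ and the hatted outputs of the other factor, subadditivity of the von Neumann entropy (respectively the chain rule together with strong subadditivity for the conditional mutual information) bounds the joint objective by the sum of the per-factor objectives; maximising over the feasible $V_i$ on the right-hand side yields the claimed subadditivity. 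For property (3), on $\epsilon>0$ concavity together with boundedness gives continuity; at $\epsilon=0^+$ the feasible set is a compact subset of a bounded-dimensional manifold (using $|E|\le(|C||N||Q|)^2$), so a sequence of $\epsilon_n$-optimal isometries with $\epsilon_n\to 0$ has a convergent subsequence with limit $U_*$ that is $0$-feasible by continuity of fidelity; continuity of the von Neumann entropy then forces the objectives to converge to the value at $U_*$, which is bounded by $J_0(\omega)$ or $Z_0(\omega)$.

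I expect the main obstacle to be property (4): the induced maps $\Lambda_i$ are not tensor factors of $U$, and it is the careful identification of the environment of each $V_i$ with $E$ together with the other factor's hatted outputs, matched with the correct classical conditioning on $C'_i$ versus $C'_1C'_2$, that lets strong subadditivity and the chain rule produce the additive bound cleanly.
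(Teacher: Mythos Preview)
Your arguments for properties (1), (3) and (5) are correct and essentially match the paper's. There are, however, genuine gaps in (2) and (4).

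For property (2), your single-flag isometry $V$ is indeed feasible at error $\lambda\epsilon_1+(1-\lambda)\epsilon_2$, and the chain-rule inequalities $S(\hat N E F'|C')\geq S(\hat N E|C'F')$ and $I(\hat N E F':\hat C\hat Q|C')\geq I(\hat N E:\hat C\hat Q|C'F')$ hold. The problem is the next step: you assert that $F'$ is ``a classical subregister'', so that conditioning on it yields the convex combination $\lambda(\cdot)_{\tau_1}+(1-\lambda)(\cdot)_{\tau_2}$. But in the marginal $\tau_V$ on $\hat N E \hat C\hat Q F' C'$ (obtained by tracing out $R$), the $F'$-off-diagonal block equals $\sqrt{\lambda(1-\lambda)}\,U_1\,\omega^{CNQC'}U_2^\dagger$, which does not vanish in general (take $U_1=U_2$). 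Hence $F'$ is \emph{not} classical in the relevant state and the claimed decomposition fails. The paper repairs this by using \emph{two} perfectly correlated flags $FF'$: tracing over $F'$ dephases $F$, so that in the marginals on $\hat N E F C'$ and $\hat N E F\hat C\hat Q C'$ the register $F$ is genuinely classical and the convex combination holds; a short strong-subadditivity argument ($S(F|C')+S(F'|\hat N E F C')\geq 0$) then shows the second flag does not decrease the objective.

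For property (4), the strategy of inducing a map on each factor is right and is what the paper does. But your identification of the Stinespring environment of $V_i$ as ``$E$ together with the hatted outputs of the other factor'' is incomplete: since you tensor with the \emph{mixed} state $\omega_{\bar\imath}^{C_{\bar\imath}N_{\bar\imath}Q_{\bar\imath}}$, the dilation must also contain its purifying systems $R_{\bar\imath}C'_{\bar\imath}R'_{\bar\imath}$. More importantly, invoking ``subadditivity'' (resp.\ ``chain rule plus SSA'') directly does not make the entropy inequalities line up. The paper's key move is to work with the \emph{purified} global state and use $S(\hat N_1 E_1 C'_1)=S(\hat C_1\hat Q_1 R_1 R'_1)$, $S(\hat N_2 E_2 C'_2)=S(\hat C_2\hat Q_2 R_2 R'_2)$, $S(\hat N_1\hat N_2 E C'_1 C'_2)=S(\hat C_1\hat Q_1\hat C_2\hat Q_2 R_1 R'_1 R_2 R'_2)$; only after this substitution does the difference collapse to the nonnegative mutual information $I(\hat C_1\hat Q_1 R_1 R'_1:\hat C_2\hat Q_2 R_2 R'_2)$ (and, for $J$, to a combination of two such mutual informations controlled by data processing). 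Without this purification step the inequality is not visible.
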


The proof of this lemma follows in the next section. Now we show how it is
used to prove the converse (optimality) of Theorem \ref{theorem:complete rate region mixed state}.
As a guide to reading the subsequent proof, we remark that in Eqs.~(\ref{eq:converse_Q_1}) 
and (\ref{eq:converse_Q+E_2}), the environment systems $VW$ of the encoding-decoding 
operations appear in the terms $I(\hat{N}^n VW : \hat{C}^n  \hat{Q}^n| {C'}^n)$ and 
$S(\hat{N}^n VW | {C'}^n)$, which are bounded by  
the functions $J_{\epsilon}(\omega^{\otimes n})$ and $Z_{\epsilon}(\omega^{\otimes n})$, 
respectively.
As stated in point 4 of Lemma~\ref{lemma:J_epsilon Z_epsilon properties}, these functions 
are sub-additive, so basically we can single-letterize the terms appearing in the converse.
Moreover, from point 3 of Lemma~\ref{lemma:J_epsilon Z_epsilon properties}, we know that
these functions are continuous for $\epsilon \geq 0$; therefore, the limit points of these 
functions are equal to the values of these functions at $\epsilon=0$.
When the fidelity is equal to 1 ($\epsilon=0$), the structure of the CPTP maps 
preserving the state $\omega^{C N Q R}$ in Theorem~\ref{thm: KI decomposition} 
implies that $J_{0}(\omega)=0$ and  $Z_{0}(\omega)=S(N|C)_{\omega}$, as stated 
in point 5 of Lemma~\ref{lemma:J_epsilon Z_epsilon properties}. Thereby, we conclude 
the converse bounds in Eqs.~(\ref{eq:converse_Q_asymptotics}) and (\ref{eq:converse_Q+E_asymptotics}).

\begin{proof-of}[of Theorem \ref{theorem:complete rate region mixed state} (converse)]
We first get the following chain of inequalities considering the process of the 
decoding of the information: 
\begin{align}
  nQ+S(B_0)&\geq S(M)+S(B_0)      \label{eq:converse_decoding_1_1}  \\
           &\geq S(M B_0)         \label{eq:converse_decoding_1_2} \\
           &= S(\hat{C}^n \hat{N}^n \hat{Q}^n V)     \label{eq:converse_decoding_1_4} \\
           &= S(\hat{C}^n\hat{Q}^n) + S(\hat{N}^n V| \hat{C}^n  \hat{Q}^n)     \label{eq:converse_decoding_1_5} \\
           &\geq nS(C  Q) + S(\hat{N}^n V| \hat{C}^n  \hat{Q}^n)    -n \delta(n,\epsilon)    \label{eq:converse_decoding_1_6}\\
           &\geq nS(C  Q) + S(\hat{N}^n V| \hat{C}^n  \hat{Q}^n {C'}^n)    -n \delta(n,\epsilon)    \label{eq:converse_decoding_1_7}\\
           &= nS(C  Q) + S(\hat{N}^n V| \hat{C}^n  \hat{Q}^n {C'}^n) - S(\hat{N}^n V| {C'}^n)\nonumber \\
           &\quad+ S(\hat{N}^n V| {C'}^n)-n \delta(n,\epsilon)  \nonumber\\
           &=  nS(C  Q) - I(\hat{N}^n V : \hat{C}^n  \hat{Q}^n| {C'}^n)
            + S(\hat{N}^n V| {C'}^n)  -n \delta(n,\epsilon)\nonumber\\
           &\geq nS(C Q) - I(\hat{N}^n VW : \hat{C}^n  \hat{Q}^n| {C'}^n)+ S(\hat{N}^n V| {C'}^n)-n \delta(n,\epsilon) \label{eq:converse_decoding_1}
\end{align}
where Eq.~(\ref{eq:converse_decoding_1_1}) follows because the entropy of a system 
is bounded by the logarithm of the dimension of that system;
Eq.~(\ref{eq:converse_decoding_1_2}) is due to sub-additivity of the entropy;
Eq.~(\ref{eq:converse_decoding_1_4}) follows because the decoding isometry 
$U_{\cD}:M B_0 \hookrightarrow \hat{C}^n \hat{N}^n \hat{Q}^n V$ does not change the entropy;
Eq.~(\ref{eq:converse_decoding_1_5}) is due to the chain rule;
Eq.~(\ref{eq:converse_decoding_1_6}) follows from the decodability: the 
output state on systems $\hat{C}^n \hat{Q}^n$ is $2\sqrt{2\epsilon}$-close 
to the original state $C^n  Q^n$ in trace norm; then the inequality follows 
by applying the Fannes-Audenaert inequality 
\cite{Fannes1973,Audenaert2007}, where 
$\delta(n,\epsilon)=\sqrt{2\epsilon} \log(|C||Q|) + \frac1n h(\sqrt{2\epsilon})$;
Eq.~(\ref{eq:converse_decoding_1_7}) is due to strong sub-additivity of the entropy,
and system $C'$  is a copy of classical system $C$;
Eq.~(\ref{eq:converse_decoding_1}) 
follows from data processing inequality where $W$ is the environment system 
of the encoding isometry $U_{\cE}:C^n N^n Q^n A_0 \hookrightarrow M W$.

Moreover, considering the process of encoding the information, $Q$ is bounded as follows:
\begin{align}
  nQ &\geq S(M)  \nonumber \\                   
     &\geq S(M|W  {C'}^n)           \label{eq:converse_encoding_1_1} \\
     &=    S(M W  {C'}^n) -S(W  {C'}^n) \label{eq:converse_encoding_1_2} \\
     &=    S(C^n N^n Q^n A_0  {C'}^n)-S(W {C'}^n) \label{eq:converse_encoding_1_4} \\
     &=    S(C^n N^n Q^n {C'}^n)+S(A_0)-S(W {C'}^n) \label{eq:converse_encoding_1_5} \\
     &=    S(C^n N^n Q^n  {C'}^n)+S(A_0)-S({C'}^n)-S(W  |{C'}^n) \label{eq:converse_encoding_1_6} \\
     &=    S(C^n N^n Q^n)+S(A_0)-S({C'}^n)-S(W |{C'}^n) \label{eq:converse_encoding_1_7} \\
     &=    nS(C Q)+nS(N|C Q)+S(A_0)-nS(C')-S(W  |{C'}^n) \label{eq:converse_encoding_1_8} \\
     &=    nS(C   Q )+nS(N |C)+S(A_0)-nS(C')-S(W |{C'}^n),  \label{eq:converse_encoding_1}
\end{align}
where Eq.~(\ref{eq:converse_encoding_1_1}) is due to sub-additivity of the entropy;
%
Eq.~(\ref{eq:converse_encoding_1_2}) is due to the chain rule;
Eq.~(\ref{eq:converse_encoding_1_4}) follows because the encoding isometry 
$U_{\cE}:C^n N^n Q^n A_0 \hookrightarrow M W$ does not the change the entropy; 
Eq.~(\ref{eq:converse_encoding_1_5}) follows because the initial entanglement $A_0$ 
is independent from the source;
Eq.~(\ref{eq:converse_encoding_1_6}) is due to the chain rule;
%
%
Eq.~(\ref{eq:converse_encoding_1_7}) follows because $C'$ is a copy of the system $C$, 
so $S(C'|C  N Q)=0$;
Eq.~(\ref{eq:converse_encoding_1_8}) is due to the chain rule and the fact that the entropy is additive
for product states;
Eq.~(\ref{eq:converse_encoding_1}) follows because conditional on system $C$ 
the system $N$ is independent from system $Q$. 

Now, we add Eqs.~(\ref{eq:converse_decoding_1}) and 
(\ref{eq:converse_encoding_1}); the entanglement terms $S(A_0)$ and $S(B_0)$ cancel out,
and by dividing by $2n$ we obtain
\begin{align}
  Q \!&\geq S(C Q)-\frac{1}{2}S(C)\! +\frac{1}{2}S(N |C)\!-\frac{1}{2n} I(\hat{N}^n VW  : \hat{C}^n  \hat{Q}^n| {C'}^n) \nonumber \\
  & \quad + \frac{1}{2n}S(\hat{N}^n V| {C'}^n)\!-\frac{1}{2n}S(W |{C'}^n)\!-\frac{1}{2} \delta(n,\epsilon)  \nonumber\\
  &\geq S(C Q)-\frac{1}{2}S(C) +\frac{1}{2}S(N |C)-\frac{1}{2n} I(\hat{N}^n VW : \hat{C}^n  \hat{Q}^n| {C'}^n) \nonumber \\
  &\quad - \frac{1}{2n}S(\hat{N}^n VW | {C'}^n) -\frac{1}{2} \delta(n,\epsilon)  \label{eq:converse_Q_1}\\
  &\geq S(C Q)-\frac{1}{2}S(C) +\frac{1}{2}S(N |C)-\frac{1}{2n}J_{\epsilon}(\omega^{\otimes n})-\frac{1}{2n} Z_{\epsilon}(\omega^{\otimes n})- \frac{1}{2} \delta(n,\epsilon) \label{eq:converse_Q_2} \\ 
  &\geq S(C Q)-\frac{1}{2}S(C) +\frac{1}{2}S(N |C)-\frac{1}{2}J_{\epsilon}(\omega)-\frac{1}{2} Z_{\epsilon}(\omega)- \frac{1}{2} \delta(n,\epsilon), \label{eq:converse_Q}   
\end{align} 
where Eq.~(\ref{eq:converse_Q_1}) follows from strong sub-additivity of the entropy, 
$S(\hat{N}^n V| {C'}^n)+S(\hat{N}^n V| W  {C'}^n)\geq 0$; 
Eq.~(\ref{eq:converse_Q_2}) follows from Definition~\ref{def:J_epsilon Z_epsilon};
Eq.~(\ref{eq:converse_Q}) is due to point 4 of Lemma~\ref{lemma:J_epsilon Z_epsilon properties}.

In the limit of $\epsilon \to 0$ and 
$n \to \infty $, the qubit rate is thus bounded by
\begin{align}\label{eq:converse_Q_asymptotics}
  Q &\geq S(C Q)-\frac{1}{2}S(C) +\frac{1}{2}S(N |C)-\frac{1}{2}J_{0}(\omega)-\frac{1}{2} Z_{0}(\omega) \nonumber\\
    &= S(C Q)-\frac{1}{2}S(C),
\end{align}
where the equality follows from point 5 of Lemma~\ref{lemma:J_epsilon Z_epsilon properties}. 

Moreover, from Eq.~(\ref{eq:converse_decoding_1}) we have:
\begin{align}
   nQ+S(B_0)&= nQ+nE     \nonumber  \\
            &\geq nS(C Q) - I(\hat{N}^n VW : \hat{C}^n  \hat{Q}^n| {C'}^n)+ S(\hat{N}^n V| {C'}^n)-n \delta(n,\epsilon) \nonumber \\
            &\geq nS(C Q) - I(\hat{N}^n VW : \hat{C}^n  \hat{Q}^n| {C'}^n)-n \delta(n,\epsilon) \label{eq:converse_Q+E_2} \\
            &\geq nS(C Q) - J_{\epsilon}(\omega^{\otimes n})-n \delta(n,\epsilon) \label{eq:converse_Q+E_3} \\
            &\geq nS(C Q) - nJ_{\epsilon}(\omega)-n \delta(n,\epsilon), \label{eq:converse_Q+E}  
\end{align}
where Eq.~(\ref{eq:converse_Q+E_2}) follows because the entropy conditional on a 
classical system is positive, $S(\hat{N}^n V| {C'}^n) \geq 0$;
Eq.~(\ref{eq:converse_Q+E_3}) follows from Definition~\ref{def:J_epsilon Z_epsilon};
Eq.~(\ref{eq:converse_Q+E})  is due to point 4 of Lemma~\ref{lemma:J_epsilon Z_epsilon properties}. 

In the limit of $\epsilon \to 0$ and 
$n \to \infty $, we thus obtain the following bound on the rate sum:
\begin{align}
  Q+E \geq S(CQ) - J_{0}(\omega)
      =    S(CQ) \label{eq:converse_Q+E_asymptotics},  
\end{align}
where the equality follows from point 5 of Lemma~\ref{lemma:J_epsilon Z_epsilon properties}. 
\end{proof-of}

\begin{remark}
Our lower bound on $Q+E$ in Eq. (\ref{eq:converse_Q+E_asymptotics}) reproduces the 
result of Koashi and Imoto \cite{KI2001} for the case of a classical-quantum source
$\rho^{AX} = \sum_x p(x) \rho_x^A \otimes \proj{x}^X$. This is because a code with
qubit-ebit rate pair $(Q,E)$ gives rise to a compression code in the sense of 
Koashi and Imoto using a rate of qubits $Q+E$ and no prior entanglement, simply by
first distributing $E$ ebits and then using the entanglement assisted code. 

It is worth noting that conversely, Eq. (\ref{eq:converse_Q+E_asymptotics}) can 
be obtained from the Koashi-Imoto result, as follows. Any good code for $\rho^{AR}$
is automatically a good code for the classical-quantum source of mixed states
\[
  \rho^{AY} = \sum_y q(y) \rho_y^A \otimes \proj{y}^Y
            = \sum_y \Tr_R \rho^{AR}(\1_A\otimes M_y^R) \otimes \proj{y}^Y, 
\]
for any POVM $(M_y)$ on $R$, simply by the monotonicity of the fidelity 
under CPTP maps. As discussed before, by choosing an informationally complete
measurement, the KI-decomposition of the ensemble $\{q(y),\rho_y^A\}$ is 
identical to that of $\rho^{AR}$ in Theorem \ref{thm: KI decomposition}.
Thus the unassisted qubit compression rate of $\rho^{AY}$ and of $\rho^{AR}$
are lower bounded by the same quantity, the right hand side of Eq. (\ref{eq:converse_Q+E_asymptotics}).
\end{remark}

\section{Proof of Lemma~\ref{lemma:J_epsilon Z_epsilon properties}}
\label{sec: Proof of Lemma}

\begin{enumerate}
\item The definitions of the functions $J_{\epsilon}(\omega)$ and $Z_{\epsilon}(\omega)$ 
  directly imply that they are non-decreasing functions of $\epsilon$.

\item We first prove the concavity of $Z_{\epsilon}(\omega)$. 
  Let $U_1:C N Q \hookrightarrow \hat{C} \hat{N} \hat{Q} E$ and 
  $U_2:C N Q \hookrightarrow \hat{C} \hat{N} \hat{Q} E$ be the isometries attaining the 
  maximum for $\epsilon_1$ and $\epsilon_2$, respectively, which act as 
  follows on the purification $\ket{\omega}^{C N Q R C' R'}$ of the previously
  introduced state $\omega^{C N Q R C'}$:
\begin{align*}
    &\ket{\tau_1}^{\hat{C} \hat{N} \hat{Q} E R C' R'}
        =(U_1 \otimes \1_{R C' R'}) \ket{\omega}^{C N Q R C' R'}
    \quad \text{ and } \\
    &\ket{\tau_2}^{\hat{C} \hat{N} \hat{Q} E R C' R'}
        =(U_2 \otimes \1_{R C' R'}) \ket{\omega}^{C N Q R C' R'},   
\end{align*}
  where $\Tr_{R'}[\proj{\omega}^{C N Q R C' R'}]=\omega^{C N Q R C'}$. 
  For $0\leq \lambda \leq 1$, define the isometry 
  $U_0:C N Q \hookrightarrow \hat{C} \hat{N} \hat{Q} E F F'$ which acts as 
  \begin{equation}
    \label{eq: isometry U in convexity}
    U_0 := \sqrt{\lambda} U_1 \otimes \ket{11}^{FF'} + \sqrt{1-\lambda} U_2 \otimes \ket{22}^{FF'},
  \end{equation}
  where systems $F$ and $F'$ are qubits, and
  which leads to the state
  \begin{align*}
    (U_0 \otimes \1_{R C' R'})& \ket{\omega}^{C N Q R C' R'}\\
       &= \sqrt{\lambda}\ket{\tau_1}^{\hat{C} \hat{N} \hat{Q} E R C' R'} \ket{11}^{FF'}
        + \sqrt{1-\lambda}\ket{\tau_2}^{\hat{C} \hat{N} \hat{Q} E R C' R'} \ket{22}^{FF'}.   
  \end{align*}
  Then, $U_0$ defines its state $\tau$. for which the reduced state on the systems 
  $\hat{C} \hat{N} \hat{Q}  R C'$ is 
  \begin{align} \label{eq: tau in convexity proof}
    \tau^{\hat{C} \hat{N} \hat{Q} R C'} 
      =\lambda \tau_1^{\hat{C} \hat{N} \hat{Q} R C'}+ (1-\lambda) \tau_2^{\hat{C} \hat{N} \hat{Q} R C'}. 
  \end{align}  
  Therefore, the fidelity for the state $\tau$ is bounded as follows:
  \begin{align}\label{eq:fidelity in convexity}
    F(\omega^{C N Q R} &,\tau^{\hat{C} \hat{N} \hat{Q} R} ) \nonumber \\
      &= F(\omega^{C N Q R} ,\lambda \tau_1^{\hat{C} \hat{N} \hat{Q} R}
        + (1-\lambda) \tau_2^{\hat{C} \hat{N} \hat{Q} R}) \nonumber \\
      &= F(\lambda \omega^{C N Q R}+(1-\lambda)\omega^{C N Q R},
           \lambda \tau_1^{\hat{C} \hat{N} \hat{Q} R}
            + (1-\lambda) \tau_2^{\hat{C} \hat{N} \hat{Q} R}) \nonumber\\
      &\geq \lambda F( \omega^{C N Q R},\tau_1^{\hat{C} \hat{N} \hat{Q} R})
            +(1-\lambda)F( \omega^{C N Q R},\tau_2^{\hat{C} \hat{N} \hat{Q} R}) \nonumber\\
     &\geq 1-\left( \lambda\epsilon_1 +(1-\lambda)\epsilon_2 \right).
  \end{align}
  The first inequality is due to simultaneous concavity of the fidelity in both
  arguments;
  the last line follows by the definition of the isometries $U_1$ and $U_2$.
  Thus, the isometry $U_0$ yields a fidelity of at least 
  $1-\left( \lambda\epsilon_1 +(1-\lambda)\epsilon_2 \right) =: 1-\epsilon$.
  Now let $E'=E FF'$ denote the environment of the isometry $U_0$ defined above. 
  According to Definition \ref{def:J_epsilon Z_epsilon}, we obtain
  \begin{align}
    Z_\epsilon(\omega) &\geq S(\hat{N} E'|C')_{\tau} \nonumber\\ 
                     &= S(\hat{N} EFF'|C')_{\tau} \nonumber\\ 
                     &= S(F|C')_{\tau}+S(\hat{N} E|F C')_{\tau}+S(F'|\hat{N} EF C')_{\tau} \label{eq:Z_concavity_1}\\
                     &\geq S(\hat{N}E|FC')_{\tau} \label{eq:Z_concavity_2}\\
                     &= \lambda S(\hat{N} E|C')_{\tau_1}+(1-\lambda) S(\hat{N} E|C')_{\tau_2}\label{eq:Z_concavity_3}\\
                     &= \lambda Z_{\epsilon_1}(\omega)+(1-\lambda)Z_{\epsilon_2}(\omega) \label{eq:Z_concavity_4},
  \end{align}
  where the state $\tau$ in the entropies is given in Eq.~(\ref{eq: tau in convexity proof});
  Eq.~(\ref{eq:Z_concavity_1}) is due to the chain rule; 
  Eq.~(\ref{eq:Z_concavity_2}) follow because 
  for the state on systems $\hat{N} EFF' C' $ we have $S(F'|C')+S(F'|\hat{N} E F C')\geq 0$ 
  which follows from strong sub-additivity of the entropy; 
  Eq.~(\ref{eq:Z_concavity_3}) follows by expanding the conditional entropy on the classical system $F$; 
  Eq.~(\ref{eq:Z_concavity_4}) follows from the definitions of the isometries $U_1$ and $U_2$.

  Moreover, let $U_1:C N Q \hookrightarrow \hat{C} \hat{N} \hat{Q} E$ and 
  $U_2:C N Q \hookrightarrow \hat{C} \hat{N} \hat{Q} E$ be the isometries attaining the 
  maximum for $\epsilon_1$ and $\epsilon_2$ in the definition of $J_{\epsilon}(\omega)$, respectively.
  Again, define the isometry $U_0$ as in Eq.~(\ref{eq: isometry U in convexity}),
  which leads to the bound on the fidelity as in Eq.~(\ref{eq:fidelity in convexity}),
  letting $E'=EFF'$ be the environment of the isometry $U_0$. 
  According to Definition \ref{def:J_epsilon Z_epsilon}, we obtain
  \begin{align}
    J_\epsilon(\omega) &\geq I(\hat{N} E FF':\hat{C} \hat{Q}|C')_{\tau} \nonumber \\
                     &\geq I(\hat{N} E F:\hat{C} \hat{Q}|C')_{\tau} \label{eq:concavity_J_1} \\
                     &= I(F:\hat{C} \hat{Q}|C')_{\tau}+I(\hat{N} E :\hat{C} \hat{Q}|F C')_{\tau} \label{eq:concavity_J_2} \\
                     &\geq I(\hat{N} E :\hat{C} \hat{Q}|F C')_{\tau} \label{eq:concavity_J_3} \\
                     &=    \lambda I(\hat{N} E :\hat{C} \hat{Q}| C')_{\tau_1}+(1-\lambda) I(\hat{N} E :\hat{C} \hat{Q}| C')_{\tau_2} \label{eq:concavity_J_4}\\
                     &=    \lambda J_{\epsilon_1}(\omega)+(1-\lambda)J_{\epsilon_2}(\omega)\label{eq:concavity_J_5},
  \end{align}
  where Eq.~(\ref{eq:concavity_J_1}) follows from data processing; 
  Eq.~(\ref{eq:concavity_J_2}) is due to the chain rule for mutual information;
  Eq.~(\ref{eq:concavity_J_3}) follows from strong sub-additivity of the 
  entropy, $I(F:\hat{C} \hat{Q}|C')_{\tau} \geq 0$;
  Eq.~(\ref{eq:concavity_J_4}) is obtained by expanding the conditional mutual 
  information  on the classical system $F$; 
  finally, Eq.~(\ref{eq:concavity_J_5}) follows from the definitions of the isometries $U_1$ and $U_2$.

\item The functions are non-decreasing and concave for $\epsilon \geq 0 $, so they are continuous 
  for $\epsilon > 0$. 
  The concavity implies furthermore that $J_{\epsilon}$ and $Z_{\epsilon}$ are lower semi-continuous at 
  $\epsilon=0$. On the other hand, since the fidelity, the conditional entropy and the conditional 
  mutual information are all continuous functions of CPTP maps, and the domain of both optimizations 
  is a compact set, we conclude that $J_\epsilon(\omega)$ and $Z_{\epsilon}$ are also upper 
  semi-continuous at $\epsilon=0$, so they are continuous at $\epsilon=0$ 
  \cite[Thms.~10.1 and 10.2]{Rockafeller}. 

\item We first prove 
  $Z_{\epsilon}(\omega_1 \otimes \omega_2) \leq  Z_{\epsilon}(\omega_1) +Z_{\epsilon}(\omega_2)$.
  In the definition of $Z_{\epsilon}(\omega_1 \otimes \omega_2)$, let the isometry 
  $U_0:C_1 N_1 Q_1 C_2 N_2 Q_2 \hookrightarrow \hat{C}_1 \hat{N}_1 \hat{Q}_1 \hat{C}_2 \hat{N}_2 \hat{Q}_2 E$
  be the one attaining the maximum, which acts on the following purified source states with purifying 
  systems $R'_1$ and $R'_2$: 
  \begin{align}
    &\ket{\tau}^{\hat{C}_1 \hat{N}_1 \hat{Q}_1 \hat{C}_2 \hat{N}_2 \hat{Q}_2 E R_1 C'_1 R'_1 R_2 C'_2 R'_2}\\
    \quad \quad \quad \quad &=(U_0 \otimes \1_{R_1 C'_1 R'_1 R_2 C'_2 R'_2})\ket{\omega_1}^{C_1 N_1 Q_1 R_1 C'_1 R'_1}
                                                   \otimes    \ket{\omega_2}^{C_2 N_2 Q_2 R_2 C'_2 R'_2}. \label{eq:U0-action}
  \end{align}
  By definition, the fidelity is bounded by
  \begin{align*}
    F(\omega_1^{C_1 N_1 Q_1 R_1} \otimes \omega_2^{C_2 N_2 Q_2 R_2},
      \tau^{\hat{C}_1 \hat{N}_1 \hat{Q}_1 \hat{C}_2 \hat{N}_2 \hat{Q}_2 R_1 R_2}) \geq 1- \epsilon.   
  \end{align*}
  Now, we can define an isometry 
  $U_1:C_1 N_1 Q_1 \hookrightarrow \hat{C}_1 \hat{N}_1 \hat{Q}_1 E_1$ 
  acting only on systems $C_1 N_1 Q_1$, by letting
  $U_1 = (U_0 \otimes \1_{R_2 C_2' R_2'})(\1_{C_1 N_1 Q_1} \otimes \ket{\omega_2}^{C_2 N_2 Q_2 R_2 C_2' R_2})$
  and with the environment $E_1 := \hat{C}_2 \hat{N}_2 \hat{Q}_2 E R_2  C'_2 R'_2$.
  It has the property that 
  $\ket{\tau}^{\hat{C}_1 \hat{N}_1 \hat{Q}_1 R_1 C_1' R_1' E} 
   = (U_1 \otimes \1_{R_1 C_1' R_1'})\ket{\omega_1}^{C_1 N_1 Q_1 R_1 C_1' R_1'}$ 
  has the same reduced state on $\hat{C}_1 \hat{N}_1 \hat{Q}_1 R_1$ as $\tau$ from
  Eq. (\ref{eq:U0-action}).
  This isometry preserves the fidelity for $\omega_1$, which follows from monotonicity 
  of the fidelity under partial trace:
  \begin{align*}
     F(\omega_1^{C_1 N_1 Q_1 R_1},\tau_1^{\hat{C}_1 \hat{N}_1 \hat{Q}_1 R_1}) 
       &= F(\omega_1^{C_1 N_1 Q_1 R_1},\tau^{\hat{C}_1 \hat{N}_1 \hat{Q}_1 R_1}) \\
       &\geq F(\omega_1^{C_1 N_1 Q_1 R_1} \otimes \omega_2^{C_2 N_2 Q_2 R_2},
               \tau^{\hat{C}_1 \hat{N}_1 \hat{Q}_1 \hat{C}_2 \hat{N}_2 \hat{Q}_2 R_1 R_2}) \\
       &\geq 1- \epsilon.   
  \end{align*}
  By the same argument, there is the following isometry 
  \begin{align*}
      U_2:C_2 N_2 Q_2\hookrightarrow \hat{C}_1 \hat{N}_1 \hat{Q}_1 \hat{C}_2 \hat{N}_2 \hat{Q}_2 E R_1 C'_1 R'_1,
  \end{align*}
  with output system $\hat{C}_2 \hat{N}_2 \hat{Q}_2$ and 
  environment $E_2:=\hat{C}_1 \hat{N}_1 \hat{Q}_1 E R_1 C'_1 R'_1$, such that
  \begin{align*}
    F(\omega_2^{C_2 N_2 Q_2 R_2},\tau_2^{\hat{C}_2 \hat{N}_2 \hat{Q}_2 R_2}) 
      &=    F(\omega_2^{C_2 N_2 Q_2 R_2},\tau^{\hat{C}_2 \hat{N}_2 \hat{Q}_2 R_2}) \\
      &\geq F(\omega_1^{C_1 N_1 Q_1 R_1} \otimes \omega_2^{C_2 N_2 Q_2 R_2},
              \tau^{\hat{C}_1 \hat{N}_1 \hat{Q}_1 \hat{C}_2 \hat{N}_2 \hat{Q}_2 R_1 R_2}) \\
      &\geq 1- \epsilon.   
  \end{align*}
  Therefore, we obtain:
  \begin{align}
     Z_{\epsilon}(\omega_1) &+Z_{\epsilon}(\omega_2)-Z_{\epsilon}(\omega_1 \otimes \omega_2) \nonumber\\
     &\geq
     S(\hat{N}_1 E_1|C'_1)_{\tau}+S(\hat{N}_2 E_2|C'_2)_{\tau}-S(\hat{N}_1
     \hat{N}_2 E |C'_1 C'_2)_{\tau} \label{eq:Z_additivity_1}\\
     &=S(\hat{N}_1 E_1 C'_1)_{\tau}+S(\hat{N}_2 E_2C'_2)_{\tau}-S(\hat{N}_1
     \hat{N}_2 E C'_1 C'_2)_{\tau}\nonumber \\
      &\quad \quad\quad\quad\quad\quad \quad\quad\quad -S(C'_1)-S(C'_2)+S(C'_1 C'_2) \label{eq:Z_additivity_2}\\
     &=S(\hat{N}_1 E_1 C'_1)_{\tau}+S(\hat{N}_2 E_2C'_2)_{\tau}-S(\hat{N}_1
     \hat{N}_2 E C'_1 C'_2)_{\tau} \label{eq:Z_additivity_3}\\
     &=S(\hat{C}_1\hat{Q}_1 R_1 R'_1)+S(\hat{C}_2\hat{Q}_2 R_2 R'_2)-S(\hat{C}_1\hat{Q}_1 \hat{C}_2\hat{Q}_2 R_1 R'_1 R_2 R'_2) \label{eq:Z_additivity_4}\\
     &=I(\hat{C}_1\hat{Q}_1 R_1 R'_1:\hat{C}_2\hat{Q}_2 R_2 R'_2) \nonumber\\
     &\geq 0 \label{eq:Z_additivity_5},
  \end{align}
  where Eq.~(\ref{eq:Z_additivity_1}) is due to Definition~\ref{def:J_epsilon Z_epsilon};
  Eq.~(\ref{eq:Z_additivity_2}) is due to the chain rule;
  Eq.~(\ref{eq:Z_additivity_3}) because the systems $C'_1$ and $C'_2$ are independent from each other;
  Eq.~(\ref{eq:Z_additivity_4}) follows because the overall state on systems 
  $\hat{C}_1 \hat{N}_1 \hat{Q}_1 \hat{C}_2 \hat{N}_2 \hat{Q}_2 E R_1 C'_1 R'_1 R_2 C'_2 R'_2$ 
  is pure;
  Eq.~(\ref{eq:Z_additivity_5}) is due to sub-additivity of the entropy. 

  To prove prove 
  $J_{\epsilon}(\omega_1 \otimes \omega_2) \leq J_{\epsilon}(\omega_1) +J_{\epsilon}(\omega_2)$,
  let the isometry 
  $U_0:C_1 N_1 Q_1 C_2 N_2 Q_2 \hookrightarrow \hat{C}_1 \hat{N}_1 \hat{Q}_1 \hat{C}_2 \hat{N}_2 \hat{Q}_2 E$ 
  be the one attaining the maximum in definition of $J_{\epsilon}(\omega_1 \otimes \omega_2)$,
  which acts on the following purified source states with purifying 
  systems $R'_1$ and $R'_2$, as in Eq. (\ref{eq:U0-action}).
  By definition, the fidelity is bounded as
  \begin{align*}
    F(\omega_1^{C_1 N_1 Q_1 R_1} \otimes \omega_2^{C_2 N_2 Q_2 R_2},
      \tau^{\hat{C}_1 \hat{N}_1 \hat{Q}_1 \hat{C}_2 \hat{N}_2 \hat{Q}_2 R_1 R_2}) 
               \geq 1- \epsilon.   
  \end{align*}
  Now define 
  $U_1:C_1 N_1 Q_1\hookrightarrow \hat{C}_1 \hat{N}_1 \hat{Q}_1 \hat{C}_2 \hat{N}_2 \hat{Q}_2 E R_2  C'_2 R'_2$ 
  and $U_2:C_2 N_2 Q_2\hookrightarrow \hat{C}_1 \hat{N}_1 \hat{Q}_1 \hat{C}_2 \hat{N}_2 \hat{Q}_2 E R_1 C'_1 R'_1$ 
  as in the above discussion, with the environments 
  $E_1:=\hat{C}_2 \hat{N}_2 \hat{Q}_2 E R_2 C'_2 R'_2$ and 
  $E_2:=\hat{C}_1 \hat{N}_1 \hat{Q}_1 E R_1 C'_1 R'_1$, respectively. 
  Recall that the fidelity for the states $\omega_1$ and $\omega_2$ is at least 
  $1-\epsilon$, because of the monotonicity of the fidelity under partial trace. 
  Thus we obtain
  \begin{align}
  J_{\epsilon}(\omega_1) 
     &+J_{\epsilon}(\omega_2)-J_{\epsilon}(\omega_1 \otimes \omega_2) \nonumber\\
     &\geq I(\hat{N}_1 E_1:\hat{C}_1\hat{Q}_1|C'_1)_\tau+
       I(\hat{N}_2 E_2:\hat{C}_2\hat{Q}_2|C'_2)_\tau \nonumber \\
        &\quad-I(\hat{N}_1\hat{N}_2 E:\hat{C}_1\hat{Q}_1\hat{C}_2\hat{Q}_2|C'_1 C'_2)_\tau
       \label{eq:J_additivity_1}\\
     &=S(\hat{N}_1 E_1 C'_1)+S(\hat{C}_1\hat{Q}_1C'_1)-S( \hat{C}_1 \hat{N}_1 \hat{Q}_1 E_1 C'_1)-S(C'_1) \nonumber \\
     &\quad +S(\hat{N}_2 E_2 C'_2)+S(\hat{C}_2\hat{Q}_2C'_2)-S(     \hat{C}_2 \hat{N}_2 \hat{Q}_2 E_2 C'_2)-S(C'_2) \nonumber\\
     &\quad \!-\!S(\hat{N}_1\hat{N}_2 E C'_1 C'_2) \!- \!S(\hat{C}_1\hat{Q}_1\hat{C}_2\hat{Q}_2 C'_1 C'_2)\! \nonumber \\
     &\quad +\!S( \hat{C}_1\hat{N}_1\hat{Q}_1\hat{C}_2 \hat{N}_2\hat{Q}_2E C'_1 C'_2)\!+\! S(C'_1 C'_2) \label{eq:J_additivity_2} \\
     &=S(\hat{C}_1 \hat{Q}_1 R_1 R'_1)+S(\hat{C}_1\hat{Q}_1C'_1)-S(R_1 R'_1)-S(C'_1) \nonumber \\
     &\quad +S(\hat{C}_2 \hat{Q}_2 R_2 R'_2)+S(\hat{C}_2\hat{Q}_2C'_2)-S(     R_2 R'_2)-S(C'_2) \nonumber\\
     &\quad \!-\!S(\hat{C}_1 \hat{Q}_1\hat{C}_2 \hat{Q}_2 R_1 R'_1 R_2 R'_2) \!- \!S(\hat{C}_1\hat{Q}_1\hat{C}_2\hat{Q}_2 C'_1 C'_2)\! \nonumber \\
     &\quad +\!S(R_1 R'_1 R_2 R'_2)\!+\! S(C'_1 C'_2) \label{eq:J_additivity_3}\\
     &=I(\hat{C}_1 \hat{Q}_1 R_1 R'_1:\hat{C}_2 \hat{Q}_2 R_2 R'_2)
     -I(R_1 R'_1:R_2 R'_2)\nonumber \\
     &\quad +I(\hat{C}_1\hat{Q}_1C'_1:\hat{C}_2\hat{Q}_2C'_2)
     -I(C'_1:C'_2) \nonumber \\
     &\geq I( R_1 R'_1:R_2 R'_2)
     -I(R_1 R'_1:R_2 R'_2)
     +I(C'_1:C'_2)
     -I(C'_1:C'_2) \label{eq:J_additivity_4}\\
     &=0, \nonumber
  \end{align}
  where Eq.~(\ref{eq:J_additivity_1}) is due to Definition~\ref{def:J_epsilon Z_epsilon}; 
  In Eq.~(\ref{eq:J_additivity_2}) we expand the mutual informations in terms of entropies;
  Eq.~(\ref{eq:J_additivity_3}) follows because the overall state on systems
  $\hat{C}_1 \hat{N}_1 \hat{Q}_1 \hat{C}_2 \hat{N}_2 \hat{Q}_2 E R_1 C'_1 R'_1 R_2 C'_2 R'_2$
  is pure; 
  Eq.~(\ref{eq:J_additivity_4}) is due to data processing. 

\item According to Theorem~\ref{thm: KI decomposition} \cite{KI2002,Hayden2004}, 
  any isometry $U:C N Q \rightarrow \hat{C} \hat{N} \hat{Q} E$ acting on the state 
  $\omega^{C N Q R C'}$ which preserves the reduced state on systems $C N Q R C'$ 
  ($C'$ here is considered as a part of the reference system), acts as the following:
  \begin{align*}
    (U \otimes \1_{RC'}) \omega^{C N Q R C'}(U^{\dagger} \otimes \1_{RC'})
      =\sum_{j} p_j \proj{j}^{C}\otimes U_j \omega_j^{N} U_j^{\dagger} \otimes \rho_{j}^{Q R} \otimes \proj{j}^{C'},
  \end{align*}
  where the isometry $U_j: N \rightarrow \hat{N} E$ satisfies 
  $\Tr_E [U_j \omega_j^{N} U_j^{\dagger}]=\omega_j$.
  Therefore,  in Definition~\ref{def:J_epsilon Z_epsilon} for $\epsilon=0$, the final state is
  \begin{align*}
    \tau^{\hat{C} \hat{N} \hat{Q}  E R C'}
      = \sum_{j} p_j \proj{j}^{C}\otimes U_j \omega_j^{N} U_j^{\dagger} \otimes \rho_{j}^{Q R} \otimes \proj{j}^{C'}.
  \end{align*}
  Thus we can directly evaluate
\begin{align*}
      Z_0(\omega)=S(\hat{N} E|C')_\tau=S(N |C)_\omega \text{ and } 
      J_0(\omega)=I(\hat{N} E:\hat{C}\hat{Q}|C')_\tau=0,
\end{align*}
concluding the proof.
\hfill\qedsymbol
\end{enumerate}

\section{Discussion}
\label{sec:Discussion}
We have introduced a common framework for all single-source quantum compression 
problems, i.e. settings without side information at the encoder or the decoder, 
by defining the compression task as the reproduction of a given bipartite state 
between the system to be compressed and a reference. That state, which defines 
the task, can be completely general, and special instances recover Schumacher's
quantum source compression (in both variants of a pure state ensemble and of 
a pure entangled state) \cite{Schumacher1995} 
and compression of a mixed state ensemble source in the blind variant 
\cite{Horodecki1998,KI2001}.

Our general result gives the optimal quantum compression rate in terms of
qubits per source state, both in the settings without and with entanglement, and 
indeed the entire qubit-ebit rate region, reproducing the aforementioned 
special cases, along with other previously considered problems \cite{ZK_Eassisted_ISIT_2019}. 
Despite the technical difficulties in obtaining it, the end result has a 
simple and intuitive interpretation. Namely, the given source $\rho^{AR}$ 
is equivalent to a source in standard Koashi-Imoto form,
\[
  \omega^{CQR} = \sum_j p_j \proj{j}^C \otimes \rho_j^{QR},
\]
so that $j$ has to be compressed as classical information, at rate $S(C)$,
and $Q$ as quantum information, at rate $S(Q|C)$; in the presence of 
entanglement, the former rate is halved while the latter is maintained. 
Indeed, what our Theorem \ref{theorem:complete rate region mixed state}
shows is that the original source has the same qubit-ebit 
rate region as the clean classical-quantum mixed source
\[
  \Omega^{CQRR'C'} = \sum_j p_j \proj{j}^C \otimes \proj{\psi_j}^{QRR'} \otimes \proj{j}^{C'},
\]
where $\ket{\psi_j}^{QRR'}$ purifies $\rho_j^{QR}$, and $RR'C'$ is considered
the reference. In $\Omega$, $C$ is indeed a manifestly classical source, 
since it is duplicated in the reference system, and conditional on $C$,
$Q$ is a  genuinely quantum source since it is purely entangled with the
reference system. As $\Tr_{R'C'} \Omega^{CQRR'C'} = \omega^{CQR}$, any 
code and any achievable rates for $\Omega$ are good for $\omega$, and 
that is how the achievability of the rate region in Theorem \ref{theorem:complete rate region mixed state} 
can be described. The opposite, that a code good for $\omega$ should be 
good for $\Omega$, is far from obvious. Indeed, if that were true, it would 
not only yield a quick and simple proof of our converse bounds, but would 
imply that the rate region of Theorem \ref{theorem:complete rate region mixed state} satisfies a 
strong converse! However, as we do not know this reduction to the source $\Omega$,
our converse proceeds via a more complicated, indirect route, and yields only
a weak converse. Whether the strong converse holds, and what the detailed
relation between the sources $\omega^{CQR}$ and $\Omega^{CQRR'C'}$ is, 
remain open questions. 

\medskip

\chapter{Unification of the blind and visible  Schumacher compression}
\label{chap: E assisted Schumacher}

In this chapter, we ask how the quantum compression of ensembles of pure states is
affected by the availability of entanglement, and in settings where
the encoder has access to side information.
We find the optimal asymptotic quantum rate and the optimal tradeoff 
(rate region) of quantum and entanglement rates. 
It turns out that the amount by which the quantum rate beats
the Schumacher limit, the entropy of the source, is precisely half
the entropy of classical information that can be extracted from the
source and side information states without disturbing them at all
(``reversible extraction of classical information'').

In the special case that the encoder has no side information, or that 
she has access to the identity of the states, this problem reduces to the known
settings of \textit{blind} and \textit{visible} Schumacher compression, respectively,
albeit here additionally with entanglement assistance. 
We comment on connections to previously studied and further rate
tradeoffs when also classical information is considered.
This chapter is based on the papers in \cite{Schumacher_Assisted_arXiv_Z_2019,ZK_Eassisted_ISIT_2019}.

\section{The source model} 
The task of data compression of a quantum source, introduced by Schumacher
\cite{Schumacher1995}, marks one of the foundations of quantum
information theory: not only did it provide an information theoretic 
interpretation of the von Neumann entropy $S(\rho) = -\Tr\rho\log\rho$
as the minimum compression rate, it also motivated the very concept of the qubit!
In the Schumacher modelling, a source is given by an ensemble 
$\cE = \{ p(x), \proj{\psi_x} \}$ of pure states $\psi_x=\proj{\psi_x}\in\cS(A)$,
$\ket{\psi_x}\in A$, with a Hilbert space $A$ of
finite dimension $|A|<\infty$; $\cS(A)$ denotes the set of states (density operators).
Furthermore, $x\in\cX$ ranges over a 
discrete alphabet, so that we can can describe the source equivalently by
the classical-quantum (cq) state $\omega = \sum_x p(x) \proj{x}^X \otimes \proj{\psi_x}^A$.

While the achievability of the rate $S(A)_\omega = S(\omega^A)$ was
shown in \cite{Schumacher1995,Jozsa1994_1} (see also \cite[Thm.~1.18]{OhyaPetz:entropy}),
the full (weak) converse was established in \cite{Barnum1996}, a simplified
proof being given by M. Horodecki \cite{Horodecki1998}; the strong 
converse was proved in \cite{Winter1999}. 

\medskip
In this chapter, we consider a more comprehensive model, where on the one hand
the sender/encoder of the compressed data (Alice) has access to side
information, namely a pure state $\sigma_x^C$ in addition to the source state
$\psi_x^A$, and on the other hand, she and the receiver/decoder of the compressed
data (Bob) share pure state entanglement in the form of EPR pairs at a
certain rate.

Thus, the source is now an ensemble $\cE = \{ p(x), \proj{\psi_x}^A\otimes\proj{\sigma_x}^C \}$
of product states, which can be described equivalently by the cqq-state
\begin{align}
  \label{eq:source state omega}
  \omega^{XAC}=\sum_{x\in \mathcal{X}} p(x) \proj{x}^X\otimes \proj{\psi_x}^A\otimes \proj{\sigma_x}^C. 
\end{align}
Yet another equivalent description is via the
random variable $X \in \cX$, distributed according to $p$, i.e. $\Pr\{X=x\}=p_x$;
this also makes the pure states $\psi_X$ and $\sigma_X$ random variables.

We will consider the information theoretic limit of
many copies of $\omega$, i.e.~$\omega^{X^n A^n C^n} = \left(\omega^{XAC}\right)^{\otimes n}$:
\[
  \omega^{X^n A^n C^n}
    \!\!\! = \!\!\!\!
             \sum_{x^n \in \mathcal{X}^n}\!\!\!\! p(x^n) \proj{x^n}^{X^n} 
                       \!\otimes\! \proj{\psi_{x^n}}^{A^n}
                       \!\otimes\! \proj{\sigma_{x^n}}^{C^n}\!\!\!\!\!,
\]
using the notation
\begin{align*}
  x^n              &= x_1 x_2 \ldots x_n,\quad\;
  p(x^n) = p(x_1) p(x_2)  \cdots p(x_n), \\
  \ket{x^n}        &= \ket{x_1} \ket{x_2} \cdots \ket{x_n},\ 
  \ket{\psi_{x^n}} = \ket{\psi_{x_1}} \ket{\psi_{x_2}} \cdots \ket{\psi_{x_n}}.
\end{align*}

\section{Compression assisted by entanglement}
\label{sec:Compression assisted by entanglement}
We assume that the encoder, Alice, and the decoder, Bob, have initially a maximally 
entangled state $\Phi_K^{A_0B_0}$ on registers $A_0$ and $B_0$ (both of dimension $K$).
With probability $p(x^n)$, the source provides Alice  
with the state $\psi_{x^n}^{A^n}\otimes\sigma_{x^n}^{C^n}$.
Then, Alice performs her encoding operation 
$\mathcal{C}:A^nC^nA_0 \longrightarrow \hat{C}^nC_A$ on the systems $A^n$, 
$C^n$ and her part $A_0$ of the entanglement, which is a quantum channel,
i.e.~a completely positive and trace preserving (CPTP) map. 
(Note that our notation is a slight abuse, which we maintain as it is simpler 
while it cannot lead to confusions, since channels really are maps between
the trace class operators on the involved Hilbert spaces.)
The dimension of the compressed system obviously has to be smaller than the 
original source, i.e. $|C_A| \leq  \abs{A}^n$. 
We call $Q=\frac1n \log|C_A|$ and $E=\frac{1}{n}\log K$ the quantum and entanglement 
rates of the compression protocol, respectively.
The system $C_A$ is then sent to Bob via a noiseless quantum channel, who performs
a decoding operation $\mathcal{D}:C_A B_0 \longrightarrow \hat{A}^n$ on the system 
$C_A$ and his part of entanglement $B_0$. 

According to Stinespring's theorem \cite{Stinespring1955}, all these CPTP maps can be 
dilated to isometries 
$V_A : A^nC^nA_0 \hookrightarrow \hat{C}^n C_A W_A$ and  
$V_B : C_A B_0 \hookrightarrow {\hat{A}^n W_B}$, 
where the new systems $W_A$ and $W_B$ are the environment systems
of Alice and Bob, respectively. 

We say the encoding-decoding scheme has fidelity $1-\epsilon$, or error $\epsilon$, if 
\begin{align}
  \label{eq:Schumcaher assisted fidelity}
  \overline{F} &:= F\left( \omega^{X^n\hat{A}^n\hat{C}^n},\xi^{X^n\hat{A}^n\hat{C}^n} \right) \nonumber\\
               &=\sum_{x^n \in \mathcal{X}^n}\!\!\! p(x^n)F\!\left(\proj{\psi_{x^n}}^{A^n}\!
                          \otimes\!\proj{\sigma_{x^n}}^{C^n}\!,\xi_{x^n}^{\hat{A}^n\hat{C}^n}\right) \\
         &\geq 1-\epsilon, \nonumber
\end{align}
where $\xi^{X^n\hat{A}^n\hat{C}^n}=\sum_{x^n}p(x^n)\proj{x}^{X^n} \otimes \xi_{x^n}^{\hat{A}^n\hat{C}^n}$ 
and 
$\xi_{x^n}^{\hat{A}^n\hat{C}^n}=(\mathcal{D}\circ\mathcal{C})\!\proj{\psi_{x^n}\!}^{A^n} \!\otimes\! \proj{\sigma_{x^n}\!}^{C^n} \!\otimes\!\Phi_K^{A_0\!B_0}\!\!$.
We say that $(E,Q)$ is an (asymptotically) achievable rate pair if for all $n$
there exist codes such that the fidelity converges to $1$, and
the entanglement and quantum rates converge to $E$ and $Q$, respectively.
The rate region is the set of all achievable rate pairs, as a subset of 
$\mathbb{R}\times\mathbb{R}_{\geq 0}$. 

Note that this means that we demand not only that Bob can reconstruct the
source states $\psi_{x^n}$ with high fidelity on average, but that Alice
retains the side information states $\sigma_{x^n}$ as well with high fidelity.

There are two extreme cases of the side information that have been considered
in the literature: 
If $C$ is a trivial system, or more generally if the states
$\sigma_x^C$ are all identical, then the aforementioned task is the 
entanglement-assisted version of \textit{blind} Schumacher compression. If $C=X$, or more
precisely $\ket{\sigma_x}=\ket{x}$, then Alice has access to classical random variable 
$X$, and the task reduces to \textit{visible} Schumacher compression with entanglement assistance.  
The blind-visible terminology is originally from \cite{Barnum1996,Horodecki2000}.

\begin{remark}
\label{remark:E=0}
In the case of no entanglement being available, i.e. $E=0$ ($K=1$), the 
problem is fully understood: The asymptotic rate $Q=S(A)$ 
from \cite{Schumacher1995,Jozsa1994_1} is achievable without touching
the side information, and it is optimal, even in the visible case
(which includes all other side informations), by the weak and strong
converses of \cite{Barnum1996,Horodecki1998} and \cite{Winter1999}. 
\qed
\end{remark}


\section{Optimal quantum rate}
To formulate the minimum compression rate under unlimited entanglement
assistance, we need the following concept.

\begin{definition}
  \label{def:reducibility}
  An ensemble of pure states 
  $\cE=\{p(x),\proj{\psi_x}^A \otimes \proj{\sigma_x}^C \}_{x\in \mathcal{X}}$ 
  is called \emph{reducible} if its states belong to two or more orthogonal subspaces.
  Otherwise the ensemble $\cE$ is called \emph{irreducible}.
  We apply the same terminology to the source cqq-state $\omega^{X A C}$.
\end{definition}

Notice that a reducible ensemble can be written uniquely as a disjoint union 
of irreducible ensembles $\mathcal{E} = \bigcupdot_{y \in \mathcal{Y}} q(y) \mathcal{E}_y$,
with a partition $\mathcal{X} = \bigcupdot_{y\in \mathcal{Y}} \mathcal{X}_y$ and 
irreducible ensembles 
\begin{align*}
\mathcal{E}_y = \{ p(x|y), \proj{\psi_x}^A \otimes \proj{\sigma_x}^C \}_{x \in \mathcal{X}_y},
\end{align*}
where $q(y)p(x|y)=p(x)$ for $x \in \mathcal{X}_y$ and 
$q(y) =\sum_{x \in \mathcal{X}_y} p(x)$.
We define the subspace spanned by the vectors of each irreducible ensemble as 
$F_y := \text{span} \{\ket{\psi_x}\otimes\ket{\sigma_x} : x \in \mathcal{X}_y\}$.
The irreducible ensembles $\mathcal{E}_y$ are pairwise orthogonal, i.e.~$F_{y'} \perp F_y$ 
for all $y' \neq y$.
We may thus introduce the random variable $Y=Y(X)$ taking values in the set 
$\mathcal{Y}$ with probability distribution $q(y)$; namely, $Y$ is a deterministic
function of $X$ such that $\Pr\{X\in\mathcal{X}_Y\}=1$. 

We define the \textit{modified} source as
\begin{align*}
\omega^{XACY}=\sum_x p(x) \proj{x}^X\otimes \proj{\psi_x}^A \otimes \proj{\sigma_x}^C \otimes \proj{y(x)}^Y,
\end{align*} 
with side information systems $CY$.
Because there is an isometry $V:AC \rightarrow ACY$ which acts as
\begin{equation}
  \label{eq:iso}
  V\ket{\psi_x}^A \otimes \ket{\sigma_x}^C=\ket{\psi_x}^A \otimes \ket{\sigma_x}^C \otimes \ket{y(x)}^Y,
\end{equation}
the extended source $\omega^{XACY}$ is equivalent to the original
source and side information $\omega^{XAC}$ modulo a local operation of Alice.

We first present the optimal asymptotic compression rate in the following 
theorem and prove the achievability of it, but we leave 
the converse proof to the end of this section, as it requires introducing 
further machinery.

\begin{theorem}
  \label{theorem: main}
  For the given source $\omega^{XACY}$, the optimal asymptotic compression rate 
  assisted by unlimited entanglement is
  $Q=\frac12 (S(A)+S(A|CY))$.
  
  Furthermore, there is a protocol achieving this communication 
  rate with entanglement consumption at rate $E=\frac12 (S(A)-S(A|CY))$.
\end{theorem}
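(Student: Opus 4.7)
My plan is to prove achievability and converse separately, with the converse adapting the machinery developed in Chapter~\ref{chap:mixed state}.

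For achievability of the corner point $(Q,E) = (\tfrac{1}{2}(S(A) + S(A|CY)),\tfrac{1}{2}(S(A) - S(A|CY)))$, Alice first applies the isometry $V$ of Eq.~(\ref{eq:iso}) to coherently attach the classical label $Y$ to her register at no cost. This reduces the task to compressing each irreducible sub-ensemble $\cE_y$ in turn; within a block the label $Y$ is trivial, so the target rate becomes $(\tfrac{1}{2}(S(A) + S(A|C)), \tfrac{1}{2}I(A:C))$ computed on the block's reduced state. For an irreducible pure-state ensemble with quantum side information at the encoder, I would achieve this per-block rate via quantum state redistribution with $A$ as the redistributed system, $C$ as Alice's kept register, and the classical reference $X$ treated as the purification: QSR delivers exactly $Q_y = \tfrac{1}{2}I(A:X|Y{=}y)$ on the pure vector $\ket{\omega}^{XAC}$ restricted to block $y$, and the identity $I(A:X|Y{=}y) = S(A|Y{=}y) + S(A|CY{=}y)$ then does the arithmetic. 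Alice coherently distributes $Y$ to Bob using the orthogonality $F_y\perp F_{y'}$ in $A\otimes C$ to absorb the label-transmission cost into the overall superdense coding, and averaging over $y$ yields the claimed rate pair. The remaining points on the line $Q+E = S(A)$ follow by time-sharing with the unassisted Schumacher point $(S(A),0)$ of Remark~\ref{remark:E=0}, and all other achievable points by resource wasting.

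For the converse, I would dilate $\cC$ and $\cD$ to Stinespring isometries with environments $W_A,W_B$ and re-express the fidelity criterion (\ref{eq:Schumcaher assisted fidelity}) as approximate invariance of $\omega^{XAC}$ under $\cD\circ\cC$. Introduce $\epsilon$-smoothed functionals $J_\epsilon(\omega)$ and $Z_\epsilon(\omega)$ analogous to Definition~\ref{def:J_epsilon Z_epsilon} for the cq source, inheriting subadditivity, concavity, and continuity via Lemma~\ref{lemma:J_epsilon Z_epsilon properties}. Their values at $\epsilon=0$ are pinned down by Theorem~\ref{thm: KI decomposition}: the Koashi-Imoto decomposition of the pure-ensemble cq source has no redundant $N$ part, a classical label $Y$ in $C_{\KI}$, and a quantum $Q_{\KI}$ part with conditional entropy $S(A|CY)$, giving $J_0(\omega) = 0$ and $Z_0(\omega) = S(A|CY)$. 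The chain-rule computation of Section~\ref{sec:converse}, bounding $nQ$ through the encoder and $nQ+S(B_0)$ through the decoder, summing the two, single-letterising via subadditivity, and applying Fannes--Audenaert, then yields $Q+E \geq S(A) - o(1)$ and $2Q \geq S(A) + S(A|CY) - o(1)$; in the $n\to\infty$, $\epsilon\to 0$ limit these match the achievability.

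The main obstacle I anticipate lies in the converse, specifically in establishing the Koashi-Imoto-type rigidity for the pure-state cq source: any CPTP map approximately preserving $\omega^{XAC}$ must, up to small perturbation, act as an isometry on each irreducible block $F_y\subseteq A\otimes C$ and preserve the label $Y$ deterministically. The orthogonality $F_y\perp F_{y'}$ in $A\otimes C$ --- but not in $A$ alone, except in the blind case --- is what forces the residual conditional entropy to be exactly $S(A|CY)$, rather than $S(A|C)$ or $S(A|Y)$. Propagating this precise structural information through the $\epsilon$-smoothed entropic estimates, and verifying that the coefficient $\tfrac12$ emerges from the symmetric encoder/decoder bounds, is the main technical challenge.
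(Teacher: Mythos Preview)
Both halves of your proposal have genuine gaps.

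\textbf{Achievability.} Your per-block QSR scheme is sub-optimal in general. Conditioned on $Y{=}y$, QSR delivers qubit rate $\tfrac12 I(A:XX'|Y{=}y)$; averaging over $y$ gives $\tfrac12 I(A:XX'|Y)=\tfrac12\bigl(S(A|Y)+S(A|CY)\bigr)$. But Bob must know $y$ before he can decode, and since the orthogonality $F_y\perp F_{y'}$ lives in $A\otimes C$ and not in $A$ alone, Bob cannot recover $Y$ from what he is sent. Dense-coding $Y$ costs an additional $\tfrac12 H(Y)$ qubits and $\tfrac12 H(Y)$ ebits, bringing the total qubit rate to $\tfrac12\bigl(S(A|Y)+S(A|CY)+H(Y)\bigr)$, which exceeds the target $\tfrac12\bigl(S(A)+S(A|CY)\bigr)$ by $\tfrac12 H(Y|A)$---strictly positive whenever $Y$ is not a deterministic function of $A$, i.e.~the generic situation once $C$ is non-trivial. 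The paper's route is simpler and tight: apply QSR \emph{once} to the global purification $\ket{\omega}^{XX'ACY}$, with $A$ transmitted, $CY$ retained by Alice, $XX'$ the reference, and $B$ empty. The rate $\tfrac12 I(A:XX')=\tfrac12\bigl(S(A)+S(A|CY)\bigr)$ and entanglement $\tfrac12 I(A:CY)$ fall out directly, with no separate accounting for $Y$.

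\textbf{Converse.} The functionals of Definition~\ref{def:J_epsilon Z_epsilon} are tailored to compressing Alice's \emph{entire} register against a reference; there is no place in that setup for side information that must remain with the encoder. Applying Theorem~\ref{thm: KI decomposition} to $(AC,X)$ gives $C_{\KI}=Y$, trivial redundant part $N$ (the ensemble states are pure), and $Q_{\KI}$ carrying the irreducible blocks of $AC$; hence $Z_0=S(N|C_{\KI})=0$, not $S(A|CY)$, and the bounds you would recover are those of Theorem~\ref{theorem:complete rate region mixed state} for compressing $AC$ jointly---they say nothing about the split between what is sent and what is kept. The paper instead introduces a task-specific functional $I_\epsilon(\omega)=\max_V I(X:\hat C W)$ over fidelity-preserving isometries $V:AC\to\hat A\hat C W$ (Definition~\ref{def:I_epsilon}), proves its subadditivity and continuity at $\epsilon=0$ (Lemma~\ref{lemma:I_epsilon properties}), and establishes $I_0\leq S(CY)$ by invoking the result of Barnum \emph{et al.}~\cite{Barnum2001_2} that from an irreducible pure-state ensemble no information about the label can leak reversibly into the encoder's environment. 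Two entropy chains, one through the decoding isometry and one through the encoding isometry, then combine to give $2Q\geq S(A)+S(ACY)-I_0(\omega)\geq S(A)+S(A|CY)$.
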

\begin{proof}
We first show that this rate is achievable.
Consider the following purification of $\omega^{XACY}$,
\begin{align*}
  \ket{\omega}^{XX'ACY}=\sum_x \sqrt{p(x)}\ket{x}^X\ket{x}^{X'}\ket{\psi_x}^A\ket{\sigma_x}^C\ket{y(x)}^Y,
\end{align*}
with side information systems $CY$. This is obtained from 
\begin{align*}
\ket{\omega}^{XX'AC}=\sum_x \sqrt{p(x)}\ket{x}^X\ket{x}^{X'}\ket{\psi_x}^A\ket{\sigma_x}^C,
\end{align*}
by Alice applying the isometry $V$ from Eq.~(\ref{eq:iso}).

We apply quantum state redistribution (QSR) \cite{Devetak2008_2,Oppenheim2008} 
as a subprotocol, where the objective is for Alice to send to Bob $A^n$,
using $C^nY^n$ as side information, while $(XX')^n$ serves as reference system;
the figure of merit is the fidelity with the original pure state $(\omega^{XX'ACY})^{\otimes n}$.
Denoting the overall encoding-decoding CPTP map 
$\Lambda:A^nC^nY^n \rightarrow \hat{A}^n\hat{C}^n\hat{Y}^n$, 
QSR gives us the first inequality of the following chain:
\begin{align*} 
  1-o(1) &\leq F\!\left( \omega^{X^nX'^nA^nC^nY^n}\!\!,
                         (\id_{X^nX'^n} \otimes \Lambda) \omega^{X^nX'^nA^nC^nY^n}\! \right) \\
         &\leq F\!\left( \omega^{X^nA^nC^nY^n}\!\!,
                         (\id_{X^n} \otimes \Lambda) \omega^{X^nA^nC^nY^n}\! \right),
\end{align*}
where the second inequality follows from monotonicity of the fidelity under partial trace.
Thus, the protocol satisfies our fidelity criterion (\ref{eq:Schumcaher assisted fidelity}).

The communication rate we obtain from QSR is
$Q = \frac{1}{2}I(A:XX') = \frac{1}{2} (S(A)+S(A|CY))$. 
Furthermore, QSR guarantees entanglement consumption at the rate 
$E = \frac12 I(A:CY) = \frac12 (S(A)-S(A|CY))$.
\end{proof}

To prove optimality (the converse), we first need a few preparations.
The following definition is inspired by the ``reversible extraction of classical
information'' in \cite{Barnum2001_2}.

\begin{definition}
  \label{def:I_epsilon}
  For a source $\omega^{XAC}$ and $\epsilon \geq 0$, define
  \begin{align*}
    I_\epsilon(\omega) \!\!:= \!\!
                  \max_{V:AC\rightarrow \hat{A}\hat{C} W \text{ isometry}} \!\! \!\!I(X\!:\!\hat{C}W)_\xi 
          \text{ s.t. } F(\!\omega^{\!X\!A\!C\!}\!,\xi^{\!X\hat{A}\!\hat{C}\!})\! \geq \!1\!\!-\!\epsilon,    
\end{align*}
where  
\[
  \xi^{X\!\hat{A}\hat{C}W} \!\!\!
      =\!(\!\1_X \otimes V\!) \omega^{XAC}\!(\!\1_X \otimes V^{\dagger}\!) \!
      =\!\sum_x p(x) \proj{x}^X \!\otimes \proj{\xi_x}^{\!\hat{A}\hat{C}W} \!\!\!.
\]
\end{definition}
In this definition, the dimension of the environment is w.l.o.g. bounded as $|W| \leq |A|^2|C|^2$; 
hence, the optimisation is of a continuous function over a compact domain, so we have a 
maximum rather than a supremum.

\begin{lemma}
  \label{lemma:I_epsilon properties}
  The function $I_\epsilon(\omega)$ has the following properties:
  \begin{enumerate}
    \item It is a non-decreasing function of $\epsilon$. 
    \item It is concave in $\epsilon$.
    \item It is continuous for $\epsilon \geq 0$. 
    \item For any two states $\omega_1^{X_1 A_1 C_1}$ and $\omega_2^{X_2 A_2 C_2}$ and for $\epsilon \geq 0$,
          \(
            I_{\epsilon}(\omega_1 \otimes \omega_2) \leq  I_{\epsilon}(\omega_1) +I_{\epsilon}(\omega_2).
          \)
    \item For any state $\omega^{XAC}$, $I_0(\omega) \leq S(CY)$.
  \end{enumerate}
\end{lemma}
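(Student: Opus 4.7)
\medskip\noindent
\textbf{Proof plan.}
Properties (1) and (2) are handled by direct arguments. Monotonicity is immediate, since enlarging $\epsilon$ enlarges the set of feasible isometries, so the maximum can only grow. For concavity at $\epsilon=\lambda\epsilon_1+(1-\lambda)\epsilon_2$, I would take isometries $V_1,V_2$ attaining $I_{\epsilon_1}(\omega),I_{\epsilon_2}(\omega)$ and construct
$V_0:=\sqrt{\lambda}\,V_1\otimes\ket{1}^F+\sqrt{1-\lambda}\,V_2\otimes\ket{2}^F$
with a fresh classical flag $F$, yielding on the source registers the mixture $\xi_0^{X\hat{A}\hat{C}}=\lambda\xi_1+(1-\lambda)\xi_2$. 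Joint concavity of the fidelity then gives fidelity at least $1-(\lambda\epsilon_1+(1-\lambda)\epsilon_2)$, and since $F$ is classical and independent of $X$, the quantity $I(X:\hat{C}WF)_{\xi_0}$ splits cleanly as $\lambda I_{\epsilon_1}(\omega)+(1-\lambda)I_{\epsilon_2}(\omega)$, bounding $I_\epsilon(\omega)$ from below after absorbing $F$ into $W$.

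Item (3) follows from (1) and (2) on $\epsilon>0$ (a concave monotone function on an interval is continuous on its interior), so only continuity at $\epsilon=0$ remains. The domain of the optimization is compact (isometries with $|W|\le|A|^2|C|^2$), the fidelity constraint is closed, and $I(X:\hat{C}W)$ is continuous in $V$; taking a sequence $\epsilon_n\downarrow 0$ and passing to a subsequence of optimizers converging to some $V^\ast$ satisfying the $\epsilon=0$ constraint yields upper semi-continuity, while monotonicity supplies the matching lower bound.

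For item (4), I plan to adapt the template used in Chapter~\ref{chap:mixed state} for $Z_\epsilon$ and $J_\epsilon$. Starting from an optimal isometry $V$ for $\omega_1\otimes\omega_2$ applied to the natural purification with purifying references $X_1R_1,X_2R_2$, I would build
$V_1:=(V\otimes\1_{X_2R_2})(\1_{A_1C_1}\otimes\ket{\omega_2}^{X_2R_2A_2C_2})$
as a feasible isometry for $\omega_1$, whose environment contains all of the second-copy registers together with the true environment of $V$, and symmetrically $V_2$. Monotonicity of the fidelity under partial trace certifies $V_i$ at tolerance $\epsilon$. Writing out $I(X_1:\hat{C}_1W_1)$, $I(X_2:\hat{C}_2W_2)$ and $I(X_1X_2:\hat{C}_1\hat{C}_2W)$ on the common purified state and expanding via the chain rule should, paralleling the arguments for $Z_\epsilon,J_\epsilon$, reduce the claimed inequality to the non-negativity of the mutual information between the independent reference systems $X_1R_1$ and $X_2R_2$.

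Finally, for (5), which I expect to be the main conceptual obstacle, I would exploit the rigidity of fidelity-$1$ isometries. If $V:AC\hookrightarrow\hat{A}\hat{C}W$ attains fidelity $1$, then for each $x$ one must have $V\ket{\psi_x\sigma_x}=\ket{\psi_x}^{\hat{A}}\ket{\sigma_x}^{\hat{C}}\ket{\phi_x}^W$ for some unit vector $\ket{\phi_x}$. The isometric identity $\braket{V\psi_x\sigma_x}{V\psi_{x'}\sigma_{x'}}=\braket{\psi_x\sigma_x}{\psi_{x'}\sigma_{x'}}$ forces $\braket{\phi_x}{\phi_{x'}}=1$ whenever $\braket{\psi_x\sigma_x}{\psi_{x'}\sigma_{x'}}\neq 0$; since irreducibility of each block $\mathcal{E}_y$ means the overlap graph on $\cX_y$ is connected (otherwise $F_y$ would split as an orthogonal direct sum), this chains to $\ket{\phi_x}=\ket{\phi_{y(x)}}$ for all $x$ in block $y$. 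The reduced state on $\hat{C}W$ is then $\sum_{y}q(y)\,\sigma_y^{\mathrm{avg}}\otimes\proj{\phi_y}$ with $\sigma_y^{\mathrm{avg}}:=q(y)^{-1}\sum_{x\in\cX_y}p(x)\,\sigma_x$, and as the conditional state on $\hat{C}W$ given $X=x$ is pure (because $\sigma_x$ and $\phi_{y(x)}$ are pure) we have $I(X:\hat{C}W)=S(\xi^{\hat{C}W})$. Applying the standard mixing bound $S\bigl(\sum_y q(y)\rho_y\bigr)\leq H(Y)+\sum_y q(y)S(\rho_y)$ with $\rho_y=\sigma_y^{\mathrm{avg}}\otimes\proj{\phi_y}$ yields $I(X:\hat{C}W)\leq H(Y)+S(C|Y)=S(CY)$, which is the desired bound.
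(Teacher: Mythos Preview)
Your plan is largely sound, but item~(2) has a real gap. With a \emph{single} flag $F$ in
$V_0=\sqrt{\lambda}\,V_1\otimes\ket{1}^F+\sqrt{1-\lambda}\,V_2\otimes\ket{2}^F$,
the register $F$ is \emph{not} classical in the output state on $X\hat{C}WF$: tracing out $\hat{A}$ from the pure $V_0\ket{\psi_x\sigma_x}$ leaves off-diagonal terms in $F$, and even the marginal on $F$ alone depends on $x$ through $\braket{\zeta_x}{\xi_x}$. Hence $I(X:\hat{C}WF)$ does not split as the convex combination you claim. The paper's fix is to use \emph{two} flag qubits, $\sqrt{\lambda}\,V_1\otimes\ket{00}^{RR'}+\sqrt{1-\lambda}\,V_2\otimes\ket{11}^{RR'}$; then tracing out $R'$ decoheres $R$, making it classical on $X\hat{C}WR$, and the chain rule $I(X:\hat{C}WRR')\geq I(X:\hat{C}W|R)=\lambda I_{\epsilon_1}(\omega)+(1-\lambda)I_{\epsilon_2}(\omega)$ goes through.

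For item~(4), your construction of $V_1,V_2$ matches the paper's, but the paper finishes by a direct chain-rule/data-processing argument rather than by reducing to $I(X_1R_1:X_2R_2)\geq 0$ in the style of the $J_\epsilon,Z_\epsilon$ proofs: one writes $I(X_1X_2:\hat{C}_1\hat{C}_2W)=I(X_1:\hat{C}_1\hat{C}_2W)+I(X_2:\hat{C}_1\hat{C}_2WX_1)$ (using independence of $X_1,X_2$) and then bounds each term by $I(X_i:\hat{C}_iW_i)$ via data processing. This is shorter and avoids purity bookkeeping.

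Your approach to item~(5) is genuinely different from the paper's and in fact more self-contained. The paper bounds $I(X:\hat{C}W)\leq S(Y)+I(X:W|Y)+S(C|Y)$ by chain-rule manipulations and then invokes the result of Barnum \emph{et al.}\ \cite{Barnum2001_2} as a black box to conclude $I(X:W|Y)=0$ for irreducible blocks. You instead reprove that key fact directly: fidelity~$1$ forces $V\ket{\psi_x\sigma_x}=\ket{\psi_x}\ket{\sigma_x}\ket{\phi_x}$, the isometry relation forces $\ket{\phi_x}=\ket{\phi_{x'}}$ whenever the inputs overlap, and connectedness of the overlap graph within each irreducible block (which you correctly deduce: a disconnected graph would split $F_y$ into orthogonal pieces) propagates this to $\ket{\phi_x}=\ket{\phi_{y(x)}}$. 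Your final entropy bound then gives $S(CY)$ directly. This buys you a proof that does not depend on an external reference.
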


\begin{proof}
1. The definition of $I_\epsilon(\omega)$ directly implies that it 
is a non-decreasing function of $\epsilon$.

2. To prove the concavity, let $V_1:AC \rightarrow \hat{A}\hat{C}W$ and 
$V_2:AC \rightarrow \hat{A}\hat{C}W$ be the isometries attaining the 
maximum for $\epsilon_1$ and $\epsilon_2$, respectively, which act as 
follows:
\[
    V_1 \ket{\psi_x}^A\ket{\sigma_x}^C=\ket{\xi_x}^{\hat{A}\hat{C}W} \text{ and } 
    V_2 \ket{\psi_x}^A\ket{\sigma_x}^C=\ket{\zeta_x}^{\hat{A}\hat{C}W}.   
\]
For $0\leq \lambda \leq 1$,
define the isometry $U:AC \rightarrow \hat{A}\hat{C}WRR'$ by letting, for all $x$,
\[
  U\!\ket{\psi_x}^A\!\ket{\sigma_x}^C 
    \!:=\!\sqrt{\!\lambda}\ket{\xi_x}^{\hat{A}\hat{C}W}\!\!\ket{00}^{RR'}\!\!\!+\!\!\sqrt{1\!-\!\lambda}\ket{\zeta_x}^{\hat{A}\hat{C}W}\!\!\ket{11}^{RR'},
\]
where systems $R$ and $R'$ are qubits. 
Then, the reduced state on the systems $X\hat{A}\hat{C}$ is 
$\tau^{X\hat{A}\hat{C}}=\sum_x p(x) \proj{x}^X\otimes \tau_x^{\hat{A}\hat{C}}$, 
where $\tau_x^{\hat{A}\hat{C}}=\lambda \xi_x^{\hat{A}\hat{C}}+(1-\lambda)\zeta_x^{\hat{A}\hat{C}}$; 
therefore, the fidelity is bounded as follows:
\begin{align*}
  F(\omega^{XA\hat{C}}\!,\tau^{X\hat{A}\hat{C}}) 
    &= \sum_x p(x) \sqrt{\bra{\psi_x}
                          \left(\lambda \xi_x^{\hat{A}\hat{C}}+(1-\lambda)\zeta_x^{\hat{A}\hat{C}}\right)
                         \ket{\psi_x}} \\
    &\!\!\!\!\!\!\!\!\!\!\!\!\!\!\!\!\!\!\!\!\!\!\!\!\!\!\!\!\!\!\!\!\!\!\!\!\!\!\!\!\!
     \geq \lambda \sum_x p(x) \sqrt{\!\bra{\psi_x} \xi_x^{\hat{A}\hat{C}} \ket{\psi_x}\!}
          + (1-\lambda)\sum_x p(x) \sqrt{\!\bra{\psi_x}\zeta_x^{\hat{A}\hat{C}} \ket{\psi_x}\!} \\
    &\!\!\!\!\!\!\!\!\!\!\!\!\!\!\!\!\!\!\!\!\!\!\!\!\!\!\!\!\!\!\!\!\!\!\!\!\!\!\!\!\!
     \geq 1-\left( \lambda\epsilon_1 +(1-\lambda)\epsilon_2 \right),
\end{align*}
where the second line follows from the concavity of the function $\sqrt{x}$, 
and the last line follows by the definition of the isometries $V_1$ and $V_2$.
Now, define $W':=WRR'$ and let $\epsilon=\lambda\epsilon_1 +(1-\lambda)\epsilon_2$. 
According to Definition \ref{def:I_epsilon}, we obtain
\begin{align*}
  I_\epsilon(\omega) &\geq I(X:\hat{C}W')_{\tau}\\ 
                     &=    I(X:R)_{\tau}+I(X:\hat{C}W|R)_{\tau}+I(X:R'|\hat{C}WR)_{\tau}\\
                     &\geq I(X:\hat{C}W|R)_{\tau}
                      =    \lambda I_{\epsilon_1}(\omega)+(1-\lambda)I_{\epsilon_2}(\omega),
\end{align*}
where the third line is due to strong subadditivity of the quantum mutual information.

3. The function is non-decreasing and concave for $\epsilon \geq 0 $, so it is continuous 
for $\epsilon > 0 $. 
The concavity implies furthermore that $I_{\epsilon}$ is lower semi-continuous at 
$\epsilon=0$. On the other hand, since the fidelity and mutual information are 
both continuous functions of CPTP maps, and the domain of the optimization is a 
compact set, we conclude that $I_\epsilon(\omega)$ is also upper semi-continuous at 
$\epsilon=0$, so it is continuous at $\epsilon=0$ \cite[Thms.~10.1,~10.2]{Rockafeller}. 

4. In the definition of $I_{\epsilon}(\omega_1 \otimes \omega_2)$, let the isometry 
$V_0:A_1 C_1 A_2C_2 \rightarrow \hat{A}_1\hat{C}_1\hat{A}_2\hat{C}_2 W$ be the 
one attaining the maximum which acts on the purified source state with purifying 
systems $X_1'$ and $X_2'$ as follows: 
\begin{align*}
  \ket{\xi}&^{X_1\!X_1'\!X_2\!X_2'\!\hat{A}_1\!\hat{C}_1\!\hat{A}_2\!\hat{C}_2\!W} \\
           &\phantom{====}
            =(\1_{X_1\!X_1'\!X_2\!X_2'}\otimes V_0)\ket{\omega_1}^{X_1\!X'_1\!A_1\!C_1}
                                                   \ket{\omega_1}^{X_2\!X'_2\!A_2\!C_2}\!\!.
\end{align*}
Now, define the isometry $V_1:A_1 C_1 \rightarrow \hat{A}_1\hat{C}_1\hat{A}_2 \hat{C}_2W X_2X'_2$ 
acting only on the systems $A_1 C_1$ with the output state $\hat{A}_1\hat{C}_1$ and the 
environment $W_1:=\hat{A}_2\hat{C}_2 W X_2X'_2$ as follows:
\[
  \ket{\xi}^{X_1\!X_1'\!X_2\!X_2'\!\hat{A}_1\!\hat{C}_1\!\hat{A}_2\!\hat{C}_2\!W}
                           = (\1_{X_1X_1'}\otimes V_1)\ket{\omega_1}^{X_1X_1'A_1C_1}. 
\]
Hence, we obtain
\begin{align*}
  F(\omega_1^{X_1\!A_1\!C_1}\!,\xi^{X_1\!\hat{A}_1\!\hat{C}_1})
                             &\geq F\!\left(\omega_1^{X_1\!A_1\!C_1}\!\otimes\!\omega_2^{X_2\!A_2\!C_2}\!,
                                       \xi^{X_1\!X_2\!\hat{A}_1\!\hat{C}_1\!\hat{A}_2\!\hat{C}_2}\!\right) \\
                             &\geq 1 - \epsilon, 
\end{align*}
where the first inequality is due to monotonicity of the fidelity under 
CPTP maps, and the second inequality follows by the definition of $V_0$. 
Consider the isometry 
$V_2:A_2 C_2 \rightarrow \hat{A}_1\hat{C}_1\hat{A}_2 \hat{C}_2W X_1X'_1$ 
defined in a similar way, with the output state $\hat{A}_2\hat{C}_2$ and 
the environment $W_2:=\hat{A}_1\hat{C}_1 W X_1X'_1$. 
Therefore, we obtain
\begin{align*}
    I_{\epsilon}(\omega_1) +I_{\epsilon}(\omega_2) &\geq I(X_1:\hat{C}_1W_1)+I(X_2:\hat{C}_2W_2)\\
    &\geq I(X_1:\hat{C}_1\hat{C}_2W)+I(X_2:\hat{C}_1\hat{C}_2WX_1)\\
    &= I(X_1X_2:\hat{C}_1\hat{C}_2W)
     = I_{\epsilon}(\omega_1 \otimes \omega_2),
\end{align*}
where the second line is due to data processing. 

5. In the definition of $I_0(\omega)$ let $V_0:AC \rightarrow \hat{A}\hat{C}W$ be the isometry attaining the maximum with $F(\omega^{XAC},\xi^{X\hat{A}\hat{C}})=1$. Hence, we obtain
\begin{align*}
  I_0(\omega)&= I(X:\hat{C}W)
       =    I(XY:\hat{C}W) \\
      &=    I(Y:\hat{C}W)+I(X:\hat{C}W|Y) \\
      &\leq S(Y)+I(X:\hat{C}W|Y) \\
      &=    S(Y)+I(X:W|Y)+I(X:\hat{C}|WY) \\
      &\leq S(Y)+I(X:W|Y)+S(C|WY)\\
      &\leq S(Y)+I(X:W|Y)+S(C|Y),
\end{align*}
where the first line follows because $Y$ is a function of $X$. The second and fourth
line are due to the chain rule. The third line follows because for the classical 
system $Y$ the conditional entropy $S(Y|\hat{C}W)$ is non-negative. The penultimate 
line follows because for any $x$ the state on the system $\hat{C}$ is pure. The 
last line is due to strong sub-additivity of the entropy. Furthermore, for every 
$y$, the ensemble $\cE_y$ is irreducible; hence, the conditional mutual information 
$I(X:W|Y)=0$ which follows from the detailed discussion on page 2028 of \cite{Barnum2001_2}. 
%
%
%
\end{proof}

\begin{proof-of}[{of the converse part of Theorem \ref{theorem: main}}]
We start by observing 
\[
  nQ+S(B_0) \geq S(C_A)+S(B_0)
            \geq S(C_A B_0)    
            =    S(\hat{A}^n W_B),
\]
where the second inequality is due to subadditivity of the entropy, 
and the equality follows because the decoding isometry $V_B$ does not change 
the entropy. Hence, we get 
\begin{align}
  \label{eq:converse Schumacher assisted 1}
  nQ+S(B_0) &\geq S(\hat{A}^n)+S(W_B|\hat{A}^n)            \nonumber\\
    &\geq S(\hat{A}^n)+S(W_B|\hat{A}^n X^n)                \nonumber\\
    &\geq S(A^n)+S(W_B|\hat{A}^n X^n)-n \delta(n,\epsilon) \nonumber\\
    &=    S(A^n) \!+\! S(\hat{A}^nW_B| X^n\!) \!-\! S(\hat{A}^n| X^n)\!-\!n \delta(n,\epsilon)\nonumber\\
    &=    S(A^n) \!+\! S(\hat{C}^nW_A| X^n) \!-\! S(\hat{A}^n| X^n) \!-\! n \delta(n,\epsilon)\nonumber\\
    &\geq S(A^n)+S(\hat{C}^nW_A| X^n)- 3n \delta(n,\epsilon),
\end{align}
where in the first and second line we use the chain rule and subadditivity of entropy.
The inequality in the third line follows from the decodability of the system $A^n$:
the fidelity criterion (\ref{eq:Schumcaher assisted fidelity}) implies that the 
output state on systems $\hat{A}^n$ is $2\sqrt{2\epsilon}$-close to the 
original state $A^n$ in trace norm; then apply the Fannes-Audenaert inequality 
\cite{Fannes1973,Audenaert2007} where 
$\delta(n,\epsilon)=\sqrt{2\epsilon} \log|A| + \frac1n h(\sqrt{2\epsilon})$.   
The equalities in the fourth and the fifth line are due to the chain rule 
and the fact that  for any $x^n$ the overall state of $\hat{A}^n\hat{C}^nW_AW_B$ is pure.
In the last line, we use the decodability of the systems $X^nA^n$, that is the 
output state on systems $X^n\hat{A}^n$ is $2\sqrt{2\epsilon}$-close to the 
original states $X^nA^n$ in trace norm, then we apply the Alicki-Fannes 
inequality \cite{Alicki2004,Winter2016}. 

Moreover, we bound $Q$ as follows:
\begin{align}\label{eq:converse Schumacher assisted 2}
  nQ &\geq S(C_A)                     
      \geq S(C_A|\hat{C}^nW_A)           \nonumber \\
     &=    S(A^nC^nA_0) -S(\hat{C}^nW_A) \nonumber \\
     &=    S(A^nC^nY^n)+S(A_0) -S(\hat{C}^nW_A),
\end{align}
where the first equality follows because the encoding isometry 
$V_A:A^nC^nA_0 \rightarrow C_A\hat{C}^nW_A$ does not the change the entropy. 
Adding Eqs. (\ref{eq:converse Schumacher assisted 1}) and 
(\ref{eq:converse Schumacher assisted 2}), we thus obtain
\begin{align}\label{eq:converse Schumacher}
  Q &\geq \frac{1}{2}(S(A)+S(ACY))-\frac{1}{2n}I(\hat{C}^nW_A:X^n)-\frac{3}{2}\delta(n,\epsilon)\nonumber \\
  &\geq \frac{1}{2}(S(A)+S(ACY))-\frac{1}{2n}I(\hat{C}^nW_AW_B:X^n)-\frac{3}{2}\delta(n,\epsilon)\nonumber \\
  &\geq \frac{1}{2}(S(A)+S(ACY))-\frac{1}{2n} I_\epsilon(\omega^{\otimes n})
                                -\frac{3}{2}\delta(n,\epsilon) \nonumber \\
  &\geq \frac{1}{2}(S(A)+S(ACY))-\frac{1}{2} I_\epsilon(\omega)-\frac{3}{2}\delta(n,\epsilon) \nonumber 
\end{align}
where the second line is due to data processing. The third line follows from 
Definition \ref{def:I_epsilon}. The last line follows from point 4 of Lemma 
\ref{lemma:I_epsilon properties}. In the limit of $\epsilon \to 0$ and 
$n \to \infty $, the rate is bounded by
\begin{align*}
  Q &\geq \frac{1}{2}(S(A)+S(ACY))-\frac{1}{2} I_0(\omega)\\
    &\geq \frac{1}{2}(S(A)+S(ACY))-\frac{1}{2}S(CY)\\
    &=    \frac{1}{2}(S(A)+S(A|CY)),
\end{align*}
where the first line follows from point 3 of Lemma \ref{lemma:I_epsilon properties}
stating that $I_\epsilon(\omega)$ is continuous at $\epsilon=0$. 
The second line is due to point 5 of Lemma \ref{lemma:I_epsilon properties}. 
\end{proof-of}

\section{Complete rate region}
In this section, we find the complete rate region of achievable rate pairs $(E,Q)$.

\begin{theorem}
  \label{theorem:complete rate region}
  For the source $\omega^{XACY}$, all asymptotically achievable entanglement and 
  quantum rate pairs $(E,Q)$ satisfy
  \begin{align*}
    Q   &\geq \frac{1}{2}(S(A)+S(A|CY)),\\
    Q+E &\geq  S(A). 
  \end{align*}
  Conversely, all the rate pairs satisfying the above inequalities are achievable.
\end{theorem}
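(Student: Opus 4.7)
The plan is to establish achievability by identifying two corner points of the rate region and then invoking convexity (time-sharing) together with resource-wasting to fill out the rest. The corner points I would exhibit are $(E,Q) = (0, S(A))$ and $(E,Q) = \bigl(\tfrac12(S(A)-S(A|CY)),\, \tfrac12(S(A)+S(A|CY))\bigr)$. The first corner is achieved by plain Schumacher compression of $A^n$ alone, ignoring the side information $C^nY^n$ (which Alice simply retains untouched, automatically satisfying the joint fidelity criterion of Eq.~\eqref{eq:Schumcaher assisted fidelity}); this uses no prior entanglement. The second corner is precisely the achievability part of Theorem~\ref{theorem: main}, which I would just cite: the QSR-based protocol there simultaneously achieves the stated quantum rate and consumes entanglement at the claimed rate. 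By convex combination (time-sharing blocks of the two protocols in appropriate proportion) every pair on the segment between these two corners is achievable, and points above or to the right are achievable by wasting resources (one can always burn extra qubits as ebits via teleportation of half an EPR pair, or simply consume unneeded entanglement).

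For the converse, the bound $Q \geq \tfrac12(S(A)+S(A|CY))$ is already established in the converse part of Theorem~\ref{theorem: main}, since that proof never assumed $E$ was unlimited --- it only used the existence of some entanglement-assisted protocol of the form described in Section~\ref{sec:Compression assisted by entanglement}. So the only new inequality to justify is $Q+E \geq S(A)$. The clean way to get this is by reduction to the unassisted setting of Remark~\ref{remark:E=0}: given any code with quantum rate $Q$ and entanglement rate $E$, Alice can prepare the maximally entangled state $\Phi_K^{A_0 B_0}$ locally in her lab and transmit the $B_0$ register to Bob through the quantum channel before running the protocol. This produces an \emph{unassisted} compression code for the same source $\omega^{XAC}$ with the same fidelity, using a quantum channel of dimension $K\cdot|C_A|$, i.e.\ total qubit rate $Q+E$. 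By the (weak) converse for unassisted Schumacher compression recalled in Remark~\ref{remark:E=0}, any such rate must satisfy $Q+E \geq S(A)$, as required.

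The one technical point to check carefully is that the reduction in the converse preserves fidelity exactly, which it does because shipping the $B_0$ half of a locally-prepared maximally entangled state through a noiseless channel is operationally indistinguishable from sharing prior entanglement. Beyond that, the argument is routine: the rate region defined by the two linear inequalities is the convex upper-right closure of the two corner points identified above, so achievability plus the two converse bounds together pin down the region exactly. I do not anticipate a significant obstacle; the main work was already done in Theorem~\ref{theorem: main}, and the $Q+E$ bound is essentially a free consequence of the original Schumacher converse.
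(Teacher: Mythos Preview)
Your proposal is correct and follows essentially the same approach as the paper: the converse for $Q$ is cited from Theorem~\ref{theorem: main}, the $Q+E\geq S(A)$ bound is obtained by exactly the same reduction (distribute the entanglement through the channel and invoke Remark~\ref{remark:E=0}), and achievability rests on the same QSR corner point. The only cosmetic difference is that you fill in the line $Q+E=S(A)$ by time-sharing between the Schumacher point $(0,S(A))$ and the QSR point, whereas the paper starts from the QSR point and trades qubits for ebits (``one ebit can be distributed by sending a qubit''); both arguments are valid and equivalent.
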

\begin{proof}
The first inequality comes from Theorem \ref{theorem: main}. 
For the second inequality, consider any code with quantum communication
rate $R$ and entanglement rate $E$. By using an additional communication
rate $E$, Alice and Bob can distribute the entanglement first, and then
apply the given code, converting it into one without preshared 
entanglement and communication rate $Q+E$, having exactly the same
fidelity. By Remark \ref{remark:E=0}, $Q+E \geq S(A)$.

As for the achievability, the corner point 
$(\frac{1}{2} I(A:CY),\frac{1}{2}(S(A)+S(A|CY)))$ is achievable, 
because QSR which is used as the achievability protocol in 
Theorem \ref{theorem: main} uses $\frac{1}{2} I(A:CY)$ ebits of 
entanglement between Alice and Bob. 
Furthermore, all the points on the line $Q+E = S(A)$ for 
$Q \geq \frac{1}{2}(S(A)+S(A|CY))$ are achievable because one 
ebit can be distributed by sending a qubit. 
All other rate pairs are achievable by resource wasting. The rate region is depicted in
Fig.~\ref{fig:E-Q}
\end{proof}

\begin{figure}[!t]   
\centering
  \includegraphics[width=0.7\textwidth]{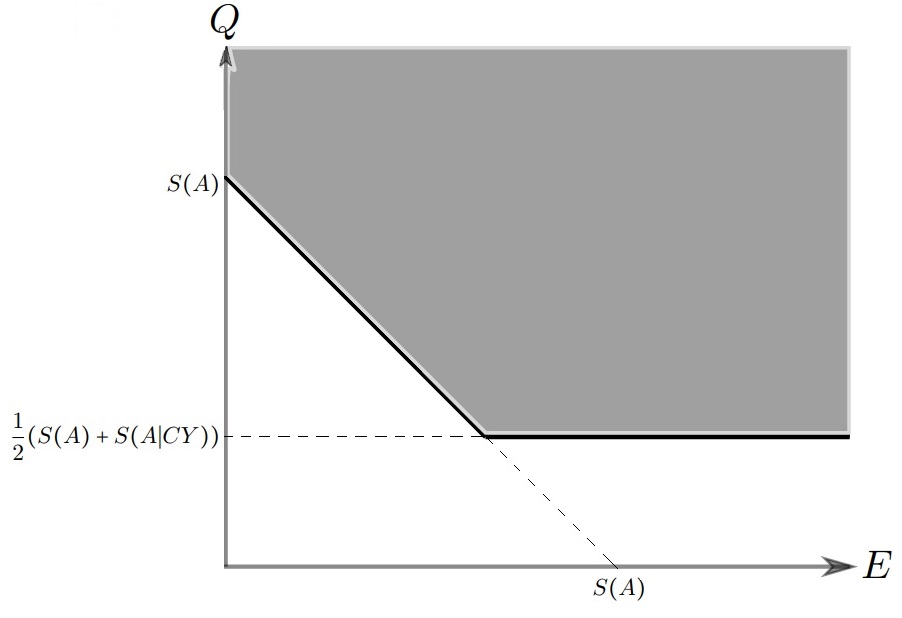}
  \caption{The optimal rate region of quantum and entanglement rates.}
  \label{fig:E-Q}
\end{figure}

\section{Discussion}
First of all, let us look what our result tell us in the cases of
blind and visible compression. 

\begin{corollary}
  \label{corollary:blind}
  In blind compression (i.e. if $C$ is trivial, or more generally the
  states $\sigma_x$ are all identical), the compression of the source 
  $\omega^{XACY}$ reduces to the entanglement-assisted Schumacher 
  compression for which Theorem \ref{theorem: main} gives the optimal asymptotic quantum rate
  \[
    Q = \frac{1}{2}(S(A)+S(A|Y))=S(A)-\frac{1}{2}S(Y).
  \]
  This implies that if the source is irreducible, then this rate is equal to the
  Schumacher limit $S(A)$. In other words, the entanglement does not help the 
  compression. Moreover, due to Theorem \ref{theorem:complete rate region}, a 
  rate $\frac{1}{2}S(Y)$ of entanglement is consumed in the compression,
  and $E+Q\geq S(A)$ in general.
\qed
\end{corollary}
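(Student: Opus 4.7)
The plan is to specialize Theorem \ref{theorem: main} (and Theorem \ref{theorem:complete rate region}) to the blind setting and then exploit the orthogonality structure of the irreducible decomposition to simplify the resulting entropic expressions. First, I would argue that with identical side-information states $\sigma_x \equiv \sigma$, the register $C$ is in a fixed product state uncorrelated with $X$ (and $A$), so conditioning on $CY$ is the same as conditioning on $Y$; hence $S(A|CY) = S(A|Y)$, and Theorem \ref{theorem: main} immediately gives $Q = \tfrac{1}{2}(S(A)+S(A|Y))$.

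The key computational step is then to show $S(A|Y) = S(A) - S(Y)$, i.e.\ that $S(AY) = S(A)$. For this, I would use the defining property of the partition $\mathcal{X} = \bigcupdot_y \mathcal{X}_y$ into irreducible sub-ensembles: the subspaces $F_y = \mathrm{span}\{\ket{\psi_x} : x \in \mathcal{X}_y\}$ are pairwise orthogonal. Writing $\omega^{AY} = \sum_y q(y)\, \omega_y^A \otimes \proj{y}^Y$ with $\omega_y = \sum_{x \in \mathcal{X}_y} p(x|y)\proj{\psi_x}$ supported in $F_y$, the $\omega_y$ have pairwise orthogonal supports, so
\begin{equation*}
  S(A) = S\!\left(\textstyle\sum_y q(y)\omega_y\right) = H(q) + \sum_y q(y)S(\omega_y) = S(Y) + S(A|Y),
\end{equation*}
which rearranges to the claim. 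Plugging this back, $Q = \tfrac{1}{2}(S(A)+S(A)-S(Y)) = S(A)-\tfrac{1}{2}S(Y)$.

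The irreducibility corollary is then immediate: if $\mathcal{E}$ is irreducible, the partition has a single block, $Y$ is deterministic, $S(Y)=0$, and the formula collapses to the Schumacher limit $Q = S(A)$, so entanglement brings no saving. The statement about entanglement consumption follows by the same specialization applied to the second part of Theorem \ref{theorem: main}: the entanglement rate $\tfrac{1}{2}(S(A)-S(A|CY))$ becomes $\tfrac{1}{2}(S(A)-S(A|Y)) = \tfrac{1}{2}S(Y)$. Finally, $E+Q \geq S(A)$ is inherited directly from the sum-rate bound in Theorem \ref{theorem:complete rate region}, which already holds for arbitrary sources and hence a fortiori in the blind case.

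There is no serious obstacle here; the only subtlety worth double-checking is that the irreducible sub-ensembles really do live in orthogonal subspaces of $A$ alone (not merely of $AC$) once $\sigma_x$ is constant, so that the block-diagonal decomposition of $\omega^A$ along $Y$ is genuine. This is ensured by the definition of (ir)reducibility of $\mathcal{E}$ together with $\sigma_x \equiv \sigma$, since then $F_y = \mathrm{span}\{\ket{\psi_x}\}\otimes \ket{\sigma}$, and orthogonality of $F_y$ is equivalent to orthogonality of the $A$-supports of the $\omega_y$.
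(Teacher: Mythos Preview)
Your proposal is correct and follows exactly the approach the paper intends: the corollary carries only a \qed in the paper, signalling that it is the immediate specialization of Theorems \ref{theorem: main} and \ref{theorem:complete rate region} to the case where $C$ is uncorrelated with $X$ and $A$. You have correctly supplied the one nontrivial entropic identity, $S(A|Y)=S(A)-S(Y)$, via the block-orthogonality of the $\omega_y$'s on $A$, and your closing remark that constant $\sigma$ transfers orthogonality of the $F_y$ from $A\otimes C$ to $A$ alone is exactly the point that makes the argument go through.
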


The blind compression of a source $\omega^{XAY}$ is also considered 
in \cite{Barnum2001_2}, but there instead of entanglement, a noiseless classical
channel was assumed in addition to the quantum channel.
It was shown that the optimal quantum rate assisted with free classical communication 
is equal to $S(A)-S(Y)$, while a rate $S(Y)$ of classical communication suffices.
By sending the classical information using dense coding \cite{Bennett1992},
spending $\frac12$ ebit and $\frac12$ qubit per cbit, we can recover the 
quantum and entanglement rates of Corollary \ref{corollary:blind}.
This means that our converse implies the optimality of the quantum rate
from \cite{Barnum2001_2}.

Thus we are motivated to look at a modified compression model where the resources
used are classical communication and entanglement. Namely, we let Alice and Bob 
share entanglement at rate $E$ and use classical communication at rate $C$, but
otherwise the objective is the same as in Section \ref{sec:Compression assisted by entanglement};
define the rate region as the set of all asymptotic achievable classical 
communication and entanglement rate pairs $(C,E)$, such that the decoding fidelity
asymptotically converges to $1$.  

\begin{theorem}
For a source $\omega^{XAY}$, a rate pair $(C,E)$ is achievable if and only if
\begin{align*}
  C \geq 2S(A)-S(Y),\ 
  E \geq  S(A)-S(Y). 
\end{align*}
\end{theorem}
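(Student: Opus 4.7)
The plan is to prove achievability and converse by reduction to results already derived. For achievability, I would start from the blind compression protocol of Barnum et al.~\cite{Barnum2001_2}, which uses $Q_0 = S(A) - S(Y)$ qubits of quantum communication together with $C_0 = S(Y)$ cbits of classical communication and no prior entanglement. Replacing each qubit by its teleportation implementation ($2$ cbits sent and $1$ ebit consumed per qubit) yields a classical$+$entanglement code achieving the compression task with classical rate $C = C_0 + 2Q_0 = 2S(A) - S(Y)$ and entanglement rate $E = Q_0 = S(A) - S(Y)$. Resource wasting then extends achievability to all $(C,E)$ satisfying the stated inequalities.

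For the converse bound $C \geq 2S(A) - S(Y)$, I would simulate a given classical$+$entanglement code of parameters $(C,E)$ by a quantum$+$entanglement code via dense coding. Concretely, augment the preshared entanglement by $nC/2$ additional ebits, and have Alice replace each pair of transmitted classical bits by the dense coding of one qubit, consuming one of the extra ebits per qubit sent. The resulting code is a quantum$+$entanglement code with qubit rate $Q = C/2$ and initial entanglement rate $E' = E + C/2$, implementing the same compression task with the same fidelity. Applying the converse of Theorem~\ref{theorem:complete rate region} in the blind case, namely $Q \geq S(A) - \tfrac{1}{2}S(Y)$ from Corollary~\ref{corollary:blind}, immediately gives $C \geq 2S(A) - S(Y)$.

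For the converse bound $E \geq S(A) - S(Y)$, I would exploit monotonicity in the classical rate: if $(C,E)$ is achievable, then so is $(\infty,E)$, since a larger classical budget can only help. It therefore suffices to prove the bound in the regime of unlimited classical communication. In that regime Alice may first perform the nondestructive projective measurement extracting $Y^n$ (onto the orthogonal subspaces $F_y^{\otimes n}$) and relay $Y^n$ to Bob classically at zero ebit cost. The remaining task is a state-merging problem~\cite{Horodecki2007}: transfer the quantum content of $A^n$ to Bob, who already holds the classical side information $Y^n$. Since the orthogonality of the subspaces $F_y$ implies that $\omega^A$ is block-diagonal with block weights $p(y)$, one has $S(A|Y) = S(A) - S(Y)$, so the state-merging converse lower-bounds the entanglement consumption rate by $S(A) - S(Y)$, establishing $E \geq S(A) - S(Y)$.

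The main obstacle is to reconcile the state-merging fidelity criterion (joint purification with the reference) with our criterion of joint fidelity with the classical variable $X^n$; this should be handled by noting that in the blind case the relevant correlations factor through $Y^n$, which is broadcast classically without entanglement cost, so that the residual purified merging task is standard. A self-contained alternative, in the spirit of the proof of Theorem~\ref{theorem: main}, would be an entropy-based converse directly exploiting the classicality of the message $M$ (so that $S(MW_A) = S(M) + S(W_A | M)$ with $S(M) \leq nC$) combined with the decoupling structure induced by $Y^n$.
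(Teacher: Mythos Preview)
Your achievability argument and your converse for $C$ are correct and essentially identical to the paper's proof: teleportation converts the Barnum \emph{et al.}~protocol into a $(C,E)$ code at the corner point, and dense coding turns any $(C,E)$ code into a qubit-entanglement code to which Corollary~\ref{corollary:blind} applies.

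For the $E$ converse, however, the paper takes a much simpler route than your state-merging reduction. Given any $(C,E)$ code, Alice can first spend $nE$ qubits to distribute the required entanglement, and then run the classical protocol; this yields a code using $Q=E$ qubits and $C$ bits of classical communication, with no prior entanglement. Now invoke directly the converse of \cite{Barnum2001_2}, which states that in blind compression with free classical communication the quantum rate is lower bounded by $S(A)-S(Y)$. Hence $E=Q\geq S(A)-S(Y)$. This reduction stays entirely within the ensemble-fidelity criterion of the present chapter and never needs state merging.

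Your proposed route has a genuine gap at exactly the point you flag: the state-merging converse $E\geq S(A|B)$ is for entanglement fidelity with a purifying reference, whereas here the reference is the classical system $X^n$. The suggestion that ``correlations factor through $Y^n$'' does not by itself bridge this, because after Alice broadcasts $Y^n$ the residual task on each block $\mathcal{E}_y$ is still an ensemble compression problem, not a merging problem with a purifying reference. One could perhaps repair this by passing to the purified source $\ket{\omega}^{XX'AY}$ and arguing that the ensemble criterion implies the entanglement-fidelity criterion up to the degrees of freedom already fixed by $Y$, but this requires additional work. The paper's reduction to \cite{Barnum2001_2} sidesteps all of this.
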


\begin{proof}
We start with the converse. 
The first inequality follows from Theorem \ref{theorem: main}, because with 
unlimited entanglement shared between Alice and Bob, 
$\frac{1}{2}(S(A)+S(A|Y))=S(A)-\frac{1}{2}S(Y)$ qubits of quantum 
communication is equivalent to $2S(A)-S(Y)$ bits of classical communication 
due to teleportation \cite{Bennett1993} and dense coding \cite{Bennett1992}. 
The second inequality follows from \cite{Barnum2001_2}, because with free 
classical communication, the quantum rate is lower bounded by $S(A)-S(Y)$ 
which, due to teleportation \cite{Bennett1993}, is equivalent to sharing 
$S(A)-S(Y)$ ebits when classical communication is for free. 

The achievability of the corner point $(2S(A)-S(Y),S(A)-S(Y))$ 
follows from \cite{Barnum2001_2} because the compression protocol 
uses $S(A)-S(Y)$ qubits and $S(Y)$ bits of classical communication 
which is equivalent to using $S(A)-S(Y)$ ebits of entanglement and 
$2S(A)-2S(Y)+S(Y)$ bits of classical communication, due to dense coding \cite{Bennett1992}.
Other rate pairs are achievable by resource wasting. 
The rate region is depicted in Fig.~\ref{fig:C-E}. 
\end{proof}

\begin{figure}[!t] 
\centering
  \includegraphics[width=0.7\textwidth]{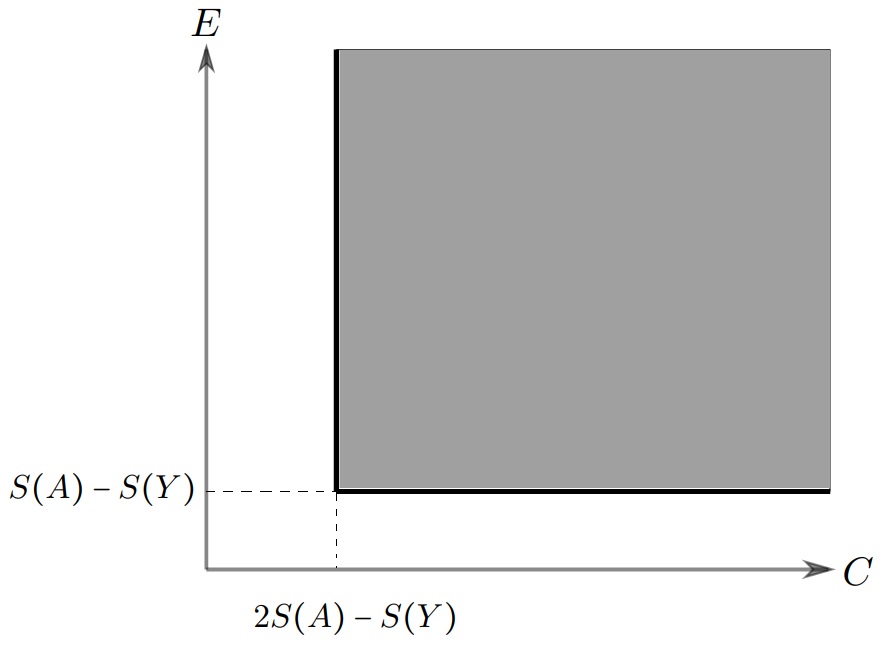} 
  \caption{The optimal rate region of classical and entanglement rates.}
  \label{fig:C-E}
\end{figure}

\begin{corollary}
\label{cor:visible}
In the visible case, our compression problem reduces to the visible version of 
Schumacher compression with entanglement assistance. In this case, 
according to Theorem \ref{theorem: main} the optimal asymptotic quantum 
rate is $Q=\frac{1}{2}S(A)$.
Moreover, a rate $E=\frac{1}{2}S(A)$ of entanglement
is consumed in the compression scheme, and $E+Q\geq S(A)$ in general.
\qed
\end{corollary}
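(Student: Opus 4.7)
The plan is simply to specialize Theorem~\ref{theorem: main} and Theorem~\ref{theorem:complete rate region} to the visible setting and read off the values. First I would observe that in the visible case we have $\ket{\sigma_x}=\ket{x}$, so the composite states $\ket{\psi_x}^A\otimes\ket{x}^C$ are pairwise orthogonal (because the $\ket{x}$ are). Consequently every irreducible sub-ensemble in the decomposition introduced before Definition~\ref{def:reducibility} contains a single element, so the random variable $Y$ is just a relabelling of $X$ and carries no new information beyond what $C$ already encodes.

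Next I would evaluate the relevant conditional entropy $S(A|CY)$ appearing in Theorem~\ref{theorem: main}. Conditioned on the classical value of $C$ (equivalently $X$, and hence $Y$), the $A$-register is in the pure state $\proj{\psi_x}$, so $S(A|CY)=0$. Plugging this into the formulas of Theorem~\ref{theorem: main} gives the quantum rate
\[
Q=\tfrac{1}{2}\bigl(S(A)+S(A|CY)\bigr)=\tfrac{1}{2}S(A),
\]
with a protocol (QSR on the purification) consuming entanglement at rate
\[
E=\tfrac{1}{2}\bigl(S(A)-S(A|CY)\bigr)=\tfrac{1}{2}S(A).
\]

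Finally, the sum-rate bound $E+Q\geq S(A)$ is inherited directly from Theorem~\ref{theorem:complete rate region} without any extra work: any $(E,Q)$-code can be turned into an unassisted code of rate $E+Q$ by first distributing the entanglement through the quantum channel, and the Schumacher converse (Remark~\ref{remark:E=0}) forces $E+Q\geq S(A)$. No step here is really an obstacle; the only thing to be a little careful about is the identification $Y\cong X$ in the reducibility decomposition, which ensures that $S(A|CY)$ genuinely vanishes rather than being computed with respect to some nontrivial coarse-graining.
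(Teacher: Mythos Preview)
Your proposal is correct and follows essentially the same approach as the paper, which treats the corollary as an immediate specialization of Theorem~\ref{theorem: main} and Theorem~\ref{theorem:complete rate region} (the statement carries a \qed with no separate proof). One minor simplification: since in the visible case $C=X$ already and $Y$ is always a function of $X$, you get $S(A|CY)=S(A|X)=0$ directly without needing to analyze the reducibility decomposition or identify $Y$ with $X$.
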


We remark that the visible compression assisted by unlimited entanglement is 
also a special case of remote state preparation considered in \cite{Bennett2005},
from which we know that the rate $Q=\frac12 S(A)$ is achievable and optimal.

The visible analogue of \cite{Barnum2001_2}, of compression using
qubit and cbit resources, was treated in \cite{Hayden2002}, where the
achievable region was determined as the union of all all pairs $(C,Q)$ such
that $Q\geq S(A|Z)$ and $C\geq I(X:Z)$, for any random
variable $Z$ forming a Markov chain $Z$---$X$---$A$. Compare to the
complicated boundary of this region the much simpler one of Corollary \ref{cor:visible},
which consists of two straight lines.

We close by discussing several open questions for future work: 
First, the final discussion of different pairs of resources to compress suggests
that an interesting target would be the characterisation of the full triple resource
tradeoff region for $Q$, $C$ and $E$ together.

Secondly, we recall that our definition of successful decoding included 
preservation of the side information $\sigma_x^C$ with high fidelity. What is the
optimal compression rate $Q$ if the side information does not have to be preserved? 
For an example where this change has a dramatic effect on the optimal communication
rate, consider the ensemble $\cE$ consisting of the three two-qubit states
$\ket{0}^A\ket{0}^C$, $\ket{1}^A\ket{0}^C$ and $\ket{+}^A\ket{+}^C$
(where $\ket{+}=\frac{1}{\sqrt{2}}(\ket{0}+\ket{1})$), 
with probabilities $\frac12-t$, $\frac12-t$ and $2t$, respectively.
Note that $\cE$ is irreducible, hence for $t\approx 0$, we get an optimal
quantum rate of $Q\approx 1$, because $S(A) \approx S(A|C) \approx 1$.
However, by applying a CNOT unitary (with $A$ as control and $C$ as target),
the ensemble is transformed into $\cE'$ consisting of the states
$\ket{0}^A\ket{0}^{C'}$, $\ket{1}^A\ket{1}^{C'}$ and $\ket{+}^A\ket{+}^{C'}$.
The state of $A$ is not changed, only the side information, which is why we 
denote it $C'$. Hence we can apply Theorem \ref{theorem: main} to get a
quantum rate $Q\approx\frac12$, because $S(A) \approx 1$, $S(A|C) \approx 0$.

Thirdly, note that the lower bound $Q+E\geq S(A)$ in Theorem \ref{theorem:complete rate region}
holds with a strong converse (see the proof and \cite{Winter1999}).
But does $Q\geq \frac12 (S(A)+S(A|CY))$ hold as a strong converse rate
with unlimited entanglement? Likewise, in the setting of \cite{Barnum2001_2}
with unlimited classical communication, is $Q\geq S(A)-S(Y)$ a strong converse 
bound for the quantum rate?

\chapter{Distributed compression of  correlated  classical-quantum sources}
\label{chap:cqSW}

In this chapter, we resume the investigation of the problem of independent 
local compression of correlated quantum sources, the classical case 
of which is covered by the celebrated Slepian-Wolf theorem. 
We focus specifically on classical-quantum (cq) sources, for which one edge 
of the rate region, corresponding to the compression of the classical
part, using the quantum part as side information at the decoder, 
was previously determined by Devetak and Winter [Phys. Rev. A 68, 042301 (2003)].
Whereas the Devetak-Winter protocol attains a rate-sum equal to the von 
Neumann entropy of the joint source, here we show that the full rate 
region is much more complex, due to the partially quantum nature of
the source. In particular, in the opposite case of compressing the
quantum part of the source, using the classical part as side information
at the decoder, typically the rate sum is strictly larger
than the von Neumann entropy of the total source.

We determine the full rate region in the 
\textit{generic} case, showing that, apart from the Devetak-Winter
point, all other points in the achievable region 
have a rate sum strictly larger than the joint entropy. We can interpret 
the difference as the price paid for the quantum encoder being ignorant 
of the classical side information.
In the general case, we give an achievable rate region, via protocols 
that are built on the decoupling principle, and the protocols of quantum 
state merging and quantum state redistribution. 
Our achievable region is matched almost by a single-letter converse,
which however still involves asymptotic errors and an unbounded
auxiliary system.
This chapter is based on the papers in \cite{ZK_cqSW_ISIT_2019,ZK_cqSW_2018}.


\section{The source and the compression model}\label{sec:Source and compression model}
\label{introduction}

The Slepian-Wolf problem of two sources  
correlated in a known way, but subject to separate, local compression \cite{Slepian1973}
has proved to provide a unifying principle for much of Shannon
theory, giving rise to natural information theoretic interpretations
of entropy and conditional entropy, and exhibiting deep 
connections with error correction, channel capacities and 
mutual information (cf.~\cite{csiszar_korner_2011}).
The quantum case has been investigated for two decades, starting with the
second author's PhD thesis \cite{Winter1999} \aw{and subsequently in
\cite{Devetak2003}}, up to the systematic study \cite{Ahn2006}, 
and while we still do not have a complete understanding of the rate region,
it has become clear that the problem is of much higher
complexity than the classical case. The quantum Slepian-Wolf
problem, and specifically quantum data compression with side
information at the decoder, has resulted in many fundamental
advances in quantum information theory, including the protocols
of quantum state merging \cite{Horodecki2007,Abeyesinghe2009} and quantum state 
redistribution \cite{Devetak2008_2}, 
which have given operational meaning to the conditional von 
Neumann entropy, the mutual information and the conditional quantum mutual 
information, respectively. 

A variety of resource models and different tasks have been 
considered over the years: The source and its recovery was
either modelled as an ensemble of pure states (following
Schumacher \cite{Schumacher1995}), or as a pure state between the encoders and a
reference system; the communication resource required was
either counted in qubits communicated, in addition either
allowing or disallowing entanglement, or it was counted in
ebits shared between the agents, but with free classical 
communication. While this latter model has led to the most
complete picture of the general rate region, in the present
chapter we will go back to the original idea \cite{Schumacher1995,Winter1999} 
of quantifying the communication, counted in qubits, between the
encoders and the decoder.

\bigskip

\textbf{Source model.} 
The source model we shall consider is a hybrid classical-quantum one,
with two agents, Alice and Bob, whose task is is to compress the 
classical and quantum parts of the source, respectively. They then send their
shares to a decoder, \aw{Debbie}, who has to reconstruct the classical
information with high probability and the quantum information with
high (average) fidelity.

In detail, the source is characterised by a classical source, i.e.~a probability
distribution $p(x)$ on a discrete (in fact: finite) alphabet $\mathcal{X}$
which is observed by Alice, and a family of quantum states $\rho_x$
on a quantum system $B$, given by a Hilbert space of finite dimension $|B|$. 
To define the problem of independent local compression (and
decompression) of such a correlated \aw{classical-quantum} source, we 
shall consider purifications $\psi_x^{BR}$ of the $\rho_x$,
i.e.~$\rho_x^B = \Tr_R \psi_x^{RB}$. Thus the source can be described
compactly by the cq-state
\[
  \omega^{XBR} = \sum_{x \in \mathcal{X}} p(x) \ketbra{x}{x}^X \otimes \ketbra{\psi_x}{\psi_x}^{BR}.
\]
We will be interested in the information theoretic limit of
many copies of $\omega$, i.e.
\begin{align*}
\omega^{X^n B^n R^n}
    &= \left(\omega^{XBR}\right)^{\otimes n} \\
    &= \sum_{x^n \in \mathcal{X}^n} p(x^n) \ketbra{x^n}{x^n}^{X^n} 
                                   \otimes \ketbra{\psi_{x^n}}{\psi_{x^n}}^{B^nR^n} \! \!\!\!,
\end{align*}
where we use the notation
\begin{align*}
  x^n              &= x_1 x_2 \ldots x_n, \\
  \ket{x^n}        &= \ket{x_1} \ket{x_2} \cdots \ket{x_n}, \\
  p(x^n)           &= p(x_1) p(x_2)  \cdots p(x_n), \text{ and} \\
  \ket{\psi_{x^n}} &= \ket{\psi_{x_1}} \ket{\psi_{x_2}} \cdots \ket{\psi_{x_n}}.
\end{align*}

Alice and Bob, receiving their respective 
parts of the source, separately encode these using the most general allowed 
quantum operations; the compressed quantum information, living on 
a certain number of qubits, is passed to the decoder who has to
output, again acting with a quantum operation, an element of $\mathcal{X}^n$
and a state on $B^n$, in such a way as to attain a low error probability
for $x^n$ and a high-fidelity approximation of the conditional quantum
source state, $\psi_{x^n}^{B^nR^n}$.
We consider two models: unassisted and entanglement-assisted, which we
describe formally in the following 
(see Figs.~\ref{fig:una} and \ref{fig:ea}).

\medskip
\textbf{Unassisted model.}
With probability $p(x^n)$, the source provides Alice and Bob respectively 
with states $\ket{x^n}^{X^n}$ and $\ket{\psi_{x^n}}^{B^nR^n}$.
Alice and Bob then perform their respective
encoding operations $\mathcal{E}_X:X^n \longrightarrow C_X$ and 
$\mathcal{E}_B:B^n \longrightarrow C_B$, 
\aw{respectively,} which are quantum operations, i.e.~completely positive and trace preserving (CPTP)
maps. \aw{Of course, as functions they act on the operators (density matrices) over 
the respective input and output Hilbert spaces. But as there is no risk of confusion,
we will simply write the Hilbert spaces when
denoting a CPTP map. Note that since $X$ is a classical random variable, $\mathcal{E}_X$
is entirely described by a cq-channel.}
We call $R_X=\frac1n \log|C_X|$ and $R_B=\frac1n \log|C_B|$ 
\aw{the} quantum rates of the compression protocol.
Since Alice and Bob are required to act independently, the joint encoding operation 
is $\mathcal{E}_X \otimes \mathcal{E}_B$. 
The systems $C_X$ and $C_B$ are then sent to \aw{Debbie} who performs
a decoding operation \aw{$\mathcal{D}:C_X C_B \longrightarrow \hat{X}^n\hat{B}^n$}.
$\hat{X}^n$ and $ \hat{B}^n$ are output systems with Hilbert spaces $\hat{X}^n$ and $\hat{B}^n$ which are isomorphic to Hilbert spaces $X^{ n}$ and $B^{n}$, respectively. 
We define the \aw{extended source state}
\begin{align}\label{eq: extended source model}
&\omega^{X^n {X'}^n B^n R^n}  \nonumber \\
 \quad     &= \left( \omega^{X{X'}BR}\right)^{\otimes n} \nonumber \\
\quad      &=\!\!\!\! \! \!\sum_{x^n \in \mathcal{X}^n} \!\!\!\!p(x^n) \!\ketbra{x^n}{x^n}^{X^n} \!\!\!\otimes \ketbra{x^n}{x^n}^{{X'}^n} 
                    \! \!\!\! \otimes \ketbra{\psi_{x^n}}{\psi_{x^n}}^{B^nR^n} \!\!\!\!,
\end{align}
and say the encoding-decoding
scheme has average fidelity $1-\epsilon$ if 
\begin{align} 
  \label{F_QCSW_unassisted1}
  \overline{F} := F\left(\omega^{X^n {X'}^n B^n R^n },\xi^{ \hat{X}^n {X'}^n \hat{B}^n R^n }
      \right)   
  \geq 1-\epsilon,
\end{align} 
where 
\[\xi^{ \hat{X}^n {X'}^n \hat{B}^n R^n }\!\!\!\!=\!\left(\mathcal{D} \circ (\mathcal{E}_X \otimes \mathcal{E}_B) \otimes \id_{{X'}^n R^n}\right) \omega^{X^n {X'}^n B^n R^n}\!\!\!, \]
and $\id_{{X'}^n R^n}$ is the identity (ideal) channel acting on ${X'}^n R^n$.
By the above fidelity definition and the linearity of CPTP maps, 
the average fidelity defined in (\ref{F_QCSW_unassisted1}) \aw{can be expressed equivalently as}
\begin{align} 
  \label{F_QCSW_unassisted2}
  \overline{F} \!\!= \!\!\!\!\!\sum_{x^n \in \mathcal{X}^n } \!\!\!p(x^n) F\! \left( \ketbra{x^n}{x^n}^{X^n}\!\!\! \!\otimes\! \ketbra{\psi_{x^n}}{\psi_{x^n}}^{B^nR^n} \!\!\!\!, \xi_{x^n}^{\hat{X}^n  \hat{B}^n R^n}\right)\nonumber
\end{align}
where 
\begin{align*}
&\xi_{x^n}^{\hat{X}^n  \hat{B}^n R^n}\!\!\!\!= \\
&\quad \!(\mathcal{D} \circ (\mathcal{E}_X \otimes \mathcal{E}_B) \otimes \id_{R^n}) \ketbra{x^n}{x^n}^{X^n} \!\!\!\otimes \ketbra{\psi_{x^n}}{\psi_{x^n}}^{B^nR^n}\!\!\!\! .
\end{align*}

We say that $(R_X,R_B)$ is an (asymptotically) achievable rate pair if 
there exist codes $(\mathcal{E}_X,\mathcal{E}_B,\mathcal{D})$ as above 
for every $n$, with fidelity $\overline{F}$ converging to $1$,
and classical and quantum rates converging to $R_X$ and $R_B$, respectively.
\aw{The rate region is the set of all achievable rate pairs, as a subset of $\mathbb{R}_{\geq 0}^2$.}

\begin{figure}[!t]
\centering
  \includegraphics[scale=.4]{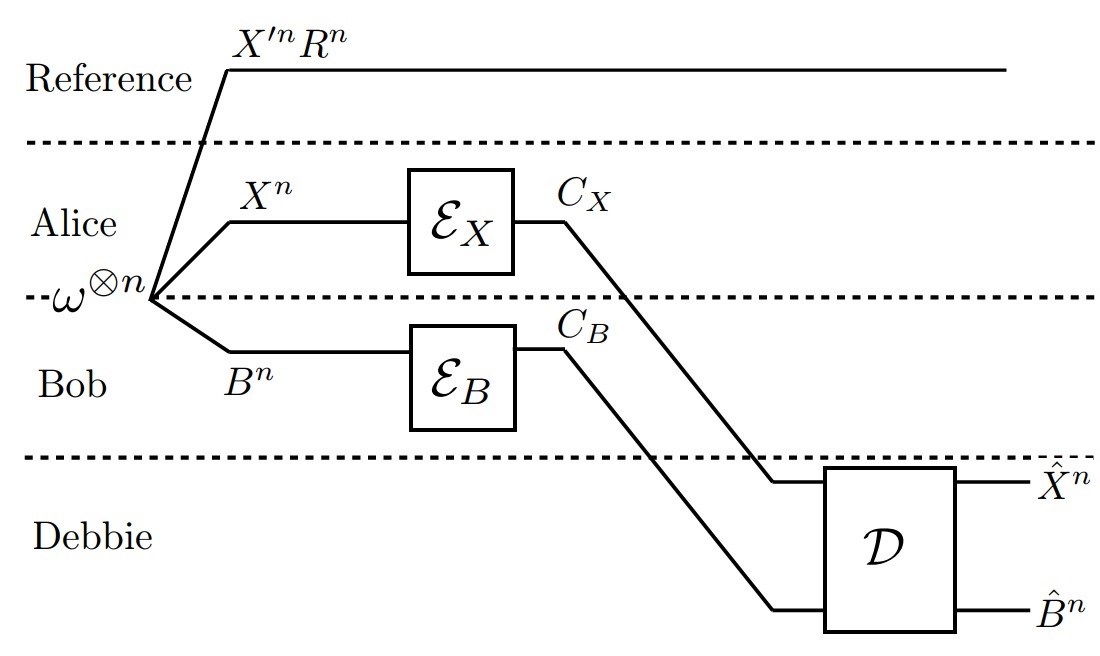}
  \caption{\aw{Circuit diagram of} the unassisted model. Dotted lines are used to 
           demarcate domains controlled by the different participants. 
           The solid lines represent quantum information \aw{registers}.}
  \label{fig:una}
\end{figure}

It is shown \aw{by Devetak and Winter} in \cite[Theorem 1]{Devetak2003}  and \cite[Corollary IV.13]{Winter1999} that the \aw{rate pair
\begin{equation}
  \label{eq:DW}
  (R_X,R_B) = (S(X|B),S(B))
\end{equation}
is achievable and optimal. The optimality is two-fold; first, the rate sum
achieved, $R_X+R_B=S(XB)$ is minimal, and secondly, even with unlimited $R_B$,
$R_X \geq S(X|B)$. This shows that the Devetak-Winter point is an extreme point
of the rate region. Interestingly,} Alice can achieve the rate $S(X|B)$ using only classical 
communication. However, we \aw{will} prove the converse theorems considering 
a quantum channel for Alice, which are obviously stronger statements.    
In Theorem \ref{theorem: generic full rate region}, we show that our system model is equivalent 
to the model considered in \cite{Devetak2003,Winter1999}, which implies the achievability and 
optimality of this rate pair in our system model. 
\aw{We remark that in \cite{Devetak2003}, the rate $R_B=S(B)$ was not explicitly
discussed, but it is clear that it can always be achieved by Schumacher's quantum
data compression \cite{Schumacher1995}, introducing an arbitrarily small additional error.}

\medskip
\textbf{Entanglement-assisted model.} 
This model \aw{generalizes the unassisted model, and it is basically the same,} 
except that we let Bob and \aw{Debbie} share entanglement \aw{and use it in encoding and decoding, respectively.
In addition, we take care of any possible entanglement that is produced in the process.
Consequently, while Alice's encoding  $\mathcal{E}_X:X^n \longrightarrow C_X$ remains the same,
the Bob's encoding and the decoding map now act as
$\mathcal{E}_B:B^n B_0 \longrightarrow C_B B_0'$ and
$\mathcal{D}:C_X C_B D_0 \longrightarrow \hat{X}^n\hat{B}^n D_0'$, respectively,
where $B_0$ and $D_0$ are $K$-dimensional quantum registers of Bob and \aw{Debbie}, 
respectively, designated to hold the initially shared entangled state, and $B_0'$ and $D_0'$
are $L$-dimensional registers for the entanglement produced by the protocol.
Ideally, both initial and final entanglement are given by maximally
entangled states $\Phi_K$ and $\Phi_L$, respectively.}
Correspondingly, we say \aw{that} the encoding-decoding scheme has average fidelity $1-\epsilon$ if 
\begin{align} 
  \label{F_QCSW_assisted}
  \overline{F} &:= F\left(\omega^{X^n {X'}^n B^n R^n }\otimes \Phi_L^{B_0'D_0'},
    \xi^{ \hat{X}^n {X'}^n \hat{B}^n R^n B_0' D_0'}                \right) \nonumber\\
   &\geq 1-\epsilon,
\end{align} 
where
\begin{align*}
\xi^{ \hat{X}^n {X'}^n \hat{B}^n R^n B_0' D_0'} 
 \!\!&=\!\left(\!\mathcal{D} \circ (\mathcal{E}_X \!\otimes \mathcal{E}_{BB_0} \!\otimes\! \id_{D_0}\!) \!\otimes\! \id_{{X'}^n R^n}\!\right)\\
 &  \quad \quad \quad \quad \quad \omega^{X^n {X'}^n B^n R^n }\otimes \Phi_L^{B_0'D_0'}.
\end{align*}
We call $E=\frac{1}{n}(\log K - \log L)$ the entanglement rate of the scheme.
The CPTP map $\mathcal{E}_{B}$ takes the input systems $B^nB_0$ to the compressed system 
$C_B$ \aw{plus Bob's share of the output entanglement, $B_0'$.}
\aw{Debbie} applies the decoding operation $\mathcal{D}$ on the received systems 
$C_XC_B$ and \aw{her part of the initial} entanglement $D_0$, 
to produce an output state on systems $\hat{X}^n \hat{B}^n$ \aw{plus her share of the output
entanglement, $D_0'$}. 
Similar to the unassisted model, $\hat{X}^n$ and $ \hat{B}^n$ are output systems with Hilbert spaces $\hat{X}^n$ and $\hat{B}^n$ which are isomorphic to Hilbert spaces $X^{n}$ and $B^{n}$, respectively.
We say $(R_X, R_B, E)$ is an (asymptotically) achievable rate triple if for all $n$
there exist entanglement-assisted codes as before, such that the
fidelity $\overline{F}$ converges to $1$, and
the classical, quantum and entanglement rates converge to
$R_X$, $R_B$ and $E$, respectively.
\aw{The rate region is the set of all achievable rate pairs, as a subset of 
$\mathbb{R}_{\geq 0}^2\times\mathbb{R}$. In the following we will be mostly
interested in the projection of this region onto the first two coordinates,
$R_X$ and $R_B$, corresponding to unlimited entanglement assistance.}

\medskip
\aw{It is a simple consequence of the time sharing principle that the rate regions,
both for the unassisted and the entanglement-assisted model, are closed convex regions.
Furthermore, since one can always waste rate, the rate regions are open to the ``upper right''.
This means that the task of characterizing the rate regions boils down to describing
the lower boundary, which can be achieved by convex inequalities. In the Slepian-Wolf
problem, they are in fact linear inequalities, and we will find analogues of these
in the present investigation.}

\medskip
Stinespring's dilation theorem \aw{\cite{Stinespring1955}} states that any CPTP map can be built 
from the basic operations of isometry and reduction to a subsystem by 
tracing out the environment system \cite{Stinespring1955}. 
Thus, the encoders and the decoder are without loss of generality isometries 
\begin{align*}
    U_X : {X^n} &\longrightarrow {C_X W_X},                          \\
    U_B : {B^n B_0} &\longrightarrow {C_B B_0' W_B},                 \\
    V   : {C_X C_B D_0} &\longrightarrow {\hat{X}^n \hat{B}^n D_0' W_D},
\end{align*}
\aw{where the new systems $W_X$, $W_B$ and $W_D$ are the environment systems
of Alice, Bob and \aw{Debbie}, respectively. They simply remain locally in 
possession of the respective party.}

\begin{figure}[!t]
\centering
  \includegraphics[scale=.4]{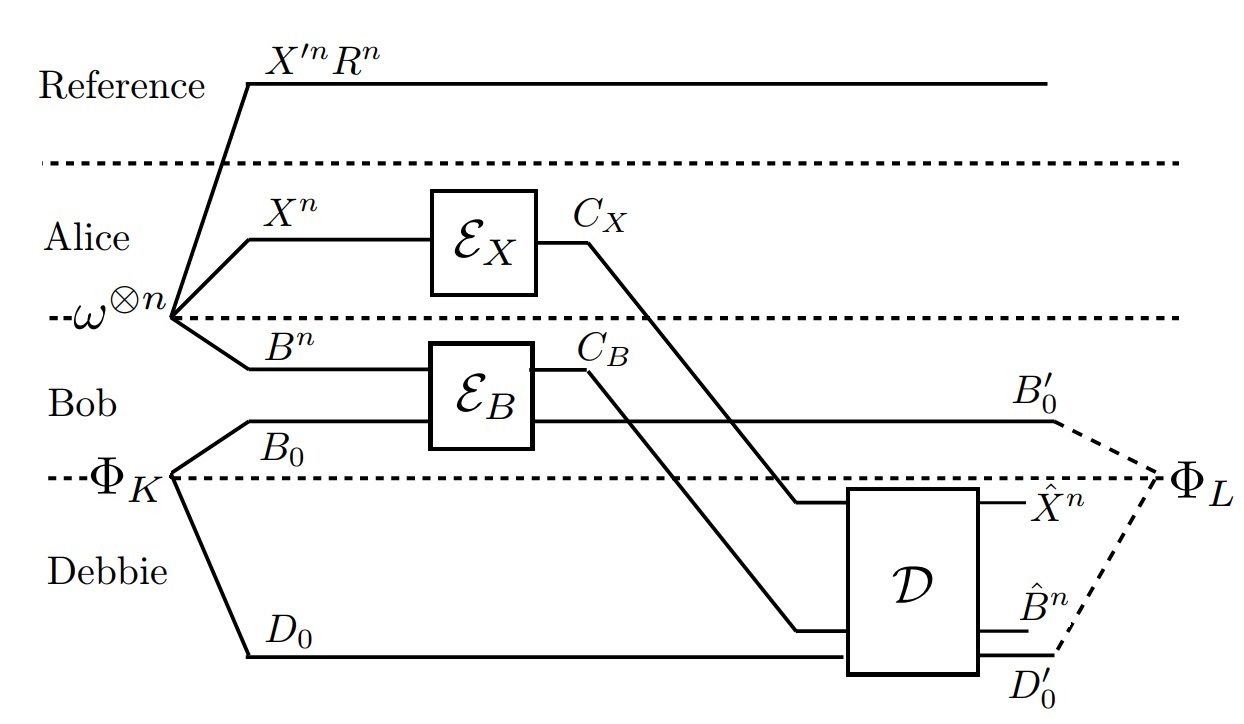}
  \caption{\aw{Circuit diagram of} the entanglement-assisted model. Dotted lines are used to 
           demarcate domains controlled by the different participants. 
           The solid lines represent quantum information \aw{registers}.}
  \label{fig:ea}
\end{figure}

The following lemma states that for a code of block length $n$ and error $\epsilon$, 
the environment parts of the encoding and decoding isometries, i.e.~$W_X$, $W_B$
and $W_D$, as well as the entanglement output registers $B_0'$ and $D_0'$, are decoupled from 
the reference $R^n$, conditioned on $X^n$. 
This lemma plays a crucial role in the proofs of converse theorems.

\begin{lemma}(Decoupling condition) 
\label{decoupling condition} 
For a code of block length $n$ and error $\epsilon$ in the entanglement-assisted model,  
let $W_X$, $W_B$ and $W_D$ be the environments of Alice's and Bob's encoding and of
Debbie's decoding isometries, respectively. Then,
\[
  I(W_XW_BW_D B_0'D_0':\hat{X}^n\hat{B}^nR^n|{X'}^n)_\xi \leq n \delta(n,\epsilon) ,
\] 
where $\delta(n,\epsilon) = 4\sqrt{6\epsilon}\log(|X| |B|) + \frac2n h(\sqrt{6\epsilon})$, 
\aw{with the binary entropy $h(\epsilon)=-\epsilon \log \epsilon -(1-\epsilon)\log (1-\epsilon)$;}
the conditional mutual information is with respect to the state 
\begin{align*}
 &\xi^{{X'}^n \hat{X}^n \hat{B}^n B_0' D_0' W_XW_BW_D R^{n}} \\
     &\quad \quad \quad =\left(V \circ (U_X \otimes U_{B} \otimes \1_{D_0}) \otimes \1_{{X'}^n R^n}\right)  \\
       &\quad \quad \quad \quad \quad \quad(\omega^{X^n {X'}^n B^n R^n } \otimes \Phi_K^{B_0D_0}) \\
        &\quad \quad \quad \quad\quad \quad \quad\left(V \circ (U_X \otimes U_{B} \otimes \1_{D_0}) \otimes \1_{{X'}^n R^n}\right)^{\dagger}.   
\end{align*}
\end{lemma}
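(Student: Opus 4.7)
The plan is to combine two ingredients. First, the classical-quantum structure of the source implies that \emph{conditional on} $X'^n = x^n$, the joint state on all remaining registers is pure. Second, the fidelity requirement (\ref{F_QCSW_assisted}) forces the reduced state on $\hat{X}^n\hat{B}^nR^n$ to be close in trace distance to the ideal pure state $\proj{x^n}\otimes\proj{\psi_{x^n}}$; applying Fannes-type continuity then yields the desired decoupling.

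To make the first point precise, I would note that $X'^n$ is a classical copy of $X^n$, and that the input to the three isometries $U_X$, $U_B$ and $V$, conditional on $X^n=x^n$, is the pure product $\ket{x^n}^{X^n}\otimes\ket{\psi_{x^n}}^{B^nR^n}\otimes\ket{\Phi_K}^{B_0D_0}$. Isometries preserve purity, so conditional on $X'^n=x^n$ the joint state $\ket{\xi_{x^n}}$ on $\hat{X}^n\hat{B}^nR^nB_0'D_0'W_XW_BW_D$ is a pure state. Setting $A := \hat{X}^n\hat{B}^nR^n$ and $W := W_XW_BW_DB_0'D_0'$, the purity of $\ket{\xi_{x^n}}$ on the bipartition $AW$ gives $S(W|x^n)=S(A|x^n)$ and $S(AW|x^n)=0$, whence
\[
  I(W\!:\!A\mid X'^n=x^n)_\xi = 2\,S(A\mid X'^n=x^n)_\xi,
\]
and averaging over $x^n$ produces the crucial doubling identity $I(W\!:\!A\mid X'^n)_\xi = 2\,S(A\mid X'^n)_\xi$.

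For the continuity step, I would combine the fidelity bound (\ref{F_QCSW_assisted}) with (\ref{eq:Fidelity_norm_relation}) to obtain
\[
  \bigl\|\xi^{\hat{X}^nX'^n\hat{B}^nR^nB_0'D_0'} - \omega^{X^nX'^nB^nR^n}\otimes\Phi_L^{B_0'D_0'}\bigr\|_1 \leq 2\sqrt{2\epsilon},
\]
and monotonicity of the trace norm under partial trace yields $\|\xi^{AX'^n} - \omega^{X^nX'^nB^nR^n}\|_1 \leq 2\sqrt{2\epsilon}$. In $\omega$, conditional on $X'^n=x^n$, the state on $A$ is the pure product $\proj{x^n}\otimes\proj{\psi_{x^n}}$, hence $S(A\mid X'^n)_\omega = 0$. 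Applying the Fannes--Audenaert inequality with classical conditioning then yields
\[
  S(A\mid X'^n)_\xi \leq \sqrt{2\epsilon}\,\log|A| + h(\sqrt{2\epsilon});
\]
choosing the minimal purification so that $|R|\leq|B|$ gives $\log|A|\leq n(\log|X|+2\log|B|)\leq 2n\log(|X||B|)$, and combining with the doubling identity produces
\[
  I(W\!:\!A\mid X'^n)_\xi \leq 4n\sqrt{2\epsilon}\log(|X||B|) + 2\,h(\sqrt{2\epsilon}) \leq n\,\delta(n,\epsilon),
\]
the last inequality using $\sqrt{6\epsilon}\geq\sqrt{2\epsilon}$ and monotonicity of $h$ on $[0,1/2]$.

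The main conceptual step, and the only nontrivial ``obstacle'', is the doubling identity $I(W\!:\!A\mid X'^n)=2\,S(A\mid X'^n)$, which rests entirely on the conditional purity afforded by the classical index $X'^n$. Everything else is a standard continuity argument: for a state close to a pure target, the conditional entropy is small. Matching the precise constants in the stated $\delta(n,\epsilon)$ is then just a matter of absorbing factors of $2$ into the slack provided by $\sqrt{6\epsilon}$ versus $\sqrt{2\epsilon}$.
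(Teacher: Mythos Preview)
Your argument is correct and takes a genuinely different route from the paper's proof. The paper does not use the doubling identity $I(W\!:\!A\mid X'^n)=2\,S(A\mid X'^n)$; instead it shows directly that conditional on $x^n$, the pure state $\ket{\xi_{x^n}}^{AW}$ is close to the product of its marginals $\xi_{x^n}^A\otimes\xi_{x^n}^W$. This goes via a Schmidt-decomposition trick: writing $\ket{\xi_{x^n}}=\sum_i\sqrt{\lambda_i}\ket{v_i}\ket{w_i}$, one computes $F(\proj{\xi_{x^n}},\xi_{x^n}^A\otimes\xi_{x^n}^W)=\sum_i\lambda_i^{3/2}\geq\|\xi_{x^n}^A\|_\infty^{3/2}$, and combining with the bound $\overline{F}\leq\sum_{x^n}p(x^n)\sqrt{\|\xi_{x^n}^A\|_\infty}$ and Jensen for $t\mapsto t^3$ gives average product-fidelity $\geq(1-\epsilon)^3\geq 1-3\epsilon$, hence trace distance $\leq 2\sqrt{6\epsilon}$ to the cq-product ensemble $\zeta$. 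Alicki--Fannes is then applied to $S(A\mid X'^nW)$ between $\xi$ and $\zeta$, using that $\zeta^{AX'^n}=\xi^{AX'^n}$ so $S(A\mid X'^nW)_\zeta=S(A\mid X'^n)_\xi$.

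Your approach is more economical: once one sees that conditional purity forces $I(W\!:\!A\mid x^n)=2S(\xi_{x^n}^A)$, the problem reduces immediately to bounding the entropy of a state close to a pure target, and Fannes--Audenaert (averaged over the classical register, using concavity of $h$) suffices. This avoids the $3\epsilon$ amplification entirely and in fact yields the tighter constant $4\sqrt{2\epsilon}$ in place of $4\sqrt{6\epsilon}$. The paper's route, on the other hand, explicitly exhibits the approximate product structure $\xi\approx\zeta$, which is conceptually informative even if not strictly needed for the entropy bound. One minor point: your ``Fannes--Audenaert with classical conditioning'' step silently uses that $X'^n$ carries the \emph{same} distribution $p$ in $\xi$ as in $\omega$ (true, since $X'^n$ is untouched by the protocol), so that $S(A\mid X'^n)_\xi-S(A\mid X'^n)_\omega=\sum_{x^n}p(x^n)\bigl[S(\xi_{x^n}^A)-0\bigr]$ and the concavity-of-$h$ averaging goes through cleanly; it is worth making that explicit.
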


\begin{proof}
We show that the fidelity criterion (\ref{F_QCSW_assisted}) 
implies that given $x^n$, the environments $W_X$, $W_B$ and $W_D$ of Alice's, Bob's and \aw{Debbie}'s isometries
are decoupled from the the rest of the output systems.

The parties share $n$ copies of the state $\omega^{X^{\prime} X B R}$, where 
Alice and Bob have access to systems $X^n$ and $B^n$, respectively, and ${X'}^n$ and $R^n$ are the reference systems.  
Alice and Bob apply the following isometries to encode their systems, respectively:
\begin{align*}
    U_{X}:{X^n}         &\longrightarrow {C_X W_X},  \\
    U_{B}:{B^n B_0}     &\longrightarrow {C_B B_0' W_B},                   
\end{align*}
where Alice and Bob send respectively their compressed information $C_X$ and $C_B$ to \aw{Debbie} 
and keep the environment parts $W_X$ and $W_B$ of their respective isometries for themselves. 
\aw{Debbie} applies \aw{the} decoding isometry \aw{$V:{C_X C_B D_0} \longrightarrow {\hat{X}^n \hat{B}^n D_0' W_D}$ 
to the systems $C_XC_B$ and her part of the entanglement $D_0$,
to generate the output systems $\hat{X}^n \hat{B}^n D_0'$, with $W_D$ the environment of her isometry.
This leads to the following final state after decoding:
\begin{align*}
  &\xi^{X'^n \hat{X}^n \hat{B}^n B_0'D_0' W_X W_B W_D R^n}\\
     & =\! \!\!\!\sum_{x^n} \! p(x^n)\! \ketbra{x^n}^{X'^n} \!\!\! \!\! \otimes \! \ketbra{\xi_{x^n}}^{\hat{X}^n \hat{B}^n B_0'D_0' W_X W_B W_D R^n} \!\!\!\!,     
\end{align*}
where 
\begin{align*}
 \ket{\xi_{x^n}}&^{\hat{X}^n \hat{B}^n B_0'D_0' W_X W_B W_D R^n} 
     \!\! \\
    &= V^{{C_XC_BD_0 \to \hat{X}^n\hat{B}^nD_0'W_D}}\\
      & \quad \quad  \big( U_X^{X^n \to C_XW_X}\!\!\ket{x^n}^{X^n} 
              \!\otimes U_B^{B^nB_0 \to C_BB_0'W_B}\\
              &\quad \quad\quad \quad\quad \quad\quad\quad \quad\quad (\ket{\psi_{x^n}}^{B^nR^n}\!\ket{\Phi_K}^{B_0D_0}) \bigr).
\end{align*}
}

The fidelity defined in \aw{Eq.}~(\ref{F_QCSW_assisted}) is \aw{now} bounded as follows:
\begin{align} 
  \label{eq-B1}
  \overline{F} 
    &= F\left(\omega^{X'^n X^n B^n R^n} \otimes \Phi_L^{B_0'D_0'},
               \xi^{X'^n \hat{X}^n \hat{B}^n B_0' D_0' R^n} \right)                                  \nonumber \\  
   &\leq F\left(\omega^{X'^n X^n B^n R^n}, \xi^{X'^n \hat{X}^n \hat{B}^n R^n} \right)                          \nonumber \\  
   &=   \!\!\! \!\sum_{x^n \in \mathcal{X}^n} \!\!\! p(x^n) \!F\!\left(\!\ketbra{x^n}{x^n}^{X^n}\!\!\!\! \otimes\! \ketbra{\psi_{x^n}}{\psi_{x^n}}^{B^nR^n}\!\!\!\!,
                                                     \xi_{x^n}^{\hat{X}^n \hat{B}^n R^n}\!\! \right)               \nonumber \\
   &=   \!\!\!\! \sum_{x^n} p(x^n)\! \sqrt{\!\bra{x^n}\bra{\psi_{x^n}}^{B^n\! R^n} \!
                                  \xi_{x^n}^{\hat{X}^n\hat{B}^nR^n} \!\ket{x^n}\!\ket{\psi_{x^n}}^{B^n\! R^n}\!}       \nonumber \\
   &\leq \sum_{x^n} p(x^n) \sqrt{\| \xi_{x^n}^{\hat{X}^n\hat{B}^n R^n} \|},
\end{align}
where in the first line $\xi^{X'^n \hat{X}^n \hat{B}^n B_0' D_0' R^n} =\left(\mathcal{D} \circ (\id_{X^nD_0} \otimes \mathcal{E}_{B}) \otimes \id_{X'^n R^n} \right) \omega^{X^n X'^n B^n R^n } \otimes \Phi_K^{B_0D_0}$.
The \aw{inequality in the second line} is due to the monotonicity of fidelity under partial trace, 
and \aw{$\|\xi_{x^n}^{\hat{X}^n \hat{B}^n R^n}\|$ denotes the operator norm, which in this case
of a positive semidefinite operator is the maximum eigenvalue of $\xi_{x^n}^{\hat{X}^n \hat{B}^n R^n}$. 
Now, consider the Schmidt decomposition of the state 
$\ket{\xi_{x^n}}^{\hat{X}^n \hat{B}^n B_0'D_0' W_X W_B W_DR^n}$ with respect to the partition
$\hat{X}^n \hat{B}^n R^n$ : $B_0' D_0'W_X W_B W_D$, i.e.
\begin{align*}
&\ket{\xi_{x^n}}^{\hat{X}^n \hat{B}^n B_0'D_0' W_X W_B W_DR^n}\\
   & \quad =\!\! \sum_{i} \!\!\sqrt{\lambda_{x^n}(i)}\ket{v_{x^n}(i)}^{\hat{X}^n \!\hat{B}^n\! R^n} \!\!\ket{w_{x^n}(i)}^{B_0'\!D_0' \!W_X\! W_B \!W_D}\!\!\!. 
\end{align*}}
High average fidelity $\overline{F} \geq 1-\epsilon$ implies that \emph{on average} 
the above states are approximately product states. In other words, the two subsystems are 
nearly decoupled on average:  
\begin{align}
  \label{eq:almost-pure}
  \sum_{x^n}& p(x^n) \!F\!\left(\! \ketbra{\xi_{x^n}}{\xi_{x^n}},
                             \xi_{x^n}^{\hat{X}^n \!\hat{B}^n\! R^n} \!\!\!\!\otimes \xi_{x^n}^{B_0'D_0' W_X W_B W_D} \right) \nonumber\\
    &=\! \sum_{x^n} \!p(x^n)\! \sqrt{\bra{\xi_{x^n}}
                                {\xi_{x^n}^{\hat{X}^n \!\hat{B}^n\! R^n} \!\!\otimes \xi_{x^n}^{B_0'\! D_0'\! W_X \!W_B\! W_D} 
                              \!\! \ket{\xi_{x^n}}}}                                                                 \nonumber\\
    &= \sum_{x^n} p(x^n) \sum_i \lambda_{x^n}(i)^{\frac32}                                                 \nonumber\\
    &\geq \sum_{x^n} p(x^n) \|\xi_{x^n}^{\hat{X}^n \hat{B}^n R^n}\|^{\frac32}                        \nonumber\\
    &\geq \left( \sum_{x^n} p(x^n) \sqrt{\|\xi_{x^n}^{\hat{X}^n \hat{B}^n R^n}\|} \right)^{3}  \nonumber\\
    &\geq (1-\epsilon)^3 
     \geq 1 - 3\epsilon,  
\end{align}
where in the first line $\ketbra{\xi_{x^n}}$ is a state on systems ${\hat{X}^n \!\hat{B}^n \!B_0'\!D_0'\! W_X\!W_B W_D\!R^n}$. The \aw{inequality in the fifth line} follows from the convexity of $x^3$ for $x \geq 0$, 
and \aw{in the sixth line we have used Eq.}~(\ref{eq-B1}). 
Based on the relation between fidelity and trace distance \aw{(Lemma \ref{lemma:FvdG}), 
we thus obtain for the product ensemble
\begin{align*}
& \zeta^{X'^n \hat{X}^n \hat{B}^n B_0'D_0' W_X W_B W_D R^n} \\
   & \quad:= \sum_{x^n}\! p(x^n)\!\ketbra{x^n}^{X'^n}\!\!\! \!\otimes \xi_{x^n}^{\hat{X}^n \hat{B}^n R^n} \!\!\!\otimes \xi_{x^n}^{B_0'D_0' W_X W_B W_D}\! \!\!,
\end{align*}
that
\begin{align*}
  &\| \xi - \zeta \|_1\\ 
    &=    \sum_{x^n} p(x^n)\\ 
    &\>\> \norm{\! \ketbra{\xi_{x^n}\! }{\xi_{x^n}\! }^{\! \hat{X}^n \! \hat{B}^n\!  B_0' \! D_0' \! W_X\! W_B \! W_D\! R^n}
                            \! \!  \! \! \!  \!-\! \xi_{x^n}^{\hat{X}^n\!  \hat{B}^n\!  R^n} \! \! \!\! \! \otimes\! \xi_{x^n}^{B_0'\! D_0' \! W_X\!  W_B \! W_D\! } \! }_{\! 1} \\
    & \leq 2\sqrt{6\epsilon}.   
\end{align*}}
By \aw{the Alicki-Fannes inequality (Lemma \ref{AFW lemma}), this implies
\begin{align}
  \label{decoupling_I}
 & I(\hat{X}^n\hat{B}^nR^n : B_0'D_0'W_XW_B W_D | {X'}^n)_\xi \nonumber \\
     &=    S(\hat{X}^n \hat{B}^n R^n | {X'}^n)_\xi \nonumber \\
     & \quad \quad \quad \quad - S(\hat{X}^n \hat{B}^n R^n | {X'}^n B_0'D_0' W_XW_B W_D)_\xi \nonumber \\ 
     &\leq 2\sqrt{6\epsilon} \log(|X|^n |B|^n |R|^n) + 2 h(\sqrt{6\epsilon}) \nonumber \\
     &\leq 2\sqrt{6\epsilon} \log(|X|^{2n} |B|^{2n}) + 2 h(\sqrt{6\epsilon}) \nonumber\\
         &  =: n \delta(n,\epsilon),  
\end{align}
where we note in the second line that 
$S(\hat{X}^n \hat{B}^n R^n|X'^n B_0'D_0' W_XW_B W_D)_\zeta 
 = S(\hat{X}^n \hat{B}^n R^n)_\zeta = S(\hat{X}^n \hat{B}^n R^n)_\xi$,
and in the forth line that we can without loss of generality assume $|R| \leq |X| |B|$, 
since that is the maximum possible dimension of the support of $\omega^R$.}
\end{proof}


\section{Quantum data compression with classical side information}
\label{sec:seiteninformation}
In this section, we assume that Alice sends her information to \aw{Debbie} 
at rate $R_X=\log \abs{\mathcal{X}}$ such that \aw{Debbie} can decode it 
perfectly, and we ask how much Bob can compress his system given that 
the decoder has access to classical side information $X^n$.  
\aw{This problem is a special case of the \emph{classical-quantum Slepian-Wolf problem}} (CQSW problem),
and we call it quantum data compression with classical side information at the decoder,
in analogy to the problem of classical data compression 
with quantum side information at the decoder which is addressed 
in \cite{Devetak2003,Winter1999}. Note we do not speak about the
compression and decompression of the classical part at all, and the 
decoder \aw{may} depend directly on $x^n$.
\aw{Of course, by Shannon's data compression theorem \cite{Shannon1948}, $X$ can always be 
compressed to a rate $R_X = H(X)$, introducing an arbitrarily small
error probability}.

We know from previous section that  Bob's encoder, in the entanglement-assisted model, is without loss of generality 
an isometry \aw{$U \equiv U_B:{B^n B_0} \longrightarrow {C W B_0'}$, 
taking $B^n$ and Bob's part of the entanglement $B_0$ to systems 
$C\otimes W \otimes B_0'$, where $C \equiv C_B$ is the compressed 
information of rate $R_B=\frac{1}{n}\log |C|$; $W \equiv W_B$ is the environment 
of Bob's encoding CPTP map, and $B_0'$ is the register carrying Bob's share of 
the output entanglement
(in this section, we drop subscript $B$ from $C_B$ and $W_B$). 
Having access to side information $X^n$, \aw{Debbie} applies the decoding isometry  
$V:X^n C D_0 \to \hat{X}^n \hat{B}^n W_D D_0'$ to generate 
the output systems $\hat{X}^n \hat{B}^n$ and entanglement share
$D_0'$, and where $W_D$ is the environment of the isometry.} 
We call this encoding-decoding scheme a side information code of 
block length $n$ and error $\epsilon$ for the entanglement-assisted model if the average fidelity 
(\ref{F_QCSW_assisted}) is at least $1-\epsilon$. 
Similarly, we define a side information code for the unassisted model by removing the corresponding systems of entanglement in the encoding and decoding isometries,
that is systems $B_0$, $B'_0$, $D_0$ and $D'_0$.

\medskip

To state our lower bound on the necessary compression rate, we introduce the
following quantity, which emerges naturally from the converse proof.

\begin{definition}
\label{I_delta}
For the state $\omega^{XBR} = \sum_x p(x) \ketbra{x}{x}^X \otimes \ketbra{\psi_x}{\psi_x}^{BR}$
and $\delta \geq 0$, define
\begin{align*}
&I_\delta(\omega) := \sup_{\cT} I(X:W)_\sigma 
                  \\
                  &\quad \quad \quad  \text{ s.t. } \cT:B\rightarrow W \text{ CPTP with }  
                  I(R:W|X)_\sigma \leq \delta,
\end{align*}
where the mutual informations are understood with respect to the
state $\sigma^{XWR} = (\id_{XR}\otimes \cT)\omega$ \aw{and $W$ ranges over arbitrary
finite dimensional quantum systems}.
Furthermore, let
\(
  \widetilde{I}_0 := \lim_{\delta \searrow 0} I_\delta = \inf_{\delta>0} I_\delta.
\)
\end{definition}

Note that the system $W$ is not restricted in any way, which is
the reason why in this definition we have a supremum and an infimum, rather 
than a maximum and a minimum. 
(It is a simple consequence of compactness of the domain of optimisation, 
together with the continuity of
the mutual information, that if we were to impose a bound on the dimension of $W$
in the above definition, the supremum in $I_\delta$ would be attained, and
for the infimum in $\widetilde{I}_0$, it would hold that $\widetilde{I}_0 = I_0$.)


\begin{lemma}
  \label{lemma:I-delta}
  The function $I_{\delta}(\omega)$ introduced in Definition~\ref{I_delta},
  has the following properties:
  \begin{enumerate}
      \item It is a non-decreasing function of $\delta$. 
      \item It is a concave function of $\delta$.
      \item It is continuous for $\delta > 0$.
      \item For any two states
  $\omega_1^{X_1B_1R_1}$ and $\omega_2^{X_2B_2R_2}$ and for $\delta,\delta_1,\delta_2 \geq 0$ 
  \[
    I_\delta(\omega_1\otimes\omega_2)
        = \max_{\delta_1+\delta_2= \delta} \left (I_{\delta_1}(\omega_1) + I_{\delta_2}(\omega_2)\right).
  \]
  \item $I_{n\delta}(\omega^{\otimes n}) = n I_\delta(\omega)$.
  \item $I_0$ and $\widetilde{I}_0$ are additive:
 \begin{align*}
     I_0(\omega_1\otimes\omega_2) &= I_0(\omega_1) + I_0(\omega_2)
    \quad \text{and} \\ \quad
    \widetilde{I}_0(\omega_1\otimes\omega_2) &= \widetilde{I}_0(\omega_1) + \widetilde{I}_0(\omega_2).
\end{align*}  
 \end{enumerate}
\end{lemma}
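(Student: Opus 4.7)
\medskip\noindent\textbf{Proof plan.} Properties 1 and 3 are straightforward consequences of the definition: monotonicity holds because enlarging $\delta$ only enlarges the feasible set of channels, while continuity on $(0,\infty)$ follows from concavity together with monotonicity, so I would turn to concavity first. Given channels $\cT_1:B\to W_1$ and $\cT_2:B\to W_2$ achieving the supremum (up to $\eta>0$) at $\delta_1$ and $\delta_2$, I would define the mixture channel $\cT_\lambda(\rho) = \lambda\, \cT_1(\rho)\otimes\proj{1}^F + (1-\lambda)\,\cT_2(\rho)\otimes\proj{2}^F$ with output $W := W_1 W_2 F$ and a classical flag $F$ that is independent of $X$ and $R$. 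Then by expanding the (conditional) mutual information on the classical flag, $I(X:W)_{\sigma_\lambda} = \lambda I(X:W_1)_{\sigma_1} + (1-\lambda)I(X:W_2)_{\sigma_2}$ and likewise $I(R:W|X)_{\sigma_\lambda} = \lambda I(R:W_1|X)_{\sigma_1} + (1-\lambda)I(R:W_2|X)_{\sigma_2} \leq \lambda\delta_1+(1-\lambda)\delta_2$, which delivers concavity after letting $\eta\to 0$.

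The heart of the lemma is property 4; points 5 and 6 follow as corollaries (property 5 by induction, $I_0$-additivity in 6 by specialising 4 to $\delta=0$, and $\widetilde I_0$-additivity by a sandwich argument $I_\delta(\omega_1\otimes\omega_2)\leq I_\delta(\omega_1)+I_\delta(\omega_2)$ together with the trivial $\geq$ from using $\delta/2$ in each factor). For the lower bound in property 4, I would take near-optimal channels $\cT_i:B_i\to W_i$ for $\omega_i$ at $\delta_i$ and apply the product channel $\cT_1\otimes\cT_2$; independence of the two copies then gives $I(X_1X_2:W_1W_2) = I(X_1:W_1)+I(X_2:W_2)$ and $I(R_1R_2:W_1W_2|X_1X_2) = I(R_1:W_1|X_1)+I(R_2:W_2|X_2)\leq \delta_1+\delta_2$, as desired.

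The main obstacle is the matching upper bound in property 4, namely that a joint channel $\cT:B_1B_2\to W$ with $I(R_1R_2:W|X_1X_2)\leq\delta$ cannot beat the separate optimum. My plan is to extract two effective channels from $\cT$: set $\cT_1(\tau) := \cT(\tau\otimes\omega_2^{B_2})$ on $B_1$, and for each $x_1$ the conditional channel $\cT_2^{x_1}(\tau):=\cT(\rho_{x_1}^{B_1}\otimes\tau)$ on $B_2$. The chain rule yields
\[
  I(X_1X_2:W) = I(X_1:W) + I(X_2:W|X_1),
\]
where $I(X_1:W)$ is attained by $\cT_1$ and, after averaging over $x_1$ and invoking the already-established concavity of $I_\delta$ in $\delta$, the second term is bounded by $I_{\overline{\delta_2}}(\omega_2)$ with $\overline{\delta_2}=I(R_2:W|X_1X_2)$. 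Symmetrically, $\cT_1$ satisfies $I(R_1:W|X_1) \leq I(R_1:W|X_1X_2)=:\delta_1$, where I use that $R_1$ is independent of $X_2$ given $X_1$ for the product source. It remains to check $\delta_1+\overline{\delta_2}\leq \delta$; this follows from expanding $I(R_1R_2:W|X_1X_2) = I(R_1:W|X_1X_2) + I(R_2:W|X_1X_2R_1)$ and using $I(R_1:R_2|X_1X_2)=0$ (product purifications), so $I(R_2:W|X_1X_2R_1) = I(R_2:W|X_1X_2) + I(R_1:R_2|X_1X_2W) \geq I(R_2:W|X_1X_2)$. Assembling these inequalities and letting the auxiliary $\eta\to 0$ yields the $\leq$ direction and hence the claimed equality.
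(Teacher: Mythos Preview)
Your proof is correct, and the overall strategy matches the paper: properties 1--3 are handled the same way, and properties 5--6 are derived as corollaries of 4 together with concavity and monotonicity, exactly as the paper does. (For property 5, note that ``by induction'' from 4 alone only gives $I_{n\delta}(\omega^{\otimes n})=\max_{\sum\delta_i=n\delta}\sum_i I_{\delta_i}(\omega)$; you still need concavity to conclude the maximum is at $\delta_i=\delta$, but you have already established that.)

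The interesting difference is in the upper bound of property 4. You extract a fixed channel $\cT_1(\cdot)=\cT(\cdot\otimes\omega_2^{B_2})$ and a family $\cT_2^{x_1}(\cdot)=\cT(\rho_{x_1}^{B_1}\otimes\cdot)$, then bound $I(X_2:W|X_1)$ by averaging $I_{\delta_2(x_1)}(\omega_2)$ and invoking concavity. The paper instead avoids the averaging step entirely by absorbing the other copy's registers into enlarged output systems: it sets $W_1:=WX_2$ and $W_2:=WX_1R_1$, realised by channels $B_1\to W_1$ and $B_2\to W_2$ that prepare the full partner state (including the reference $R_1$ for the second) and apply $\cT$. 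Then the chain rule $I(W:X_1X_2)=I(W:X_2)+I(WX_2:X_1)\leq I(WX_1R_1:X_2)+I(WX_2:X_1)$ directly yields the bound $\leq I_{\delta_2}(\omega_2)+I_{\delta_1}(\omega_1)$ with $\delta_1+\delta_2=\delta$ read off from the constraint expansion. This is slightly slicker since it sidesteps the dependence on concavity, but your route is equally valid and perhaps more transparent about what the effective single-copy channels actually are.
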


\begin{proof}
1) The non-decrease with $\delta$ is evident from the definition.

2) For this consider $\delta_1,\delta_2\geq 0$, $0<p<1$,
and let $\delta = p\delta_1+(1-p)\delta_2$. 
Let furthermore channels $\cT_i:B\rightarrow W_i$ be given ($i=1,2$) such that for the
states $\sigma_i^{XW_iR} = (\id_{XR}\otimes \cT_i)\omega$, $I(R:W_i|X)_{\sigma_i} \leq \delta_i$.
\par
Now define $W := W_1 \oplus W_2$, so that $W_1$ and $W_2$ can be considered
mutually orthogonal subspaces of $W$, and define the new channel
$\cT := p\cT_1 + (1-p)\cT_2:B\rightarrow W$. By the chain rule for the mutual
information, one can check that w.r.t.~$\sigma^{XWR} = (\id_{XR}\otimes \cT)\omega$,
\begin{align*}
  I(R:W|X)_\sigma &= p I(R:W_1|X)_{\sigma_1} + (1-p) I(R:W_2|X)_{\sigma_2} \\
  &\leq p\delta_1+(1-p)\delta_2 \\
  &= \delta,
\end{align*}
and likewise
\[
  I(X:W)_\sigma = p I(X:W_1)_{\sigma_1} + (1-p) I(X:W_2)_{\sigma_2}.
\] 
Hence, $I_\delta \geq p I(X:W_1)_{\sigma_1} + (1-p) I(X:W_2)_{\sigma_2}$; by maximizing over
the channels, the concavity follows.

3) Properties 1 and 2 imply that it is continuous for $\delta > 0$.

4) First, we prove that 
$I_\delta(\omega_1\otimes\omega_2) \leq \max_{\delta_1+\delta_2= \delta} \left( I_{\delta_1}(\omega_1) + I_{\delta_2}(\omega_2) \right)$; 
the other direction \aw{of the inequality is trivial from the definition}. 
Let $\cT:B_1 B_2 \to W$ be a CPTP map such that 
\begin{align}
  \label{eq1}
  \delta & \geq I(W:R_1R_2|X_1X_2) \nonumber \\
  &=  I(W:R_1|X_1X_2)+I(W:R_2|X_1R_1X_2)  \\ \nonumber
                                 &=  I(WX_2:R_1|X_1)+I(WX_1R_1:R_2|X_2), 
\end{align}
where the first line is to chain rule, and the second line is due to the independence of $\omega_1$ and $\omega_2$. 
We now define the new systems $W_1:=WX_2$ and $W_2:=WX_1R_1$. \aw{Then we have,}
\begin{align}
  \label{eq2}
  I(W:X_1 X_2) &=    I(W:X_2)+I(W:X_1|X_2)  \\ \nonumber
               &=    I(W:X_2)+I(WX_2:X_1)\\  \nonumber
               &\leq I(\underbrace{WX_1R_1}_{W_2}:X_2)+I(\underbrace{WX_2}_{W_1}:X_1),
\end{align}
where the second equality is due to the independence of $X_1$ and $X_2$. 
The inequality follows \aw{from data processing}.
From \aw{Eq.~(\ref{eq1})} we know that $I(W_1:R_1|X_1)\leq \delta_1$ and $I(W_2:R_2|X_2)\leq \delta_2$ 
for some $\delta_1+\delta_2= \delta$. Thereby, from \aw{Eq.~(\ref{eq2})} we obtain
\begin{align*}
  I_\delta(\omega_1 \otimes \omega_2) &\leq I_{\delta_1}(\omega_1 )+I_{\delta_2}(\omega_2 )\\
                                      &\leq \max_{\delta_1+\delta_2=\delta} I_{\delta_1}(\omega_1 )+I_{\delta_2}(\omega_2 ).
\end{align*}

5) \aw{Now, the multi-copy additivity follows easily from property 4:}
According to the first statement of the \aw{lemma}, we have
\[
  I_{n\delta}(\omega^{\otimes n}) = \max_{\delta_1+\ldots+\delta_n=n\delta} I_{\delta_1}(\omega)+\ldots+I_{\delta_n}(\omega). 
\] 
\aw{Here, the right hand side is clearly $\geq n I_\delta(\omega)$ since we can choose all 
$\delta_i = \delta$. By the concavity of $I_{\delta}(\omega)$ in $\delta$, on the other hand, 
we have for any $\delta_1+\ldots+\delta_n=n\delta$ that
\[
 \frac{1}{n}(I_{\delta_1}(\omega)+\ldots+I_{\delta_n}(\omega)) \leq I_{\delta}(\omega),  
\]
so the maximum is attained at $\delta_i=\delta$ for all $i=1,\ldots,n$}.

6) The property 4 of the \aw{lemma also} implies that $I_0$ and $\widetilde{I}_0$ are additive. 
\end{proof}

\begin{remark}
 There is a curious resemblance of our function
$I_\delta$ with the so-called \emph{information bottleneck function} introduced 
by Tishby \emph{et al.}~\cite{info-bottleneck}, whose generalization to
quantum information theory is recently being discussed \cite{Salek-QIB,Hirche-QIB}.
Indeed, the concavity and additivity properties of the two functions are proved 
by the same principles, although it is not evident to us, what --if any--, the 
information theoretic link between $I_\delta$ and the information bottleneck is.
\end{remark}

\subsection{Converse bound}\label{subsec: Converse bound}
In this subsection, we use the properties of the function $I_{\delta}(\omega)$ (Lemma~\ref{lemma:I-delta}) to prove
a lower bound on Bob's quantum communication rate.

\begin{theorem}
\label{converse_QCSW}
In the entanglement-assisted model, consider any side information code of block length $n$ and error $\epsilon$. 
Then, Bob's quantum communication rate is lower bounded as
\[
  R_B \geq \frac12 \left( S(B)+S(B|X) - I_{\delta(n,\epsilon)} - \delta(n,\epsilon) \right),
\]
where $\delta(n,\epsilon) = 4\sqrt{6\epsilon}\log(|X| |B|) +\frac2n h(\sqrt{6\epsilon})$.
Any asymptotically achievable rate $R_B$ is consequently lower bounded 
\[
  R_B \geq\frac{1}{2}\left( S(B)+S(B|X)-\widetilde{I}_0 \right).
\]
\end{theorem}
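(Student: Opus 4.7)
The plan is to bound the entropy of the compressed register $S(C)_\sigma$ from below in two complementary ways, one controlled by the classical source and one by the shared entanglement; then the remaining ``environment'' terms are identified with the single-letter information quantity $I_\delta$ via the decoupling condition.

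I first replace Bob's encoder and Debbie's decoder by Stinespring isometries $U:B^nB_0\hookrightarrow CWB_0'$ and $V:X^nCD_0\hookrightarrow \hat{X}^n\hat{B}^nW_DD_0'$, and write $\sigma$ and $\xi$ for the states after encoding and after decoding, respectively. The trivial bound on the dimension of the compressed register is $nR_B = \log|C| \geq S(C)_\sigma$. Since $U$ acts only on $B^nB_0$, for every system $Y$ disjoint from its input one has the isometry identity $S(CWB_0'|Y)_\sigma = S(B^nB_0|Y)$. Specializing to $Y=X'^n$ (the reference copy of the classical source) and to $Y=D_0$ (Debbie's share of the initial entanglement $\Phi_K^{B_0D_0}$) yields, respectively,
\begin{align*}
S(C|X'^n)_\sigma &= nS(B|X) + \log K - S(WB_0'|CX'^n)_\sigma, \\
S(C|D_0)_\sigma  &= nS(B)   - \log K - S(WB_0'|CD_0)_\sigma.
\end{align*}

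Combining $2S(C)_\sigma \geq S(C|X'^n)_\sigma + S(C|D_0)_\sigma$ (which is just conditioning reducing entropy) with the identities above, the $\log K$ terms cancel and I obtain the key inequality
\[
2nR_B \geq nS(B) + nS(B|X) - \bigl(S(WB_0'|CX'^n)_\sigma + S(WB_0'|CD_0)_\sigma\bigr).
\]
It now remains to bound the two environment-conditional entropies on the right-hand side by $n(I_{\delta(n,\epsilon)}(\omega) + \delta(n,\epsilon))$. For this I view Bob's encoding, with $C$ traced out and $B_0$ replaced by the maximally mixed state inherited from $\Phi_K^{B_0D_0}$, as a CPTP channel $\cT_n:B^n\to WB_0'$. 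The decoupling condition (Lemma \ref{decoupling condition}), followed by data processing to retain only $WB_0'$ and $R^n$, gives $I(WB_0':R^n\,|\,X'^n)_\sigma \leq n\delta(n,\epsilon)$, so $\cT_n$ is admissible in the definition of $I_{n\delta(n,\epsilon)}(\omega^{\otimes n})$, whence $I(X^n:WB_0')_\sigma \leq I_{n\delta(n,\epsilon)}(\omega^{\otimes n}) = n\,I_{\delta(n,\epsilon)}(\omega)$ by the additivity in Lemma \ref{lemma:I-delta} (point 5). Exploiting purity of $\sigma^{CWB_0'R^nD_0}$ conditional on $X^n=x^n$ (via the complement identities $S(WB_0'|C,x^n) = S(R^nD_0|x^n)-S(C|x^n)$, and analogously for conditioning on $D_0$), I rewrite each environment-conditional entropy as a mutual information involving $WB_0'$ with $X^n$, $R^n$ and $D_0$, which then telescopes to at most $I(X^n:WB_0')_\sigma + n\delta(n,\epsilon)$.

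Putting the steps together gives the finite-$n$ bound stated in the theorem; sending $n\to\infty$ and $\epsilon\to 0$ drives $\delta(n,\epsilon)\to 0$, and the definition $\widetilde{I}_0 = \lim_{\delta\searrow 0} I_\delta$ together with its monotonicity (point 1 of Lemma \ref{lemma:I-delta}) yields the asymptotic bound $R_B \geq \frac{1}{2}(S(B)+S(B|X)-\widetilde{I}_0)$. The main obstacle is the last piece of the third paragraph: matching the two environment-conditional entropies against a single copy of $I_{\delta(n,\epsilon)}$ plus the decoupling slack. This requires delicate use of strong subadditivity together with the complementarity identities afforded by the conditional purity of $\sigma$ on each classical value $x^n$, and is the place where the particular bipartite form of the constraint in Definition \ref{I_delta}, penalising only leakage of $W$ into $R$ given $X$, is essential.
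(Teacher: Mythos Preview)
Your two identities for $S(C|X'^n)$ and $S(C|D_0)$ are correct, and the resulting ``key inequality'' is fine, but the step you flag as the main obstacle is a genuine gap, not just a delicacy. Concretely, you need
\[
  S(WB_0'|CD_0) + S(WB_0'|CX'^n) \;\leq\; I(X'^n:WB_0') + n\delta(n,\epsilon).
\]
Using the conditional purity you do get $S(WB_0'|CX'^n) = -S(WB_0'|R^nD_0X'^n)$, and together with strong subadditivity $S(WB_0'|CD_0)\leq S(WB_0'|D_0)$ and $I(D_0:R^nX'^n)=0$ this yields
\[
  S(WB_0'|CD_0)+S(WB_0'|CX'^n)\;\leq\; I(WB_0':R^nX'^n|D_0)
  \;=\; I(X'^n:WB_0') + I(WB_0':R^n|X'^n) + I(D_0:R^nX'^n|WB_0').
\]
The first two pieces give exactly $I(X'^n:WB_0')+n\delta$ via Lemma~\ref{decoupling condition}, but the last term $I(D_0:R^nX'^n|WB_0')$ is \emph{not} controlled by the decoupling condition: that lemma bounds correlations of $WB_0'$ (and the post-decoding $D_0'$) with $R^n$ given $X'^n$, whereas here the pre-decoding register $D_0$ sits in the conditioning on $WB_0'$, and there is no reason for it to decouple from $R^nX'^n$ in that position. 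Your ``analogously for conditioning on $D_0$'' does not go through either, because $D_0$ is quantum and the purity you have is only conditional on the classical $x^n$.

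The paper sidesteps this entirely by choosing the two lower bounds on $S(C)$ differently. One bound is plain subadditivity, $S(C)\geq S(CWB_0')-S(WB_0') = nS(B)+\log K - S(WB_0')$, producing an \emph{unconditional} $-S(WB_0')$. The other bound starts from $S(C)+S(D_0)\geq S(CD_0)\geq S(CD_0|X'^n)$, and only then applies conditional purity to obtain $S(WB_0'R^n|X'^n)$, which the decoupling lemma splits as $S(WB_0'|X'^n)+nS(B|X)-n\delta$. Adding the two, the entanglement contributions $S(B_0)=S(D_0)=\log K$ cancel and the residual is exactly $-I(X'^n:WB_0')$, with no stray $D_0$-term. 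Your route can be repaired by replacing your $D_0$-conditioned identity with this subadditivity step; the rest of your argument (the $\cT_n$ channel, the appeal to $I_{n\delta}(\omega^{\otimes n})=nI_\delta(\omega)$, and the $\epsilon\to 0$ limit) is then exactly as in the paper.
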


\begin{proof}
\aw{As already discussed in the introduction to this section,}
the encoder of Bob is without loss of generality 
an isometry $\aw{U:{B^n B_0} \longrightarrow {C W B_0'}}$.
The existence of a high-fidelity  
decoder using $X^n$ as side information 
implies that systems $W B_0'$ are decoupled from 
system $R^n$ conditional on \aw{$X^n$; indeed, by Lemma \ref{decoupling condition},} 
$I(R^n:WB_0'|X'^n) \leq n \delta(n,\epsilon)$. 
%
The first part of the converse reasoning is as follows:
\begin{align*}
  nR_B  =    \log |C| 
       &\geq S(C)  \\
       &\geq S(CW B_0')-S(W B_0')  \\
       &=    S(B^n)+S(B_0)-S(W B_0'), 
\end{align*} 
\aw{where the second inequality is a version of subadditivity, and
the equality in the last line holds because the encoding isometry \aw{$U$} 
does not change the entropy; furthermore, $B^n$ and $B_0$ are initially
independent.}
Moreover, the decoder can be dilated to an isometry $\aw{V}: X^n C D_0 \longrightarrow \hat{X}^n \hat{B}^n D_0' W_D$, 
where $W_D$ and $D_0'$ are the environment of \aw{Debbie}'s decoding operation and 
\aw{the output of \aw{Debbie}'s entanglement, respectively.}
Using the decoupling condition of Lemma \ref{decoupling condition} once more, we have
\begin{align*}
 nR_B+S(D_0)&= \log |C| + S(D_0) \\
     &\geq S(C)+S(D_0)  \\
     &\geq S(C D_0)  \\
     &\geq S(X^nCD_0|{X'}^n)  \\
     &= S(\hat{X}^n\hat{B}^n D_0' W_D|{X'}^n) \\
     &= S(W B_0' R^n|{X'}^n) \\
     &\geq S(R^n|{X'}^n)\!+S(WB_0'|{X'}^n) \!\! -\! n \delta(n,\epsilon) \\
     &= S(B^n|X^n)\!+S(WB_0'|{X'}^n)  \!\! -\! n \delta(n,\epsilon), 
\end{align*}
\aw{where the third and fourth line are by subadditivity of the entropy;
the fifth line follows because the decoding isometry $V$ does not change the entropy. 
The sixth line holds because for any given $x^n$  
the overall state of the systems $\hat{X}^n\hat{B}^n B_0'D_0' W W_DR^n$ is pure. 
The penultimate line is due to the decoupling condition (Lemma \ref{decoupling condition}), 
and the last line follows because for a given $x^n$ the overall state 
of the systems $B^nR^n$ is pure.}
Adding these two relations and dividing by $2n$, we obtain 
\begin{align*}
  R_B \!\geq \! \frac{1}{2} (S(B)\!+\!S(B|X)) \!-\! \frac{1}{2n} I(X'^n\!\!:\!WB_0') \!-\! \frac{1}{2}\delta(n,\epsilon),
\end{align*}
where the terms $S(B_0)$ and $S(D_0)$ cancel out each other because
$B_0$ and $D_0$ are $K$-dimensional quantum registers with maximally
entangled states $\Phi_K$. 

In the above \aw{inequality, the mutual information on the right hand side} 
is bounded as
\begin{align*}
 I(X'^n:WB_0') \leq I_{n\delta(n,\epsilon)}({\omega^{\otimes n}}) = nI_{\delta(n,\epsilon)}({\omega}), 
\end{align*}
\aw{To see this, define the CPTP map $\mathcal{T}:B^n \longrightarrow \widetilde{W}:= WB_0'$ as
$\mathcal{T}(\rho):= \Tr_{CD_0} (U\otimes\1)(\rho\otimes\Phi_K^{B_0D_0})(U\otimes\1)^\dagger$.
Then we have $I(R^n:\widetilde{W}|{X'}^n) \leq n\delta(n,\epsilon)$, and hence
the above inequality follows directly from Definition \ref{I_delta}.}

The second statement of the theorem follows because
$\delta(n,\epsilon)$ tends to zero as \aw{$n \rightarrow \infty$ and $\epsilon \rightarrow 0$.}
\end{proof}

\begin{remark}
\label{rem:example}
Notice that the term $\frac{1}{n} I({X'}^n:WB_0')$ is not necessarily small. 
For example, suppose \aw{that the source is of the form
$\ket{\psi_x}^{BR} = \ket{\psi_x}^{B'R} \otimes \ket{\psi_x}^{B''}$ for all $x$}; 
clearly it is possible to perform the coding task by
coding only $B'$ and trashing $B''$ (i.e.~putting it into $W$), because 
by having access to $x$ the decoder can reproduce $\psi_x^{B''}$ locally. In 
this setting, characteristically $\frac{1}{n} I({X'}^n:WB_0')$ does not go 
to zero because ${B''}^n$ ends up in $W$.  
\end{remark}

\subsection{Achievable rates}\label{subsec:Achievable rates}
In this subsection, we provide achievable rates \aw{both for the unassisted and entanglement-assisted} model.

\begin{theorem}
\label{State_merging_rate}
\aw{In the unassisted model, there exists a sequence of side information codes that
compress Bob's system $B^n$} at the asymptotic qubit rate
\begin{align*}
  R_B = \frac{1}{2}\left( S(B)+S(B|X) \right).
\end{align*}
\end{theorem}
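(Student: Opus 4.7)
The plan is to achieve the rate $R_B = \tfrac{1}{2}\bigl(S(B)+S(B|X)\bigr)$ by applying the unassisted fully quantum Slepian--Wolf (FQSW) / ``mother'' protocol to the purification of Bob's source, with Debbie's classical side information $X^n$ (received from Alice at the negligible-to-this-rate classical cost $R_X$) playing the role of her side information.

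First, I would work with the purification
\[
  \ket{\omega}^{XX'BR}
     = \sum_x \sqrt{p(x)}\,\ket{x}^X\ket{x}^{X'}\ket{\psi_x}^{BR},
\]
and take $n$ copies. After Alice transmits $X^n$ to Debbie (at some classical rate that contributes nothing to $R_B$), Debbie effectively holds $X^n$; the task reduces to Bob transferring $B^n$ to Debbie while preserving correlations with the reference $X'^nR^n$. Viewed tripartitely, Bob is the sender holding $B$, Debbie is the receiver holding $X$, and $X'R$ is the reference.

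Next, I would invoke the FQSW / mother protocol in its coherent, unassisted form. Bob applies a Haar-random unitary on $B^n$ (suitably restricted to the typical subspace) and keeps a subsystem $C$ of dimension $2^{nR_B}$ to send to Debbie; the standard decoupling lemma then implies that for any
\[
  R_B > \tfrac{1}{2} I(B:X'R)_\omega,
\]
Bob's discarded environment $W$ becomes approximately product with the reference $X'^nR^n$. By Uhlmann's theorem, this yields a decoding isometry for Debbie that, using her classical copy of $X^n$ coherently as side information, reconstructs $\hat B^n$ in a state close to $\ket{\psi_{x^n}}^{\hat B^n R^n}$ on typical $x^n$; averaging over $x^n$ then gives the required average fidelity (\ref{F_QCSW_unassisted1}). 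A direct computation using purity of $\ket{\omega}^{XX'BR}$ reduces the rate to the claimed value:
\[
  \tfrac{1}{2} I(B:X'R)_\omega
     = \tfrac{1}{2}\bigl(S(B) + S(X'R) - S(BX'R)\bigr)
     = \tfrac{1}{2}\bigl(S(B) + S(XB) - S(X)\bigr)
     = \tfrac{1}{2}\bigl(S(B) + S(B|X)\bigr),
\]
where I used $S(X'R) = S(XB)$ and $S(BX'R) = S(X)$ from purity, and the chain rule.

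The reason the protocol works \emph{without} pre-shared entanglement is that, since $X$ is classical, $I(B:X)_\omega = S(B) - S(B|X)\ge 0$; the unassisted FQSW/mother protocol therefore does not consume entanglement but in fact \emph{generates} $\tfrac{1}{2}I(B:X)_\omega$ ebits as a by-product, which is simply discarded for the purposes of Theorem~\ref{State_merging_rate}. The main technical obstacle will be carrying out the decoupling argument with a reference that contains the classical register $X'^n$: one needs to work with projectors onto the \emph{conditionally} typical subspace of $B^n$ given $x^n$ (rather than the marginal typical subspace of $B^n$), and verify that Haar-random encoding on this conditional typical subspace decouples Bob's leftover system from $X'^nR^n$ after conditioning on $X^n$. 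This is a standard but careful adaptation of the FQSW analysis of~\cite{Abeyesinghe2009} to the hybrid classical-quantum reference structure of the present source.
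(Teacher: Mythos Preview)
Your proposal is correct and takes essentially the same approach as the paper: apply FQSW/coherent state merging to the purified source $\ket{\Omega}^{XX'BR}$ with $B$ to be sent, $X$ as Debbie's side information, and $X'R$ as reference, obtaining the rate $\tfrac12 I(B:X'R)=\tfrac12(S(B)+S(B|X))$. The one simplification you miss is that the ``technical obstacle'' you flag is not actually there: rather than re-running the decoupling argument with conditional typical projectors, the paper invokes FQSW as a black box on the pure state $\ket{\Omega}$ (treating $X'$ as part of the quantum reference), and then observes that the entanglement fidelity $F_e$ for $\Omega^{\otimes n}$ lower-bounds the average fidelity $\overline{F}$ for $\omega^{\otimes n}$ by monotonicity of the fidelity under the dephasing measurement $\{\proj{x^n}\}$ on $X'^n$.
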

\begin{proof}
We recall that in a side information code, Bob aims to send his system $B^n$ to Debbie while she has access to side information system $X^n$ as explained at the beginning of this section.
We can use the fully quantum Slepian-Wolf protocol (FQSW), also called coherent state merging protocol
(\cite{Abeyesinghe2009} section 7), as a subprotocol since it considers the entanglement 
fidelity as the decodability criterion, which is more stringent than
the average fidelity defined in (\ref{F_QCSW_unassisted1}). 
Namely, let 
\[
  \ket{\Omega}^{X X^{\prime} B R}
   =\sum_{x \in \mathcal{X}} \sqrt{p(x)} \ket {x}^{X} \ket {x}^{X'} \ket{\psi_{x}}^{B R}
\]
be the source in \aw{the} FQSW problem, where $B$ is the system to be compressed, $X$ is the side information at the decoder,  $R$ and $X'$ are the reference systems. Bob applies the corresponding encoding map of the FQSW protocol $\cE_B: B^n \longrightarrow C$ and sends system $C$ to Debbie who then applies the decoding map of the FQSW protocol $\cD: X^n C \longrightarrow X^n \hat{B}^n$ to her side information system $X^n$ and the compressed information $C$ to reconstruct system $\hat{B}^n$. These encoding and decoding operations preserve the entanglement fidelity $F_e$ which  is the decodability criterion  of the FQSW problem:
\begin{align*} 
  \label{F_QCSW}
   F_e &\!
   \!=\! \! F \! \! \left(\! \Omega^{X^{\! n}\!  {X'}^n\!  B^{\! n} \! R^n } \! \! \! ,\! \left(\mathcal{D} \! \circ\!  (\id_{\! X^n}\!  \! \otimes \! \mathcal{E}_{\! B}\! )\!  \otimes\!  \id_{{X'}^n\!   R^n}\! \right) \! \Omega^{\! X^{\! n} \!  {X'}^{ n}  \!  B^n\!  R^n } \! \! \right)  \nonumber  \\
       &\! \! \! \leq \!  \!  F \! \! \left(\! \omega^{X^n \! {X'}^n\!  B^n \! R^n \! } \! \! ,\! \left(\mathcal{D} \! \circ\!  (\id_{X^{\! n}}\!  \! \otimes \! \mathcal{E}_{\! B}\! ) \! \otimes\!  \id_{{X'}^n \! R^n}\! \right) \! \omega^{X^{\! n} \! {X'}^n\!  B^{\! n} \! R^{n} \! } \! \right) \\
        & \! \! = \overline{F},
\end{align*}
where the inequality is due to the monotonicity of fidelity 
under \aw{CPTP maps, namely the projective measurement on system $X'$ in the computational
basis $\{ \ketbra{x}{x}\}$)}. 
Therefore, if an encoding-decoding scheme attains an entanglement fidelity for 
the FQSW problem going to $1$, then it will have the average fidelity for 
the CQSW problem going to $1$ as well. Hence, the FQSW rate 
\begin{align*}
  R_B= \frac{1}{2}I(B:X^{\prime} R)_{\Omega}=\frac{1}{2}(S(B)_{\omega}+S(B|X)_{\omega}),
\end{align*}
\aw{is achievable.}
\end{proof}

\begin{remark}
Notice that for the source considered at the end of the previous subsection
\aw{in Remark \ref{rem:example}}, where 
$\ket{\psi_x}^{BR} = \ket{\psi_x}^{B'R} \otimes \ket {\psi_x}^{B''}$ 
for all $x$, we can achieve a rate strictly smaller than the rate 
stated in the above theorem. The reason is that $R$ is only 
entangled with $B'$, so clearly it is possible to perform the coding task by
coding only $B'$ and trashing $B''$ because by having access to $x$ 
the decoder can reproduce the state $\psi_x^{B''}$ locally. 
Thereby, the rate $\frac{1}{2} (S(B')+S(B'|X))$ is achievable by 
applying coherent state merging as above.
\end{remark}

\medskip
The previous observation shows that in general, the rate $\frac12(S(B)+S(B|X))$
from Theorem \ref{State_merging_rate} is not optimal. By looking for a systematic
way of obtaining better rates, we have the following result in the entanglement-assisted
model.

\begin{theorem}
\label{QSR_achievability}
\aw{In the entanglement-assisted model, there exists a sequence of side information codes} 
with the following asymptotic entanglement and qubit rates:
\begin{align*}
  E&=\frac{1}{2}\left(I(C:W)_{\sigma}-I(C:X)_{\sigma}\right) 
    \quad \text{and}\quad\\
  R_B&= \frac{1}{2}\left(S(B)_{\omega}+S(B|X)_{\omega}-I(X:W)_{\sigma} \right),   \nonumber
\end{align*}
where $C$ and $W$ are, respectively, the system and environment of an
isometry $V:{B\rightarrow CW}$ on $\omega^{XBR}$ 
producing the state $\sigma^{XCWR} = (\1_{XR}\otimes V)\omega^{XBR} (\1_{XR}\otimes V)^{\dagger}$, such that $I(W:R|X)_{\sigma}=0$.
\end{theorem}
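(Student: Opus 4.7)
The plan is to construct an explicit protocol that exploits the local decomposition $V:B\to CW$ together with the Markov condition $I(W:R|X)_\sigma=0$ in order to beat the plain state-merging rate of Theorem~\ref{State_merging_rate}. The first step is purely local: Bob applies $V^{\otimes n}$ to his system $B^n$, producing $C^nW^n$ and $n$ copies of $\sigma^{XCWR}$ with purification $\ket{\sigma}^{XX'CWR}$. No communication is used in this step, so the entire task is reduced to transferring from Bob to Debbie enough information about $C^nW^n$ so that she can reconstruct $\hat B^n = V^\dagger(\hat C^n\hat W^n)$.

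In the second step, Bob and Debbie execute a quantum state redistribution (QSR) protocol in the sense of \cite{Devetak2008_2,Yard2009} on $\ket\sigma^{XX'CWR}$: Bob (the sender) holds the pair $(C,W)$ and coherently transfers $C$ while keeping $W$; Debbie (the receiver) uses her classical side information $X$; and $X'R$ plays the role of the external reference. The general QSR resource inequality gives, asymptotically, a net entanglement cost of $\tfrac12\bigl(I(C:W)_\sigma-I(C:X)_\sigma\bigr)$, which is exactly the $E$ claimed in the theorem, and ends with the joint state on $X^n\hat C^nW^nX'^nR^n$ close in fidelity to $\sigma^{\otimes n}$, with $\hat C^n$ now on Debbie's side.

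The third step is where the Markov condition enters decisively. The hypothesis $I(W:R|X)_\sigma=0$ is equivalent to the conditional product form $\sigma_x^{WR}=\sigma_x^W\otimes\sigma_x^R$ for every $x$. This lets Debbie synthesise locally, from $X^n$ and a fraction of the pre-shared entanglement, a register $\hat W^n$ whose joint state with the already-delivered $\hat C^n$ and with $R^n$ is asymptotically indistinguishable from $\sigma^{\otimes n}$; applying $V^\dagger$ locally then yields $\hat B^n$ with the fidelity required by \eqref{F_QCSW_assisted}. The overall quantum communication rate, after combining the QSR transfer of $C$ and the local completion of $\hat W^n$, simplifies via the identities $S(B)=S(CW)_\sigma$, $S(B|X)=S(CW|X)_\sigma$ and the Markov chain rule $S(W|XR)=S(W|X)$ to the claimed $R_B=\tfrac12\bigl(S(B)+S(B|X)-I(X:W)_\sigma\bigr)$; observe that when $V$ is trivial (so $C=B$, $W$ trivial) this collapses to the rate of Theorem~\ref{State_merging_rate}, while the savings $\tfrac12 I(X:W)_\sigma$ quantify how much of $B$ could be absorbed into classical information about $X$.

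The main obstacle I anticipate is the rigorous justification of the third step: one must show that Debbie's locally manufactured $\hat W^n$, together with the QSR output $\hat C^n$, merges into a state close to $\sigma^{\otimes n}_{x^n}$ on $CW$ conditional on $x^n$, and do so without paying additional quantum or entanglement resources beyond those already accounted for. I expect this to follow by applying the decoupling theorem and Uhlmann's theorem to the conditional product decomposition guaranteed by $I(W:R|X)_\sigma=0$, but tracking the interplay between the entanglement consumed by QSR and that used to prepare $\hat W^n$, and controlling the resulting per-copy fidelity so that it tends to $1$, is the delicate part. Routine continuity estimates analogous to those in the converse proof of Section~\ref{subsec: Converse bound} will then yield the asymptotic statement.
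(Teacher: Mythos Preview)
Your first two steps coincide with the paper's proof: Bob applies $V^{\otimes n}$ and runs QSR to ship $C^n$ to Debbie while retaining $W^n$, with side information $X^n$ at Debbie and reference $X'^nR^n$; this yields exactly the stated quantum rate $\frac12 I(C:X'R|X)_\Sigma$ and net entanglement cost $\frac12\bigl(I(C:W)-I(C:X)\bigr)$, and the Markov condition indeed reduces the former to $\frac12\bigl(S(B)+S(B|X)-I(X:W)\bigr)$. The gap is in your third step. You propose that Debbie manufacture $\hat W^n$ from $X^n$ and pre-shared entanglement so that $\hat C^n\hat W^nR^n$ given $x^n$ approximates $\sigma_{x^n}^{CWR}=\proj{\sigma_{x^n}}$, and then apply $V^\dagger$. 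But this target is \emph{pure}, while after QSR the register $W^n$ that actually purifies $\hat C^nR^n$ still sits at Bob's side. A $\hat W^n$ prepared from $X^n$ and fresh entanglement alone is independent of $\hat C^n$, so $\hat C^n\hat W^nR^n$ carries the product $\sigma_{x^n}^{CR}\otimes\hat\omega^{\hat W^n}$, which equals $\proj{\sigma_{x^n}}$ only when $I(C:W|X)_\sigma=0$. Your ``decoupling and Uhlmann'' sketch does not say how Debbie could otherwise engineer the missing $C$--$W$ correlations while leaving the $C$--$R$ correlations intact, and any entanglement spent here would in any case exceed the stated $E$.

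The paper uses Uhlmann's theorem in the opposite direction, and this is the missing idea: the product form $\sigma_x^{WR}=\sigma_x^W\otimes\sigma_x^R$ means the purification $\ket{\sigma_x}^{CWR}$ factors under an $x$-dependent isometry $V_x:C\to V\hat B$ as $(\1_{WR}\otimes V_x)\ket{\sigma_x}^{CWR}=\ket{\nu_x}^{VW}\otimes\ket{\psi_x}^{\hat BR}$. Debbie, knowing $x^n$, simply applies $V_{x^n}$ to $\hat C^n$, outputs $\hat B^n$, and traces out $V^n$; Bob's $W^n$ is likewise discarded. Thus the Markov condition is used to \emph{split} $C$ into a $W$-entangled part and an $R$-entangled part, rather than to rebuild $W$ at Debbie's side. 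No further resources are consumed, and the fidelity criterion follows directly by monotonicity under the isometry and partial trace.
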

\begin{proof}
Notice that there is always an isometry $V:{B\rightarrow CW}$  
with $I(W:R|X)_{\sigma}=0$, and the trivial example is the isometry $V:{B\rightarrow B W}$ where system $W$ is a trivial system with state $\ketbra{0}^W$.

First, Bob applies \aw{the} isometry $V$ to each copy \aw{of the $n$ systems $B_1,\ldots,B_n$}:  
\begin{align*}
   &\sigma^{X X^{\prime} CWR} \\
       &\quad \quad =\!\! (V^{B \to C W} \!\!\otimes\! \1_{X X^{\prime}R}) 
             \omega^{X X^{\prime}  BR}
          (V^{B \to C W} \!\!\otimes\! \1_{X X^{\prime}R})^\dagger   \\
       &\quad \quad = \sum_x p(x) \ketbra{x}^{X}\otimes \ketbra{x}^{X^{\prime}}\otimes \ketbra{\phi_{x}}^{CWR}. 
\end{align*}
Now consider the following source state from which the state $\sigma^{X X^{\prime} CWR}$ is obtained by applying projective measurement on system $X'$ in the computational
basis $\{ \ketbra{x}{x}\}$,
\[
  \ket{\Sigma}^{X X^{\prime} CW R}
   =\sum_{x \in \mathcal{X}} \sqrt{p(x)} \ket {x}^{X} \ket {x}^{X'} \ket{\phi_{x}}^{CW R}.
\]
For this source, consider Bob and \aw{Debbie} respectively hold 
the $CW$ and $X$ systems, and Bob wishes to send system $C$ to \aw{Debbie} while keeping 
$W$ for himself.
For many copies of the above state, \aw{the} parties can apply \aw{the} quantum state redistribution 
(QSR) protocol \cite{Yard2009,Oppenheim2008} for transmitting $C$, having access to system $W$ as 
side information at the encoder and to $X$ as side information at the decoder. 
According to this protocol, Bob needs exactly the rate of 
$R_B = \frac{1}{2}I(C:X^{\prime} R |X)_{\Sigma}
     = \frac{1}{2}(S(B)_{\omega}+S(B|X)_{\omega}-I(X:W)_{\sigma})$ 
qubits of communication.
\aw{The protocol requires the rate of $\frac{1}{2} I(C:W)_{\Sigma}=\frac{1}{2} I(C:W)_{\sigma}$
ebits of entanglement shared between the encoder and decoder, and at the end of the protocol
the rate of $\frac{1}{2} I(C:X)_{\Sigma}=\frac{1}{2} I(C:X)_{\sigma}$ ebits of entanglement is 
distilled between the encoder and the decoder (see equations (1) and (2) in \cite{Yard2009}).}
This protocol \aw{attains high} fidelity for \aw{the} state $\Sigma^{X^n {X'}^n C^n W^n R^n }$, 
and consequently for \aw{the} state $\sigma^{X^n {X'}^n  C^n W^n R^n }$ due to \aw{the} monotonicity 
of fidelity under \aw{CPTP maps}:
\begin{align} \label{F_QSR}
  1\!\!-\!\epsilon \! &\leq \!F \!\left(\!{\Sigma}^{X^n {X'}^n  C^n W^n R^n }\!\!\!\!\otimes\! \Phi_L^{B_0'D_0'}\!\!,{\hat{\Sigma}}^{\hat{X}^n {X'}^n \hat{ C}^n \! \hat{W}^n\!  R^n \! B'_0 \! D'_0 }
                          \right) \nonumber \\
             &\leq \!F \!\left(\!{\sigma}^{X^n {X'}^n  C^n W^n R^n }\!\!\!\!\otimes\! \Phi_L^{B_0'D_0'}\!\!,{\hat{\sigma}}^{\hat{X}^n {X'}^n \hat{ C}^n \! \hat{W}^n\!  R^n \! B'_0 \! D'_0 }
                          \right)\!\!,
\end{align} 
where
\begin{align*}
&{\hat{\Sigma}}^{\hat{X}^n \!{X'}^n \!\hat{ C}^n \! \hat{W}^n\!  R^n \! B'_0 \! D'_0 } \\
&\!\! =\!\!\left(\!\mathcal{D} \!\circ\! (\!\id_{X^n \!D_0} \!\otimes \!\mathcal{E}_{C\!W \!B_0}\!)\! \otimes\! \id_{{X'}^n\! R^n}\!\right)  
 \!\!\Sigma^{\!X^n\! {X'}^n\! C^n\! W^n \!R^n } \!\!\!\otimes\! \Phi_K^{\!B_0\!D_0}\!\!,
\end{align*}
and 
\begin{align*}
&{\hat{\sigma}}^{\hat{X}^n \!{X'}^n \!\hat{ C}^n \! \hat{W}^n\!  R^n \! B'_0 \! D'_0 } \\
&\!\! =\!\!\left(\!\mathcal{D} \!\circ\! (\!\id_{X^n \!D_0} \!\otimes \!\mathcal{E}_{C\!W \!B_0}\!)\! \otimes\! \id_{{X'}^n\! R^n}\!\right)  
 \!\sigma^{\!X^n\! {X'}^n\! C^n\! W^n \!R^n } \!\!\!\otimes\! \Phi_K^{\!B_0\!D_0}\!\!\!,
\end{align*}
and $\mathcal{E}_{CW B_0}$ and $\mathcal{D}$ are respectively the 
encoding and decoding operations of the QSR protocol. 
The condition $I(W:R|X)_{\sigma}=0$ implies that for every $x$ the systems $W$ and $R$ are decoupled:  
\begin{align*}\label{Decoupling_0} 
\phi_{x}^{WR}=\phi_{x}^W  \otimes \phi_{x}^R.   
\end{align*} 
By Uhlmann's theorem \cite{UHLMANN1976,Jozsa1994_2}, 
there exist isometries $V_{x}:{C\rightarrow VB}$ for all $x \in \mathcal{X}$, such that 
\[ 
  (\1\otimes V_{x}^{C\rightarrow VB}) \ket{\phi_{x}}^{CWR}
             =\ket{\nu_{x}}^{VW}  \otimes \ket{\psi_{x}}^{BR}.   
\]
After applying the decoding operation $\mathcal{D}$ of QSR, 
\aw{Debbie} applies the isometry $V_{x}:{C\rightarrow VB}$ for each $x$, 
which does not change the fidelity (\ref{F_QSR}).
By tracing out the unwanted systems $V^n W^n$, 
due to the monotonicity of \aw{the} fidelity under partial trace, 
the fidelity defined in (\ref{F_QCSW_assisted}) \aw{will go to $1$} in this encoding-decoding scheme. 
\end{proof}



\begin{remark}
In Theorem \ref{QSR_achievability}, the smallest achievable rate, 
when unlimited entanglement is available,
is equal to $\frac{1}{2}(S(B)+S(B|X)-I_0)$.
This rate resembles the converse bound 
$R_B \geq \frac{1}{2}(S(B)+S(B|X)-\widetilde{I}_0)$,
except \aw{that $\widetilde{I}_0 \geq I_0$}. 
In the definition of $\widetilde{I}_0$, it seems unlikely that we can take the 
limit  of $\delta$ going to 0 directly because there is no dimension bound on 
the systems $C$ and $W$, so compactness cannot be used directly to prove that 
$\widetilde{I}_0$ and $I_0$ are equal. 
\end{remark}

\aw{\begin{remark}
Looking again at the entanglement rate in Theorem \ref{QSR_achievability},
$E=\frac{1}{2}\left(I(C:W)_{\sigma}-I(C:X)_{\sigma}\right)$, we reflect that
there may easily be situations where $E\leq 0$, meaning that no entanglement is
consumed, and in fact no initial entanglement is necessary. In this case,
the theorem improves the rate of Theorem \ref{State_merging_rate} by the
amount $\frac12 I(X:W)$. 
This motivates the definition of the following variant of $I_0$,
\begin{align*}
I_{0-}(\omega) :=& \sup I(X:W) \text{ s.t. } I(R:W|X)=0,\\
& I(C:W)-I(C:X) \leq 0,
\end{align*}
where the supremum is over all isometries $V:B\rightarrow CW$. 
\par
As a corollary to these considerations, in the unassisted model
the rate $\frac{1}{2}\left(S(B)+S(B|X)-I_{0-} \right)$ is achievable.
\end{remark}}

\subsection{Optimal compression rate for generic sources}
\label{sec:generic side info}

In this subsection, we find the optimal compression rate for  
\emph{generic} sources, by which we mean any source except for a
submanifold of lower dimension within the set of all sources.
Concretely, we will consider sources where there is at least one $x$ 
for which the reduced state $\psi_x^B= \Tr_R \ketbra{\psi_x}{\psi_x}^{BR}$ 
has full support on $B$. 
In this setting, coherent state merging as a subprotocol gives the optimal 
compression rate, so not only does the protocol not use any initial 
entanglement, but some entanglement is distilled at the end of the protocol.

\begin{theorem}
\label{theorem:generic optimal rate}
In both unassisted and entanglement-assisted models, for any side information code of a generic source, the asymptotic compression rate $R_B$ of Bob is lower bounded
\begin{align*}
  R_B \geq \frac{1}{2}\left(S(B)+S(B|X)\right),
\end{align*}
so the protocol of Theorem \ref{State_merging_rate} has optimal rate for a generic source.
Moreover, in that protocol no prior entanglement is needed and 
a rate $\frac{1}{2}I(X:B)$ ebits of entanglement 
is distilled between the encoder and decoder.
\end{theorem}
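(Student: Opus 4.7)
My plan is to reduce the converse to the bound of Theorem~\ref{converse_QCSW}, namely $R_B \geq \frac{1}{2}\bigl(S(B) + S(B|X) - \widetilde{I}_0(\omega)\bigr)$ (which a fortiori applies to the unassisted model), and show that the genericity assumption forces $\widetilde{I}_0(\omega) = 0$. The matching achievability is already Theorem~\ref{State_merging_rate}: its proof executes the fully quantum Slepian--Wolf (FQSW) protocol on the source $\ket{\Omega}^{XX'BR}$ with \emph{no} initial entanglement, and FQSW simultaneously produces shared entanglement at rate $\frac{1}{2}I(B:X)_\Omega = \frac{1}{2}(S(B) - S(B|X)) = \frac{1}{2}I(X:B)_\omega$, which is the claimed distillation rate.

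The heart of the argument is therefore to show $\widetilde{I}_0(\omega) = 0$ under the genericity hypothesis: there exists $x_0$ with $p(x_0) > 0$ such that $\psi_{x_0}^B$ has full rank. I will first establish the clean version $I_0(\omega) = 0$. Pick any channel $\mathcal{T}:B\to W$ with Stinespring dilation $V:B\hookrightarrow WE$ satisfying $I(R:W|X)_\sigma = 0$; conditioning on $X=x_0$ yields $\sigma_{x_0}^{WR} = \sigma_{x_0}^W \otimes \psi_{x_0}^R$. Using the Schmidt decomposition $\ket{\psi_{x_0}}^{BR} = \sum_i \sqrt{\lambda_i}\ket{i}^B\ket{i}^R$ with all $\lambda_i > 0$, this product-state identity expands to
\begin{equation*}
\sum_{i,j}\sqrt{\lambda_i\lambda_j}\,\mathcal{T}(\ket{i}\bra{j})\otimes \ket{i}\bra{j}^R \;=\; \sum_i \lambda_i\,\sigma_{x_0}^W \otimes \ket{i}\bra{i}^R.
\end{equation*}
Matching coefficients forces $\mathcal{T}(\ket{i}\bra{j}) = \delta_{ij}\sigma_{x_0}^W$, so by linearity $\mathcal{T}(\rho) = \Tr(\rho)\,\sigma_{x_0}^W$ is the constant channel. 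In particular $\sigma_x^W = \sigma_{x_0}^W$ is independent of $x$, hence $I(X:W)_\sigma = 0$, giving $I_0(\omega) = 0$.

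To upgrade this to $\widetilde{I}_0(\omega) = 0$, I will replace the equalities by quantitative approximations. From $I(R:W|X)_\sigma \leq \delta$, Pinsker's inequality applied to the conditional slice at $X=x_0$ gives $\|\sigma_{x_0}^{WR} - \sigma_{x_0}^W\otimes\psi_{x_0}^R\|_1 = O\bigl(\sqrt{\delta/p(x_0)}\bigr)$. Since the minimum Schmidt coefficient $\lambda_{\min}^{x_0}$ is strictly positive, the same block comparison yields $\|\mathcal{T}(\ket{i}\bra{j}) - \delta_{ij}\sigma_{x_0}^W\|_1 = O\bigl(\sqrt{\delta}/\lambda_{\min}^{x_0}\bigr)$, from which $\|\sigma_x^W - \sigma_{x_0}^W\|_1$ is small for every $x$ with a bound depending only on $\omega$ and $\delta$, \emph{not} on $\dim W$. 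Applying the Alicki--Fannes--Winter continuity bound for mutual information with one classical register then yields $I(X:W)_\sigma \leq \eta(\delta)$ with $\eta(\delta)\to 0$ and $\eta$ dimension-independent, so $\widetilde{I}_0(\omega) = \lim_{\delta\searrow 0}I_\delta(\omega) = 0$.

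The main obstacle is precisely this dimension-independence: $I_\delta$ involves a supremum over channels into \emph{arbitrary} output Hilbert spaces $W$, so a naive compactness/continuity argument for the limit $\delta\searrow 0$ fails. The genericity hypothesis resolves this by rigidly constraining any feasible channel to be (approximately) constant, making $|W|$ irrelevant. The remaining technical care point is to verify that the continuity estimate on $I(X:W)_\sigma$ is truly dimension-free; this is supplied by the Alicki--Fannes--Winter inequality thanks to the classical nature of $X$.
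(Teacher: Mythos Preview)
Your proposal is correct and follows the same high-level architecture as the paper's proof: reduce to the converse $R_B \geq \tfrac12\bigl(S(B)+S(B|X)-\widetilde I_0\bigr)$ of Theorem~\ref{converse_QCSW}, show $\widetilde I_0(\omega)=0$ under the full-rank hypothesis, and invoke Theorem~\ref{State_merging_rate} (FQSW) for achievability and the distillation rate $\tfrac12 I(X:B)$. The dimension-free continuity step via the Alicki--Fannes--Winter bound on $S(X|W)$ is also exactly what the paper uses.

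Where you differ is in the mechanism for proving that every feasible channel $\cT:B\to W$ is (approximately) constant. The paper works with \emph{purifications}: it applies Uhlmann's theorem to the near-product state $\phi_{x_0}^{WR}\approx\phi_{x_0}^W\otimes\psi_{x_0}^R$ to obtain a decoding isometry $V_0:C\to BV$, then uses Lemma~\ref{full_support_lemma} (the existence of operators $T_x$ on the reference with $\|T_x\|_\infty\leq\lambda_0^{-1/2}$ such that $\ket{\psi_x}=(\1\otimes T_x)\ket{\psi_0}$) together with Lemma~\ref{T_norm1_inequality} to transport this single isometry to all $x$, yielding $\|\phi_x^W-\phi_0^W\|_1 = O(\delta^{1/4})$. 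You instead expand $(\cT\otimes\id_R)\psi_{x_0}$ directly in the Schmidt basis of $\ket{\psi_{x_0}}$ and read off from the block structure that $\|\cT(\ket{i}\!\bra{j})-\delta_{ij}\sigma_{x_0}^W\|_1 = O\bigl(\sqrt{\delta}/\lambda_{\min}^{x_0}\bigr)$, whence $\cT$ is close to the constant channel on all of $B$ (the Schmidt vectors span $B$ by full rank), and thus $\sigma_x^W\approx\sigma_{x_0}^W$ uniformly in $x$ with an $O(\delta^{1/2})$ bound. Your route is more elementary---it avoids Uhlmann, purifications, and Lemmas~\ref{full_support_lemma}--\ref{T_norm1_inequality} entirely---and even yields a tighter $\delta$-dependence; the paper's route, on the other hand, has a more operational flavour and makes the role of the decoding isometry explicit.
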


\begin{proof}
The converse bound of Theorem \ref{converse_QCSW} states that 
\aw{the} asymptotic quantum communication rate of Bob is lower bounded as
\begin{align}
  R_B \geq \frac{1}{2}\left(S(B)+S(B|X)-\widetilde{I}_0 \right),   \nonumber
\end{align}
where $\widetilde{I}_0$ \aw{comes from} Definition \ref{I_delta}. We will show that for generic sources, $\widetilde{I}_0 = I_{0}= 0$.
Moreover, Theorem \ref{State_merging_rate} states that using coherent state 
merging, the asymptotic qubit rate of $\frac{1}{2}(S(B)+S(B|X))$ is achievable, 
that no prior entanglement is required and a rate of 
$\frac{1}{2}I(X:B)$ ebits of entanglement is distilled between the encoder and the decoder.

We show that for any CPTP map $\cT:B\to W$, which acts on a generic $\omega^{XBR}$ and produces  
state $\sigma^{XWR} = (\id_{XR}\otimes \cT)\omega^{XBR}$ such that $I(R:W|X)_{\sigma}\leq \delta$ 
for  $\delta \geq 0$, the quantum mutual information 
$I(X:W)_{\sigma} \leq \delta' \log |X| +2h (\frac12\delta')$ where $\delta'$ is defined in 
\aw{Eq.~(\ref{phix_phi0_distance}) below}. 
Thus, we obtain
\[ 
  \widetilde{I}_0 = \lim_{\delta \searrow 0} I_\delta = 0.
\]
\aw{To show this claim, we proceed as follows.} From $I(R:W|X)_{\sigma}\leq \delta$ we have
\begin{align*}
  I(R:W|X=x)_\sigma \leq \frac{\delta}{p(x)} \quad \quad \forall x \in \mathcal{X},
\end{align*}
\aw{so} by Pinsker's inequality \cite{Schumacher2002} we obtain
\begin{align*}
  \left\| \phi_x^{W R}- \phi_x^{W} \otimes \phi_x^{R} \right\|_1  \leq \sqrt{\frac{2 \delta \ln 2}{p(x)}} \quad \quad \forall x \in \mathcal{X}.
\end{align*}
By Uhlmann's theorem (Lemma~\ref{lemma:Uhlmann1} and Lemma~\ref{lemma:Uhlmann2}), there exists an isometry $V_x:{C \to BV}$ such that
\begin{align}\label{eq3}
 &\left\| (V_x \otimes \1_{WR}) \phi_x^{CW R} (V_x \otimes \1_{WR})^{\dagger}
                                        - \theta_x^{WV} \otimes \psi_x^{BR} \right\|_1  \nonumber \\
 & \quad \quad \quad \quad \quad \quad  \leq \sqrt{\sqrt{ \frac{\delta\ln 2}{2p(x)}} \left(2-\sqrt{ \frac{\delta\ln 2}{2p(x)}}\right)}, 
\end{align}
where $\theta_x^{WV}$ is a purification of $\phi_x^{W}$.
Since the source is generic by definition there is an $x$, say $x=0$, 
for which $\psi_0^B$ has full support on
$\mathcal{L}(\mathcal{H}_B)$, i.e.~$\lambda_0:=\lambda_{\min}(\psi_0^B)>0$. 
By Lemma \ref{full_support_lemma} in Appendix \ref{Miscellaneous_Facts}, 
for any $\ket{\psi_x}^{BR}$ there is an operator $T_x$ acting on the reference system such that
\begin{align*}
  \ket{\psi_x}^{BR} = (\1_B \otimes T_x) \ket{\psi_0}^{BR}. \nonumber
\end{align*}
Using this fact, we show that the decoding isometry $V_0$ in \aw{Eq.~(\ref{eq3})} works for all states:
\begin{align*}
 & \bigl\| (V_0 \otimes \1_{WR}) \phi_x^{CWR} (V_0^{\dagger} \otimes \1_{W R})
- \theta_0^{WV}  \otimes  \psi_x^{BR} \bigr\|_1   \\
   &= \bigl\|\!(\!V_0 \!\otimes \!\1_{W\!R}) (\!\1_{C\!W} \!\otimes \!T_x \!) \phi_0^{C\!W\!R} (\1_{C\!W} \!\otimes \!T_x)^{\dagger} (\!V_0^{\dagger}\otimes \1_{W \!R}\!) \\
   &\quad \quad\quad\quad - \theta_0^{W\!V} \!\otimes \!(\1_B \otimes T_x)\psi_0^{B\!R}(\1_B \!\otimes\! T_x)^{\dagger} \bigr\|_1   \\
   &= \!\!\bigl\| \!(\!\1_{BVW}\!\otimes \! T_x\!)(V_0\!\otimes\!\1_{W\!R}) \phi_0^{C\!W\!R} \!(\!V_0^{\dagger}\!\otimes\!\1_{W \!R}\!) (\!\1_{B\!V\!W}\!\otimes \!T_x^{\dagger}\!) 
         \\
          &\quad \quad\quad\quad   - (\1_{BVW}\! \otimes \! T_x)\theta_0^{WV} \otimes\psi_0^{B\!R}(\1_{B\!V\!W} \!\otimes\! T_x^{\dagger}) \bigr\|  \\
   &\leq  \norm{\1_{BVW} \otimes T_x}_{\infty}^2  \\
       &  \quad \quad\quad \norm{ \!(\!V_0 \!\otimes\! \1_{WR})\! \phi_0^{C\!W\!R}  (V_0^{\dagger}  \!\otimes\! \1_{W\!R}) -  \theta_0^{W\!V} \!\otimes\!\psi_0^{B\!R}\!}_1   \\
   &\leq   \frac{1}{\lambda_0}  \sqrt{\sqrt{ \frac{\delta\ln 2}{2p(0)}} \left(2-\sqrt{ \frac{\delta\ln 2}{2p(0)}}\right)}, 
\end{align*}
where the \aw{last two} inequalities follow from Lemma \ref{T_norm1_inequality} 
and Lemma \ref{full_support_lemma}, respectively. 
By tracing out the systems $VBR$ in the above chain of inequalities, we get 
\begin{align}
  \label{phix_phi0_distance}
  \norm{\phi_x^{W}- \phi_0^{W}}_1 
      \leq \frac{1}{\lambda_0}\sqrt{\sqrt{ \frac{\delta\ln 2}{2p(0)}} \left(2-\sqrt{ \frac{\delta\ln 2}{2p(0)}}\right)} \aw{=: \delta'}.   
\end{align}
Thus, by triangle inequality we obtain
\begin{align}
  \label{phi_phi0_distance} 
&  \norm{\underbrace{\sum_x p(x) \ketbra{x}{x}^X \otimes \phi_x^{W}}_{\sigma^{XW}}
         - \underbrace{\sum_x p(x) \ketbra{x}{x}^X \otimes \phi_0^{W}}_{\aw{=:}\sigma_0^{XW}}}_1 \nonumber \\
    &\quad \quad \quad \quad\quad \quad \leq \sum_x p(x) \norm{  \phi_x^{W}- \phi_0^{W}}_1 \nonumber \\ 
    &\quad \quad\quad \quad\quad \quad\leq \frac{1}{\lambda_0}\sqrt{\sqrt{ \frac{\delta\ln 2}{2p(0)}} 
          \left(2-\sqrt{ \frac{\delta\ln 2}{2p(0)}}\right)} = \delta'.    
\end{align}
By applying \aw{the Alicki-Fannes inequality in the form of Lemma \ref{AFW lemma},} to 
\aw{Eq.}~(\ref{phi_phi0_distance}), we have
\begin{align*}
  I(X\!:\!W)_\sigma &\!\!=\!S(\!X\!)_{\sigma}\!\!-\!S(X|W)_{\sigma} \!+\!S(X|W)_{\sigma_0}\!\!-\!\!S(X|W)_{\sigma_0} \\
                &= S(X|W)_{\sigma_0}-S(X|W)_{\sigma}                                 \\
                &\leq \delta' \log |X| + 2h\left(\frac12\delta'\right),  
\end{align*}
and the right \aw{hand side} of the above inequality vanishes for \aw{$\delta\rightarrow 0$}.
\end{proof}

\section{Towards the full rate region}\label{sec: full problem}
In this section, we consider the full rate region of the distributed compression of 
a \aw{classical-quantum} source.

\begin{theorem}
\label{unknown.theorem}
In the unassisted model, for distributed compression of a \aw{classical-quantum} source, 
the rate pairs satisfying the following inequalities are achievable:
\begin{equation}\begin{split}
  \label{eq:inner}
  R_X      &\geq S(X|B),                             \\
  R_B      &\geq\frac{1}{2}\left(S(B)+S(B|X)\right), \\
  R_X+2R_B &\geq S(B)+S(XB). 
\end{split}\end{equation}
\end{theorem}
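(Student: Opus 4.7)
The plan is to achieve the region described by \eqref{eq:inner} by identifying its two extreme corner points and showing that each is achievable via a previously established protocol; the rest of the region then follows automatically, since the rate region is closed, convex (by time-sharing), and monotone in the ``upper-right'' direction (by resource wasting). A quick computation shows that the two corner points of the region \eqref{eq:inner} are
\[
  P_1 = \bigl(S(X|B),\, S(B)\bigr) \quad\text{and}\quad
  P_2 = \Bigl(S(X),\, \tfrac{1}{2}(S(B)+S(B|X))\Bigr),
\]
obtained by intersecting the bound $R_X+2R_B=S(B)+S(XB)$ with each of the other two boundary hyperplanes in turn; the ``third'' candidate corner $(S(X|B),\tfrac12(S(B)+S(B|X)))$ is infeasible whenever $I(X:B)>0$.

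The point $P_1$ is exactly the Devetak--Winter rate pair \eqref{eq:DW} from the end of Section~\ref{sec:Source and compression model}: Alice compresses the classical sequence $X^n$ at rate $S(X|B)$ using quantum side information $B^n$ at the decoder \cite{Devetak2003,Winter1999}, while Bob independently applies Schumacher compression at rate $S(B)$ \cite{Schumacher1995}. The two schemes act on disjoint systems and are compatible: Debbie first decodes $X^n$ using her received classical register together with Bob's (Schumacher-decoded) approximation $\hat B^n$, and the joint fidelity converges to $1$ by a union-bound argument on the two vanishing error contributions.

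The point $P_2$ is achieved by a simple combination of the FQSW-based protocol of Theorem~\ref{State_merging_rate} with Shannon source coding. Alice applies classical source coding to $X^n$ at rate $R_X = S(X) = H(X)$, so that Debbie recovers $X^n$ exactly with probability $\geq 1-\epsilon_1$. Bob independently runs the FQSW (coherent state merging) protocol of Theorem~\ref{State_merging_rate} at rate $R_B=\tfrac12(S(B)+S(B|X))$, which achieves high average fidelity assuming Debbie has exact access to $X^n$. Conditioning on the (high-probability) event that Alice's classical information is decoded correctly, Bob's protocol delivers fidelity $\geq 1-\epsilon_2$, and combining the two errors via the triangle inequality for trace distance (together with Eq.~\eqref{eq:Fidelity_norm_relation}) yields overall fidelity $\to 1$. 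The main subtlety here, which I would have to treat carefully, is that in the unassisted model we need Bob's FQSW protocol to tolerate the small residual classical error from Alice; this is standard, however, because the FQSW encoder is oblivious to $X^n$ (it acts only on $B^n$) and Debbie's decoder depends continuously on its classical input, so a vanishing probability of classical error translates into a vanishing contribution to the decoding infidelity.

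Finally, every rate pair on the line segment joining $P_1$ and $P_2$ is achievable by time-sharing the two protocols over disjoint blocks, and every rate pair lying component-wise above this segment is achievable by trivially wasting the surplus rate (e.g.\ by appending uninformative ancilla systems to either compressed register). This covers the entire region described by \eqref{eq:inner}, completing the achievability proof. The one step I anticipate being the most delicate in a fully rigorous write-up is verifying that the FQSW subprotocol at $P_2$ really tolerates an approximate classical side information $\hat X^n$, rather than the ideal $X^n$; this requires a short continuity argument for the decoding CPTP map, but no essentially new idea beyond the Fuchs--van~de~Graaf inequalities already in use in Section~\ref{sec:Source and compression model}.
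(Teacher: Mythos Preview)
Your proposal is correct and follows essentially the same approach as the paper: identify the two corner points $P_1=(S(X|B),S(B))$ via the Devetak--Winter code (Eq.~\eqref{eq:DW}) and $P_2=(S(X),\tfrac12(S(B)+S(B|X)))$ via the state-merging protocol of Theorem~\ref{State_merging_rate}, then take the upper-right convex closure by time-sharing and rate wasting. Your write-up is in fact more detailed than the paper's about combining Shannon compression of $X^n$ with FQSW and about the error-robustness issue, but the underlying strategy is identical.
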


\begin{proof}
From  the Devetak-Winter code, Eq.~(\ref{eq:DW}), and 
the code based on state merging, Theorem \ref{State_merging_rate},  two rate
points in the unassisted (and hence also in the unlimited entanglement-assisted)
rate region are:
\begin{align*}
(R_X,R_B) &= (S(X|B),S(B)), \\
  (R_X,R_B) &= \left(S(X),\frac12(S(B)+S(B|X))\right).
\end{align*}
Their upper-right convex closure is hence an inner bound to the rate region,
depicted schematically in Fig.~\ref{fig:inner}.
\end{proof}

\begin{figure}[!t]
\centering
  \includegraphics[width=0.8\textwidth]{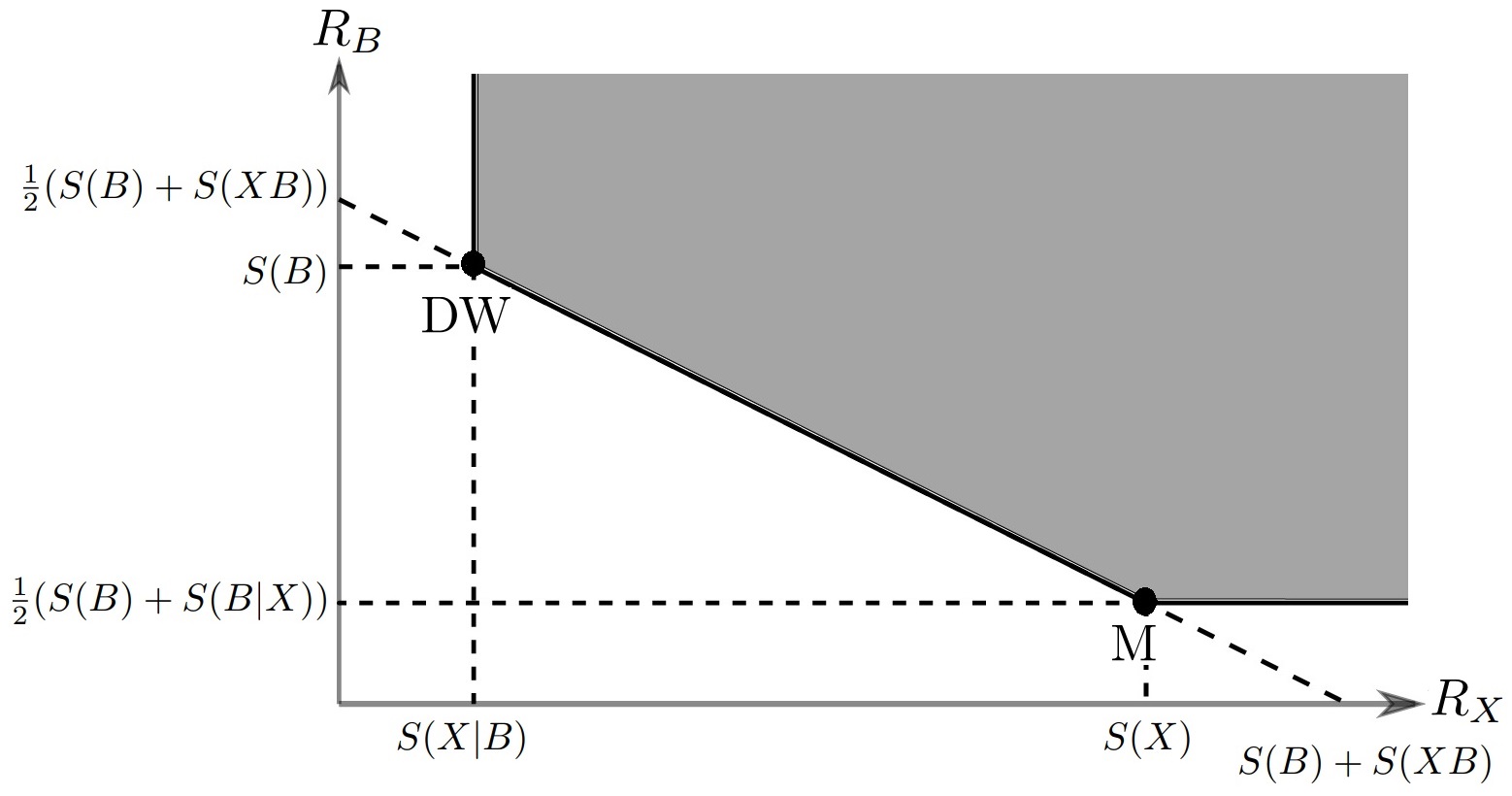}
  \caption{\aw{The region of all pairs $(R_X,R_B)$ satisfying the three conditions of 
           Eq.~(\ref{eq:inner}); it is the upper-right convex closure of the
           Devetak-Winter (DW) and the merging (M) point.
           All of these points are achievable in the unassisted model}.}
  \label{fig:inner}
\end{figure}

\aw{For generic sources we find \aw{that this is in fact the} rate region.
However, in general, we only present some outer bounds and inner bounds (achievable rates), 
which show the rate region to be much more complicated than the rate region of \aw{the}
classical Slepian-Wolf problem.}

\aw{\subsection{General converse bounds}
\label{sec:Converse Bounds in General}
For distributed compression of a \aw{classical-quantum} source 
in general, we start with a general converse bound.

\begin{theorem}
\label{theorem:full converse}
In the entanglement-assisted model, the asymptotic rate pairs for distributed compression of a 
\aw{classical-quantum} source are lower bounded as 
\begin{equation}\begin{split}
  \label{eq:general-converse}
  R_X        &\geq S(X|B),                                                \\
  R_B        &\geq \frac{1}{2}\left( S(B)+S(B|X)-\widetilde{I}_0 \right), \\ 
  R_X + 2R_B &\geq  S(B)+S(BX)-\widetilde{I}_0.\\
\end{split}\end{equation}
In the unassisted model, in addition to the above lower bounds, 
the asymptotic rate pairs are bounded as
\[ 
  R_X + R_B \geq S(XB). 
\]
\end{theorem}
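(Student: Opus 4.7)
The plan is to establish the four inequalities separately, relying on the decoupling condition of Lemma \ref{decoupling condition}, the function $I_\delta$ of Definition \ref{I_delta} together with its single-copy additivity from Lemma \ref{lemma:I-delta}, and the existing Schumacher and Devetak-Winter converses. The first bound $R_X \geq S(X|B)$ follows by reduction to the Devetak-Winter setting: given any code $(\mathcal{E}_X,\mathcal{E}_B,\mathcal{D})$ in our model, we construct a code in which the decoder is handed $B^n$ directly by absorbing $\mathcal{E}_B$ (and any Bob-Debbie entanglement) into the composite decoding map $\mathcal{D}\circ(\id\otimes\mathcal{E}_B)$, and then invoke the Devetak-Winter converse of \cite{Devetak2003,Winter1999}.

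The second bound $R_B \geq \frac{1}{2}(S(B)+S(B|X)-\widetilde{I}_0)$ follows by essentially repeating the proof of Theorem \ref{converse_QCSW}: that argument used only Bob's encoding isometry $U_B:B^nB_0\hookrightarrow C_BW_BB_0'$, the decoding isometry, the purity of the state given $X'^n$, and the decoupling $I(R^n:W_BB_0'|X'^n)\leq n\delta(n,\epsilon)$ obtained from Lemma \ref{decoupling condition} by data processing, none of which depended on $X^n$ being provided to the decoder for free. Bob's encoding gives $nR_B+S(W_BB_0') \geq nS(B)+\log K$, decoding plus decoupling gives $nR_B+\log K\geq S(W_BB_0'|X'^n)+nS(B|X)-n\delta$, and adding these produces the bound up to $\frac{1}{2n}I(X'^n:W_BB_0')$, which is at most $\frac{1}{2}I_\delta(\omega)$ by applying Definition \ref{I_delta} to the channel $\mathcal{T}:B^n\to W_BB_0'$ induced by Bob's encoding.

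The new ingredient is the sum bound $R_X+2R_B \geq S(B)+S(XB)-\widetilde{I}_0$. The key extra step is a Fano-type bound on the classical part: the fidelity criterion (\ref{F_QCSW_assisted}) forces $\|\xi^{\hat{X}^nX'^n}-\omega^{X^nX'^n}\|_1$ to be small, hence $H(X^n|\hat{X}^n)\leq n\delta_X(n,\epsilon)$, and by data processing $H(X'^n|C_XC_BD_0)\leq n\delta_X$, equivalently $S(C_XC_BD_0) \geq S(C_XC_BD_0|X'^n)+nS(X)-n\delta_X$. Given $X'^n=x^n$ the state on $C_XW_XC_BW_BB_0'R^nD_0$ is pure and factorizes into the $C_XW_X$ piece and the rest, so $S(C_XC_BD_0|X'^n)=S(W_X|X'^n)+S(W_BB_0'R^n|X'^n)$; using $S(W_X|X'^n)\geq 0$ together with the decoupling of $R^n$ from $W_BB_0'$ given $X'^n$ yields $S(C_XC_BD_0|X'^n)\geq S(W_BB_0'|X'^n)+nS(B|X)-n\delta$. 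Combining this with Bob's encoding inequality cancels the $\log K$ terms and produces $n(R_X+2R_B)\geq nS(B)+nS(XB)-I(X'^n:W_BB_0')-n\delta'$; the final mutual information is bounded by $nI_\delta(\omega)$ as before, and taking $\epsilon\to 0$ and $n\to\infty$ converts $I_\delta$ into $\widetilde{I}_0$.

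In the unassisted model the additional inequality $R_X+R_B\geq S(XB)$ follows by viewing the product encoder $\mathcal{E}_X\otimes\mathcal{E}_B:X^nB^n\to C_XC_B$ as a (restricted) Schumacher-type encoder for the cq-source $\omega^{XB}$ with reference $X'^nR^n$, and invoking the general mixed-state converse of Chapter \ref{chap:mixed state} (whose Koashi-Imoto decomposition of such a cq-source is trivial and gives $S(CQ)=S(XB)$), since restricting the encoder to product form can only increase the required rate. The main technical obstacle across these bounds is that the $I_\delta(\omega^{\otimes n})$ terms do not a priori single-letterize: one must verify that the Bob-induced channel $\mathcal{T}:B^n\to W_BB_0'$ really satisfies the constraint $I(R^n:W_BB_0'|X'^n)\leq n\delta$ so that the additivity $I_{n\delta}(\omega^{\otimes n})=nI_\delta(\omega)$ from Lemma \ref{lemma:I-delta} applies, and to aggregate the various $\delta(n,\epsilon)$ error terms coming from Fano, Alicki-Fannes and the decoupling so that they all vanish in the asymptotic limit.
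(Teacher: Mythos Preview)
Your argument is correct and, for the most part, follows the same outline as the paper: the bounds on $R_X$, on $R_B$, and on $R_X+R_B$ are handled essentially as in the text (though for the unassisted sum bound the paper simply cites the Schumacher converse, since the states $\ket{x}^X\ket{\psi_x}^{BR}$ are pure relative to the reference $X'^nR^n$; invoking Chapter~\ref{chap:mixed state} works but is heavier than necessary).

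The interesting difference is in the derivation of $R_X+2R_B \geq S(B)+S(XB)-\widetilde{I}_0$. The paper's route passes through the \emph{decoding} isometry: it writes $S(C_XC_BD_0)=S(\hat{X}^n\hat{B}^nW_DD_0')$, applies Fannes--Audenaert to replace $S(\hat{X}^n\hat{B}^n)$ by $S(X^nB^n)$, uses the decoupling lemma on $W_DD_0'$, then purity plus Alicki--Fannes to handle $S(R^n\hat{X}^n\hat{B}^n|X'^n)$. Your route stays on the \emph{encoding} side: you use the classical Fano bound $H(X'^n|C_XC_BD_0)\leq n\delta_X$ to extract the $S(X)$ term, and then the product structure given $x^n$ (namely $S(C_XC_BD_0|X'^n)=S(W_X|X'^n)+S(W_BB_0'R^n|X'^n)$) together with the decoupling of $W_BB_0'$ from $R^n$ to get $S(B|X)$. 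Both arrive at the same quantity $I(X'^n:W_BB_0')$, to which $I_{n\delta}(\omega^{\otimes n})=nI_\delta(\omega)$ applies. Your version trades the quantum continuity bounds for a single Fano step on the classical part, which is a cleaner bookkeeping when the classical system is the one being recovered; the paper's version is more uniform with the proof of Theorem~\ref{converse_QCSW} in that it reasons entirely via the decoder's output systems.
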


\begin{proof}
The individual lower bounds have been established already: 
$R_X\geq S(X|B)$ is from \cite{Devetak2003,Winter1999}, in a slightly different source model. 
However, it also holds in our system model if Bob sends his information using 
unlimited communication such that \aw{Debbie} can decode it perfectly. Namely, notice 
that the fidelity (\ref{F_QCSW_unassisted1}) is more stringent than the decoding 
criterion of \cite{Devetak2003,Winter1999}, so any converse bound considering the 
decoding criterion of \cite{Devetak2003,Winter1999} is also a converse bound in our system model. 
The bound $R_B\geq \frac{1}{2}(S(B)+S(B|X)-\widetilde{I}_0)$ is from
Theorem \ref{theorem:generic optimal rate}. These two bounds hold in the unassisted,
as well as the entanglement-assisted model.

In the unassisted model, the rate sum lower bound $R_X + R_B \geq S(XB)$ has been
argued in \cite{Devetak2003,Winter1999}, too. As a matter of fact, for any distributed compression
scheme for the source, $\cE_X\otimes\cE_B$ jointly describes a Schumacher compression scheme 
with asymptotically high fidelity. Thus, its rate must be asymptotically lower bounded
by the joint entropy of the source, $S(XB)$ \cite{Schumacher1995,Jozsa1994_1,Barnum1996,Winter1999}.

This leaves the bound $R_X + 2R_B \geq S(B)+S(BX)-\widetilde{I}_0$ to be proved in
the entanglement-assisted model, which we tackle now.
The encoders of Alice and Bob are isometries $U_X:{X^n \to C_X W_X}$ and 
$U_B:{B^nB_0 \to C_B W_B B_0'}$, respectively. They send their respective compressed systems 
$C_X$ and $C_W$ to \aw{Debbie} and keep the environment parts $W_X$ and $W_B$ for themselves. 
Then, \aw{Debbie} applies the decoding isometry $V:{C_X C_B D_0 \to \hat{X}^n\hat{B}^n W_D D_0'}$,
where $\hat{X}^n\hat{B}^n D_0'$ are the output systems, and $W_D$ and $D_0'$ are the 
environment of \aw{Debbie}'s decoding isometry and her output entanglement, respectively.
We first bound the following sum rate:
\begin{align}
  \label{eq sum rate}
  &nR_X+nR_B+S(D_0) \nonumber \\
  &\geq S(C_X)+S(C_B)+S(D_0) \nonumber\\
    &\geq S(C_XC_BD_0)                        \nonumber\\
    &=    S(\hat{X}^n\hat{B}^n W_D D_0')      \nonumber\\
    &=    S(\hat{X}^n\hat{B}^n) + S(W_D D_0'|\hat{X}^n\hat{B}^n)     \nonumber\\
    &\geq S(\hat{X}^n\hat{B}^n) + S(W_D D_0'|\hat{X}^n\hat{B}^nX'^n) \nonumber\\
    &\geq\! \!S(\!X^n \!B^n\!) \!\!+\! S(\!W_D \!D_0'|\hat{X}^n\!\hat{B}^n\!X'^n\!) \!\! \nonumber \\
    & \quad \quad\quad\quad\quad\quad\quad\quad-\! n\!\sqrt{\!2\epsilon} \!\log(\!|X| |B|\!) \!\!-\! \!h(\!\sqrt{\!2\epsilon}\!) \nonumber\\
    &\geq S(X^n B^n) + S(W_D D_0'|X'^n) - 2n\delta(n,\epsilon)                                    \nonumber\\
    &\geq \!S(\!X^n B^n\!) \!\!+ \!\!S(\!W_XW_B B_0'|X'^n\!) \!\! \nonumber \\
    &\quad \quad\quad\quad\quad\quad\quad\quad- \!\!S(R^n\hat{B}^n\hat{X}^n|X'^n) \!\!-\!\! 2n\delta(\!n,\epsilon\!) \nonumber\\
    &\geq  \!S(X^n B^n) \!\!+ \!\!S(W_XW_B B_0'|X'^n) \!\!-\!\! 2n\delta(n,\epsilon)\!\!-\!\!n\delta'(n,\epsilon)                                 \nonumber\\
    &= \!S(\!X^n\! B^n\!)\! \!+ \!\!S(\!W_X|X'^n\!) \!\!+\! \!S(\!W_B B_0'|X'^n\!) \!\! \nonumber \\
    &\quad \quad\quad\quad\quad\quad\quad\quad- \!\!2n\delta(\!n,\epsilon\!) \!\!-\!\! n\delta'(\!n,\epsilon\!)                     \nonumber\\
    &\geq S(X^n B^n) + S(W_B B_0'|X'^n) - 2n\delta(n,\epsilon)-n\delta'(n,\epsilon),
\end{align}
where the third line is by subadditivity, the equality in the third line follows because 
the decoding isometry $V$ does not change the entropy. Then, in the fifth and sixth line 
we use the chain rule and strong subadditivity of entropy.
The inequality in the seventh line follows from the decodability of the systems $X^nB^n$:
the fidelity criterion (\ref{F_QCSW_assisted}) implies that the output state on systems 
$\hat{X}^n\hat{B}^n$ is $2\sqrt{2\epsilon}$-close to the original state $X^nB^n$ in trace norm;
then apply \aw{the} Fannes inequality (\aw{Lemma} \ref{Fannes-Audenaert inequality}). 
The eighth  line follows from the decoupling condition (Lemma \ref{decoupling condition}),
which implies that 
$I(W_D D_0':\hat{X}^n\hat{B}^n|{X'}^n) \leq n\delta(n,\epsilon) 
                                       =    4n\sqrt{6\epsilon} \log(|X| |B|) + 2 h(\sqrt{6\epsilon})$.
In the ninth line, we use that for any given $x^n$, the overall state of
$W_XW_B W_D B_0' D_0'R^n\hat{B}^n\hat{X}^n$ is pure, and invoking subadditivity.
In line tenth, we use the decoding fidelity (\ref{F_QCSW_assisted}) once
more, saying that the output state on systems 
$\hat{X}^n\hat{B}^nR^n{X'}^n$ is $2\sqrt{2\epsilon}$-close to the original 
state $X^nB^nR^n{X'}^n$ in trace norm;
then apply the Alicki-Fannes inequality (Lemma~\ref{AFW lemma}) in the following equation; notice that given $x^n$ the state on systems $X^nB^nR^n$ is pure, therefore  $S(X^nB^nR^n|{X'}^n)=0$, and we obtain:
\begin{align}\label{eq: delta'}
   & \abs{S(\hat{X}^n\hat{B}^nR^n|{X'}^n)-S(X^nB^nR^n|{X'}^n)}  \nonumber \\
    &\quad= S(\hat{X}^n\hat{B}^nR^n|{X'}^n) \nonumber\\
    &\quad\leq 2n \sqrt{2\epsilon} \log |X| |B| |R| + (1+\sqrt{2\epsilon})h(\frac{\sqrt{2\epsilon}}{1+\sqrt{2\epsilon}}) \nonumber\\
    &\quad\leq 4n \sqrt{2\epsilon} \log |X| |B|   + (1+\sqrt{2\epsilon})h(\frac{\sqrt{2\epsilon}}{1+\sqrt{2\epsilon}}) \nonumber\\
    &\quad:= \delta'(n,\epsilon),
\end{align}
where in the penultimate line, we can without loss of generality assume $|R| \leq |X| |B|$.
The equality in the twelfth line of Eq.~(\ref{eq sum rate}) follows because for 
a given $x^n$ the encoded states of Alice and Bob are independent.  

Moreover, we bound $R_B$ as follows:
\begin{align}
  \label{eq RB lower bound}
  nR_B &\geq S(C_B)                       \nonumber\\
       &\geq S(C_B|W_BB_0')               \nonumber\\
       &=    S(C_BW_BB_0') - S(W_BB_0')   \nonumber\\
       &=    S(B^nB_0) - S(W_BB_0')       \nonumber\\
       &=    S(B^n) + S(B_0) - S(W_BB_0'). 
\end{align}
Adding \aw{Eqs.}~(\ref{eq sum rate}) and (\ref{eq RB lower bound}), and after
cancellation of $S(B_0)=S(D_0)$, we get
\begin{align}
  \label{eq: lower bound R_X+2R_B}
  R_X& \! \!+2R_B \nonumber \\
  &\geq \!S(B)  \! \!+ \! \! S(X B) \! \! -  \!\! \frac{1}{n}I(X'^n  \!\!: \!W_B B_0')  \! \!-  \!2\delta( \!n,\epsilon \!)  \!\!-  \!\!\delta'( \!n,\epsilon \!)                              \nonumber\\
           &\geq  \!S(B)  \!+ \! S(X B)  \!- \! \frac{1}{n}I_{n\delta( \!n,\epsilon \!)}({\omega^{\otimes n}})   \!\!-  \!2\delta( \!n,\epsilon \!)   \!\!- \! \!\delta'( \!n,\epsilon \!) \nonumber\\
           &=   \!  S(B)  \!+  \!S(X B)  \!- \! I_{\delta(n,\epsilon)}({\omega})  \!- \! 2\delta(n,\epsilon)-\delta'(n,\epsilon),
\end{align}
where given that $I(R^n:B_0'W_B|X'^n) \leq \delta(n,\epsilon)$, which we have from the 
decoupling condition (Lemma \ref{decoupling condition}), the second equality follows directly 
from Definition \ref{I_delta}, just as in the proof of Theorem \ref{converse_QCSW}. 
The equality in the last line follows from Lemma \ref{lemma:I-delta}. 
In the limit of $n\rightarrow\infty$ and $\epsilon\rightarrow 0$, we have
$\delta(n,\epsilon) \rightarrow 0$ and $\delta'(n,\epsilon) \rightarrow 0$, and so $I_{\delta(n,\epsilon)}$ \aw{converges} 
to $\widetilde{I}_0$.
\end{proof}}

\subsection{General achievability bounds}
\label{sec:Achievability Bounds in General}
For general, non-generic sources, the achievability bounds of Theorem \ref{unknown.theorem}
and the outer bounds of Theorem \ref{theorem:full converse} do not match. Here we present several more
general achievability results that go somewhat towards filling in the unknown
area in between, without, however, resolving the question completely.

\begin{theorem}
\label{thm:achieve}
In the entanglement-assisted model, for distributed compression of a \aw{classical-quantum} source, 
\aw{any rate pairs satisfying} the following inequalities are achievable: \aw{with $\alpha=\frac{2I(X:B)}{I(X:B)+I_0}$},
\begin{equation}\begin{split}
  \label{eq:funny-region}
  R_X             &\geq S(X|B),                                    \\
  R_B             &\geq \frac{1}{2}\left( S(B)+S(B|X)-I_0 \right), \\ 
  R_X +\alpha R_B &\geq S(X|B)+ \alpha S(B).
\end{split}\end{equation}

\aw{More generally,} for any auxiliary random variable $Y$ such that \aw{$Y$--$X$--$B$}
is a Markov chain, \aw{all the following rate pairs (and hence also their upper-right convex closure)} 
are achievable:
\begin{align*}
  R_X &= I(X:Y) + S(X|BY)                    = S(X|B) + I(Y:B), \\
  R_B &= \frac{1}{2}(S(B) + S(B|Y) - I(Y:W))\\ 
  &= S(B)-\frac{1}{2}\left(I(Y:B)+I(Y:W)\right), 
\end{align*}
where $C$ and $W$ are the system and environment of an \aw{isometry $V:{B\rightarrow CW}$ 
with $I(W:R|Y)=0$}. 
\end{theorem}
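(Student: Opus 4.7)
The plan is to establish the $Y$-parameterised rate family via a three-stage composite protocol, then recover the polytope \eqref{eq:funny-region} by specialising $Y$ to two extremes and time-sharing. The role of $Y$ is that of classical side information that Alice manufactures from $X$ and forwards to Debbie so that Bob may later exploit it in a quantum state redistribution. In Stage~1, Alice samples $Y^n$ via i.i.d.\ $p(y|x)$ and, using a classical covering/binning code of size $\approx 2^{nI(X:Y)}$, transmits an index pointing at a $Y^n$ jointly typical with $X^n$; Debbie decodes $Y^n$ with asymptotically vanishing classical error, at rate $I(X:Y)$ on Alice's channel. In Stage~2, Bob applies $V^{\otimes n}:B^n\to C^n W^n$ to his system, producing the extended state $\sigma^{X^n {X'}^n Y^n C^n W^n R^n}$, which inherits the theorem hypothesis $I(W:R|Y)_\sigma=0$. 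Treating the $Y^n$ already at Debbie as decoder side information and $W^n$ as his own side information, Bob transfers $C^n$ to Debbie via the quantum state redistribution (QSR) protocol \cite{Yard2009,Oppenheim2008}; by a direct computation on the purified state $\ket{\Sigma}^{X^n{X'}^nY^n{Y'}^nC^nW^nR^n}$ that parallels the proof of Theorem~\ref{QSR_achievability} and uses $I(W:R|Y)=0$ together with the isometricity of $V$, the QSR qubit rate collapses to $\frac{1}{2} I(C:XX'Y'R|Y)_\Sigma = \frac{1}{2}(S(B)+S(B|Y)-I(Y:W))$, consuming $\frac{1}{2} I(C:W)$ ebits and distilling $\frac{1}{2} I(C:Y)$ ebits. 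The decoupling $I(W:R|Y)=0$ combined with Uhlmann's theorem then yields, for each $y$, an isometry $V_y:C\to BV$ which Debbie applies coherently on $Y^n$ to rebuild $\hat B^n$ with high fidelity, exactly as at the end of the proof of Theorem~\ref{QSR_achievability}.

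In Stage~3, with $(Y^n,\hat B^n)$ in Debbie's hand, Alice finishes by transmitting the remaining classical content of $X^n$ via the classical--quantum Devetak--Winter (Slepian--Wolf) scheme \cite{Devetak2003,Winter1999} at rate $S(X|YB)$. Alice's total rate is $I(X:Y)+S(X|YB)$; using the Markov identity $S(Y|XB)=S(Y|X)$ a short algebraic manipulation yields
\[
   I(X:Y)+S(X|YB) \;=\; S(Y)+S(XB)-S(YB) \;=\; S(X|B)+I(Y:B),
\]
matching both expressions in the theorem. The polytope \eqref{eq:funny-region} then follows by choosing two extremal auxiliary variables and convex-combining: trivial (constant) $Y$ makes Stage~1 vacuous, forces $I(Y:B)=I(Y:W)=0$ and $S(B|Y)=S(B)$, and degenerates the protocol into Devetak--Winter for Alice plus Schumacher for Bob, reaching the corner $(S(X|B),S(B))$; setting $Y=X$ and taking a sequence of isometries satisfying $I(W:R|X)=0$ approaching the supremum $I(X:W)\to I_0$ from Definition~\ref{I_delta} reaches the corner $\bigl(S(X),\tfrac12(S(B)+S(B|X)-I_0)\bigr)$. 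The straight line through these two corners has slope $-1/\alpha$ with $\alpha=\frac{2I(X:B)}{I(X:B)+I_0}$, translating into the rate-sum inequality $R_X+\alpha R_B\geq S(X|B)+\alpha S(B)$; combined with the obvious single-coordinate floors this is exactly \eqref{eq:funny-region}, and every interior point is reached by time-sharing.

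The main difficulty I anticipate is the clean sequential composition of the three stages under the global fidelity criterion \eqref{F_QCSW_assisted}: Stage~1 delivers $Y^n$ only approximately, Stage~2 must still succeed when fed that perturbed classical side information and output a $\hat B^n$ trace-close to $B^n$, and Stage~3's Devetak--Winter decoder must tolerate $\hat B^n$ in place of $B^n$. These composition errors will be absorbed by successive triangle and Fuchs--van de Graaf inequalities together with the continuity of DW and QSR achievability under small perturbations of the input state. A further subtlety is verifying the QSR rate identity $\frac{1}{2} I(C:XX'Y'R|Y)_\Sigma=\frac{1}{2}(S(B)+S(B|Y)-I(Y:W))$ in the new setting: unlike Theorem~\ref{QSR_achievability}, the per-$y$ conditional state $\sigma^{CWR|y}$ is a \emph{mixture} over $x$ rather than pure, so the argument requires careful pure-state identities on $\ket{\Sigma}$ combined with $I(W:R|Y)=0$ to eliminate the mixing cross-terms, rather than being a formal substitution of $Y$ for $X$.
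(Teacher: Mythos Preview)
Your three-stage protocol (channel simulation for $Y$, then QSR for $C$ with $Y$ as decoder side information, then Devetak--Winter for the residual of $X$) and the recovery of the polytope \eqref{eq:funny-region} from the two extremes $Y=\text{const}$ and $Y=X$ match the paper's proof exactly; the paper cites the Reverse Shannon Theorem for Stage~1 where you speak of covering/binning codes, but these coincide for the present purpose.

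The subtlety you flag about the QSR rate identity is more serious than you anticipate, however: the equality $I(C:XX'Y'R|Y)_\Sigma = S(B)+S(B|Y)-I(Y:W)$ does \emph{not} follow from $I(W:R|Y)=0$ alone for general $Y$. A direct calculation on the pure state $\ket\Sigma$ (using $S(XX'Y'RY)_\Sigma=S(CW)=S(B)$ and $S(CXX'Y'RY)_\Sigma=S(W)$) gives $I(C:XX'Y'R|Y)_\Sigma = S(C|Y)+S(B)-S(W)$, and matching this to the target requires $S(W|CY)=-S(W|Y)$. That identity \emph{does} hold when $Y=X$, because each $\ket{\phi_x}^{CWR}$ is pure and $I(W:R|X)=0$ forces $S(\phi_x^C)=S(\phi_x^W)+S(\phi_x^R)$; but when the per-$y$ state is a genuine mixture over $x$, the purity argument collapses. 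A concrete counterexample: $Y$ constant, $\ket{\psi_x}^{BR}=\ket{xx}$ for $x\in\{0,1\}$ uniform, and $V\ket{x}=\tfrac{1}{\sqrt2}(\ket{00}+(-1)^x\ket{11})_{CW}$. Then $I(W:R)=0$, yet $S(C|Y)=1$ while $S(B|Y)+S(W|Y)=2$; and no single Uhlmann isometry $V_y:C\to VB$ can simultaneously recover $\ket{\psi_0}^{BR}$ and $\ket{\psi_1}^{BR}$ from $C$, so the decoding step you describe also fails. The paper's own proof glosses over this same point, so the gap is not unique to your proposal; and it does not affect the polytope \eqref{eq:funny-region}, since for $Y=X$ the per-$x$ purity applies and for $Y=\text{const}$ one takes $W$ trivial so the issue never arises.
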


\begin{proof}
\aw{The region described by Eq.~(\ref{eq:funny-region}) is precisely the upper-right convex
closure of the two corner points $(S(X|B),S(B))$ and $(S(X),\frac{1}{2}(S(B)+S(B|X)-I_0))$. Their
achievability} follows from \aw{Theorems} \ref{theorem: generic full rate region} and 
\ref{QSR_achievability}. 

We use the following two achievable points to show the 
\aw{second} statement: 
\begin{align*}
  \left(S(X|B),S(B)\right) \quad \text{and} \quad \left(S(X),\frac{1}{2}(S(B)+S(B|X)-I_0)\right).  
\end{align*}
Namely, Alice and \aw{Debbie} (the receiver) use the Reverse Shannon Theorem to simulate the channel
taking $X$ to $Y$ in i.i.d.~fashion, which costs $I(X:Y)$ bits of classical communication \cite{Bennett2002}. 
Now we are in a situation that we know, Bob has to encode $B^n$ with side information $Y^n$ at the decoder, 
which can be done \aw{at the rate $\frac{1}{2}(S(B)+S(B|Y)-I(Y:W))$, by the quantum state redistribution
protocol of Theorem \ref{QSR_achievability}}. 
Then Alice has to send some more information to allow the receiver to decode $X^n$ which is an instance of 
classical compression of $X$ with quantum side information $BY$ that is already at the decoder, 
hence costing another $S(X|BY)$ bits in communication, \aw{by the Devetak-Winter protocol \cite{Devetak2003,Winter1999}}. 
For $Y=X$, \aw{we recover the rate point $\left(S(X),\frac{1}{2}(S(B)+S(B|X)-I_0)\right)$, 
and for $Y=\emptyset$ we recover $\left(S(X|B),S(B)\right)$.}
\end{proof}

\medskip
In Fig.~\ref{fig:full}, we show the situation for a general source, depicting
the most important inner and outer bounds on the rate region in the entanglement-assisted
model.

\begin{figure}[!t]
\centering
  \includegraphics[width=0.8\textwidth]{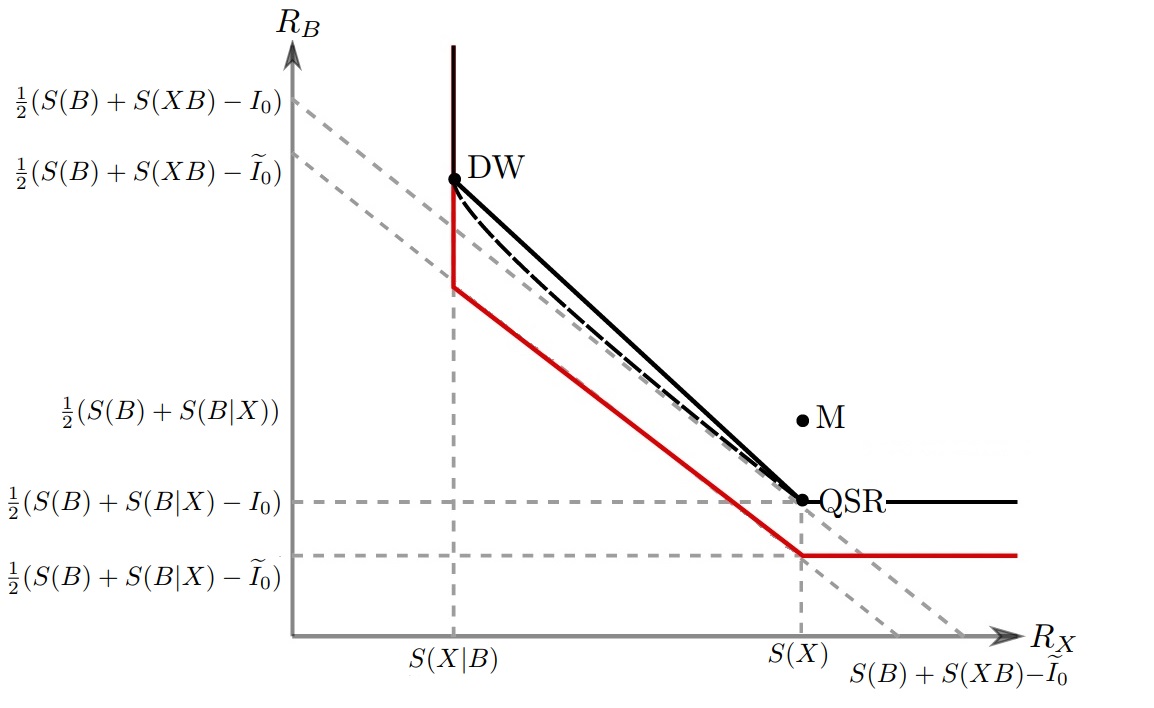}
  \caption{\aw{General outer (converse) bound, in red, and inner (achievable) bounds, in black, 
           on the entanglement-assisted rate region, assuming unlimited entanglement. 
           In general, our achievable points, the one from Devetak-Winter (DW), and the ones
           using merging (M) and quantum state redistribution (QSR) are no longer on the
           boundary of the outer bound. The achievable region is potentially slightly larger than
           the upper-right convex closure of the points DW and QSR, connected by a solid black
           straight line; indeed, the second part of Theorem \ref{thm:achieve} allows us to
           interpolate between DW and QSR along the black dashed curve}.}
  \label{fig:full}
\end{figure}


\subsection{Rate region for generic sources}
\label{sec:generic full}
In this subsection, we find the complete rate region for generic sources, generalizing
the insight of Theorem \ref{theorem:generic optimal rate} for the subproblem
of quantum compression with classical side information at the decoder.

\begin{theorem}
\label{theorem: generic full rate region}
\aw{In both unassisted and entanglement-assisted models, for a generic classical-quantum source, in particular one where there is
an $x$ such that $\psi_x^B$ has full support}, the optimal asymptotic rate region for distributed 
compression is \aw{the} set of rate pairs satisfying 
\begin{align*}
  R_X      &\geq S(X|B), \\
  R_B      &\geq\frac{1}{2}\left(S(B)+S(B|X)\right),\\
  R_X+2R_B &\geq S(B)+S(XB). 
\end{align*}
Moreover, \aw{there are protocols achieving these bounds requiring no prior} entanglement. 
\end{theorem}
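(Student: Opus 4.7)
\begin{proof-of}[sketch of Theorem \ref{theorem: generic full rate region}]
The plan is to assemble this from pieces already proven in the chapter: the converse bounds of Theorem \ref{theorem:full converse} together with the achievability result of Theorem \ref{unknown.theorem}, using the simplification that for generic sources the quantity $\widetilde{I}_0$ vanishes.

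\textbf{Converse.} First I would invoke Theorem \ref{theorem:full converse}, which already establishes the three general lower bounds
\begin{align*}
  R_X        &\geq S(X|B),                                                \\
  R_B        &\geq \tfrac{1}{2}\left( S(B)+S(B|X)-\widetilde{I}_0 \right), \\
  R_X + 2R_B &\geq S(B)+S(BX)-\widetilde{I}_0,
\end{align*}
in both the unassisted and the entanglement-assisted models. The key step is then to observe that under the genericity hypothesis $\widetilde{I}_0=0$. This is exactly the argument given in the proof of Theorem \ref{theorem:generic optimal rate}: any CPTP map $\cT:B\to W$ producing $\sigma^{XWR}$ with $I(R:W|X)_\sigma\leq\delta$ must satisfy $I(X:W)_\sigma \leq \delta'\log|X|+2h(\delta'/2)$, where $\delta'\to 0$ as $\delta\to 0$; the argument uses Pinsker's inequality and Uhlmann's theorem to build a \emph{single} decoding isometry $V_0$ that works uniformly in $x$, via the representation $\ket{\psi_x}^{BR}=(\1_B\otimes T_x)\ket{\psi_0}^{BR}$ made possible by the full-support assumption on $\psi_0^B$. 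Taking $\delta\searrow 0$ yields $\widetilde{I}_0=\lim_{\delta\searrow 0}I_\delta=0$, and substituting into Theorem \ref{theorem:full converse} gives precisely the three claimed converse inequalities. The unassisted-only bound $R_X+R_B\geq S(XB)$ from the same theorem turns out to be implied by the other three, as one checks directly on the line segment along which the third bound is tight.

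\textbf{Achievability.} Here I would simply appeal to Theorem \ref{unknown.theorem}: the Devetak-Winter point $(S(X|B),S(B))$ (from \cite{Devetak2003,Winter1999}) and the state-merging point $(S(X),\tfrac12(S(B)+S(B|X)))$ (from Theorem \ref{State_merging_rate}) are both achievable in the \emph{unassisted} model, and both lie on the line $R_X+2R_B = S(B)+S(XB)$ (a direct substitution verifies this, e.g.\ at DW one gets $S(X|B)+2S(B)=S(XB)+S(B)$, and at M one gets $S(X)+S(B)+S(B|X)=S(B)+S(XB)$). By time-sharing, which consumes no entanglement, the entire upper-right convex closure of these two points is achievable without prior entanglement, and this closure is exactly the set described by the three inequalities in the statement (all other inequalities such as $R_X+R_B\geq S(XB)$ being slack along the segment).

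\textbf{Main obstacle.} There is no really hard step here: the converse machinery in Theorem \ref{theorem:full converse} and the vanishing of $\widetilde{I}_0$ for generic sources (via Theorem \ref{theorem:generic optimal rate}) do the heavy lifting, and the achievability is just the observation that the two already-known corner points happen to lie on the boundary of the converse region. The most delicate point is confirming that the converse region is really determined by the three stated inequalities (i.e.\ that $R_X+R_B\geq S(XB)$ and the first two individual bounds are not cutting off further corners of the upper-right convex closure of DW and M); this is a small polyhedral check, not a technical difficulty.
\end{proof-of}
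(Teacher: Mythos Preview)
Your proposal is correct and follows essentially the same approach as the paper: invoke Theorem \ref{unknown.theorem} for achievability, and for the converse combine Theorem \ref{theorem:full converse} with the fact that $\widetilde{I}_0=0$ for generic sources, established in Theorem \ref{theorem:generic optimal rate}. The paper's proof is considerably terser than yours, but the substance is identical; your additional remarks (that the corner points lie on the boundary line, that time-sharing needs no entanglement, and that the unassisted sum-rate bound is redundant here) are all correct elaborations.
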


\aw{\begin{proof}
We have argued the achievability already at the start of this section
(Theorem \ref{unknown.theorem}).
As for the converse, we have shown in Theorem \ref{theorem:generic optimal rate}
that for a generic source, $\widetilde{I}_0=0$, hence the claim follows from
the outer bounds of Theorem \ref{theorem:full converse}.
\end{proof}}

This means that for generic sources, which we recall are the complement of a set of measure
zero, the rate region has the shape of Fig.~\ref{fig:inner}.
\section{Discussion and open problems} 
\label{sec:discuss}
\aw{After seeing no progress for over 15 years in the problem of distributed
compression of quantum sources, we have decided to take a fresh look at the
classical-quantum sources considered in \cite{Devetak2003,Winter1999}. There,
the problem of compressing the classical source using the quantum part as
side information at the decoder was solved; here we analyzed the full
rate region, in particular we were interested in the other extreme of compressing 
the quantum source using the classical part as side information at the decoder. 
Like in the classical Slepian-Wolf coding, the former problem exhibits no rate loss, 
in that the quantum part of the source is compressed to the Schumacher rate, 
the local entropy, and the sum rate equals the joint entropy of the source.
Interestingly, this is not the case for the latter problem: clearly, if the
classical side information were available both at the encoder and the decoder,
the optimal compression rate would be the conditional entropy $S(B|X)$, which
would again imply no sum rate loss. However, since the classical side information
is supposed to be present only at the decoder, we have shown that in general 
the rate sum is strictly larger, in fact generically by $\frac12 I(X:B)$, and
with this additional rate there is always a coding scheme achieving asymptotically
high fidelity. This additional rate could be called ``the price of ignorance'', 
as it corresponds to the absence of the side information at the encoder.

To deal with general classical-quantum sources, we introduced information quantities 
$I_0$ and $\widetilde{I}_0$ (Definition \ref{I_delta}), to upper and lower bound the 
optimal quantum compression rate as
\begin{align*}
\frac12  \! \!\left( \! S(B) \!+ \!S(B|X) \!\!-\! \!\widetilde{I}_0\  \! \! \! \right )  \! \leq  \! R_B^*  \! \leq  \! \frac12  \! \left(S(B) \!+ \! S(B|X)\! \!- \!  {I}_0 \!\right)\!, 
\end{align*}
when unlimited entanglement is available.
For generic sources, $I_0 = \widetilde{I}_0 = 0$, but in general we do not
understand these quantities very well, and the first set of open problems that 
we would like to mention
is about them: is $I_0 = \widetilde{I}_0$ in general, or are there examples of
gaps? How can one calculate either one of these quantities, given that
a priori the auxiliary register $W$ is unbounded? In fact, can one
without loss of generality put a finite bound on the dimension of $W$,
for either optimization problem?

Further open problems concern the need for prior shared entanglement to achieve
the optimal quantum compression rate $R_B^*$. As a matter of fact, it would already
be interesting to know whether the rate $\frac12\left(S(B)+S(B|X)-{I}_0\right)$ 
requires in general pre-shared entanglement.

The full rate region inherits these features: while it is simple, and
in fact generated by the optimal codes for the two compression-with-side-information 
problems (quantum compression with classical side information, and classical
compression with quantum side information), in the generic case, in general the
picture is very complicated, and we have only been able to give several outer and
inner bounds on the rate region, whose determination remains an open problem.}

\medskip
We also would like to comment on the source model that we consider in this chapter, 
and its relation to the classical Slepian-Wolf coding. 
Our classical-quantum source is characterised by a classical source, 
the random variable $X$, and a quantum source $B$, which is described by a
density matrix $\rho_x^B$, but realized as quantum correlation with a purifying 
reference system $R$: $\rho_x^B = \tr_R \ketbra{\psi_x}^{BR}$. 
A source code in our sense reproduces the states $\ket{\psi_x}^{BR}$ with high
fidelity on average, which implies that, for any ensemble decomposition
$\rho_x^B = \sum_y p(y|x) \ketbra{\psi_{xy}}^B$, it reproduces the 
states $\ket{\psi_{xy}}^B$ with high fidelity on average (with respect to
the ensemble probabilities $p(x)p(y|x)$). If we only demand the latter, there
is no need for the purifying system $R$, and the source can be described compactly by 
the cccq-state
\begin{align}\label{eq: ensemble cqSW source}
&\sigma^{X'XYB}= \nonumber \\
&\> \!\!\!\! \!\!\sum_{x \in \mathcal{X},y \in \mathcal{Y}} \!\! \!\!\!\!p(x)p(y|x) \!\!\ketbra{x}^{X'} \!\!\!\!\otimes\!\ketbra{x}^X \!\!\!\otimes\! \ketbra{y}^Y \!\!\!\otimes  \!\ketbra{\psi_{xy}}^{\!B}\!\!\!,
\end{align}
where $X'$ and $Y$ are reference systems with which the correlation is preserved in a compression protocol. This now includes the well-known classical correlated source considered 
by Slepian and Wolf \cite{Slepian1973}, namely if the system $B$ is classical 
with orthonormal states $\ket{\psi_{xy}}=\ket{y}$. 
In the Schumacher's single compression problem \cite{Schumacher1995}, both 
source models, that is, the ensemble source and the purified source, lead to the same 
compression rate. However, when there is side information or more generally in the
distributed setting, different source models, albeit sharing the reduced states on $XB$, 
do not lead to the same compression rate \cite{Winter1999}. 
Our results provide a clear manifestation of this: recall that the minimum
compression rate of Bob in the Slepian-Wolf setting is $S(B|X)$, with the 
ensemble fidelity criterion. On the other hand, if the distributions $p(y|x)$ 
have pairwise overlapping support, or theorem regarding generic sources
applies, resulting in the strictly larger minimum rate $\frac12(S(B)+S(B|X))$
when the average entanglement fidelity criterion is used. The difference can be 
attributed to the harder task of maintaining the entanglement with the reference
system, rather than ``only'' classical correlation.

More broadly, a quantum source can be defined as a quantum system together with 
correlations with a reference system, in our case any state $\rho^{ABR}$. The 
compression task is to reproduce this state with high fidelity by coding and decoding 
of $A$ and $B$. While this problem is far from understood in the general case,
what we saw here is that the compression rate may depend on the concrete 
correlation with the reference system. In the present chapter, we have considered 
both a globally purifying quantum system and an ensemble of purifications, and in this
final discussion, implicitly looked at a classical system keeping track of an
ensemble of states subject to a probability distribution.

Finally, we mention that both models of quantum data compression with classical side information with partially purified source of  Eq.~(\ref{eq: extended source model}) and the ensemble model defined in Eq.~(\ref{eq: ensemble cqSW source}) are special cases of the model that we consider in the next chapter.
There we define an ensemble extension of the QSR source, namely the ensemble $\{ p(x), \ketbra{\psi_x}^{ACBR}\}$ with corresponding cqqqq-state $\sum_x p(x)\ketbra{x}^{X} \otimes \ketbra{\psi_x}^{ACBR}$ where Alice who has access to side information system $C$ wants to compress system $A$ and send it, via a noiseless quantum channel, to Bob who has access to side information system $B$. We let the encoder and decoder share free entanglement and consider two decodability critera: per-copy fidelity and block fidelity where in the former the fidelity is preserved for each copy of the source while in the latter the fidelity is preserved for the whole block of $n$ systems similar to the fidelity defined in Eq.~(\ref{F_QCSW_unassisted1}).
For the former criterion we find the optimal quantum communication  rate and for the latter criterion we find a converse bound and an achievable rate which match up to an asymptotic error and an unbounded auxiliary
system. 
Our new results  imply that in the compression of system $B$ with classical side information at the decoder $X$ in the source model of Eq.~(\ref{eq: extended source model}), the converse bound of Theorem~\ref{converse_QCSW}, i.e. the following rate is optimal in the entanglement-assisted model with per-copy fidelity: 
\begin{align*}
  R_B =\frac{1}{2}\left( S(B)+S(B|X)-\widetilde{I}_0 \right).
\end{align*}

\chapter{Quantum state redistribution for ensemble sources}
\label{chap:QSR ensemble}

In this chapter, we consider a generalization of the quantum state redistribution task, 
where pure multipartite states from an ensemble source are distributed among 
an encoder, a decoder and a reference system. The encoder, Alice, has access 
to two quantum systems: system $A$ which she compresses and sends to the decoder, 
Bob, and the side information system $C$ which she wants to keep at her site. 
Bob has access to quantum side information in a system $B$, wants to decode 
the compressed information in such a way to preserve the correlations with the 
reference system on average. 
%

As figures of merit, we consider both block error (which is the usual one 
in source coding) and per-copy error (which is more akin to rate-distortion theory), 
and find the optimal compression rate for the second criterion, 
and achievable and converse bounds for the first. The latter almost match in general, 
up to an asymptotic error and an unbounded auxiliary system; for so-called irreducible 
sources they are provably the same.
This chapter is based on the publications in \cite{ZK_QSR_ensemble_ISIT_2020,QSR_ensemble_full}.

\section{The source model}
Quantum state redistribution (QSR) is a source compression task where both 
encoder and decoder have access to side information systems \cite{Devetak2008_2,Yard2009,Oppenheim2008}. 
Namely, Alice, Bob and a reference system share asymptotically many copies of a 
pure state $\ket{\psi}^{ACBR}$, where Alice aims to compress the quantum system $A$ 
and send it to Bob via a noiseless quantum channel, while she has access to a side 
information quantum system $C$, and Bob has access to the side information quantum system 
$B$. Bob upon receiving the compressed information reconstructs system $A$, and the 
figure of merit of this task is to preserve the entanglement fidelity between the 
reconstructed systems and the purifying reference system $R$.

Quantum state redistribution generalizes Schumacher's compression, which 
is recovered as the extreme case that neither encoder nor decoder have any 
side information \cite{Schumacher1995}: the source is simply described by a pure state
$\ket{\psi}^{AR}$ shared between the encoder and a reference system.
However, besides this model, and originally, Schumacher considered a source 
generating an ensemble of pure states, i.e. $\cE = \{ p(x), \proj{\psi_x}^{A} \}$, 
and showed both source models lead to the same optimal compression rate 
(cf. Barnum \emph{et al.} \cite{Barnum1996}, as well as \cite{Winter1999}),
namely the von Neumann entropy of the reduced or average state of $A$, respectively. 

In the presence of side information systems though, an ensemble model and a purified 
source model can lead to different compression rates. An example of this is 
the classical-quantum Slepian-Wolf problem considered in \cite{ZK_cqSW_2018,ZK_cqSW_ISIT_2019}, 
where the compression rate can be strictly smaller than that of the corresponding 
purified source.

The general correlated ensemble source $\cE = \{ p(x), \proj{\psi_x}^{AB} \}$ was 
considered first in \cite{Winter1999} and then developed in \cite{Devetak2003}
and by Ahn \emph{et al.} \cite{Ahn2006}, 
with $A$ the system to be compressed and $B$ the side information system at the decoder. 
It is an ensemble version of the coherent state merging task introduced 
in \cite{family,Abeyesinghe2009}. 
In \cite{Devetak2003}, the source is $\ket{\psi_x}^{AB} = \ket{f(x)}^A\ket{\phi_x}^B$.
The optimal compression rate for an irreducible source of product states and a source 
generating Bell states is found in \cite{Ahn2006}, however, in general case the problem had 
been left open.

In the present chapter, we consider an even more general ensemble source where both 
encoder and decoder have access to side information systems, and which thus
constitutes an ensemble generalization of the pure QSR source. More precisely,
we consider a source which is given by an ensemble 
$\cE = \{ p(x), \proj{\psi_x}^{ACBR} \}$ of pure states 
$\psi_x=\proj{\psi_x}\in\cS(A \otimes C \otimes B \otimes R)$,
$\ket{\psi_x}\in A \otimes C \otimes B \otimes R$, with a Hilbert space 
$A \otimes C \otimes B \otimes R$, which in this chapter we assume to be of
finite dimension $|A|\cdot|C|\cdot|B|\cdot|R|<\infty$; 
$\cS(A \otimes C \otimes B \otimes R)$ denotes the set of states (density operators).
Furthermore, $x\in\cX$ ranges over a discrete alphabet, so we can describe 
the source equivalently by the classical-quantum (cq) state 
$\omega^{ACBRX} = \sum_x p(x) \proj{\psi_x}^{ACBR} \otimes \proj{x}^X$.
In this model,  $A$ and $C$ are Alice's information to be sent and side 
information system, respectively. System $B$ is the side information of Bob, 
and $R$ and $X$ are inaccessible reference systems used only to define the task.

The ensemble model of the previous chapter
as well as those models that have been considered in \cite{Winter1999,Ahn2006,ZK_Eassisted_ISIT_2019,ZK_cqSW_2018,ZK_cqSW_ISIT_2019} 
are all special cases of the model that we consider here. We find the optimal 
compression rate under the per-copy fidelity criterion, and achievable 
and converse rates under the block-fidelity criterion which almost match, up to an 
asymptotic error and an unbounded auxiliary system. In the generic case
of so-called \emph{irreducible} ensembles, they are provably the same.

\section{The compression task}
\label{sec:The Compression task QSR ensemble}
We consider the information theoretic setting of many copies of the 
source $\omega^{ACBRX}$, i.e.~$\omega^{A^nC^nB^nR^nX^n}=(\omega^{ACBRX})^{\otimes n}$:
\[
  \omega^{A^nC^nB^nR^nX^n}
    \!\!\! = \!\!\!\!
             \sum_{x^n \in \mathcal{X}^n}\!\!\!\! p(x^n)  \proj{\psi_{x^n}}^{A^nC^nB^nR^n}
                       \!\otimes\! \proj{x^n}^{X^n}\!\!\!\!\!,
\]
using the notation
\begin{alignat*}{3}
  x^n              &= x_1 x_2 \ldots x_n,                  
  &p(x^n)           &= p(x_1) p(x_2) \cdots p(x_n), \\
  \ket{x^n}        &= \ket{x_1} \ket{x_2} \cdots \ket{x_n}, \  
  &\ket{\psi_{x^n}} &= \ket{\psi_{x_1}} \ket{\psi_{x_2}} \cdots \ket{\psi_{x_n}}.
\end{alignat*}

We assume that the encoder, Alice, and the decoder, Bob, have initially a maximally 
entangled state $\Phi_K^{A_0B_0}$ on registers $A_0$ and $B_0$ (both of dimension $K$).
Alice, who has access to $A^n$ and the side information system $C^n$, performs the 
encoding compression operation $\cE:A^n C^n A_0 \longrightarrow M \hat{C}^n$ 
on $A^nC^n$ and her part $A_0$ of the entanglement, which is a quantum channel,
i.e.~a completely positive and trace preserving (CPTP) map. 
Notice that as functions, CPTP maps act on the operators (density matrices) over  
the respective input and output Hilbert spaces, but as there is no risk of confusion,
we will simply write the Hilbert spaces when denoting a CPTP map.
Alice's encoding operation produces the state $\sigma^{M  \hat{C}^n B^n B_0 R^n X^n}$ 
with $M$, $\hat{C}^n$ and $B_0$ as the compressed system of Alice, the reconstructed 
side information system of Alice and Bob's part of the entanglement, respectively.
The dimension of the compressed system is without loss of 
generality not larger than the dimension of the
original source, i.e. $|M| \leq  \abs{A}^n$. 
The system $M$ is then sent via a noiseless quantum channel to Bob, who performs
a decoding operation $\mathcal{D}:M B^n B_0 \longrightarrow \hat{A}^n \hat{B}^n$ 
on the compressed system $M$, his side information $B^n$ and his part of the entanglement 
$B_0$, to reconstruct the original systems, now denoted $\hat{A}^n$ and $\hat{B}^n$.
We call 
$\frac1n \log|M|$ the \emph{quantum rate} of the compression protocol.
We say an encoding-decoding scheme (or code, for short) has \emph{block fidelity} 
$1-\epsilon$, or \emph{block error} $\epsilon$, if 
\begin{align}
  \label{eq:block fidelity criterion}
  F &:= F(\omega^{A^nC^nB^nR^nX^n},\xi^{\hat{A}^n\hat{C}^n\hat{B}^nR^nX^n}) \nonumber \\
     &= \sum_{x^n} p(x^n) F\left( \psi_{x^n}^{A^nC^nB^nR^n} \!\!,
                  \xi_{x^n}^{\hat{A}^n \hat{C}^n \hat{B}^n R^n}  \right) \geq 1-\epsilon,
\end{align}
where 
\begin{align*}
\xi^{\hat{A}^n\hat{C}^n\hat{B}^nR^nX^n}
   &=\sum_{x^n \in \mathcal{X}^n}\!\!\!\! p(x^n)  \xi_{x^n}^{\hat{A}^n \hat{C}^n \hat{B}^n R^n} 
                       \!\otimes\! \proj{x^n}^{X^n}\\
   &=\left((\mathcal{D}\circ\cE)\otimes \id_{R^nX^n}\right) \omega^{A^nC^nB^n R^nX^n }.
\end{align*}
We say a code has \emph{per-copy fidelity} $1-\epsilon$, 
or \emph{per-copy error} $\epsilon$, if 
\begin{align}
  \label{eq:per-copy-error-RD}
  \overline{F} &:= \frac{1}{n}\sum_{i=1}^n F(\omega^{A_iC_iB_iR_iX^n},\xi^{\hat{A}_i\hat{C}_i\hat{B}_iR_iX^n}) \nonumber\\
     &= \sum_{x^n} p(x^n) \frac{1}{n}\sum_{i=1}^n 
                            F\left( \psi_{x_i}^{ACBR} \!\!, 
                                    \xi_{x^n}^{\hat{A}_i \hat{C}_i \hat{B}_i R_i} \right) 
     \geq 1-\epsilon. 
\end{align}
By the monotonicity of the fidelity under the partial trace
(over $X_{[n]\setminus i}$), this implies the easier to verify 
condition
\begin{align}
  \label{eq:per copy fidelity criterion}
  \widetilde{F} 
    :=   \frac{1}{n}\sum_{i=1}^n F\left(\omega^{ACBRX},\xi^{\hat{A}_i\hat{C}_i\hat{B}_iR_iX_i}\right) 
    \geq 1-\epsilon,
\end{align}
where 
$\xi^{\hat{A}_i\hat{C}_i\hat{B}_iR_iX_i}=\Tr_{[n]\setminus i}\,\xi^{\hat{A}^n\hat{C}^n\hat{B}^nR^nX^n}$, 
and `$\Tr_{[n]\setminus i}$' denotes the partial trace over all systems 
with indices in $[n]\setminus i$. 

Conversely, Eq. \eqref{eq:per copy fidelity criterion} can be shown to imply
the criterion \eqref{eq:per-copy-error-RD} with $(1-\epsilon)^2 \geq 1-2\epsilon$
on the right hand side. Indeed, note that 
\begin{align*}
&F\left(\omega^{ACBRX},\xi^{\hat{A}_i\hat{C}_i\hat{B}_iR_iX_i}\right) \\
& \quad    = \sum_{x_i} p(x_i) F\left(\psi_{x_i}^{ACBR},
                               \sum_{x_{[n]\setminus i}} p(x_{[n]\setminus i}) 
                                                         \xi_{x^n}^{\hat{A}_i \hat{C}_i \hat{B}_i R_i} \right).
\end{align*}
Thus, by the convexity of the square function and Jensen's inequality,
\[\begin{split}
  (1-\epsilon)^2 
   &\leq \left( \frac{1}{n}\sum_{i=1}^n F(\omega^{ACBRX},\xi^{\hat{A}_i\hat{C}_i\hat{B}_iR_iX_i}) \right)^2 \\
   &\leq \frac{1}{n}\sum_{i=1}^n \sum_{x^n} p(x^n) 
                       F\left(\psi_{x_i}^{ACBR},\xi_{x^n}^{\hat{A}_i \hat{C}_i \hat{B}_i R_i} \right),
\end{split}\]
and the last line is the left hand side of Eq. \eqref{eq:per-copy-error-RD}.

Correspondingly, we say $Q_b$ and $Q_c$ are an asymptotically achievable block-error rate 
and an asymptotically achievable per-copy-error rate, respectively, 
if for all $n$ there exist codes such that the block fidelity and per-copy fidelity converge 
to $1$, and the quantum rate converges to $Q_b$ and $Q_c$, respectively. Because 
of the above demonstrated relations
$\widetilde{F}^2 \leq \overline{F} \leq \widetilde{F}$ it
doesn't matter which of the two version of per-copy fidelity we take. 


According to Stinespring's theorem \cite{Stinespring1955}, the encoding and decoding 
CPTP maps $\cE$ and $\cD$ can be dilated respectively to the isometries 
$U_{\cE}: A^n C^n A_0\hookrightarrow M\hat{C}^n W_n$  and 
$U_{\cD}: M B^n B_0 \hookrightarrow \hat{A}^n \hat{B}^nV_n$,
with $W_n$ and $V_n$ as the environment systems of the encoder and decoder, respectively.

\vspace{-0.2cm}

\section{Main Results}
\label{sec: main results of qsr ensemble}
In Theorem~\ref{thm:main theorem} we obtain the main results of this chapter concerning 
optimal (minimum) block-error rate $Q_b^*$ and optimal per-copy-error rate $Q_c^*$. 
These rates are expressed in terms of the following single-letter function.


\begin{definition}\label{def:Q_epsilon}
  For a state 
  $\omega^{ACBRX}=\sum_x p(x) \proj{\psi_x}^{ACBR}\otimes \proj{x}^{X}$ and $\epsilon \geq 0$ define:
  \begin{align*}
    Q(\epsilon) :=&  
        \inf \frac{1}{2} I(Z:RXX'|B)_{\sigma} 
                  \text{ over CPTP maps } \\
                  &\cE_{\epsilon}: AC \rightarrow Z \hat{C}  \text{ and } \cD_{\epsilon}:ZB \rightarrow \hat{A}\hat{B} \text{ s.t.} \\
    & F( \omega^{ACBRX},\xi^{\hat{A} \hat{C} \hat{B} RX})  \geq 1- \epsilon, 
\end{align*}
where
\begin{align*}
  \sigma^{Z\hat{C}BRX}\!
     &:=\! (\cE_{\epsilon}\otimes \id_{BRX})\omega^{ACBRX}  
       \!=\! \sum_x p(x) \sigma_x^{Z\hat{C}BR} \!\otimes \!\proj{x}^{X}\!\!, \\
  \xi^{\hat{A}\hat{C}\hat{B}RX}
     \!&:=\! (\cD_{\epsilon} \otimes \id_{\hat{C}RX}) \sigma^{Z\hat{C}BRX}  
       \!=\! \sum_x p(x) \xi_x^{\hat{A}\hat{C}\hat{B}R}\!\otimes\! \proj{x}^{X}\!\!.
\end{align*}
Moreover, define $\widetilde{Q}(0):=\lim_{\epsilon \to 0+} Q(\epsilon)$.
\end{definition}

The function $Q(\epsilon)$ is defined for the specific source $\omega^{ACBRX}$; this dependency is dropped to simplify the notation.

\begin{theorem}
\label{thm:main theorem}
The minimum asymptotically achievable rate with per-copy error is
\begin{align*}
      Q_c^*=\widetilde{Q}(0).
\end{align*}
Instead, the minimum asymptotically achievable rate with block error is bounded 
from above and below as follows:
\begin{align*}
  \widetilde{Q}(0)
     \leq Q_b^* \leq Q(0).
\end{align*}
\end{theorem}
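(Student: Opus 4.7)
\emph{Achievability.} To establish $Q_b^* \leq Q(0)$ and $Q_c^* \leq \widetilde{Q}(0)$ I use a two-stage protocol. For fixed $\epsilon \geq 0$ and $\delta > 0$, pick a single-letter pair $(\cE_\epsilon,\cD_\epsilon)$ whose rate $\frac12 I(Z{:}RXX'|B)_\sigma$ is within $\delta$ of $Q(\epsilon)$. Apply $\cE_\epsilon^{\otimes n}$ to the $n$ copies, coherently lift the classical register $X^n$ to a copy ${X'}^n$ on the reference side so that the overall state becomes pure, and invoke the standard quantum state redistribution (QSR) protocol of \cite{Devetak2008_2,Yard2009}, using the free entanglement assistance, to transfer $Z^n$ from Alice, who keeps $\hat C^n$, to Bob, who has $B^n$. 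QSR achieves asymptotic rate $\frac12 I(Z{:}RXX'|B)_\sigma$ qubits per copy with vanishing block error. Bob then applies $\cD_\epsilon^{\otimes n}$. By monotonicity of fidelity under CPTP maps and the triangle inequality, the total block fidelity is at least $1-\epsilon-o(1)$. Taking $\epsilon = 0$ yields $Q_b^* \leq Q(0)$; letting $\epsilon \to 0^+$ yields $Q_c^* \leq \widetilde{Q}(0)$.

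\emph{Converse, preliminaries.} Since block fidelity $\geq 1-\epsilon$ implies per-copy fidelity $\geq 1-\epsilon$ by monotonicity of the fidelity under partial trace, it suffices to prove $Q_c^*\geq\widetilde{Q}(0)$, from which $Q_b^*\geq \widetilde{Q}(0)$ follows for free. I record two properties of $Q(\epsilon)$ that I will need: monotonicity in $\epsilon$ (trivial from the definition), and convexity in $\epsilon$, obtained by mixing two candidates $(\cE_{\epsilon_i},\cD_{\epsilon_i})$ via a classical flag embedded in $Z$ so that both the fidelity and the conditional mutual information $\frac12 I(Z{:}RXX'|B)$ decompose linearly. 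Together these give continuity for $\epsilon>0$, and so $\widetilde{Q}(0):=\lim_{\epsilon\to 0^+}Q(\epsilon)$ is well defined.

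\emph{Converse proper.} For any code of block length $n$ and per-copy error $\epsilon$, dilate encoder and decoder to isometries with environments $W_n, V_n$. Standard entropy manipulations (dimension bound on $M$, chain rule, strong subadditivity, cancellation of the $A_0,B_0$ entanglement entropies between the encoder and decoder bounds, and Fannes--Audenaert corrections from near-decodability) give
\begin{align*}
  2nQ \geq I\bigl(M : R^n X^n {X'}^n \,\bigm|\, B^n\bigr)_{\sigma_n} - o(n),
\end{align*}
where ${X'}^n$ is a classical copy of $X^n$ on the reference. I then single-letterize by chain rule on $R^n X^n {X'}^n$: for each $i$, the $i$-th contribution is rewritten as $I(Z_i : R_i X_i {X'}_i \,|\, B_i)$ where $Z_i$ bundles $M$ with the past reference systems and any auxiliary registers needed so that $(\cE_i: A_iC_i \to Z_i\hat C_i,\; \cD_i: Z_iB_i \to \hat A_i\hat B_i)$ is a genuine single-letter CPTP pair compatible with Definition~\ref{def:Q_epsilon}. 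The per-copy fidelity condition guarantees that the single-letter error $\epsilon_i$ satisfies $\frac1n\sum_i\epsilon_i\leq\epsilon$, so by the definition of $Q$, $\frac12 I(Z_i{:}R_iX_i{X'}_i|B_i)\geq Q(\epsilon_i)$. Summing, dividing by $n$, and invoking convexity of $Q$ through Jensen's inequality, $Q \geq Q(\epsilon) - o(1)$. Letting $n\to\infty$ and $\epsilon\to 0$, continuity yields $Q\geq \widetilde{Q}(0)$.

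\emph{Main obstacle.} The most delicate step is the single-letterization: I must choose $Z_i$ so that the $i$-th term in the chain-rule decomposition of $I(M{:}R^n X^n {X'}^n|B^n)$ is indeed the mutual information of a well-defined single-letter pair $(\cE_i,\cD_i)$ whose fidelity is controlled by $\epsilon_i$. The dimension of $Z_i$ inherently grows with $n$, which is precisely why Definition~\ref{def:Q_epsilon} leaves the dimension of $Z$ unconstrained and why the converse naturally produces $\widetilde{Q}(0)$ rather than $Q(0)$. Closing the residual gap $Q(0)-\widetilde{Q}(0)$ in the block-error case would require a continuity statement for $Q(\epsilon)$ at $\epsilon=0$, which I expect to hold for irreducible sources via their special structure but not in general.
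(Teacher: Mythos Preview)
Your proposal is correct and follows the same architecture as the paper (QSR-based achievability after applying $\cE_\epsilon^{\otimes n}$, convexity of $Q(\epsilon)$, chain-rule single-letterisation, Jensen). Two points deserve sharpening.

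First, the initial converse bound is cleaner than you suggest. The paper obtains, on the purified-source state, $2\log|M|\geq 2S(M)\geq I(M:R^nX^n{X'}^n|B^nB_0)=I(MB_0:R^nX^n{X'}^n|B^n)$ directly, using only $|S(M|\,\cdot\,)|\leq S(M)$ and the independence of $B_0$ from the source. There is no Fannes--Audenaert correction, and no ``cancellation of the $A_0,B_0$ entropies between encoder and decoder bounds'' as in the earlier chapters; decodability enters only later, when certifying the per-copy error $\epsilon_i$ of the induced single-letter pairs.

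Second, your $Z_i$ cannot literally contain ``past reference systems'': if $R_{<i}X_{<i}{X'}_{<i}$ were part of $Z_i$, then $\cE_i:A_iC_i\to Z_i\hat C_i$ could not be CPTP, since Alice has no access to those registers. In the paper the chain rule first yields $\sum_i I(ZR_{<i}X_{<i}{X'}_{<i}B_{[n]\setminus i}:R_iX_i{X'}_i|B_i)$, and then data processing discards the past references, leaving $Z_i=MB_0B_{[n]\setminus i}$. The compatible $\cE_i$ is built by having Alice locally prepare \emph{dummy} copies $\omega^{\otimes[n]\setminus i}$ (all of whose systems, including their $B,R,X$ parts, are now in her hands) together with $\Phi_K^{A_0B_0}$, apply the block encoder, and output $MB_0B_{[n]\setminus i}$; $\cD_i$ then runs the block decoder and keeps only $\hat A_i\hat B_i$. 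This dummy-source construction is the device that makes $(\cE_i,\cD_i)$ a bona fide pair in Definition~\ref{def:Q_epsilon} with error exactly $\epsilon_i=1-F(\omega,\xi^{\hat A_i\hat C_i\hat B_iR_iX_i})$, after which your Jensen step goes through verbatim.
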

\vspace{-0.2cm}

\begin{proof}
We prove the achievability here and leave the converse proof to the next section. 


Let $U_0: AC\hookrightarrow Z \hat{C} W$ and $\widetilde{U}_0: ZB\hookrightarrow \hat{A} \hat{B} V$ be respectively the isometric extension of the CPTP maps $\cE_0$ and $\cD_0$ in 
Definition~\ref{def:Q_epsilon} with fidelity $1$ (i.e. $\epsilon = 0$). 
To achieve the block-error rate $Q_b = Q(0)$, 
Alice applies the isometry $U_0$, 
after which the purified state shared between the parties is
\begin{align*}
\ket{\sigma_0}^{Z\hat{C}WBRXX'}
     =\sum_x \sqrt{p(x)} \ket{\sigma_0(x)}^{Z\hat{C}WBR}\otimes \ket{x}^{X}\otimes \ket{x}^{X'}.
\end{align*}
%
Then the parties apply the QSR protocol to many copies of the above source where Alice 
sends system $M$ to Bob and systems $\hat{C}$ and $W$ are her side information. 
The rate achieved by the QSR protocol is
\begin{align*}
Q_b=\frac{1}{2}I(Z:RXX'|B)_{\sigma_0}.
\end{align*}

After executing the QSR protocol, Bob has $Z^n$, 
and the state shared between the parties is 
$\hat{\sigma}_0^{Z^n\hat{C}^nW^nB^nR^nX^n{X'}^n}$, 
which satisfies the following entanglement fidelity:
\begin{align}\label{eq: fidelity 3}
  F\left( (\sigma_0^{Z\hat{C}WBRXX'})^{\otimes n},
          \hat{\sigma}_0^{Z^n\hat{C}^nW^nB^nR^nX^n{X'}^n} \right) \to 1,
\end{align}
as $n\to\infty$.
Then, Bob applies to each system the CPTP map $\cD_0:ZB \longrightarrow \hat{A}\hat{B}$. 
Due to the monotonicity of the fidelity under CPTP maps, we obtain 
from Eq.~(\ref{eq: fidelity 3})
\begin{align}\label{eq: fidelity 4}
  \!  \!  \!  \! F\left(\!\! (\cD_0^{\otimes n}\!\otimes \!\id)(\sigma_0^{Z\hat{C}BRX})^{\otimes n}\!\!,
         (\cD_0^{\otimes n}\!\otimes\! \id)\hat{\sigma}_0^{Z^n\hat{C}^nB^n R^n X^n } \!\! \right) \!\!\to\!\! 1
\end{align}
as $n \to \infty$,
where the identity channel $\id$ acts on systems ${\hat{C}^nR^nX^n}$.
Notice that by the definition of $\cD_0$,
\begin{align*}
  (\omega^{ACBRX})^{\otimes n}
    =(\cD_0^{\otimes n}\otimes \id_{\hat{C}^nR^nX^n})(\sigma_0^{Z\hat{C}BRX})^{\otimes n}.
\end{align*}
Thus, the block fidelity criterion of Eq.~(\ref{eq:block fidelity criterion}) holds.


Now, let $U_{\epsilon}: AC\hookrightarrow Z \hat{C} W$ and $\widetilde{U}_{\epsilon}: ZB\hookrightarrow \hat{A} \hat{B} V$ be respectively the isometric extension of the CPTP maps $\cE_{\epsilon}$ and $\cD_{\epsilon}$ in 
Definition~\ref{def:Q_epsilon} with fidelity $1-\epsilon$.
To achieve the per-copy-error rate $Q_c^*$,
to each copy of the source Alice 
applies the isometry $U_{\epsilon}$. 
Then the purified state shared between the parties is
\begin{align*}
\ket{\sigma_{\epsilon}}^{Z\hat{C}WBRXX'}
     =\sum_x \sqrt{p(x)} \ket{\sigma_{\epsilon}(x)}^{Z\hat{C}WBR}\otimes \ket{x}^{X}\otimes \ket{x}^{X'}.
\end{align*}
The parties apply the QSR protocol to many copies of the above source where Alice 
sends system $Z$ to Bob and systems $\hat{C}$ and $W$ are her side information. 
The rate achieved by the QSR protocol is
\begin{align}
 Q_c &=\frac{1}{2}I(M:RXX'|B)_{\sigma_{\epsilon}}. \nonumber
\end{align}
After executing the QSR protocol, Bob has $Z^n$, and the state shared between 
the parties is $\hat{\sigma}_{\epsilon}^{Z^n\hat{C}^nW^nB^nR^nX^n{X'}^n}$, 
which satisfies the following 
entanglement fidelity:
\begin{align*}
  F\left( (\sigma_{\epsilon}^{Z\hat{C}WBRXX'})^{\otimes n},
          \hat{\sigma}_{\epsilon}^{Z^n\hat{C}^nW^nB^nR^nX^n{X'}^n} \right) \to 1
\end{align*}
as $n \to \infty$.
Due to monotonicity of the fidelity under partial trace, we obtain the per-copy fidelity, 
\begin{equation}\label{eq: fidelity 1}
  F(\sigma_{\epsilon}^{Z\hat{C}BRX},\hat{\sigma}_{\epsilon}^{Z_i\hat{C}_i B_i R_i X_i }) \to 1,
\end{equation}
for all $i \in [n]$ and $n \to \infty$.
Then, to each system $i$, Bob applies the CPTP map 
$\cD_{\epsilon}$.
We obtain 
\begin{align}\label{eq: fidelity 2}
 F \! \left(\! ( \! \cD_{\epsilon} \! \otimes \!  \id_{\hat{C}RX} \! )\sigma_{\epsilon}^{Z\hat{C}BRX}\!,\!
     ( \! \cD_{\epsilon}\otimes \id_{\hat{C}RX} \! )\hat{\sigma}_{\epsilon}^{Z_i\hat{C}_i B_i  \! R_i \!  X_i }  \! \!\right) \!\to \! 1
\end{align}
for all $i \in [n]$ and $n \to \infty$, 
which follows from Eq.~(\ref{eq: fidelity 1}) due to monotonicity of the fidelity 
under CPTP maps. 
On the other hand, the state 
$\xi_{\epsilon}^{\hat{A}\hat{C}\hat{B} RX}
  =(\cD_{\epsilon} \otimes \id_{\hat{C}RX})\sigma_{\epsilon}^{Z\hat{C}BRX}$ 
has high fidelity with the original source state, directly from the definition of $\cD_{\epsilon}$:
\begin{align*}
  F(\xi_{\epsilon}^{\hat{A}\hat{C}\hat{B} RX},\omega^{ACBRX}) \to 1.
\end{align*}
Therefore, from the above fidelity and Eq.~(\ref{eq: fidelity 2}) we obtain
\begin{align*}
  F\left( \omega^{ACBRX},
          (\cD_{\epsilon}\otimes \id_{\hat{C}RX})\hat{\sigma}_{\epsilon}^{M_i\hat{C}_i B_i R_i X_i } \right) 
     \to 1
\end{align*}
for all $i \in [n]$ and $n \to \infty$,
which satisfies the per-copy fidelity criterion in Eq.~(\ref{eq:per copy fidelity criterion}).
%
%
\end{proof}

Now, we define a new single-letter function 
%
which then we use to obtain simplified rates in Lemma~\ref{lemma: lower bound on Q_tilde(0)}   
and Corollary~\ref{cor:irreducible} which both are proved in \cite{QSR_ensemble_full}.
\begin{definition}
  \label{def:K_epsilon}
  For a state 
  $\omega^{ACBRX}=\sum_x p(x) \proj{\psi_x}^{ACBR}\otimes \proj{x}^{X}$ and $\epsilon \geq 0$ define:
  \begin{align*}
    K_\epsilon(\omega) &:= \sup I(W:X|\hat{C})_{\sigma} 
                                \text{ over isometries } \\
                       &\phantom{=====}
                        U: AC \rightarrow Z \hat{C} W \text{ and } 
                        \widetilde{U}:ZB \rightarrow \hat{A}\hat{B}V \text{ s.t.} \\
                       &\phantom{=====}
                        F( \omega^{ACBRX},\xi^{\hat{A} \hat{C} \hat{B} RX})  \geq 1- \epsilon, 
\end{align*}
where 
\begin{align*}
  \sigma^{Z\hat{C}WBRX}
     &:= (U\otimes \1_{BRX})\omega^{ACBRX} (U\otimes \1_{BRX})^{\dagger} \\
      & = \sum_x p(x) \proj{\sigma_x}^{Z\hat{C}WBR}\otimes \proj{x}^{X}, \\
  \xi^{\hat{A}\hat{C}\hat{B}WVRX}
     &:= (\widetilde{U}\otimes \1_{\hat{C}WRX}) \sigma^{Z\hat{C}WBRX}  
         (\widetilde{U}\otimes \1_{\hat{C}WRX})^{\dagger}\\
       &= \sum_x p(x) \proj{\xi_x}^{\hat{A}\hat{C}\hat{B}WVR}\otimes \proj{x}^{X}, \\
  \xi^{\hat{A}\hat{C}\hat{B}RX}
     &:= \Tr_{VW} \xi^{\hat{A}\hat{C}\hat{B}WVRX}.
\end{align*}
Moreover, define $\widetilde{K}_0:=\lim_{\epsilon \to 0+} K_{\epsilon}(\omega)$.
\end{definition}

\begin{remark}
Definition~\ref{def:K_epsilon} directly implies that $K_{0}(\omega) \leq \widetilde{K}_{0}(\omega)$ 
because $K_{\epsilon}(\omega)$ is a non-decreasing function of $\epsilon$. 
Furthermore, $K_{0}(\omega)$ can be strictly positive, for example, for a source 
with trivial system $C$ where $\psi_{x}^{A}\psi_{x'}^{A}=0$ holds for $x\neq x'$, 
we obtain $K_{0}(\omega)=S(X)$.
This follows because Alice can measure her system and obtain the value of $X$ and 
then copy this classical information to the register $W$. 
\end{remark}

\begin{lemma}\label{lemma: lower bound on Q_tilde(0)}
The rate $\widetilde{Q}(0)$ is lower bounded as:
\begin{align*}
    \widetilde{Q}(0) &\!\geq\! \frac{1}{2} \left(S(A|B)+S(A|C) \right) \!-\!\frac{1}{2}\widetilde{K}_{0} \\
   &\!=\!\frac{1}{2}I(A:RXX'|B)_{\omega}\!-\!\frac{1}{2}\widetilde{K}_{0},
\end{align*}
where the above conditional mutual information is precisely the communication rate 
of QSR for the purified source
\begin{align}\label{eq: purified source}
 \ket{\omega}^{ACBRXX'}
   =\sum_x \sqrt{p(x)} \ket{\psi_x}^{ACBR} \otimes \ket{x}^X \otimes \ket{x}^{X'}.
   \end{align}
Moreover, if system $C$ is trivial, then $\widetilde{Q}(0)$ is equal to this lower bound. 
\end{lemma}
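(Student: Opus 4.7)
Start by taking, for each $\epsilon > 0$, a code $(\cE_\epsilon,\cD_\epsilon)$ that nearly attains the infimum in $Q(\epsilon)$. Dilate to Stinespring isometries $U:AC\hookrightarrow Z\hat{C}W$ and $\widetilde{U}:ZB\hookrightarrow \hat{A}\hat{B}V$. Work with the purified source $\ket{\omega}^{ACBRXX'}$ of Eq.~(6.1), and let $\ket{\sigma}^{Z\hat{C}WBRXX'}=(U\otimes\1)\ket{\omega}$ and $\ket{\xi}^{\hat{A}\hat{B}V\hat{C}WRXX'}=(\widetilde{U}\otimes\1)\ket{\sigma}$ be the post-encoding and post-decoding (both pure) states. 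The central identity exploits that $U$, being an isometry on $AC$, preserves conditional mutual information with the external system $BRXX'$: $I(Z\hat{C}W:RXX'|B)_\sigma = I(AC:RXX'|B)_\omega$. Applying the chain rule on both sides, and using the purity of $\ket{\omega}^{ACBRXX'}$ (which gives $I(A:RXX'|B)_\omega = S(A|B) + S(A|C)$ and $I(C:RXX'|AB)_\omega = I(C:RXX')_\omega$), yields
\begin{align*}
I(Z{:}RXX'|B)_\sigma = I(A{:}RXX'|B)_\omega + I(C{:}RXX')_\omega - I(\hat{C}W{:}RXX'|ZB)_\sigma.
\end{align*}

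The plan is then to upper-bound the last term, rewritten via the decoder isometry as $I(\hat{C}W{:}RXX'|\hat{A}\hat{B}V)_\xi$, by $I(C{:}RXX')_\omega + I(W{:}X|\hat{C})_\sigma$ up to a correction $\eta(\epsilon)$ that vanishes as $\epsilon\to 0$. The fidelity $F(\omega^{ACBRX},\xi^{\hat{A}\hat{C}\hat{B}RX}) \geq 1{-}\epsilon$ is converted via the Fuchs--van~de~Graaf and Alicki-Fannes-Winter inequalities into continuity bounds comparing entropic quantities involving $\hat{A}\hat{B}\hat{C}RX$ in $\xi$ with those involving $ABCRX$ in $\omega$. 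Uhlmann's theorem (in approximate form) is then invoked to argue that the environments $VW$ in $\xi$ are nearly pure-product with $\hat{A}\hat{B}\hat{C}RX$; intuitively, after such decoupling, the only ``true'' residual correlation between $W$ and the reference $RXX'$ survives through the classical information $X$ conditional on $\hat{C}$, which is precisely $I(W{:}X|\hat{C})_\sigma$. Careful bookkeeping of error terms, using that $X=X'$ are perfectly correlated classical registers in $\omega$ and untouched by the code (so $I(\cdot{:}X|\hat{C})$ and $I(\cdot{:}XX'|\hat{C})$ may be interchanged freely), then yields the desired bound.

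Combining the identity with this bound and using $I(W{:}X|\hat{C})_\sigma \leq K_\epsilon(\omega)$ by the defining supremum, one obtains $2Q(\epsilon) \geq I(A{:}RXX'|B)_\omega - K_\epsilon(\omega) - \eta(\epsilon)$. Passing to the limit $\epsilon \to 0^+$ and invoking $\widetilde{K}_0 = \lim_{\epsilon \to 0^+} K_\epsilon(\omega)$ establishes the lower bound $\widetilde{Q}(0) \geq \tfrac{1}{2}I(A{:}RXX'|B)_\omega - \tfrac{1}{2}\widetilde{K}_0 = \tfrac{1}{2}(S(A|B)+S(A|C)) - \tfrac{1}{2}\widetilde{K}_0$. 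For the equality when $C$ is trivial: then $\hat{C}$ is also trivial and $I(C{:}RXX')_\omega = 0$, so choosing an encoder $U: A\hookrightarrow ZW$ that approaches the supremum in the definition of $K_\epsilon$ (i.e., one that extracts maximal classical information about $X$ into the environment $W$) saturates the intermediate bound up to $\eta(\epsilon)$. The matching achievability then follows from Theorem~6.1 (which gives $\widetilde{Q}(0)\leq Q(0)$) applied to this optimizing encoder, closing the gap.

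The main obstacle is the bound in the second paragraph: controlling the ``leakage'' $I(\hat{C}W{:}RXX'|\hat{A}\hat{B}V)_\xi$ by the classically-extractable quantity $I(W{:}X|\hat{C})_\sigma$ plus $I(C{:}RXX')_\omega$, modulo vanishing error. This requires simultaneously exploiting (i) the purity of the extended source state (to connect dual entropies via Schmidt symmetry), (ii) the classicality of $X=X'$ in $\omega$ together with the fact that neither is touched by the code (so they remain fully correlated throughout), and (iii) an approximate decoupling of $VW$ from $\hat{A}\hat{B}\hat{C}RX$ obtained via Uhlmann's theorem. Each of these steps is delicate on its own, and combining them while keeping the AFW remainder term under control is where the bulk of the technical work in the full proof resides.
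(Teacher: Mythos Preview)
The paper does not actually contain a proof of this lemma: immediately before the statement it says that Lemma~\ref{lemma: lower bound on Q_tilde(0)} and Corollary~\ref{cor:irreducible} ``are proved in \cite{QSR_ensemble_full}'', a manuscript listed as \emph{in preparation}. So there is no in-paper proof to compare your sketch against.

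That said, your overall strategy is the natural one and is consistent with the techniques deployed in Chapters~\ref{chap: E assisted Schumacher} and~\ref{chap:cqSW}. Your key identity is correct: using that $U$ is an isometry on $AC$ and the pure-state duality $I(A_1{:}A_2|A_3)=I(A_1{:}A_2|A_4)$ for complementary conditioning systems, one indeed gets
\[
I(Z{:}RXX'|B)_\sigma \;=\; I(A{:}RXX'|B)_\omega + I(C{:}RXX')_\omega - I(\hat{C}W{:}RXX')_\sigma,
\]
with $I(A{:}RXX'|B)_\omega = S(A|B)+S(A|C)$. The remaining task is exactly the bound you isolate,
\[
I(\hat{C}W{:}RXX')_\sigma \;\le\; I(C{:}RXX')_\omega + I(W{:}X|\hat{C})_\sigma + \eta(\epsilon),
\]
and here your sketch stops short. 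Two concrete difficulties are worth flagging. First, the fidelity criterion controls only $\xi^{\hat{A}\hat{C}\hat{B}RX}$ (hence $\sigma^{\hat{C}RX}$) against $\omega^{CRX}$, but your bound requires comparing $I(\hat{C}{:}RXX')_\sigma$ with $I(C{:}RXX')_\omega$, i.e.\ including the purifying register $X'$, whose coherences with $\hat{C}$ are \emph{not} directly constrained by the fidelity. Second, in the chain $I(W{:}RXX'|\hat{C})_\sigma = I(W{:}X|\hat{C})_\sigma + I(W{:}R|\hat{C}X)_\sigma + I(W{:}X'|\hat{C}RX)_\sigma$, the decoupling argument (analogue of Lemma~\ref{decoupling condition}) handles $I(W{:}R|\hat{C}X)$, but the residual term $I(W{:}X'|\hat{C}RX)$ again involves $X'$ and needs a separate argument. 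So while your ``the only residual correlation is $I(W{:}X|\hat{C})$'' intuition is the right heuristic, turning it into an inequality requires controlling these $X'$-coherence terms, which you have not done.

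Your argument for equality when $C$ is trivial is also incomplete: you need an \emph{achievability} statement showing $\widetilde{Q}(0)\le \tfrac12 I(A{:}RXX'|B)_\omega - \tfrac12\widetilde{K}_0$, and invoking ``Theorem~6.1 gives $\widetilde{Q}(0)\le Q(0)$'' does not deliver this unless you also show that some near-$K_\epsilon$-optimal encoder has $I(Z{:}RXX'|B)_\sigma$ close to the lower bound, which is precisely the matching upper bound on the leakage term that you have not established.
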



%
%

\begin{definition}[{Barnum~\emph{et~al.}~\cite{Barnum2001_2}}]
  \label{def:reducibility  QSR ensemble}
  An ensemble  
  $\cE=\{p(x),\proj{\psi_x}^{ACBR} \}_{x\in \mathcal{X}}$  of pure states
  is called \emph{reducible} if its states fall into two or more orthogonal subspaces.
  Otherwise the ensemble $\cE$ is called \emph{irreducible}.
  We apply the same terminology to the source state $\omega^{ACBRX}$.
\end{definition}

\begin{corollary}
\label{cor:irreducible}
For an irreducible source $\omega^{ACBRX}$, $K_0=\widetilde{K}_0=0$. Hence, the 
optimal asymptotically achievable per-copy-error rate and block-error rate 
are equal and
\begin{align*}
  Q^*_c=Q^*_b=\frac{1}{2}\left(S(A|C)+S(A|B) \right).
\end{align*}
\end{corollary}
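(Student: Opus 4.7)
\begin{proof-of}[{(plan) of Corollary \ref{cor:irreducible}}]
The plan is to reduce the corollary to showing $\widetilde{K}_{0}=0$ for irreducible sources, and then to establish this structural fact via BCFJS--type rigidity (the pure-state case of the Koashi--Imoto theorem). Assuming $\widetilde{K}_{0}=0$, Lemma~\ref{lemma: lower bound on Q_tilde(0)} immediately gives $\widetilde{Q}(0)\geq\tfrac{1}{2}\bigl(S(A|C)+S(A|B)\bigr)$. For the matching upper bound on $Q(0)$, I would take the trivial choice $Z=A$, $\hat{C}=C$, $\cE_{0}=\id_{AC}$, $\cD_{0}=\id_{AB}$ in Definition~\ref{def:Q_epsilon}: the fidelity is exactly $1$, and $I(Z:RXX'|B)_{\sigma}=I(A:RXX'|B)_{\omega}=S(A|B)+S(A|C)$, the last identity being the one recorded in Lemma~\ref{lemma: lower bound on Q_tilde(0)}. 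Therefore $Q(0)\leq\tfrac{1}{2}(S(A|B)+S(A|C))$. Chaining with Theorem~\ref{thm:main theorem}, which gives $\widetilde{Q}(0)\leq Q_{b}^{*}\leq Q(0)$ and $Q_{c}^{*}=\widetilde{Q}(0)$, yields the claimed equality $Q_{c}^{*}=Q_{b}^{*}=\tfrac{1}{2}(S(A|C)+S(A|B))$.

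To show $K_{0}=0$, I would unpack Definition~\ref{def:K_epsilon} in the case $\epsilon=0$. Given isometries $U:AC\hookrightarrow Z\hat{C}W$ and $\widetilde{U}:ZB\hookrightarrow\hat{A}\hat{B}V$ attaining perfect fidelity, compose them into the isometry $\mathcal{V}:ACB\hookrightarrow\hat{A}\hat{C}\hat{B}WV$ acting as $\mathcal{V}\otimes\1_{R}$ on each $|\psi_{x}\rangle^{ACBR}$. Perfect fidelity combined with the purity of $|\psi_{x}\rangle$ forces, via Uhlmann's theorem, $(\mathcal{V}\otimes\1_{R})|\psi_{x}\rangle^{ACBR}=|\psi_{x}\rangle^{\hat{A}\hat{C}\hat{B}R}\otimes|\phi_{x}\rangle^{WV}$ for unit vectors $|\phi_{x}\rangle^{WV}$ possibly depending on $x$. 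Since $\mathcal{V}$ is an isometry, inner products are preserved, so for any $x,y$ with $p(x),p(y)>0$,
\[
  \langle\psi_{x}|\psi_{y}\rangle
      =\langle\psi_{x}|\psi_{y}\rangle\,\langle\phi_{x}|\phi_{y}\rangle,
\]
which forces $\langle\phi_{x}|\phi_{y}\rangle=1$, i.e. $|\phi_{x}\rangle=|\phi_{y}\rangle$, whenever $\langle\psi_{x}|\psi_{y}\rangle\neq 0$. Irreducibility of the ensemble (Definition~\ref{def:reducibility  QSR ensemble}) means exactly that the ``overlap graph'' on $\{x:p(x)>0\}$, with edges $x\sim y$ iff $\langle\psi_{x}|\psi_{y}\rangle\neq 0$, is connected, so by transitivity all the $|\phi_{x}\rangle$ coincide with a single vector $|\phi\rangle^{WV}$. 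Hence the post-encoding state factorises,
\[
  \sigma^{Z\hat{C}WBRX}
      =\sigma^{Z\hat{C}BRX}\otimes\phi^{W},
  \qquad
  \phi^{W}=\Tr_{V}\,|\phi\rangle\!\langle\phi|^{WV},
\]
and therefore $I(W:X|\hat{C})_{\sigma}=0$. Taking the supremum over all such $(U,\widetilde{U})$ gives $K_{0}=0$.

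To upgrade this to $\widetilde{K}_{0}=\lim_{\epsilon\to 0}K_{\epsilon}=0$, I would argue by compactness and continuity. Since the image of $U:AC\hookrightarrow Z\hat{C}W$ has dimension at most $|AC|$, one may without loss of generality restrict to $|W|\leq|AC|$ by truncating to the range of $U$; similarly one may bound $|V|$, so the optimisation in $K_{\epsilon}$ is over isometries in a compact domain, in which the sup is attained. Now suppose for contradiction that $\widetilde{K}_{0}>0$: then there exist $\eta>0$ and a sequence $\epsilon_{n}\downarrow 0$ with attainers $(U_{n},\widetilde{U}_{n})$ satisfying $F(\omega,\xi_{n})\geq 1-\epsilon_{n}$ and $I(W:X|\hat{C})_{\sigma_{n}}\geq\eta$. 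By compactness, extract a convergent subsequence with limit $(U_{*},\widetilde{U}_{*})$; continuity of the fidelity yields $F(\omega,\xi_{*})=1$, so by the $\epsilon=0$ argument above $I(W:X|\hat{C})_{\sigma_{*}}=0$, while continuity of the conditional mutual information (Fannes--Audenaert or Alicki--Fannes--Winter) along the subsequence gives $I(W:X|\hat{C})_{\sigma_{*}}\geq\eta$, a contradiction. Hence $\widetilde{K}_{0}=0$, completing the proof.

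The main obstacle is the compactness step: although the effective dimension bound $|W|\leq|AC|$ is natural, Definition~\ref{def:K_epsilon} formally quantifies over arbitrary $W$, so this truncation must be justified (e.g.~by re-parametrising the isometry in terms of its action on its $|AC|$-dimensional image and checking that the entropic quantity of interest is invariant). Beyond this, the BCFJS/linearity argument for irreducible pure ensembles and the continuity inequalities are standard ingredients.
\end{proof-of}
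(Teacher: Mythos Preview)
The paper itself defers the proof of this corollary to the companion work \cite{QSR_ensemble_full}, so there is no in-text proof to compare against directly. However, the paper's treatment of the closely analogous situation in Chapter~5 (Theorem~\ref{theorem:generic optimal rate} and the remark following Theorem~\ref{QSR_achievability}) strongly indicates the intended route, and it is \emph{not} via compactness.

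Your overall architecture is correct: reduce to $\widetilde{K}_0=0$ via Lemma~\ref{lemma: lower bound on Q_tilde(0)} and the trivial choice $Z=A$ in Definition~\ref{def:Q_epsilon}, then invoke Theorem~\ref{thm:main theorem}. Your $K_0=0$ argument (isometry preserves overlaps, irreducibility $=$ connectivity of the overlap graph, hence all $\ket{\phi_x}$ coincide) is essentially the BCFJS/Barnum \emph{et al.} reasoning and is sound. One small imprecision: the factorisation $\sigma^{Z\hat{C}WBRX}=\sigma^{Z\hat{C}BRX}\otimes\phi^W$ need not hold, because $\ket{\phi}^{WV}$ may be entangled across $W{:}V$ and hence $W$ can be correlated with $ZB$. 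What is true, and suffices, is that $\sigma_x^{\hat{C}WR}=\Tr_{ZB}\proj{\sigma_x}=\Tr_{\hat{A}\hat{B}V}\proj{\xi_x}=\psi_x^{\hat{C}R}\otimes\phi^W$, whence $\sigma^{\hat{C}WX}$ factorises and $I(W{:}X|\hat{C})=0$.

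The genuine gap is your compactness step for $\widetilde{K}_0=0$. Bounding $|W|$ by ``truncating to the range of $U$'' does not work as stated: the range of $U$ is an $|AC|$-dimensional subspace of $Z\hat{C}W$ that generally does not respect the tensor factorisation, so you cannot simply shrink $|W|$ while keeping $U$ an isometry. More importantly, nothing in Definition~\ref{def:K_epsilon} bounds $|Z|$, and the paper explicitly flags this obstruction in the parallel setting (remark after Theorem~\ref{QSR_achievability}: ``there is no dimension bound on the systems $C$ and $W$, so compactness cannot be used directly to prove that $\widetilde{I}_0$ and $I_0$ are equal''). So compactness is not available here.

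The expected fix, mirroring the proof of Theorem~\ref{theorem:generic optimal rate}, is a \emph{direct} continuity estimate that bypasses any dimension bound on $W$ or $Z$. For fidelity $\geq 1-\epsilon$, Uhlmann gives $\ket{\phi_x}^{WV}$ with $\bigl\|\ket{\xi_x}-\ket{\psi_x}\ket{\phi_x}\bigr\|\leq\sqrt{2(1-f_x)}$; since $\langle\xi_x|\xi_y\rangle=\langle\psi_x|\psi_y\rangle$, one gets $|\langle\psi_x|\psi_y\rangle|\cdot|1-\langle\phi_x|\phi_y\rangle|$ small whenever $x\sim y$ in the overlap graph. Connectivity then forces all $\ket{\phi_x}$ to be uniformly close to some fixed $\ket{\phi_0}$, hence $\sigma^{\hat{C}WX}$ is trace-norm close to $\omega^{\hat{C}X}\otimes\phi_0^{W}$. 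Finally, apply Alicki--Fannes--Winter (Lemma~\ref{AFW lemma}) to $S(X|\hat{C}W)$: the bound scales with $\log|X|$, \emph{not} with $|W|$, so $I(W{:}X|\hat{C})\to 0$ uniformly over all admissible $(U,\widetilde{U})$ as $\epsilon\to 0$. This yields $\widetilde{K}_0=0$ without any appeal to compactness.
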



\section{Converse}
In this section, we first show some properties of the function $Q(\epsilon)$, which then we use to prove the converse for Theorem~\ref{thm:main theorem}.

\begin{lemma}
  \label{lemma:Q-convex}
For $0 \leq \epsilon \leq 1$, $Q(\epsilon)$ is a monotonically non-increasing, convex function of $\epsilon$. Consequently, for $0<\epsilon <1$ it is also continuous. 
\end{lemma}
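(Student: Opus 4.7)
The plan is to establish the three properties in order, with monotonicity being immediate, convexity being the technical heart, and continuity on $(0,1)$ then falling out from standard convex-analytic facts.

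For monotonicity, I would simply note that the feasible set in Definition~\ref{def:Q_epsilon} grows with $\epsilon$: any pair of CPTP maps $(\cE_{\epsilon_1},\cD_{\epsilon_1})$ satisfying $F(\omega^{ACBRX},\xi^{\hat{A}\hat{C}\hat{B}RX})\geq 1-\epsilon_1$ also satisfies the looser bound $1-\epsilon_2$ whenever $\epsilon_1\leq\epsilon_2$. Since the objective $\frac12 I(Z:RXX'|B)_{\sigma}$ does not depend on $\epsilon$, taking the infimum over a larger set can only decrease its value, so $Q(\epsilon_2)\leq Q(\epsilon_1)$.

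For convexity, fix $0\leq\lambda\leq 1$ and $\epsilon_1,\epsilon_2\in[0,1]$, set $\epsilon=\lambda\epsilon_1+(1-\lambda)\epsilon_2$, and pick near-optimal pairs $(\cE_1,\cD_1)$ and $(\cE_2,\cD_2)$ for $Q(\epsilon_1)$ and $Q(\epsilon_2)$, respectively. The key device is a classical flag register $F$ adjoined to $Z$: I would define $\cE(\rho)=\lambda\,\cE_1(\rho)\otimes\proj{1}^{F}+(1-\lambda)\,\cE_2(\rho)\otimes\proj{2}^{F}$ and $\cD$ as the controlled application of $\cD_1$ or $\cD_2$ conditioned on reading $F$. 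Joint concavity of the fidelity then gives
\[
F(\omega,\xi)\geq \lambda F(\omega,\xi_1)+(1-\lambda)F(\omega,\xi_2)\geq 1-\epsilon,
\]
so $(\cE,\cD)$ is feasible at error $\epsilon$. Writing $Z=Z'F$ in the resulting state $\sigma$, I would use the chain rule to split
\[
I(Z'F:RXX'|B)_{\sigma}=I(F:RXX'|B)_{\sigma}+I(Z':RXX'|BF)_{\sigma}.
\]
The first term vanishes because the flag is independent of the source and the side information, and the second term is a convex combination of the corresponding mutual informations in the two branches by expanding the conditioning on the classical system $F$. This yields $\frac12 I(Z:RXX'|B)_{\sigma}=\lambda\,\tfrac12 I_1+(1-\lambda)\,\tfrac12 I_2$, and taking the infimum over the near-optimal choices gives $Q(\epsilon)\leq\lambda Q(\epsilon_1)+(1-\lambda)Q(\epsilon_2)$.

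The main subtlety to be careful about is the role of the purifying copy $X'$ of the classical register $X$ in $I(Z:RXX'|B)_{\sigma}$: since $X'$ is a reference system untouched by $\cE_i$ and $\cD_i$, the flag-and-average construction goes through unchanged when one passes to the natural purification $\ket{\omega}^{ACBRXX'}$ from Eq.~(\ref{eq: purified source}). Finally, continuity of $Q$ on the open interval $(0,1)$ is automatic: every convex function on an open interval of $\mathbb R$ is continuous there \cite[Thm.~10.1]{Rockafeller}, which together with monotonicity completes the proof.
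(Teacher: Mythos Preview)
Your proof is correct and follows essentially the same approach as the paper: monotonicity from the nested feasible sets, convexity via the flag-register mixing construction together with joint concavity of the fidelity, and continuity on $(0,1)$ from the standard fact that real convex functions are continuous on open intervals. Your chain-rule decomposition with the explicit observation that $I(F:RXX'|B)_\sigma=0$ is slightly more detailed than the paper's one-line equality $I(ZZ':RXX'|B)=p\,I(Z:RXX'|B)_{\sigma_1}+(1-p)\,I(Z:RXX'|B)_{\sigma_2}$, but the argument is the same.
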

\vspace{-0.2cm}
\begin{proof}
The monotonicity directly follows from the definition. 
For the convexity, we verify Jensen's inequality, that is we start with maps 
$\cE_1,\cD_1$ eligible for error $\epsilon_1$ with the output state $\xi_1^{\hat{A}\hat{C}\hat{B}RX}$, and $\cE_2,\cD_2$ eligible for error $\epsilon_2$ with the output state $\xi_2^{\hat{A}\hat{C}\hat{B}RX}$,
and $0\leq p \leq 1$. By embedding into larger Hilbert spaces if necessary, we
can w.l.o.g. assume that the maps act on the same systems for $i=1,2$.
We define the following two maps:
\vspace{-0.2cm}
\begin{align*}
  \cE(\rho) &:= p \cE_1(\rho) \otimes \proj{1}^{Z'} + (1-p) \cE_2(\rho) \otimes \proj{2}^{Z'}, \\
  \cD(\rho) &:= \cD_1(\bra{1}^{Z'} \rho \ket{1}^{Z'}) + \cD_2(\bra{2}^{Z'} \rho \ket{2}^{Z'}).
\end{align*}
They evidently realise the output state $\xi^{\hat{A}\hat{C}\hat{B}RX} = p\xi_1^{\hat{A}\hat{C}\hat{B}RX} + (1-p)\xi_2^{\hat{A}\hat{C}\hat{B}RX}$ with the following fidelity:
\begin{align}
    F&(\omega^{ACBRX} ,\xi^{\hat{A} \hat{C} \hat{B} RX} ) \nonumber\\
      &= F(\omega^{ACBRX}  ,p \xi_1^{\hat{A} \hat{C} \hat{B} RX}
        + (1-p) \xi_2^{\hat{A} \hat{C} \hat{B} RX}) \nonumber \\
      &\geq p F(\omega^{ACBRX} \!,\!\xi_1^{\hat{A} \hat{C} \hat{B} RX} \!)
            \!+\!(1\!-\! p)F(\omega^{ACBRX} \!,\xi_2^{\hat{A} \hat{C} \hat{B} RX} \!) \nonumber\\
     &\geq 1-\left( p\epsilon_1 +(1-p)\epsilon_2 \right), \nonumber
\end{align}
where the third line is due to simultaneous concavity of the fidelity in both arguments.
The last line follows by the definitions of the states $\xi_1$ and $\xi_2$.
Therefore, the maps $\cE$ and $\cD$ yield a fidelity of at least 
$1-\left( p\epsilon_1 +(1-p)\epsilon_2 \right) =: 1-\epsilon$. 
Thus, 
\begin{align*}
  Q(\epsilon) &\leq I(ZZ':RXX'|B)_\xi \\
        &=    p I(Z:RXX'|B)_{\xi_1} + (1-p) I(Z:R|B)_{\xi_2},
\end{align*}
and taking the infimum over maps $\cE_i,\cD_i$ shows convexity. 

The continuity statement follows from a mathematical folklore fact, stating that 
any real-valued function that is convex on an interval, is continuous on the 
interior of the interval.
\end{proof}


\begin{proof-of}[of Theorem \ref{thm:main theorem} (converse)]
We prove the converse for the per-copy fidelity criterion, therefore, the same converse bound holds for the block fidelity criterion as well. 
Consider a block length $n$ code per-copy fidelity $1-\epsilon$. The number 
of qubits, $\log|M|$, can be lower bounded as follows, with respect to 
the encoded state $(\cE\otimes \id_{B_0B^nR^nX^nX'^n})\omega^{A^nC^nB^nR^nX^nX'^n} \otimes \Phi_K^{A_0B_0}$ of the purified source:
\begin{align*}
  2\!\log|M| \!&\!\geq 2 S(M) \\
           &\geq I(M:R^nX^n X'^n|B^nB_0) \\
           &=  \!  I(\underbrace{MB_0}_{Z}:R^nX^nX'^n|B^n) \!\!-\!\! I(B_0:R^nX^nX'^n|B^n) \\
           &=    I(Z:R^nX^nX'^n|B^n)                           \\
           &=    \sum_{i=1}^n I(Z:R_iX_iX'_i|B^nR_{<i}X_{<i}X'_{<i}) \\
           &\quad \quad+ \sum_{i=1}^n I(R_{<i}X_{<i}X'_{<i}B_{[n]\setminus i}:R_iX_iX'_i|B_i) \\
           &=    \sum_{i=1}^n I(ZR_{<i}X_{<i}X'_{<i}B_{[n]\setminus i}:R_iX_iX'_i|B_i) \\
           &\geq \sum_{i=1}^n I(\underbrace{ZB_{[n]\setminus i}}_{Z_i}:R_iX_iX'_i|B_i), 
           \vspace{-0.3cm}
\end{align*} 
where in the first two inequalities we use standard entropy inequalities;
the equation in the third line is due to the chain rule, and the second 
conditional information is $0$ because $B_0$ is independent of $B^nR^n$;
the fourth line introduces a new register $Z$, noting that 
the encoding together with the entangled state defines a CPTP map 
$\cE_0:A^n \rightarrow Z\hat{C}^n$, via $\cE_0(\rho) = (\cE \otimes \id_{B_0})(\rho \otimes\Phi_K^{A_0B_0})$;
in the fifth we use the chain rule iteratively, and in the second term we introduce, 
each summand is $0$ because for all $i$, $R_{<i}B_{[n]\setminus i}$ is independent of $R_iB_i$;
in the sixth line we use again the chain rule for all $i$, and the
last line is due to data processing.
%
%

Now, for the $i$-th copy of the source $\omega^{A_iC_iB_iR_iX_i}$, we define maps 
$\cE_i:A_i C_i \rightarrow Z_i \hat{C}_i$ and $\cD_i:B_iZ_i \rightarrow \hat{A}_i \hat{B}_i$,
as follows:
\begin{itemize}
\item[$\cE_i$:] Alice tensors her system $A_i$ with a dummy state $\omega^{\otimes [n]\setminus i}$ and
      with $\Phi_K^{A_0B_0}$ (note that all systems are in her possession). 
      Then she applies $\cE:A^nC^nA_0 \rightarrow M \hat{C}^n$, and sends 
      $Z_i := M B_0 B_{[n]\setminus i}$ to Bob, while keeping $\hat{A}_i \hat{C}_i$.
      All other systems, i.e. $ \hat{A}_{[n]\setminus i} \hat{C}_{[n]\setminus i} R_{[n]\setminus i}X_{[n]\setminus i}$, are trashed.
\item[$\cD_i$:] Bob applies $\cD$ to $Z_iB_i = M B_0 B^n $ and keeps 
      $\hat{A}_i \hat{B}_i$, trashing the rest $\hat{A}_{[n]\setminus i}\hat{B}_{[n]\setminus i}$.
\end{itemize}
By definition, the output state 
\begin{align*}
\zeta^{\hat{A}_i\hat{C}_i\hat{B}_iR_iX_i} 
\!\!= \!(\cD_i \otimes \id_{\hat{A}_i\hat{C}_iR_iX_i}\!)\!\circ\!(\cE_i \otimes \id_{B_iR_iX_i}\!)\omega^{A_iC_iB_iR_iX_i}
\end{align*}
equals $\xi^{\hat{A}_i\hat{C}_i\hat{B}_iR_iX_i}$ which has fidelity $1-\epsilon_i$ with the source $\omega^{ACBRX}$, and the fidelity for all copies satisfy $  \frac{1}{n} \sum_i (1-\epsilon_i) \geq 1-\epsilon$. 
Thus, we obtain, with respect to the states $(\cE_i \otimes \id_{B_iR_iX_iX'_i})\omega^{A_iC_iB_iR_iX_iX'_i}$
\[\begin{split}
  \frac{1}{n}\log|M| &\geq \frac{1}{n} \sum_{i=1}^n \frac12 I(Z_i:R_iX_iX'_i|B_i)\\ 
                    & \geq \frac{1}{n} \sum_{i=1}^n Q(\epsilon_i)               
                     \geq Q\left( \frac{1}{n} \sum_{i=1}^n \epsilon_i\right)  
                     \geq Q(\epsilon), 
\end{split}\]
continuing from before, then by definition of $Q(\epsilon_i)$ since 
the pair $(\cE_i,\cD_i)$ results in fidelity $1-\epsilon_i$, in the next inequality 
by convexity and finally by monotonicity of $Q(\epsilon)$ (Lemma \ref{lemma:Q-convex}).
By the taking the limit of $\epsilon \to 0$ and $n \to \infty$, the claim follows.
\end{proof-of}

\vspace{-0.2cm}

\section{Discussion}
\label{sec: discussion qsr ensemble}
We considered a variant of the quantum state redistribution task, where pure 
multipartite states from an ensemble are distributed between an encoder, 
a decoder and a reference system. 
We distinguish two figures of merit for the information processing, per-copy 
fidelity and block fidelity, and define the corresponding quantum communication 
rates depending on the fidelity criterion, when unlimited entanglement is available.
For the per-copy fidelity criterion, we find that the optimal qubit rate of compression is equal to $\widetilde{Q}(0)$ from Definition~\ref{def:Q_epsilon}, 
which is bounded from below by the rate of the conventional QSR task minus the limit of the 
single-letter non-negative function $\widetilde{K}_0$ from Definition~\ref{def:K_epsilon}:
\begin{align*}
  \widetilde{Q}(\!0\!)\!\geq\! \frac{1}{2}\! \left( \!S(A|B)\!+\!S(A|C) \right) \!-\! \frac{1}{2}\! \widetilde{K}_0
        \!=\! \frac{1}{2} I(A:RXX'|B)_{\omega} \!-\! \!\frac{1}{2} \widetilde{K}_0,
\end{align*}
where the conditional mutual information is the rate of QSR
for the purified source in Eq.~(\ref{eq: purified source}). This lower bound is tight if system $C$ is trivial (state merging scenario).


For the block fidelity criterion, we have found converse and achievability bounds:
\vspace{-0.2cm}
\begin{align*}
 \widetilde{Q}(0)\leq Q_b \leq Q(0).
\end{align*}
The two bounds would match if we knew that the function $Q(\epsilon)$ were 
continuous at $\epsilon=0$. However, we don not know this; for one thing, one 
cannot use compactness to show continuity because the output system $W$ in 
Definition~\ref{def:K_epsilon} is as priori unbounded. 

For irreducible sources though, we show here $K_0=\widetilde{K}_0=0$, which implies 
that the purified source model  and the ensemble  model lead to the same compression 
rate. For reducible sources the information that the encoder can obtain about the 
classical variable of the ensemble, i.e. system $X$, is effectively used as 
side information to achieve a smaller compression rate. Thus we reproduce the 
result of \cite[Thm.~III.3]{Ahn2006}, which was proven only for irreducible 
product state ensembles. 

There are other sources for which we know $K_0=\widetilde{K}_0=0$ to hold. First, 
the ``generic'' sources  in \cite[Thm.~11]{ZK_cqSW_2018}, where it is shown that the 
function $\widetilde{I}_0=0$; this function is a special case of the function $\widetilde{K}_0$.
Indeed, the source there is described by an ensemble $\{p(x), \ket{\psi_x}^{AR}\ket{x}^B\}$,
which is always completely reducible, but generically the reduced states $\psi_x^A$ have
pairwise overlapping support, which is the condition under which 
vanishing $\widetilde{K}_0$ is shown. 
Secondly, the ensemble of four Bell states considered in \cite[Thm.~IV.1]{Ahn2006},
$\{ p(ij), \ket{\Phi_{ij}}^{AB} \}_{i,j=0,1}$,
where the side information system $C$ and the reference system $R$ are trivial; 
for this source, the mutual information between Alice's system and the classical system 
$X$ is zero, i.e. $I(A:X)=0$. Thus, due to data processing inequality, we have
$I(W:X) \leq I(A:X)=0$. Our main result reproduces the achievable rate 
$\frac12 H(p)$, and also the optimality, by very different, and somewhat more 
natural methods. 

There are other special cases of the source model of this chapter that have been previously
studied in the literature for which $K_0 > 0$ or at least $\widetilde{K}_0 > 0$. 
For instance in the source of \cite{Devetak2003}, where Alice's system is classical 
with $A=X$ and system $C$ is trivial, one can observe that $K_0 = S(X)$ holds. 
The rate we get is $Q^* = \frac12 S(X|B)$ under either error criterion, half of the
quantity reported in \cite{Devetak2003} because of the free entanglement in our
model, which allows for dense coding.
Furthermore, the visible variant of Schumacher compression in \cite{Barnum1996,Winter1999}, 
where Alice's side information system is classical with $C=X$, the function has the value 
$K_0 = S(X)$, and the optimal rate is $Q^*=\frac12 S(A)$, again half of the optimal
rate without entanglement, because we can use remote state preparation and dense coding. 
A third example is the ensemble $\{\frac13,\ket{\psi_i}^A\ket{\phi_i}^B\}_{i=1}^3$ from
\cite[Sec.~V.A]{Ahn2006}, which is reducible, but where the reduced ensembles 
on systems $A$ and $B$ are both irreducible; it is shown there that the optimal 
compression rate is strictly smaller than $(S(A)+S(A|B))/2$.


Finally, recall that in our definition of the compression task we have assumed 
that the encoder and decoder share free entanglement. This was motivated so as 
to make a smoother connection to QSR.
However, it is not known whether the pre-shared entanglement is always necessary to achieve 
the corresponding quantum rates. There are certainly cases where QSR does not require prior 
entanglement, such as when Alice's side information $C$ is trivial, which would carry
over to our setting whenever $K_0=\widetilde{K}_0=0$, for instance for an irreducible
ensemble.
More generally, in future work we plan to consider the trade-off between the quantum 
 and  entanglement rates.

\part{Quantum Thermodynamics}
\chapter{Resource theory of charges and entropy}
\label{chap:resource theory}

In this chapter, we consider asymptotically many non-interacting systems with multiple conserved quantities or charges. 
We generalize the seminal results of 
Sparaciari, Oppenheim and Fritz [\emph{Phys. Rev. A} 96:052112, 2017]
to the case of multiple, in general non-commuting charges. 
To this aim we formulate a resource theory of thermodynamics of asymptotically 
many non-interacting systems with multiple conserved quantities or charges. 
To any quantum state, we associate a vector with entries of the expected charge 
values and entropy of that state. We call the set of all these vectors the 
phase diagram of the system, and show that it characterizes the equivalence 
classes of states under asymptotic unitary transformations that approximately 
conserve the charges. This chapter is based on the results from \cite{thermo_ZBK_2020}.

\section{Resource theory of charges and entropy}
\label{sec:resource-theory}
%
%

Resource theory is a rigorous mathematical framework initially developed to characterize the 
role of entanglement in quantum information processing tasks. Later the framework was extended 
to characterize coherence, non-locality, asymmetry and many more, including quantum Shannon theory itself, see \cite{bcp14,Winter-Dong-2016,c&g16,m&s16,V&S17,msz16,sap16,srb17,g&w19,cpv18,sha19,Vic14,d&a18,Devetak2008_1}. 
The resource theory approach applies also to classical theories. In general, the resource 
theories have the following common features: (1) a well-defined set of resource-free states, 
and any states that do not belong to this set has a non-vanishing amount of resource; 
(2) a well-defined set of resource-free operations, also known as allowed operations, 
that cannot create or increase resource in a state. These allow one to quantify the 
resources present in the states or operations and characterize their roles in the transformations 
between the states or the operations. In particular, it enables one to define and rigorously 
bound or even determine various resource measures; determine which states can be 
transformed to the others using allowed operation; how the property of states may be 
changed, and how these changes are bounded under the allowed operations, etc.

\medskip

A system in our resource theory is a quantum system $Q$ with a finite-dimensional Hilbert space 
(denoted $Q$, too, without danger of confusion), together with a
Hamiltonian $H=A_1$ and other quantities (``charges'') $A_2, \ldots, A_c$, all of which are 
Hermitian operators that do not necessarily commute with each other. We consider composition of 
$n$ non-interacting systems, where the Hilbert space of the \textit{composite} system $Q^n$ is 
the tensor product $Q^{\otimes n} = Q_1 \otimes \cdots \otimes Q_n$ of the Hilbert spaces of 
the \textit{individual} systems, and the $j$-th charge of the composite system is the sum of 
charges of individual systems as follows,
\begin{equation}
  A^{(n)}_j = \sum_{i=1}^{n} \1^{\otimes (i-1)} \otimes A_j \otimes \1^{\otimes (n-i)}, 
                \quad j=1,2,\ldots,c.
\end{equation}
For ease of notation, we will write throughout
$A_j^{[Q_i]} = \1^{\otimes (i-1)} \otimes A_j \otimes \1^{\otimes (n-i)}$.

We wish to build a resource theory where the objects are states on a quantum system, 
which are transformed under thermodynamically meaningful operations.
To any quantum state $\rho$ is assigned the point 
$(\und{a},s) = (a_1,\ldots,a_c,s) 
= \bigl( \Tr \rho A_1, \ldots, \Tr \rho A_c, S(\rho) \bigr) \in \mathbb{R}^{c+1}$,
which is an element in the \emph{phase diagram} that has been originally introduced,
for $c=1$, as energy-entropy diagram in \cite{Sparaciari2016}; there it is shown,
for a system where energy is the only conserved quantity, that the diagram is a convex set.
In the case of commuting multiple conserved quantities, the charge-entropy diagram has been 
generalised and further investigated in \cite{brl19}. 
Note that the set of all these vectors, denoted $\mathcal{P}^{(1)}$, is not in 
general convex (unless the quantities commute pairwise). 
An example is a qubit system with charges $\sigma_x$, $\sigma_y$ and $\sigma_z$ where 
charge values uniquely determine the state as a linear function of the $\tr \rho\sigma_i$, 
hence the entropy, while the von Neumann entropy itself is well-known to be strictly concave.

Moreover, the set of these points for a composite system with charges $A_1^{(n)}, \ldots, A_c^{(n)}$, 
which we denote $\mathcal{P}^{(n)}$ contains, but is not necessarily equal to $n\mathcal{P}^{(1)}$
(which however is true for commuting charges). Namely, consider the point 
$g=\left(\frac{1}{2}\Tr (\rho_1+\rho_2) A_1, \ldots, \frac{1}{2}\Tr (\rho_1+\rho_2) A_c,
\frac{1}{2}S(\rho_1)+\frac{1}{2}S(\rho_2)\right)$, which does not necessarily 
belong to $\mathcal{P}^{(1)}$ but belongs to its convex hull; 
however, $2g \in \mathcal{P}^{(2)}$ due to the state $\rho_1 \otimes \rho_2$.
Therefore, we consider the convex hull of the set $\mathcal{P}^{(1)}$ and call it the 
\emph{phase diagram} of the system, denoted
\begin{equation}
  \overline{\mathcal{P}} 
     \equiv \overline{\mathcal{P}}^{(1)} 
     := \left\{ \left(\sum_i p_i \Tr \rho_i A_1, \ldots, \sum_i p_i\Tr \rho_i A_c, \sum_i p_i S(\rho_i)\right) : 
                0 \leq p_i \leq 1,\,\sum_i p_i = 1 \right\}.
\end{equation} 
The interpretation is that the objects of our resource theory are ensembles of 
states $\{p_i,\rho_i\}$, rather than single states. 

We define the \emph{zero-entropy diagram} and \emph{max-entropy diagram}, 
respectively, as the sets
\begin{align*}
  \overline{\mathcal{P}}_0^{(1)} 
     &= \{(\und{a},0): \Tr \rho A_j = a_j \text{ for a state } \rho \}, \\
  \overline{\mathcal{P}}_{\max}^{(1)} 
     &= \left\{\bigl(\und{a},S(\tau(\und{a}))\bigr): \Tr \rho A_j = a_j \text{ for a state } \rho \right\}, 
\end{align*}
where $\tau(\und{a})$ is the unique state maximising the entropy among all states 
with charge values $\Tr \rho A_j = a_j$ for all $j$, which is called generalized 
thermal state, or generalized Gibbs state, or also generalized grand canonical state \cite{Liu2007}. 
Note that, as a linear image of the compact convex set of states, the zero-entropy diagram is 
compact and convex.
We similarly define the set $\mathcal{P}^{(n)}$, the phase diagram $\overline{\mathcal{P}}^{(n)}$, 
zero-entropy diagram $\overline{\mathcal{P}}_0^{(n)}$ and max-entropy diagram 
$\overline{\mathcal{P}}_{\max}^{(n)}$ 
for the composition of $n$ systems with charges $A_1^{(n)}, \ldots, A_c^{(n)}$. 

\begin{figure}[!t]
  \includegraphics[width=1\textwidth]{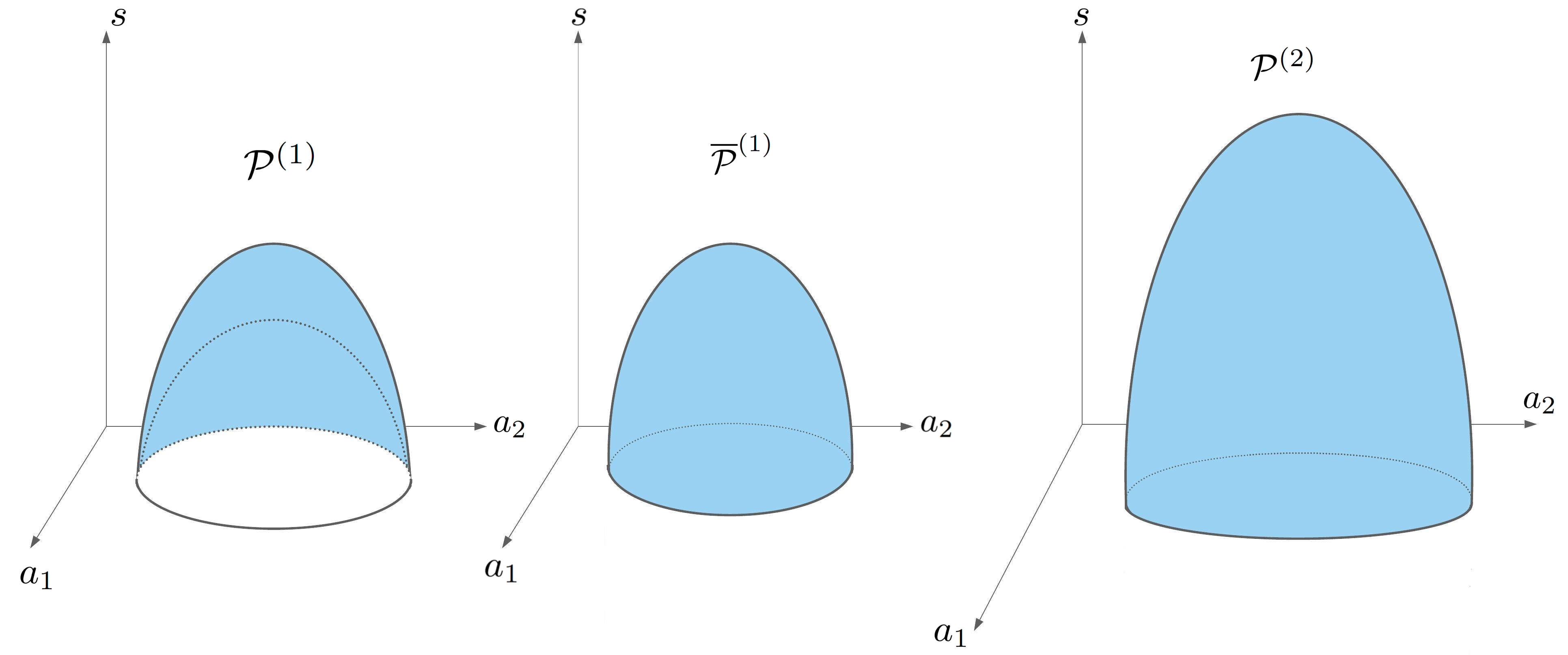}
  \caption{Schematic of the phase diagrams $\mathcal{P}^{(1)}$, $\mathcal{P}^{(2)}$
           and $\overline{\mathcal{P}}$. As seen, $\mathcal{P}^{(1)}$ is not convex, and there is a hole inside the diagram.}
  \label{fig:phase-diagrams}
\end{figure}

\begin{lemma}
\label{lemma:phase diagram properties}
For an individual and composite systems with charges $A_j$ and $A^{(n)}_j$, respectively, 
we have: 
\begin{enumerate}
  \item $\overline{\mathcal{P}}^{(n)}$, for $n \geq 1$, is a compact and convex 
        subset of $\mathbb{R}^{c+1}$.

  \item $\overline{\mathcal{P}}^{(n)}$, for $n \geq 1$, is the convex hull of the union 
        $\overline{\mathcal{P}}_{0}^{(n)} \cup \overline{\mathcal{P}}_{\max}^{(n)}$, 
        of the zero-entropy diagram and the max-entropy diagram.
     
    
  \item $\overline{\mathcal{P}}^{(n)} = n \overline{\mathcal{P}}^{(1)}$ for all $n \geq 1$. 
    
  \item $\mathcal{P}^{(n)}$ is convex for all $n \geq 2$, and indeed 
        $\mathcal{P}^{(n)} = \overline{\mathcal{P}}^{(n)} = n \overline{\mathcal{P}}^{(1)}$. 
    
    
    
  \item Every point of $\mathcal{P}^{(n)}$ is realised by a suitable tensor product state 
        $\rho_1 \otimes \cdots \otimes \rho_n$, for all $n \geq d$.

  \item All points $\bigl(\und{a},S(\tau(\und{a}))\bigr) \in \overline{\mathcal{P}}_{\max}$
        are extreme points of $\overline{\mathcal{P}}$.
\end{enumerate}
\end{lemma}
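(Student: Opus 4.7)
The plan is to dispose of parts~1 and~6 first, as they follow from standard convex-analytic and entropic facts. For part~1, convexity holds by construction since $\overline{\mathcal{P}}^{(n)}$ is defined as a convex hull, and compactness is obtained by observing that the map $\rho \mapsto (\Tr \rho A_1^{(n)}, \ldots, \Tr \rho A_c^{(n)}, S(\rho))$ is continuous on the compact set of density operators on $Q^{\otimes n}$, so its image $\mathcal{P}^{(n)}$ is compact; the convex hull of a compact set in $\mathbb{R}^{c+1}$ is itself compact by Carath\'eodory's theorem. For part~6, I would suppose a generalized Gibbs point $(\und{a}, S(\tau(\und{a})))$ is a nontrivial convex combination $\lambda(\und{a}_1,s_1)+(1-\lambda)(\und{a}_2,s_2)$ of points in $\overline{\mathcal{P}}$, and combine the bounds $s_i \leq S(\tau(\und{a}_i))$ with strict concavity of the function $\und{a}\mapsto S(\tau(\und{a}))$ (inherited from strict concavity of the von Neumann entropy on the exponential family of generalized Gibbs states) to force $\und{a}_1=\und{a}_2=\und{a}$ and $s_1=s_2=S(\tau(\und{a}))$, contradicting nontriviality.

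Part~2 is a direct convex-combination construction. For any $(\und{a},s)\in\overline{\mathcal{P}}^{(n)}$, the maximum-entropy principle gives $s\leq S(\tau^{(n)}(\und{a}))$, while the achievability of $\und{a}$ by some state places $(\und{a},0)$ in $\overline{\mathcal{P}}_0^{(n)}$; one then writes
\[
  (\und{a},s) = \left(1-\frac{s}{S(\tau^{(n)}(\und{a}))}\right)(\und{a},0)
                + \frac{s}{S(\tau^{(n)}(\und{a}))}\bigl(\und{a}, S(\tau^{(n)}(\und{a}))\bigr),
\]
with the degenerate case $S(\tau^{(n)}(\und{a}))=0$ forcing $s=0$ and placing the point directly in $\overline{\mathcal{P}}_0^{(n)}\cap\overline{\mathcal{P}}_{\max}^{(n)}$.

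For part~3, the inclusion $n\overline{\mathcal{P}}^{(1)}\subseteq \overline{\mathcal{P}}^{(n)}$ is realized by tensor products: a rational combination $\sum_i(k_i/n)(\und{a}_i,s_i)$ with nonnegative integers $k_i$ summing to $n$ is hit exactly by $\bigotimes_i\rho_i^{\otimes k_i}$, since $A_j^{(n)}$ acts additively on tensor factors and the von Neumann entropy is additive on tensor products; arbitrary real weights follow by density in the probability simplex together with closedness of $\overline{\mathcal{P}}^{(n)}$ from part~1. The reverse inclusion uses subadditivity: for any $\rho$ on $Q^{\otimes n}$ with marginals $\rho_i$, the charges equal $\sum_i\Tr\rho_i A_j$ and $S(\rho)\leq \sum_i S(\rho_i)$, so $\tfrac{1}{n}(\und{a},S(\rho))$ sits on or below a point of $\overline{\mathcal{P}}^{(1)}$ in the entropy coordinate, and mixing with the zero-entropy points $(\und{a}',0)\in\overline{\mathcal{P}}_0^{(1)}$ (available for every achievable $\und{a}'$) reaches the target entropy exactly.

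Parts~4 and~5 carry the main technical content, and I expect the removal of the convex-hull closure already at $n=2$ to be the hardest step. For part~5, I would invoke Carath\'eodory in $\mathbb{R}^{c+1}$ to write any point of $n\overline{\mathcal{P}}^{(1)}$ as a convex combination of at most $c+2$ points of $\mathcal{P}^{(1)}$; for $n\geq d$ with $d$ the appropriate Carath\'eodory index, one rationalizes the weights as $k_i/n$ with $\sum_i k_i=n$ and exactly corrects the small rounding discrepancy by sliding one factor $\rho_{i_0}$ along the segment to a zero-entropy state at the same charge expectation. For part~4 at $n=2$, my plan is to promote any ensemble realization $\{p_i,\rho_i\}$ of a point $(\und{a},s)\in\overline{\mathcal{P}}^{(2)}$ into a single density operator on $Q^{\otimes 2}$ by using one tensor factor as a flag: form $\sum_i p_i\, \sigma_i^{(1)}\otimes\sigma_i^{(2)}$ with the $\sigma_i^{(1)}$ mutually orthogonal (possible since $\dim Q\geq 2$), so that the mixing entropy decomposes as $H(p)+\sum_i p_i S(\sigma_i^{(1)}\otimes\sigma_i^{(2)})$, and then arrange the flag states so that their charge expectations cancel to give the target $\und{a}$; the $H(p)$ excess is then absorbed by a final mixing with a zero-entropy point at the same charge, available via part~2. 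The main obstacle I anticipate is the careful bookkeeping of the flag's charge expectations when the $A_j$ are non-commuting, which may require enlarging the flag register within $Q^{\otimes 2}$ and exploiting the extra dimensions to simultaneously orthogonalize the flag states while balancing all $c$ charge expectations.
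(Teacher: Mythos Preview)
Your treatments of parts~1, 2 and~6 match the paper's, and your argument for part~3 is a valid alternative (the paper instead reduces via part~2 to the identities $\overline{\mathcal{P}}_0^{(n)}=n\overline{\mathcal{P}}_0^{(1)}$ and $\overline{\mathcal{P}}_{\max}^{(n)}=n\overline{\mathcal{P}}_{\max}^{(1)}$, the latter because $\tau^{(n)}(\und a)=\tau(\und a)^{\otimes n}$). The real gaps are in parts~4 and~5.

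For part~4, your flag construction runs into exactly the obstacles you name, and they are not cosmetic. With $\dim Q$ possibly as small as $2$ you cannot accommodate more than two mutually orthogonal flag states, whereas Carath\'eodory may require $c+2$; and forcing the flag register to simultaneously balance $c$ non-commuting charge expectations while remaining orthogonal is an overdetermined problem with no reason to be solvable. Moreover, your ``absorb the $H(p)$ excess by mixing with a zero-entropy point'' step puts you back into $\overline{\mathcal P}^{(2)}$, not $\mathcal P^{(2)}$. The paper bypasses all of this with a one-line construction: diagonalize $\tau(\und a)=\sum_i p_i\ketbra{i}{i}$ and set $\ket{v}=\sum_i\sqrt{p_i}\,\ket{i}^{\otimes n}$ for $n\geq 2$. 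Every single-copy marginal of $\ketbra{v}{v}$ equals $\tau(\und a)$, so $\ketbra{v}{v}$ is a \emph{pure} state on $Q^{\otimes n}$ with exactly the target charges. Interpolating $\lambda\ketbra{v}{v}+(1-\lambda)\tau(\und a)^{\otimes n}$ then sweeps the entropy continuously from $0$ to $nS(\tau(\und a))$ at fixed charges, which together with part~2 gives $\mathcal P^{(n)}=\overline{\mathcal P}^{(n)}$.

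For part~5, first note that $d$ is the Hilbert-space dimension of $Q$, not a Carath\'eodory index. Your rationalisation-and-slide scheme has a genuine gap: rounding $p_i\to k_i/n$ perturbs \emph{both} the charges and the entropy, but ``sliding one factor to a zero-entropy state at the same charge expectation'' corrects only the entropy coordinate, leaving the $c$ charge coordinates uncorrected. The paper's mechanism is different and uses $n\geq d$ essentially: any density operator on a $d$-dimensional space admits a \emph{uniform} decomposition $\rho=\tfrac1n\sum_{i=1}^n\ketbra{\psi_i}{\psi_i}$ whenever $n\geq d$. Then $\bigotimes_i\ketbra{\psi_i}{\psi_i}$ is a pure tensor-product state whose total charges equal those of $\rho^{\otimes n}$. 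Replacing each factor by $\rho_i(\lambda)=\lambda\ketbra{\psi_i}{\psi_i}+(1-\lambda)\tau(\und a)$ keeps the total charges fixed for every $\lambda$ (since $\tau(\und a)$ and $\rho$ share charge values) and interpolates the entropy of $\bigotimes_i\rho_i(\lambda)$ from $0$ to $nS(\tau(\und a))$, realising every point of $\mathcal P^{(n)}$ by a tensor product.
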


\begin{proof}
1. The phase diagram is convex by definition. Further, $\Tr \rho A_j^{(n)}$ and $S(\rho)$ 
are continuous functions defined on the set of quantum states which is a compact set; 
hence, the set $\mathcal{P}^{(n)}$ is also a compact set. The cxonvex hull of a 
finite-dimensional compact set is compact, so the phase diagram is a compact set.  

2. Any point in the phase diagram according to the definition is a convex combination of the form 
\[(a_1,\ldots,a_c,s)
= \left(\sum_i p_i \Tr(\rho_i A_1), \ldots, \sum_i p_i\Tr(\rho_i A_c),\sum_i p_i S(\rho_i)\right).\]
The point $(a_1,\ldots,a_c,0)$ belongs to $\overline{\mathcal{P}}_{0}^{(1)}$ 
because the state $\rho=\sum_i p_i \rho_i$ has charge values $a_1, \ldots, a_c$. 
Moreover, the state with charge values $a_1, \ldots, a_c$ of maximum entropy is 
the generalized thermal state $\tau(\und{a})$, so we have 
\begin{align*}
    S(\tau(\und{a})) \geq S(\rho) \geq \sum_i p_i S(\rho_i),
\end{align*}
where the second inequality is due to concavity of the entropy. Therefore, any point
$(\und{a},s)$ can be written as the convex combination of the points $(\und{a},0)$ and 
$(\und{a},S(\tau(\und{a}))$.

3. Due to item 2, it is enough to show that 
$\overline{\mathcal{P}}_{0}^{(n)} = n\overline{\mathcal{P}}_{0}^{(1)}$, and 
$\overline{\mathcal{P}}_{\max}^{(n)}= n\overline{\mathcal{P}}_{\max}^{(1)}$. 
The former follows from the definition. The latter is due to the fact that the 
thermal state for a composite system is the tensor power of the thermal state of 
the individual system. 

4. Let $\tau(\und{a}) = \sum_i p_i \ketbra{i}{i}$ be the diagonalization of the 
generalized thermal state.
For $n \geq 2$, define $\ket{v} = \sum_i \sqrt{p_i} \ket{i}^{\ox n}$. Obviously, the charge 
values of the states $\tau(\und{a})^{\ox n}$ and $\ketbra{v}{v}$ are the same, since they
have the same reduced states on the individual systems;
thus, there is a pure state for any point in the zero-entropy diagram of the composite system.
Now, consider the state $\lambda \ketbra{v}{v}+(1-\lambda)\tau(\und{a})^{\ox n}$, which has the 
same charge values as $\tau(\und{a})^{\ox n}$ and $\ketbra{v}{v}$. 
The entropy $S\bigl(\lambda \ketbra{v}{v}+(1-\lambda)\tau(\und{a})^{\ox n}\bigr)$ is a continuous function 
of $\lambda$; hence, for any value $s$ between $0$ and $S(\tau(\und{a})^{\ox n})$, there is a state 
with the given values and entropy $s$. 

5. For $n\geq d$, it is elementary to see that any state $\rho$ can be decomposed 
into a uniform convex combination of $n$ pure states, i.e. 
$\rho=\frac{1}{n} \sum_{i=1}^n \ketbra{\psi_i}{\psi_i}$. 
Observe that the state $\psi^{n} = \ketbra{\psi_1} \ox \cdots \ox \ketbra{\psi_n}$ has the same 
charge values as the state $\rho^{\otimes n}$, but as it is pure it has entropy $0$.
Further, consider the thermal state $\tau$ with the same charge values as $\rho$, but 
the maximum entropy consistent with them. Now let
$\rho_i := \lambda \ketbra{\psi_i}{\psi_i}+(1-\lambda)\tau$, and 
observe that $\rho_\lambda^{n} = \rho_1 \ox \cdots \rho_n$ has the same charge values as 
$\psi^{n}$, $\rho^{n}$ and $\tau^{\ox n}$.
Since the entropy $S(\rho_\lambda^{n})$ is a continuous function of $\lambda$, 
thus interpolating smoothly between $0$ and $n S(\tau)$, there is a 
tensor product state with the same given charge values and prescribed 
entropy $s$ in the said interval. 
%

6. This follows from the strict concavity of the von Neumann entropy $S(\rho)$ as a
function of the state, which imparts the strict concavity on $\und{a} \mapsto S(\tau(\und{a}))$
\end{proof}

\medskip
The penultimate point of Lemma \ref{lemma:phase diagram properties} motivates us to define 
a resource theory where the objects are sequences of states on composite systems
of $n\rightarrow\infty$ parts. 
Inspired by \cite{Sparaciari2016}, the allowed operations in this resource theory 
are those that respect basic principles of physics, namely entropy and charge 
conservation. We point out right here, that ``physics'' in the present context 
does not necessarily refer to the fundamental physical laws of nature, but to 
any rule that the system under consideration obeys. 
It is well-known that quantum operations that preserve entropy for all states are 
unitaries. The class of unitaries that conserve charges of a system are precisely 
those that commute with all charges of that system. 
However, it turns out that these constraints are too strong if imposed literally, 
when many charges are to be conserved, as it could easily happen that only
trivial unitaries are allowed.
Our way out is to consider the thermodynamic limit and at the same time relax the 
allowed operations to approximately entropy and charge conserving ones.    
As for the former, we couple the composite system to an ancillary system with corresponding 
Hilbert space $\mathcal{K}$ of dimension $2^{o(n)}$ where restricting the dimension of 
the ancilla ensures that the \emph{average} entropy of an individual system, that is, 
entropy of the composite system per $n$ does not change in the limit of large $n$. 
Moreover, as for charge conservation, we consider unitaries that preserve the average 
charges of an individual system, and we allow unitaries that are \emph{almost} commuting 
with the total charges of the composite system and the ancilla. The precise definition 
goes as follows:  

\begin{definition}
\label{almost-commuting unitaries}
A unitary operation $U$ acting on a composite system coupled to an ancillary system with 
Hilbert spaces $\mathcal{H}^{\otimes n}$ and $\mathcal{K}$ of dimension $2^{o(n)}$, respectively,
is called an \emph{almost-commuting unitary} with the total charges of a composite system and an 
ancillary system if the operator norm of the normalised commutator for all total charges 
vanishes asymptotically for large $n$:
\begin{align*}
 & \lim_{n \to \infty} \frac{1}{n} \norm{ [U,A_j^{(n)}+A_j']}_{\infty} =\\
 & \qquad    \lim_{n \to \infty} \frac{1}{n}\norm{ U (A_j^{(n)}+A_j')-(A_j^{(n)}+A_j') U }_{\infty} 
     = 0 
  \qquad j=1,\ldots,c.
\end{align*}
where $A_j^{(n)}$ and $A_j'$ are respectively the charges of the composite system 
and the ancilla, such that $\norm{A_j'}_{\infty} \leq o(n)$.  
\end{definition}

We stress that the definition of almost-commuting unitaries automatically implies that 
the ancillary system has a relatively small dimension and charges with small operator 
norm compared to a composite system.    
%
%
The first step in the development of our resource theory is a precise characterisation 
of which transformations between sequences of product state are possible using almost  
commuting unitaries.
To do so, we define \textit{asymptotically equivalent} states as follows:

\begin{definition}
\label{Asymptotic equivalence definition}
Two sequences of product states $\rho^n=\rho_1 \otimes \cdots \otimes \rho_n$ and 
$\sigma^n=\sigma_1 \otimes \cdots \otimes \sigma_n$ of a composite system with 
charges $A_j^{(n)}$ for $j=1,\ldots,c$, are called \emph{asymptotically equivalent} if  
\begin{align*}
  \lim_{n \to \infty} \frac{1}{n} \abs{S(\rho^n) - S(\sigma^n)}                       &= 0, \\ 
  \lim_{n \to \infty} \frac{1}{n} \abs{\Tr \rho^n A_j^{(n)} - \Tr \sigma^n A_j^{(n)}} &= 0 \text{ for } j=1,\ldots,c.
\end{align*}
In other words, two sequences of product states are considered equivalent if their
associated points in the normalised phase diagrams $\frac1n \mathcal{P}^{(n)}$ differ
by a sequence converging to $0$.
\end{definition}


The \emph{asymptotic equivalence theorem} of \cite{Sparaciari2016} characterizes 
feasible state transformations via \textit{exactly} commuting unitaries where energy 
is the only conserved quantity of a system, showing that it is precisely 
given by asymptotic equivalence. 
We prove an extension of this theorem for systems with multiple conserved quantities,
by allowing almost-commuting unitaries. 

\begin{theorem}[Asymptotic (approximate) Equivalence Theorem]
\label{Asymptotic equivalence theorem} 
Let $\rho^n=\rho_1 \otimes \cdots \otimes \rho_n$ and 
$\sigma^n=\sigma_1 \otimes \cdots \otimes \sigma_n$ 
be two sequences of product states of a composite system with 
charges $A_j^{(n)}$ for $j=1,\ldots,c$.
These two states are asymptotically equivalent if and only if 
there exist ancillary quantum systems with corresponding Hilbert space $\mathcal{K}$ 
of dimension $2^{o(n)}$ and an almost-commuting unitary $U$ acting on 
$\mathcal{H}^{\otimes n} \otimes \mathcal{K}$ such that 
\begin{align*}
  \lim_{n \to \infty} \norm{U \rho^n \otimes \omega' U^{\dagger} - \sigma^n \otimes \omega}_1 &= 0, \\
\end{align*}
where $\omega$ and $\omega'$ are states of the ancillary system, and charges of the ancillary 
system are trivial, $A_j' = 0$.
\end{theorem}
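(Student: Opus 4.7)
\medskip
\noindent\textbf{Proof proposal.} The plan is to treat the two directions separately, with the necessity being a short continuity argument and the sufficiency being the core of the theorem, built from typicality and an approximate microcanonical (a.m.c.) subspace construction.

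For the necessity direction, suppose such $U$, $\omega$, $\omega'$ exist. Since $\dim \mathcal{K}=2^{o(n)}$, we have $|S(\omega)-S(\omega')|\leq 2\log\dim\mathcal{K}=o(n)$, so the Fannes--Audenaert inequality applied to $U\rho^n\otimes\omega'U^\dagger$ and $\sigma^n\otimes\omega$, together with $S(\rho^n\otimes\omega')=S(\rho^n)+S(\omega')$ and $S(\sigma^n\otimes\omega)=S(\sigma^n)+S(\omega)$, yields $\frac{1}{n}|S(\rho^n)-S(\sigma^n)|\to 0$. For the charges, one writes
\[
 \Tr\!\bigl[(A_j^{(n)}+A_j')\,U\rho^n\!\otimes\!\omega' U^\dagger\bigr]
 =\Tr\!\bigl[\rho^n\!\otimes\!\omega'\,U^\dagger (A_j^{(n)}+A_j')U\bigr],
\]
and uses Definition~\ref{almost-commuting unitaries} to replace $U^\dagger(A_j^{(n)}+A_j')U$ by $A_j^{(n)}+A_j'$ up to an error of operator norm $o(n)$; since $\|A_j'\|_\infty=o(n)$ and the trace distance of the two states is $o(1)$ while $\|A_j^{(n)}\|_\infty=O(n)$, one concludes $\frac{1}{n}|\Tr\rho^n A_j^{(n)}-\Tr\sigma^n A_j^{(n)}|\to 0$.

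The sufficiency direction is the hard part, and I would split it into three steps. \emph{Step 1 (typicality and a.m.c.\ subspace).} Let $(\und{a},s)=\frac{1}{n}(\Tr\rho^n\und{A}^{(n)},S(\rho^n))$ and the analogous vector for $\sigma^n$ be asymptotically equal. Using the quantum method-of-types on the permutation-symmetric subspace of $\mathcal{H}^{\otimes n}$, I would show that both $\rho^n$ and $\sigma^n$ admit projectors $\Pi_\rho,\Pi_\sigma$ of rank $\leq 2^{n(s+\delta)}$ with $\Tr\rho^n\Pi_\rho,\Tr\sigma^n\Pi_\sigma\geq 1-\varepsilon$, and moreover I would construct a \emph{single} permutation-symmetric a.m.c.\ subspace $\mathcal{M}=\Pi_{\text{amc}}\mathcal{H}^{\otimes n}$ in the sense of Halpern \emph{et al.}\ such that (i) $\Pi_{\text{amc}}A_j^{(n)}\Pi_{\text{amc}}\approx n a_j\Pi_{\text{amc}}$ up to operator-norm error $o(n)$ for every $j$, and (ii) both $\Pi_\rho$ and $\Pi_\sigma$ are contained in $\mathcal{M}$ up to $o(1)$ trace weight. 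The construction of $\mathcal{M}$ is a concentration-of-measure argument for non-commuting observables on the symmetric subspace; condition (i) immediately yields that any unitary supported on $\mathcal{M}$ is almost-commuting with $A_j^{(n)}$ in the sense of Definition~\ref{almost-commuting unitaries}.

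\emph{Step 2 (spectrum equalisation by a sublinear ancilla).} Project $\rho^n$ and $\sigma^n$ onto $\mathcal{M}$; this incurs $o(1)$ error in trace norm and produces states $\widetilde{\rho}^n,\widetilde{\sigma}^n$ supported on $\mathcal{M}$ whose eigenvalue lists differ only in that a total weight $o(1)$ sits on non-matching positions, and whose entropies agree to $o(n)$. Using a standard ``majorisation/shuffling'' step one can couple each of these states with small ancillary systems of dimension $2^{o(n)}$ (diagonal states whose entropies absorb the mismatch) to obtain two states on $\mathcal{M}\otimes\mathcal{K}$ whose spectra are identical. \emph{Step 3 (unitary matching).} Once the spectra agree, a unitary $U$ acting inside $\mathcal{M}\otimes\mathcal{K}$ (permuting eigenvectors appropriately) maps $\widetilde{\rho}^n\otimes\omega'$ exactly to $\widetilde{\sigma}^n\otimes\omega$; extending by the identity on $\mathcal{M}^\perp\otimes\mathcal{K}$ gives a unitary that, by Step~1(i), is almost-commuting with every total charge $A_j^{(n)}+A_j'$ (where $A_j'=0$), and by Step~2 achieves the trace-norm approximation in the theorem.

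The main obstacle I expect is Step~1: exhibiting a \emph{single} permutation-symmetric a.m.c.\ subspace that simultaneously absorbs the typical projectors of both $\rho^n$ and $\sigma^n$ while giving an operator-norm (not merely expectation-value) approximation $\Pi_{\text{amc}}A_j^{(n)}\Pi_{\text{amc}}\approx na_j\Pi_{\text{amc}}$ for all $c$ non-commuting charges at once. The a.m.c.\ subspace of Halpern \emph{et al.}\ was designed for trivial reference states and did not have the permutation symmetry needed here; adapting it will likely require a union-bound over the $c$ charges together with Levy's lemma or a matrix-Chernoff argument tailored to the symmetric subspace, with the parameters chosen to give vanishing normalised commutators. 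The remaining steps are comparatively routine once Step~1 is in place.
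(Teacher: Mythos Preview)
Your proposal is essentially correct and mirrors the paper's proof closely: both directions are split as you describe, and the sufficiency argument is organised around exactly the three steps you outline (a single permutation-symmetric a.m.c.\ subspace absorbing both states, spectrum equalisation via a sublinear ancilla, and a unitary inside $\mathcal{M}\otimes\mathcal{K}$ extended by the identity on the complement).

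The only noteworthy differences are in implementation details. For Step~1, the paper does not use L\'evy's lemma or a matrix-Chernoff bound; instead it builds the permutation-symmetric a.m.c.\ subspace via a Helstrom projector distinguishing ``close'' from ``far'' i.i.d.\ states (Lemma~\ref{lemma:universal-test}, proved with Hoeffding's inequality and an $\epsilon$-net), and then upgrades this to the full a.m.c.\ properties using the constrained de Finetti reduction (Postselection Lemma). The almost-commuting property of unitaries supported on $\mathcal{M}$ is derived not from $\Pi_{\text{amc}}A_j^{(n)}\Pi_{\text{amc}}\approx na_j\Pi_{\text{amc}}$ directly, but from the defining a.m.c.\ condition $\Tr\omega\Pi_j^\eta\geq 1-\delta$ for all $\omega$ supported on $\mathcal{M}$, split into a bounded part on $\Pi_j^\eta$ and a small-norm remainder on its complement. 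For Step~2, where you speak of ``majorisation/shuffling'', the paper gives an explicit eigenvalue-trimming and binning construction (Lemma~\ref{lemma: timmed state}) that factorises each projected state as a maximally mixed state of dimension $2^{S(\rho^n)-O(\sqrt{n}\log n)}$ tensored with a small residual state; matching the maximally mixed parts is what forces the entropies to agree and makes the ancilla dimension $2^{o(n)}$.
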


\medskip
The \emph{proof} of this theorem is given in Section~\ref{proof-AET}, as it relies on a
number of technical lemmas, among them a novel construction of approximately 
microcanonical subspaces (Section~\ref{section: Approximate microcanonical (a.m.c.) subspace}). 

\medskip
By grouping the $Q$-systems into blocks of $k$, we do not of course change the 
physics of our system, except that now in the asymptotic limit we only consider
$n = k\nu$ copies of $Q$, but the state $\rho^n$ is asymptotically equivalent to
$\rho^{n+O(1)}$ via almost-commuting unitaries according to Definition \ref{almost-commuting unitaries} 
and Theorem \ref{Asymptotic equivalence theorem}. 
But now we consider $Q^k$ with its charge observables $A_j^{(k)}$ as elementary 
systems, which have many more states than the $k$-fold product states we began with. 
Yet, Lemma \ref{lemma:phase diagram properties} shows that the phase diagram for 
the $k$-copy system is simply the rescaled single-copy phase diagram,
$\overline{\mathcal{P}}^{(k)} = k \overline{\mathcal{P}}^{(1)}$, and indeed 
for $k\geq d$, $\mathcal{P}^{(k)} = k \overline{\mathcal{P}}^{(1)}$. This means 
that we can extend the equivalence relation of asymptotic equivalence and the
concomitant Asymptotic Equivalence Theorem (AET) \ref{Asymptotic equivalence theorem}
to any sequences of states that factor into product states of blocks $Q^k$, for 
any integer $k$, which freedom we shall exploit in our treatment of thermodynamics.

\section{Approximate microcanonical (a.m.c.) subspace}
\label{section: Approximate microcanonical (a.m.c.) subspace}

In this section, we recall the definition of 
approximate microcanonical (a.m.c.) and give a new proof that it exists 
for certain explicitly given parameters.
For charges $A_j$ and average values $v_j$,  a.m.c. is basically a \textit{common} subspace for the spectral projectors of $A_j^{(n)}$ with corresponding values close to $n v_j$; that is, a subspace onto which a state projects with high probability if and only if it projects onto the spectral projectors of the charges with high probability. We show in Theorem~\ref{thm:symmetric-micro-exists} that for a large enough $n$ such a subspace exists.
An interesting property of an a.m.c. subspace is that any unitary acting on this subspace 
is an almost-commuting unitary with charges $A_j^{(n)}$.

\begin{definition}
\label{defi:microcanonical}
An \emph{approximate microcanonical (a.m.c.) subspace}, or more precisely
a \emph{$(\epsilon,\eta,\eta',\delta,\delta')$-approximate microcanonical subspace},
$\cM$ of $\cH^{\ox n}$, with projector $P$, 
for charges $A_j$ and values $v_j = \langle A_j \rangle$
is one that consists, in a certain precise sense, of exactly the
states with ``very sharp'' values of all the $A_j^{(n)}$. 
Mathematically, the following has to hold:
\begin{enumerate}
  \item Every state $\omega$ with support contained in $\cM$ satisfies
        $\tr \omega\Pi^\eta_j \geq 1-\delta$ for
        all $j$.
  \item Conversely, every state $\omega$ on $\cH^{\ox n}$ such that
        $\tr \omega\Pi^{\eta'}_j \geq 1-\delta'$ for
        all $j$, satisfies $\tr\omega P \geq 1-\epsilon$.
\end{enumerate}
Here, $\Pi^\eta_j := \bigl\{ nv_j-n\eta\Sigma(A_j) \leq A_j^{(n)} \leq nv_j+n\eta\Sigma(A_j) \bigr\}$
is the spectral projector of $A_j^{(n)}$ of values close to $n v_j$,
and $\Sigma(A) = \lambda_{\max}(A)-\lambda_{\min}(A)$ is the spectral diameter
of the Hermitian $A$, i.e.~the diameter of the smallest disc
covering the spectrum of $A$.
\end{definition}

\medskip

\begin{remark}
It is shown in Theorem~3 of~\cite{Halpern2016} that for every $\epsilon > c\delta' > 0$,
$\delta > 0$ and $\eta > \eta' > 0$, and for all sufficiently
large $n$, there exists a nontrivial 
$(\epsilon,\eta,\eta',\delta,\delta')$-a.m.c.~subspace.
However, there are two (related) reasons why one might be not completely
satisfied with the argument in~\cite{Halpern2016}: First, the proof uses
a difficult result of Ogata~\cite{Ogata} to reduce the non-commuting
case to the seemingly easier of commuting observables; while this
is conceptually nice, it makes it harder to perceive the nature
of the constructed subspace. Secondly, despite the fact that the
defining properties of an a.m.c.~subspace are manifestly permutation symmetric
(w.r.t.~permutations of the $n$ subsystems), the resulting construction
does not have this property.

Here we address both these concerns. Indeed, we shall show by 
essentially elementary means how to obtain an a.m.c.~subspace
that is by its definition permutation symmetric.
\end{remark}

\begin{theorem}
  \label{thm:symmetric-micro-exists}
  Under the previous assumptions, for every $\epsilon > 2(n+1)^{3d^2}\delta' > 0$,
  $\eta >\eta' > 0$ and $\delta>0$, for all sufficiently large $n$
  there exists an approximate microcanonical subspace projector.
  In addition, the subspace can be chosen to be stable under permutations
  of the $n$ systems: $U^\pi\cM = \cM$, or equivalently $U^\pi P (U^\pi)^\dagger = P$,
  for any permutation $\pi\in S_n$ and its unitary action $U^\pi$.

  More precisely, given $\eta > \eta' > 0$ and $\epsilon > 0$, there exists a $\alpha > 0$ such
  that there is a non-trivial $(\epsilon,\eta,\eta',\delta,\delta')$-a.m.c. subspace
  with
  \begin{align*}
    \delta  &= (c+3)(5n)^{5d^2} e^{-\alpha n} \text{ and } \\ 
    \delta' &= \frac{\epsilon}{2(n+1)^{3d^2}} - (c+3)(5n)^{2d^2} e^{-\alpha n}.
  \end{align*}
  Furthermore, we may choose $\alpha = \frac{(\eta-\eta')^2}{8c^2(d+1)^2}$.
\end{theorem}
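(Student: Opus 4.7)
The plan is to build the projector $P$ explicitly as a permutation-symmetric object, and then to verify the two defining properties of Definition~\ref{defi:microcanonical} by combining concentration of measure for sums of bounded observables on i.i.d.\ states with the gentle-measurement lemma, which lets us handle the fact that the $\Pi_j^\eta$ do \emph{not} commute pairwise. The main obstacle is exactly this non-commutativity: one cannot simply put $P=\prod_j \Pi_j^\eta$ and call it a projector, so we must construct a single common projector approximating the ``joint spectrum'' of the $A_j^{(n)}$ near the target values $n v_j$, while simultaneously keeping it invariant under the permutation action $U^\pi$.

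\textbf{Step 1 (Concentration on tensor powers).} Fix a single-site state $\rho$ with $\tr\rho A_j=v_j$ for all $j$ (for instance the generalised Gibbs state $\tau(\und v)$ whenever it exists, otherwise any state realising the target charge values, which exists by the definition of $\overline{\mathcal P}_0$). Since $A_j^{(n)}=\sum_{i=1}^n A_j^{[Q_i]}$ is a sum of $n$ independent single-site bounded observables with mean $v_j$ and spectral diameter $\Sigma(A_j)$, applying the classical Hoeffding bound to the probability distribution obtained by measuring $A_j^{(n)}$ on $\rho^{\otimes n}$ yields, for any intermediate tolerance $\eta'<\eta''<\eta$, an estimate of the form
\[
  \tr\!\bigl(\rho^{\otimes n}\, \Pi_j^{\eta''}\bigr) \geq 1 - 2\exp(-\alpha n),
  \qquad
  \alpha = \frac{(\eta-\eta')^2}{8 c^2 (d+1)^2}.
\]
A union bound over $j=1,\dots,c$ gives simultaneous concentration.

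\textbf{Step 2 (Construction of $P$).} With intermediate tolerances $\eta'<\eta''<\eta$, define the positive operator
\[
  X \;:=\; \Pi_1^{\eta''}\Pi_2^{\eta''}\cdots\Pi_c^{\eta''}\, \rho^{\otimes n}\, \Pi_c^{\eta''}\cdots\Pi_2^{\eta''}\Pi_1^{\eta''},
\]
and let $P$ be the projector onto the range of $X$, truncated to the eigenvalues of $X$ above an appropriate threshold so that $\tr(\rho^{\otimes n}P) \geq 1-O(e^{-\alpha n})$. Because $A_j^{(n)}$ is a sum over all sites, each $\Pi_j^{\eta''}$ already commutes with every $U^\pi$, and so does $\rho^{\otimes n}$; hence $X$, and therefore $P$, commute with all permutations, giving $U^\pi P (U^\pi)^\dagger = P$ automatically. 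Its rank is at most the dimension of the symmetric subspace $\operatorname{Sym}^n(\cH)$, which is polynomially bounded in $n$, supplying the prefactors of the form $(n+1)^{d^2}$ that appear in $\delta$ and $\delta'$.

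\textbf{Step 3 (Verification of the two properties).} For property (1), take any $\omega$ with $\operatorname{supp}\omega\subseteq \operatorname{Im} P$. Because every vector in $\operatorname{Im} X$ lies, up to an exponentially small error, in the range of each $\Pi_j^{\eta''}$, the gentle-operator lemma applied iteratively along the chain $\Pi_c^{\eta''}\cdots \Pi_1^{\eta''}$ implies $\tr\omega\Pi_j^\eta\geq 1-\delta$ with $\delta=(c+3)(5n)^{5d^2}e^{-\alpha n}$, absorbing the polynomial dimension factor that enters when comparing $P$ to the commuting approximations via trace-norm bounds. For property (2), suppose $\tr\omega\Pi_j^{\eta'}\geq 1-\delta'$ for all $j$. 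Iterated gentle measurement gives $\|\omega - \Pi_c^{\eta'}\cdots\Pi_1^{\eta'}\omega\Pi_1^{\eta'}\cdots\Pi_c^{\eta'}\|_1 \leq 2c\sqrt{\delta'}$, so it suffices to estimate $\tr(P\,\widetilde\omega)$ for the post-measurement $\widetilde\omega$ supported in the intersection of the $\Pi_j^{\eta'}$. Since all objects at this stage are permutation-invariant, we may replace $\widetilde\omega$ by its symmetrisation and bound it by a polynomial multiple of $\rho^{\otimes n}$ restricted to the symmetric subspace, whose dimension is at most $(n+1)^{d^2-1}$; using this together with the concentration of Step~1 produces $\tr\omega P \geq 1-\epsilon$ with the stated polynomial prefactor $(n+1)^{3d^2}$. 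The two inequalities combine into the quantitative bounds on $\delta$ and $\delta'$ stated in the theorem, and nontriviality of $\cM$ follows from $\tr(\rho^{\otimes n}P)>0$ for large $n$.
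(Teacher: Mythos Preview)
Your construction confuses \emph{permutation-invariant operator} with \emph{operator supported on the symmetric subspace}. The $\Pi_j^{\eta''}$ commute with every $U^\pi$, so $X$ and hence $P$ do too, but that only makes $P$ block-diagonal across $S_n$-irreps; it does \emph{not} force $\operatorname{rank} P \leq \dim\Sym^n(\cH)$. In fact the a.m.c.\ subspace must have rank $\approx e^{nS(\tau(\und v))}$ --- this is its whole point in the proof of the AET --- so the claimed polynomial rank bound is false and the $(n+1)^{3d^2}$ prefactors cannot arise from it. Both verifications are then unjustified. For property~(1), vectors in $\operatorname{Im} X$ lie exactly in the range of $\Pi_1^{\eta''}$, but for $j>1$ gentle measurement applied to the \emph{fixed} state $\rho^{\ox n}$ does not control $\|(\1-\Pi_j^{\eta''})v\|$ for an \emph{arbitrary} unit vector $v$ in that range. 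For property~(2), the step ``bound the symmetrised $\widetilde\omega$ by a polynomial multiple of $\rho^{\ox n}$'' is simply wrong: a permutation-symmetric state need not be dominated by any multiple of a \emph{single} i.i.d.\ state.

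The paper supplies both missing ingredients. First, $P$ is built (Lemma~\ref{lemma:universal-test}) as a Helstrom projector distinguishing a mixture of $\sigma^{\ox n}$ over a net of states $\sigma$ with charges close to $\und v$ from the analogous mixture over far-away $\sigma$; this guarantees that $\tr\sigma^{\ox n}P$ is exponentially small for \emph{every} far $\sigma$, a feature your sandwich-and-threshold construction does not have. Second, both a.m.c.\ conditions are verified via the constrained de Finetti reduction $\omega \leq (n+1)^{3d^2}\int {\rm d}\sigma\,\sigma^{\ox n}F(\omega,\sigma^{\ox n})^2$, which dominates a permutation-symmetric $\omega$ by a polynomial-weighted integral over \emph{all} i.i.d.\ states --- this is the actual source of the $(n+1)^{3d^2}$ factors. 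The integral then splits into close and far $\sigma$: for close $\sigma$ one uses Hoeffding directly on $\sigma^{\ox n}$, while for far $\sigma$ one uses either the smallness of $\tr\sigma^{\ox n}P$ (property~1) or of $\tr\sigma^{\ox n}\Pi_j^{\eta'}$ (property~2).
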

\begin{proof}
For $s>0$, partition the state space $\cS(\cH)$ on $\cH$ into 
\begin{align*}
  \cC_s(\underline{v}) &= \bigl\{ \sigma : \forall j\ |\tr\sigma A_j - v_j| \leq s\Sigma(A_j) \bigr\}, \\
  \cF_s(\underline{v}) &= \bigl\{ \sigma : \exists j\ |\tr\sigma A_j - v_j| >    s\Sigma(A_j) \bigr\}
                        = \cS(\cH) \setminus \cC_s(\underline{v}),
\end{align*}
which are the sets of states with $A_j$-expectation values ``close'' 
to and ``far'' from $\underline{v}$. 
Note that if $\rho \in \cC_s(\underline{v})$ and 
$\sigma \in \cF_t(\underline{v})$, $0 < s < t$, then $\|\rho-\sigma\|_1 \geq t-s$.

Choosing the precise values of $s>\eta'$ and $t<\eta$ later, 
we pick a universal distinguisher $(P,P^\perp)$
between $\cC_s(\underline{v})^{\ox n}$ and $\cF_t(\underline{v})^{\ox n}$,
according to Lemma~\ref{lemma:universal-test} below:
\begin{align}
  \label{eq:s-close}
  \forall \rho\in\cC_s(\underline{v})\   \tr\rho^{\ox n} P^\perp &\leq (c+2)(5n)^{2d^2} e^{-\zeta n}, \\
  \label{eq:t-far}
  \forall \sigma\in\cF_t(\underline{v})\ \tr\sigma^{\ox n} P     &\leq (c+2)(5n)^{2d^2} e^{-\zeta n},
\end{align}
with $\zeta = \frac{(t-s)^2}{2c^2(2d^2+1)}$.
Our a.m.c.~subspace will be $\cM := \operatorname{supp} P$;
by Lemma~\ref{lemma:universal-test}, $P$ and likewise $\cM$ are permutation symmetric.

It remains to check the properties of the definition. First, let $\omega$
be supported on $\cM$. Since we are interested in $\tr\omega \Pi_j^\eta$, we may
without loss of generality assume that $\omega$ is permutation symmetric.
Thus, by the ``constrained de Finetti reduction'' (aka ``Postselection 
Lemma'')~\cite[Lemma~18]{Duan2016},
\begin{align}
  \label{eq:dF}
  \omega \leq (n+1)^{3d^2} \int {\rm d}\sigma\,\sigma^{\ox n} F(\omega,\sigma^{\ox n})^2,
\end{align}
with a certain universal probability measure ${\rm d}\sigma$ on $\cS(\cH)$, 
and the fidelity $F(\rho,\sigma) = \|\sqrt{\rho}\sqrt{\sigma}\|_1$ between
states. We need the monotonicity of the fidelity under cptp maps, which we
apply to the test $(P,P^\perp)$:
\[
  F(\omega,\sigma^{\ox n})^2 \leq F\bigl( (\tr\sigma^{\ox n}P,1-\tr\sigma^{\ox n}P),(1,0) \bigr)^2
                             \leq \tr\sigma^{\ox n}P,
\]
which holds because $\tr\omega P = 1$.
Thus,
\begin{align}
  \label{eq:dF-P}
  \tr\omega(\Pi_j^\eta)^\perp \leq (n+1)^{3d^2} \int {\rm d}\sigma\,\bigl(\tr\sigma^{\ox n}(\Pi_j^\eta)^\perp\bigr) 
                                                                                               (\tr\sigma^{\ox n}P).
\end{align}

Now we split the integral on the right hand side of Eq.~(\ref{eq:dF-P}) into two parts,
$\sigma\in\cC_t(\underline{v})$ and $\sigma\not\in\cF_t(\underline{v})$:
If $\sigma\in\cF_t(\underline{v})$, then by Eq.~(\ref{eq:t-far}) 
we have
\[
  \tr\sigma^{\ox n}P \leq (c+2)(5n)^{2d^2} e^{-\zeta n}.
\]
On the other hand, if $\sigma\in\cC_t(\underline{v})$, then because of $t < \eta$ we have
\[
  \tr\sigma^{\ox n}(\Pi_j^\eta)^\perp \leq 2 e^{-2(\eta-t)^2 n},
\]
which follows from Hoeffding's inequality~\cite{DemboZeitouni}: 
Indeed, let $Z_\ell$ be the i.i.d.~random variables obtained by the
measurement of $A_j$ on the state $\sigma$. They take values in
the interval $[\lambda_{\min}(A_j),\lambda_{\max}(A_j)]$, their expectation 
values satisfy $\EE Z_j = \tr\sigma A_j \in [v_j \pm t\Sigma(A_j)]$, while
\[\begin{split}
  \tr\sigma^{\ox n}(\Pi_j^\eta)^\perp &= \Pr\left\{\frac1n\sum_\ell Z_\ell \not\in[v_j \pm \eta\Sigma(A_j)\right\} \\
                     &\leq \Pr\left\{\frac1n\sum_\ell Z_\ell \not\in[\tr\sigma A_j \pm (\eta-t)\Sigma(A_j)\right\},
\end{split}\]
so Hoeffding's inequality applies.
All taken together, we have
\[\begin{split}
  \tr\omega(\Pi_j^\eta)^\perp 
      &\leq (n+1)^{3d^2} \left( (c+2)(5n)^{2d^2} e^{-\zeta n} +  2 e^{-2(\eta-t)^2 n} \right) \\
      &\leq (c+3)(5n)^{5d^2} e^{-2(\eta-t)^2 n},
\end{split}\]
because we can choose $t$ such that
\begin{equation}
  \label{eq:t}
  \eta-t = \frac{t-s}{2c\sqrt{2d^2+1}} \geq \frac{t-s}{4cd}.
\end{equation}

Secondly, let $\omega$ be such that $\tr\omega \Pi_j^\eta \geq 1-\delta'$;
as we are interested in $\tr\omega P$, we may again assume
without loss of generality that $\omega$ is permutation symmetric,
and invoke the constrained de Finetti reduction~\cite[Lemma~18]{Duan2016},
Eq.~(\ref{eq:dF}). 
From that we get, much as before,
\[
  \tr\omega P^\perp \leq (n+1)^{3d^2} \int {\rm d}\sigma\, (\tr\sigma^{\ox n}P^\perp) F(\omega,\sigma^{\ox n})^2,
\]
and we split the integral on the right hand side into two parts,
depending on $\sigma\in\cF_s(\underline{v})$ or
$\sigma\in\cC_s(\underline{v})$: In the latter case, 
$\tr\sigma^{\ox n}P^\perp \leq (c+2)(5n)^{2d^2} e^{-\zeta n}$,
by Eq.~(\ref{eq:s-close}). In the former case, there exists a $j$ such
that $\tr\sigma A_j = w_j \not\in [v_j \pm s\Sigma(A_j)]$, and so
\[\begin{split}
  F(\omega,\sigma^{\ox n})^2 
      &\leq F\bigl( (1-\delta',\delta'), (\tr\sigma^{\ox n}\Pi_j^{\eta'},1-\tr\sigma^{\ox N}\Pi_j^{\eta'}) \bigr) \\
      &\leq \left( \sqrt{\delta'} + \sqrt{\tr\sigma^{\ox n}\Pi_j^{\eta'}} \right)^2                          \\
      &\leq 2 \delta' + 2 \tr\sigma^{\ox n}\Pi_j^{\eta'}                             \\
      &\leq 2 \delta' + 4 e^{-2(s-\eta')^2 n},
\end{split}\]
the last line again by Hoeffding's inequality; indeed, with the previous notation,
\[\begin{split}
  \tr\sigma^{\ox n}\Pi_j^{\eta'} &= \Pr\left\{ \frac1n \sum_\ell Z_\ell \in[v_j \pm \eta'\Sigma(A_j) \right\} \\
                      &\leq \Pr\left\{ \frac1n \sum_\ell Z_\ell \not\in[w_j \pm (s-\eta')\Sigma(A_j) \right\}.
\end{split}\]
All taken together, we get
\[\begin{split}
  \tr\omega P^\perp
      &\leq (n+1)^{3d^2} \left( (c+2)(5n)^{2d^2} e^{-\zeta n} +  4 e^{-2(s-\eta')^2 n} + 2 \delta' \right) \\
      &\leq (n+1)^{3d^2} (c+3)(5n)^{2d^2} e^{-2(s-\eta')^2 n} + 2(n+1)^{3d^2}\delta',
\end{split}\]
because we can choose $s$ such that
\begin{equation}
  \label{eq:s}
  s-\eta' = \frac{t-s}{2c\sqrt{2d^2+1}} \geq \frac{t-s}{4cd}.
\end{equation}

From eqs.~(\ref{eq:t}) and (\ref{eq:s}) we get by summation
\[
  \eta-\eta' = t-s + \frac{t-s}{c\sqrt{2d^2+1}} \leq (t-s)\left( 1+\frac{1}{cd} \right),
\]
from which we obtain
\[
  s-\eta' = \eta-t \geq \frac{\eta-\eta'}{4c(d+1)},
\]
concluding the proof.
\end{proof}

\medskip

\begin{lemma}
  \label{lemma:universal-test}
  For all $0 < s < t$ there exists $\zeta > 0$, such that
  for all $n$ there exists a permutation symmetric 
  projector $P$ on $\cH^{\ox n}$ with the properties
  \begin{align}
    \forall \rho\in\cC_s(\underline{v})\   \tr\rho^{\ox n} P^\perp &\leq (c+2)(5n)^{2d^2} e^{-\zeta n}, \\
    \forall \sigma\in\cF_t(\underline{v})\ \tr\sigma^{\ox n} P     &\leq (c+2)(5n)^{2d^2} e^{-\zeta n}.
  \end{align}
  The constant $\zeta$ may be chosen as
  $\zeta = \frac{(t-s)^2}{2c^2(2d^2+1)}$.
\end{lemma}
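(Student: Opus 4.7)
The plan is to prove this via a net-and-type-projector construction that produces a permutation-symmetric $P$ as a direct sum of permutation-symmetric ``type projectors'' indexed by an $n^{-1}$-net in $\cS(\cH)$.

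First, I would build a net $\{\sigma_k\}_{k=1}^N \subset \cS(\cH)$ of trace-distance radius $1/n$ with $N \leq (5n)^{2d^2}$, by the standard volume estimate in the real $d^2$-dimensional ambient space of Hermitian matrices of trace $1$. Next, for each net point $\sigma_k$ I would attach its expectation vector $\underline{w}_k = (\tr\sigma_k A_1,\dots,\tr\sigma_k A_c)$, and split the net into a ``close'' set $\mathcal{N}_{\text{c}} = \{k : |w_{k,j}-v_j|\leq u\Sigma(A_j) \text{ for all } j\}$ and a ``far'' set $\mathcal{N}_{\text{f}}$, for the midpoint $u = (s+t)/2$. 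The first Chernoff exponent $(t-s)^2/(2c^2(2d^2+1))$ will come from equipartitioning the slack $t-s$ among the $c$ charges and the $2d^2$ net dimension, plus the extra $+1$ accounting for the margin $1/n$ of the net itself.

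The central step is to construct, for each $k$, a permutation-invariant projector $T_k$ on $\cH^{\otimes n}$ satisfying simultaneously: (a) $\tr\sigma_k^{\otimes n}T_k \geq 1-\kappa_n$ with $\kappa_n$ exponentially small, and (b) on the range of $T_k$ every empirical average $A_j^{(n)}/n$ is concentrated near $w_{k,j}$ up to any fixed slack. I would construct $T_k$ via Schur--Weyl duality: decompose $\cH^{\otimes n}=\bigoplus_{\lambda} \cU_\lambda \otimes \cV_\lambda$ over Young diagrams $\lambda$ with at most $d$ rows; on each block, take the classical typical projector in the multiplicity space $\cV_\lambda$ (around the spectrum of $\sigma_k$) and the full irrep space $\cU_\lambda$. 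Equivalently one may take a ``group-averaged'' spectral projector $T_k := \frac{1}{|S_n|}\sum_\pi U^\pi \widetilde T_k (U^\pi)^\dagger$, which is automatically permutation-symmetric. Then set $P := \bigvee_{k\in\mathcal{N}_{\text{c}}} T_k$; this is permutation-symmetric since each $T_k$ is.

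For the verification, if $\rho \in \cC_s(\underline v)$ take the nearest net point $\sigma_{k^\ast}$; by the triangle inequality $|\tr\sigma_{k^\ast} A_j - v_j|\leq s\Sigma(A_j) + \Sigma(A_j)/n < u\Sigma(A_j)$ for large $n$, so $k^\ast\in\mathcal N_{\text c}$. Then property (a) of $T_{k^\ast}$ together with a gentle-operator argument (or a fidelity comparison via $\|\rho^{\otimes n} - \sigma_{k^\ast}^{\otimes n}\|_1$ controlled by property (b) of $T_{k^\ast}$ and Hoeffding) gives $\tr\rho^{\otimes n}P^\perp \leq \tr\rho^{\otimes n}T_{k^\ast}^\perp \leq (c+2)(5n)^{2d^2}e^{-\zeta n}$. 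Conversely, if $\sigma\in\cF_t(\underline v)$, then every $k\in \mathcal N_{\text c}$ has $|\tr\sigma A_j - w_{k,j}|\geq (t-u)\Sigma(A_j)$ on some $j$, so property (b) and Hoeffding give $\tr\sigma^{\otimes n} T_k \leq 2e^{-2(t-u)^2 n / (2d^2+1)}$, and a union bound over the $N\leq (5n)^{2d^2}$ terms in the definition of $P$ yields $\tr\sigma^{\otimes n}P \leq (c+2)(5n)^{2d^2}e^{-\zeta n}$, with the constant $\zeta = (t-s)^2/(2c^2(2d^2+1))$ as claimed.

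The main obstacle will be step two: producing a \emph{single} permutation-symmetric projector $T_k$ that simultaneously controls all $c$ non-commuting charges and has large overlap with $\sigma_k^{\otimes n}$. The naive intersection of spectral projectors of the $A_j^{(n)}$ is not a projector, and iterated gentle-measurement arguments degrade the exponents. The Schur--Weyl route avoids this by working blockwise, using that on each irreducible component the operators $A_j^{(n)}$ act through bounded perturbations of their diagonal parts, which lets Hoeffding's inequality apply uniformly inside each block once the spectrum in the multiplicity space is typical.
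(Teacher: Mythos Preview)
Your proposal has a genuine gap at exactly the point you flag as ``the main obstacle'': the construction of the projectors $T_k$. You need, for each net point $\sigma_k$, a single permutation-symmetric projector that simultaneously (a) carries most of the mass of $\sigma_k^{\otimes n}$ and (b) concentrates \emph{all} of the non-commuting empirical charges $A_j^{(n)}/n$ near $w_{k,j}$. The Schur--Weyl sketch does not deliver this: while $A_j^{(n)}$ is permutation symmetric and hence block-diagonal in the decomposition $\bigoplus_\lambda \cU_\lambda\otimes\cV_\lambda$, its restriction to a block is not a sum of i.i.d.\ terms and there is no reason its eigenvalues should concentrate once you fix only the type in the multiplicity space $\cV_\lambda$. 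Your phrase ``bounded perturbations of their diagonal parts'' hides precisely the non-commutativity you are trying to control; making it rigorous would essentially require an a.m.c.\ subspace construction, which is what this lemma is meant to feed into. So the argument as written is circular or incomplete.

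The paper sidesteps this entirely with a two-stage argument that never builds per-state projectors. First, it constructs a permutation-symmetric \emph{POVM} $(M,\1-M)$ by a random-measurement trick: on each copy pick one of the $c$ charges uniformly at random and measure it, then threshold the resulting classical empirical averages. Classical Hoeffding gives the exponent $(t-s)^2/(2c^2)$ directly, with no quantum typicality needed. Second, to upgrade $M$ to a projector, it takes $\lambda$-nets in both $\cC_s(\underline v)$ and $\cF_t(\underline v)$, forms the uniform mixtures $\Gamma$ and $\Phi$ of the corresponding tensor powers, and sets $P=\{\Gamma-\Phi\geq 0\}$, the Helstrom projector. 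Since the POVM already distinguishes $\Gamma$ from $\Phi$, the optimal test $P$ does at least as well; the net size $(5n/\lambda)^{2d^2}$ with $\lambda=e^{-\zeta n}$ produces the prefactor and the final exponent $\zeta=(t-s)^2/(2c^2(2d^2+1))$. Permutation symmetry of $P$ is automatic because $\Gamma$ and $\Phi$ are mixtures of tensor powers. This route is both shorter and avoids the obstacle you identified.
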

\begin{proof}
We start by showing that there is a POVM $(M,\1-M)$ with 
\begin{align}
  \forall \rho\in\cC_s(\underline{v})\            \tr\rho^{\ox n} (\1-M) &\leq c e^{-\frac{(t-s)^2}{2c^2}n}, \\
  \forall \sigma\in\cF_t(\underline{v})\ \ \qquad \tr\sigma^{\ox n} M    &\leq       e^{-\frac{(t-s)^2}{2c^2}n}.
\end{align}
Namely, for each $\ell=0,\ldots,n$ choose $j_\ell \in \{1,\ldots,c\}$ uniformly
at random and measure $A_{j_\ell}$ on the $\ell$-th system. Denote the outcome
by the random variable $Z_\ell^{j_{\ell}}$ and let $Z_\ell^j = 0$ for $j\neq j_\ell$.
Thus, for all $j$, the random variables $Z_\ell^j$ are i.i.d.~with
mean $\EE Z_\ell^j = \frac{1}{c}\tr\rho A_j$, if the measured state is $\rho^{\ox n}$.

Outcome $M$ corresponds to the event
\[
  \forall j\ \frac1n \sum_\ell Z_\ell^j \in \frac{1}{c}\left[v_j \pm \frac{s+t}{2}\Sigma(A_j)\right];
\]
outcome $\1-M$ corresponds to the complementary event
\[
  \exists j\ \frac1n \sum_\ell Z_\ell^j \not\in \frac{1}{c}\left[v_j \pm \frac{s+t}{2}\Sigma(A_j)\right].
\]
We can use Hoeffding's inequality to bound the traces in question.\\
For $\rho\in \cC_s(\underline{v})$, we have $|\EE Z_\ell^j - v_j | \leq \frac{s}{c}\Sigma(A_j)$
for all $j$, and so:
\[\begin{split}
  \tr\rho^{\ox n}(\1-M) &=    \Pr\left\{ \exists j\ \frac1n \sum_\ell Z_\ell^j 
                                          \not\in \frac{1}{c}\left[v_j \pm \frac{s+t}{2}\Sigma(A_j)\right] \right\} \\
                        &\leq \sum_{j=1}^c \Pr\left\{ \frac1n \sum_\ell Z_\ell^j 
                                          \not\in \frac{1}{c}\left[v_j \pm \frac{s+t}{2}\Sigma(A_j)\right] \right\} \\
                        &\leq \sum_{j=1}^c \Pr\left\{ \frac1n \sum_\ell Z_\ell^j 
                                          \not\in \frac{1}{c}\left[v_j \pm \frac{s+t}{2}\Sigma(A_j)\right] \right\} \\
                        &\leq \sum_{j=1}^c \Pr\left\{ \left| \frac1n \sum_\ell Z_\ell^j - \EE Z_1^j \right|
                                                                     > \frac{t-s}{2c} \Sigma(A_j) \right\} \\
                        &\leq c e^{-\frac{(t-s)^2}{2c^2}n}.
\end{split}\]
For $\sigma\in\cF_t(\underline{v})$, there exists a $j$ such that
$|\EE Z_\ell^j - v_j | > \frac{t}{c}\Sigma(A_j)$. Thus,
\[\begin{split}
  \tr\sigma^{\ox n} M     &\leq \Pr\left\{ \frac1n \sum_\ell Z_\ell^j 
                                            \in \frac{1}{c}\left[v_j \pm \frac{s+t}{2}\Sigma(A_j)\right] \right\} \\
                          &\leq \Pr\left\{ \left| \frac1n \sum_\ell Z_\ell^j - \EE Z_1^j \right|
                                                                   > \frac{t-s}{2c} \Sigma(A_j) \right\} \\
                          &\leq e^{-\frac{(t-s)^2}{2c^2}n}.
\end{split}\]

This POVM is, by construction, permutation symmetric, but $M$ is not a 
projector. To fix this, choose $\lambda$-nets $\cN_C^\lambda$ in $\cC_s(\underline{v})$
and $\cN_F^\lambda$ in $\cF_t(\underline{v})$, with
$\lambda = e^{-\zeta n}$, with $\zeta = \frac{(t-s)^2}{2c^2(2d^2+1)}$.
This means that every state $\rho\in\cC_s(\underline{v})$
is no farther than $\lambda$ in trace distance from a $\rho'\in\cN_C^\lambda$,
and likewise for $\cF_t(\underline{v})$.
By~\cite[Lemma~III.6]{Hayden2006} (or rather, a minor variation of its proof), 
we can find such nets with 
$|\cN_C^\lambda|,\ |\cN_F^\lambda| \leq \left( \frac{5n}{\lambda} \right)^{2d^2}$
elements.
Form the two states
\begin{align*}
  \Gamma &:= \frac{1}{|\cN_C^\lambda|} \sum_{\rho\in\cN_C^\lambda} \rho^{\ox n}, \\
  \Phi   &:= \frac{1}{|\cN_F^\lambda|} \sum_{\sigma\in\cN_F^\lambda} \sigma^{\ox n},
\end{align*}
and let
\[
  P := \{ \Gamma-\Phi \geq 0 \}
\]
be the Helstrom projector which optimally distinguishes $\Gamma$ from $\Phi$.
But we know already a POVM that distinguishes the two states, hence
$(P,P^\perp=\1-P)$ cannot be worse:
\[
  \tr\Gamma P^\perp + \tr\Phi P \leq \tr\Gamma (\1-M) + \tr\Phi M
                                \leq (c+1) e^{-\frac{(t-s)^2}{2c^2}n},
\]
thus for all $\rho\in\cN_C^\lambda$ and $\sigma\in\cN_F^\lambda$,
\[
  \tr\rho^{\ox n}P^\perp,\ \tr\sigma^{\ox n}P 
      \leq (c+1)\left( \frac{5n}{\lambda} \right)^{2d^2} e^{-\frac{(t-s)^2}{2c^2}n}.
\]
So, by the $\lambda$-net property, we find 
for all $\rho\in\cC_s(\underline{v})$ and $\sigma\in\cF_t(\underline{v})$,
\[
  \tr\rho^{\ox n}P^\perp,\ \tr\sigma^{\ox n}P 
      \leq \lambda + (c+1)\left( \frac{5n}{\lambda} \right)^{2d^2} e^{-\frac{(t-s)^2}{2c^2}n}
      \leq (c+2)(5n)^{2d^2} e^{-\zeta n},
\]
by our choice of $\lambda$.
\end{proof}

\begin{corollary}\label{corollary: a.m.c. projection}
For charges $A_j$, values $v_j = \langle A_j \rangle$ and $n>0$, Theorem~\ref{thm:symmetric-micro-exists} implies that there is an a.m.c. subspace $\mathcal{M}$ of $\mathcal{H}^{\otimes n}$ for any $ \eta' > 0$ with the following parameters:
\begin{align*}
  \eta     &=2 \eta',\\
  \delta'  &=\frac{c+3}{2}(5n)^{2d^2} e^{-\frac{n \eta'^2}{8c^2(d+1)^2}}, \\ 
  \delta   &=(c+3)(5n)^{2d^2} e^{-\frac{n \eta'^2}{8c^2(d+1)^2}},\\ 
  \epsilon &=2(c+3)(n+1)^{3d^2}(5n)^{2d^2} e^{-\frac{n \eta'^2}{8c^2(d+1)^2}}. 
\end{align*}

Moreover, let $\rho^n=\rho_1 \otimes \cdots \otimes \rho_n$ be a state with $\frac{1}{n}\abs{\Tr(\rho^n A_j^{(n)})- v_j}\leq \frac{1}{2} \eta' \Sigma(A_j)$. Then, $\rho^n$ projects onto a.m.c. subspace with probability $\epsilon$: 
\[\Tr(\rho^n P) \geq 1- \epsilon.\]
\end{corollary}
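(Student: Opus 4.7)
The corollary has two distinct assertions, and I would tackle them separately. The first is essentially a bookkeeping specialization of Theorem~\ref{thm:symmetric-micro-exists}, while the second is a concentration argument for tensor-product states that uses property~2 of Definition~\ref{defi:microcanonical} as a black box.

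For the first part, my plan is to apply Theorem~\ref{thm:symmetric-micro-exists} with the specific choice $\eta = 2\eta'$, so that $\eta - \eta' = \eta'$ and hence $\alpha = (\eta-\eta')^{2}/(8c^{2}(d+1)^{2}) = \eta'^{2}/(8c^{2}(d+1)^{2})$. This immediately identifies the exponential decay rate appearing in all four quantities. To pin down $\delta'$ (and thereby $\epsilon$), I would solve the defining relation $\delta' = \epsilon/(2(n+1)^{3d^{2}}) - (c+3)(5n)^{2d^{2}}e^{-\alpha n}$ by imposing that the two terms on the right be equal in magnitude, i.e.\ by setting $\delta' = \tfrac{c+3}{2}(5n)^{2d^{2}}e^{-\alpha n}$. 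Substituting back gives $\epsilon = 2(c+3)(n+1)^{3d^{2}}(5n)^{2d^{2}}e^{-\alpha n}$, and the value of $\delta$ is read off directly from the first bound established in the proof of Theorem~\ref{thm:symmetric-micro-exists}. The only thing to verify is that $\delta'>0$ and $\epsilon > 2(n+1)^{3d^{2}}\delta'$, both of which hold by construction for $n$ large enough.

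For the second part, since $P$ is the projector onto an a.m.c.\ subspace with parameters $(\epsilon,\eta,\eta',\delta,\delta')$, property~2 of Definition~\ref{defi:microcanonical} reduces the problem to showing that for each $j=1,\dots,c$,
\begin{equation*}
\Tr(\rho^{n}\,\Pi_{j}^{\eta'}) \geq 1-\delta'.
\end{equation*}
Because $\rho^{n}=\rho_{1}\otimes\cdots\otimes\rho_{n}$ is a product state, a projective measurement of $A_{j}^{(n)}$ produces independent outcomes $Z_{\ell}$ with $\mathbb{E}[Z_{\ell}]=\Tr(\rho_{\ell}A_{j})$ and ranges contained in $[\lambda_{\min}(A_{j}),\lambda_{\max}(A_{j})]$, so of width at most $\Sigma(A_{j})$. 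The hypothesis of the corollary says precisely that $|\mathbb{E}[\sum_{\ell}Z_{\ell}] - nv_{j}| \leq \tfrac{n\eta'}{2}\Sigma(A_{j})$. A triangle-inequality step then converts a deviation of $\sum_{\ell}Z_{\ell}$ from $nv_{j}$ by more than $n\eta'\Sigma(A_{j})$ into a deviation from the mean by more than $\tfrac{n\eta'}{2}\Sigma(A_{j})$, and Hoeffding's inequality bounds the latter by $2\exp(-n\eta'^{2}/2)$. Since $1/2 > 1/(8c^{2}(d+1)^{2})$, this quantity is dominated by $\delta'$ for all sufficiently large $n$, giving the desired bound for each $j$, and property~2 of the a.m.c.\ subspace then yields $\Tr(\rho^{n}P)\geq 1-\epsilon$.

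The main obstacle is not conceptual but bookkeeping: one has to make sure the polynomial prefactors and the exponential rates in the definition of $(\epsilon,\eta,\eta',\delta,\delta')$ are internally consistent with the bounds extracted from the proof of Theorem~\ref{thm:symmetric-micro-exists} (there is a mild discrepancy in the polynomial exponent of $\delta$, but it is absorbed by the exponential decay and can be handled by enlarging the polynomial factor). Aside from that, both parts reduce to quoting Theorem~\ref{thm:symmetric-micro-exists} and applying Hoeffding's inequality to the independent variables arising from the tensor-product structure.
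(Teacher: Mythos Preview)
Your proposal is correct and follows essentially the same route as the paper. The paper's proof treats the parameter specialization as implicit and focuses on the second assertion, where it too introduces independent random variables $X_i$ from measuring $A_j$ on each factor $\rho_i$, rewrites $1-\Tr(\rho^n\Pi_j^{\eta'})$ as a deviation probability, shifts from $v_j$ to the true mean via the hypothesis $|\mathbb{E}(\overline{X})-v_j|\leq \tfrac12\eta'\Sigma(A_j)$, applies Hoeffding to obtain $2\exp(-n\eta'^2/2)\leq\delta'$, and invokes property~2 of the a.m.c.\ definition. Your observation about the polynomial exponent in $\delta$ is apt; the paper's corollary indeed states $(5n)^{2d^2}$ rather than the $(5n)^{5d^2}$ appearing in Theorem~\ref{thm:symmetric-micro-exists}, a discrepancy the paper does not address.
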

\begin{proof}
For simplicity of notation we drop the subscript $j$ from $A_j$, $v_j$ and $\Pi^{\eta'}_j$, so let $\sum_{l=1}^d E_l \ketbra{l}{l}$ be the spectral decomposition of $A$. Define independent random variables $X_i$ for $i=1,\ldots,n$ taking values in the set $\set{E_1,\ldots,E_d}$ with probabilities $p_i(E_l)=\Tr(\rho_i \ketbra{l}{l} )$.
Furthermore, define random variable $\overline{X}=\frac{X_1+\ldots+X_n}{n}$ which has the following expectation value 
\begin{align*}
    \mathbb{E}(\overline{X})=\frac{1}{n} \Tr(\rho^n A^{(n)}).
\end{align*}
Therefore, we obtain
\begin{align*}
    1-&\Tr(\rho^n \Pi^{\eta'})\\
    &=\sum_{\substack{l_1,\ldots,l_n: \\ \abs{E_{l_1}+\ldots+E_{l_n}-n v} \geq n \eta' \Sigma(A) }} \bra{l_1} \rho_1 \ket{l_1} \ldots \bra{l_n} \rho_n \ket{l_n} \\
    &=\Pr \left( \abs{\overline{X}- v} \geq \eta' \Sigma(A)  \right) \\
    &=\Pr \left( \overline{X}- \mathbb{E}(\overline{X})  \geq \eta'\Sigma(A)+v - \mathbb{E}(\overline{X})  \>\bigcup \> \overline{X}- \mathbb{E}(\overline{X})  \leq -\eta' \Sigma(A)+v - \mathbb{E}(\overline{X})  \right) \\
    &\leq \exp \left(-\frac{2n (\eta'\Sigma(A)+v - \mathbb{E}(\overline{X}))^2 }{(\Sigma(A))^2} \right)+\exp \left(-\frac{2n (\eta'\Sigma(A)-v + \mathbb{E}(\overline{X}))^2 }{(\Sigma(A))^2} \right)\\
    & \leq 2 \exp (-\frac{n \eta'^2}{2})\\
    & \leq \delta',
\end{align*}
where the second line follows because random the variables $X_1,\ldots,X_n$ are independent and as a result $\Pr\{\overline{X}=\frac{E_{l_1}+\ldots+E_{l_n}}{n}\}=\bra{l_1} \rho_1 \ket{l_1} \cdots \bra{l_n} \rho_n \ket{l_n}$. The fourth line is due to Hoeffding's inequality (Lemma~\ref{Hoeffding's inequality}). The fifth line is due to assumption $\abs{\mathbb{E}(\overline{X})- v} \leq \frac{1}{2} \eta' \Sigma(A)$.

Thus, by the definition of a.m.c. subspace $\Tr(\rho^n P) \geq 1- \epsilon$.
\end{proof}

\section{Proof of the AET Theorem~\ref{Asymptotic equivalence theorem}}
\label{proof-AET}
Here, we first prove the following lemma where we will use points 3 and 4 to prove the main theorem. Corollary~\ref{corollary: a.m.c. projection} implies that assuming $\frac{1}{n}\Tr \rho^n A^{(n)}_j \approx \frac{1}{n}\Tr \sigma^n A^{(n)}_j \approx v_j$ the states $\rho^n$ and
$\sigma^n$ project onto the a.m.c. subspace with high probability. 
Hence, in Lemma~\ref{lemma: timmed state}, we show that one can find states $\widetilde{\rho}$ and $\widetilde{\sigma}$ with support inside the a.m.c. subspace which are very close to the original states in trace norm, that is, $\widetilde{\rho} \approx \rho^n$ and $\widetilde{\sigma} \approx \sigma^n$, and there are unitaries $V_1$ and $V_2$ that factorizes these states to the tensor product of maximally mixed states $\tau$ and $\tau'$ and some other state of very small dimension:
\begin{align*}
    V_1\widetilde{\rho}V_1^{\dagger}=\tau \otimes \omega \quad \text{and} \quad 
    V_2\widetilde{\sigma}V_2^{\dagger}=\tau' \otimes \omega'.
\end{align*}
Further, assuming that the states $\rho^n$ and $\sigma^n$ have very close entropy rates, i.e. 
$\frac{1}{n}S(\rho^n) \approx \frac{1}{n}S(\sigma^n)$, one can find states $\tau$ and $\tau'$ with the same dimension that is $\tau=\tau'$. Thus, we observe that two states $\widetilde{\rho}\otimes \omega'$ and $\widetilde{\sigma}\otimes \omega$ have exactly the same spectrum, so there is unitary acting on the a.m.c. subspace and the ancillary system taking one state to another. Based on the properties of the a.m.c. subspace, we show that this unitary is an almost-commuting unitary with the charges $A_j^{(n)}$.

\begin{lemma}\label{lemma: timmed state}
Let subspace $\mathcal{M}$ of $\mathcal{H}^{\otimes n}$ with projector $P$ be a high probability subspace for state $\rho^n=\rho_1 \otimes \cdots \otimes \rho_n$, i.e. $\Tr(\rho^n P) \geq 1- \epsilon$.
Then, for sufficiently large $n$ there is a subspace $\widetilde{\mathcal{M}} \subseteq \mathcal{M}$ with projector $\widetilde{P}$ and state $\widetilde{\rho}$ with support inside $\widetilde{\mathcal{M}}$  such that the following holds:  
\begin{enumerate}
    \item $\Tr(\Pi^n_{\alpha,\rho^n}\rho^n \Pi^n_{\alpha,\rho^n}  \widetilde{P})  \geq 1- 2\sqrt{\epsilon}-\frac{1}{O(\alpha)}$.
    \item $2^{-\sum_{i=1}^n S(\rho_i)-2\alpha \sqrt{n}}\widetilde{P} \leq \widetilde{P} \Pi^n_{\alpha,\rho^n}\rho^n \Pi^n_{\alpha,\rho^n} \widetilde{P} \leq 2^{-\sum_{i=1}^n S(\rho_i)+ \alpha \sqrt{n}}\widetilde{P}$.
    \item There is a unitary $U$ such that $U\widetilde{\rho}U^{\dagger}=\tau \otimes \omega $
     where $\tau$ is a maximally mixed state of dimension $2^{\sum_{i=1}^n S(\rho_i) - O(\alpha \sqrt{n})}$, and $\omega$ is a  state of dimension $2^{O(\alpha \sqrt{n })}$.
    \item $\norm{\widetilde{\rho}-\rho^n}_1 \leq 2\sqrt{\epsilon}+\frac{1}{O(\alpha)}+2\sqrt{2\sqrt{\epsilon}+\frac{1}{O(\alpha)}} $.
\end{enumerate}
\end{lemma}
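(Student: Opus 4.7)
The plan is to adapt the Schumacher-style typical subspace construction to our product (but not necessarily i.i.d.) source $\rho^n$, combining it with a gentle-measurement / spectral cut-off argument to produce a trimmed state supported in $\mathcal{M}$ with almost-flat spectrum.

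First, I would take the spectral decomposition $\rho_i=\sum_j p_i(j)\ketbra{e_i(j)}{e_i(j)}$, so that $\rho^n$ is diagonal in the product basis with eigenvalues $q_{\und{j}}=\prod_i p_i(j_i)$. The surprise $-\log q_{\und{j}}=\sum_i-\log p_i(j_i)$ is a sum of independent bounded random variables with mean $\sum_i S(\rho_i)$ and variance $O(n)$. Defining $\Pi^n_{\alpha,\rho^n}$ as the spectral projector of $\rho^n$ onto eigenvalues with $|{-}\log q_{\und{j}}-\sum_i S(\rho_i)|\leq\alpha\sqrt{n}$, Chebyshev's inequality gives $\Tr(\rho^n\Pi^n_{\alpha,\rho^n})\geq 1-O(1/\alpha^2)$. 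A short Cauchy--Schwarz / gentle-measurement estimate then yields $\Tr(\Pi^n_{\alpha,\rho^n}\rho^n\Pi^n_{\alpha,\rho^n}P)\geq 1-\epsilon-1/O(\alpha)$, which is essentially what point~1 asks for once $P$ is replaced by $\widetilde{P}\leq P$ in the next step.

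Second, I would obtain $\widetilde{P}$ by a doubly-typical cut-off: let $\widetilde{P}$ be the spectral projector of the contraction $\Pi^n_{\alpha,\rho^n}P\,\Pi^n_{\alpha,\rho^n}$, viewed as an operator on $\Pi^n_{\alpha,\rho^n}\mathcal{H}^{\otimes n}$, onto eigenvalues above the fixed threshold $1/2$. Then $\widetilde{P}\leq P$ and crucially $\widetilde{P}\leq\Pi^n_{\alpha,\rho^n}$, so $\widetilde{P}\,\Pi^n_{\alpha,\rho^n}\widetilde{P}=\widetilde{P}$. A Markov-type inequality applied to the spectrum of $\Pi^n_{\alpha,\rho^n}P\,\Pi^n_{\alpha,\rho^n}$ then gives $\Tr(\rho^n\widetilde{P})\geq 2\Tr(\rho^n\Pi^n_{\alpha,\rho^n}P\Pi^n_{\alpha,\rho^n})-1\geq 1-2\epsilon-1/O(\alpha)$. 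Since $\rho^n$ has spectrum inside $[2^{-\sum S(\rho_i)-\alpha\sqrt{n}},\,2^{-\sum S(\rho_i)+\alpha\sqrt{n}}]$ on $\Pi^n_{\alpha,\rho^n}\mathcal{H}^{\otimes n}$, sandwiching by $\widetilde{P}$ yields the operator inequalities of point~2, with the slight $-2\alpha\sqrt{n}$ slack in the lower factor absorbing the asymmetric spectral window in the definition of $\Pi^n_{\alpha,\rho^n}$. A dimension count then gives $\dim\widetilde{\mathcal{M}}\leq\dim\Pi^n_{\alpha,\rho^n}\mathcal{H}^{\otimes n}\leq 2^{\sum S(\rho_i)+\alpha\sqrt{n}}$, while the operator lower bound forces $\operatorname{rank}\widetilde{P}\geq 2^{\sum S(\rho_i)-2\alpha\sqrt{n}}$.

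Third, point~3 would follow from a standard spectral-rounding and dyadic-block embedding applied to $\widetilde{\rho}:=\widetilde{P}\rho^n\widetilde{P}/\Tr(\widetilde{P}\rho^n)$, whose spectrum is almost flat with max-to-min ratio $2^{O(\alpha\sqrt{n})}$ on a support of dimension $2^{\sum S(\rho_i)\pm O(\alpha\sqrt{n})}$: one partitions the eigenbasis unitarily into $2^{O(\alpha\sqrt{n})}$ blocks of equal size $2^{\sum S(\rho_i)-O(\alpha\sqrt{n})}$ within which the eigenvalues are equalised at negligible cost, producing a unitary $U$ with $U\widetilde{\rho}U^\dagger=\tau\otimes\omega$ of the claimed form. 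Finally, point~4 follows from gentle measurement applied to $\Tr(\rho^n\widetilde{P})\geq 1-2\sqrt{\epsilon}-1/O(\alpha)$: Fuchs--van de Graaf together with the standard fidelity/trace-distance relation for the projected-and-renormalised state contributes the $2\sqrt{2\sqrt{\epsilon}+1/O(\alpha)}$ term, while the normalisation mismatch contributes the additive $2\sqrt{\epsilon}+1/O(\alpha)$.

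The main obstacle will be verifying the lower operator bound in point~2 with the correct constants. It is tempting to think that the identity $\widetilde{P}\,\Pi^n_{\alpha,\rho^n}\widetilde{P}=\widetilde{P}$ gives this bound for free, but the precise $-2\alpha\sqrt{n}$ exponent suggests that one must carefully track the asymmetric spectral window of $\Pi^n_{\alpha,\rho^n}$, the cut-off threshold used to define $\widetilde{P}$, and the normalisation of $\widetilde{\rho}$ against the $O(1/\alpha)$ deficit inherited from the second-moment (rather than exponential) concentration available for the non-i.i.d. product state; the remaining steps are essentially bookkeeping of how these small errors compound under the repeated applications of gentle measurement.
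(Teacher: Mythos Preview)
Your construction of $\widetilde{P}$ has a genuine gap. You define $\widetilde{P}$ as the spectral projector of $\Pi^n_{\alpha,\rho^n}P\,\Pi^n_{\alpha,\rho^n}$ onto eigenvalues above $1/2$, and then assert that $\widetilde{P}\leq P$. This is false: the eigenvectors of $\Pi P\Pi$ with nonzero eigenvalue lie in the range of $\Pi$, not of $P$. A one-dimensional counterexample already shows this: if $\Pi=\proj{\psi}$ and $P=\proj{\phi}$ with $|\langle\psi|\phi\rangle|^2=0.9$, then $\Pi P\Pi=0.9\,\Pi$, so $\widetilde{P}=\Pi\not\leq P$. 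Since the lemma explicitly requires $\widetilde{\mathcal{M}}\subseteq\mathcal{M}$ (and this inclusion is exactly what is needed downstream, so that the unitary in the AET acts inside the a.m.c.\ subspace), your $\widetilde{P}$ does not qualify.

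The paper sandwiches in the opposite order: $\widetilde{P}$ is the spectral projector of $P\,\Pi^n_{\alpha,\rho^n}P$ onto eigenvalues above the threshold $2^{-\alpha\sqrt{n}}$. This automatically gives $\widetilde{P}\leq P$, but now $\widetilde{P}\leq\Pi^n_{\alpha,\rho^n}$ fails, so your shortcut $\widetilde{P}\,\Pi^n_{\alpha,\rho^n}\widetilde{P}=\widetilde{P}$ is unavailable. Instead, one uses $\widetilde{P}P=\widetilde{P}$ to write $\widetilde{P}\,\Pi^n_{\alpha,\rho^n}\widetilde{P}=\widetilde{P}\,P\Pi^n_{\alpha,\rho^n}P\,\widetilde{P}\geq 2^{-\alpha\sqrt{n}}\widetilde{P}$ by the very definition of the threshold. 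Combined with the typical-subspace eigenvalue bound $\Pi\rho^n\Pi\geq 2^{-\sum_i S(\rho_i)-\alpha\sqrt{n}}\Pi$, this yields the lower operator inequality with exponent $-2\alpha\sqrt{n}$. So the extra $\alpha\sqrt{n}$ you were worried about is not an artifact of an asymmetric typical window: it is the price paid for the cut-off that guarantees $\widetilde{\mathcal{M}}\subseteq\mathcal{M}$.

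A secondary issue: in point~3 you take $\widetilde{\rho}=\widetilde{P}\rho^n\widetilde{P}/\Tr(\cdot)$ and appeal to ``spectral rounding at negligible cost''. But the lemma demands an \emph{exact} factorisation $U\widetilde{\rho}U^\dagger=\tau\otimes\omega$, so $\widetilde{\rho}$ must already have an $L$-fold degenerate spectrum. The paper builds $\widetilde{\rho}$ by an explicit three-step trimming of the eigenvalues of $\widetilde{P}\Pi\rho^n\Pi\widetilde{P}$ (binning into $2^{O(\alpha\sqrt{n})}$ intervals, discarding under-populated bins, then rounding each bin's multiplicity down to a multiple of $L=2^{\lfloor\sum_i S(\rho_i)-10\alpha\sqrt{n}\rfloor}$), tracking that the total discarded weight is $2^{-\Omega(\alpha\sqrt{n})}$. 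Your sketch gestures at this but does not supply the mechanism that makes the factorisation exact rather than approximate.
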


\begin{proof} 
1. Let $E\geq 0$ and $F\geq 0$ be two positive operators such that $E+F=P \Pi^n_{\alpha ,\rho^n } P$ where all eigenvalues of $F$ are smaller than $2^{-\alpha \sqrt{n}}$, and define $\widetilde{P}$ to be the projection onto the support of $E$. 
In other words, $\widetilde{P}$ is the projection onto the support of $P \Pi^n_{\alpha ,\rho^n } P$ with corresponding eigenvalues greater $2^{-\alpha\sqrt{n}}$. Then, we  obtain
\begin{align*}
    \Tr&(\Pi^n_{\alpha,\rho^n}\rho^n \Pi^n_{\alpha,\rho^n}  \widetilde{P})\\
    & \geq \Tr(\Pi^n_{\alpha,\rho^n}\rho^n \Pi^n_{\alpha,\rho^n}  E)\\
    &=\Tr(\Pi^n_{\alpha,\rho^n}\rho^n \Pi^n_{\alpha,\rho^n}  P\Pi^n_{\alpha,\rho^n}P)-\Tr(\Pi^n_{\alpha,\rho^n}\rho^n \Pi^n_{\alpha,\rho^n}F)\\
    &\geq \Tr(\rho^n  P\Pi^n_{\alpha,\rho^n}P)-\norm{\Pi^n_{\alpha,\rho^n}\rho^n \Pi^n_{\alpha,\rho^n}-\rho^n}_1-\Tr(\Pi^n_{\alpha,\rho^n}\rho^n \Pi^n_{\alpha,\rho^n}F)\\ 
    &\geq \Tr(\rho^n \Pi^n_{\alpha,\rho^n})-\norm{P\rho^n P-\rho^n}_1-\norm{\Pi^n_{\alpha,\rho^n}\rho^n \Pi^n_{\alpha,\rho^n}-\rho^n}_1-\Tr(\Pi^n_{\alpha,\rho^n}\rho^n \Pi^n_{\alpha,\rho^n}F)\\ 
    &\geq \Tr(\rho^n \Pi^n_{\alpha,\rho^n})-\norm{P\rho^n P-\rho^n}_1-\norm{\Pi^n_{\alpha,\rho^n}\rho^n \Pi^n_{\alpha,\rho^n}-\rho^n}_1-2^{-\alpha \sqrt{n}}\\ 
    & \geq 1-\frac{\beta}{\alpha^2}-2\sqrt{\epsilon}-2\frac{\sqrt{\beta}}{\alpha}-2^{-\alpha \sqrt{n}},
\end{align*}
where the first line follows from the fact that $\widetilde{P} \geq E$. The third, forth and fifth lines are due to H\"{o}lder inequality. The last line follows from Lemma \ref{lemma:typicality properties } and gentle operator lemma \ref{Gentle Operator Lemma}.

\medskip
2. By the fact that in the typical subspace the eigenvalues of $\rho^n$ are bounded (Lemma \ref{lemma:typicality properties }), we obtain 
\begin{align*}
      \widetilde{P} \Pi^n_{\alpha,\rho^n}\rho^n \Pi^n_{\alpha,\rho^n} \widetilde{P} &\leq 2^{-\sum_{i=1}^n S(\rho_i)+\alpha\sqrt{n}}\widetilde{P} \Pi^n_{\alpha ,\rho^n }  \widetilde{P} \\
      &\leq 2^{-\sum_{i=1}^n S(\rho_i)+\alpha\sqrt{n}}\widetilde{P}.
\end{align*}
For the lower bound notice that 
\begin{align*}
      \widetilde{P} \Pi^n_{\alpha,\rho^n}\rho^n \Pi^n_{\alpha,\rho^n} \widetilde{P}& \geq 2^{-\sum_{i=1}^n S(\rho_i)-\alpha\sqrt{n}}\widetilde{P} \Pi^n_{\alpha ,\rho^n }  \widetilde{P}\\ 
      &=2^{-\sum_{i=1}^n S(\rho_i)-\alpha\sqrt{n}}\widetilde{P}P \Pi^n_{\alpha ,\rho^n } P \widetilde{P} \\ 
      &\geq 2^{-\sum_{i=1}^n S(\rho_i)-2\alpha\sqrt{n}} \widetilde{P},
\end{align*}
where the equality holds because $\widetilde{P} \subseteq \mathcal{M}$, therefore $\widetilde{P} P=\widetilde{P}$. The last inequality follows because $\widetilde{P}$ is the projection onto support of $P \Pi^n_{\alpha ,\rho^n } P$ with eigenvalues greater $2^{-\alpha\sqrt{n}}$.


\medskip
3. Consider the unnormalized state $\widetilde{P} \Pi^n_{\alpha,\rho^n}\rho^n \Pi^n_{\alpha,\rho^n} \widetilde{P}$ with support inside $\widetilde{\mathcal{M}}$.
From from point 2, we know that all the eigenvalues of this state belongs to the interval $[2^{-\sum_{i=1}^n S(\rho_i)-2\alpha \sqrt{n}},2^{-\sum_{i=1}^n S(\rho_i)+\alpha \sqrt{n}}]$ which we denote it by $[p_{\min},p_{\max}]$. We divide this interval to $b=2^{\floor{5\alpha \sqrt{n}}}$ many intervals (bins) with equal length of $\Delta p=\frac{p_{\max}-p_{\min}}{b}$. Now, we \textit{trim} the eigenvalues of this unnormalized state in three steps as follows.\\
\begin{enumerate}[(a)]
    \item  Each eigenvalue belongs to a bin which is an interval $[p_k,p_{k+1})$ for some $0 \leq k \leq b -1$ with $p_k=p_{\min}+\Delta p \times k$. For example, eigenvalue  $\lambda_l$ is equal to $p_k+q_l$ for some $k$ such that $0\leq q_l< \Delta p$. We throw away $q_l$ part of each eigenvalue $\lambda_l$. The sum of these parts over all eigenvalues is very small
    \begin{align*}
        \sum_{l=1}^{|\widetilde{\mathcal{M}}|} q_l \leq \Delta p |\widetilde{\mathcal{M}}| \leq 2^{-2\alpha \sqrt{n}+1},
    \end{align*}
    where the dimension of the subspace $\widetilde{\mathcal{M}}$ is bounded as $|\widetilde{\mathcal{M}}|\leq 2^{\sum_{i=1}^n S(\rho_i)+2\alpha \sqrt{n}}$ which follows from point 2 of the lemma.

    \item  We throw away the bins which contain less than $2^{\sum_{i=1}^n S(\rho_i)-10\alpha \sqrt{n}}$ many eigenvalues. The sum of all the eigenvalues that are thrown away is bounded by
    \begin{align*}
       2^{\sum_{i=1}^n S(\rho_i)-10\alpha \sqrt{n}} \times 2^{5\alpha \sqrt{n}} \times 2^{-\sum_{i=1}^n S(\rho_i)+\alpha  \sqrt{n}} \leq 2^{-4\alpha \sqrt{n}},
    \end{align*}
    in the left member, the first number is the number of eigenvalues in the bin; the second  is the number of bins, and the third is the maximum eigenvalue. 
    
    \item  If a bin, e.g. $k$th bin,  is not thrown away in the previous step, it contains $M_k$ many eigenvalues with the same value with 
    \begin{align}\label{eq:bounds of bin size}
         2^{\sum_{i=1}^n S(\rho_i)-10\alpha  \sqrt{n}} \leq  M_k \leq 2^{\sum_{i=1}^n S(\rho_i)+2\alpha  \sqrt{n}}.
    \end{align}
    Let
    \begin{align}\label{eq:L}
     L= 2^{\floor{\sum_{i=1}^n S(\rho_i) -10 \alpha\sqrt{n}}}  
    \end{align}
    and for the $k$th bin, let $m_{k}$ be an integer number such that 
    \begin{align}\label{M_k_L}
          m_{k} L\leq M_k  \leq (m_{k}+1) L.     
    \end{align}
    Then, $m_{k}$ is bounded as follows
    \begin{align}\label{m_k}
        m_{k} \leq 2^{12\alpha\sqrt{n}}. 
    \end{align}
    From the $k$th bin, we keep $m_{k} L$ number of eigenvalues and throw away the rest where there are $M_k-m_{k} L \leq L$ many of them; the sum of the eigenvalues that are thrown away in this step is bounded by
    \begin{align}
        \sum_{k=0}^{b-1}  p_{k}(M_k-m_k L) \leq L\sum_{k=0}^{b-1} p_k \leq  2^{-4\alpha\sqrt{n}}. \nonumber 
    \end{align}
\end{enumerate}

Therefore, for sufficiently large $n$ the sum of the eigenvalues thrown away in 
the last three steps is bounded by
\begin{align}\label{eq:thrown away sum}
    2^{-2\alpha\sqrt{n}+1}+2^{-4\alpha\sqrt{n}}+2^{-4\alpha\sqrt{n}}\leq 2^{-\alpha\sqrt{n}}
\end{align}
The kept eigenvalues of all bins form an $L$-fold degenerate unnormalized state of dimension $\sum_{k=0}^{b-1} m_{k} L$ because each eigenvalue has at least degeneracy of the order of $L$. Thus, up to  unitary $U^{\dagger}$, it can be factorized into the tensor product of a maximally mixed state $\tau$ and  unnormalized state $\omega'$ of dimensions $L$ and $\sum_{k=0}^{b-1} m_{k} $, respectively. 
From (\ref{m_k}), the dimension of $\omega'$ is bounded by
    \begin{align}
         \sum_{k=0}^{b-1} m_{k} \leq  2^{12\alpha\sqrt{n}} \times 2^{5 \alpha\sqrt{n}}=2^{17 \alpha\sqrt{n}}.    \nonumber 
    \end{align}
Then, let $\omega =\frac{\omega'}{\Tr(\omega')}$ and define 
\begin{align*}
   \widetilde{\rho}=U \tau \otimes \omega U^{\dagger}.
\end{align*}    
   
\medskip
4. From points 3 and 1, we obtain
\begin{align}\label{eq: Tr of omega'}
   \Tr(\omega') &= \Tr(\tau \otimes \omega') \\
   &\geq \Tr(\widetilde{P} \Pi^n_{\alpha,\rho^n}\rho^n \Pi^n_{\alpha,\rho^n} \widetilde{P})-2^{-\alpha\sqrt{n}}\\
   &\geq 1-2\sqrt{\epsilon}-2\frac{\sqrt{\beta}}{\alpha}-\frac{\beta}{\alpha^2}-2^{-\alpha \sqrt{n}+1}.
\end{align}
Thereby, we get the following
\begin{align*}
    \norm{\widetilde{\rho}-\rho^n}_1& \leq \norm{\widetilde{\rho}-U \tau \otimes \omega' U^{\dagger}}_1+\norm{U \tau \otimes \omega' U^{\dagger}-\widetilde{P} \Pi^n_{\alpha,\rho^n}\rho^n \Pi^n_{\alpha,\rho^n} \widetilde{P}}_1 \\
    &\quad \quad \quad+ \norm{\widetilde{P} \Pi^n_{\alpha,\rho^n}\rho^n \Pi^n_{\alpha,\rho^n} \widetilde{P} -\rho^n}_1 \\
    &\leq 1-\Tr(\omega')+\norm{U \tau \otimes \omega' U^{\dagger}-\widetilde{P} \Pi^n_{\alpha,\rho^n}\rho^n \Pi^n_{\alpha,\rho^n} \widetilde{P}}_1 \\
    &\quad \quad \quad+ \norm{\widetilde{P} \Pi^n_{\alpha,\rho^n}\rho^n \Pi^n_{\alpha,\rho^n} \widetilde{P} -\rho^n}_1\\
    &\leq 1-\Tr(\omega')+2^{-\alpha \sqrt{n}}+\norm{\widetilde{P} \Pi^n_{\alpha,\rho^n}\rho^n \Pi^n_{\alpha,\rho^n} \widetilde{P} -\rho^n}_1\\
    &\leq 1-\Tr(\omega')+2^{-\alpha \sqrt{n}} +2\sqrt{2\sqrt{\epsilon}+2\frac{\sqrt{\beta}}{\alpha}+\frac{\beta}{\alpha^2}+2^{-\alpha \sqrt{n}}} \\
    &= 2\sqrt{\epsilon}+2\frac{\sqrt{\beta}}{\alpha}+\frac{\beta}{\alpha^2}+2^{-\alpha \sqrt{n}+1}+2\sqrt{2\sqrt{\epsilon}+2\frac{\sqrt{\beta}}{\alpha}+\frac{\beta}{\alpha^2}+2^{-\alpha \sqrt{n}}},
\end{align*}
where the first line is due to triangle inequality. The second, third and fourth lines are 
due to Eqs. (\ref{eq: Tr of omega'}) and (\ref{eq:thrown away sum}), 
and Lemma \ref{Gentle Operator Lemma}, respectively.
\end{proof}

\begin{proof-of}[{of Theorem \ref{Asymptotic equivalence theorem}}]
We first prove the \textit{if} part. If there is an almost-commuting unitary $U$ and an ancillary system with the desired properties stated in the theorem, then we obtain
\begin{align}
    \frac{1}{n}\abs{S(\rho^n)-S(\sigma^n)}
    &\leq \frac{1}{n}\abs{S(\rho^n \otimes \omega')-S(\sigma^n\otimes \omega)}
    +\frac{1}{n}\abs{S(\omega')-S(\omega)} \nonumber \\
    &\leq \frac{1}{n}\abs{S(\rho^n\otimes \omega')-S(\sigma^n\otimes \omega)}
    +\frac{2}{n}\log 2^{o(n)} \nonumber \\
    &= \frac{1}{n}\abs{S(U(\rho^n\otimes \omega' )U^{\dagger})-S(\sigma^n\otimes \omega)}
    +o(1) \nonumber \\
    &\leq \frac{1}{n}  o(1) \log (d^{n} \times 2^{ o(n)})
    +\frac{1}{n}h\left(o(1)\right)+o(1)\nonumber \\
    &= o(1) \nonumber,
\end{align}
where the first line follows from additivity of the von Neumann entropy and triangle inequality. The second line is due to the fact that von Neumann entropy of a state is upper bounded by the logarithm of the dimension.
The penultimate line follows from continuity of von Neumann entropy \cite{Fannes1973,Audenaert2007}  where $h(x) = -x \log x - (1 - x) \log(1 - x)$ is the binary entropy function.   
Moreover, we obtain
\begin{align}\label{approximate average charge conservation}
    \frac{1}{n}&\abs{\Tr(\rho^n A_j^{(n)})-\Tr(\sigma^n A_j^{(n)})} \nonumber\\
    &=\frac{1}{n}\abs{\Tr\left( \rho^n \otimes \omega'  (A_j^{(n)}+A_j')\right)- \Tr\left( \sigma^n\otimes \omega  (A_j^{(n)}+A_j')\right) } \nonumber \\
    &\leq\frac{1}{n}\abs{\Tr\left( \rho^n \otimes \omega'  (A_j^{(n)}+A_j')\right)- \Tr\left( U\rho^n\otimes \omega'U^{\dagger}  (A_j^{(n)}+A_j')\right) }\nonumber\\
    &\quad \quad +\frac{1}{n}\abs{\Tr\left( U\rho^n\otimes \omega'U^{\dagger}  (A_j^{(n)}+A_j')\right)- \Tr\left( \sigma^n\otimes \omega  (A_j^{(n)}+A_j')\right) }\nonumber\\ 
    &=\frac{1}{n}\abs{\Tr\left( \rho^n \otimes \omega'  \left(A_j^{(n)}+A_j' -U^{\dagger}(A_j^{(n)}+A_j')U\right)\right) }\\
   &\quad \quad \quad  +\frac{1}{n}\abs{\Tr\left( \left(U\rho^n\otimes  \omega'U^{\dagger} - \sigma^n\otimes \omega \right)  (A_j^{(n)}+A_j')\right)} \nonumber\\ 
    &\leq  \frac{1}{n} \Tr(\rho^n \otimes \omega') \norm{U(A_j^{(n)}+A_j')U^{\dagger} - (A_j^{(n)}+A_j')}_{\infty}\\
    &\quad \quad \quad + \frac{1}{n} \norm{U\rho^n\otimes  \omega'U^{\dagger} - \sigma^n\otimes \omega}_1 \norm{A_j^{(n)}+A_j'}_{\infty} \nonumber\\
    & = o(1),  
\end{align}
the second line follows because $A_j'=0$ for all $j$. The third and fifth lines are due to  
triangle inequality and H\"{o}lder's inequality, respectively. 

\medskip
Now, we prove the \textit{only if} part. Assume for the sates $\rho^n$ and $\sigma^n$ the following holds:
\begin{align*}
   & \frac{1}{n} \abs{S(\rho^n)-S(\sigma^n)} \leq \gamma_n \\ 
   & \frac{1}{n} \abs{ \Tr(A_j^{(n)}\rho^n)-\Tr(A_j^{(n)}\sigma^n)}\leq \gamma'_n  \quad \quad j=1,\ldots,c,
\end{align*}
for vanishing $\gamma_n $ and $\gamma'_n $.
According to Theorem \ref{thm:symmetric-micro-exists}, for charges $A_j$, values $ v_j=\frac{1}{n} \Tr(\rho^n A_j^{(n)})$, $\eta'>0$ and any $n>0$, there is an a.m.c. subspace $\mathcal{M}$ of $\mathcal{H}^{\otimes n}$ with projector $P$ and the following parameters:
\begin{align*}
  &\eta=2 \eta',\\
  &\delta'=\frac{c+3}{2}(5n)^{2d^2} e^{-\frac{n \eta'^2}{8c^2(d
  +1)^2}}, \\ 
  &\delta=(c+3)(5n)^{2d^2} e^{-\frac{n \eta'^2}{8c^2(d
  +1)^2}},\\ 
  &\epsilon=2(c+3)(n+1)^{3d^2}(5n)^{2d^2} e^{-\frac{n \eta'^2}{8c^2(d
  +1)^2}}. 
\end{align*}
Choose $\eta'$ as the following such that $\delta$, $\delta'$ and $\epsilon$ vanish for large $n$:
\begin{align*}
\eta'=\left\{
                \begin{array}{ll}
                  \frac{\sqrt{8}c(d+1) }{n^{\frac{1}{4}} \Sigma(A)_{\min}} \quad &\text{if} \quad \gamma'_n \leq \frac{1}{n^{\frac{1}{4}}}\\
                  \frac{\sqrt{8}c(d+1)\gamma'_n}{ \Sigma(A)_{\min}} \quad &\text{if} \quad \gamma'_n > \frac{1}{n^{\frac{1}{4}}}
                \end{array}
              \right.
\end{align*}
where $\Sigma(A)_{\min}$ is the minimum spectral diameter among all spectral diameters of 
charges $\Sigma(A_j)$. Since $\frac{1}{n} \Tr(\rho^n A_j^{(n)})= v_j$ and $\abs{\frac{1}{n}\Tr(\sigma^n A_j^{(n)})- v_j} \leq \frac{1}{2}\eta' \Sigma (A_j)$, Corollary~\ref{corollary: a.m.c. projection} implies that states 
$\rho^n$ and $\sigma^n$ project onto this a.m.c. subspace with probability $\epsilon$:
\begin{align*}
    &\Tr(\rho^n P)\geq 1-\epsilon,\\
    &\Tr(\sigma^n P)\geq 1-\epsilon.
\end{align*}
Moreover, consider the typical projectors $\Pi^n_{\alpha ,\rho^n }$ and $\Pi^n_{\alpha ,\sigma^n }$ of states $\rho^n$ and $\sigma^n$, respectively, with $\alpha=n^{\frac{1}{3}}$. Then point 3 and 4 of Lemma~\ref{lemma: timmed state} implies that there are states $\widetilde{\rho}$ and $\widetilde{\sigma}$ with support inside the a.m.c. subspace $\mathcal{M}$ and unitaries $V_1$ and $V_2$ such that
\begin{align}\label{eq:4 formulas}
 &\norm{\widetilde{\rho}-\rho^n}_1 \leq o(1) , \nonumber\\
 &\norm{\widetilde{\sigma}-\sigma^n}_1 \leq o(1), \nonumber\\
 &V_1\widetilde{\rho}V_1^{\dagger}=\tau \otimes \omega, \nonumber\\
 &V_2\widetilde{\sigma}V_2^{\dagger}=\tau' \otimes \omega',
\end{align}
where $\tau$ and $\tau'$ are maximally mixed states; since $\abs{S(\rho^n)-S(\sigma^n)} \leq n \gamma_n$, one may choose the dimension of them in Eq.~(\ref{eq:L}) to be exactly the same as $L=2^{\floor{\sum_{i=1}^n S(\rho_i)-10 z}}$ with $z=\max \{\alpha \sqrt{n}, n \gamma_n\}$, hence, we obtain $\tau=\tau'$. Then, $\omega$ and $\omega'$ are states with support inside Hilbert space $\mathcal{K}$ of dimension $2^{o(z)}=2^{o(n)}$.
%
%
Then, it is immediate to see that the states $\widetilde{\rho} \otimes \omega'$ and $\widetilde{\sigma} \otimes \omega$ on Hilbert space $\mathcal{M}_t=\mathcal{M} \otimes \mathcal{K}$ have exactly the same spectrum; thus, there is a unitary $\widetilde{U}$ on  subspace $\mathcal{M}_t$ such that 
\begin{align}\label{eq:exact spectrum states}
    \widetilde{U} \widetilde{\rho} \otimes \omega' \widetilde{U}^{\dagger} =\widetilde{\sigma} \otimes \omega.
\end{align}
We extend the unitary $\widetilde{U}$ to $U=\widetilde{U} \oplus \1_{\mathcal{M}_t^{\perp}}$
acting on $\mathcal{H}^{\otimes n} \otimes \mathcal{K}$ and obtain
\begin{align*}
&\norm{U \rho^n  \otimes \omega' U^{\dagger} - \sigma^n \otimes \omega}_1 \\
&\quad \quad  \leq
\norm{U \rho^n \otimes \omega' U^{\dagger} - U \widetilde{\rho} \otimes \omega' U^{\dagger}}_1 +\norm{   \sigma^n \otimes \omega-\widetilde{\sigma}\otimes \omega}_1 
  +\norm{ U \widetilde{\rho} \otimes \omega' U^{\dagger}-\widetilde{\sigma}\otimes \omega}_1 \\
 & \quad  \quad =\norm{U \rho^n \otimes \omega' U^{\dagger} - U \widetilde{\rho} \otimes \omega' U^{\dagger}}_1 +\norm{   \sigma^n \otimes \omega-\widetilde{\sigma}\otimes \omega}_1 \\
 & \quad  \quad  \leq o(1),
\end{align*}
where the second and last lines are due to Eqs. (\ref{eq:exact spectrum states}) and (\ref{eq:4 formulas}), respectively.

As mentioned before, $\mathcal{M}_t=\mathcal{M} \otimes \mathcal{K}$ is a subspace of $\mathcal{H}^{\otimes n} \otimes \mathcal{K}$ with projector $P_t=P \otimes \1_{\mathcal{K}}$ where $P$ is the corresponding projector of a.m.c. subspace. We define total charges $A_j^t=A_j^{(n)}+A_j'$ and let $A_j'=0$ for all $j$ and show that every unitary of the form $U=U_{\mathcal{M}_t} \oplus \1_{\mathcal{M}_t^{\perp}}$  asymptotically commutes with all total charges: 
\begin{align*}
\norm{U A_j^t U^{\dagger}-A_j^t}_{\infty} &= \norm{(P_t+P_t^{\perp})(U A_j^t U^{\dagger}-A_j^t)(P_t+P_t^{\perp})}_{\infty} \\
& \leq \norm{P_t(U A_j^t U^{\dagger}-A_j^t)P_t}_{\infty}+\norm{P_t^{\perp}(U A_j^t U^{\dagger}-A_j^t)P_t}_{\infty} \\
& \quad +\norm{P_t(U A_j^t U^{\dagger}-A_j^t)P_t^{\perp}}_{\infty}+\norm{P_t^{\perp}(U A_j^t U^{\dagger}-A_j^t)P_t^{\perp}}_{\infty}\\
& = \norm{P_t(U A_j^t U^{\dagger}-A_j^t)P_t}_{\infty}+2\norm{P_t^{\perp}(U A_j^t U^{\dagger}-A_j^t)P_t}_{\infty} \\
& \leq 3  \norm{(U A_j^t U^{\dagger}-A_j^t)P_t}_{\infty}\\
& = 3  \norm{(U A_j^t U^{\dagger} -n v_j \1+ nv_j \1-A_j^t)P_t}_{\infty}\\
& \leq 3  \norm{(U A_j^t U^{\dagger} -n v_j \1)P_t}_{\infty}+3  \norm{(A_j^t- nv_j \1)P_t}_{\infty}\\
& = 6   \norm{(A_j^t- nv_j \1)P_t}_{\infty}\\
& = 6 \max_{\ket{v} \in \mathcal{M}_t} \norm{(A_j^t-n v_j \1)\ket{v}}_2\\
& = 6 \max_{\ket{v} \in \mathcal{M}_t} \norm{ (A_j^t-n v_j \1)(\Pi_j^{\eta} \otimes \1_{\mathcal{K}} +\1-\Pi_j^{\eta} \otimes \1_{\mathcal{K}}) \ket{v} 
 }_2  \\
& \leq 6 \max_{\ket{v} \in \mathcal{M}_t} \norm{(A_j^t-n v_j \1) \Pi_j^{\eta} \otimes \1 \ket{v}}_2 \\
& \quad \quad \quad+6 \max_{\ket{v} \in \mathcal{M}_t} \norm{(A_j^t-n v_j \1) (\1-\Pi_j^{\eta} \otimes \1) \ket{v}}_2  \\
& \leq  6n \Sigma (A_j) \eta +6 \max_{\ket{v} \in \mathcal{M}_t} \norm{(A_j^t-n v_j I) (\1-\Pi_j^{\eta} \otimes \1) \ket{v}}_2,
\end{align*}
where the first line is due to the fact that $P_t+P_t^{\perp}=\1_{\mathcal{H}^{\ox n}} \ox \1_{\mathcal{K}}$. The forth line follows because $U A_j^t U^{\dagger}-A_j^t$ is a Hermitian operator with zero eigenvalues in the subspace $P_t^{\perp}$. 
The fifth line is due to Lemma~\ref{lemma:norm inequality}. The twelfth line is due to the definition of the a.m.c. subspace. Now, bound the second term in the above:
\begin{align*}
&6 \max_{\ket{v} \in \mathcal{M}_t} \norm{(A_j^t-n v_j I) (\1-\Pi_j^{\eta} \otimes \1) \ket{v}}_2  \\
& \quad \quad \quad \leq  6 \max_{\ket{v} \in \mathcal{M}_t} \norm{A_j^t-n v_j \1 }_{\infty} \norm{ (\1-\Pi_j^{\eta} \otimes \1) \ket{v}}_2 \\
& \quad \quad \quad = 6 \norm{A_j^t-n v_j \1}_{\infty}  \max_{\ket{v} \in \mathcal{M}_t} \sqrt{ \Tr ((\1-\Pi_j^{\eta} \otimes \1)\ketbra{v}{v})}\\
& \quad \quad \quad = 6 n\norm{A_j- v_j \1}_{\infty} \max_{v \in \mathcal{M}} \sqrt{ \Tr ((\1-\Pi_j^{\eta} )v)}\\
&\quad \quad \quad \leq 6 n\norm{A_j- v_j \1}_{\infty} \sqrt{ \delta},
\end{align*}
the first line is due to Lemma~\ref{lemma:norm inequality}. 
The last line is by definition of the a.m.c. subspace. 
Thus, for vanishing $\delta$ and $\eta$ we obtain:
\begin{align*}
    \frac{1}{n}\norm{U A_j^t U^{\dagger}-A_j^t}_{\infty} \leq o(1),
\end{align*}
concluding the proof.
%
\end{proof-of}


\section{Discussion} 
\label{sec:discussion}
We have considered an asymptotic resource theory with states of tensor product structure 
as the objects and allowed operations which are thermodynamically meaningful, 
namely operations which preserve the entropy and and charges of a system asymptotically. 
The allowed operations classify the objects into asymptotically equivalent objects
that are interconvertible under allowed operations. The basic result on which 
our theory is built is that the objects are interconvertible via allowed operations 
if and only if they have the same average entropy and average charge values in the 
asymptotic limit. 

The existence of the allowed operations between the objects of the same class is based on two pillars:
First, for objects with the same average entropy there are states with sublinear dimension which 
can be coupled to the objects to make their spectrum asymptotically identical.
Second, objects with the same average charge values project onto a common subspace of the 
charges of the system which has the property that any unitary acting on this subspace is an almost-commuting unitary with the corresponding charges. Therefore, the spectrum of the objects of the same class can be modified using small ancillary systems and then they are interconvertible via unitaries that asymptotically preserve the charges of the system.
The notion of a common subspace for different charges, which are Hermitian operators, 
is introduced in \cite{Halpern2016} as approximate microcanonical (a.m.c.) subspace.
In this chapter, for given charges and parameters, we show the existence of an a.m.c. which is by construction a permutation-symmetry subspace, which is not guaranteed by the construction in \cite{Halpern2016}.

\chapter{Asymptotic thermodynamics of multiple conserved quantities}
\label{chap:thermo}

%

As a thermodynamic theory, or even as a resource theory in general, transformations
by almost-commuting unitaries, which we developed in the previous chapter, do not appear to be the most fruitful: they are reversible
and induce an equivalence relation among the sequences of product states. 
In particular, every point $(\und{a},s)$ of the phase diagram $\overline{\mathcal{P}}^{(1)}$
defines an equivalence class, namely of all state sequences with charges and 
entropy converging to $\und{a}$ and $s$, respectively. 

To make the theory more interesting, and more resembling of ordinary thermodyanmics, 
including irreversibility as expressed in its first and second laws,
we now specialise to a setting considered in many previous papers in the resource theory 
of thermodynamics, both with with single or multiple conserved quantities. 
Specifically, we consider an asymptotic analogue of the setting proposed 
in \cite{Guryanova2016} concerning the interaction of thermal baths with a 
quantum system and batteries, where it was shown that the second law constrains
the combination of extractable charge quantities. 
In \cite{Guryanova2016}, explicit protocols for state transformations to saturate 
the second law are presented, that store each of several commuting charges in its 
corresponding battery. However, for the case of non-commuting charges, one battery, 
or a so-called reference frame, stores all different types of charges \cite{Halpern2016,Popescu2018}.
Only recently it was shown that reference frames for non-commuting charges
can be constructed, at least under certain conditions, which store the different 
charge types in physically separated subsystems \cite{Popescu2019}.
Moreover, the size of the bath required to perform the transformations is not 
addressed in these works, as only the limit of asymptotically large bath was
considered. 
We will address these questions in a similar setting but in the asymptotic regime, 
where Theorem~\ref{Asymptotic equivalence theorem} provides the necessary and sufficient
condition for physically possible state transformations.
In this new setting, the \emph{asymptotic} second law constrains the combination of 
extractable charges; we provide explicit protocols for realising transformations 
satisfying the second law, where each battery can store its corresponding type of
work in the general case of non-commuting charges. Furthermore, we determine the 
minimum number of thermal baths of a given type that is required to perform a transformation.

\section{System model, batteries and the first law}
\label{subsec:model}
We consider a system being in contact with a bath and suitable batteries,
with a total Hilbert space 
$Q=S\otimes B\otimes W_1\otimes \cdots \otimes W_c$,
consisting of many non-interacting subsystems; namely, the work system, the thermal bath and 
$c$ battery systems with Hilbert spaces ${S}$, ${B}$ and ${W}_j$ for 
$j=1,\ldots,c$, respectively.  
We call the $j$-th battery system the $j$-type battery as it is designed to absorb
$j$-type work.
The work system and the thermal bath have respectively the charges $A_{S_j}$ and $A_{B_j}$ 
for all $j$, but $j$-type battery has only one nontrivial charge $A_{W_j}$, and all 
its other charges are zero because it is meant to store only the $j$-th charge. 
The total charge is the sum of the charges of the sub-systems $A_j=A_{S_j}+A_{B_j}+A_{W_j}$ 
for all $j$. Furthermore, for a charge $A$, let $\Sigma(A)=\lambda_{\max}(A)-\lambda_{\min}(A)$ 
denote the spectral diameter, where $\lambda_{\max}(A)$ and $\lambda_{\min}(A)$ are 
the largest and smallest eigenvalues of the charge $A$, respectively. 
We assume that the total spectral 
diameter of the work system and the thermal bath is bounded by the spectral diameter of the 
battery, that is $\Sigma(A_{S_j})+\Sigma(A_{B_j}) \leq \Sigma(A_{W_j})$ for all $j$; this 
assumption ensures that the batteries can absorb or release charges for transformations.   

As we discussed in the previous chapter, the generalized thermal state $\tau(\und{a})$
is the state that maximizes the entropy subject to the
constraint that the charges $A_j$ have the values $a_j$. 
This state is equal to $\frac{1}{Z}e^{-\sum_{j=1}^c \beta_j A_{j}}$ for real
numbers $\beta_j$ called inverse temperatures and chemical potentials; 
each of them is a smooth function of charge values $a_1,\ldots,a_c$, and 
$Z=\Tr e^{-\sum_{j=1}^c \beta_j A_{j}}$ is the generalized partition function. 
Therefore, the generalized thermal state can be equivalently denoted $\tau(\und{\beta})$ 
as a function of the inverse temperatures, associated uniquely with the charge
values $\und{a}$. 
We assume that the thermal bath is initially in a generalized thermal state
$\tau_b(\und{\beta})$, for globally fixed $\und{\beta}$. 
This is because in \cite{Halpern2016} it was argued that these are precisely the
\emph{completely passive} states, from which no energy can be extracted into 
a battery storing energy, while not changing any of the other conserved quantity, 
by means of almost-commuting unitaries and even when unlimited copies of the state are available. 
We assume that the work system with state $\rho_s$ and the thermal bath are initially 
uncorrelated, and furthermore that the battery systems can acquire only pure states.  

Therefore, the initial state of an \emph{individual} global system $Q$ 
is assumed to be of the following form,
\begin{equation}
  \label{eq:initial composite global}
  \rho_{SBW_1\ldots W_c} 
     = \rho_S \ox \tau(\und{\beta})_B \ox \proj{w_1}_{W_1} \ox \cdots \ox \proj{w_c}_{W_c},
\end{equation}
and the final states we consider are of the form 
\begin{equation}
  \label{eq:final composite global}
  \sigma_{SBW_1\ldots W_c} 
     = \sigma_{SB} \ox \proj{w_1'}_{W_1} \ox \cdots \ox \proj{w_c'}_{W_c},
\end{equation}
where $\rho_S$ and $\sigma_{SB}$ are states of the system and system-plus-bath, respectively, 
and $w_j$ and $w_j'$ label pure states of the $j$-type battery before and after the transformation. 
The notation is meant to convey the expectation value of the $j$-type work, i.e. $w_j^{(\prime)}$
is a real number and $\Tr \proj{w_j^{(\prime)}}A_{W_j} = w_j^{(\prime)}$.

The established resource theory of thermodynamics treats the batteries and the bath as 
`enablers' of transformations of the system $S$, and we will show first and second laws 
that express the essential constraints that any such transformation has to obey. 
We start with the batteries. With the notations $\und{W} = W_1\ldots W_c$, 
$\ket{\und{w}} = \ket{w_1}\cdots\ket{w_c}$, and $\ket{\und{w}'} = \ket{w_1'}\cdots\ket{w_c'}$,
let us look at a sequence $\rho^n = \rho_{S^n} = \rho_{S_1} \ox\cdots\ox \rho_{S_n}$ of initial
system states, and a sequence $\proj{\und{w}}^n = \proj{\und{w}_1}_{\und{W}_1} \ox\cdots\ox \proj{\und{w}_n}_{\und{W}_n}$
of initial battery states, recalling that the baths are initially all in the same thermal
state, $\tau_{B^n} = \tau(\und{\beta})^{\ox n}$; furthermore a sequence of target states 
$\sigma^n = \sigma_{S^nB^n} = \sigma_{S_1B_1} \ox\cdots\ox \sigma_{S_nB_n}$ of the system and bath, and a 
sequence $\proj{\und{w}'}^n = \proj{\und{w}_1'}_{\und{W}_1} \ox\cdots\ox \proj{\und{w}_n'}_{\und{W}_n}$
of target states of the batteries. 

\begin{definition}
  \label{definition:regular}
  A sequence of states $\rho^n$ on any system $Q^n$ is called \emph{regular} if 
  its charge and entropy rates converge, i.e. if
  \begin{align*}
    a_j &= \lim_{n\rightarrow\infty} \frac1n \Tr \rho^n A_j^{(n)},\ j=1,\ldots,c, \text{ and} \\
    s   &= \lim_{n\rightarrow\infty} \frac1n S(\rho^n) 
  \end{align*}
  exist. To indicate the dependence on the state sequence, 
  we write $a_j(\{\rho^n\})$ and $s(\{\rho^n\})$.
\end{definition}

\medskip
According to the AET and the other results of the previous chapter, every point $(\und{a},s)$
in the phase diagram $\overline{\mathcal{P}}^{(1)}$ labels an equivalence class of 
regular sequences of product states under transformations by almost-commuting unitaries. 

In the rest of the chapter we will essentially focus on regular sequences, so that
we can simply identify them, up to asymptotic equivalence, with a point in the phase 
diagram. However, it should be noted that at the expense of clumsier expressions, 
most of our expositions can be extended to arbitrary sequences of product states or 
block-product states. 

\medskip
Now, for regular sequences $\rho_{S^n}$ of initial states of the system and 
final states of the system plus bath, $\sigma_{S^nB^n}$, as well as regular
sequences of initial and final battery states, $\proj{\und{w}}^n$ and $\proj{\und{w}'}^n$, 
respectively, define the asymptotic rate of $j$-th charge change of the $j$-type battery as 
\begin{equation}
  \label{eq: W_j definition}
  \Delta A_{W_j} := a_j(\{\proj{w_j'}^n\})-a_j(\{\proj{w_j}^n\})
                  = \lim_{n\rightarrow\infty} \frac1n  \Tr (\proj{w_j'}^n-\proj{w_j}^n)A_{W_j}^{(n)}.
\end{equation}
Where there is no danger of confusion, we denote this number also as $W_j$, 
the $j$-type work extracted (if $W_j < 0$, this means that the work $-W_j$ is done 
on system $S$ and bath $B$). 

Similarly, we define the asymptotic rate of $j$-th charge change of the work system
and the bath as 
\begin{align*}
  \Delta A_{S_j} &:= a_j(\{\sigma_{S^n}\})-a_j(\{\rho_{S^n}\})
                   = \lim_{n\rightarrow\infty} \frac1n \Tr (\sigma_{S^n}-\rho_{S^n})A_{S_j}^{(n)}, \\ 
  \Delta A_{B_j} &:= a_j(\{\sigma_{B^n}\})-a_j(\{\tau(\und{\beta})_{B^n}\})
                   = \lim_{n\rightarrow\infty} \frac1n \Tr (\sigma_{B^n}-\tau(\und{\beta})_B^{\ox n})A_{B_j}^{(n)}, 
\end{align*}
where we denote $\sigma_{S^n} = \tr_{B^n} \sigma_{S^nB^n}$ and likewise 
$\sigma_{B^n} = \tr_{S^n} \sigma_{S^nB^n}$.

\begin{theorem}[First Law]
\label{thm:first-law}
Under the above notations, if the regular sequences 
$\rho_{S^nB^n\und{W}^n} = \rho_{S^n} \ox \tau(\und{\beta})_B^{\ox n} \ox \proj{\und{w}}^n$ 
and $\sigma_{S^nB^n\und{W}^n} = \sigma_{S^nB^n} \ox \proj{\und{w}'}^n$
are equivalent under almost-commuting unitaries, then 
\begin{align*}
  s(\{\sigma_{S^nB^n}\}) &= s(\{\rho_{S^n}\}) + S(\tau(\und{\beta})) \text{ and} \\ 
  W_j                    &= -\Delta A_{S_j}-\Delta A_{B_j} \text{ for all } j=1,\ldots,c. 
\end{align*}

Conversely, given regular sequences $\rho_{S^n}$ and $\sigma_{S^nB^n}$ of product
states such that 
\[
  s(\{\sigma_{S^nB^n}\}) = s(\{\rho_{S^n}\}) + S(\tau(\und{\beta})), 
\]
and assuming that the spectral radius of the battery observables $W_{A_j}$ is large enough 
(see the discussion at the start of this chapter), then there exist regular sequences of
product states of the $j$-type battery, $\proj{w_j}^n$ and $\proj{w_j'}^n$, for all
$j=1,\ldots,c$, such that 
\begin{align}
  \label{eq:initial}
  \rho_{S^nB^n\und{W}^n}   &= \rho_{S^n} \ox \tau(\und{\beta})_B^{\ox n} \ox \proj{\und{w}}^n \text{ and} \\
  \label{eq:final}
  \sigma_{S^nB^n\und{W}^n} &= \sigma_{S^nB^n} \ox \proj{\und{w}'}^n 
\end{align} 
can be transformed into each other by almost-commuting unitaries.
\end{theorem}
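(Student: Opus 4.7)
The entire theorem is a direct application of the Asymptotic Equivalence Theorem \ref{Asymptotic equivalence theorem} (AET) once the bookkeeping of charges and entropies of the composite product sequences is carried out. The plan is to translate the hypothesis and conclusion in each direction into the single-letter conditions of the AET, i.e.\ matching of asymptotic entropy rates and of asymptotic charge rates for every conserved quantity $A_j$.

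\textbf{Forward direction.} I would start by computing the asymptotic entropy and charge rates of the initial and final composite states as defined in Eqs.~(\ref{eq:initial}) and (\ref{eq:final}). By additivity of von~Neumann entropy on tensor products and purity of the battery states $\proj{\und{w}}^n$, $\proj{\und{w}'}^n$, the entropy rates are
\begin{align*}
  s(\{\rho_{S^nB^n\und{W}^n}\})   &= s(\{\rho_{S^n}\}) + S(\tau(\und{\beta})), \\
  s(\{\sigma_{S^nB^n\und{W}^n}\}) &= s(\{\sigma_{S^nB^n}\}).
\end{align*}
Similarly, using that the sub-system charges $A_{S_j}^{(n)}$, $A_{B_j}^{(n)}$ and $A_{W_j}^{(n)}$ act on disjoint tensor factors, the asymptotic charge rates split as
\begin{align*}
  a_j(\{\rho_{S^nB^n\und{W}^n}\})   &= a_j(\{\rho_{S^n}\}) + \Tr\tau(\und{\beta})A_{B_j} + a_j(\{\proj{w_j}^n\}), \\
  a_j(\{\sigma_{S^nB^n\und{W}^n}\}) &= a_j(\{\sigma_{S^n}\}) + a_j(\{\sigma_{B^n}\}) + a_j(\{\proj{w_j'}^n\}).
\end{align*}
The AET applied to the two composite sequences asserts that equivalence under almost-commuting unitaries implies equality of all these rates. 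Equating the entropy rates gives the first claim; equating the $j$-th charge rates and recognising $W_j = a_j(\{\proj{w_j'}^n\}) - a_j(\{\proj{w_j}^n\})$ from Eq.~(\ref{eq: W_j definition}) and the definitions of $\Delta A_{S_j}$ and $\Delta A_{B_j}$ gives $W_j = -\Delta A_{S_j} - \Delta A_{B_j}$, which is the second claim.

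\textbf{Converse direction.} Here I would construct explicit battery state sequences $\proj{w_j}^n$ and $\proj{w_j'}^n$ so that the composite sequences satisfy the hypotheses of the AET. Fix any real numbers $w_j,w_j'$ with $w_j'-w_j = -\Delta A_{S_j} - \Delta A_{B_j}$. The spectral radius assumption $\Sigma(A_{S_j})+\Sigma(A_{B_j})\leq \Sigma(A_{W_j})$ guarantees that both $w_j$ and $w_j'$ can be chosen inside the interval $[\lambda_{\min}(A_{W_j}),\lambda_{\max}(A_{W_j})]$; hence there exist unit vectors $\ket{w_j}, \ket{w_j'}\in W_j$ with $\bra{w_j}A_{W_j}\ket{w_j}=w_j$ and $\bra{w_j'}A_{W_j}\ket{w_j'}=w_j'$ (take superpositions of two eigenvectors of $A_{W_j}$ if needed). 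Set $\proj{w_j}^n := \proj{w_j}^{\otimes n}$ and $\proj{w_j'}^n := \proj{w_j'}^{\otimes n}$, which are regular by construction.

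Having done this, I would then verify that the two composite sequences are asymptotically equivalent in the sense of Definition \ref{Asymptotic equivalence definition}. The entropy rates agree by the standing hypothesis together with purity of the batteries. The charge rates agree for each $j$ because the battery contribution was designed to absorb exactly the difference $-\Delta A_{S_j}-\Delta A_{B_j}$, while the thermal bath's initial charge $\Tr\tau(\und{\beta})A_{B_j}$ cancels against the $-\Delta A_{B_j}$ term coming from the final bath state. The AET (Theorem \ref{Asymptotic equivalence theorem}) then yields ancillary systems of sub-exponential dimension and an almost-commuting unitary implementing the desired transformation, completing the proof.

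\textbf{Main obstacle.} The only substantive point, and the reason the spectral diameter assumption is explicitly invoked, is the existence of suitable pure battery states in the converse: one must realise prescribed expectation values $w_j,w_j'$ with \emph{pure} states, and the values themselves may range over a large interval dictated by the system and bath charge changes. Everything else is a bookkeeping exercise built on the AET.
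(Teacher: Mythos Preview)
Your proposal is correct and follows essentially the same approach as the paper's proof: both directions reduce immediately to the Asymptotic Equivalence Theorem~\ref{Asymptotic equivalence theorem}, with the forward direction reading off the entropy and charge identities from the rate-preservation property of almost-commuting unitaries, and the converse direction constructing battery states to balance the charge books before invoking the AET. Your write-up is in fact more explicit than the paper's (which dispatches the forward direction as ``by definition'' and the battery construction as ``clearly possible''), and your identification of the pure-state battery existence as the only nontrivial point is exactly right.
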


\begin{proof}
The first part is by definition, since the almost-commuting unitaries 
asymptotically preserve the entropy rate and the work rate of all 
charges. 

In the other direction, all we have to do is find states $\proj{w_j}$ 
and $\proj{w_j'}$ of the $j$-type battery $W_j$, such that
$W_j = \Delta A_{W_j} = -\Delta A_{S_j}-\Delta A_{B_j}$, for all $j=1,\ldots,c$. 
This is clearly possible if the spectral radius of $W_{A_j}$ is large enough. 
With this, the states in Eqs. (\ref{eq:initial}) and (\ref{eq:final}) have the 
same asymptotic entropy and charge rates. 
Hence, the claim follows from the AET, Theorem~\ref{Asymptotic equivalence theorem}.
\end{proof}

\begin{remark}\normalfont
The second part of Theorem~\ref{thm:first-law} says that for regular product state 
sequences, as long as the initial and final states of the work system and the thermal bath 
have asymptotically the same entropy, they can be transformed one into the another 
because there are always batteries that can absorb or release the necessary charge difference. 
Furthermore, we can even fix the initial (or final) state of the batteries and 
design the matching final (initial) battery state, assuming that the charge 
expectation value of the initial (final) state is far enough from the edge of the
spectrum of $A_{W_j}$.
\end{remark}

For any such states, we say that there is a \emph{work transformation} 
taking one to the other, denoted 
$\rho_{S^n} \ox \tau(\und{\beta})_B^{\ox n} \rightarrow \sigma_{S^nB^n}$.
This transformation is always feasible, implicitly assuming the 
presence of suitable batteries for all $j$-type works to balance to books
explicitly. 

\begin{remark}\normalfont
As a consequence of the previous remark, we now change our point of view of what a 
transformation is. Of our complicated $S$-$B$-$\und{W}$ compound, we only 
focus on $SB$ and its state, and treat the batteries as implicit. Since we insist
that batteries need to remain in a pure state, which thus factors off and 
does not contribute to the entropy, and due to the above first law Theorem \ref{thm:first-law}, 
we can indeed understand everything that is going on by looking at how 
$\rho_{S^nB^n}$ transforms into $\sigma_{S^nB^n}$. 
\end{remark}

Note that in this context, it is in a certain sense enough that the initial states 
$\rho_{S^n}$ form a regular sequence of product states and that the target states 
$\sigma_{S^nB^n}$ form a regular sequence. This is because the first part of 
the first law, Theorem \ref{thm:first-law}, only requires regularity, and 
since the target state defines a unique point $(\und{a}',s')$ in the phase 
diagram, we can find a sequence of product states $\widetilde{\sigma}_{S^nB^n}$ in 
its equivalence class, and use the second part of Theorem \ref{thm:first-law}
to realise the work transformation 
$\rho_{S^n} \ox \tau(\und{\beta})_B^{\ox n} \rightarrow \widetilde{\sigma}_{S^nB^n}$.

\section{The second law}
\label{subsec:secondlaw}
If the first law in our framework arises from focusing on the system-plus-bath 
compound $SB$, while making the batteries implicit, the second law comes about 
from trying to understand the action on the work system $S$ alone, through the
concomitant back-action on the bath $B$. 
Following \cite{Guryanova2016,Halpern2016}, the second law constrains the different 
combinations of commuting conserved quantities that can be extracted from the work 
system. We show here that in the asymptotic regime, the second law similarly bounds 
the extractable work rate via the rate of free entropy of the system. 

The \emph{free entropy} for a system with state $\rho$, charges $A_j$ and inverse temperatures 
$\beta_j$ is defined in \cite{Guryanova2016} as
\begin{align}
  \label{free entropy}
  \widetilde{F}(\rho) = \sum_{j=1}^c \beta_j \Tr \rho A_j - S(\rho).
\end{align}
It is shown in \cite{Guryanova2016} that the generalized thermal state 
$\tau(\und{\beta})$ is the state that minimizes the free entropy for 
fixed $\beta_j$.

For any work transformation 
$\rho_{S^n} \ox \tau(\und{\beta})_B^{\ox n} \rightarrow \sigma_{S^nB^n}$
between regular sequences of states, 
we define the asymptotic rate of free entropy change for the work system and the 
thermal bath respectively as follows:
\begin{equation}\begin{split}
  \label{eq:free entropy rates}
  \Delta\widetilde{F}_S 
    &:= \lim_{n \to \infty} \frac{1}{n}\left(\widetilde{F}(\sigma_{S^n})
                                             -\widetilde{F}(\rho_{S^n}) \right), \\
  \Delta\widetilde{F}_B
    &:= \lim_{n \to \infty} \frac{1}{n}\left(\widetilde{F}(\sigma_{B^n})
                                             -n \widetilde{F}(\tau_B)\right),
\end{split}\end{equation}
where the free entropy is with respect to the charges of the work system and the thermal 
bath with fixed inverse temperatures $\beta_j$.

\begin{figure}[ht]
\begin{center}
  \includegraphics[width=10cm,height=8cm]{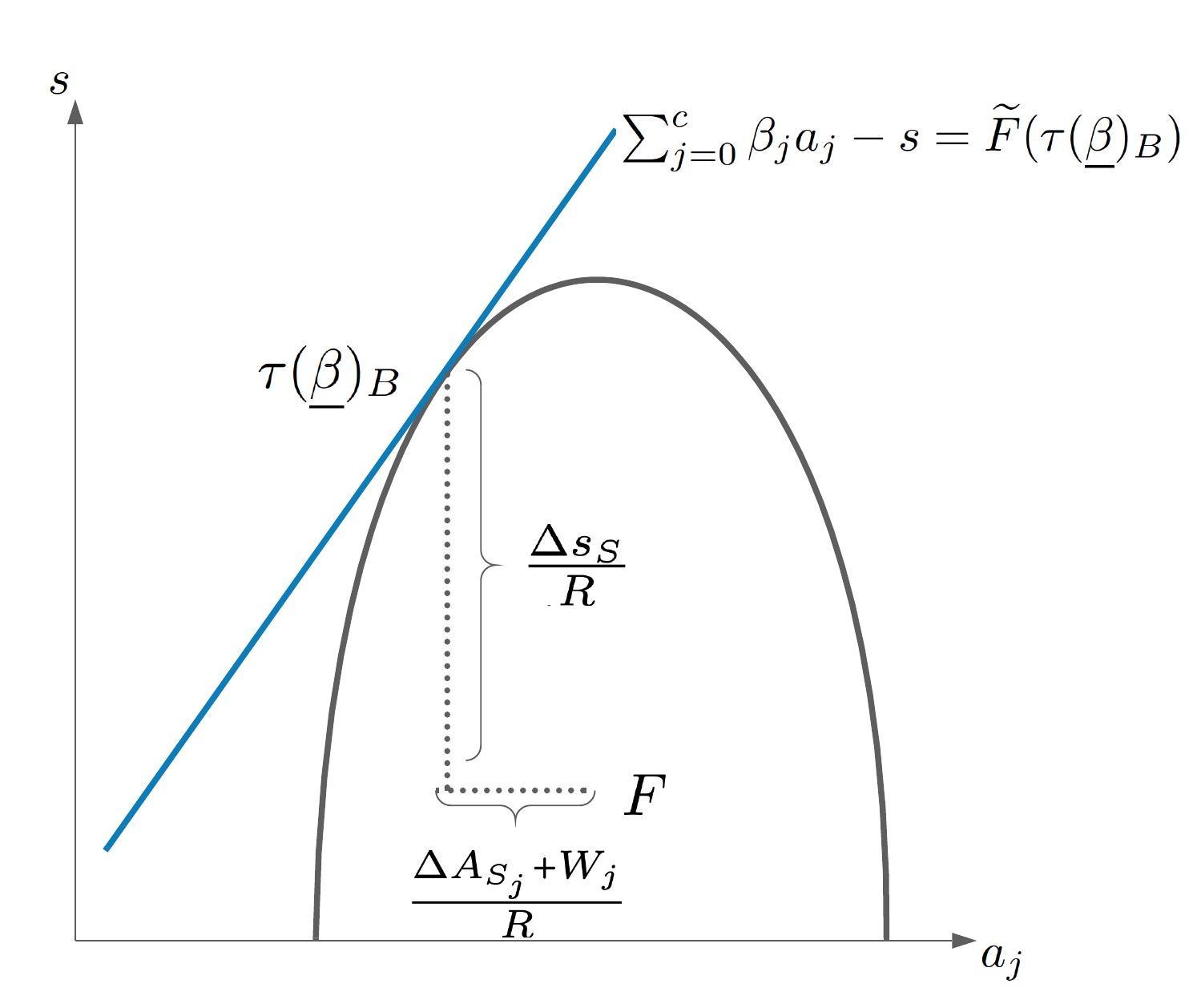}
 \end{center}
  \caption{State change of the bath for a given work transformation under extraction of 
           $j$-type work $W_j$, viewed in the phase diagram of the bath $\overline{\cP}_B$. 
           The blue line represents the tangent hyperplane at the corresponding point 
           of the generalized thermal state $\tau(\und{\beta})_B$, $R$ is the number of copies 
           of the elementary baths in the proof of Theorem \ref{asymptotic second law}, 
           and $F$ is the point corresponding to the final state of the bath.}
  \label{fig:second-law}
\end{figure}

\begin{theorem}[Second Law]
\label{asymptotic second law}
For any work transformation 
$\rho_{S^n} \ox \tau(\und{\beta})_B^{\ox n} \rightarrow \sigma_{S^nB^n}$
between regular sequences of states, the $j$-type works $W_j$ that are extracted 
(and they are necessarily $W_j = -\Delta A_{S_j}-\Delta A_{B_j}$ according 
to the first law) are constrained by the rate of free entropy change of the system:
\[
   \sum_{j=1}^c \beta_j W_j  \leq -\Delta\widetilde{F}_S.
\]

Conversely, for arbitrary regular sequences of product states, 
$\rho_{S^n}$ and $\sigma_{S^n}$, and any real numbers $W_j$ with  
$\sum_{j=1}^c \beta_j W_j < -\Delta\widetilde{F}_S$, 
there exists a bath system $B$ and a regular sequence of product states 
$\sigma_{S^nB^n}$ with $\Tr_{B^n}\sigma_{S^nB^n} = \sigma_{S^n}$, such that 
there is a work transformation 
$\rho_{S^n} \ox \tau(\und{\beta})_B^{\ox n} \rightarrow \sigma_{S^nB^n}$
with accompanying extraction of $j$-type work at rate $W_j$. 
This is illustrated in Fig.~\ref{fig:second-law}.
\end{theorem}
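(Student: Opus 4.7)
The plan is to treat the forward (bound) direction and the converse (achievability) direction separately. For the forward part I would argue essentially algebraically from the First Law and subadditivity of entropy, while the converse requires a small but careful geometric construction in the bath's phase diagram.

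First, for the bound $\sum_{j}\beta_j W_j\leq -\Delta\widetilde{F}_S$: using the First Law (Theorem~\ref{thm:first-law}) in the form $W_j=-\Delta A_{S_j}-\Delta A_{B_j}$ together with the definition (\ref{free entropy}) of $\widetilde{F}$, a short calculation yields
\begin{equation*}
  \sum_{j}\beta_j W_j \;=\; -\Delta\widetilde{F}_S-\Delta\widetilde{F}_B-\bigl(\Delta s_S+\Delta s_B\bigr),
\end{equation*}
where $\Delta s_S,\Delta s_B$ denote the entropy rate changes of system and bath. The entropy part of the First Law, $s(\{\sigma_{SB}\})=s(\{\rho_S\})+S(\tau(\und{\beta}))$, combined with subadditivity $s(\{\sigma_{SB}\})\leq s(\{\sigma_S\})+s(\{\sigma_B\})$, immediately gives $\Delta s_S+\Delta s_B\geq 0$. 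The remaining piece, $\Delta\widetilde{F}_B\geq 0$, is exactly the completely-passive property of the generalized thermal state quoted from~\cite{Guryanova2016}: $\tau(\und{\beta})$ minimizes the free entropy at fixed $\und{\beta}$. Combining these gives the claimed inequality.

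For the achievability direction, I would introduce the slack $\delta:=-\Delta\widetilde{F}_S-\sum_{j}\beta_j W_j>0$ and build the bath as $B=B_0^R$, consisting of $R$ copies of a single elementary thermal subsystem $B_0$ initially in $\tau(\und{\beta})_{B_0}^{\otimes Rn}$, with $R$ to be chosen later. The First Law forces the per-system-copy bath changes to $\Delta A_{B_j}=-\Delta A_{S_j}-W_j$ and $\Delta s_B=-\Delta s_S$. Distributing these equally among the $R$ elementary baths, the required per-elementary-bath point in $\overline{\cP}_{B_0}$ is
\begin{equation*}
  \bigl(\Tr(\tau A_{B_0,j})+\Delta A_{B_j}/R,\;S(\tau)-\Delta s_S/R\bigr).
\end{equation*}
A Taylor expansion of the max-entropy function $\und{a}\mapsto S(\tau(\und{a}))$ around $\und{a}=\Tr(\tau A_{B_0})$, using the thermodynamic Legendre identity $\nabla_{\und{a}}S(\tau(\und{a}))=\und{\beta}$ at this point together with strict concavity of the von Neumann entropy, shows that the above point lies strictly inside $\overline{\cP}_{B_0}$ as soon as $R\geq\epsilon/\delta$ for a constant $\epsilon$ determined by the local curvature. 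I would then realize $\sigma_{B^n}$ as a product of thermal states with slightly shifted inverse temperatures landing on this point, set $\sigma_{S^nB^n}:=\sigma_{S^n}\otimes\sigma_{B^n}$, and appeal to the second part of the First Law (Theorem~\ref{thm:first-law}) to produce the almost-commuting unitary realizing the work transformation with the prescribed rates $W_j$.

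The main technical obstacle I anticipate is this last geometric step: one must control not only the leading-order terms in the Taylor expansion (which cancel cleanly through the Legendre identity) but also verify that the second-order deficit of order $1/R$ per system copy is absorbed by the available slack $\delta$, so that the target point really falls inside $\overline{\cP}_{B_0}$ rather than on its max-entropy boundary, where the bath entropy could no longer be adjusted independently of the bath charges. The strict positivity of $\delta$ is exactly what permits choosing $R$ large enough, and is the reason that the converse statement requires strict rather than weak inequality in the second law.
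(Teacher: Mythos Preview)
Your proof is correct and follows the same overall strategy as the paper for both directions. The forward direction is identical: the identity $\sum_j\beta_j W_j=-\Delta\widetilde{F}_S-\Delta\widetilde{F}_B-(\Delta s_S+\Delta s_B)$, subadditivity for $\Delta s_S+\Delta s_B\geq 0$, and the free-entropy minimality of $\tau(\und{\beta})$ for $\Delta\widetilde{F}_B\geq 0$. For the converse, the paper likewise builds the bath as $R$ copies of an elementary bath, shows the target point lies in the elementary phase diagram for large $R$, and invokes the First Law.

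Two differences are worth flagging. First, your worry about second-order Taylor terms is unnecessary: the paper replaces your curvature argument by a cleaner tangent-hyperplane one. The thermal point lies on the boundary of $\overline{\cP}_{B_0}$ with tangent hyperplane $\sum_j\beta_j a_j-s=\widetilde{F}(\tau(\und{\beta}))$, and strict concavity puts the rest of the diagram in the open half-space $\sum_j\beta_j a_j-s>\widetilde{F}(\tau(\und{\beta}))$. The target points for varying $R$ sit on a straight line from the thermal point, and the condition $\delta>0$ says exactly that this line points into the open half-space; convexity then places the point inside $\overline{\cP}_{B_0}$ once $R$ is large. No quantitative estimate of the Hessian is needed.

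Second, there is a small gap in your construction: realizing $\sigma_{B^n}$ as ``thermal states with slightly shifted inverse temperatures'' cannot hit the required point. Thermal states lie on the max-entropy boundary of $\overline{\cP}_{B_0}$, whereas your own first-order computation (the Legendre identity) shows the target entropy is $\delta/R$ below the maximum at those charge values. You need a state with the prescribed charges but strictly sub-maximal entropy. The paper closes this by invoking point~5 of Lemma~\ref{lemma:phase diagram properties}, which furnishes a product state at any interior point of the phase diagram; an interpolation between a pure state and the shifted thermal state would also work.
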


\begin{proof}
We start with the first statement of the theorem. Consider the global system transformation 
$\rho_{S^n} \ox \tau(\und{\beta})_B^{\ox n} \rightarrow \sigma_{S^nB^n}$ by 
almost-commuting unitaries. 
We use the definition of work (\ref{eq: W_j definition}) and free 
entropy (\ref{free entropy}), as well as the first law, Theorem \ref{thm:first-law}, 
to get
\begin{equation}
\label{work expansion formula}
\begin{split}
    \sum_j \beta_j W_j &= -\sum_j \beta_j(\Delta A_{S_j}+\Delta A_{B_j})\\
                       &= -\Delta\widetilde{F}_S-\Delta\widetilde{F}_B-\Delta s_S -\Delta s_B .
\end{split}
\end{equation}
The second line is due to the definition in Eq. (\ref{eq:free entropy rates}). 
Now observe that 
\begin{align}\label{eq: positive Delta_SB}
  \Delta s_S +\Delta s_B 
     &=     \lim_{n \to \infty} \frac1n \bigl(S(\sigma_{S^n})-S(\rho_{S^n})\bigr) 
                                + \frac1n \bigl(S(\sigma_{B^n})-nS(\tau(\und{\beta})_B)\bigr) \nonumber\\
     &\geq  \lim_{n \to \infty} \frac1n \bigl(S(\sigma_{{SB}^n})-S(\rho_{S^n})-S(\tau(\und{\beta})_B^{\ox n})\bigr)
            = 0, 
\end{align}
where the inequality is due to sub-additivity of von Neumann entropy, and the 
final equation due to asymptotic entropy conservation. 
Further, the generalized thermal state $\tau(\und{\beta})_B$ has 
the minimum free entropy \cite{Guryanova2016}, hence $\Delta\widetilde{F}_B \geq 0$.

For the second statement of the theorem, the achievability part of the second law, 
we aim to show that there is a work transformation 
$\rho_{S^n} \ox \tau(\und{\beta})_B^{\ox n} \rightarrow \sigma_{S^n} \ox \xi_{B^n}$, 
with a suitable regular sequences of product states, 
and works $W_1,\ldots,W_c$ are extracted. 
This will be guaranteed, by the first law, Theorem \ref{thm:first-law}, 
and the AET, Theorem \ref{Asymptotic equivalence theorem}, if 
\begin{equation}\begin{split}
  s(\{\xi_{B^n}\})   &= S(\tau(\und{\beta})_B)  - \Delta s_S, \\
  a_j(\{\xi_{B^n}\}) &= \Tr \tau(\und{\beta})_B A_{B_j} - \Delta A_{S_j} - W_j \quad \text{for all } j=1,\ldots,c.
  \label{eq:state assumptions}
\end{split}\end{equation}
The left hand side here defines a point $(\und{a},s)$ in the charges-entropy 
space of the bath, and our task is to show that it lies in the phase diagram,
for which purpose we have to define the bath characteristics suitably. 
On the right hand side, 
$\bigl(\Tr\tau(\und{\beta})_B A_{B_1}, \ldots, \Tr\tau(\und{\beta})_B A_{B_c}, S(\tau(\und{\beta})_B)\bigr)$
is the point corresponding to the initial state of the bath, which due to 
its thermal nature is situated on the upper boundary of the region. 
At that point, the region has a unique tangent hyperplane, which has the equation 
$\sum_j \beta_j a_j-s = \widetilde{F}(\tau(\und{\beta})_B)$, and the phase diagram 
is contained in the half space $\sum_j \beta_j a_j-s \geq \widetilde{F}(\tau(\und{\beta})_B)$, 
corresponding to the fact that their free entropy is larger than that of the
thermal state. In fact, due to the strict concavity of the entropy, and hence 
of the upper boundary of the phase diagram, the phase diagram, with the exception 
of the thermal point $\bigl(\Tr \tau(\und{\beta})_B \underline{A_{B}}, S(\tau(\und{\beta})_B)\bigr)$ 
is contained in the open half space $\sum_j \beta_j a_j-s > \widetilde{F}(\tau(\und{\beta})_B)$. 

One of many ways to construct a suitable bath $B$ is as several ($R\gg 1$) non-interacting 
copies of an ``elementary bath'' $b$: $B=b^R$ and charges $A_{B_j}=A^{(R)}_{b_j}$, so that 
the GGS of $B$ is $\tau(\und{\beta})_B = \tau(\und{\beta})_b^{\otimes R}$. 
We claim that for large enough $R$, the left hand side of Eq. (\ref{eq:state assumptions}) 
defines a point in the phase diagram of $B$. Indeed, we can express the conditions 
in terms of $b$, assuming that we aim for a regular sequence of product states
$\xi_{b^{nR}}$:
\begin{equation}\begin{split}
  s(\{\xi_{b^{nR}}\})   &= S(\tau(\und{\beta})_b) - \frac1R \Delta s_S, \\
  a_j(\{\xi_{b^{nR}}\}) &= \Tr \tau(\und{\beta})_b A_{b_j} - \frac1R (\Delta A_{S_j} + W_j) 
                                                             \quad \text{for all } j=1,\ldots,c.
  \label{eq:R-bath-assumptions}
\end{split}\end{equation}
For all sufficiently large $R$, these points $(\und{a},s)$ are arbitrarily close to
where the bath starts off, at 
$(\und{a}_{\und{\beta}},s_{\und{\beta}}) 
 = \bigl(\Tr\tau(\und{\beta})_b A_{b_1}, \ldots, \Tr\tau(\und{\beta})_b A_{b_c}, S(\tau(\und{\beta})_b)\bigr)$,
while they always remains in the open half plane 
$\sum_j \beta_j a_j-s > \widetilde{F}(\tau(\und{\beta})_b)$. Indeed, 
they all lie on a straight line pointing from $(\und{a}_{\und{\beta}},s_{\und{\beta}})$
into the interior of that half plane. Hence, for sufficiently large $R$,
$(\und{a},s) \in \overline{\cP}$, the phase diagram of $b$, and by 
point 5 of Lemma~\ref{lemma:phase diagram properties} there does indeed exist
a regular sequence of product states corresponding to it.
\end{proof}

\section{Finiteness of the bath: tighter constraints and negative entropy}
\label{subsec:finitebath}
In the previous two sections we have elucidated the traditional statements of 
the first and second law of thermodynamics, as emerging in our resource theory. 
In particular, the second law is tight, if sufficiently large baths are allowed 
to be used. 

Here, we specifically look at the the second statement (achievability) 
of the second law in the presence of an explicitly given, finite bath $B$. It will 
turn out that typically, equality in the second law cannot be attained, only 
up to a certain loss due to the finiteness of the bath. We also discover 
a purely quantum effect whereby the system and the bath remain entangled after 
effecting a certain state transformation, allowing quantum engines to perform 
tasks impossible classically (i.e. with separable correlations). 
The question we want to address is the following refinement of the one answered 
in the previous section: 

\begin{quote}
Given regular sequences $\rho_{S^n}$ and $\sigma_{S^n}$ of product states, 
and numbers $W_j$, are there extensions $\sigma_{S^nB^n}$ of $\sigma_{S^n}$ 
forming a regular sequence of product states, such that the work 
transformation $\rho_{S^n} \ox \tau(\und{\beta})_B^{\ox n} \rightarrow \sigma_{S^nB^n}$
is feasible, with accompanying extraction of $j$-type work at rate $W_j$?
\end{quote}

To answer it, we need the following \emph{extended phase diagram}.
For a give state $\sigma_S$ of the system $S$, and a bath $B$, define the 
the following set:
\begin{equation}
  \cP^{(1)}_{|\sigma_S} := \left\{ \bigl(\Tr\xi_B A_1^{(B)},\ldots,\Tr\xi_B A_c^{(B)},S(B|S)_\xi\bigr) : 
                                   \xi_{SB} \text{ state with } \Tr_B\xi_{SB}=\sigma_S \right\},
\end{equation}
furthermore its $n$-copy version
\begin{align}
 &\cP^{(n)}_{|\sigma_{S^n}} 
    := \nonumber
    \\ & \left\{\! \bigl(\Tr\xi_{B^n} A_1^{(B^n)}\!,\!\ldots\!,\!\Tr\xi_{B^n} A_c^{(B^n)}\!\!,S(B^n|S^n)_\xi\bigr)\! : 
               \xi_{S^nB^n} \text{ state with } \Tr_{B^n}\xi_{S^nB^n}\!=\sigma_S^{\otimes n} \!\right\}\!.
\end{align}
Finally, define the \emph{conditional entropy phase diagram} as 
\begin{align}
&  \overline{\cP}_{|s_0} := \overline{\cP}^{(1)}_{|s_0}
    := \nonumber\\
&\quad    \left\{ \bigl(\und{a},s\bigr) : a_j = \Tr\xi_B A_j^{(B)},\,
                                       -\min\{s_0,S(\tau(\und{a}))\} \leq s \leq S(\tau(\und{a})) 
                                       \text{ for a state } \xi_B \right\},
\end{align}
and likewise its $n$-copy version $\overline{\cP}^{(n)}_{|ns_0}$, 
for a number $s$ (intended to be an entropy or entropy rate). 
These concepts are illustrated in Fig.~\ref{fig:extended-phase-diagram}.
The relation between the sets, and the name of the latter, are explained in the 
following lemma.

\begin{figure}[ht]
\begin{center}
  \includegraphics[width=10cm,height=8cm]{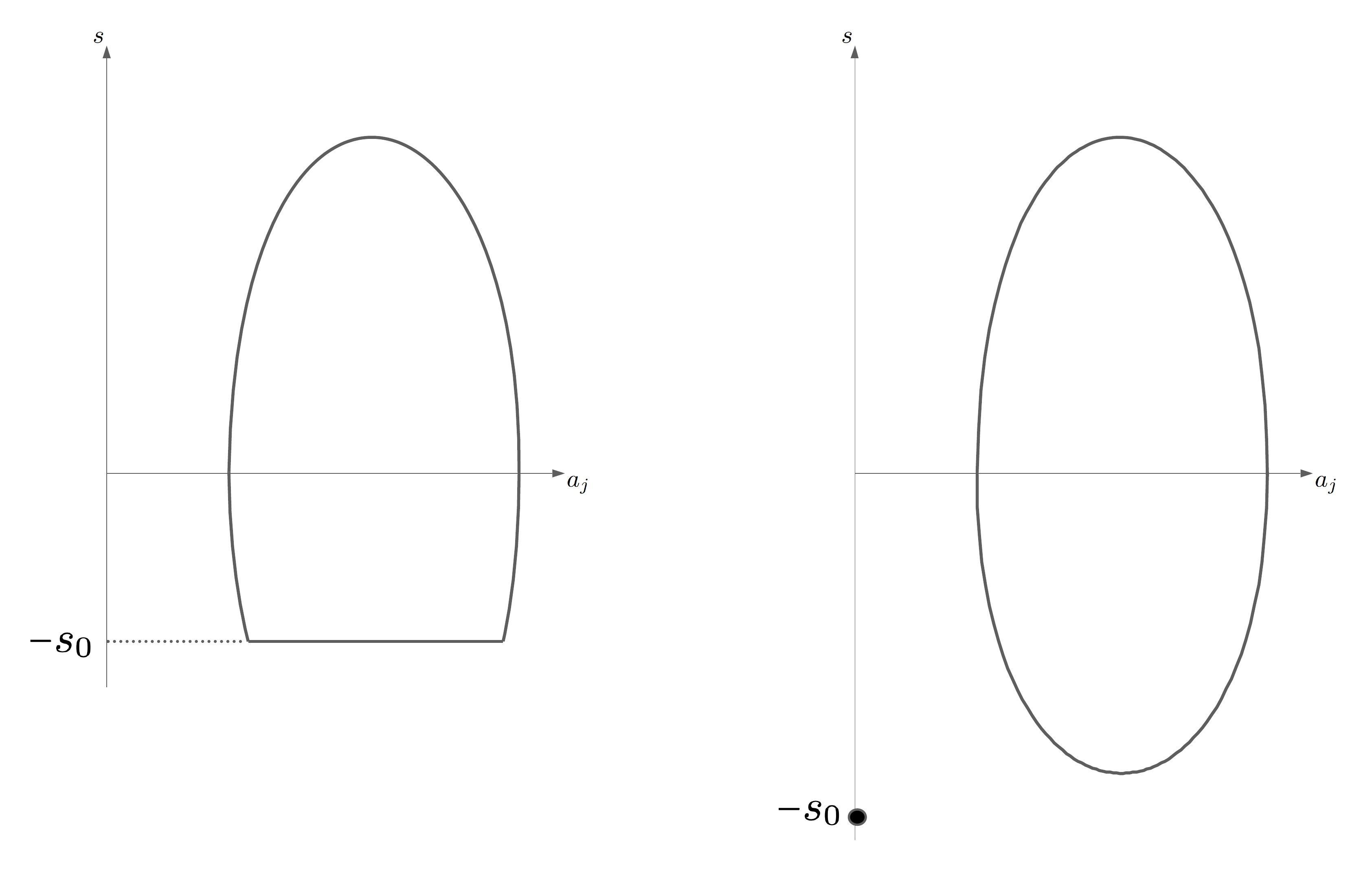}
  \end{center}
  \caption{Schematic of the extended phase diagram $\overline{\cP}_{|s_0}$. 
           Depending on the value of $s_0$, whether it is smaller or larger than 
           $\log|B|$, the diagram acquires either the left hand or the right hand 
           one of the above shapes.}
  \label{fig:extended-phase-diagram}
\end{figure}

\begin{lemma}
\label{lemma:extended-phase-diagram}
With the previous notation, we have:
\begin{enumerate}
  \item For all $k$, $\cP^{(k)}_{|\sigma_{S^k}} \subset \overline{\cP}^{(k)}_{|S(\sigma_{S^k})}$, 
        and the latter is a closed convex set.
  \item For all $k$, $\overline{\cP}^{(k)}_{|ks_0} = k \overline{\cP}^{(1)}_{|s_0}$.
  \item For a regular sequence $\{\sigma_{S^k}\}$ of product states with entropy rate
        $s_0=s(\{\sigma_{S^k}\})$, every point in $\overline{\cP}_{|s}$ is arbitrarily well 
        approximated by points in $\frac1k \cP^{(k)}_{|\sigma_{S^k}}$ for all sufficiently large $k$. 
        I.e., $\displaystyle{\overline{\cP}_{|s_0} = \lim_{k\to\infty} \frac1k \cP^{(k)}_{|S(\sigma_{S^k})}}$.        
\end{enumerate}
\end{lemma}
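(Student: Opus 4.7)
My plan is to handle the three items in increasing order of difficulty, leveraging Lemma~\ref{lemma:phase diagram properties} as much as possible.

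For \textbf{item 1}, the inclusion is a direct consequence of two standard inequalities applied to any $\xi_{S^kB^k}$ with $\Tr_{B^k}\xi=\sigma_{S^k}$: the Araki--Lieb inequality gives $S(B^k|S^k)_{\xi}\geq -S(S^k)_{\xi}=-S(\sigma_{S^k})$, and the subadditivity together with the maximum-entropy property of the GGS gives $S(B^k|S^k)_{\xi}\leq S(B^k)_{\xi}\leq S(\tau(\und{a}))$ for the corresponding charge values $\und{a}$. For convexity of $\overline{\cP}^{(k)}_{|s_0}$, I would note that it is cut out by the constraint that $\und{a}$ lies in the (convex) set of achievable bath charge expectations, together with the inequalities $s\leq S(\tau(\und{a}))$ (hypograph of a concave function, hence convex) and $s\geq -\min\{s_0,S(\tau(\und{a}))\}$ (intersection of the half-space $s\geq -s_0$ with the epigraph of the convex function $-S(\tau(\und{a}))$). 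Closedness follows since all three defining quantities are continuous in $\und{a}$ and $s$.

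\textbf{Item 2} is a straightforward scaling argument mirroring item 3 of Lemma~\ref{lemma:phase diagram properties}: on $k$ copies the GGS with charge rate $\und{a}$ is $\tau(\und{a})^{\otimes k}$, whose entropy is $kS(\tau(\und{a}))$, and the achievable total charges scale linearly. Plugging this into the defining inequalities of $\overline{\cP}^{(k)}_{|ks_0}$ and $k\overline{\cP}^{(1)}_{|s_0}$ shows that they coincide.

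\textbf{Item 3} is the real content. For a point $(\und{a},s)\in\overline{\cP}_{|s_0}$ I would split on the sign of $s$. If $s\geq 0$, the point already lies in the ordinary phase diagram $\overline{\cP}$ (since $s\leq S(\tau(\und{a}))$), so by item 5 of Lemma~\ref{lemma:phase diagram properties} there is, for large enough $k$, a product state $\xi_{B^k}=\xi_{B_1}\otimes\cdots\otimes\xi_{B_k}$ realizing it on the bath to arbitrary accuracy, and the product extension $\xi_{S^kB^k}:=\sigma_{S^k}\otimes\xi_{B^k}$ yields $S(B^k|S^k)=S(B^k)\approx ks$ with the correct charge rates. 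The harder case is $s<0$, which requires an entangled extension. Here I would exploit that $|s|\leq\min\{s_0,S(\tau(\und{a}))\}$: approximate $\sigma_{S^k}$ by the maximally mixed state on its $\delta$-typical subspace $\mathcal{T}\subseteq S^k$ of dimension $D\approx 2^{ks_0}$, and virtually factor $\mathcal{T}\cong\mathbb{C}^{D_1}\otimes\mathbb{C}^{D_2}$ with $D_1\approx 2^{k(s_0+s)}$ and $D_2\approx 2^{k|s|}$. Simultaneously, since $|s|\leq S(\tau(\und{a}))$, choose a subspace $\mathcal{B}\subseteq B^k$ of dimension $D_2$ inside the typical subspace of $\tau(\und{a})_B^{\otimes k}$; a Schur/typical-subspace argument ensures that the maximally mixed state on $\mathcal{B}$ has charge expectations close to $k\und{a}$. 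Finally, define $\xi_{S^kB^k}$ by identifying the second factor $\mathbb{C}^{D_2}$ of the typical subspace of $\sigma_{S^k}$ with $\mathcal{B}$ via a maximally entangled state, while leaving the first factor $\mathbb{C}^{D_1}$ maximally mixed. A direct computation then gives $S(S^kB^k)\approx k(s_0-|s|)=k(s_0+s)$, hence $S(B^k|S^k)\approx ks$, while the marginals match $\sigma_{S^k}$ and have bath charges close to $k\und{a}$.

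The \textbf{main obstacle} I anticipate is making the $s<0$ construction quantitatively rigorous: one has to simultaneously control (i) the trace-distance error between $\sigma_{S^k}$ and its typicalization, (ii) the charge deviation of the maximally mixed state on the chosen subspace $\mathcal{B}\subset B^k$ from the target $k\und{a}$, and (iii) the resulting error in the conditional entropy. The standard tools (gentle measurement, typical-subspace concentration, continuity of conditional entropy such as Alicki--Fannes) should suffice, but care is needed to show that all three errors simultaneously vanish as $k\to\infty$ while $D_1,D_2$ scale as prescribed. Once this is in place, the density of approximants in $\overline{\cP}_{|s_0}$ follows by convexity (item~1) and a routine net argument over the compact set $\overline{\cP}_{|s_0}$.
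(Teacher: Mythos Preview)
Your handling of items 1 and 2 is essentially fine (for item 1 you should also note the second lower bound $S(B^k|S^k)_\xi \geq -S(B^k)_\xi \geq -kS(\tau(\und a))$, needed because the defining constraint involves $\min\{s_0,S(\tau(\und a))\}$; this follows by purifying and strong subadditivity, as the paper does). For item 3 with $s\geq 0$ your product-state argument also matches the paper.

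The genuine gap is in item 3 for $s<0$. Your key step ``approximate $\sigma_{S^k}$ by the maximally mixed state on its $\delta$-typical subspace'' is \emph{not} valid as a trace-distance approximation. Within the typical subspace the eigenvalues of $\sigma_{S^k}$ still vary by a factor $2^{\Theta(k\delta)}$ (or $2^{\Theta(\sqrt{k})}$ for the standard choice), and one cannot take $\delta\to 0$ fast enough to tame this while retaining high weight; concretely, for a product state with nondegenerate single-copy spectrum one finds $\frac12\|\sigma_{S^k}-\Pi/D\|_1 = \Theta(1)$. Hence your constructed state has $S^k$-marginal equal to $\Pi/D$, not $\sigma_{S^k}$, and the Uhlmann correction you allude to in point (i) of your ``main obstacle'' cannot repair this: Uhlmann requires the marginals to be close in fidelity/trace distance. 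A second, lesser issue is the choice of $\mathcal{B}\subset B^k$: an \emph{arbitrary} $D_2$-dimensional subspace of the typical subspace of $\tau(\und a)^{\otimes k}$ need not have charge values near $k\und a$; you would need a random-subspace/concentration argument, not a ``Schur/typical-subspace'' one.

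The paper avoids both problems by a different mechanism: it starts from the product state $\vartheta_{S^kB^k}=\sigma_{S^k}\otimes\tau(\und a)_B^{\otimes k}$, purifies it (tensoring in auxiliary entanglement $\phi_{CR'}^{\otimes k}$ to adjust entropies), and applies a Haar-random rank-one projection on the purifying system. Quantum state merging (Lemma~\ref{lemma:marge}) guarantees that with high probability the post-projection pure state has its $S^k$-marginal $\epsilon$-close to $\sigma_{S^k}$; Uhlmann's theorem then produces an extension with $S^k$-marginal \emph{exactly} $\sigma_{S^k}$. The bath charges are controlled either by a second application of merging (case $s_0\geq S(\tau(\und a))$, forcing $\xi_{B^k}\approx\tau(\und a)^{\otimes k}$) or by a concentration bound on the random projector (case $s_0<S(\tau(\und a))$, via \cite[Lemma~III.5]{Hayden2006}). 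The conditional entropy is then read off from the rank of the post-projection state. In short, the missing idea is to use merging/decoupling to preserve the $S^k$-marginal rather than the typical-subspace flattening, which does not.
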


\begin{proof}
1. We only have to convince ourselves that for a state
$\xi_{S^kB^k}$ with $\Tr_{B^k}\xi_{S^kB^k}=\sigma_{S^k}$, 
\[
  - \min\{S(\sigma_{S^k}),kS(\tau(\und{a}))\} \leq S(B^k|S^k)_\xi \leq k S(\tau(\und{a})), 
\]
where $\und{a}=(a_1,\ldots,a_c)$ with $a_i = \frac1k \Tr\xi_{B^k} A_i^{(B^k)}$.
The upper bound follows from subadditivity, since 
$S(B^k|S^k)_\xi \leq S(B^k)_\xi \leq k S(\tau(\und{a}))$.
The lower bound consists of two inequalities: first, by purifying $\xi$ to 
a state $\ket{\phi} \in S^kB^kR$ and strong subadditivity, 
$S(B^k|S^k)_\xi \geq S(B^k|S^kR)_\phi = - S(B^k)_\xi \geq - k S(\tau(\und{a}))$.
Secondly, $S(B^k|S^k)_\xi \geq -S(S^k)_\xi = -S(\sigma_{S^k})$.

2. Follows easily from the definition. 

3. It is enough to show that the points of the minimum entropy diagram 
\[
  \overline{\cP}_{\min|s} 
    := \left\{\bigl(\und{a},-\min\{s_0,S(\tau(\und{a}))\}\bigr) 
              : \Tr \xi_B A_j^{(B)} = a_j \text{ for a state } \xi_B \right\}
\]
can be approximated as claimed by an admissible $k$-copy state $\xi_{S^kB^k}$. This 
is because the maximum entropy diagram $\overline{\cP}_{\max}^{(k)}$ is realized 
by states $\vartheta_{S^kB^k} := \sigma_{S^k}\ox\tau(\und{a})_B^{\ox k}$, and by 
interpolating the states, i.e. $\lambda\xi + (1-\lambda)\vartheta$ for $0\leq\lambda\leq 1$,
we can realize the same charge values $\und{a}$ with entropies in the 
whole interval $[S(B^k|S^k)_\xi;k S(\tau(\und{a}))]$.

The approximation of $\overline{\cP}_{\min|s}$ can be proved invoking 
results from quantum Shannon theory, specifically quantum state merging, 
the form of which we need here is stated below as a Lemma. 
For this, consider a tuple $\und{a} \in \overline{\cP}_0$ and a purification 
$\ket{\Psi} \in S^kB^kR^k$ of the state $\vartheta_{S^kB^k} = \sigma_{S^k}\ox\tau(\und{a})_B^{\ox k}$, 
which can be chosen in such a way as to be a product state itself: 
$\ket{\Psi} = \ket{\Psi_1}_{S_1B_1R_1}\ox\cdots\ox\ket{\Psi_k}_{S_kB_kR_k}$. 
Now we distinguish two cases, depending on which of the entropies 
$S(\sigma_{S^k})$ and $kS\bigl(\tau(\und{a})_B\bigr)$ is the smaller.

\begin{enumerate}[{(i)}]
  \item $S(\sigma_{S^k}) \geq S\bigl(\tau(\und{a})_B\bigr)$: We shall 
    construct $\xi_{S^kB^k}$ in such a way that $\xi_{S^k} = \sigma_{S^k}$ and 
    $\xi_{B^k} \approx \tau\bigl(\und{a}\bigr)_B^{\otimes k}$. To this end, choose 
    a pure state $\phi_{CR'}$ with entanglement entropy 
    $S(\phi_C) = \frac1k S(\sigma_{S^k})-S\bigl(\tau(\und{a})_B\bigr) + \frac12 \epsilon$,
    and consider the state 
    $\widetilde{\Psi}^{S^kB^kC^kR^k{R'}^k} = \Psi_{S^kB^kR^k}\ox\phi_{CR'}^{\ox k}$.
    Now we apply state merging (Lemma \ref{lemma:marge}) twice to 
    this state (which is a tensor product of $k$ systems), with a random 
    rank-one projector $P$ on the combined system $R^k{R'}^k$: 
    first, by splitting the remaining parties $S^k : B^kC^k$, 
    and second by splitting them $B^k : S^kC^k$. 
    By construction, in both bipartitions it is the solitary system 
    ($S^k$ and $B^k$, resp.) that has the smaller entropy by at least $\frac12\epsilon k$, 
    showing that the post-measurement state $\widetilde{\xi}(P)_{S^kB^kC^k}$ with 
    high probability approximates the marginals of $\vartheta_{S^kB^k}$ on $S^k$ 
    and on $B^k$ simultaneously.
    Choose a typical subspace projector $\Pi$ of $\phi_C^{\ox k}$ with 
    $\log \rank \Pi \leq S(\sigma_{S^k})-k S\bigl(\tau(\und{a})_B\bigr) + \epsilon k$,
    and let
    \[
      \ket{\xi(P)}_{S^kB^kC^k} := \frac1c (\1_{S^kB^k}\Pi_{C^k})\ket{\widetilde{\xi}(P)},
    \]
    with a normalization constant $c$. 
    Merging and properties of the typical subspace imply that for sufficiently large $k$, 
    \begin{align}
      \label{eq:xiP-S}
      \frac12 \left\| \xi(P)_{S^k} - \sigma_{S^k} \right\|_1            &\leq \epsilon, \\
      \label{eq:xiP-B}
      \frac12 \left\| \xi(P)_{B^k} - \tau(\und{a})_B^{\ox k} \right\|_1 &\leq \epsilon.
    \end{align}
    Now, we invoke Uhlmann's theorem applied to purifications of $\sigma_{S^k}$ and  
    of $\xi(P)_{S^kB^k}$, together with the well-known relations between fidelity 
    and trace norm applied to Eq.~(\ref{eq:xiP-S}), 
    to obtain a state $\xi_{S^kB^k}$ with $\xi_{S^k} = \sigma_{S^k}$ and 
    $\frac12 \left\| \xi(P)_{S^kB^k} - \xi_{S^kB^k} \right\|_1 \leq \sqrt{\epsilon(2-\epsilon)}$,
    thus by Eq.~(\ref{eq:xiP-B})
    \[
      \frac12 \left\| \xi_{B^k} - \tau(\und{a})_B^{\ox k} \right\|_1 \leq \epsilon + \sqrt{\epsilon(2-\epsilon)}.
    \]
    From the latter bound it follows that
    \[
      \left| \frac1k \tr \xi_{B^k}A_j^{(B^k)} - a_j \right| 
               \leq \|A_{B_j}\| \left(\epsilon + \sqrt{\epsilon(2-\epsilon)}\right).
    \]
    It remains to bound the conditional entropy: 
    \[\begin{split}
      \frac1k& S(B^k|S^k)_\xi \\
      &=    \frac1k S\bigl(\xi_{S^kB^k}\bigr) - \frac1k S(\xi_{S^k})           \\
                         &\leq \frac1k S\bigl(\xi(P)_{S^kB^k}\bigr) \!- \!\frac1k S(\sigma_{S^k}) 
                                  \!+\! \left(\! \epsilon \!+\! \sqrt{\!\epsilon(2\!-\!\epsilon)\!} \right)\log(|S||B|) 
                                  \!+\! h\left(\! \epsilon \!+\! \sqrt{\!\epsilon(2\!-\!\epsilon)\!} \right)           \\
                         &\leq \frac1k \log \rank\Pi \!-\! \frac1k S(\sigma_{S^k}) 
                                  \!+\! \left(\! \epsilon \!+ \!\sqrt{\!\epsilon(2\!-\!\epsilon)\!} \right)\log(|S||B|) 
                                  \!+\! h\left( \!\epsilon \!+\! \sqrt{\!\epsilon(2-\epsilon)\!} \!\right)           \\
                         &\leq \frac1k \left( S(\sigma_{S^k}) - kS\bigl(\tau(\und{a})\bigr) \right) 
                                  - \frac1k S(\sigma_{S^k}) 
                                  + \left( 2\epsilon + \sqrt{\epsilon(2-\epsilon)} \right)\log(|S||B|)\\ 
                               &\quad \quad \quad \quad \quad \quad \quad   + h\left( \epsilon + \sqrt{\epsilon(2-\epsilon)} \right)            \\
                         &=    -S\bigl(\tau(\und{a})\bigr) 
                                  + \left( 2\epsilon + \sqrt{\epsilon(2-\epsilon)} \right)\log(|S||B|) 
                                  + h\left( \epsilon + \sqrt{\epsilon(2-\epsilon)} \right) ,
    \end{split}\]
    where in the second line we have used the Fannes inequality on the continuity 
    of the entropy \cite{Fannes1973,Audenaert2007}, with the binary entropy
    $h(x)=-x\log x-(1-x)\log(1-x)$;
    in the third line that $\xi(P)_{S^kB^k}$ has rank at most $\rank \Pi$;
    and in the fourth line the upper bound on the latter rank by construction. 

 \item $S(\sigma_{S^k}) < S\bigl(\tau(\und{a})_B \bigr)$: We shall 
    construct $\xi_{S^kB^k}$ such that $\xi_{S^k} = \sigma_{S^k}$ and 
    $\tr\xi_{B^k}A_j^{(B^k)} \approx \tr\tau\bigl(\und{a}\bigr)_B A_{B_j}$ for 
    all $j=1,\ldots,c$. Here, choose a pure state $\phi_{CR'}$ with entanglement entropy 
    $S(\phi_C) = \epsilon$, and define
    $\widetilde{\Psi}^{S^kB^kC^kR^k{R'}^k} = \Psi_{S^kB^kR^k}\ox\phi_{CR'}^{\ox k}$.
    Now we apply state merging (Lemma \ref{lemma:marge}) to
    this state (which is a tensor product of $k$ systems), with a random 
    rank-one projector $P$ on the combined system $R^k{R'}^k$,
    by splitting the remaining parties $S^k : B^kC^k$, which ensures that
    $S^k$ has the smaller entropy by at least $\epsilon k$, 
    showing that the post-measurement state $\widetilde{\xi}(P)_{S^kB^kC^k}$ with 
    high probability approximates the marginal of $\vartheta_{S^kB^k}$ on $S^k$.
    Proceed as before with a typical subspace projector $\Pi$ of $\phi_C^{\ox k}$ 
    such that  
    $\log \rank \Pi \leq S(\sigma_{S^k})-k S\bigl(\tau(\und{a})_B\bigr) + \epsilon k$,
    and let
    \(
      \ket{\xi(P)}_{S^kB^kC^k} := \frac1c (\1_{S^kB^k}\Pi_{C^k})\ket{\widetilde{\xi}(P)},
    \)
    with a normalization constant $c$. 
    Merging and properties of the typical subspace thus imply that for sufficiently large $k$, 
    \begin{equation}
      \label{eq:xiPt-S}
      \frac12 \left\| \xi(P)_{S^k} - \sigma_{S^k} \right\|_1 \leq \epsilon. 
    \end{equation}
    Next we need to look at the charge values of $\xi(P)_{B^k}$. 
    Note that the expectation $\EE_P \xi(P)_{B^k}$ is approximately 
    equal to $\EE_P \widetilde{\xi}(P)_{B^k} = \tau(\und{a})_B^{\ox k}$.
    It follows from \cite[{Lemma III.5}]{Hayden2006}, that if $k$ is sufficiently 
    large, then with high probability 
    \begin{equation}
      \label{eq:xiPt-B-A}
      \left| \tr \bigl(\xi(P)_{B^k} - \tau(\und{a})_B^{\ox k}\bigr) A_j^{(B^k)} \right| \leq \|A_{B_j}\| \epsilon 
                      \quad \text{for all } j=1,\ldots,c. 
    \end{equation}
    So we just focus on a good instance of $P$, where both 
    Eqs.~(\ref{eq:xiPt-S}) and (\ref{eq:xiPt-B-A}) hold. Now we proceed as
    in the first case to find a state $\xi_{S^kB^k}$ with $\xi_{S^k} = \sigma_{S^k}$ and 
    $\frac12 \left\| \xi(P)_{S^kB^k} - \xi_{S^kB^k} \right\|_1 \leq \sqrt{\epsilon(2-\epsilon)}$, 
    using Uhlmann's theorem. Thus, as before we find
    \[
      \left| \frac1k \tr \xi_{B^k}A_j^{(B^k)} - a_j \right| 
                 \leq \|A_{B_j}\| \left(\epsilon + \sqrt{\epsilon(2-\epsilon)}\right).
    \]
    Regarding the conditional entropy, we have quite similarly as before,
    \[\begin{split}
      \frac1k &S(B^k|S^k)_\xi \\
      &=    \frac1k S\bigl(\xi_{S^kB^k}\bigr) - \frac1k S(\xi_{S^k})           \\
                             &\leq \frac1k S\bigl(\xi(P)_{S^kB^k}\bigr) \!- \!\frac1k S(\sigma_{S^k}) 
                                  \!+\! \left( \!\epsilon \!+ \!\sqrt{\!\epsilon(2\!-\!\epsilon)\!} \right)\log(|S||B|) 
                                  \!+\! h\left(\! \epsilon \!+\! \sqrt{\!\epsilon(2\!-\!\epsilon)\!} \!\right)           \\
                             &\leq \frac1k \log 2^{\epsilon k} - \frac1k S(\sigma_{S^k}) 
                                  + \left( \epsilon + \sqrt{\epsilon(2-\epsilon)} \right)\log(|S||B|) 
                                  + h\left( \epsilon + \sqrt{\epsilon(2-\epsilon)} \right)           \\
                             &\leq -\frac1k S(\sigma_{S^k}) 
                                  + \left( 2\epsilon + \sqrt{\epsilon(2-\epsilon)} \right)\log(|S||B|) 
                                  + h\left( \epsilon + \sqrt{\epsilon(2-\epsilon)} \right).
    \end{split}\]
\end{enumerate}

Since in both cases we knew the conditional entropy to be always
$\geq - \frac1k \min\left\{ S(\sigma_{S^k}),k S\bigl(\tau(\und{a})\bigr) \right\}$,
this concludes the proof.
\end{proof}

\medskip
\begin{lemma}[Quantum state merging \cite{Horodecki2007,SM_nature}]
\label{lemma:marge}
Given a pure product state 
$\Psi_{A^nB^nC^n}=(\Psi_1)_{A_1B_1C_1}\ox\cdots\ox(\Psi_n)_{A_nB_nC_n}$,
such that $S(\Psi_{A^n})-S(\Psi_{B^n}) \geq \epsilon n$, consider a Haar 
random rank-one projector $P$ on $C^n$. Then, for sufficiently large $n$ 
it holds except with arbitrarily small probability that the post-measurement state 
\[
  \psi(P)_{A^nB^n} = \frac{1}{\tr\Psi_{C^n}P}\tr_{C^n}\Psi(\1_{A^nB^n}\ox P)
\]
satisfies $\frac12\| \psi(P)-\Psi_{A^nB^n}\|_1 \leq \epsilon$.
\hfill$\blacksquare$
\end{lemma}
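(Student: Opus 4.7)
The plan is to follow the random-measurement decoupling proof of quantum state merging due to Horodecki-Oppenheim-Winter and Abeyesinghe et al. First I would reduce to a flat-spectrum setting by composing the i.i.d.\ state $\Psi$ with the $\delta$-typical subspace projectors of $A^n$, $B^n$ and $C^n$; by the gentle operator lemma (Lemma~\ref{Gentle Operator Lemma}) this perturbation costs $O(\sqrt\delta)$ in trace norm, and on the typical support the marginals $\Psi_{A^n}$, $\Psi_{B^n}$, $\Psi_{C^n}$ have spectra exponentially close to uniform on subspaces of log-dimensions $nS(A)$, $nS(B)$, $nS(C)$ respectively.

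The first Haar moment supplies the correct target: since $\mathbb{E}_P P = \1_{C^n}/d_{C^n}$ with $d_{C^n}=|C|^n$, linearity gives
\[
  \mathbb{E}_P\!\left[\tr(\Psi P)\,\psi(P)_{A^nB^n}\right]
    = \tr_{C^n}\!\left[\Psi\bigl(\1_{A^nB^n}\!\otimes \mathbb{E}_P P\bigr)\right]
    = \frac{1}{d_{C^n}}\,\Psi_{A^nB^n},
\]
together with $\mathbb{E}_P \tr(\Psi P) = 1/d_{C^n}$. To upgrade this to a concentration statement I would then use the second moment $\mathbb{E}_P[P\otimes P]=(\1+F_{C^nC^{\prime n}})/[d_{C^n}(d_{C^n}+1)]$, and the swap trick $\tr[(X\otimes Y)F]=\tr(XY)$, to bound the Hilbert-Schmidt variance of $\tr(\Psi P)\psi(P)_{A^nB^n}$ around $d_{C^n}^{-1}\Psi_{A^nB^n}$. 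On the flat typical support this variance scales like $2^{-n[S(A)+S(B)-S(C)]}$; writing $S(C)=S(AB)\le S(A)+S(B)$ and invoking the hypothesis $S(A)-S(B)\ge\epsilon n$ (equivalently $S(B^nC^n)-S(A^nC^n)\ge\epsilon n$, by purity of $\Psi$) yields exponential decay in $n$ large enough to absorb the dimension factor $\sim 2^{nS(C)}$ incurred when Cauchy-Schwarz converts the $L_2$ control to an $L_1$ one. Markov's inequality over the Haar measure then yields the claimed trace-norm bound for all $P$ outside a set of exponentially small probability.

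The main obstacle is a rank mismatch: $\psi(P)_{A^nB^n}$ is rank one whereas $\Psi_{A^nB^n}$ generically has rank $\approx 2^{nS(C)}$, so a literal trace-norm inequality cannot hold in full generality. The statement must therefore be interpreted in the merging sense -- $\ket{\psi(P)}_{A^nB^n}$ is a good purification, on $B^n$, of a state on $A^n$ close to $\Psi_{A^n}$. The decoupling analysis above directly gives the marginal bound $\tfrac12\|\psi(P)_{A^n}-\Psi_{A^n}\|_1\le\epsilon$ (and symmetrically on $B^n$), and Uhlmann's theorem supplies a local isometry on $B^n$ identifying $\ket{\psi(P)}_{A^nB^n}$ with a canonical purification of $\Psi_{A^n}$. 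This is precisely the form in which the lemma is applied in the proof of Lemma~\ref{lemma:extended-phase-diagram}, where only the marginal closeness is used.
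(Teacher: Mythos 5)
The paper does not prove this lemma --- it is cited from~\cite{Horodecki2007,SM_nature} and closed with $\blacksquare$ --- so your write-up stands on its own. Your high-level route (typicality reduction, first and second Haar moments, swap trick, Markov over the Haar measure) is the standard decoupling machinery behind state merging, and you are right to flag the rank mismatch: as printed the lemma asks a rank-one state $\psi(P)_{A^nB^n}$ to be close in trace norm to the mixed marginal $\Psi_{A^nB^n}$, which is impossible unless $\Psi_{C^n}$ is nearly pure.

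Your proposed repair, however, is wrong, and the error is the crux of the lemma. You assert that the decoupling argument gives $\tfrac12\|\psi(P)_{A^n}-\Psi_{A^n}\|_1\le\epsilon$ ``and symmetrically on $B^n$.'' Both cannot hold under the lemma's hypothesis. Because $\ket{\psi(P)}_{A^nB^n}$ is pure, the two marginals $\psi(P)_{A^n}$ and $\psi(P)_{B^n}$ share the same spectrum and hence have \emph{equal} von Neumann entropy; if both were within $\epsilon$ of the marginals of $\Psi$, then Fannes continuity (Lemma~\ref{Fannes-Audenaert inequality}) would force $S(\Psi_{A^n})\approx S(\Psi_{B^n})$ up to $O(n\epsilon)$, flatly contradicting the assumed gap $S(\Psi_{A^n})-S(\Psi_{B^n})\ge\epsilon n$. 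The correct conclusion is asymmetric, and the hypothesis is there precisely to break the symmetry: only the \emph{smaller-entropy} marginal is reproduced, $\tfrac12\|\psi(P)_{B^n}-\Psi_{B^n}\|_1\le\epsilon$. In state-merging terms, $C^n$ is the measured system, $B^n$ is the reference, $A^n$ plays Bob; by purity, $S(\Psi_{A^n})>S(\Psi_{B^n})$ is $S(C^n|B^n)>0$, which is exactly the condition for a random rank-one measurement on $C^n$ to decouple from $B^n$. This matches how the lemma is invoked in the proof of Lemma~\ref{lemma:extended-phase-diagram}, where it is explicitly the ``solitary system'' with the smaller entropy whose marginal is preserved. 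Accordingly, your Uhlmann step is also reversed: the isometry lives on $A^n$ (not $B^n$) and identifies $\ket{\psi(P)}_{A^nB^n}$ with a purification of $\Psi_{B^n}$, not of $\Psi_{A^n}$.
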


\medskip
\begin{remark}\normalfont
While we have seen that the upper boundary of the extended phase diagram 
$\overline{\cP}^{(k)}_{|S(\sigma_{S^k})}$ is exactly realized by points 
in $\cP^{(k)}_{|\sigma_{S^k}}$, namely those corresponding to the tensor product 
states $\sigma_{S^k} \ox \tau(\und{a})_B^{\ox k}$, 
it seems unlikely that we can achieve the analogous thing for the lower boundary: 
this would entail finding, for every (sufficiently large) $k$ a tensor product
state, or a block tensor product state, $\xi_{S^kB^k}$ with prescribed charge vector 
$\und{a}$ on $B^k$, and $S(B^k|S^k)_\xi = -\min\{kS\bigl(\tau(\und{a})\bigr),S(\sigma_{S^k})\}$. 

Now, for concreteness, consider the case that 
$kS\bigl(\tau(\und{a})\bigr) \leq S(\sigma_{S^k})$, so that the conditional entropy 
aimed for is $S(B^k|S^k)_\xi = -k S\bigl(\tau(\und{a})_B\bigr)$, which is the value 
of a purification of $\tau(\und{a})_B^{\ox k}$. In particular, it would mean that 
$S(\xi_{B^k}) = k S\bigl(\tau(\und{a})_B\bigr)$, and so -- recalling the
charge values and the maximum entropy principle -- it would follow that 
$\xi_{B^k} = \tau(\und{a})_B^{\ox k}$. 
However, from the equality conditions in strong subadditivity \cite{Hayden2004}, 
this in turn would imply that $\xi_{S^kB^k}$ is a probabilistic mixture of 
purifications of $\tau(\und{a})_B^{\ox k}$ whose restrictions to $S^k$ 
are pairwise orthogonal. This would clearly put constraints on the spectrum 
of $\sigma_{S^k}$ that are not generally met. 

In the other case that $kS\bigl(\tau(\und{a})\bigr) > S(\sigma_{S^k})$, the
conditional entropy should be $S(B^k|S^k)_\xi = - S(\sigma_{S^k})$, and 
since $\xi_{S^k} = \sigma_{S^k}$, this would necessitate a pure state 
$\xi_{S^kB^k}$. Looking at the proof of Lemma \ref{lemma:extended-phase-diagram}, 
however, we see that it leaves quite a bit of manoeuvring space, so it
may or may not be possible to satisfy all charge constraints
$\tr \xi_{B^k}A_j^{(B^k)} = a_j$ ($j=1,\ldots,c$).
\end{remark}

\medskip
Coming back to our question, if a work transformation 
$\rho_{S^n} \ox \tau(\und{\beta})_B^{\ox n} \rightarrow \sigma_{S^nB^n}$ is
feasible for regular sequences on the left hand side, 
by the first law this implies that 
\begin{align*}
  s(\{\sigma_{S^nB^n}\}) &= s(\{\rho_{S^n}\}) + S(\tau(\und{\beta})) \text{ and} \\ 
  W_j                    &= -\Delta A_{S_j}-\Delta A_{B_j}  \\
                         &= a_j(\{\rho_{S^n}\}) - a_j(\{\sigma_{S^n}\})
                            + a_j(\{\tau(\und{\beta})_{B^n}\}) - a_j(\{\sigma_{B^n}\}). 
\end{align*}
When $\sigma_{S^n}$ and the $W_j$ are given, this constrains the possible 
states $\sigma_{S^nB^n}$ as follows: for each $n$, 
\begin{align*}
  \frac1n S(B^n|S^n)_\sigma              &\approx S(\tau(\und{\beta})) - \Delta s_S, \\
  \frac1n \Tr \sigma_{B^n} A_{B_j}^{(n)} &\approx \Tr \tau(\und{\beta})_B A_{B_j} 
                                                  - \Delta A_{S_j} - W_j,\quad \text{for all } j=1,\ldots,c.
\end{align*}
Since by Lemma \ref{lemma:extended-phase-diagram} the left hand sides converge to the 
components of a point in $\overline{\cP}_{|s(\{\sigma_{S^n}\})}$, meaning that a necessary 
condition for the feasibility of the work transformation in question is that 
\begin{equation}\begin{split}
  \label{eq:conditional-point}
  (\und{a},t) \in \overline{\cP}_{|s(\{\sigma_{S^n}\})}, \text{ with } 
  a_j         &:=  \Tr \tau(\und{\beta})_B A_{B_j} - \Delta A_{S_j} - W_j, \\
  t           &:=  S(\tau(\und{\beta})) - \Delta s_S.
\end{split}\end{equation}
Again by Lemma \ref{lemma:extended-phase-diagram}, this is equivalent to 
all $a_j$ to be contained in the set of joint quantum expectations of the 
observables $A_{B_j}$, and 
\[
  -\min\left\{ s(\{\sigma_{S^n}\}),S\bigl(\tau(\und{a})\bigr) \right\} \leq t \leq S\bigl(\tau(\und{a})\bigr).
\]
The following theorem shows that this is also sufficient, when we allow blockings 
of the asymptotically many systems. 

\begin{figure}[ht]
  \begin{center}
    \includegraphics[width=10cm,height=8cm]{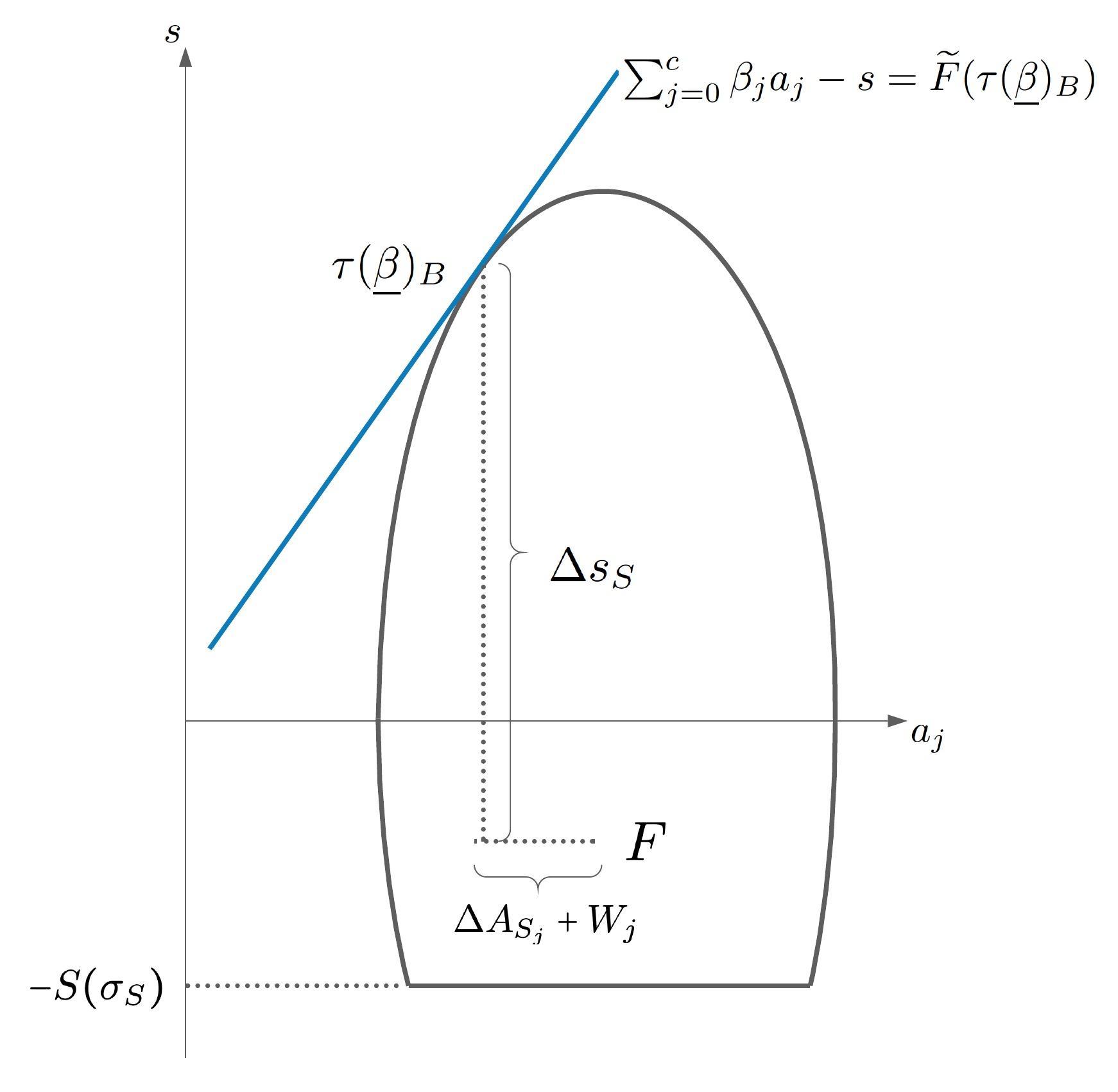}
  \end{center}
  \caption{State change of the bath for a given work transformation under the extraction of
           $j$-type work $W_j$, viewed in the extended phase diagram of the bath, which 
           initially is in the thermal state $\tau(\und{\beta})_B$, the blue line at the 
           corresponding point in the diagram representing the tangent hyperplane of the 
           diagram. The final states $\{\sigma_{S^nB^n}\}$ give rise to the point $F$ in 
           the extended diagram, whose charge values are those of $\{\sigma_{B^n}\}$, 
           while the entropy is $\frac1n S(B^n|S^n)_\sigma$.}
  \label{fig:second-law-finite}
\end{figure} 

\begin{theorem}[Second Law with fixed bath]
\label{thm:second-law-finite-bath}
For arbitrary regular sequences $\rho_{S^n}$ and $\sigma_{S^n}$ of product states, 
a given bath $B$, and any real numbers $W_j$, 
if there exists a regular sequence of block product states 
$\sigma_{S^nB^n}$ with $\Tr_{B^n}\sigma_{S^nB^n} = \sigma_{S^n}$, such that 
there is a work transformation 
$\rho_{S^n} \ox \tau(\und{\beta})_B^{\ox n} \rightarrow \sigma_{S^nB^n}$
with accompanying extraction of $j$-type work at rate $W_j$,
then Eq. (\ref{eq:conditional-point}) defines a point
$(\und{a},t) \in \overline{\cP}_{|s(\{\sigma_{S^n}\})}$.

Conversely, assuming additionally that $\sigma_{S^n} = \sigma_S^{\otimes n}$ is an i.i.d.~state,
if Eq. (\ref{eq:conditional-point}) defines a point
$(\und{a},t) \in \overline{\cP}_{|S(\sigma_S)}^0$ in the interior of the 
extended phase diagram, then for every $\epsilon>0$
there is a work transformation 
$\rho_{S^n} \ox \tau(\und{\beta})_B^{\ox n} \rightarrow \sigma_{S^nB^n}$
with block product states $\sigma_{S^nB^n}$ such that 
$\Tr_{B^n}\sigma_{S^nB^n} = \sigma_{S^n}$, and with accompanying 
extraction of $j$-type work at rate $W_j\pm\epsilon$.
This is illustrated in Fig.~\ref{fig:second-law-finite}.
\end{theorem}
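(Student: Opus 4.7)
For the necessity direction, the plan is to translate the hypothesized work transformation directly into a statement about a point in the conditional entropy phase diagram via the first law. If the work transformation is feasible with block product state $\sigma_{S^nB^n}$, then Theorem~\ref{thm:first-law} forces $s(\{\sigma_{S^nB^n}\}) = s(\{\rho_{S^n}\}) + S(\tau(\und{\beta}))$ and $W_j = -\Delta A_{S_j} - \Delta A_{B_j}$. Subtracting $s(\{\sigma_{S^n}\})$ from the first identity gives $\lim_n \frac{1}{n} S(B^n|S^n)_\sigma = S(\tau(\und{\beta})) - \Delta s_S = t$, and the charge condition rearranges to $\lim_n \frac{1}{n}\Tr\sigma_{B^n}A_{B_j}^{(n)} = \Tr\tau(\und{\beta})_B A_{B_j} + \Delta A_{B_j} = a_j$. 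Hence $(\und{a},t)$ is a limit point of $\frac{1}{n}\cP^{(n)}_{|\sigma_{S^n}}$, and by part~3 of Lemma~\ref{lemma:extended-phase-diagram} it lies in $\overline{\cP}_{|s(\{\sigma_{S^n}\})}$.

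For the sufficiency direction, the plan is essentially to reverse this argument by constructing a block-product state realizing the prescribed conditional point. Since $(\und{a},t)$ lies in the interior of $\overline{\cP}_{|S(\sigma_S)}$ and $\sigma_{S^n}=\sigma_S^{\otimes n}$ is i.i.d., part~3 of Lemma~\ref{lemma:extended-phase-diagram} yields, for every $\epsilon'>0$ and for all sufficiently large $k$, a state $\xi_{S^kB^k}$ with $\Tr_{B^k}\xi_{S^kB^k}=\sigma_S^{\otimes k}$ such that $\bigl|\tfrac{1}{k}\Tr\xi_{B^k}A_{B_j}^{(k)}-a_j\bigr|\leq\epsilon'$ for all $j$ and $\bigl|\tfrac{1}{k}S(B^k|S^k)_\xi - t\bigr|\leq\epsilon'$. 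Fix such a $k$ and define the block-product sequence $\sigma_{S^nB^n} := \xi_{S^kB^k}^{\otimes n/k}$ (for $n$ a multiple of $k$).

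The next step is to verify that $\sigma_{S^nB^n}$ together with a slightly perturbed work vector satisfies the first-law balance, so that the work transformation follows from the block version of the AET. By construction, $\frac{1}{n}S(\sigma_{S^nB^n})=\frac{1}{k}S(\xi_{S^k})+\frac{1}{k}S(B^k|S^k)_\xi = S(\sigma_S) + t \pm \epsilon'$, which equals $s(\{\rho_{S^n}\})+S(\tau(\und{\beta}))$ up to $\epsilon'$, matching asymptotic entropy conservation. Similarly, the asymptotic charge rates on the bath differ from $a_j$ by at most $\epsilon'$, so defining $\widetilde W_j := \Tr\tau(\und{\beta})_B A_{B_j} - \Delta A_{S_j} - \frac{1}{k}\Tr\xi_{B^k}A_{B_j}^{(k)}$ gives $|\widetilde W_j - W_j|\leq\epsilon'$, and by Theorem~\ref{thm:first-law} one can find pure battery states realising these works. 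Since the AET extends to sequences of block-product states of block size $k$, the work transformation $\rho_{S^n}\otimes\tau(\und{\beta})_B^{\otimes n}\to\sigma_{S^nB^n}$ is feasible with work rates $W_j\pm\epsilon$ upon choosing $\epsilon'$ (and hence $k$) small enough.

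The main obstacle is the quantitative control of the approximation in part~3 of Lemma~\ref{lemma:extended-phase-diagram}: one has to ensure simultaneously that (i) the marginal on $S^k$ equals $\sigma_S^{\otimes k}$ exactly rather than approximately (so that the block-product lift has the right reduced state), and (ii) the charge and conditional-entropy values concentrate near $(\und{a},t)$. The interior assumption $(\und{a},t)\in\overline{\cP}^0_{|S(\sigma_S)}$ is what provides the room needed to absorb the error $\epsilon'$ without leaving the feasible region, and the Uhlmann-based argument from the proof of that lemma already delivers (i) exactly while (ii) holds up to $\epsilon'$; taking $\epsilon'\to 0$ at the cost of enlarging $k$ (and hence the block size) gives the claimed $\pm\epsilon$ accuracy on the extracted works.
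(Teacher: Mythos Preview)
Your approach matches the paper's: necessity via the first law plus Lemma~\ref{lemma:extended-phase-diagram}, and sufficiency by constructing a block state $\xi_{S^kB^k}$ realizing the conditional point and then taking tensor powers. The overall structure is correct.

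There is one genuine technical gap in your sufficiency argument. You obtain $\frac{1}{k}S(B^k|S^k)_\xi = t \pm \epsilon'$, hence the block-product sequence $\sigma_{S^nB^n}=\xi^{\otimes n/k}$ has entropy rate $s(\{\rho_{S^n}\})+S(\tau(\und{\beta}))\pm\epsilon'$ for a \emph{fixed} nonzero $\epsilon'$. But the converse part of the first law (Theorem~\ref{thm:first-law}) and the AET require \emph{exact} entropy-rate equality in the limit; an $O(\epsilon' n)$ entropy mismatch cannot be absorbed into the batteries (they are pure) nor into the $2^{o(n)}$-dimensional ancilla. Sending $\epsilon'\to 0$ forces $k\to\infty$, which destroys the single block-product structure you need. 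So as written, the first law cannot be invoked.

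The paper closes this gap with one extra step: from Lemma~\ref{lemma:extended-phase-diagram} one first obtains $\sigma_{S^kB^k}$ with $\frac{1}{k}S(B^k|S^k)_\sigma \leq t-\epsilon$ (using that $(\und a,t)$ is interior), and then mixes it with a small fraction of $(\sigma_S\otimes\tau(\und a)_B)^{\otimes k}$, which has conditional entropy $S(\tau(\und a))\geq t$ and the same $S^k$-marginal. By continuity of the conditional entropy in the mixing parameter, one hits $\frac{1}{k}S(B^k|S^k)=t$ exactly, while the bath charges remain $\epsilon$-close to $\und a$. Now the entropy rate matches exactly and the first law applies, with the residual $\epsilon$ appearing only in the charges and hence in the works $W_j\pm\epsilon$. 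Adding this mixing step to your argument makes it complete.
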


\begin{proof}
We have already argued the necessity of the condition. It remains to show its 
sufficiency. 
Using Lemma \ref{lemma:extended-phase-diagram}, this is not hard: Namely, 
by its point 3, for sufficiently large $k$, $(\und{a},t) \in \overline{\cP}_{|s}$ 
is $\epsilon$-approximated by $\frac1k \cP^{(k)}_{|\sigma_S^{\ox k}}$, i.e. there 
exists a $\sigma_{S^kB^k}$ with $\tr_{B^k} \sigma_{S^kB^k} = \sigma_S^{\ox k}$ 
with $\frac1k S(B^k|S^k)_\sigma \leq t-\epsilon$ and $\frac1k \tr \sigma_{B^k} A_j^{(B^k)} \approx a_j$
for all $j=1,\ldots,c$. By mixing $\sigma$ with a small fraction of 
$\bigl(\tau(\und{a})_B\ox\sigma_S\bigr)^{\ox k}$, we can in fact assume that 
$\frac1k S(B^k|S^k)_\sigma = t$ while preserving $\frac1k \tr \sigma_{B^k} A_j^{(B^k)} \approx a_j$.
Now our target block product states will be 
$\sigma_{S^nB^n} := \bigl(\sigma_{S^kB^k}\bigr)^{\ox \frac{n}{k}}$ for $n$ a multiple of $k$.
By construction, this sequence has the same entropy rate as 
the initial regular sequence of product states $\rho_{S^n} \ox \tau(\und{\beta})_B^{\ox n}$, 
so by the first law, Theorem \ref{thm:first-law}, and the 
AET, Theorem \ref{Asymptotic equivalence theorem}, there is indeed a 
corresponding work transformation with $j$-type work extracted 
equal to $W_j\pm\epsilon$.
\end{proof}

\medskip
\begin{remark}\normalfont
One might object that tensor power target states are not general enough 
in Theorem \ref{thm:second-law-finite-bath}, as we 
had observed in the previous chapter that such states do not
generate the full phase diagram $\overline{\mathcal{P}}$ of the system $S$. 
However, by considering blocks of $\ell$ systems $S^\ell$, we can 
apply the theorem to block tensor power target states 
$\sigma_{S^n} = \bigl(\sigma_1\ox\cdots\ox\sigma_\ell\bigr)^{\ox \frac{n}{\ell}}$, 
and these latter are in fact a rich enough class to exhaust the entire 
phase diagram $\overline{P}$, when $\ell \geq \dim S$ 
(point 5 of Lemma \ref{lemma:phase diagram properties}).

More generally, we can allow as target \emph{uniformly regular} sequences of 
product states $\sigma_{S^n}$, by which we mean the following strengthening 
of the condition in Definition \ref{definition:regular}. 
Denoting $B_{N+1}^{N+n} := B_{N+1}\ldots B_{N+n}$, we require that for all 
$\epsilon > 0$ and uniformly for all $N$, it holds that for sufficiently large $n$, 
\[
  \left| a_j - \frac1n \Tr \sigma_{B_{N+1}^{N+n}} A_j^{(n)} \right| 
                                                       \leq \epsilon \text{ for all } j=1,\ldots,c, \text{ and } 
  \left| s - \frac1n S(\sigma_{B_{N+1}^{N+n}}) \right| \leq \epsilon. 
\]
\end{remark}

\section{Tradeoff between thermal bath rate and work extraction}  
\label{subsec:bath-rate}
Here we consider a different take on the question of the work deficit due
to finiteness of the bath. Namely, we still consider a given fixed finite 
bath system $B$, but now as which state transformations and associated 
generalized works are possible when for each copy of the subsystem $S$,
$R\geq 0$ copies of $B$ are present. It is clear what that means when 
$R$ is an integer, but below we shall give a meaning to this rate as a real number. 
We start off with the observation that ``large enough bath'' in 
Theorem \ref{asymptotic second law} can be taken to mean $B^R$, for the given 
elementary bath $B$ and sufficiently large integer $R$.

\begin{theorem}
\label{thm:large-R-2nd-law}
For arbitrary regular sequences of product states, 
$\rho_{S^n}$ and $\sigma_{S^n}$, and any real numbers $W_j$ with  
$\sum_{j=1}^c \beta_j W_j < -\Delta\widetilde{F}_S$, 
there exists an integer $R\geq 0$ and a regular sequence of product states 
$\sigma_{S^nB^{nR}}$ with $\Tr_{B^{nR}}\sigma_{S^nB^{nR}} = \sigma_{S^n}$, such that 
there is a work transformation 
$\rho_{S^n} \ox \tau(\und{\beta})_B^{\ox nR} \rightarrow \sigma_{S^nB^{nR}}$
with accompanying extraction of $j$-type work at rate $W_j$.
\end{theorem}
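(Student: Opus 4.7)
The plan is to essentially mimic the achievability part of the original second law (Theorem~\ref{asymptotic second law}), but with the elementary bath $B$ now fixed in advance rather than freely chosen. First I would analyze what the first law (Theorem~\ref{thm:first-law}) imposes on a candidate final state $\sigma_{S^nB^{nR}}$: if a work transformation with extracted works $W_j$ exists, then the reduced state $\sigma_{B^{nR}}$ must asymptotically correspond, per copy of $B$, to the point $(\und{a}',s') \in \mathbb{R}^{c+1}$ with
\[
  s' = S(\tau(\und{\beta})_B) - \frac{1}{R}\Delta s_S, \qquad
  a'_j = \Tr\tau(\und{\beta})_B A_{B_j} - \frac{1}{R}\bigl(\Delta A_{S_j}+W_j\bigr).
\]
Hence it suffices to realise such a state $\sigma_{B^{nR}}$ as a regular sequence of product states, and to set $\sigma_{S^nB^{nR}} := \sigma_{S^n}\ox\sigma_{B^{nR}}$; the existence of the work transformation then follows from the first law combined with the AET (Theorem~\ref{Asymptotic equivalence theorem}), since entropy and charge rates match up to the works absorbed by the batteries.

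The crux is therefore to show that $(\und{a}',s') \in \overline{\cP}_B$ for sufficiently large $R$. Note that as $R\to\infty$, the point $(\und{a}',s')$ tends to the thermal point $(\und{a}_{\und{\beta}},s_{\und{\beta}})$ along the fixed direction $\bigl(-(\Delta A_{S_j}+W_j),-\Delta s_S\bigr)$. The key computation is to evaluate the increment of the functional $\sum_j \beta_j a_j - s$ along this direction: using $\Delta\widetilde{F}_S = \sum_j\beta_j\Delta A_{S_j}-\Delta s_S$, the increment equals $-\Delta\widetilde{F}_S-\sum_j\beta_j W_j$, which is strictly positive by our hypothesis $\sum_j\beta_j W_j<-\Delta\widetilde{F}_S$. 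Since the tangent hyperplane to $\overline{\cP}_B$ at the thermal point has equation $\sum_j\beta_j a_j-s=\widetilde{F}(\tau(\und{\beta})_B)$, and $\overline{\cP}_B$ is contained in the half-space $\sum_j\beta_j a_j-s\geq\widetilde{F}(\tau(\und{\beta})_B)$, this shows our direction points \emph{strictly} into that half-space. Combined with the strict concavity of the upper-boundary map $\und{a}\mapsto S(\tau(\und{a}))$ (point~6 of Lemma~\ref{lemma:phase diagram properties}), this local geometric argument establishes that for all $R$ large enough, the point $(\und{a}',s')$ lies in the interior of $\overline{\cP}_B$.

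Once $(\und{a}',s')\in\overline{\cP}_B$ has been secured, I would invoke point~5 of Lemma~\ref{lemma:phase diagram properties}: for $nR\geq|B|$, every point of the composite-system phase diagram $\overline{\cP}_{B^{nR}}=nR\,\overline{\cP}_B$ is realised by a tensor product state $\xi_{B_1}\ox\cdots\ox\xi_{B_{nR}}$. Choosing one such state for each $n$ yields a regular sequence of product states $\sigma_{B^{nR}}$ with the required charge and entropy rates. Setting $\sigma_{S^nB^{nR}}:=\sigma_{S^n}\ox\sigma_{B^{nR}}$ gives a regular sequence of product states whose total entropy rate per $n$ is $s(\{\rho_{S^n}\})+R\cdot S(\tau(\und{\beta})_B)$ (matching the initial state), and whose charge rates are aligned with $a_j(\{\rho_{S^n}\})+R\Tr\tau(\und{\beta})_B A_{B_j}-W_j$. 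Applying the converse direction of the first law (Theorem~\ref{thm:first-law}) then produces batteries absorbing exactly the prescribed $j$-type works $W_j$, and the AET (Theorem~\ref{Asymptotic equivalence theorem}) provides the almost-commuting unitaries realising the transformation.

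The main obstacle is the geometric step: showing that the chosen direction into the phase diagram, starting from the thermal point, actually enters $\overline{\cP}_B$ rather than merely entering the correct half-space. Because the thermal point is an extreme point of $\overline{\cP}_B$ and the surface $s=S(\tau(\und{a}))$ is only $C^1$-tangent to the hyperplane there, this really does require the \emph{strict} inequality in the hypothesis; with equality, the target direction would be tangential and the argument would fail. All other steps are essentially bookkeeping: verifying regularity and matching the rates so that Theorem~\ref{thm:first-law} and Theorem~\ref{Asymptotic equivalence theorem} apply.
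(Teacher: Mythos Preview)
Your proposal is correct and follows exactly the approach taken in the paper: the paper's proof of this theorem is the single sentence ``This was already shown in the achievability part of Theorem~\ref{asymptotic second law},'' and what you have written is precisely a faithful reconstruction of that achievability argument (the point $(\und{a}',s')$, the tangent-hyperplane computation showing the direction points strictly into the half-space, and the invocation of point~5 of Lemma~\ref{lemma:phase diagram properties} followed by the first law and AET).
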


\begin{proof}
This was already shown in the achievability part of Theorem \ref{asymptotic second law}.
\end{proof}

\medskip
To give meaning to a rational rate $R = \frac{\ell}{k}$, group the systems of $S^n$,
for $n=\nu k$, into blocks of $k$, which we denote $\widetilde{S}=S^k$, 
and consider $\rho_{S^n} \equiv \rho_{\widetilde{S}^\nu}$ as a $\nu$-party state,
and likewise $\sigma_{S^n} \equiv \sigma_{\widetilde{S}^\nu}$. For each 
$\widetilde{S}=S^k$ we assume $\ell$ copies of the thermal bath, 
$\tau(\und{\beta})_B^{\otimes \ell} = \tau(\und{\beta})_{\widetilde{B}}$,
with $\widetilde{B} = B^\ell$. If $\{\rho_{S^n}\}$ and $\{\sigma_{S^n}\}$ are
regular sequences of product states, then evidently so are 
$\{\rho_{\widetilde{S}^\nu}\}$ and $\{\sigma_{\widetilde{S}^\nu}\}$.

Now, for the given sequences $\{\rho_{S^n}\}$ and $\{\sigma_{S^n}\}$ of initial 
and final states, respectively, as well as works $W_1,\ldots,W_c$ satisfying 
$\sum_j\beta_j W_j = -\Delta\widetilde{F}_S-\delta$, $\delta \geq 0$, 
we can ask what is the infimum over all rates $R = \frac{\ell}{k}$
such that there is a work transformation 
\[
  \rho_{S^n} \ox \tau(\und{\beta})_{B^{nR}} 
  \equiv \rho_{\widetilde{S}^\nu} \ox \tau(\und{\beta})_{\widetilde{B}}^{\ox \nu\ell} 
             \rightarrow 
             \sigma_{\widetilde{S}^\nu \widetilde{B}^{\nu\ell}}
             \equiv \sigma_{S^nB^{nR}},
\]
where as before the final state is intended to satisfy 
$\Tr_{\widetilde{B}^{\nu\ell}} \sigma_{\widetilde{S}^\nu \widetilde{B}^{\nu\ell}} = \sigma_{\widetilde{S}^\nu}$. 

%
%
%
%
%

We observe that if $S(\rho_{S^n})=S(\sigma_{S^n})$ and $\sum_j\beta_j W_j = -\Delta\widetilde{F}_S$,
then the work transformation is possible without 
using any thermal bath, which follows from Eq.~(\ref{work expansion formula}). 
That is, the thermal bath is not necessary for extracting work if the entropy of the work 
system does not change. 
Conversely, the role of the thermal bath is precisely to facilitate changes 
of entropy in the work system.

To answer the above question after the minimum bath rate $R^*$, 
we first show the following lemma.

\begin{lemma}
Consider regular sequences of product states, 
$\rho_{S^n}$ and $\sigma_{S^n}$, and real numbers $W_j$, and assume 
that for large enough rate $R$ there is a work transformation 
$\rho_{S^n} \otimes \tau(\und{\beta})_B^{\ox nR} \rightarrow \sigma_{S^nB^{nR}}$,
with $\sigma_{S^n}$ as the reduced final state on the work system, 
and works $W_1,\ldots,W_c$ are extracted. Then there is another work transformation 
$\rho_{S^n} \otimes \tau(\und{\beta})_B^{\ox nR} \rightarrow \sigma_{S^n}\ox\xi_{B^{nR}}$,
in which the final state of the work system and the thermal bath are uncorrelated,
$\xi_{B^{nR}}$ is a regular sequence of product states, 
and the same works $W_1,\ldots,W_c$ are extracted.
\end{lemma}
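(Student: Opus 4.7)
The plan is to read off from the hypothesized transformation the entropy and charge budget that an uncorrelated bath state must satisfy, show that this budget corresponds to a point inside the phase diagram of the bath for $R$ large enough, and then invoke the AET to realize the uncorrelated transformation.

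First I would translate the hypothesis via the first law (Theorem \ref{thm:first-law}) applied to the composite $S^n B^{nR}$. This yields
\begin{align*}
  s(\{\sigma_{S^nB^{nR}}\}) &= s(\{\rho_{S^n}\}) + R\, S(\tau(\und{\beta})), \\
  W_j &= a_j(\{\rho_{S^n}\}) - a_j(\{\sigma_{S^n}\}) + R\, \Tr\tau(\und{\beta})_B A_{B_j} - a_j(\{\sigma_{B^{nR}}\}).
\end{align*}
For a prospective uncorrelated final state $\sigma_{S^n}\otimes\xi_{B^{nR}}$ to realize the same works and be reachable by an almost-commuting unitary from $\rho_{S^n}\otimes\tau(\und{\beta})_B^{\ox nR}$, the first law forces the charges and entropy rate of $\{\xi_{B^{nR}}\}$ to be
\begin{align*}
  a_j(\{\xi_{B^{nR}}\}) &= a_j(\{\sigma_{B^{nR}}\}) =: R\,\widetilde{a}_j, \\
  s(\{\xi_{B^{nR}}\}) &= s(\{\rho_{S^n}\}) + R\, S(\tau(\und{\beta})) - s(\{\sigma_{S^n}\}) =: R\,\widetilde{s}.
\end{align*}

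Next I would verify that the point $(\widetilde{\und{a}},\widetilde{s})$ lies in the single-copy phase diagram $\overline{\mathcal{P}}^{(1)}$ of $B$, for $R$ large enough. The charge values $\widetilde{\und{a}}$ are by construction the expectation values of the charges $A_{B_j}$ on a genuine state of $B^{nR}$, hence attainable on $B$ as well, so they are admissible. For the entropy, on the one hand $\widetilde{s}\geq 0$ is equivalent to $R\,S(\tau(\und{\beta})) \geq s(\{\sigma_{S^n}\}) - s(\{\rho_{S^n}\})$, which holds for all sufficiently large $R$ (this is where ``large enough rate $R$'' enters); on the other hand,
\begin{align*}
  \widetilde{s} = \frac{1}{R}\bigl(s(\{\sigma_{S^nB^{nR}}\}) - s(\{\sigma_{S^n}\})\bigr)
              \leq \frac{1}{R}s(\{\sigma_{B^{nR}}\})
              \leq S\bigl(\tau(\widetilde{\und{a}})\bigr),
\end{align*}
by subadditivity and the maximum-entropy property of the generalized Gibbs state. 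Thus $(\widetilde{\und{a}},\widetilde{s})\in\overline{\mathcal{P}}^{(1)}$.

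The remaining step is straightforward: by point~5 of Lemma~\ref{lemma:phase diagram properties}, the point $R(\widetilde{\und{a}},\widetilde{s})\in \overline{\mathcal{P}}^{(R)}$ is realized by a tensor product state on $B^{R}$, and hence there is a regular sequence of product states $\{\xi_{B^{nR}}\}$ with charge rates $R\widetilde{\und{a}}$ and entropy rate $R\widetilde{s}$. Then $\rho_{S^n}\otimes\tau(\und{\beta})_B^{\ox nR}$ and $\sigma_{S^n}\otimes\xi_{B^{nR}}$ have, by construction, matching asymptotic entropy rates and matching asymptotic total charge rates, so by the AET (Theorem~\ref{Asymptotic equivalence theorem}) and the second part of Theorem~\ref{thm:first-law} there is a work transformation between them with exactly the same extracted works $W_1,\ldots,W_c$.

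The main obstacle, as indicated above, is the non-negativity constraint $\widetilde{s}\geq 0$: when $\sigma_{S^n}$ has strictly larger entropy rate than $\rho_{S^n}$, an uncorrelated bath state can absorb that entropy increase only if the bath is large enough, i.e.\ $R$ is at least $\bigl(s(\{\sigma_{S^n}\}) - s(\{\rho_{S^n}\})\bigr)/S(\tau(\und{\beta}))$. This is precisely the regime the lemma restricts to, and it is the reason why, in the general finite-bath analysis of the previous section, correlated final states genuinely enlarge the set of feasible transformations.
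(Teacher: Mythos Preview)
Your proof is correct and follows essentially the same approach as the paper: use the first law to pin down the required entropy and charge rates for the uncorrelated bath state $\xi_{B^{nR}}$, verify that the corresponding point lies in $\overline{\mathcal{P}}_B^{(1)}$ for large enough $R$, and then invoke Lemma~\ref{lemma:phase diagram properties} together with the AET/first law to realize the transformation. The paper organizes the same verification slightly differently, expressing the bath entropy via the free-entropy identity $\Delta\widetilde{F}_B + \Delta s_S + \Delta s_B = \delta$ and then appealing to convexity of the phase diagram, but your direct check of the two bounds $0 \leq \widetilde{s} \leq S(\tau(\widetilde{\und{a}}))$ (the upper one via subadditivity and the maximum-entropy principle, the lower one via $R$ large) is equivalent and arguably more transparent.
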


\begin{proof}
Assuming that $\rho_{S^n} \otimes \tau(\und{\beta})_B^{\ox nR} \rightarrow \sigma_{S^nB^{nR}}$ is a work transformation, the second law implies that $\sum_j\beta_j W_j = -\Delta\widetilde{F}_s-\delta$ for some $\delta \geq 0$, and we obtain
\begin{equation}\label{eq: coordinates_P_1}
\begin{split}
  s(\{\sigma_{B^{nR}}\})   &= S(\tau(\und{\beta})_B) - \frac1R \Delta s_S+\frac{\delta'}{R}, \\
  a_j(\{\sigma_{B^{nR}}\}) &= \Tr \tau(\und{\beta})_B A_{B_j} - \frac1R (\Delta A_{S_j} + W_j) 
                                                             \quad \text{for all } j=1,\ldots,c.
\end{split}\end{equation}
for $0 \leq \delta' \leq \delta$ where the first equality is due to the fact that $\Delta\widetilde{F}_B+\Delta s_S +\Delta s_B =\delta$ as seen in Eq.~(\ref{work expansion formula}) and positivity of the entropy rate change from Eq.~(\ref{eq: positive Delta_SB}). The second equality follows from the first law, Theorem \ref{thm:first-law}, 
and the AET, Theorem \ref{Asymptotic equivalence theorem}. 
If $R$ is large enough, due to the convexity of the phase diagram of the thermal bath $\overline{\mathcal{P}}_B^{(1)}$,  the following coordinates belong  to the phase diagram as well
\begin{equation}\label{eq: coordinates_P_2}
\begin{split}
  s(\{\xi_{B^{nR}}\})   &= S(\tau(\und{\beta})_B) - \frac1R \Delta s_S, \\
  a_j(\{\xi_{B^{nR}}\}) &= \Tr \tau(\und{\beta})_B A_{B_j} - \frac1R (\Delta A_{S_j} + W_j) 
                                                             \quad \text{for all } j=1,\ldots,c.
\end{split}\end{equation}
Therefore, due to points 3 and 5 of Lemma~\ref{lemma:phase diagram properties}, 
there is a tensor product state $\xi_{B^{nR}}$ with coordinate of 
Eq.~(\ref{eq: coordinates_P_2}) on $\overline{\cP}_B^{(1)}$. Hence the first law, 
Theorem~\ref{thm:first-law}, implies that the desired transformation exists, and
works $W_1,\ldots,W_c$ are extracted.
\end{proof}

\medskip
\begin{theorem}
\label{thm:optimal-rate}
For regular sequences of product states, $\rho_{S^n}$ and $\sigma_{S^n}$, 
and real numbers $W_j$ satisfying $\sum_j\beta_j W_j = -\Delta\widetilde{F}_s-\delta$, 
let $R^*$ be the infimum of rates such that there is a work transformation 
$\rho_{S^n} \otimes \tau(\und{\beta})_B^{\ox nR} \rightarrow \sigma_{S^n}\ox\xi_{B^{nR}}$
under which works $W_1,\ldots,W_c$ are extracted, and 
$\xi_{B^{nR}}$ is a regular sequence of product states. 

Then, this minimum $R^*$ is achieved for a state $\xi_{B^{nR}}$ on the boundary of the 
phase diagram $\overline{\mathcal{P}}_B$ of the thermal bath. Indeed, it is point 
where the line given by Eq.~(\ref{eq:R-bath-assumptions}) intersects the boundary of 
the phase diagram; see Fig.~\ref{fig:optimal-rate}.
Equivalently, it is the smallest $R$ such that the point in 
Eq.~(\ref{eq:R-bath-assumptions}) is contained in $\overline{\mathcal{P}}_B$. 

For $\delta \ll 1$, the minimum rate can be written as
\begin{equation}
  \label{eq:rate-vs-heatcapacity}
  R \approx -\frac{1}{2\delta} \sum_{ij} \frac{\partial\beta_j}{\partial a_i}(\Delta A_{S_i}+W_i)(\Delta A_{S_j}+W_j),
\end{equation}
where $\Delta A_{S_j} = a(\{\sigma_{S^n}\}) - a(\{\rho_{S^n}\})$.
\end{theorem}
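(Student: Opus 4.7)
\begin{proof-of}[{of Theorem \ref{thm:optimal-rate} (proof plan)}]
The plan is to translate the existence of a work transformation into a purely geometric condition on the bath phase diagram $\overline{\mathcal{P}}_B$, and then to analyse the shape of $\overline{\mathcal{P}}_B$ near the thermal point.

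First I would recall that by the preceding lemma we may, without loss, look for targets of the form $\sigma_{S^n} \otimes \xi_{B^{nR}}$ with $\xi_{B^{nR}}$ a regular sequence of product states. Conservation of asymptotic entropy and charges (the first law, Theorem~\ref{thm:first-law}) forces the coordinates of $\xi_{B^{nR}}$ in the single-copy normalised phase diagram of the bath to equal exactly the point given in Eq.~\eqref{eq:R-bath-assumptions}. Combining the AET (Theorem~\ref{Asymptotic equivalence theorem}) with points 3 and 5 of Lemma~\ref{lemma:phase diagram properties}, such a product-state sequence exists (after grouping into blocks of size $k$ in order to realise rational rates $R=\ell/k$) if and only if that point lies in $\overline{\mathcal{P}}_B$. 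Therefore $R^*$ is the infimum of $R\ge 0$ for which the point defined by Eq.~\eqref{eq:R-bath-assumptions} belongs to $\overline{\mathcal{P}}_B$, and by closedness of $\overline{\mathcal{P}}_B$ (Lemma~\ref{lemma:phase diagram properties}, item~1) the infimum is attained.

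Next I would note that the parametric family of Eq.~\eqref{eq:R-bath-assumptions}, viewed as a function of $1/R\in[0,\infty)$, traces out a straight ray in charge-entropy space starting, at $1/R=0$, at the thermal point $(\underline{a}_{\underline{\beta}},s_{\underline{\beta}})$, which sits on the upper (max-entropy) boundary of $\overline{\mathcal{P}}_B$. A short calculation that I carried out above shows that along this ray the free entropy of the bath increases at rate $\delta/R$ relative to the tangent hyperplane at the thermal point; since $\delta>0$ the ray immediately enters the interior of $\overline{\mathcal{P}}_B$. As $R$ decreases (i.e., $1/R$ grows) we follow the ray until it first exits $\overline{\mathcal{P}}_B$, and since the ray leaves through the upper, max-entropy boundary (having entered from it), the optimal $R^{*}$ is characterised by $s(\xi_B)=S(\tau(\underline{a}(\xi_B)))$.

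Finally, for the regime $\delta\ll 1$ I would expand $S(\tau(\underline{a}))$ around $\underline{a}_{\underline{\beta}}$ to second order using the thermodynamic identities $\partial S/\partial a_j=\beta_j$ and $\partial^{2}S/(\partial a_i\partial a_j)=\partial\beta_j/\partial a_i$. Substituting $\Delta a_j=-\frac{1}{R}(\Delta A_{S_j}+W_j)$ into the boundary equation $s_{\underline{\beta}}-\tfrac{1}{R}\Delta s_S=S(\tau(\underline{a}))$ and using the identity $\sum_j\beta_j(\Delta A_{S_j}+W_j)=\Delta s_S-\delta$ (which follows from $\sum_j\beta_j W_j=-\Delta\widetilde F_S-\delta$), the linear terms cancel and what remains is
\[
0=\delta+\frac{1}{2R}\sum_{ij}\frac{\partial\beta_j}{\partial a_i}(\Delta A_{S_i}+W_i)(\Delta A_{S_j}+W_j),
\]
which solves to Eq.~\eqref{eq:rate-vs-heatcapacity}. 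Positivity of $R$ is guaranteed by the negative semidefiniteness of the Hessian $\partial\beta_j/\partial a_i$ (strict concavity of $S(\tau(\underline{a}))$).

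The main obstacle I expect is the blocking argument for general (irrational or limiting) rates $R$: to realise the target point in $\overline{\mathcal{P}}_B$ by a genuine regular product-state sequence $\xi_{B^{nR}}$ one has to approximate $R$ by rationals $\ell/k$, apply Lemma~\ref{lemma:phase diagram properties}(5) to the $k$-fold block system $B^{\ell}$ for $k$ large enough (so that $k\ell\ge\dim B$), and then control the approximation error using continuity of the boundary $\underline{a}\mapsto S(\tau(\underline{a}))$; everything else is a fairly mechanical consequence of the first law and the AET.
\end{proof-of}
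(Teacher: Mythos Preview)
Your proposal is correct and follows essentially the same route as the paper: reduce feasibility via the first law and the AET to membership of the point \eqref{eq:R-bath-assumptions} in $\overline{\mathcal{P}}_B$, identify $R^*$ with the boundary intersection of the ray, and derive \eqref{eq:rate-vs-heatcapacity} by Taylor-expanding $S(\tau(\und{a}))$ using $\partial S/\partial a_j=\beta_j$. You are in fact slightly more explicit than the paper in two places (the identity $\sum_j\beta_j(\Delta A_{S_j}+W_j)=\Delta s_S-\delta$ and the blocking argument for rational $R$), but the argument is the same.
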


\begin{figure}[ht]
  \begin{center}
    \includegraphics[width=10cm,height=8cm]{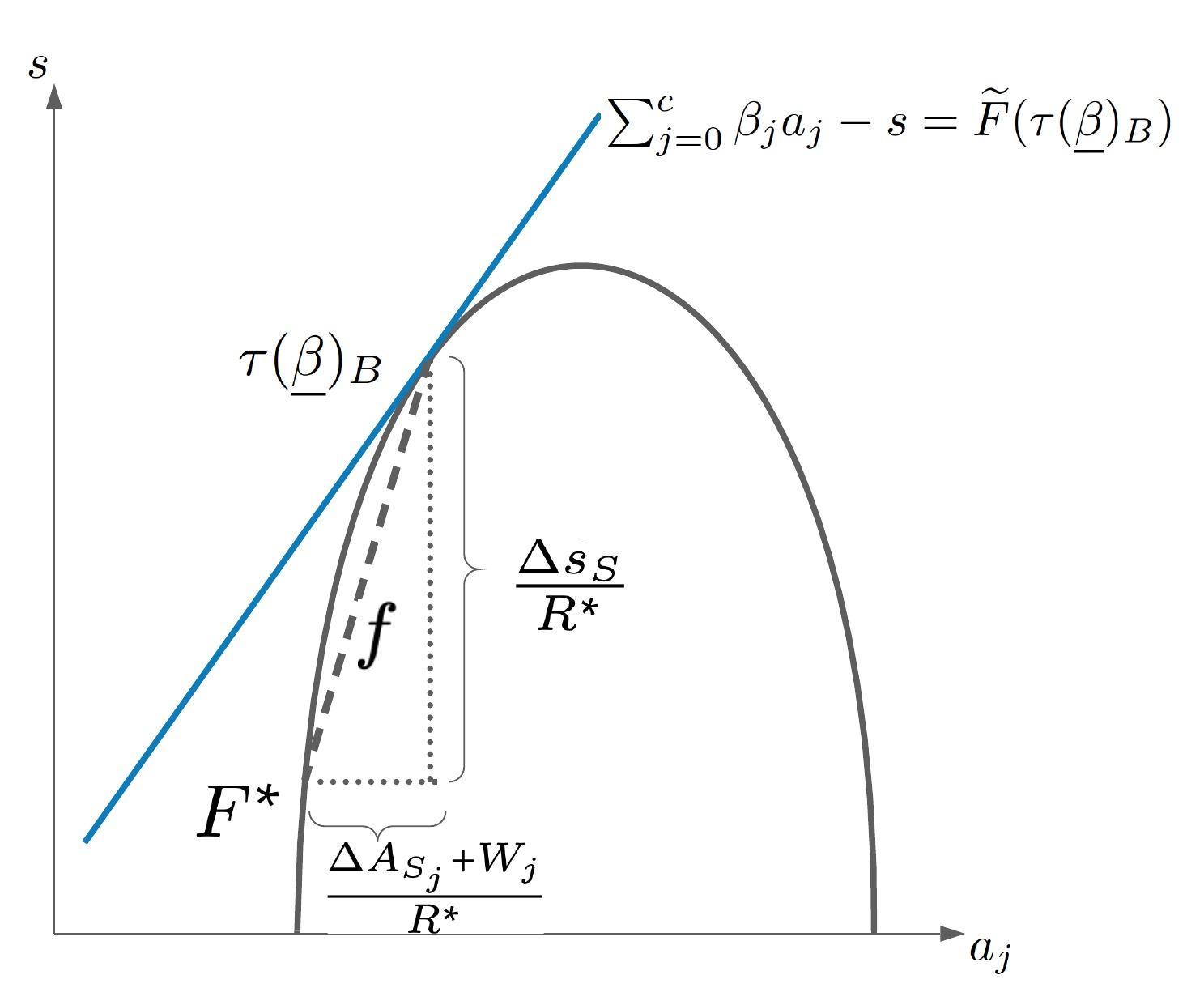}
  \end{center}
  \caption{Graphical illustration of $R^*$, the minimum bath rate for a 
           work transformation $\{\rho_{S^n}\} \rightarrow \{\sigma_{S^n}\}$ 
           satisfying the second law, according to Theorem~\ref{thm:optimal-rate}. 
           The initial state is the generalized thermal state $\tau(\und{\beta})$, its 
           corresponding point marked on the upper boundary of the phase diagram. The final 
           bath states correspond to points on the line denoted $f$, and they 
           are feasible if and only they fall into the phase diagram. 
           Consequently, $F^*$ is the point corresponding to the minimum rate.}
  \label{fig:optimal-rate}
\end{figure}

\begin{proof}
The final state of the thermal bath $\xi_{{B}^{ nR}}$ is a tensor product state, so  
the first law, Theorem~\ref{thm:first-law}, 
and the AET, Theorem~\ref{Asymptotic equivalence theorem} imply that
\begin{equation}\label{eq:coordinates} 
\begin{split}
  s(\{\xi_{B^{nR}}\})   &= S(\tau(\und{\beta})_B) - \frac1R \Delta s_S, \\
  a_j(\{\xi_{B^{nR}}\}) &= \Tr \tau(\und{\beta})_B A_{B_j} - \frac1R (\Delta A_{S_j} + W_j) 
                                                             \quad \text{for all } j=1,\ldots,c,
\end{split}\end{equation}
where $\Delta s_{S} = s(\{\sigma_{S^n}\}) - s(\{\rho_{S^n}\})$. 
Due to point 3 of Lemma~\ref{lemma:phase diagram properties}, 
the above coordinates belong to $\overline{\mathcal{P}}_B^{(1)}$. 
%
For $R=R^*$ assume that the above coordinates belong to the point $(\und{a},s)$ on the boundary 
of the phase diagram $\overline{\mathcal{P}}^{(1)}_B$. Then, for $R>R^*$ the point 
of Eq.~(\ref{eq:coordinates}) is a convex combination of the points 
$(\und{a},s)$ and the corresponding point of the state $\tau(\und{\beta})_B$, 
so it belongs to the phase diagram due to its convexity. Therefore, all points with $R>R^*$ 
are inside the diagram.

To approximate the minimum $R$ for small $\delta$, define the function 
$S(\und{a}):=S(\tau(\und{a})_B)$ for $\und{a}=(a_1,\ldots,a_c)$. 
Its Taylor expansion around the point corresponding to the
initial thermal state $\tau(\und{\beta})_B \equiv S\left(\tau(\und{a}^0)_B\right)$ of the bath
gives the approximation
\begin{equation}
  \label{eq:S-taylor}
  S(\und{a}) \approx S(\und{a}^0) + \sum_j \beta_j (a_j-a_j^0)
                                  + \frac12 \sum_{ij} \frac{\partial\beta_j}{\partial a_i}(a_j-a_j^0) (a_i-a_i^0),
\end{equation}
where we have used the well-know relation $\frac{\partial S}{\partial a_i}=\beta_i$. 
From Eq.~(\ref{eq:coordinates}), we obtain
\begin{align*}
  S(\und{a})-S(\und{a}^0) &= -\frac{\Delta s_S}{R},\\
  a_j-a_j^0               &=  \frac{1}{R} (-\Delta A_{S_j}-W_j),
\end{align*}
and by substituting these values in the Taylor approximation (\ref{eq:S-taylor}), 
using the definition of the free entropy and of the deficit $\delta$, 
we arrive at the claimed Eq. (\ref{eq:rate-vs-heatcapacity}).
\end{proof}

\medskip
\begin{remark}\normalfont
For a single charge, $c=1$, which we traditionally interpret as the internal 
energy $E$ of a system, Eq.~(\ref{eq:rate-vs-heatcapacity}) takes on the very 
simple form
\[
  R \approx -\frac{1}{2\delta} \frac{\partial\beta}{\partial E}(\Delta E_S+W)^2.
\]
Here we can use the usual thermodynamic definitions to rewrite 
$\frac{\partial\beta}{\partial E} = \frac{\partial\frac{1}{T}}{\partial E} = -\frac{1}{T^2}\frac{1}{C}$,
with the heat capacity $C = \frac{\partial E}{\partial T}$, all derivatives taken with
respect to corresponding Gibbs equilibrium states. Thus, 
\begin{equation}
  R \approx \frac{1}{T^2}\frac{1}{C}\cdot\frac{1}{2\delta}(\Delta E_S+W)^2,
\end{equation}
resulting in a clear operational interpretation of the heat capacity in terms 
of the rate of the bath to approach the second law tightly. 

For larger numbers of charges, the matrix 
$\bigl[ \frac{\partial\beta_j}{\partial a_i} \bigr]_{ij} 
  = \bigl[ \frac{\partial^2 S}{\partial a_i\partial a_j} \bigr]_{ij}$ is actually 
the Hessian of the entropy $S\bigl(\tau(\und{a})_B\bigr)$ with respect to the charges, 
and the r.h.s. side of Eq.~(\ref{eq:rate-vs-heatcapacity}) is $\frac{1}{2\delta}$
times the corresponding quadratic form evaluated on the vector
$(\Delta A_{S_1}+W_1,\ldots,\Delta A_{S_c}+W_c)$. Note that by the strict 
concavity of the generalized Gibbs entropy, this is a negative definite 
symmetric matrix, thus explaining the minus sign in Eq. (\ref{eq:rate-vs-heatcapacity}).
In the same vein as the single-parameter discussion before, the Hessian matrix can 
be read as being composed of generalized heat capacities, which likewise receive 
their operational interpretation in terms of the required rate of the bath. 
\end{remark}


\section{Discussion} 
The traditional framework of thermodynamics assumes a system containing an asymptotically large number of particles interacts with an even larger bath. So that all the thermodynamic quantities of interest, e.g., energy, entropy, etc., can be expressed in terms of average or mean values. Also, the notion of temperature there remains meaningful as any exchange of energy hardly drives the bath away from equilibrium as it is considerably large. The quantum thermodynamics attempts to go beyond this assumption. For instance, the system that interacts with a large bath may have a fewer number of quantum particles. In this case, the average quantities are not sufficient to characterize the system as there may be large quantum fluctuations that cannot be ignored. To address this issue, the resource theory of quantum thermodynamics is developed and it shows that the classical laws are not sufficient to characterize the thermodynamic transformations. One rather needs many second laws associated with many one-shot free energies (based on Renyi $\alpha$-relative entropies) \cite{Horodecki2011,Brandao2015}. However, this formalism is still not enough to study the situation where a quantum system interacts with a bath and they are of comparable size. Clearly, the very notion of temperature is questionable as the bath may get driven out of equilibrium after an interaction with the system. To address this, a resource theory is developed based on information conservation \cite{Sparaciari2016, brl19} and it is only applicable to the regime where asymptotically large number system-bath composites are considered. This in turn also allows one to consider the system and bath on the same footing.

Here we have developed a resource theoretic formalism applicable to a more general scenario where a system with multiple conserved quantities (i.e., charges) interacts with a bath, and the system and bath may be of comparable size. These charges may not commute with each other, as allowed by quantum mechanics. The non-commutative nature implies that any (unitary) evolution cannot strictly conserve all these changes simultaneously. We overcome this problem by considering the notion of approximate micro-canonical ensembles, initially developed in \cite{Halpern2016}. This is an essential requirement and forms the basis of the (approximate) first law for thermodynamics with non-commuting charges. With this, we have developed a resource theory for work and heat for thermodynamics with non-commuting charges. We introduce the charge-entropy diagram that conceptually captures all the essential aspects of thermodynamics and an equivalence theorem to show the thermodynamic equivalence between quantum states sharing the same point on the charge-entropy diagram. Then we have derived the second law with the help of the diagram to characterize the state transformations and to quantify the thermodynamics resources such as works corresponding to different charges.  We have also considered the situation where the bath is finite and quantified the rate of state transformations. Interestingly the rate of transformation has been shown to have a direct link with the generalized heat-capacity of the bath. All these then extended to the cases where the systems have (quantum) correlation with the bath. There the charge-entropy diagram has been expressed in terms of conditional-entropy of the bath which may get negative in presence of entanglement and, using that, the second law has been derived.

\appendix
\chapter{Miscellaneous definitions and facts}
\label{Miscellaneous_Facts}

In this Appendix, we list a number of useful
definitions and facts that we often refer to in 
various chapters.


For an operator $X$, the \emph{trace norm}, the 
\emph{Hilbert-Schmidt norm} and 
the \emph{operator norm} are defined respectively 
in terms of $|X| = \sqrt{X^\dagger X}$:
\begin{align*}
 \|X\|_1        &= \Tr |X|, \\
 \|X\|_2        &= \sqrt{\Tr |X|^2}, \\
 \|X\|_{\infty} &= \lambda_{\max}(|X|),
\end{align*}
where $\lambda_{\max}(X)$ is the largest eigenvalue of $X$.

\begin{lemma}[Cf.~\cite{bhatia97}]
For any operator $X$,
\begin{align}
\|X\|_1 \leq \sqrt{d} \|X\|_2 \leq  d \|X\|_{\infty},  
\end{align}
where $d$ equals the rank of $X$.
\qed
\end{lemma}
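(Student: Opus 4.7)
The plan is to express all three norms in terms of the singular values of $X$ and then apply elementary inequalities. Concretely, writing the spectral decomposition $|X| = \sum_{i=1}^d \sigma_i \proj{e_i}$, where $\sigma_1, \ldots, \sigma_d > 0$ are the nonzero singular values of $X$ (exactly $d$ of them because $\operatorname{rank}(X) = \operatorname{rank}(|X|) = d$), the three norms read
\[
  \|X\|_1 = \sum_{i=1}^d \sigma_i, \quad
  \|X\|_2 = \sqrt{\sum_{i=1}^d \sigma_i^2}, \quad
  \|X\|_\infty = \max_{1\leq i \leq d} \sigma_i.
\]

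For the first inequality I would apply the Cauchy--Schwarz inequality to the vectors $(\sigma_1,\ldots,\sigma_d)$ and $(1,\ldots,1) \in \mathbb{R}^d$, yielding
\[
  \|X\|_1 = \sum_{i=1}^d \sigma_i \cdot 1
         \leq \left( \sum_{i=1}^d \sigma_i^2 \right)^{1/2} \left( \sum_{i=1}^d 1 \right)^{1/2}
         = \sqrt{d}\, \|X\|_2.
\]
For the second inequality I would simply bound each $\sigma_i^2$ by $\max_j \sigma_j^2 = \|X\|_\infty^2$, so
\[
  \|X\|_2^2 = \sum_{i=1}^d \sigma_i^2 \leq d \, \|X\|_\infty^2,
\]
and taking square roots and multiplying by $\sqrt{d}$ gives $\sqrt{d}\, \|X\|_2 \leq d\, \|X\|_\infty$.

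There is no substantive obstacle here: the proof is essentially a one-line application of Cauchy--Schwarz plus a trivial bound, once one recognises that the three norms are the $\ell_1$, $\ell_2$, and $\ell_\infty$ norms of the singular-value vector of $X$ truncated to its $d$ nonzero entries. The only subtlety worth flagging is the reduction of dimensionality from the ambient Hilbert-space dimension to the rank $d$, which is legitimate because $|X|$ vanishes on the orthogonal complement of its support, and therefore the zero singular values contribute nothing to any of the three norms.
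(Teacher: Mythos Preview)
Your proof is correct. The paper does not actually prove this lemma; it is stated as a standard fact with a citation to Bhatia's \emph{Matrix Analysis} and closed with a \qedsymbol. Your singular-value argument via Cauchy--Schwarz is exactly the standard proof one would find in that reference.
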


\begin{lemma}[Cf.~\cite{bhatia97}]
\label{norm1_trace} 
For any self-adjoint operator $X$,
\begin{align*}
\quad \quad\quad\quad\quad\quad\quad \quad \quad\quad \quad
  \norm{X}_1=\max_{-\1 \leq Q \leq \1}\Tr(QX). 
\quad \quad\quad\quad\quad\quad\quad \quad \quad\quad \quad
\blacksquare
\end{align*}
\end{lemma}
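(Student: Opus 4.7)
The plan is to use the Jordan decomposition of the self-adjoint operator $X$ into its positive and negative parts, and exhibit an explicit maximizer $Q$.

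First I would write $X = X_+ - X_-$, where $X_+, X_- \geq 0$ are the positive and negative parts with orthogonal supports, so that $|X| = X_+ + X_-$ and $\norm{X}_1 = \Tr(X_+) + \Tr(X_-)$. Let $P_+$ and $P_-$ denote the orthogonal projectors onto the supports of $X_+$ and $X_-$, respectively; these projectors are mutually orthogonal, $P_+ P_- = 0$.

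For the lower bound ($\geq$), I would define $Q_0 := P_+ - P_-$, which is self-adjoint and satisfies $-\1 \leq Q_0 \leq \1$ since its eigenvalues lie in $\{-1, 0, +1\}$. A direct computation gives
\[
  \Tr(Q_0 X) = \Tr(P_+ X_+) + \Tr(P_- X_-) = \Tr(X_+) + \Tr(X_-) = \norm{X}_1,
\]
so the maximum is attained and is at least $\norm{X}_1$.

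For the upper bound ($\leq$), let $Q$ be any self-adjoint operator with $-\1 \leq Q \leq \1$. Since $X_+, X_- \geq 0$, the inequality $Q \leq \1$ implies $\Tr(Q X_+) \leq \Tr(X_+)$ (by taking the trace of $(\1 - Q) X_+^{1/2} \cdot X_+^{1/2} \geq 0$), and likewise $-Q \leq \1$ yields $-\Tr(Q X_-) \leq \Tr(X_-)$. Adding these gives
\[
  \Tr(QX) = \Tr(Q X_+) - \Tr(Q X_-) \leq \Tr(X_+) + \Tr(X_-) = \norm{X}_1.
\]
Taking the maximum over all such $Q$ yields the reverse inequality and completes the proof. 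There is no real obstacle here; the only subtlety is to restrict attention to self-adjoint $Q$ without loss of generality, which follows because $\Tr(QX)$ is real for self-adjoint $X$, so we may replace $Q$ by its Hermitian part $(Q + Q^\dagger)/2$, which still satisfies $-\1 \leq (Q + Q^\dagger)/2 \leq \1$ whenever $-\1 \leq Q \leq \1$.
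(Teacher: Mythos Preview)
Your proof is correct. The paper does not give its own proof of this lemma; it is stated as a standard fact with a citation to Bhatia's \emph{Matrix Analysis} and the $\blacksquare$ appears immediately after the statement. Your Jordan-decomposition argument is the canonical one.
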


\begin{lemma}[Cf.~\cite{bhatia97}]
\label{T_norm1_inequality}
For \aw{any self-adjoint operator $X$ and any operator $T$,}
\begin{align*}
\quad \quad\quad\quad\quad\quad\quad \quad \quad\quad \quad
  \norm{TXT^{\dagger}}_1 \leq \norm{T}_{\infty}^2 \norm{X}_1.
\quad \quad\quad\quad\quad\quad\quad \quad \quad\quad \quad
\blacksquare
\end{align*}
\end{lemma}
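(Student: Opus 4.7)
The plan is to reduce the problem to the positive case via the Jordan decomposition of the self-adjoint operator $X$, and then bound the trace of a positive operator using the operator-norm bound on $T^\dagger T$.

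First I would write $X = X_+ - X_-$ with $X_+,X_-\geq 0$ and $X_+X_-=0$, so that $\|X\|_1 = \Tr X_+ + \Tr X_-$. Then $TXT^\dagger = TX_+T^\dagger - TX_-T^\dagger$, where both summands are positive semidefinite. By the triangle inequality for the trace norm, and the fact that $\|Y\|_1=\Tr Y$ for $Y\geq 0$,
\[
  \|TXT^\dagger\|_1 \leq \|TX_+T^\dagger\|_1 + \|TX_-T^\dagger\|_1 = \Tr(TX_+T^\dagger) + \Tr(TX_-T^\dagger).
\]

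Next I would handle each term using cyclicity of the trace: $\Tr(TYT^\dagger) = \Tr(T^\dagger T\, Y)$ for $Y\geq 0$. Since $T^\dagger T \leq \|T^\dagger T\|_\infty \1 = \|T\|_\infty^2 \1$, and $Y\geq 0$, one has $\Tr(T^\dagger T\,Y)\leq \|T\|_\infty^2 \Tr Y$. Applying this to $Y=X_\pm$ and summing yields
\[
  \|TXT^\dagger\|_1 \leq \|T\|_\infty^2\bigl(\Tr X_+ + \Tr X_-\bigr) = \|T\|_\infty^2 \|X\|_1,
\]
which is the claim.

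There is really no substantive obstacle here; the only subtlety is to avoid trying to manipulate $TXT^\dagger$ as a single self-adjoint operator (whose trace norm equals a trace involving its absolute value), and instead exploit the Jordan decomposition of $X$ so that the positivity of $TX_\pm T^\dagger$ lets the trace norm collapse to a trace. As an alternative one-line route, one could invoke Lemma~\ref{norm1_trace}: writing $\|TXT^\dagger\|_1 = \max_{-\1\le Q\le\1}\Tr(Q\,TXT^\dagger) = \max_{Q}\Tr(T^\dagger Q T\,X)$, and noting that $-\|T\|_\infty^2\1 \leq T^\dagger Q T \leq \|T\|_\infty^2 \1$, so $T^\dagger Q T/\|T\|_\infty^2$ is again admissible, giving the bound by the same lemma applied to $X$. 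I would likely include the Jordan-decomposition argument as the main proof since it is entirely elementary and makes the role of self-adjointness of $X$ transparent.
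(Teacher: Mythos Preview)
Your proof is correct. The paper itself does not prove this lemma; it merely states it with a citation to Bhatia and places a $\blacksquare$ after the display, treating it as a standard fact. So there is no ``paper's proof'' to compare against.

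Both of your routes are valid. Note, incidentally, that the paper later states a strictly more general fact as Lemma~\ref{lemma:norm inequality}, namely $\|ABC\|_p \leq \|A\|_\infty \|B\|_p \|C\|_\infty$ for any unitarily invariant norm, which with $A=T$, $B=X$, $C=T^\dagger$, $p=1$ yields the present inequality without even requiring $X$ to be self-adjoint. Your Jordan-decomposition argument is nonetheless a clean self-contained proof that makes explicit why the self-adjoint case is elementary, and your alternative via Lemma~\ref{norm1_trace} is the natural variational route given what the appendix already records.
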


\begin{lemma}[Cf.~Bhatia~\cite{bhatia97}]
\label{lemma:norm inequality}
For operators $A$, $B$ and $C$ and for any norm $p \in [1,\infty]$ the following holds 
\begin{align*}
    \norm{ABC}_p \leq \norm{A}_{\infty} \norm{B}_p \norm{C}_{\infty}.
\end{align*}
\end{lemma}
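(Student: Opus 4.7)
The plan is to reduce the three-factor inequality to the standard two-factor submultiplicativity property of Schatten $p$-norms with respect to the operator norm, and then invoke it twice. Concretely, I claim it suffices to prove the two inequalities
\[
  \|XY\|_p \leq \|X\|_\infty \|Y\|_p \quad \text{and} \quad \|YZ\|_p \leq \|Y\|_p \|Z\|_\infty,
\]
for all operators $X,Y,Z$. Granted these, setting $X=A$, $Y=B$, $Z=C$ and chaining gives $\|ABC\|_p \leq \|A\|_\infty\|BC\|_p \leq \|A\|_\infty\|B\|_p\|C\|_\infty$, which is the claim.

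To establish the first two-factor inequality I would argue at the level of singular values. Recall that the Schatten $p$-norm is $\|T\|_p = \bigl(\sum_j s_j(T)^p\bigr)^{1/p}$ for $1 \leq p < \infty$, with the convention $\|T\|_\infty = s_1(T) = \max_j s_j(T)$ for $p=\infty$, where $s_j(T)$ denotes the $j$-th singular value in decreasing order. The key technical input is the singular value inequality
\[
  s_j(XY) \leq \|X\|_\infty \, s_j(Y) \qquad \text{for all } j,
\]
which is a standard fact (see, e.g., Bhatia~\cite{bhatia97}): it follows, for instance, from the variational characterization $s_j(T) = \min_{\dim M = j-1} \max_{v \perp M,\,\|v\|=1} \|Tv\|$ together with $\|XYv\| \leq \|X\|_\infty \|Yv\|$. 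Raising to the $p$-th power and summing over $j$ (or taking the maximum, for $p=\infty$) yields $\|XY\|_p \leq \|X\|_\infty \|Y\|_p$ immediately.

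The second inequality $\|YZ\|_p \leq \|Y\|_p \|Z\|_\infty$ follows either by the analogous singular value bound $s_j(YZ) \leq \|Z\|_\infty s_j(Y)$, or more economically by taking adjoints: since $\|\cdot\|_p$ and $\|\cdot\|_\infty$ are both invariant under the adjoint, $\|YZ\|_p = \|(YZ)^\dagger\|_p = \|Z^\dagger Y^\dagger\|_p \leq \|Z^\dagger\|_\infty \|Y^\dagger\|_p = \|Z\|_\infty \|Y\|_p$, using the first inequality applied to $Z^\dagger$ and $Y^\dagger$.

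There is no real obstacle here; the only subtle point is making sure the singular value inequality $s_j(XY) \leq \|X\|_\infty s_j(Y)$ is applied correctly and that the endpoint $p=\infty$ is handled with the convention $\|\cdot\|_\infty$ equal to the largest singular value. Both steps are standard and the rest is a one-line concatenation to produce the three-factor bound stated in the lemma.
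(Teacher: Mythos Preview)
Your proof is correct. The paper does not actually supply its own proof of this lemma; it simply cites Bhatia~\cite{bhatia97} as a reference for this standard fact, so there is no paper argument to compare against. Your reduction to the two-factor bound via the singular value inequality $s_j(XY) \leq \|X\|_\infty\, s_j(Y)$ (established through the min--max variational characterization) and the adjoint trick for the right factor is precisely the standard textbook approach one finds in Bhatia.
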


\begin{lemma}[Hoeffding's inequality, Cf.~\cite{DemboZeitouni}]
\label{Hoeffding's inequality}
Let $X_1, X_2,\ldots,X_n$ be independent random variables with $a_i \leq X_i \leq b_i$. Define the empirical mean of these variables as $\overline{X}=\frac{X_1+\ldots+X_n}{n}$, then for any $t>0$
\begin{align*}
  \Pr\left\{\overline{X}-\mathbb{E}(\overline{X}) \geq  t\right\} 
                  &\leq \exp\left(-\frac{2n^2t^2}{\sum_{i=1}^n(b_i-a_i)^2}\right),\\
  \Pr\left\{\overline{X}-\mathbb{E}(\overline{X}) \leq -t\right\} 
                  &\leq \exp\left(-\frac{2n^2t^2}{\sum_{i=1}^n(b_i-a_i)^2}\right).
\end{align*}
\end{lemma}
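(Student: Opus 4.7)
The plan is to prove Hoeffding's inequality by the classical Chernoff--Cram\'er exponential moment method, combined with the auxiliary lemma (Hoeffding's lemma) that controls the moment-generating function of a bounded centered random variable.

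First, I would fix $t>0$ and $\lambda>0$, apply Markov's inequality to the non-negative random variable $\exp\bigl(\lambda \sum_{i=1}^n (X_i - \mathbb{E} X_i)\bigr)$, and use independence of the $X_i$ to factorize the expectation:
\[
  \Pr\{\overline{X}-\mathbb{E}(\overline{X}) \geq t\}
    \leq e^{-\lambda n t}\,\mathbb{E}\bigl[e^{\lambda\sum_i (X_i-\mathbb{E} X_i)}\bigr]
    = e^{-\lambda n t}\prod_{i=1}^n \mathbb{E}\bigl[e^{\lambda (X_i-\mathbb{E} X_i)}\bigr].
\]
The core step is then Hoeffding's lemma: for any random variable $Y$ with $\mathbb{E} Y=0$ and $a\leq Y\leq b$, one has $\mathbb{E} e^{\lambda Y}\leq \exp\bigl(\lambda^2(b-a)^2/8\bigr)$. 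I would prove this by writing $Y = \frac{b-Y}{b-a}a + \frac{Y-a}{b-a}b$, using convexity of $y\mapsto e^{\lambda y}$ to bound $\mathbb{E} e^{\lambda Y}$ by a convex combination of $e^{\lambda a}$ and $e^{\lambda b}$, and then showing via a second-derivative (Taylor) argument on $\varphi(u)=\log\bigl(\tfrac{b}{b-a}e^{ua}-\tfrac{a}{b-a}e^{ub}\bigr)$ that $\varphi(\lambda(b-a))\leq \lambda^2(b-a)^2/8$.

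Applying this lemma to $Y_i = X_i - \mathbb{E} X_i$ (with range of length $b_i-a_i$) and substituting back gives
\[
  \Pr\{\overline{X}-\mathbb{E}(\overline{X}) \geq t\}
    \leq \exp\!\left(-\lambda n t + \frac{\lambda^2}{8}\sum_{i=1}^n (b_i-a_i)^2\right).
\]
Minimizing the right-hand side over $\lambda>0$ by setting $\lambda = 4nt/\sum_i(b_i-a_i)^2$ yields the claimed bound $\exp\bigl(-2n^2t^2/\sum_i(b_i-a_i)^2\bigr)$. The lower-tail inequality follows by applying the same argument to the variables $-X_i$, which are independent, centered after translation, and bounded in intervals of the same length $b_i-a_i$.

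The main obstacle is proving Hoeffding's lemma cleanly; the rest is essentially a mechanical Chernoff optimization. I expect this step to require the most care, specifically the Taylor expansion bound $\varphi''(u)\leq 1/4$ uniformly in $u$, which is where the constant $1/8$ (and hence the factor $2$ in the final exponent) originates.
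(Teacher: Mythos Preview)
Your proof is correct and is the standard Chernoff--Cram\'er argument for Hoeffding's inequality. However, the paper does not actually prove this lemma: it is stated in the appendix of miscellaneous facts with a citation to \cite{DemboZeitouni} and no proof, so there is nothing to compare against. Your write-up is a complete and self-contained proof of a result the paper simply quotes.
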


The fidelity of two states is defined as
\begin{align*}
F(\rho,\sigma)= \Tr \sqrt{\sigma^{\frac{1}{2}} \rho \sigma^{\frac{1}{2}}}.
\end{align*}
When one of the arguments is pure, then 
\begin{align*}
  F(\rho,\ketbra{\psi}{\psi})
     =\sqrt{\Tr (\rho \ketbra{\psi}{\psi})}
     =\sqrt{\bra{\psi}\rho\ket{\psi}}.
\end{align*}

\begin{lemma}
\label{lemma:FvdG}
The fidelity is related to the trace norm as follows \cite{Fuchs1999}:
\begin{align*}
  1- F(\rho,\sigma) \leq \frac{1}{2}\|\rho-\sigma\|_1 \leq \sqrt{1-F(\rho,\sigma)^2} =: P(\rho,\sigma),
\end{align*}
where $P(\rho,\sigma)$ is the so-called purified distance,
or Bhattacharya distance, between quantum states.
\qed
\end{lemma}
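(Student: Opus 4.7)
\textbf{Proof plan for Lemma \ref{lemma:FvdG}.} The strategy is standard: prove the upper bound via Uhlmann's theorem and a direct computation for pure states, and prove the lower bound by reducing to classical probability distributions via an optimal fidelity-achieving measurement.

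First I would handle the upper bound $\frac{1}{2}\|\rho-\sigma\|_1 \leq \sqrt{1-F(\rho,\sigma)^2}$. The key observation is that for two pure states $\ket{\psi},\ket{\phi}$ with overlap $|\langle\psi|\phi\rangle|=f$, one can compute the eigenvalues of the rank-two operator $\proj{\psi}-\proj{\phi}$ explicitly (they are $\pm\sqrt{1-f^2}$), giving exactly $\frac{1}{2}\|\proj{\psi}-\proj{\phi}\|_1 = \sqrt{1-f^2}$. By Uhlmann's theorem, there exist purifications $\ket{\psi},\ket{\phi}$ of $\rho,\sigma$ (on a suitably enlarged Hilbert space) satisfying $|\langle\psi|\phi\rangle|=F(\rho,\sigma)$. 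Combining the pure-state identity with the monotonicity of the trace norm under partial trace yields $\frac{1}{2}\|\rho-\sigma\|_1 \leq \frac{1}{2}\|\proj{\psi}-\proj{\phi}\|_1 = \sqrt{1-F(\rho,\sigma)^2}$.

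Next I would establish the lower bound $1-F(\rho,\sigma) \leq \frac{1}{2}\|\rho-\sigma\|_1$ by first proving the classical analogue and then invoking the variational characterization of fidelity. For two probability distributions $p=(p_x),q=(q_x)$, define the classical fidelity $F_C(p,q)=\sum_x\sqrt{p_x q_x}$. Then
\begin{align*}
1-F_C(p,q) \;=\; \frac{1}{2}\sum_x \bigl(\sqrt{p_x}-\sqrt{q_x}\bigr)^2 \;\leq\; \frac{1}{2}\sum_x \bigl|\sqrt{p_x}-\sqrt{q_x}\bigr|\cdot\bigl|\sqrt{p_x}+\sqrt{q_x}\bigr| \;=\; \frac{1}{2}\|p-q\|_1.
\end{align*}
To lift this to the quantum case I would use Fuchs's result that $F(\rho,\sigma) = \min_{\{M_x\}} F_C(p,q)$, where the minimum runs over POVMs $\{M_x\}$ and $p_x=\Tr\rho M_x$, $q_x=\Tr\sigma M_x$. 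Choosing an optimal measurement $\{M_x\}$, and using that any POVM is a CPTP map so that $\|p-q\|_1 \leq \|\rho-\sigma\|_1$, I obtain
\begin{align*}
1-F(\rho,\sigma) \;=\; 1-F_C(p,q) \;\leq\; \tfrac{1}{2}\|p-q\|_1 \;\leq\; \tfrac{1}{2}\|\rho-\sigma\|_1.
\end{align*}

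The pure-state identity and Uhlmann's theorem are the only substantive ingredients for the upper bound, both standard. The main subtlety will be the lower bound, where one must invoke (or prove from scratch, using Lemma \ref{norm1_trace} and the operator-monotonicity of the square root) the variational formula $F(\rho,\sigma)=\min_{\{M_x\}}\sum_x\sqrt{\Tr\rho M_x\,\Tr\sigma M_x}$. Once these two ingredients are in place, the inequalities follow by short manipulations, and no obstacle of a computational nature arises.
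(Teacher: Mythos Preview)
Your proof is correct and follows the standard Fuchs--van de Graaf argument. However, there is nothing to compare against: the paper does not prove this lemma at all. It is stated in the appendix of miscellaneous facts with a citation to \cite{Fuchs1999} and an immediate \qedsymbol, treated as a known result from the literature rather than something to be established.
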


\begin{lemma}[{Pinsker's inequality, cf.~\cite{Schumacher2002}}]
\label{lemma:Pinsker}
The trace norm and relative entropy are related by 
\begin{align*}
  \|\rho-\sigma\|_1 \leq \sqrt{2 \ln 2 S(\rho\|\sigma)}. 
\end{align*} \qed
\end{lemma}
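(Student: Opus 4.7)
\begin{proof-of}[{of Lemma \ref{lemma:Pinsker} (plan)}]
My plan is to reduce the quantum inequality to its classical counterpart by a suitable two-outcome measurement, using the data processing inequality for relative entropy together with the Helstrom characterisation of the trace norm. Concretely, I would first recall that for any self-adjoint operator $X = \rho-\sigma$, Lemma \ref{norm1_trace} and the Jordan decomposition $X = X_+ - X_-$ into positive and negative parts yield a projector $P$ (the support of $X_+$) such that $\tfrac{1}{2}\|\rho-\sigma\|_1 = \Tr P(\rho-\sigma)$. The associated POVM $(P,\1-P)$ therefore converts the quantum states into two Bernoulli distributions $\mathbf{p}=(\Tr P\rho,\,1-\Tr P\rho)$ and $\mathbf{q}=(\Tr P\sigma,\,1-\Tr P\sigma)$ whose total variation distance is exactly $\tfrac{1}{2}\|\rho-\sigma\|_1$, so that $\|\mathbf{p}-\mathbf{q}\|_1 = \|\rho-\sigma\|_1$.

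The second step is to invoke the data processing (monotonicity) inequality for the quantum relative entropy under the CPTP map implementing this measurement, which gives $S(\rho\|\sigma) \geq D(\mathbf{p}\|\mathbf{q})$, where $D$ denotes the classical Kullback--Leibler divergence (to base $2$, consistent with the paper's convention).

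The third and final step is the classical Pinsker inequality for binary distributions, $\|\mathbf{p}-\mathbf{q}\|_1 \leq \sqrt{2\ln 2 \cdot D(\mathbf{p}\|\mathbf{q})}$. Chaining the three steps yields
\[
  \|\rho-\sigma\|_1 = \|\mathbf{p}-\mathbf{q}\|_1
                    \leq \sqrt{2\ln 2 \cdot D(\mathbf{p}\|\mathbf{q})}
                    \leq \sqrt{2\ln 2 \cdot S(\rho\|\sigma)},
\]
as claimed.

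The only nontrivial ingredient is the classical binary Pinsker bound, which I would establish by elementary calculus: define $f(p,q) = D(\mathbf{p}\|\mathbf{q}) - \tfrac{1}{2\ln 2}(p-q)^2$ for $p,q\in[0,1]$, fix $q$, and check that $\partial_p f$ vanishes at $p=q$ while $\partial_p^2 f \geq 0$ on the interval, so $f\geq 0$ everywhere. The main (and really only) conceptual hurdle is the choice of the right measurement; once one recognises that the Helstrom projector simultaneously saturates the variational expression for the trace norm and feeds into data processing, the rest is mechanical. The relative-entropy monotonicity under CPTP maps itself is a standard tool from earlier in the paper (used, e.g., in the converse arguments of Chapters 3--6) and can be taken as known.
\end{proof-of}
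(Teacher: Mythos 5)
The paper does not prove this lemma; it is stated without proof as a standard fact, with a citation to~\cite{Schumacher2002}, so there is no in-paper argument to compare against. Your three-step reduction — extract the Helstrom projector $P$ that attains $\tfrac12\|\rho-\sigma\|_1 = \Tr P(\rho-\sigma)$, apply monotonicity of relative entropy under the measurement map $(P,\1-P)$, then invoke the binary classical Pinsker inequality — is the standard and correct route to the quantum statement.

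There is, however, a numerical slip in your last step that prevents the argument from closing as written. You want to show $\|\mathbf{p}-\mathbf{q}\|_1 \leq \sqrt{2\ln 2\cdot D(\mathbf{p}\|\mathbf{q})}$; since $\|\mathbf{p}-\mathbf{q}\|_1 = 2|p-q|$, this is equivalent to $D(\mathbf{p}\|\mathbf{q}) \geq \tfrac{2}{\ln 2}(p-q)^2$. Your auxiliary function $f(p,q) = D(\mathbf{p}\|\mathbf{q}) - \tfrac{1}{2\ln 2}(p-q)^2$ therefore has the constant off by a factor of $4$: proving $f\geq 0$ only gives $\|\mathbf{p}-\mathbf{q}\|_1 \leq \sqrt{8\ln 2\cdot D}$, which is too weak to chain into the claimed bound. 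The fix is to set $f(p,q) = D(\mathbf{p}\|\mathbf{q}) - \tfrac{2}{\ln 2}(p-q)^2$. Your second-derivative argument then works exactly at this constant: $\partial_p^2 D = \tfrac{1}{\ln 2\, p(1-p)}$, so
\[
  \partial_p^2 f = \frac{1}{\ln 2\, p(1-p)} - \frac{4}{\ln 2} \geq 0
\]
precisely because $p(1-p)\leq \tfrac14$; combined with $f(q,q)=0$ and $\partial_p f|_{p=q}=0$, convexity gives $f\geq 0$. So $\tfrac{2}{\ln 2}$ is both the constant you need and the largest one for which the calculus argument goes through — worth noting, since it shows the binary Pinsker bound is already tight within this method.
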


\begin{lemma}[{Uhlmann~\cite{UHLMANN1976}}]
\label{lemma:Uhlmann1}
Let $\rho^A$ and $\sigma^A$ be two quantum states with fidelity $F(\rho^A,\sigma^A)$. Let $\rho^{AB}$ and $\sigma^{AC}$ be purifications of these two states, then there exists an isometry $V:{B\to C} $ such that  
\begin{align*}
 \phantom{========:}
 F\left( (\1_A \otimes V^{B\to C})\rho^{AB}(\1_A \otimes V^{B\to C})^{\dagger},\sigma^{AC} \right)  
= F(\rho^A,\sigma^A). 
\phantom{========:} \blacksquare
\end{align*}
\end{lemma}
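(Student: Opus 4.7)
The plan is to reduce the statement to the well-known matrix identity $\max_U |\Tr(XU)| = \|X\|_1$ via canonical purifications.

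First, I would spectrally decompose $\rho^A = \sum_i p_i \ketbra{i}{i}$ and $\sigma^A = \sum_j q_j \ketbra{j}{j}$, and introduce the canonical purifications $\ket{\Omega_\rho}^{AA'} = \sum_i \sqrt{p_i}\,\ket{i}^A\ket{i}^{A'}$ and $\ket{\Omega_\sigma}^{AA''} = \sum_j \sqrt{q_j}\,\ket{j}^A\ket{j}^{A''}$, where $A'$ and $A''$ are copies of $A$ large enough to accommodate the respective supports. A standard consequence of the Schmidt decomposition is that any purification $\rho^{AB}$ of $\rho^A$ can be written as $\ketbra{\rho}{\rho}^{AB}$ with $\ket{\rho}^{AB} = (\1_A \otimes W_1)\ket{\Omega_\rho}^{AA'}$ for some isometry $W_1 : A' \hookrightarrow B$, and similarly $\ket{\sigma}^{AC} = (\1_A \otimes W_2)\ket{\Omega_\sigma}^{AA''}$ for some $W_2 : A'' \hookrightarrow C$. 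Hence it suffices to find an isometry $V : B \to C$ and then set $\widetilde V := W_2^\dagger V W_1$, reducing everything to the canonical purifications.

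Next, since the images $(\1_A \otimes V)\rho^{AB}(\1_A \otimes V)^\dagger$ and $\sigma^{AC}$ are both pure, their fidelity reduces to $|\bra{\sigma}^{AC}(\1_A \otimes V)\ket{\rho}^{AB}|$. Identifying $A' \cong A'' \cong A$ through the bases $\{\ket{i}\}$ and $\{\ket{j}\}$, a direct computation gives
\begin{equation*}
\bra{\Omega_\sigma}(\1_A \otimes \widetilde V)\ket{\Omega_\rho} \;=\; \Tr\!\bigl(\sqrt{\sigma}\,\widetilde V^{\,T}\sqrt{\rho}\bigr),
\end{equation*}
where $\widetilde V^{\,T}$ is the transpose of $\widetilde V$ in the chosen bases. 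Maximizing the modulus of this quantity over all isometries (equivalently, over unitaries after completing to the full space) $\widetilde V$ reduces via polar decomposition to the identity $\max_U |\Tr(XU)| = \|X\|_1$ applied with $X = \sqrt{\rho}\sqrt{\sigma}$, yielding the value $\|\sqrt{\rho}\sqrt{\sigma}\|_1 = \Tr\sqrt{\rho^{1/2}\sigma\rho^{1/2}} = F(\rho^A,\sigma^A)$.

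The converse inequality, $F\bigl((\1_A\otimes V)\rho^{AB}(\1_A\otimes V)^\dagger,\sigma^{AC}\bigr) \leq F(\rho^A,\sigma^A)$ for every $V$, is immediate from monotonicity of the fidelity under the partial trace over the purifying system, so equality is the sharp statement to be proved. The only subtlety in the plan, and what I expect to be the main obstacle to making the argument fully rigorous, is the dimensional bookkeeping: the polar-decomposition maximizer naturally lives on the canonical spaces $A' \cong A''$, and one must verify that it can always be lifted to a genuine isometry $V : B \to C$ between the originally given purifying spaces. This is handled by embedding $A'$ into $B$ and $A''$ into $C$ via $W_1, W_2$ and, if necessary, enlarging $C$ (any purifying space $C$ must have dimension at least $\mathrm{rank}\,\sigma^A$) so that the required isometry exists; after this check the chain of equalities delivers the claim.
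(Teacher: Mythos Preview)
The paper does not actually prove this lemma: it is listed in the appendix of miscellaneous facts, cited to Uhlmann's original work, and closed with a $\blacksquare$ but no argument. Your proposal is the standard textbook proof of Uhlmann's theorem and is correct in outline; the one slip is that your formula $\widetilde V := W_2^\dagger V W_1$ runs in the wrong direction (you want to construct $V$ from an optimal $\widetilde V$ on the canonical spaces, i.e.\ $V := W_2 \widetilde V W_1^\dagger$ extended isometrically on the complement of $\operatorname{ran} W_1$), but you already flag and resolve exactly this lifting issue in your final paragraph.
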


A consequence of this, due to \cite[Lemma~2.2]{Devetak2008_1}, is as follows.
\begin{lemma}
\label{lemma:Uhlmann2}
Let $\rho^A$ and $\sigma^A$ be two quantum states with trace distance 
$\frac12 \|\rho^A-\sigma^A\|_1 \leq \epsilon$, and
let $\rho^{AB}$ and $\sigma^{AC}$ be purifications of these two states.
Then there exists an isometry $V:{B\to C}$ such that  
\begin{align*}
  \phantom{=========:}
  \left\| (\1_A \otimes V^{B\to C})\rho^{AB} (\1_A \otimes V^{B\to C})^{\dagger}
                                   \! \! -\! \sigma^{AC} \right\|_1 \!  
                                   \leq\! \sqrt{\epsilon(2-\epsilon)} \,.
  \phantom{=========} \blacksquare
\end{align*}
\end{lemma}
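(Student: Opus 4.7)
The plan is to combine Uhlmann's theorem (Lemma \ref{lemma:Uhlmann1}) with the Fuchs-van de Graaf inequalities (Lemma \ref{lemma:FvdG}) in a standard two-step chain: first convert the hypothesized trace-distance bound into a fidelity bound on $A$, then use Uhlmann to lift the fidelity bound to the purifications via an isometry, and finally convert back to a trace-distance bound on the extended systems.

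First, from the hypothesis $\frac{1}{2}\|\rho^A-\sigma^A\|_1 \leq \epsilon$, I would invoke the left-hand inequality of Lemma \ref{lemma:FvdG}, namely $1-F(\rho^A,\sigma^A) \leq \frac{1}{2}\|\rho^A-\sigma^A\|_1$, to conclude that $F(\rho^A,\sigma^A) \geq 1-\epsilon$. Next, applying Uhlmann's theorem (Lemma \ref{lemma:Uhlmann1}) to the purifications $\rho^{AB}$ and $\sigma^{AC}$ yields an isometry $V:B\hookrightarrow C$ such that
\begin{equation*}
F\!\left((\1_A\otimes V)\rho^{AB}(\1_A\otimes V)^{\dagger},\sigma^{AC}\right) \;=\; F(\rho^A,\sigma^A) \;\geq\; 1-\epsilon.
\end{equation*}
Both sides of this identity are now fidelities between states on $AC$, so I can invoke the right-hand inequality of Lemma \ref{lemma:FvdG} (the purified-distance bound) to translate the fidelity bound back into a trace-distance bound. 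Setting $F' := F((\1_A\otimes V)\rho^{AB}(\1_A\otimes V)^{\dagger},\sigma^{AC})$, this yields
\begin{equation*}
\tfrac{1}{2}\bigl\|(\1_A\otimes V)\rho^{AB}(\1_A\otimes V)^{\dagger}-\sigma^{AC}\bigr\|_1 \;\leq\; \sqrt{1-(F')^2} \;\leq\; \sqrt{1-(1-\epsilon)^2} \;=\; \sqrt{\epsilon(2-\epsilon)},
\end{equation*}
since $x\mapsto\sqrt{1-x^2}$ is decreasing on $[0,1]$.

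There is no hard step here — the whole argument is a routine chaining of the two canonical inequalities. The only small things to watch are that (i) $V$ really is a genuine isometry (Uhlmann's theorem is usually stated for unitaries, but when the purifying spaces are of different dimension one embeds the smaller one into the larger and obtains an isometry, which is exactly what Lemma \ref{lemma:Uhlmann1} asserts); and (ii) the constant factor agrees with the stated bound, which amounts to the elementary computation $1-(1-\epsilon)^2 = \epsilon(2-\epsilon)$. If a factor of $2$ discrepancy appears between my derivation and the stated bound, it would signal a convention mismatch between the two Fuchs-van de Graaf inequalities as written in Lemma \ref{lemma:FvdG}, but it does not affect the structure of the argument.
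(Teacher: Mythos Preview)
Your argument is correct and is exactly the standard derivation. The paper does not actually give its own proof of this lemma: it states it as a consequence of Uhlmann's theorem, cites \cite[Lemma~2.2]{Devetak2008_1}, and closes with a $\blacksquare$. So there is nothing to compare against beyond noting that your two-step chaining of Lemma~\ref{lemma:FvdG} through Lemma~\ref{lemma:Uhlmann1} is precisely the intended route.

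Your caveat about the factor of $2$ is also well placed: your derivation yields $\tfrac{1}{2}\|\cdot\|_1 \leq \sqrt{\epsilon(2-\epsilon)}$, i.e.\ $\|\cdot\|_1 \leq 2\sqrt{\epsilon(2-\epsilon)}$, whereas the lemma as printed bounds the full trace norm by $\sqrt{\epsilon(2-\epsilon)}$. Since the purifications are pure, one has exactly $\|\psi-\phi\|_1 = 2\sqrt{1-F^2}$, so the factor of $2$ cannot be removed; the printed bound appears to be missing it. This is a cosmetic discrepancy in the statement and does not affect your proof or the places where the lemma is applied in the paper (where only the order of magnitude matters).
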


\begin{lemma}[{Fannes~\cite{Fannes1973}; Audenaert~\cite{Audenaert2007}}]
\label{Fannes-Audenaert inequality}
Let $\rho$ and $\sigma$ be two states on Hilbert space 
$A$ with trace distance 
$\frac12\|\rho-\sigma\|_1 \leq \epsilon$, then
\begin{align*}
  |S(\rho)-S(\sigma)| \leq \epsilon\log |A| + h(\epsilon),
\end{align*}
where $h(\epsilon)=-\epsilon \log \epsilon -(1-\epsilon)\log (1-\epsilon)$ \aw{is the binary entropy}.
\end{lemma}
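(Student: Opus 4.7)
The plan is to prove the Fannes-Audenaert inequality by first reducing from the quantum to the classical setting, and then handling the classical problem of bounding the Shannon-entropy difference of two probability distributions that are close in total variation.

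First, I would diagonalize $\rho$ and $\sigma$ and arrange their eigenvalues in decreasing order as $p_1 \geq \cdots \geq p_d$ and $q_1 \geq \cdots \geq q_d$, with $d = |A|$. By the Mirsky inequality, a consequence of Ky Fan's dominance theorem applied to Hermitian operators, one has
\[
  \sum_i |p_i - q_i| \leq \|\rho - \sigma\|_1 \leq 2\epsilon.
\]
Since the von Neumann entropy depends only on the spectrum, it suffices to establish the classical statement $|H(p) - H(q)| \leq \epsilon \log d + h(\epsilon)$ for probability vectors $p,q$ on $d$ outcomes with $\tfrac12 \|p-q\|_1 \leq \epsilon$.

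Second, for the classical problem I would follow Audenaert's coupling/variational approach. Assume without loss of generality $H(p) \geq H(q)$. Decompose $\rho - \sigma$ into positive and negative parts; equivalently, write $p = (1-\epsilon)\omega + \epsilon p'$ and $q = (1-\epsilon)\omega + \epsilon q'$ for a common subnormalized part $\omega$ and disjoint probability vectors $p'$, $q'$. Concavity of the Shannon entropy then gives
\[
  H(p) - H(q) \leq h(\epsilon) + \epsilon\bigl( H(p') - H(q') \bigr) \leq h(\epsilon) + \epsilon \log d,
\]
the last step using the trivial bound $H(p') \leq \log d$ and $H(q') \geq 0$. To make this estimate sharp, I would identify the extremal configuration: up to relabelling, take $q$ to be a point mass and $p = (1-\epsilon)\delta_1 + \tfrac{\epsilon}{d-1}\sum_{i \geq 2} \delta_i$; a direct calculation yields $H(p) - H(q) = h(\epsilon) + \epsilon \log(d-1) \leq \epsilon \log d + h(\epsilon)$, showing that the estimate cannot be improved in general.

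The hard part will be the sharpness analysis and in particular proving that the coupling yielding the bound exists with the correct constants uniformly in $\epsilon \in [0,1]$. The na\"{i}ve Fannes-style pointwise estimate $|x\log x - y\log y| \leq |x-y|\log\tfrac{1}{|x-y|} + \mathcal{O}(|x-y|)$ summed over coordinates produces a bound like $\epsilon \log(d/\epsilon) + \mathcal{O}(\epsilon)$, which is weaker and only valid for small $\epsilon$. Obtaining the tight form $\epsilon \log|A| + h(\epsilon)$ valid on the full range requires the coupling decomposition above together with a careful check that the extremal distributions really saturate the concavity estimate. Symmetry in $p \leftrightarrow q$ then removes the working assumption $H(p) \geq H(q)$, completing the proof.
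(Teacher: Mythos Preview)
The paper does not supply a proof of this lemma at all; it is listed in the appendix of auxiliary facts with citations to Fannes and Audenaert and used as a black box throughout. Your sketch is correct and is essentially Audenaert's own argument: reduce to the classical case via the eigenvalue rearrangement (Mirsky) inequality, then split off the common part $\omega \propto \min(p,q)$ and bound the residual.

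Two small points worth tightening. First, the displayed step ``concavity gives $H(p)-H(q)\leq h(\epsilon)+\epsilon(H(p')-H(q'))$'' actually uses concavity only for the lower bound on $H(q)$; the matching upper bound on $H(p)$ is the mixing/grouping inequality $H((1-\epsilon)\omega+\epsilon p')\leq (1-\epsilon)H(\omega)+\epsilon H(p')+h(\epsilon)$. Second, your decomposition works with $\epsilon'=\tfrac12\|p-q\|_1$ exactly, whereas the statement allows any $\epsilon\geq\epsilon'$. The function $t\mapsto t\log d+h(t)$ is increasing on $[0,\tfrac{d}{d+1}]$ and bounded below by $\log d$ on $[\tfrac{d}{d+1},1]$, so the trivial bound $|S(\rho)-S(\sigma)|\leq\log d$ handles the remaining range. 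Both are routine and your plan goes through.
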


\aw{There is also an extension} of the Fannes inequality for the conditional entropy; 
this lemma is very useful \aw{especially} when the dimension of the 
system \aw{conditioned on} is unbounded.

\begin{lemma}[{Alicki-Fannes~\cite{Alicki2004}; Winter~\cite{Winter2016}}]
\label{AFW lemma}
Let $\rho$ and $\sigma$ be two states on a bipartite Hilbert space 
$A\otimes B$ with trace distance $\frac12\|\rho-\sigma\|_1 \leq \epsilon$, then
\begin{align*}
  |S(A|B)_{\rho}\!-\!S(A|B)_{\sigma}| \leq 2\epsilon \log |A| \!+\! (1+\epsilon)h\left(\frac{\epsilon}{1+\epsilon}\right)\!.
\end{align*}\qed
\end{lemma}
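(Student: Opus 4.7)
The plan is to follow the strategy introduced by Winter in \cite{Winter2016}, which refines the original Alicki--Fannes argument by introducing an auxiliary ``bridge'' state and a classical flag system so that operator--concavity of the conditional entropy can be exploited cleanly.

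First I would reduce to the case where the parameter in the bound is the actual trace distance $\delta := \tfrac{1}{2}\|\rho-\sigma\|_1 \le \epsilon$, using monotonicity of $x \mapsto 2x\log|A| + (1+x)h\!\left(\tfrac{x}{1+x}\right)$ at the end. By Jordan--Hahn, write $\rho - \sigma = \delta(\omega_+ - \omega_-)$ with states $\omega_\pm$ of orthogonal support on $AB$. Then
\[
  \omega_\ast := \frac{1}{1+\delta}\left(\sigma + \delta\,\omega_+\right) = \frac{1}{1+\delta}\left(\rho + \delta\,\omega_-\right)
\]
is a state on $AB$ that admits two distinguished convex decompositions, one through $\rho$ and one through $\sigma$, each with weight $t := \tfrac{\delta}{1+\delta}$ on the ``correction'' term.

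Next I would introduce a classical register $X$ and consider the cq-states
\[
  \Omega_\rho^{XAB} := (1-t)\,|0\rangle\!\langle 0|^X \!\otimes\! \rho^{AB} + t\,|1\rangle\!\langle 1|^X \!\otimes\! \omega_-^{AB},
\]
\[
  \Omega_\sigma^{XAB} := (1-t)\,|0\rangle\!\langle 0|^X \!\otimes\! \sigma^{AB} + t\,|1\rangle\!\langle 1|^X \!\otimes\! \omega_+^{AB},
\]
both of which have the \emph{same} $AB$-marginal $\omega_\ast$. For any cq-state of this form, classicality of $X$ gives $S(A|BX) = (1-t)S(A|B)_{\text{branch }0} + t\,S(A|B)_{\text{branch }1}$, and the chain rule together with $I(A:X|B) \le H(X) = h(t)$ yields the two-sided estimate
\[
  S(A|B)_{\omega_\ast} - h(t) \;\le\; S(A|BX)_{\Omega} \;\le\; S(A|B)_{\omega_\ast}.
\]
Applying the lower bound to $\Omega_\rho$ and the upper bound to $\Omega_\sigma$ and then solving for $S(A|B)_\rho$ and $S(A|B)_\sigma$, respectively, the common term $S(A|B)_{\omega_\ast}$ cancels in the difference. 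The bounds $|S(A|B)_{\omega_\pm}| \le \log|A|$ (which hold for any state on $AB$, Araki--Lieb) then collapse everything to
\[
  S(A|B)_\sigma - S(A|B)_\rho \;\le\; \frac{2t\log|A| + h(t)}{1-t} \;=\; 2\delta\log|A| + (1+\delta)\,h\!\left(\frac{\delta}{1+\delta}\right).
\]

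Finally, since the roles of $\rho$ and $\sigma$ (and correspondingly $\omega_+$, $\omega_-$) are symmetric, the same bound holds for $S(A|B)_\rho - S(A|B)_\sigma$, giving the absolute value. The only subtlety, and the place I expect the most care to be needed, is the monotonicity statement used to replace $\delta$ by $\epsilon$ on the right-hand side; this amounts to checking that the explicit function $\delta \mapsto 2\delta\log|A| + (1+\delta)h(\delta/(1+\delta))$ is nondecreasing on $[0,1]$, which follows by a direct derivative computation. Everything else is essentially bookkeeping with entropy inequalities that are already invoked elsewhere in the paper.
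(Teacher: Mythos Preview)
The paper does not supply its own proof of this lemma: it appears in the appendix of cited auxiliary facts and is closed immediately with a \qedsymbol, deferring to the references \cite{Alicki2004,Winter2016}. Your proposal is a correct and complete rendition of Winter's refinement of the Alicki--Fannes argument. The bridge state $\omega_\ast$, the two cq-extensions, the sandwich $S(A|B)_{\omega_\ast}-h(t)\le S(A|BX)\le S(A|B)_{\omega_\ast}$, and the cancellation of $S(A|B)_{\omega_\ast}$ all go through as you describe, and the final monotonicity step is routine (indeed, $(1+\delta)h\!\bigl(\tfrac{\delta}{1+\delta}\bigr)=-\delta\log\delta+(1+\delta)\log(1+\delta)$, whose derivative together with the linear term is $2\log|A|+\log\tfrac{1+\delta}{\delta}>0$). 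So there is nothing to compare against; your write-up is essentially the proof from the cited source.
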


\begin{lemma}
\label{full_support_lemma}
Let $\rho$ be a state with full support on the Hilbert space \aw{$A$}, i.e.~it 
\aw{has positive minimum} eigenvalue $\lambda_{\min}$, and
let $\ket{\psi}^{AR}$ be a purification of $\rho$ on the Hilbert space \aw{$A \otimes R$}. 
Then any purification of another state $\sigma$ on \aw{$A$} is of the form 
\begin{align*}
  (\1_A \otimes T) \ket{\psi}^{AR},
\end{align*}
where $T$ is an operator acting on system $R$ with $\| T \|_{\infty} \le \frac{1}{\sqrt{\lambda_{\min}}}$.
\begin{proof}
Let $\rho=\sum_i \lambda_i \ketbra{e_i}{e_i}$ and $\sigma=\sum_j \mu_j \ketbra{f_j}{f_j}$ be spectral decompositions  of the states. The purification of $\rho$  is  $\ket{\psi}^{AR}=\sum_i \sqrt{\lambda_i} \ket{e_i} \ket{i}$. Define $\ket{\phi}^{AR}=\sum_j \sqrt{\mu_j} \ket{f_j} \ket{j}$.
Any purification of the state $\sigma$ is of the form  
$(\1_A \otimes V) \ket{\phi}^{AR}$ where $V$ is an isometry acting on system $R$. 
Write the eigenbasis $\set{\ket{f_j}}$ as linear combination of eigenbasis $\set{\ket{e_j}}$, that is, $\ket{f_j}=\sum_i\alpha_{ij}\ket{e_i}$. Then, we have $\ket{\phi}^{AR}=\sum_{i,j} \sqrt{\mu_j} \alpha_{ij} \ket{e_i} \ket{j}$. Define the operator  $P=\sum_{jk} p_{jk} \ketbra{j}{k}$ where $p_{jk}=\alpha_{kj} \sqrt{\frac{\mu_j}{\lambda_k}}$. It is immediate to see that
\begin{align*}
    \ket{\phi}^{AR}=(\1_A \otimes P) \ket{\psi}^{AR}. 
\end{align*}
Thus, we have $(\1_A \otimes V) \ket{\phi}^{AR} = (\1_A \otimes VP) \ket{\psi}^{AR}$. 
Defining $T=VP$, we then have 
\begin{align*}
\lambda_{\max} (T^{\dagger}T)&=\lambda_{\max} (P^{\dagger}P)\\
   &\leq \Tr (P^{\dagger}P) \\
 &=\sum_{j,k}|p_{jk}|^2 \\ 
 &= \sum_{j,k}\frac{|\alpha_{kj}|^2\mu_j}{\lambda_k}  \\
 &\leq \frac{1}{\lambda_{\min}},
\end{align*}
where the last inequality follows \aw{from the orthonormality of} the basis $\set{\ket{f_j}}$. 
\end{proof}
\end{lemma}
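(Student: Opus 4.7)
The plan is to argue via canonical purifications and the standard fact that any two purifications of the same state differ by an isometry on the purifying system. The goal is to exhibit an operator $T$ on $R$ such that the desired purification of $\sigma$ equals $(\1_A \otimes T)\ket{\psi}^{AR}$, and then bound $\|T\|_\infty$ using the full-support assumption on $\rho$.

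First, I would fix spectral decompositions $\rho = \sum_i \lambda_i \proj{e_i}$ and $\sigma = \sum_j \mu_j \proj{f_j}$, and write down the ``standard'' purifications $\ket{\psi}^{AR} = \sum_i \sqrt{\lambda_i}\ket{e_i}\ket{i}$ and $\ket{\phi}^{AR} = \sum_j \sqrt{\mu_j}\ket{f_j}\ket{j}$ (assuming, w.l.o.g., that $R$ has dimension at least $|A|$). Every purification of $\sigma$ on $AR$ is then of the form $(\1_A \otimes V)\ket{\phi}^{AR}$ for some isometry $V$ on $R$. Expanding $\ket{f_j} = \sum_i \alpha_{ij}\ket{e_i}$ and moving the amplitudes onto $R$, one finds the unique operator $P$ on $R$ with matrix elements $p_{jk} = \alpha_{kj}\sqrt{\mu_j/\lambda_k}$ (well-defined since $\lambda_k > 0$) such that $\ket{\phi}^{AR} = (\1_A \otimes P)\ket{\psi}^{AR}$. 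Setting $T := VP$ gives the required form.

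The only non-routine step is the norm bound. I would use the elementary inequality $\|T\|_\infty \leq \|T\|_2$ (the operator norm is bounded by the Hilbert–Schmidt norm), together with $\|V\|_\infty \leq 1$, to reduce to bounding $\|P\|_2^2 = \sum_{j,k}|p_{jk}|^2$. Substituting the explicit form of $p_{jk}$ yields
\[
  \|P\|_2^2 = \sum_{j,k} |\alpha_{kj}|^2 \frac{\mu_j}{\lambda_k}
            \leq \frac{1}{\lambda_{\min}} \sum_{j,k} |\alpha_{kj}|^2 \mu_j
            = \frac{1}{\lambda_{\min}} \sum_j \mu_j = \frac{1}{\lambda_{\min}},
\]
where the crucial step is the orthonormality of $\{\ket{f_j}\}$, giving $\sum_k |\alpha_{kj}|^2 = 1$, and $\Tr\sigma = 1$. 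This yields $\|T\|_\infty \leq 1/\sqrt{\lambda_{\min}}$ as claimed. The expected main obstacle is really conceptual rather than technical: one must be careful that $P$ is well-defined precisely because $\rho$ has full support (so no division by zero occurs), which is the only place the hypothesis is used, and this also explains why the bound degrades as $\rho$ becomes closer to rank-deficient.
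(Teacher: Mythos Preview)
Your proposal is correct and follows essentially the same approach as the paper's proof: the same canonical purifications, the same explicit construction of $P$ with matrix elements $p_{jk}=\alpha_{kj}\sqrt{\mu_j/\lambda_k}$, and the same Hilbert--Schmidt bound $\|P\|_2^2 \leq 1/\lambda_{\min}$ via orthonormality of $\{\ket{f_j}\}$. The only cosmetic difference is that the paper writes $\lambda_{\max}(T^\dagger T)=\lambda_{\max}(P^\dagger P)$ using $V^\dagger V=\1$ directly, whereas you invoke $\|V\|_\infty\le 1$ and submultiplicativity; the remaining steps are identical.
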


\begin{lemma}[Gentle Operator Lemma \cite{winter1999_2,Ogawa2007,wilde_2013}]
\label{Gentle Operator Lemma}
If a quantum state $\rho$ with diagonalization $\rho=\sum_j p_j \pi_j$ projects onto operator $\Lambda$ with probability  $1- \epsilon$, which is bounded as $0 \leq \Lambda \leq I$, i.e. $\Tr(\rho \Lambda) \geq 1- \epsilon$ then
\begin{align*}
    \sum_j p_j\norm{\pi_j-\sqrt{\Lambda}\pi_j \sqrt{\Lambda}}_1 \leq 2\sqrt{\epsilon}.
\end{align*}
\end{lemma}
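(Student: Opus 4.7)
The plan is to reduce the statement to a bound for each pure component $\pi_j$ in the spectral decomposition, and then aggregate using Jensen's inequality. Concretely, I would first write, for each rank-one projector $\pi_j=|\psi_j\rangle\langle\psi_j|$, the identity
\[
  \sqrt{\Lambda}\pi_j\sqrt{\Lambda}-\pi_j
     = \sqrt{\Lambda}|\psi_j\rangle\langle\psi_j|(\sqrt{\Lambda}-\1)
       + (\sqrt{\Lambda}-\1)|\psi_j\rangle\langle\psi_j|,
\]
and use the triangle inequality together with the fact that a rank-one operator
$|a\rangle\langle b|$ has trace norm $\|a\|\cdot\|b\|$. Since $\|\sqrt{\Lambda}|\psi_j\rangle\|\leq 1$, both rank-one summands are bounded in trace norm by $\|(\1-\sqrt{\Lambda})|\psi_j\rangle\|$, which gives the pure-state bound
\[
  \|\pi_j-\sqrt{\Lambda}\pi_j\sqrt{\Lambda}\|_1
     \leq 2\,\|(\1-\sqrt{\Lambda})|\psi_j\rangle\|.
\]

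Next, I would square the vector norm on the right to get $\langle\psi_j|(\1-\sqrt{\Lambda})^2|\psi_j\rangle$, and invoke the operator inequality $(\1-\sqrt{\Lambda})^2 \leq \1-\Lambda$, valid whenever $0\leq\Lambda\leq\1$. This follows scalar-wise: for $\lambda\in[0,1]$, $(1-\sqrt{\lambda})^2=1-2\sqrt{\lambda}+\lambda\leq 1-\lambda$ because $\sqrt{\lambda}\geq\lambda$. Consequently,
\[
  \|\pi_j-\sqrt{\Lambda}\pi_j\sqrt{\Lambda}\|_1
     \leq 2\sqrt{1-\Tr(\pi_j\Lambda)}.
\]

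Finally, I would multiply by $p_j$, sum over $j$, and apply Jensen's inequality using the concavity of the square root:
\[
  \sum_j p_j\|\pi_j-\sqrt{\Lambda}\pi_j\sqrt{\Lambda}\|_1
     \leq 2\sum_j p_j\sqrt{1-\Tr(\pi_j\Lambda)}
     \leq 2\sqrt{\sum_j p_j\bigl(1-\Tr(\pi_j\Lambda)\bigr)}
     = 2\sqrt{1-\Tr(\rho\Lambda)} \leq 2\sqrt{\epsilon},
\]
where the last equality uses $\rho=\sum_j p_j\pi_j$ and the hypothesis $\Tr(\rho\Lambda)\geq 1-\epsilon$.

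There is no real obstacle here; the only slightly subtle step is the operator inequality $(\1-\sqrt{\Lambda})^2\leq\1-\Lambda$, but once this scalar observation is made the rest is essentially routine. The overall structure is thus: (i) a clean algebraic splitting of the operator difference into two rank-one pieces; (ii) a convexity/operator-monotonicity argument to relate $\|(\1-\sqrt{\Lambda})|\psi_j\rangle\|^2$ to the deficit $1-\Tr(\pi_j\Lambda)$; (iii) a Jensen step to move the square root outside the convex combination.
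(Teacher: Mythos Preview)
Your proof is correct; the identity splitting, the operator inequality $(\1-\sqrt{\Lambda})^2\leq\1-\Lambda$, and the Jensen step are all valid and combine exactly as you describe. The paper does not actually give its own proof of this lemma but simply cites the standard references \cite{winter1999_2,Ogawa2007,wilde_2013}, and your argument is essentially the canonical one found there.
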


\begin{definition}
Let $\rho_1,\ldots,\rho_n$ be quantum states on a $d$-dimensional Hilbert space $\mathcal{H}$ with diagonalizations $\rho_i=\sum_j p_{ij} \pi_{ij}$ and one-dimensional projectors $\pi_{ij}$. For $\alpha >0$ and $\rho^n=\rho_1 \otimes \cdots \otimes \rho_n$ define the set of entropy typical sequences as
\begin{align} 
    \mathcal{T}_{\alpha,\rho^n }^n=\left\{j^n=j_1 j_2 \ldots j_n : \abs{\sum_{i=1}^n -\log p_{ i j_i}-S(\rho_i) } \leq \alpha \sqrt{n}\right\}. \nonumber
\end{align}
Define the entropy typical projector of $\rho^n$ with constant $\alpha$ as
\begin{align*}
    \Pi^n_{\alpha ,\rho^n }=\sum_{j^n \in \mathcal{T}_{\alpha, \rho^n}^n} \pi_{1j_1} \otimes \cdots \otimes \pi_{nj_n}.
\end{align*}
\end{definition}

\begin{lemma}\label{lemma:typicality properties }
(Cf. \cite{csiszar_korner_2011}) There is a constant $0<\beta \leq \max \set{(\log 3)^2,(\log d)^2}$ such that the entropy typical projector has the following properties for any $\alpha >0$, $n>0$ and arbitrary state $\rho^n=\rho_1 \otimes \cdots \otimes \rho_n$:
\begin{align*}
  \Tr \left(\rho^n \Pi^n_{\alpha ,\rho^n }\right) &\geq 1-\frac{\beta}{\alpha^2}, \\
  2^{-\sum_{i=1}^n S(\rho_i)-\alpha \sqrt{n}}  \Pi^n_{\alpha,\rho^n} 
                                &\leq \Pi^n_{\alpha,\rho^n}\rho^n \Pi^n_{\alpha,\rho^n} 
                                 \leq 2^{-\sum_{i=1}^n S(\rho_i)+\alpha \sqrt{n}}\Pi^n_{\alpha,\rho^n},\quad \text{and} \\
  \left(1-\frac{\beta}{\alpha^2}\right) 2^{ \sum_{i=1}^n S(\rho_i)-\alpha \sqrt{n}} 
                                &\leq  \Tr \left(\Pi^n_{\alpha ,\rho^n }\right) 
                                 \leq  2^{\sum_{i=1}^n S(\rho_i)+\alpha \sqrt{n}}.
\end{align*}
\end{lemma}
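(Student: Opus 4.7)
The plan is to introduce, for each $i\in\{1,\ldots,n\}$, the independent (but not identically distributed) real-valued random variable $X_i$ that takes the value $-\log p_{ij}$ with probability $p_{ij}$; by construction $\EE X_i = S(\rho_i)$, and the typical set $\mathcal{T}^n_{\alpha,\rho^n}$ is precisely the event $\{|\sum_i X_i - \sum_i S(\rho_i)| \leq \alpha\sqrt{n}\}$, while the probability of this event under the product measure equals $\Tr(\rho^n \Pi^n_{\alpha,\rho^n})$. With this dictionary, Property~1 reduces to Chebyshev's inequality applied to the centered sum $\sum_i (X_i - \EE X_i)$, giving
\[
  \Pr\!\left\{ \left|\sum_i X_i - \sum_i S(\rho_i)\right| > \alpha\sqrt{n} \right\}
      \leq \frac{\sum_i \Var(X_i)}{\alpha^2 n} \leq \frac{\beta}{\alpha^2},
\]
provided that $\Var(X_i)\leq\beta$ uniformly in $i$. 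Bounding the single-letter variance $\Var(-\log P)$ for an arbitrary probability distribution on $d$ outcomes is the only nontrivial ingredient, and the main obstacle in this step. One handles $d=2$ directly by maximizing $q(\log q)^2+(1-q)(\log(1-q))^2 - h(q)^2$ over $q\in[0,1]$, which yields a bound no larger than $(\log 3)^2$; for $d\geq 3$ one uses the crude bound $\Var(X)\leq \EE(X^2) \leq (\log d)^2$ since the realized values $-\log p_{ij}$ all lie in $[0,\log d]$ after one discards the trivially concentrated mass near the endpoints. Combining the two gives the asserted $\beta\leq\max\{(\log 3)^2,(\log d)^2\}$.

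Property~2 is then a purely spectral calculation, which I would carry out as follows. Since the $\pi_{ij}$ are rank-one orthogonal projectors for each fixed $i$, the projectors $\pi_{1j_1}\otimes\cdots\otimes\pi_{nj_n}$, for $j^n\in[d]^n$, are pairwise orthogonal and sum to the identity on $\cH^{\otimes n}$. Consequently
\[
  \Pi^n_{\alpha,\rho^n}\,\rho^n\,\Pi^n_{\alpha,\rho^n}
    = \sum_{j^n\in\mathcal{T}^n_{\alpha,\rho^n}}
        \left(\prod_{i=1}^n p_{ij_i}\right)\,\pi_{1j_1}\otimes\cdots\otimes\pi_{nj_n},
\]
and the two-sided inequality follows immediately from the definition of $\mathcal{T}^n_{\alpha,\rho^n}$, because for any $j^n$ in the typical set
\[
  2^{-\sum_i S(\rho_i)-\alpha\sqrt{n}} \;\leq\; \prod_{i=1}^n p_{ij_i} \;\leq\; 2^{-\sum_i S(\rho_i)+\alpha\sqrt{n}},
\]
and the operator on the right of the displayed decomposition sits inside the support of $\Pi^n_{\alpha,\rho^n}$, acting on it as a diagonal multiplication.

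Finally, for Property~3 I would take the trace of the operator inequality just established, which yields
\[
  2^{-\sum_i S(\rho_i)-\alpha\sqrt{n}}\,\Tr\Pi^n_{\alpha,\rho^n}
    \;\leq\; \Tr(\rho^n\,\Pi^n_{\alpha,\rho^n})
    \;\leq\; 2^{-\sum_i S(\rho_i)+\alpha\sqrt{n}}\,\Tr\Pi^n_{\alpha,\rho^n},
\]
using the cyclicity of the trace and $(\Pi^n_{\alpha,\rho^n})^2=\Pi^n_{\alpha,\rho^n}$ on the middle term. Rearranging, the upper bound on $\Tr\Pi^n_{\alpha,\rho^n}$ follows from $\Tr(\rho^n\Pi^n_{\alpha,\rho^n})\leq 1$, while the lower bound is obtained by substituting the estimate $\Tr(\rho^n\Pi^n_{\alpha,\rho^n})\geq 1-\beta/\alpha^2$ from Property~1. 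Putting these three steps together establishes the lemma; the only genuinely hard ingredient is the universal variance bound $\beta$, with everything else being bookkeeping around the spectral decomposition.
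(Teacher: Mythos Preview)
The paper does not actually prove this lemma; it is stated in the appendix with only a citation to Csisz\'ar--K\"orner and no argument given. So there is no ``paper's proof'' to compare against, and your outline is essentially the standard textbook derivation that the citation points to: Chebyshev for Property~1, a spectral diagonal calculation for Property~2, and tracing out for Property~3.

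That said, there is a genuine gap in your justification of the variance bound. Your claim that for $d\geq 3$ ``the realized values $-\log p_{ij}$ all lie in $[0,\log d]$'' is false: if some eigenvalue $p_{ij}$ is tiny, $-\log p_{ij}$ can be arbitrarily large, so you cannot bound $\Var(X_i)\leq(\log d)^2$ via bounded support. What is true is the weaker (but sufficient) fact that the \emph{second moment} $\sum_j p_j(\log p_j)^2$ is uniformly bounded for distributions on $d$ outcomes; one obtains this by maximising the single-term function $p(\log p)^2$ and then doing a more careful optimisation over the simplex (or by invoking the known bound on the varentropy of a finite alphabet source). The rest of your argument---the Chebyshev step, the eigenvalue sandwich, and the trace manipulation---is correct as written.
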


\bibliography{References}

\begin{thebibliography}{100}
\providecommand{\url}[1]{#1}
\csname url@samestyle\endcsname
\providecommand{\newblock}{\relax}
\providecommand{\bibinfo}[2]{#2}
\providecommand{\BIBentrySTDinterwordspacing}{\spaceskip=0pt\relax}
\providecommand{\BIBentryALTinterwordstretchfactor}{4}
\providecommand{\BIBentryALTinterwordspacing}{\spaceskip=\fontdimen2\font plus
\BIBentryALTinterwordstretchfactor\fontdimen3\font minus
  \fontdimen4\font\relax}
\providecommand{\BIBforeignlanguage}[2]{{%
\expandafter\ifx\csname l@#1\endcsname\relax
\typeout{** WARNING: IEEEtran.bst: No hyphenation pattern has been}%
\typeout{** loaded for the language `#1'. Using the pattern for}%
\typeout{** the default language instead.}%
\else
\language=\csname l@#1\endcsname
\fi
#2}}
\providecommand{\BIBdecl}{\relax}
\BIBdecl

\bibitem{Shannon1948}
C.~E. Shannon, ``A mathematical theory of communication,'' \emph{Bell Syst.
  Tech. Journal}, vol.~27, no.~4, pp. 623--656, Oct 1948.

\bibitem{Holevo1973}
A.~S. Holevo, ``Bounds for the quantity of information transmitted by a quantum
  communication channel,'' \emph{Probl. Inf. Transm.}, vol.~9, no.~3, pp.
  3--11, 1973.

\bibitem{Gordon64}
J.~P. Gordon, ``Noise at optical frequencies; information theory,'' in
  \emph{Proc. Int. School Phys. Enrico Fermi, Course XXXI}, New York, 1964, pp.
  156--181.

\bibitem{Levitin69}
L.~B. Levitin, ``On quantum measure of information,'' in \emph{Proc. IV
  All-Union Conference on Information Transmission and Coding Theory},
  Tashkent, Uzbekistan, 1969, pp. 111--115.

\bibitem{Forney63}
G.~D. Jr.~Forney, {Master's} Thesis, MIT, Boston, 1963 (unpublished).

\bibitem{Stratonovich66}
R.~L. Stratonovich, ``The transmission rate for certain quantum communications
  channels,'' \emph{Probl. Inf. Transm.}, vol.~2, no.~1, pp. 35--44, 1966.

\bibitem{Holevo1979}
A.~S. Holevo, ``On capacity of a quantum communication channel,'' \emph{Probl.
  Inf. Transm.}, vol.~15, no.~4, pp. 247--253, 1979.

\bibitem{Schumacher1995}
B.~Schumacher, ``Quantum coding,'' \emph{Phys. Rev. A}, vol.~51, no.~4, pp.
  2738--2747, Apr 1995.

\bibitem{decoherence-Shor-1995}
P.~W. Shor, ``Scheme for reducing decoherence in quantum computer memory,''
  \emph{Phys. Rev. A}, vol.~52, pp. R2493--R2496, Oct 1995.

\bibitem{err-correction-Schumacher-1996}
B.~Schumacher and M.~A. Nielsen, ``Quantum data processing and error
  correction,'' \emph{Phys. Rev. A}, vol.~54, pp. 2629--2635, Oct 1996.

\bibitem{Schumi1996}
B.~Schumacher, ``Sending entanglement through noisy quantum channels,''
  \emph{Phys. Rev. A}, vol.~54, pp. 2614--2628, Oct 1996.

\bibitem{err-codes1997}
E.~Knill and R.~Laflamme, ``Theory of quantum error-correcting codes,''
  \emph{Phys. Rev. A}, vol.~55, pp. 900--911, Feb 1997.

\bibitem{Barnum1998}
H.~Barnum, M.~A. Nielsen, and B.~W. Schumacher, ``Information transmission
  through a noisy quantum channel,'' \emph{Phys. Rev. A}, vol.~57, no.~6, pp.
  4153--4175, June 1998.

\bibitem{privacy-coherence-1998}
B.~Schumacher and M.~D. Westmoreland, ``Quantum privacy and quantum
  coherence,'' \emph{Phys. Rev. Lett.}, vol.~80, pp. 5695--5697, Jun 1998.

\bibitem{E-assisted-C-capacity1999}
C.~H. Bennett, P.~W. Shor, J.~A. Smolin, and A.~V. Thapliyal,
  ``Entanglement-assisted classical capacity of noisy quantum channels,''
  \emph{Phys. Rev. Lett.}, vol.~83, pp. 3081--3084, Oct 1999.

\bibitem{Lo-Popescu-1999}
H.-K. Lo and S.~Popescu, ``Classical communication cost of entanglement
  manipulation: Is entanglement an interconvertible resource?'' \emph{Phys.
  Rev. Lett.}, vol.~83, pp. 1459--1462, Aug 1999.

\bibitem{fidelities-capacities-2000}
H.~{Barnum}, E.~{Knill}, and M.~A. {Nielsen}, ``On quantum fidelities and
  channel capacities,'' \emph{IEEE Trans. Inf. Theory}, vol.~46, no.~4, pp.
  1317--1329, July 2000.

\bibitem{On-C-capacity-Holevo-2002}
A.~S. Holevo, ``On entanglement-assisted classical capacity,'' \emph{J. Math.
  Phys.}, vol.~43, no.~9, pp. 4326--4333, Sep. 2002.

\bibitem{Schumacher2002}
B.~Schumacher and M.~D. Westmoreland, ``Approximate quantum error correction,''
  \emph{Quantum Inf. Process.}, vol.~1, no. 1-2, pp. 5--12, Apr 2002.

\bibitem{Bennett2002}
C.~H. Bennett, P.~W. Shor, J.~A. Smolin, and A.~V. Thapliyal,
  ``Entanglement-assisted capacity of a quantum channel and the reverse shannon
  theorem,'' \emph{IEEE Trans. Inf. Theory}, vol.~48, no.~10, pp. 2637--2655,
  Oct 2002.

\bibitem{Keydistill-DW-2005}
I.~Devetak and A.~Winter, ``Distillation of secret key and entanglement from
  quantum states,'' \emph{Proc. R. Soc. A.}, vol. 461, no. 2053, pp. 207--235,
  Jan. 2005.

\bibitem{Devetak-capacity-2005}
I.~{Devetak}, ``The private classical capacity and quantum capacity of a
  quantum channel,'' \emph{IEEE Trans. Inf. Theory}, vol.~51, no.~1, pp.
  44--55, Jan 2005.

\bibitem{Horodecki2007}
M.~Horodecki, J.~Oppenheim, and A.~Winter, ``Quantum state merging and negative
  information,'' \emph{Commun. Math. Phys.}, vol. 269, no.~1, pp. 107--136, Jan
  2007.

\bibitem{SM_nature}
------, ``Partial quantum information,'' \emph{Nature}, vol. 436, no. 7051, p.
  673–676, Aug 2005.

\bibitem{Devetak2008_2}
I.~Devetak and J.~T. Yard, ``Exact cost of redistributing multipartite quantum
  states,'' \emph{Phys. Rev. Lett.}, vol. 100, p. 230501, Jun 2008.

\bibitem{Yard2009}
J.~T. Yard and I.~Devetak, ``Optimal {Q}uantum {S}ource {C}oding with {Q}uantum
  {S}ide {I}nformation at the {E}ncoder and {D}ecoder,'' \emph{IEEE Trans. Inf.
  Theory}, vol.~55, no.~11, pp. 5339--5351, Nov 2009.

\bibitem{visible_Hayashi}
M.~Hayashi, ``Optimal visible compression rate for mixed states is determined
  by entanglement of purification,'' \emph{Phys. Rev. A}, vol.~73, p. 060301,
  June 2006.

\bibitem{QRST2014}
C.~H. {Bennett}, I.~{Devetak}, A.~W. {Harrow}, P.~W. {Shor}, and A.~{Winter},
  ``The quantum reverse shannon theorem and resource tradeoffs for simulating
  quantum channels,'' \emph{IEEE Trans. Inf. Theory}, vol.~60, no.~5, pp.
  2926--2959, May 2014.

\bibitem{Devetak2003}
I.~Devetak and A.~Winter, ``Classical data compression with quantum side
  information,'' \emph{Phys. Rev. A}, vol.~68, p. 042301, Oct 2003.

\bibitem{Winter1999}
A.~Winter, ``Coding theorems of quantum information theory,'' {PhD} Thesis,
  Universit\"at Bielefeld, Department of Mathematics, Germany, July 1999,
  arXiv:quant-ph/9907077.

\bibitem{KI2001}
M.~Koashi and N.~Imoto, ``Compressibility of quantum mixed-state signals,''
  \emph{Phys. Rev. Lett.}, vol.~87, no.~1, p. 017902, July 2001.

\bibitem{Ahn2006}
C.~Ahn, A.~C. Doherty, P.~Hayden, and A.~Winter, ``On the distributed
  compression of quantum information,'' \emph{IEEE Trans. Inf. Theory},
  vol.~52, no.~10, pp. 4349--4357, Oct 2006.

\bibitem{Brandao2013}
F.~G. S.~L. Brand\~ao, M.~Horodecki, J.~Oppenheim, J.~M. Renes, and R.~W.
  Spekkens, ``{Resource Theory of Quantum States Out of Thermal Equilibrium},''
  \emph{Physical Review Letters}, vol. 111, p. 250404, Dec 2013.

\bibitem{Weilenmann2016}
M.~Weilenmann, L.~Kraemer, P.~Faist, and R.~Renner, ``Axiomatic relation
  between thermodynamic and information-theoretic entropies,'' \emph{Phys. Rev.
  Lett.}, vol. 117, p. 260601, Dec 2016.

\bibitem{Winter-Dong-2016}
A.~Winter and D.~Yang, ``Operational resource theory of coherence,''
  \emph{Phys. Rev. Lett.}, vol. 116, p. 120404, Mar 2016.

\bibitem{Jaynes_2_1957}
E.~T. Jaynes, ``Information theory and statistical mechanics {II},''
  \emph{Phys. Rev.}, vol. 108, pp. 171--190, Oct 1957.

\bibitem{Jaynes1957}
------, ``Information theory and statistical mechanics,'' \emph{Phys. Rev.},
  vol. 106, pp. 620--630, May 1957.

\bibitem{Brillouin62}
L.~Brillouin, \emph{Science and Information Theory}.\hskip 1em plus 0.5em minus
  0.4em\relax Academic Press, New York, 1962.

\bibitem{Jaynes82}
E.~T. Jaynes, \emph{Papers on Probability, Statistics and Statistical
  Physics}.\hskip 1em plus 0.5em minus 0.4em\relax Dordrecht, The Netherlands,
  1982.

\bibitem{Thermo_Quantum_Regime}
F.~Binder, L.~Correa, C.~Gogolin, J.~Anders, and G.~Adesso,
  \emph{Thermodynamics in the Quantum Regime: Fundamental Aspects and New
  Directions-Fundamental Aspects and New Directions}.\hskip 1em plus 0.5em
  minus 0.4em\relax Springer, 2018.

\bibitem{Hayden2004}
P.~Hayden, R.~Jozsa, D.~Petz, and A.~Winter, ``Structure of states which
  satisfy strong subadditivity of quantum entropy with equality,''
  \emph{Commun. Math. Phys.}, vol. 246, no.~2, pp. 359--374, Apr 2004.

\bibitem{KI2002}
M.~Koashi and N.~Imoto, ``Operations that do not disturb partially known
  quantum states,'' \emph{Phys. Rev. A}, vol.~66, no.~2, p. 022318, Aug 2002.

\bibitem{Slepian1973}
D.~S. Slepian and J.~K. Wolf, ``Noiseless coding of correlated information
  sources,'' \emph{IEEE Trans. Inf. Theory}, vol.~19, no.~4, pp. 471--480, July
  1973.

\bibitem{Oppenheim2008}
J.~Oppenheim, ``State redistribution as merging: introducing the coherent
  relay,'' \emph{preprint (2008)}, arXiv[quant-ph]:0805.1065.

\bibitem{ZK_mixed_state_2019}
Z.~B. Khanian and A.~Winter, ``General mixed state quantum data compression
  with and without entanglement assistance,'' \emph{preprint (2019)},
  arXiv[quant-ph]:1912.08506.

\bibitem{ZK_mixed_state_ISIT_2020}
------, ``General mixed state quantum data compression with and without
  entanglement assistance,'' in \emph{Proc. IEEE Int. Symp. Inf. Theory
  (ISIT)}, Los Angeles, CA, USA, June 2020, pp. 1852--1857.

\bibitem{Schumacher_Assisted_arXiv_Z_2019}
------, ``Entanglement-assisted quantum data compression,'' \emph{preprint
  (2019)}, arXiv[quant-ph]:1901.06346.

\bibitem{ZK_Eassisted_ISIT_2019}
------, ``Entanglement-assisted quantum data compression,'' in \emph{Proc. IEEE
  Int. Symp. Inf. Theory (ISIT)}, Paris, France, July 2019, pp. 1147--1151.

\bibitem{ZK_cqSW_2018}
------, ``Distributed compression of correlated classical-quantum sources or:
  The price of ignorance,'' \emph{IEEE Trans. Inf. Theory}, vol.~66, no.~9, pp.
  5620--5633, Sep 2020, arXiv[quant-ph]:1811.09177.

\bibitem{ZK_cqSW_ISIT_2019}
------, ``Distributed compression of correlated classical-quantum sources,'' in
  \emph{Proc. IEEE Int. Symp. Inf. Theory (ISIT)}, Paris, France, July 2019,
  pp. 1152--1156.

\bibitem{QSR_ensemble_full}
------, ``Rate distortion perspective of quantum state redistribution,''
  \emph{in preparation}, 2020.

\bibitem{ZK_QSR_ensemble_ISIT_2020}
------, ``Quantum state redistribution for ensemble sources,'' in \emph{Proc.
  IEEE Int. Symp. Inf. Theory (ISIT)}, Los Angeles, CA, USA, June 2020, pp.
  1858--1863.

\bibitem{thermo_ZBK_2020}
Z.~B. Khanian, M.~Nath~Bera, A.~Riera, M.~Lewenstein, , and A.~Winter,
  ``{Resource theory of heat and work with non-commuting charges: yet another
  new foundation of thermodynamics},'' 2020, arXiv[quant-ph]:2011.08020.

\bibitem{brl19}
M.~{Nath Bera}, A.~{Riera}, M.~{Lewenstein}, Z.~{B.~Khanian}, and A.~{Winter},
  ``{Thermodynamics as a Consequence of Information Conservation},''
  \emph{Quantum}, vol.~3, p. 121, Jul. 2018, arXiv[quant-ph]:1707.01750v3.

\bibitem{Stinespring1955}
W.~F. Stinespring, ``Positive {F}unctions on {$C^*$}-{A}lgebras,'' \emph{Proc.
  Amer. Math. Society}, vol.~6, no.~2, pp. 211--216, Apr 1955.

\bibitem{Jozsa1994_2}
R.~Jozsa, ``Fidelity for mixed quantum states,'' \emph{J. Mod. Optics},
  vol.~41, no.~12, pp. 2315--2323, Dec 1994.

\bibitem{Fuchs1999}
C.~A. Fuchs and J.~v. de~Graaf, ``Cryptographic distinguishability measures for
  quantum-mechanical states,'' \emph{IEEE Trans. Inf. Theory}, vol.~45, no.~4,
  pp. 1216--1227, May 1999.

\bibitem{Araki_Lieb_1970}
H.~Araki and E.~H. Lieb, ``Entropy inequalities,'' \emph{Commun. Math. Phys.},
  vol.~18, pp. 160--170, June 1970.

\bibitem{SSA_1973}
E.~H. Lieb and M.~B. Ruskai, ``Proof of the strong subadditivity of
  quantum-mechanical entropy,'' \emph{J. Math. Phys.}, vol.~14, no.~12, p.
  1938–1941, Dec 1973.

\bibitem{Jozsa1994_1}
R.~Jozsa and B.~Schumacher, ``A new proof of the quantum noiseless coding
  theorem,'' \emph{J. Mod. Optics}, vol.~41, no.~12, pp. 2343--2349, Apr 1994.

\bibitem{Barnum1996}
H.~Barnum, C.~A. Fuchs, R.~Jozsa, and B.~Schumacher, ``General fidelity limit
  for quantum channels,'' \emph{Phys. Rev. A}, vol.~54, no.~6, pp. 4707--4711,
  Dec 1996.

\bibitem{Horodecki1998}
M.~Horodecki, ``Limits for compression of quantum information carried by
  ensembles of mixed states,'' \emph{Phys. Rev. A}, vol.~57, no.~6, pp.
  3364--3369, May 1998.

\bibitem{Barnum2001}
H.~Barnum, C.~M. Caves, C.~A. Fuchs, R.~Jozsa, and B.~Schumacher, ``On quantum
  coding for ensembles of mixed states,'' \emph{J. Phys. A: Math. Gen.},
  vol.~34, no.~35, pp. 6767--6785, Aug 2001.

\bibitem{Horodecki2000}
M.~Horodecki, ``Optimal compression for mixed signal states,'' \emph{Phys. Rev.
  A}, vol.~61, p. 052309, Apr 2000.

\bibitem{Barnum2001_2}
H.~Barnum, P.~Hayden, R.~Jozsa, and A.~Winter, ``On the reversible extraction
  of classical information from a quantum source,'' \emph{Proc. Royal Soc.
  London Ser. A}, vol. 457, no. 2012, pp. 2019--2039, Aug. 2001.

\bibitem{Bennett2005}
C.~H. Bennett, P.~Hayden, D.~W. Leung, P.~W. Shor, and A.~Winter, ``Remote
  preparation of quantum states,'' \emph{IEEE Trans. Inf. Theory}, vol.~51,
  no.~1, pp. 56--74, Jan 2005.

\bibitem{Hayden2002}
P.~Hayden, R.~Jozsa, and A.~Winter, ``Trading quantum for classical resources
  in quantum data compression,'' \emph{J. Math. Phys.}, vol.~43, no.~9, pp.
  4404--4444, Sept 2002.

\bibitem{Abeyesinghe2009}
A.~Abeyesinghe, I.~Devetak, P.~Hayden, and A.~Winter, ``The mother of all
  protocols: restructuring quantum information's family tree,'' \emph{Proc.
  Royal Soc. London Ser. A}, vol. 465, pp. 2537--2563, May 2009.

\bibitem{Savov_Mscthesis2007}
I.~Savov, ``Distributed compression and squashed entanglement,'' {Master's}
  Thesis, McGill University, Montreal, Quebec, 2007.

\bibitem{Savov_distributed_2008}
D.~Avis, P.~Hayden, and I.~Savov, ``Distributed compression and multiparty
  squashed entanglement,'' \emph{J. Phys. A: Math. Theor.}, vol.~41, no.~11, p.
  115301, Mar 2008.

\bibitem{Bennett1992}
C.~H. Bennett and S.~J. Wiesner, ``Communication via one- and two-particle
  operators on {Einstein-Podolsky-Rosen} states,'' \emph{Phys. Rev. Lett.},
  vol.~69, no.~20, pp. 2881--2884, Nov 1992.

\bibitem{Fannes1973}
M.~Fannes, ``A continuity property of the entropy density for spin lattice
  systems,'' \emph{Commun. Math. Phys.}, vol.~21, no.~4, pp. 291--294, Dec
  1973.

\bibitem{Audenaert2007}
K.~M.~R. Audenaert, ``A sharp continuity estimate for the von neumann
  entropy,'' \emph{J. Phys. A: Math. Theor.}, vol.~40, no.~28, pp. 8127--8136,
  June 2007.

\bibitem{Rockafeller}
R.~T. Rockafeller, \emph{Convex Analysis}.\hskip 1em plus 0.5em minus
  0.4em\relax Princeton University Press, 1970.

\bibitem{OhyaPetz:entropy}
M.~Ohya and D.~Petz, \emph{Quantum Entropy and Its Use}.\hskip 1em plus 0.5em
  minus 0.4em\relax Springer Verlag, Berlin Heidelberg, 1993 (2nd edition
  2004).

\bibitem{Alicki2004}
R.~Alicki and M.~Fannes, ``Continuity of quantum conditional information,''
  \emph{J. Phys. A: Math. Gen.}, vol.~37, no.~5, pp. L55--L57, Jan 2004.

\bibitem{Winter2016}
A.~Winter, ``Tight uniform continuity bounds for quantum entropies: Conditional
  entropy, relative entropy distance and energy constraints,'' \emph{Commun.
  Math. Phys.}, vol. 347, no.~1, pp. 291--313, Oct 2016.

\bibitem{Bennett1993}
C.~H. Bennett, G.~Brassard, C.~Cr\'epeau, R.~Jozsa, A.~Peres, and W.~K.
  Wootters, ``Teleporting an unknown quantum state via dual classical and
  {E}instein-{P}odolsky-{R}osen channels,'' \emph{Phys. Rev. Lett.}, vol.~70,
  no.~13, pp. 1895--1899, Mar 1993.

\bibitem{csiszar_korner_2011}
I.~Csisz\'ar and J.~K\"orner, \emph{Information Theory: Coding Theorems for
  Discrete Memoryless Systems}.\hskip 1em plus 0.5em minus 0.4em\relax
  Cambridge Univ. Press, 2nd~ed. 2011.

\bibitem{info-bottleneck}
N.~Tishby, F.~C. Pereira, and W.~Bialek, ``The information bottleneck method,''
  \emph{preprint (2020)}, Apr 2020, arXiv[physics]:0004057.

\bibitem{Salek-QIB}
S.~{Salek}, D.~{Cadamuro}, P.~{Kammerlander}, and K.~{Wiesner}, ``Quantum
  rate-distortion coding of relevant information,'' \emph{IEEE Trans. Inf.
  Theory}, vol.~65, pp. 2603--2613, Apr 2019, arXiv[quant-ph]:1704.02903.

\bibitem{Hirche-QIB}
N.~Datta, C.~Hirche, and A.~Winter, ``Convexity and operational interpretation
  of the quantum information bottleneck function,'' in \emph{Proc. IEEE Int.
  Symp. Inf. Theory (ISIT)}, Paris, France, July 2019, pp. 1157--1161,
  arXiv[quant-ph]:1810.03644.

\bibitem{UHLMANN1976}
A.~Uhlmann, ``The transition probability in the state space of a $*$-algebra,''
  vol.~9, no.~2, pp. 273--279, Oct 1976.

\bibitem{family}
I.~Devetak, A.~W. Harrow, and A.~Winter, ``A family of quantum protocols,''
  \emph{Phys. Rev. Lett.}, vol.~93, no.~23, p. 230504, Dec 2004.

\bibitem{bcp14}
T.~Baumgratz, M.~Cramer, and M.~B. Plenio, ``Quantifying coherence,''
  \emph{Physical Review Letters}, vol. 113, p. 140401, Sep 2014.

\bibitem{c&g16}
E.~Chitambar and G.~Gour, ``Critical examination of incoherent operations and a
  physically consistent resource theory of quantum coherence,'' \emph{Physical
  Review Letters}, vol. 117, p. 030401, Jul 2016.

\bibitem{m&s16}
I.~Marvian and R.~W. Spekkens, ``How to quantify coherence: Distinguishing
  speakable and unspeakable notions,'' \emph{Physical Review A}, vol.~94, p.
  052324, Nov 2016.

\bibitem{V&S17}
J.~I. de~Vicente and A.~Streltsov, ``Genuine quantum coherence,'' \emph{Journal
  of Physics A: Mathematical and Theoretical}, vol.~50, no.~4, p. 045301, dec
  2016.

\bibitem{msz16}
I.~Marvian, R.~W. Spekkens, and P.~Zanardi, ``Quantum speed limits, coherence,
  and asymmetry,'' \emph{Physical Review A}, vol.~93, p. 052331, May 2016.

\bibitem{sap16}
A.~Streltsov, G.~Adesso, and M.~B. Plenio, ``Colloquium: Quantum coherence as a
  resource,'' \emph{Reviews in Modern Physics}, vol.~89, p. 041003, Oct 2017.

\bibitem{srb17}
A.~Streltsov, S.~Rana, M.~N. Bera, and M.~Lewenstein, ``Towards resource theory
  of coherence in distributed scenarios,'' \emph{Physical Review X}, vol.~7, p.
  011024, Mar 2017.

\bibitem{g&w19}
G.~Gour and A.~Winter, ``How to quantify a dynamical quantum resource,''
  \emph{Physical Review Letters}, vol. 123, p. 150401, Oct 2019.

\bibitem{cpv18}
P.~Contreras-Tejada, C.~Palazuelos, and J.~I. de~Vicente, ``Resource theory of
  entanglement with a unique multipartite maximally entangled state,''
  \emph{Physical Review Letters}, vol. 122, p. 120503, Mar 2019.

\bibitem{sha19}
F.~Shahandeh, \emph{Quantum Correlations}.\hskip 1em plus 0.5em minus
  0.4em\relax Springer Theses, 2019.

\bibitem{Vic14}
J.~I. de~Vicente, ``On nonlocality as a resource theory and nonlocality
  measures,'' \emph{Journal of Physics A: Mathematical and Theoretical},
  vol.~47, no.~42, p. 424017, oct 2014.

\bibitem{d&a18}
C.~Duarte and B.~Amaral, ``Resource theory of contextuality for arbitrary
  prepare-and-measure experiments,'' \emph{Journal of Mathematical Physics},
  vol.~59, no.~6, p. 062202, 2018.

\bibitem{Devetak2008_1}
I.~Devetak, A.~W. Harrow, and A.~Winter, ``A resource framework for quantum
  shannon theory,'' \emph{IEEE Trans. Inf. Theory}, vol.~54, no.~10, pp.
  4587--4618, Oct 2008.

\bibitem{Sparaciari2016}
C.~{Sparaciari}, J.~{Oppenheim}, and T.~{Fritz}, ``{A Resource Theory for Work
  and Heat},'' \emph{Physical Review A}, vol.~96, p. 052112, Nov. 2017,
  arXiv[quant-ph]:1607.01302.

\bibitem{Liu2007}
Y.-K. {Liu}, ``{The Complexity of the Consistency and N-Representability
  Problems for Quantum States},'' Ph.D. dissertation, Department of Computer
  Science, University of California, San Diego, Dec 2007.

\bibitem{Halpern2016}
N.~{Yunger Halpern}, P.~{Faist}, J.~{Oppenheim}, and A.~{Winter},
  ``Microcanonical and resource-theoretic derivations of the thermal state of a
  quantum system with noncommuting charges,'' \emph{Nature Communications},
  vol.~7, p. 12051, Jul. 2016, arXiv[quant-ph]:1512.01189.

\bibitem{Ogata}
Y.~Ogata, ``Approximating macroscopic observables in quantum spin systems with
  commuting matrices,'' \emph{Journal of Functional Analysis}, vol. 264, pp.
  2005--2033, 2013.

\bibitem{Duan2016}
R.~{Duan}, S.~{Severini}, and A.~{Winter}, ``{On Zero-Error Communication via
  Quantum Channels in the Presence of Noiseless Feedback},'' \emph{IEEE
  Transactions on Information Theory}, vol.~62, pp. 5260--5277, Sep 2016.

\bibitem{DemboZeitouni}
A.~Dembo and O.~Zeitouni, \emph{Large Deviations: Techniques and
  Applications}.\hskip 1em plus 0.5em minus 0.4em\relax Springer Verlag, 1998.

\bibitem{Hayden2006}
P.~Hayden, D.~W. Leung, and A.~Winter, ``Aspects of generic entanglement,''
  \emph{Communications in Mathematical Physics}, vol. 265, pp. 95--117, Jul
  2006.

\bibitem{Guryanova2016}
Y.~{Guryanova}, S.~{Popescu}, A.~J. {Short}, R.~{Silva}, and P.~{Skrzypczyk},
  ``{Thermodynamics of quantum systems with multiple conserved quantities},''
  \emph{Nature Communications}, vol.~7, p. 12049, Jul. 2016,
  arXiv[quant-ph]:1512.01190.

\bibitem{Popescu2018}
S.~{Popescu}, A.~B. {Sainz}, A.~J. {Short}, and A.~{Winter}, ``{Quantum
  Reference Frames and Their Applications to Thermodynamics},''
  \emph{Philosophical Transactions of the Royal Society A}, vol. 376, p.
  20180111, 2018, arXiv[quant-phg]:1804.03730.

\bibitem{Popescu2019}
S.~Popescu, A.~Sainz, A.~J. Short, and A.~Winter, ``Reference frames which
  separately store non-commuting conserved quantities,'' 2019,
  arXiv[quant-ph]:1908.02713.

\bibitem{Horodecki2011}
M.~{Horodecki} and J.~{Oppenheim}, ``{Fundamental limitations for quantum and
  nanoscale thermodynamics},'' \emph{Nature Communications}, vol.~4, p. 2059,
  Jun. 2013, arXiv[quant-ph]:1111.3834.

\bibitem{Brandao2015}
F.~G. S.~L. Brand\~ao, M.~Horodecki, N.~Ng, J.~Oppenheim, and S.~Wehner, ``The
  second laws of quantum thermodynamics,'' \emph{Proceedings of the National
  Academy of Sciences}, vol. 112, pp. 3275--3279, 2015.

\bibitem{bhatia97}
R.~Bhatia, \emph{Matrix Analysis}.\hskip 1em plus 0.5em minus 0.4em\relax New
  York: Springer-Verlag, 1997.

\bibitem{winter1999_2}
A.~{Winter}, ``Coding theorem and strong converse for quantum channels,''
  \emph{IEEE Transactions on Information Theory}, vol.~45, pp. 2481--2485, Nov
  1999.

\bibitem{Ogawa2007}
T.~{Ogawa} and H.~{Nagaoka}, ``{Making Good Codes for Classical-Quantum Channel
  Coding via Quantum Hypothesis Testing},'' \emph{IEEE Transactions on
  Information Theory}, vol.~53, pp. 2261--2266, June 2007.

\bibitem{wilde_2013}
M.~M. Wilde, \emph{Quantum Information Theory}.\hskip 1em plus 0.5em minus
  0.4em\relax Cambridge University Press, 2013.

\end{thebibliography}

\printindex

\end{document}